\documentclass[12pt,reqno]{amsart}
\usepackage{amsthm,amsfonts,amssymb,euscript}

\newtheorem{proposition}{Proposition}
\newtheorem{theorem}{Theorem}
\newtheorem{remark}{Remark}

\usepackage[margin=1in,dvips]{geometry}
\usepackage{bbm}
\usepackage{enumerate}
\usepackage{graphics}

\def\ub {\underline{u}}
\def\th {\theta}
\def\Lb {\underline{L}}
\def\Hb {\underline{H}}
\def\chib {\underline{\chi}}
\def\chih {\hat{\chi}}
\def\chibh {\hat{\underline{\chi}}}
\def\omegab {\underline{\omega}}
\def\etab {\underline{\eta}}
\def\betab {\underline{\beta}}
\def\alphab {\underline{\alpha}}

\def\hot{\widehat{\otimes}}

\def\rhoc{\check{\rho}}
\def\sigmac{\check{\sigma}}
\def\Psic{\check{\Psi}}
\def\kappab{\underline{\kappa}}
\def\mub{\underline{\mu}}

\def\a {\alpha}
\def\b {\beta}
\def\ab {\alphab}
\def\bb {\betab}
\def\nab {\nabla}

\renewcommand{\div}{\mbox{div }}
\newcommand{\curl}{\mbox{curl }}
\newcommand{\trchb}{\mbox{tr} \chib}
\def\trch{\mbox{tr}\chi}
\newcommand{\tr}{\mbox{tr }}

\newcommand{\eps}{{\epsilon} \mkern-8mu /\,}
\newcommand{\Ls}{{\mathcal L} \mkern-10mu /\,}


\begin{document}

\title{Local Propagation of Impulsive Gravitational Waves}

\author{Jonathan Luk}
\address{Department of Mathematics, Princeton University, Princeton NJ 08544}
\email{jluk@math.princeton.edu}

\author{Igor Rodnianski}
\address{Department of Mathematics, Princeton University, Princeton NJ 08544 and Department of Mathematics, MIT, Cambridge, MA 02139}
\email{irod@math.princeton.edu, irod@math.mit.edu}

\begin{abstract}
In this paper, we initiate the rigorous mathematical study of the problem of impulsive gravitational spacetime waves. We construct such spacetimes as solutions to the characteristic initial value problem of the Einstein vacuum equations with a data curvature delta singularity. We show that in the resulting spacetime, the delta singularity propagates along a characteristic hypersurface, while away from that hypersurface the spacetime remains smooth. Unlike the known explicit examples of impulsive gravitational spacetimes, this work in particular provides the first construction of an impulsive gravitational wave of compact extent and does not require any symmetry assumptions. The arguments in the present paper also extend to the problem of existence and uniqueness of solutions to a larger class of non-regular characteristic data.
\end{abstract}

\maketitle

\tableofcontents

\section{Introduction}

In this paper, we initiate the study of the characteristic initial value problem for impulsive gravitational spacetimes in general relativity. These were considered to be solutions of the vacuum Einstein equations
$$R_{\mu\nu}=0$$
with a delta singularity in the Riemann curvature tensor supported on a null hypersurface. Their historical origin can be traced back to the cylindrical waves of Einstein-Rosen \cite{EinsteinRosen}, the plane waves of Brinkmann \cite{Brinkmann} and the explicit impulsive gravitational spacetimes of Penrose \cite{Penrose72}.

Impulsive gravitational waves have often been studied within the class of plane fronted gravitational waves. This class of explicit solutions to the vacuum Einstein equations has been first studied by Brinkmann \cite{Brinkmann} and interest in them has been revived in later decades by the work of Bondi-Pirani-Robinson \cite{BPR}. They were later classified geometrically by Jordon-Ehlers-Kundt \cite{JEK}. Among this class are the \emph{pp}-waves (plane fronted waves with parallel rays) that were discovered by Brinkmann \cite{Brinkmann}, for which the metric takes the form
$$g=2d\ub dr+H(\ub,X,Y)d\ub^2+dX^2+dY^2,$$
and the Einstein vacuum equations imply that 
\begin{equation}\label{H}
\frac{\partial^2 H}{\partial X^2}+\frac{\partial^2 H}{\partial Y^2}=0.
\end{equation}
These include the special case of sandwich waves, where $H$ is compactly supported in $\ub$. Since (\ref{H}) is linear, it follows that \emph{pp}-waves enjoy a principle of linear superposition.

\emph{pp}-waves have a plane symmetry and originally impulsive gravitational waves have been thought of as a limiting case of the \emph{pp}-wave with the function $H$ admitting a delta singularity in the variable $\ub$. Precisely, explicit impulsive gravitational spacetimes were discovered and studied by Penrose \cite{Penrose72}, who gave the metric in the following double null coordinate form:
$$g=-2dud\ub+(1-\ub\Theta(\ub))dx^2+(1+\ub\Theta(\ub))dy^2,$$
where $\Theta$ is the Heaviside step function. In the Brinkmann coordinate system, the metric has the \emph{pp}-wave form and an obvious delta singularity:
$$g=-2d\ub dr-\delta(\ub)(X^2-Y^2)d\ub^2+dX^2+dY^2,$$
where $\delta(\ub)$ is the Dirac delta. Despite the presence of the delta singularity for the metric in Brinkmann coordinates, the corresponding spacetime is Lipschitz and it is only the Riemann curvature tensor (specifically, the only non-trivial $\alpha$ component of it) that has a delta function supported on the plane null hypersurface $\ub=0$. This spacetime turns out to possess remarkable global geometric properties \cite{Penrose65}. In particular, it exhibits a strong focusing property such that the whole null cone emanating from a point in the past of $\ub=0$ refocuses to a single line in the future of $\ub=0$ (See Figure 1). In \cite{Penrose65}, this property forms the basis of Penrose's argument that global hyperbolicity fails in this spacetime. Penrose's impulsive spacetime is plane symmetric, non-asymptotically flat and the delta curvature singularity is supported on the 3-dimensional infinite plane $\{(u,\ub,x,y) : \ub=0\}$. It has been long debated whether the strong focussing property and the resulting lack of global hyperbolicity is directly tied to the infinite extent of the impulsive gravitational wave (See Yurtsever \cite{Yurtsever}).

\begin{figure}[htbp]
\begin{center}
 
\input{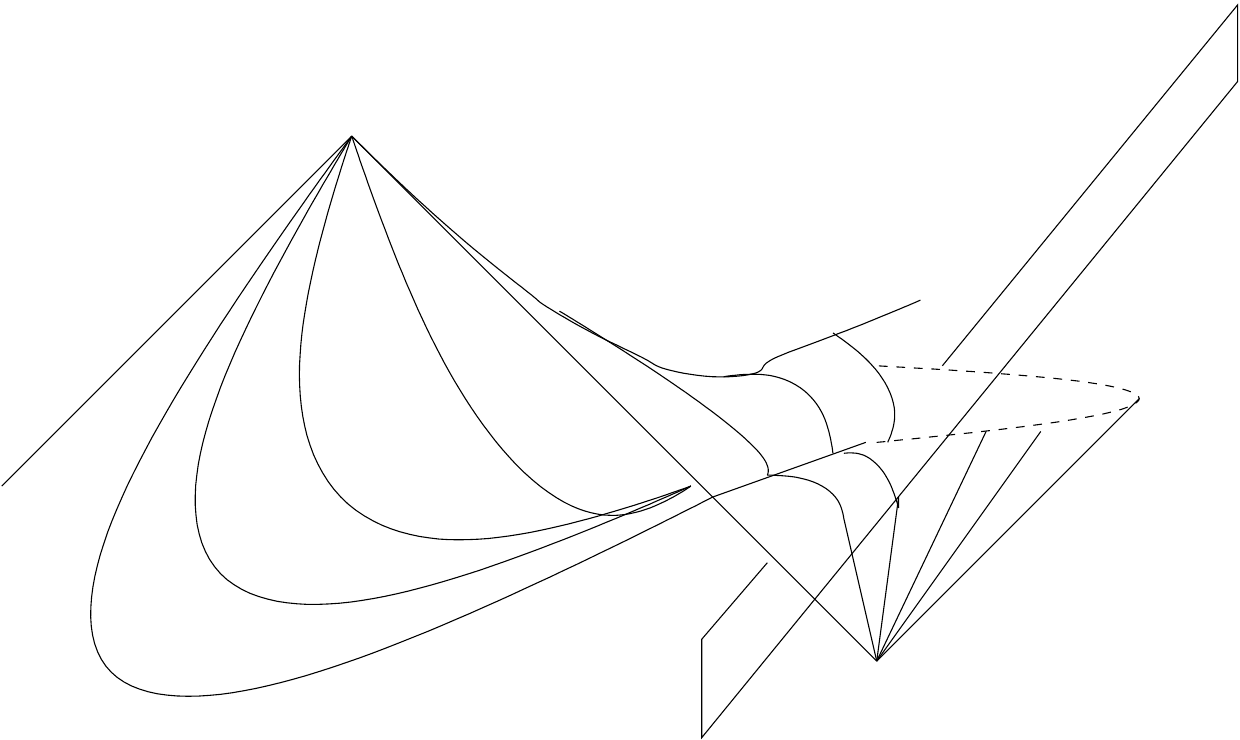_t}
 
\caption{Focusing in Penrose's impulsive gravitational spacetime}
\end{center}
\end{figure}

While the interest in impulsive gravitational spacetimes had been high due to the availability of explicit solutions, their global geometric properties, superposition properties and their limiting relation to general spacetimes (see Penrose \cite{Penrose72} , Aichelburg-Sexl \cite{AichSexl}), the first study of general spacetimes satisfying the Einstein equations and admitting possible three dimensional delta singularities was undertaken by Taub \cite{Taub}, who derived a system of consistency relations linking the metric, curvature tensor and the geometry of the singular hypersurface. Impulsive gravitational spacetimes also arise as high-frequency limiting spacetimes considered by Choquet-Bruhat \cite{Bruhat}. We refer the readers to \cite{Gr}, \cite{GrPo}, \cite{BaHo}, \cite{Bicak} and the references therein for a more detailed exposition on the physics literature.

In this paper, we begin the study of impulsive gravitational spacetimes viewed in the context of an (characteristic) initial value problem. We consider the data, prescribed on an outgoing null hypersurface $u=0$ and assume that the $\alpha$ component of the curvature has a delta singularity supported on a two dimensional surface $S_{0,\ub_s}$, which can be thought of as the intersection of the hypersurfaces $u=0$ and $\ub=\ub_s$. Observe that in the Penrose's explicit impulsive solution, the $\alpha$ curvature component has precisely this type of behavior when restricted to $u=0$. Unlike the explicit impulsive spacetimes, which have only been constructed in plane symmetry and thus have infinite spatial extent, we will consider the case of $S_{0,\ub_s}$ with compact topology, more precisely a sphere. The data on the incoming null hypersurface is prescribed to be smooth but otherwise without any smallness assumptions.

With this data, we show that a unique spacetime satisfying the vacuum Einstein equations can be constructed locally\footnote{Notice that while the Riemann curvature tensor admits a delta function singularity, we show that the Ricci curvature tensor is well-defined in $L^2$, allowing us to make sense of the vacuum Einstein equations.}. Moreover, the delta singularity propagates along a null hypersurface emanating from the initial singularity on $S_{0,\ub_s}$ and the spacetime is smooth away from this null hypersurface (See Figure 2). 

\begin{figure}[htbp]
\begin{center}
 
\input{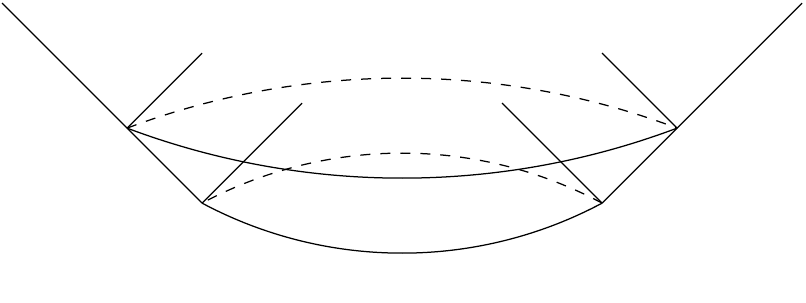_t}

\caption{Propagation of Singularity}
\end{center}
\end{figure}

Our main result on the propagation of an impulsive gravitational wave is described by
\begin{theorem}\label{giwthmv1}
Suppose the following hold for the initial data set:
\begin{itemize}
\item The data on $\Hb_0$ is smooth. 
\item The data on $H_0$ is smooth except across a two sphere $S_{0,\ub_s}$, where the traceless part of the second fundamental form $\chih$ has a jump discontinuity.
\end{itemize}
Then
\begin{enumerate}[(a)]
\item Given such initial data and $\epsilon$ sufficiently small, there exists a unique spacetime $(\mathcal M,g)$ endowed with a double null foliation $u$, $\ub$ that solves the characteristic initial value problem for the vacuum Einstein equations in the region $0\leq u\leq u_*$, $0\leq \ub\leq\ub_*$ whenever $u_*, \ub_* \leq \epsilon$.
\item Define $\Hb_{\ub_s}$ to be the incoming null hypersurface emanating from $S_{0,\ub_s}$. Then the curvature components $\alpha_{AB}=R(e_A,e_4,e_B,e_4)$ are measures with a singular atom supported on the hypersurface $\Hb_{\ub_s}$. All other components of the curvature tensor can be defined in $L^2$. Moreover, the solution is smooth away from $\Hb_{\ub_s}$.
\end{enumerate}
\end{theorem}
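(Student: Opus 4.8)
The plan is to work in a double null foliation gauge and set up a bootstrap argument based on energy estimates that are specially adapted to the anisotropic regularity of the solution. The key conceptual point is that the data for $\chih$ on $H_0$ has only a jump discontinuity across $S_{0,\ub_s}$, which means $\chih$ is bounded but $\nabla_4\chih$ (equivalently $\alpha$) contains a delta function supported on $\Hb_{\ub_s}$. So the strategy is \emph{not} to attempt to control $\alpha$ in $L^2$ of the whole spacetime — that is impossible — but rather to propagate a weaker, renormalized set of norms. Concretely, I would first perform a renormalization at the level of the Bianchi equations: rather than working with the full null Bianchi system for $(\alpha,\beta,\rho,\sigma,\betab,\alphab)$, I would eliminate $\alpha$ by pairing the equation for $\nabla_3\alpha$ with the equation for $\nabla_4\beta$, and more importantly replace $\rho$ and $\sigma$ by the renormalized quantities $\rhoc = \rho - \frac12 \chih\cdot\chibh$ and $\sigmac = \sigma + \frac12 \chih\wedge\chibh$ (or the analogous combinations), so that the renormalized Bianchi system closes without any appearance of $\alpha$ and without derivatives of $\chih$ that are too singular. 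This is precisely the mechanism by which one trades a spacetime $L^2$ bound on $\alpha$ for an $L^2_{\ub}L^2(S)$ bound on each fixed incoming cone, plus an $L^\infty_u L^2(S)$-type control that \emph{tolerates the jump}.

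The second step is to set up the hierarchy of norms. For the Ricci coefficients I would control $\chih,\trch,\eta,\etab,\omega,\omegab$ and $\trchb,\chibh$ in $L^\infty$ along with one angular derivative in $L^2(S)$ uniformly, and crucially I would put $\chih$ in a space like $L^\infty_u L^\infty_{\ub} L^2(S)$ together with $\nabla_3\chih \in L^2$ (the $\nabla_3$ direction is the good, smooth direction for $\chih$) while only asking for $\nabla_4\chih$ to be a measure of bounded total variation in $\ub$ across $\ub_s$. For the curvature, all components except $\alpha$ are put in the energy spaces $L^\infty_{\ub}L^2(H_u)$ and $L^\infty_u L^2(\Hb_{\ub})$ via the renormalized Bianchi pairs; $\alpha$ itself is only controlled in $L^2_u L^2(S)$ for $\ub\neq\ub_s$, with an explicit atom at $\ub=\ub_s$ whose size is governed by the jump in the data and is propagated by the transport equation $\nabla_3\alpha + \frac12\trchb\,\alpha = \nabla\hot\beta + \ldots$ along $\Hb_{\ub_s}$. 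Then I would run the standard Luk-type local existence bootstrap: assume the norms are bounded by a large constant, use the transport (null structure) equations to recover the Ricci coefficients gaining a power of $\epsilon$, use the renormalized energy estimates to recover the curvature norms, and close by taking $\epsilon$ small; uniqueness follows from the same estimates applied to the difference of two solutions (or from the standard argument reducing to an already-covered setting after a change of gauge).

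The third step is to upgrade the qualitative conclusions. Once the solution exists with the above norms, the statement that $\alpha$ is a measure with a singular atom on $\Hb_{\ub_s}$ and is otherwise $L^2_{loc}$ comes from analyzing the $\nabla_3$ transport equation for $\alpha$: away from $\ub_s$ the right-hand side is in $L^2$ and the initial data (on $H_0$) is smooth, so $\alpha$ is as regular as the other curvature components; across $\ub_s$ the jump in $\chih$ forces, via $\alpha \sim \nabla_4\chih + (\mbox{bounded})$, a jump in the $\ub$-primitive of $\alpha$, i.e. exactly a delta in $\alpha$ supported on the cone $\Hb_{\ub_s}$ traced out by $S_{0,\ub_s}$. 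Smoothness away from $\Hb_{\ub_s}$ is then obtained by a propagation-of-regularity / elliptic-hyperbolic bootstrap: in the region $\{\ub < \ub_s\}$ and in the region $\{\ub > \ub_s\}$ separately, all the data is smooth, so differentiating the null structure and Bianchi equations arbitrarily many times and re-running the estimates (now with no obstruction from the jump, since we stay strictly on one side) yields $C^\infty$ bounds on compact subsets; the two smooth pieces are then seen to be glued in a merely Lipschitz (metric) / $L^2$ (Ricci curvature) fashion along $\Hb_{\ub_s}$.

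The main obstacle, and the technical heart of the argument, is the energy estimate for the renormalized curvature in the presence of the singular $\chih$: one must verify that every error term arising in the renormalized Bianchi identities after integration by parts is a product that pairs a genuinely-$L^2$ (or better) factor with the merely-bounded-with-a-jump factor $\chih$, so that no term ever requires controlling $\alpha$ or $\nabla_4\chih$ in $L^2$ of the spacetime. Keeping track of which slot each factor sits in — which derivative is the "good" one for $\chih$ (namely $\nabla_3$) and never letting a $\nabla_4\chih$ or a bare $\alpha$ fall on the wrong side of an inequality — is where the anisotropy of the problem makes the bookkeeping delicate, and it is also where the choice of the precise renormalized quantities $\rhoc,\sigmac$ and the precise function spaces has to be made with care so that the whole scheme is consistent.
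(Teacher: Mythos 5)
Your core mechanism --- renormalize $\rho,\sigma$ to $\rhoc=\rho-\frac 12\chih\cdot\chibh$, $\sigmac=\sigma+\frac 12\chih\wedge\chibh$ so the Bianchi energy estimates close with no appearance of $\alpha$, estimate $\chih$ through its $\nab_3$ structure equation and the Codazzi equation rather than through $\nab_4\chih$, and finally read the delta in $\alpha$ off the propagated jump of $\chih$ via $\alpha=-\nab_4\chih-\trch\chih-2\omega\chih$ --- is exactly the paper's. But two genuine gaps remain. First, you treat the a priori estimates as directly yielding existence and uniqueness (``once the solution exists with the above norms...''; ``uniqueness follows from the same estimates applied to the difference''). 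At Lipschitz regularity this is not automatic: no standard local existence or iteration scheme applies, and even $R_{\mu\nu}=0$ needs a distributional interpretation. The paper therefore approximates the singular data by a sequence of smooth data (Section \ref{initialcondition}), uses the uniform a priori estimates (Theorem \ref{timeofexistence}) to give the smooth solutions a common domain, and then proves convergence by estimating the \emph{difference} of two smooth spacetimes in a weaker topology (Theorem \ref{convergencethm}). The key non-obvious point there --- emphasized in the paper as a ``remarkable fact'' --- is that the $\alpha$-avoidance of the renormalized system survives when one passes to differences, so the difference system can be closed without ever seeing $\alpha'$; this is also what gives uniqueness among all $C^0$ limits of smooth solutions. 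Your proposal is silent on this whole limb.

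Second, the smoothness-after-the-impulse claim --- ``in $\{\ub>\ub_s\}$... all the data is smooth, so differentiating and rerunning the estimates yields $C^\infty$'' --- does not survive scrutiny as stated. The characteristic data for the region $\{\ub>\ub_s\}$ on $\Hb_{\ub_s}$ is the trace of the already-constructed solution, not prescribed data, and one must first \emph{earn} higher regularity of that trace. The paper does so (Propositions \ref{alphaapriori}, \ref{giwsmoothness}) by using the $\nab_3\beta$ Bianchi equation to control $\nab_4^k\nab^i\beta$ on $\Hb_{\ub_s+2^{-n}}$ uniformly in $n$ in terms of $\sim i+2k+1$ angular derivatives of $\beta$ initially --- an essential derivative loss --- and only then runs the $(\nab_3\alpha,\nab_4\beta)$ energy pair for the smooth approximants. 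Relatedly, your phrase ``eliminate $\alpha$ by pairing the equation for $\nab_3\alpha$ with the equation for $\nab_4\beta$'' is backwards: that is precisely the Bianchi pair excluded from the closed renormalized system; it is reintroduced only for the smooth approximating spacetimes, where it yields the uniform $L^1_{\ub}$ bound on $\alpha_n$ (via Cauchy--Schwarz between the $O(2^{-n})$-thin region near $\ub_s$ and its complement) on which the measure interpretation of $\alpha_\infty$ ultimately rests. Without the approximation scheme, neither the measure structure of $\alpha$ nor smoothness after the impulse is accessible by your route.
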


For an impulsive gravitational wave, standard local existence shows that the spacetime is smooth in $0\leq \ub <\ub_s$. The problem of constructing an impulsive gravitational wave lies in making sense of the solution in the whole region with $\ub_s\leq \ub\leq \ub_*$ and showing that the singularity propagates along the characteristic hypersurface $\Hb_{\ub_s}$ and that the spacetime is smooth after the impulse.

\begin{remark}[Larger class of initial data]
The proof introduces a new type of energy estimates for the Einstein vacuum equations which allows some components of the Riemann curvature tensor not to be in $L^2$. This discovery allows us to consider the problem of local existence and uniqueness for a larger class of non-regular data. This includes data with a Riemann curvature tensor that can only be understood as a conormal distribution. 
\end{remark}

\begin{remark}[Uniqueness]
Penrose's construction of explicit impulsive gravitational spacetimes is based on the gluing approach, which has later been used to generate other explicit solutions of the Einstein equations. In the general case of the characteristic initial value problem for an impulsive gravitational wave, an appropriate adaptation of the gluing philosophy would allow us to construct {\bf weak} solutions with undetermined uniqueness. In this paper, however, we construct {\bf strong} solutions in the sense that {\bf uniqueness} can also be established. To prove uniqueness, we establish a priori estimates for a larger class of admissible initial data. This allows us to show that the solution is unique among all $C^0$ limits of smooth solutions to the vacuum Einstein equations.
\end{remark}

\begin{remark}[Propagation of singularity]
Our theorem gives a precise description of the propagation of the initial singularity. Such problems have been extensively studied for semilinear equations (see for example \cite{Beals}, \cite{Metivier}). Our result can be formally compared to the works of Majda on the propagation of shocks \cite{Maj1}, \cite{Maj2} for systems of conservation laws and the subsequent \cite{Alinhac}, \cite{Metivier2}, \cite{Metivier3}, \cite{CS}. In these works, a short time existence, uniqueness and regularity result was established for initial data with a jump discontinuity across a surface and a precise description of the propagation of the singularity was also given. We establish an analogous result of propagation of singularity for a nonlinear system of quasilinear hyperbolic equations. However, contrary to \cite{Maj1}, \cite{Maj2} the singularity that is considered in the present paper is not a shock, as it propagates along the characteristics. Moreover, unlike in \cite{Maj1}, \cite{Maj2}, \cite{Alinhac},\cite{Metivier2}, \cite{Metivier3}, \cite{CS}, where the problem is reformulated as an initial-boundary value problem and uniqueness is known only within the class of piecewise smooth solutions, our solution is also unique among limits of smooth solutions. In order to achieve this, the special structure of the Einstein equations in the double null foliation gauge has been heavily exploited. 
\end{remark}

The construction of spacetimes from non-regular characteristic initial data consistent with that of an impulsive gravitational wave had been known only under symmetry assumptions \cite{ChrSph1}, \cite{LeSm}, \cite{LeSte2}. The work of Christodoulou, who solved the characteristic initial value problem for data with bounded variation for the spherically symmetric Einstein-scalar field system \cite{ChrSph1}, can be thought of as a first result in that direction. In particular, in this work, the second derivatives of the scalar field, which formally is analogous to the curvature, is allowed to have a delta singularity. The study of data with bounded variation turned out to have important consequences in the global structure of spacetimes and the resolution of the cosmic censorship conjecture for the spherically symmetric Einstein-scalar field system \cite{ChrSph2}, \cite{ChrSph3}. The construction of distributional solutions for the vacuum Einstein equations that include \emph{plane} impulsive gravitational wave was carried out in \cite{LeSm}, \cite{LeSte2}. The present paper is the first work that provides a consistent study of the initial value problem of impulsive gravitational spacetimes, including their existence, uniqueness and propagation of singularity/regularity.

Going beyond spacetimes which represent a single impulsive gravitational wave, colliding impulsive gravitational waves had been studied by Khan-Penrose \cite{KhanPenrose} and Szekeres \cite{Szekeres}. In these explicit solutions, the spacetimes possess two null hypersurfaces with curvature delta singularity with a transverse intersection, representing the nonlinear interaction of two impulsive gravitational waves. The study of the characteristic initial value problem for the colliding impulsive gravitational waves will be carried out by the authors in a subsequent paper.

\subsection{First Version of the Theorem}

Our general approach is based on energy estimates and transport equations in the double null foliation gauge. This general approach in the double null foliation gauge has been carried out in \cite{KN}, \cite{Chr} and \cite{KlRo}.

The spacetime in question will be foliated by families of outgoing and incoming null hypersurfaces $H_u$ and $\Hb_{\ub}$ respectively. Their intersection is assumed to be a 2-sphere denoted by $S_{u,\ub}$. Define a null frame $\{e_1,e_2,e_3,e_4\}$, where $e_3$ and $e_4$ are null, as indicated in Figure 3, and $e_1$, $e_2$ are tangent to the two spheres $S_{u,\ub}$. $e_4$ is tangent to $H_u$ and $e_3$ is tangent to $\Hb_{\ub}$.

\begin{figure}[htbp]
\begin{center}
 
\input{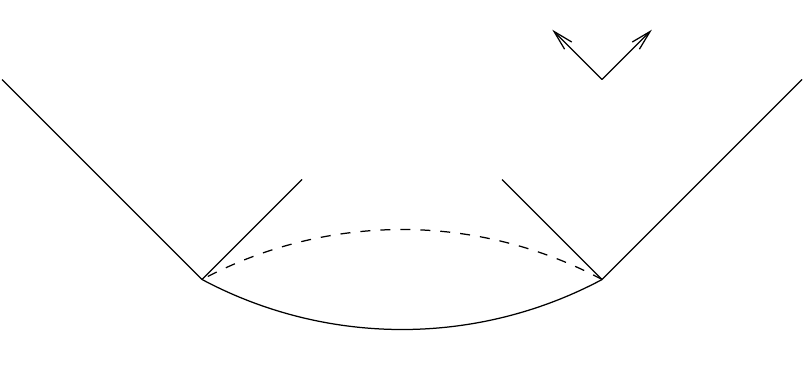_t}
 
\caption{The Basic Setup and the Null Frame}
\end{center}
\end{figure}
Decompose the Riemann curvature tensor with respect to this frame:
\begin{equation*}
\begin{split}
\a_{AB}&=R(e_A, e_4, e_B, e_4),\quad \, \,\,   \ab_{AB}=R(e_A, e_3, e_B, e_3),\\
\b_A&= \frac 1 2 R(e_A,  e_4, e_3, e_4) ,\quad \bb_A =\frac 1 2 R(e_A,  e_3,  e_3, e_4),\\
\rho&=\frac 1 4 R(e_4,e_3, e_4,  e_3),\quad \sigma=\frac 1 4  \,^*R(e_4,e_3, e_4,  e_3)
\end{split}
\end{equation*}
In the context of impulsive gravitational spacetimes, the $\alpha$ component can only be understood as a measure. 

\noindent Define also the following Ricci coefficients with respect to the null frame:
\begin{equation*}
\begin{split}
&\chi_{AB}=g(D_A e_4,e_B),\, \,\, \quad \chib_{AB}=g(D_A e_3,e_B),\\
&\eta_A=-\frac 12 g(D_3 e_A,e_4),\quad \etab_A=-\frac 12 g(D_4 e_A,e_3)\\
&\omega=-\frac 14 g(D_4 e_3,e_4),\quad\,\,\, \omegab=-\frac 14 g(D_3 e_4,e_3),\\
&\zeta_A=\frac 1 2 g(D_A e_4,e_3)
\end{split}
\end{equation*}

Define $\chih$ (resp. $\chibh$) to be the traceless part of $\chi$ (resp. $\chib$). For the problem of the propagation of impulsive gravitational waves, we prescribe initial data on $H_0$ such that $\chih$ has a jump discontinuity across $S_{0,\ub_s}$ but smooth otherwise. On $\Hb_0$, we prescribe the initial data to be smooth but without any smallness assumptions.

As mentioned before, we will prove local existence and uniqueness for a class of data more general than that for the impulsive gravitational wave. More precisely, we require that $\chih$ and its angular derivatives are merely bounded. However, we do not require the derivative of $\chih$ in the $e_4$ direction even to be defined.

\begin{theorem}\label{rdthmv1}
Suppose the characteristic initial data are smooth on $\Hb_0$. On $H_0$, the characteristic initial data are determined by $\chi$ and $\zeta_A$. Let $\th^A$ be the coordinates on the two sphere foliating $H_0$. Suppose, in every coordinate patch
$$\sum_{i\leq 3}|(\frac{\partial}{\partial\th})^i\chih_{AB}|,\sum_{k\leq 1}\sum_{i\leq 3} |(\frac{\partial}{\partial\ub})^k(\frac{\partial}{\partial\th})^i(\trch,\zeta_A)|\leq C.$$
Then there exists $\epsilon$ sufficiently small such that the unique solution to the vacuum Einstein equations $(\mathcal M,g)$ endowed with a double null foliation $u$, $\ub$ exists in $0\leq u\leq \epsilon$, $0\leq \ub\leq \epsilon$. Associated to the spacetime a coordinate system $(u,\ub,\th^1,\th^2)$ exists, relative to which the spacetime is in particular Lipschitz and retains higher regularity in the angular directions.
\end{theorem}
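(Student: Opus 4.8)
\emph{Proof proposal.} The plan is to run the double-null-foliation bootstrap of \cite{KN, Chr, KlRo}, but with a curvature renormalization tailored so that $\a$ --- which under the hypotheses cannot even lie in $L^2$, since the structure equation gives $\a = -\nab_4\chih + (\text{terms linear in }\chih)$ while $\chih$ is merely bounded and $\nab_4\chih$ is undefined --- never enters the energy estimates. Introduce double null coordinates $(u,\ub,\th^1,\th^2)$ in which $g = -2\Omega^2(du\,d\ub + d\ub\,du) + \gamma_{AB}(d\th^A - b^A d\ub)(d\th^B - b^B d\ub)$, normalized on $H_0\cup\Hb_0$, and recall the full system: the null structure equations (transport of the Ricci coefficients in $e_3$ and $e_4$), the Codazzi/Gauss/Hodge constraints on the spheres $S_{u,\ub}$, and the Bianchi equations. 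In the latter $\a$ occurs only \emph{undifferentiated}, contracted against $\chibh$, in the $\nab_4$-equations of $\rho$, $\sigma$, $\bb$; using the identity above write $\chibh\cdot\a = -\nab_4(\chih\cdot\chibh) + \chih\cdot\nab_4\chibh + (\ldots)$ and note that $\nab_4\chibh$ satisfies an $\a$-free structure equation and is bounded. Absorbing the $\nab_4(\chih\cdot\chibh)$ term produces renormalized curvature components $\rhoc = \rho + c\,\chih\cdot\chibh$, $\sigmac$, $\betabc$ (plus certain quadratic Ricci corrections) whose Bianchi equations involve $\chih$ only algebraically or with angular derivatives, never $\nab_4\chih$ or $\nab\a$; the pair $(\a,\b)$ is simply discarded. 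Finally, reduce Theorem~\ref{rdthmv1} to a priori estimates: mollify the $H_0$ data to smooth characteristic data converging in the norms of the hypothesis, solve by the standard smooth characteristic existence theorem (Rendall), and bound the solutions \emph{uniformly}; existence in the rough class then follows by a limiting argument and uniqueness by the same estimates applied to a difference.

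For the a priori estimates, set up bootstrap norms on $[0,u_*]\times[0,\ub_*]$: $L^\infty$ and $L^p(S)$ control of the metric coefficients and of the Ricci coefficients with their natural regularities --- in particular $\chih$ together with up to three angular derivatives bounded, but \emph{no} $\nab_4\chih$; $\trch$, $\zeta$, $\eta$, $\etab$, $\omega$, $\omegab$, $\trchb$, $\chibh$ with one good transversal derivative plus angular derivatives --- together with $L^2(H_u)$ and $L^2(\Hb_\ub)$ flux norms for $\b$, $\rhoc$, $\sigmac$, $\betabc$, $\ab$ and (one or two of) their angular derivatives. Assuming these with a large constant, one recovers: (i) the metric components $\Omega$, $\gamma$, $b$ and their derivatives by integrating their transport equations in $\ub$ (resp.\ $u$) over an interval of length $\le\epsilon$, gaining a factor $\epsilon$; (ii) the Ricci coefficients from the null structure equations --- the only delicate point being that $\chih$ and its angular derivatives must be recovered from the $\a$-free equation $\nab_3\chih + \tfrac12\trchb\,\chih = \nab\hot\eta + \ldots$, and $\trch$ from Raychaudhuri, whose source $|\chih|^2$ is bounded; (iii) the missing angular regularity from elliptic estimates on $S_{u,\ub}$ --- Codazzi for $\chih,\chibh$ in terms of $\b,\bb$ and $\nab\trch,\nab\trchb$, Hodge systems for $\eta,\etab,\zeta$ --- which need only $L^2$-based control of the Gauss curvature $K$, itself supplied through $\rhoc$ plus bounded quadratic terms. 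Every error term is then linear in a curvature flux times $\sqrt\epsilon$ or quadratic in the bootstrapped quantities times $\epsilon$, hence absorbable.

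The analytic heart is the energy estimate for the renormalized curvature. Commute the $\a$-free Bianchi equations with up to two angular derivatives and run the weighted energy identity for the three surviving pairs --- $(\b,(\rhoc,\sigmac))$ via $\nab_3\b$ and $\nab_4(\rhoc,\sigmac)$, $((\rhoc,\sigmac),\betabc)$ via $\nab_3(\rhoc,\sigmac)$ and $\nab_4\betabc$, and $(\betabc,\ab)$ via $\nab_3\betabc$ and $\nab_4\ab$ --- integrating the divergence identity over $[0,u_*]\times[0,\ub_*]$; in each pair the principal terms cancel after integration by parts on $S_{u,\ub}$, so the curvature fluxes are controlled by the data on $H_0,\Hb_0$ plus bulk integrals that are products of curvature components and their angular derivatives with (elliptically controlled) Ricci coefficients. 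Because the renormalization has confined every appearance of $\a$ to expressions in which $\chih$ enters only algebraically, multiplied by already-controlled quantities such as $\nab\etab$ or products of Ricci coefficients, and because the curvature-valued commutator terms are of $\b$-type rather than $\a$-type, no bulk term ever requires $\a\in L^2$ or $\nab_4\chih$; each carries a factor $\epsilon$ or $\sqrt\epsilon$ relative to the bootstrapped norms, and the fluxes close. For $\epsilon$ small all bootstrap constants are strictly improved, so the estimates hold on the full slab uniformly along the smooth sequence; a standard open--closed continuation (``last slice'') argument upgrades Rendall's local solutions to the whole region $\{u_*,\ub_*\le\epsilon\}$. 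Passing to the limit --- $C^0$ for the Ricci and metric coefficients, $C^0_{u,\ub}C^k_\th$ for $g$ (whence Lipschitz with angular regularity), weak-$*$ for $\a$ and weak-$L^2$ for the remaining curvature --- yields a solution of the vacuum equations, and the \emph{same} renormalized estimate applied to the difference of two solutions with identical data, run at one fewer angular derivative so that the merely-bounded $\chih - \chih'$ still enters acceptably, forces the difference to vanish, giving uniqueness.

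The main obstacle is precisely the design of this renormalization together with the matching norm hierarchy: one must exhibit curvature quantities whose Bianchi equations are genuinely free of both $\a$ and $\nab_4\chih$, decide the deficient regularity carried by $\chih$ (and by any further anomalous component such as $\b$), and then verify term by term --- through all the transport, elliptic, commutator and energy estimates --- that $\chih$ is only ever hit by angular derivatives or appears algebraically, never with a $\nab_4$, and that every error term survives this deficiency after paying the small factor of $\epsilon$ from the thinness of the slab. A secondary difficulty is the uniqueness statement, which must hold within a rough class (only Lipschitz metrics, with regularity only in the angular directions) rather than among piecewise-smooth solutions, and relies on the double null structure to keep the difference system from losing derivatives.
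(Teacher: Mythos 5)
Your overall strategy is exactly the paper's: renormalize $\rhoc=\rho-\tfrac12\chih\cdot\chibh$, $\sigmac=\sigma+\tfrac12\chih\wedge\chibh$ to remove $\alpha$ from the $\nab_4$ Bianchi equations, drop the $(\nab_3\alpha,\nab_4\beta)$ pairing, recover $\chih$ via its $\nab_3$ structure equation and Codazzi, commute only with angular derivatives, prove uniform a priori bounds for mollified smooth data and pass to the limit, with uniqueness from the same estimates applied to differences. But there are two points to correct.

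First, a factual error in your premise: the $\nab_4\betab$ Bianchi equation contains no $\alpha$ --- its nonlinear terms are $\chibh\cdot\beta$ and $\etab\cdot(\rho,\sigma)$ --- and the $\nab_3$ equations for $\rho,\sigma,\betab$ and the $\nab_4\alphab$ equation are likewise $\alpha$-free. The only undifferentiated $\alpha$ occurs as $\chibh\cdot\alpha$ in $\nab_4(\rho,\sigma)$, so only $\rho$ and $\sigma$ need renormalizing; your $\betabc$ is superfluous, and the paper runs the second and third energy pairs with unrenormalized $(\rho,\sigma),\betab$ and $\betab,\alphab$.

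Second, a genuine gap: once the $(\nab_3\alpha,\nab_4\beta)$ pairing is dropped, $\beta$ is controlled in $L^2(H_u)$ but \emph{not} in $L^2(\Hb_{\ub})$. You flag $\chih$ as anomalous but do not trace this $\beta$-loss through to the top-order Ricci coefficient estimates, where it bites for $\nab^3\omega$ (and $\nab^3\omegab$). Your elliptic list --- Codazzi for $\chih,\chibh$, Hodge systems for $\eta,\etab,\zeta$ --- omits $\omega,\omegab$ entirely. The $\Theta$-variable feeding the Hodge estimate for $\nab\omega$ is, after introducing an auxiliary scalar $\omega^\dagger$ with $\nab_3\omega^\dagger=\tfrac12\sigma$, the combination $\kappa=\nab\omega+{}^*\nab\omega^\dagger-\tfrac12\beta$; its $\nab_3$-transport equation carries a $\psi\beta$ source, so $\nab^2\kappa$ cannot be bounded in $L^2(\Hb_{\ub})$. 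The paper circumvents this by estimating $\nab^2\kappa$ only in $L^2(H_u)$ via an iterated Cauchy--Schwarz in $u$ then $\ub$ (exploiting that $||\nab^i\beta||_{L^2_{\ub}L^2_uL^2(S)}\leq \epsilon^{1/2}\mathcal R$), and then recovering $\nab^3(\omega,\omega^\dagger)$ from the div-curl system using that one-sided flux alone; the corresponding asymmetric term $\sup_u\|\nab^3\omega\|_{L^2(H_u)}$ (but not on $\Hb_{\ub}$) is precisely what is built into $\tilde{\mathcal O}_{3,2}$. Without this device the top-order Ricci estimates, and hence the energy estimates that depend on them, do not close.
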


\begin{remark}
We will also explicitly construct a class of initial data that satisfy the assumptions of Theorem \ref{giwthmv1} in Section \ref{initialcondition}. This construction of initial data will indicate an easy modification leading to an even more general class satisfying the assumptions of Theorem \ref{rdthmv1}.
\end{remark}

This Theorem in particular implies Theorem \ref{giwthmv1}(a). An additional argument, based on the estimates derived in the proof of Theorem \ref{rdthmv1}, will be carried out in Section \ref{limitgiw} to prove part (b) of Theorem \ref{giwthmv1}. The statement of Theorem \ref{rdthmv1} will be made precise in Section \ref{thmstatement}. In particular, the smoothness assumptions of Theorem \ref{rdthmv1} can be weakened and we will define in what sense the vacuum Einstein equations are satisfied.

\begin{remark}[Comparison to local existence results]
Without symmetry assumptions, all known proofs of existence of spacetimes satisfying the Einstein equations are based on $L^2$-type estimates for the curvature tensor and its derivatives or the metric components and their derivatives. Even with the recent resolution of the $L^2$ curvature conjecture by Klainerman-Rodnianski-Szeftel \cite{L21}, \cite{L22}, \cite{L23}, \cite{L24}, \cite{L25}, the Riemann curvature tensor has to be at least in $L^2$. (The classical local existence result of Hughes-Kato-Marsden \cite{HKM} requires the metric to be in $H^s$, with $s> \frac 52$.)

In this paper, we prove local existence and uniqueness under the assumption that the spacetime is merely Lipschitz, which in terms of differentiability is even one derivative weaker than the $L^2$ curvature conjecture. Of course the Lipschitz assumption refers to the worst possible behavior observed in our data and our result heavily relies on the structure of the Einstein equations which allows us to efficiently exploit the better behavior of the other components.
\end{remark}

\subsection{Strategy of the Proof}

For an impulsive gravitational wave, the curvature tensor can only be defined as a measure and is not in $L^2$. This is one of the main challenges of this work.  Let $\Psi$ denote the curvature components and $\psi$ denote the Ricci coefficients. The key new observation of this paper is that the $L^2$-type energy estimates for the components of the Riemann curvature tensor
$$\int_{H_u} \Psi^2+\int_{\Hb_{\ub}} \Psi^2\leq \int_{H_0} \Psi^2+\int_{\Hb_0} \Psi^2 +\int_0^{\ub} \int_0^u\int_{S_{u
',\ub'}} \psi\Psi\Psi du' d\ub'$$
coupled together with the null transport equations for the Ricci coefficients 
$$\nab_3\psi=\Psi+\psi\psi,\quad\nab_4\psi=\Psi+\psi\psi$$
can be renormalized and closed avoiding the $L^2$-non-integrable components of curvature.

\subsubsection{Renormalized Energy Estimates}\label{eeoutline}

The difficulty in carrying out the above argument is that in our setting, not all curvature components are defined in $L^2$. Even at the level of initial data, $\alpha$ is only defined as a measure. Therefore, we need to prove $L^2$ energy estimates without involving $\alpha$. To this end, we need to introduce and estimate the renormalized curvature components. The idea of renormalizing the curvature components has been introduced in \cite{KlRo:causal}. Unlike this work, there the renormalization was used to obtain estimates for the Ricci coefficients.

The classical way to derive energy estimates for the Einstein equations is via the Bel-Robinson tensor. In view of our renormalization, we avoid the use of the Bel-Robinson tensor and instead prove the energy estimates directly from the Bianchi identities (\ref{eq:null.Bianchi}). We note that the derivation of the energy estimates directly from the Bianchi equations without using the Bel Robinson tensor has also appeared in the work of Holzegel \cite{Holzegel} in a different setting. The challenge and motivation of this method is the derivation of estimates not involving the singular component of curvature $\alpha$.

To illustrate this, we first prove the energy estimates for $\beta$ on $H_u$ and for $(\rho,\sigma)$ on $\Hb_{\ub}$ by considering the following set of Bianchi equations:
$$\nabla_4\rho=\div\beta - \frac 12 \chibh\cdot\alpha+...$$
$$\nab_4\sigma=-\div ^*\beta+\frac 12\chibh\cdot\alpha+...$$
$$\nab_3\beta=\nab\rho+\nab^*\sigma+...$$
However, the curvature component $\alpha$ still appears in the nonlinear terms in these equations. In order to deal with this problem, we renormalize $\rho$ and $\sigma$. Define
$$\rhoc=\rho-\frac 12 \chih\cdot\chibh,\quad \sigmac=\sigma+\frac 12 \chih\wedge\chibh.$$
Using the equation
$$\nabla_4\chih=-\alpha+...,$$
we notice that the first two equations become
$$\nabla_4\rhoc=\div\beta+...,\quad\nab_4\sigmac=-\div ^*\beta+...$$
At the same time, the equation for $\beta$ can be re-written in terms of of $\rhoc$ and $\sigmac$:
$$\nab_3\beta=\nab\rhoc+\nab^*\sigmac+\psi\nab\psi+...$$
Now we have a set of renormalized Bianchi equations that does not contain $\alpha$. Using these equations, we derive the renormalized energy estimate
$$\int_{H_u} \beta^2+\int_{\Hb_{\ub}} (\rhoc,\sigmac)^2\leq \int_{H_0} \beta^2+\int_{\Hb_0} (\rhoc,\sigmac)^2 +\int_0^{\ub} \int_0^u\int_{S_{u
',\ub'}} \psi\Psi\Psi+\psi\psi\nab\psi du' d\ub',$$
in which $\alpha$ does not appear in the error term.

The same philosophy can be applied for the remaining curvature components ($\rho,\sigma,\betab,\alphab$). As a consequence, we obtain a set of $L^2$ curvature estimates which do not explicitly couple to the singular curvature component $\alpha$. We say explicitly that there is still a remaining possibility that the Ricci coefficients $\psi$ appearing in the nonlinear error terms for the energy estimates may depend on $\alpha$.

\subsubsection{Estimates for the Ricci Coefficients}\label{Riccioutline}

In order to close the estimates, it is necessary to show that all the Ricci coefficients can be estimated without \emph{any} knowledge of $\alpha$. The most dangerous component is $\chih$, which naively would have to be estimated using the transport equation
$$\nabla_4\chih+\trch\chih=-2\omega\chih-\alpha.$$
We take an alternate route and estimate $\chih$ from the equation
$$\nab_3\chih+\frac 1 2 \trchb \chih=\nab\widehat{\otimes} \eta+2\omegab \chih-\frac 12 \trch \chibh +\eta\widehat{\otimes} \eta,$$
or
$$\div\chih=\frac 12 \nabla \trch - \frac 12 (\eta-\etab)\cdot (\chih -\frac 1 2 \trch) -\beta.$$

As we shall see, the loss of information of $\alpha$ is accompanied by the loss of information of $\beta$ on $\Hb_{\ub}$. This presents yet an additional challenge in estimating the Ricci coefficients.

\subsubsection{Higher Order Energy Estimates}\label{higherorderoutline}

Another difficulty arises considering the fact that in order to close the energy estimates, we need to prove higher derivative estimates for the curvature components. However, the derivatives of some curvature components along $e_4$ are not defined in $L^2$ initially. We will therefore only use angular covariant derivatives $\nab$ as commutators and will prove estimates only for the $L^2$ norms of the angular covariant derivatives of the renormalized curvature components. We will show firstly that this procedure does not introduce terms that cannot be estimated in $L^2$ (in particular $\alpha$ will not appear) and secondly that all the energy estimates can be closed only using the estimates of the angular derivatives of the renormalized curvature components.

\subsubsection{Existence and Uniqueness}

Since we work with initial data with very low regularity, the a priori estimates that we have described above in Sections \ref{eeoutline}, \ref{Riccioutline} and \ref{higherorderoutline} do not immediately imply the existence and uniqueness of the solutions. Instead, we need to approximate the data by a sequence of smooth data and show first that they have a common domain of existence and second that they converge in this domain. The fact that the solutions to the sequence of smooth data have a common domain of existence follows from the a priori estimates that do not involve $\alpha$ and its derivatives as outlined above. Since the approximating data are smooth, we can conclude that the approximating solutions are also smooth and exist in a common domain.

Once we have proved the uniform a priori estimates, we proceed to prove that the sequence of solutions converges. To this end, we consider the equations for the difference of the Ricci coefficients and curvature components. We identify the spacetimes in this sequence by the value of their coordinate functions and derive equations for the difference of the Ricci coefficients and curvature components. Our a priori estimates heavily relies on the structure of the Einstein equations which allows us to eliminate any dependence of the $\alpha$ component. A priori, there is no reason to think that this structure is preserved when considering difference of these solutions, which is necessary to show that our end result is a strong solution to the Einstein equations. It is a remarkable fact that it turns out that in the equations for the difference of the Ricci coefficients and curvature components, $\alpha$ indeed does not appear. Thus the a priori estimates we have proved are sufficient to control the difference of the Ricci coefficients and curvature components from the difference of the initial value. This proves that the sequence of solutions converge. Even though we face the standard challenge of loss of derivative in the quasilinear equations, our estimates are still sufficient to show that the sequence converges.

The constructed limiting spacetime is not smooth. In particular, second derivative of the metric in the $e_4$ direction is not even defined. We will, however, show that the Ricci curvature tensor is better behaved than a general second derivative of the metric and that the limiting spacetime satisfies the vacuum Einstein equations $R_{\mu\nu}=0$ in the $L^2$ sense.

The estimates for the difference of the Ricci coefficients and curvature components imply that the constructed spacetime is the unique solution to the Einstein equations among the class of spacetimes admitting a double null foliation and satisfying strong enough a priori bounds. Moreover, we can prove uniqueness of the constructed solution among all limits of smooth spacetimes, i.e, any spacetime that arises as a $C^0$ limit of smooth solutions to the vacuum Einstein equations with initial data converging to the given initial data must coincide with the constructed spacetime.

It can be observed that the above argument for the a priori estimates, as well as that for showing existence and uniqueness, does not use the fact that initially, $\alpha$ is a measure whose singular support is on $S_{0,\ub_s}$. In effect, the argument avoids $\alpha$ completely, and can be used for data such that $\alpha$ is much rougher. Since the argument used only estimates on the Ricci coefficients and the curvature components other than $\alpha$, it can be used to handle initial data satisfying only the assumptions of Theorem \ref{rdthmv1}.

\subsubsection{Regularity and Propagation of Singularity}

In the setting of Theorem \ref{giwthmv1}, i.e., that of an impulsive gravitational wave, the theorem gives a precise description of the propagation of singularity. The a priori estimates imply that all the curvature components other than $\alpha$ are bounded. Here, we are interested in proving two additional statements: firstly, $\alpha$ is a measure that indeed has a delta singularity on $\Hb_{\ub_s}$; secondly, the spacetime is smooth away from $\Hb_{\ub_s}$.

To show that $\alpha$ is a measure, we approximate the data for $\alpha$ by a sequence of smooth data $\alpha_n$ such that $\alpha_n$ is of size $2^n$ in a region $|\ub-\ub_s|\leq 2^{-n}$. The spacetimes constructed for such data are smooth and therefore allow us to use the previously avoided $L^2$ estimate for the $\alpha$ component of curvature. This estimate imply that in the constructed spacetime, in the region $|\ub-\ub_s|\geq 2^{-n}$, the $L^2_{\ub}$ norm of $\alpha_n$ are uniformly bounded; while in the region $|\ub-\ub_s|\leq 2^{-n}$, the $L^2_{\ub}$ norm of $\alpha_n$ is bounded by $2^{\frac n2}$. By Cauchy-Schwarz, we have that the $L^1_{\ub}$ norms of $\alpha_n$ are uniformly bounded. This allows us to show that in the limiting spacetime, the curvature component $\alpha$ is a measure.

To show that the singular part of $\alpha$ is a delta function supported on the null hypersurface $\Hb_{\ub_s}$, we notice that
$$\alpha=-\nab_4\chih-\trch \chih-2 \omega \chih.$$
Therefore, it suffices to show that $\chih$ has a jump discontinuity across $\Hb_{\ub_s}$ and smooth everywhere else. This can be proved using the equation
$$\nab_3\chih+\frac 1 2 \trchb \chih=\nab\widehat{\otimes} \eta+2\omegab \chih-\frac 12 \trch \chibh +\eta\widehat{\otimes} \eta$$
and the fact that on the initial hypersurface $H_0$, $\chih$ has a jump discontinuity across $S_{0,\ub_s}$ and smooth everywhere else.

In order to show that the spacetime is smooth away from $\Hb_{\ub_s}$, we will estimate the higher regularity of all the curvature components in that region. We first use the Bianchi equations
$$\nab_3\beta+\trchb\beta=\nabla\rho + 2\omegab \beta +^*\nabla\sigma +2\chih\cdot\betab+3(\eta\rho+^*\eta\sigma).$$
Integrating this equation and using Gronwall's inequality, we obtain for any $\ub\neq\ub_s$ that
$$\sup_u\sum_{i\leq I}||\nab^i\beta||_{L^2(S_{u,\ub})}\leq (\sum_{i\leq I}||\nab^i\beta||_{L^2(S_{0,\ub})}+...)\exp(\int \sum_{i\leq I}||\nab^i(\trch,\omegab)||_{L^\infty(S)}).$$
The regularity of $\beta$ is thus inherited from the initial data, which is smooth away from $\ub=\ub_s$.
Once we have estimates for $\beta$, we consider the equation
$$\nab_3\alpha+\frac 12 \trchb \alpha=\nabla\hot \beta+ 4\omegab\alpha-3(\chih\rho+^*\chih\sigma)+
(\zeta+4\eta)\hot\beta.$$
Integrating as before, we see that $\alpha$ also inherits the regularity from the initial data. Higher derivatives estimates for $\alpha$ can be derived analogously by differentiating this equation. The other components of curvature can be controlled in a similar fashion. Notice that this procedure results in a loss of derivatives. In particular, in order to control the $N$-th derivative of the curvature, one needs $\sim 2N$ derivatives initially.

\subsection{Outline of the Paper}

Finally, we give the outline of the remainder of the paper. In the next Section, we give a careful introduction of the setting, describing the double null foliation, the coordinate system, the equations and relevant notations. We will state a more precise version of Theorem \ref{rdthmv1}, which we will call Theorem \ref{rdthmv2}. In Section \ref{initialcondition}, we provide a construction of the initial data set satisfying the conditions in Theorem \ref{giwthmv1} and exhibit a sequence of smooth data approximating the data with a curvature delta singularity. In Sections \ref{estimates} and \ref{convergence}, we prove Theorem \ref{rdthmv2}. In Section \ref{estimates}, we prove that for smooth initial data satisfying the assumptions of Theorem \ref{rdthmv2}, a unique spacetime exists in a region depending only on the constants in the assumptions. This in particular implies that the approximating sequence of initial data constructed in Section \ref{initialcondition} gives rise to spacetimes with a common and uniform region of existence (identified by a choice of a double null coordinate system). In Section \ref{convergence}, we study the equations for the difference of two spacetimes. This allows us to show convergence of solutions with converging initial data and conclude the existence part of Theorem \ref{rdthmv2}. In Section \ref{limit}, we examine the regularity of the limiting spacetime and show that it is a solution to the vacuum Einstein equations. In Section \ref{uniquenesssec}, we conclude the uniqueness part of Theorem \ref{rdthmv2}. Finally, in Section \ref{limitgiw}, we return to the proof of part (b) of Theorem \ref{giwthmv1}, giving a precise description of the propagation of singularity.\\

\noindent{\bf Acknowledgments:} The authors would like to thank Mihalis Dafermos for valuable discussions. We also thank Dejan Gajic, Joe Keir, Jan Sbierski, Martin Taylor, as well as an anonymous referee, for helpful comments. J. Luk is supported by the NSF Postdoctoral Fellowship DMS-1204493. I. Rodnianski is supported by the NSF grant DMS-1001500 and the FRG grant DMS-1065710.

\section{Setting and Equations}

\begin{figure}[htbp]
\begin{center}
 
\input{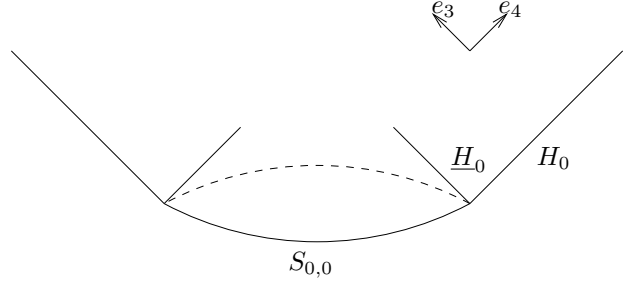}
 
\caption{The Basic Setup and the Null Frame}
\end{center}
\end{figure}

Our setting is the characteristic initial value problem with data given on the two characteristic hypersurfaces $H_0$ and $\Hb_0$ intersecting at the sphere $S_{0,0}$ (see Figure 4). The spacetime will be a solution to the Einstein equations constructed in a neighborhood of $S_{0,0}$ bounded by the two hypersurfaces.

\subsection{Double Null Foliation}
For a spacetime in a neighborhood of $S_{0,0}$, we define a double null foliation as follows: Let $u$ and $\ub$ be solutions to the eikonal equation
$$g^{\mu\nu}\partial_\mu u\partial_\nu u=0,\quad g^{\mu\nu}\partial_\mu\ub\partial_\nu \ub=0,$$
satisfying the initial conditions $u=0$ on $H_0$ and $\ub=0$ on $\Hb_0$.
Let
$$L'^\mu=-2g^{\mu\nu}\partial_\nu u,\quad \Lb'^\mu=-2g^{\mu\nu}\partial_\nu \ub.$$ 
These are null and geodesic vector fields. Define
$$2\Omega^{-2}=-g(L',\Lb').$$
Define
$$e_3=\Omega\Lb'\mbox{, }e_4=\Omega L'$$
to be the normalized null pair such that 
$$g(e_3,e_4)=-2$$
and
$$\Lb=\Omega^2\Lb'\mbox{, }L=\Omega^2 L'$$
to be the so-called equivariant vector fields.

We will denote the level sets of $u$ as $H_u$ and the level sets of $\ub$ and $\Hb_{\ub}$. By virtue of the eikonal equations, $H_u$ and $\Hb_{\ub}$ are null hypersurface. Notice that the sets defined by fixed values of $(u,\ub)$ are 2-spheres. We denote such spheres by $S_{u,\ub}$. They are intersections of the hypersurfaces $H_u$ and $\Hb_{\ub}$. The integral flows of $L$ and $\Lb$ respect the foliation $S_{u,\ub}$.

\subsection{The Coordinate System}\label{coordinates}
On a spacetime in a neighborhood of $S_{0,0}$, we define a coordinate system $(u,\ub,\th^1,\th^2)$ as follows:
On the sphere $S_{0,0}$, define a coordinate system $(\th^1,\th^2)$ for the sphere such that on each coordinate patch the metric $\gamma$ is smooth, bounded and positive definite. Then we define the coordinates on the initial hypersurfaces by requiring 
$$\frac{\partial}{\partial u}\th^A=0\mbox{ on $\Hb_0$, and }\frac{\partial}{\partial \ub}\th^A=0\mbox{ on $H_0$}.$$
We now define the coordinate system in the spacetime in a neighborhood of $S_{0,0}$ by letting $u$ and $\ub$ to be solutions to the eikonal equations:
$$g^{\mu\nu}\partial_\mu u\partial_\nu u=0,\quad g^{\mu\nu}\partial_\mu\ub\partial_\nu \ub=0,$$
and define $\th^1, \th^2$ by
$$\Ls_L \th^A=0,$$ 
where $\Ls_L$ denote the restriction of the Lie derivative to $TS_{u,\ub}$ (See \cite{Chr}).
Relative to the coordinate system, the null pair $e_3$ and $e_4$ can be expressed as
$$e_3=\Omega^{-1}\left(\frac{\partial}{\partial u}+b^A\frac{\partial}{\partial \th^A}\right), e_4=\Omega^{-1}\frac{\partial}{\partial \ub},$$
for some $b^A$ such that $b^A=0$ on $\Hb_0$, while the metric $g$ takes the form
$$g=-2\Omega^2(du\otimes d\ub+d\ub\otimes du)+\gamma_{AB}(d\th^A-b^Adu)\otimes (d\th^B-b^Bdu).$$ 

\subsection{The Equations}\label{sec.eqns}
We will recast the Einstein equations as a system for Ricci coefficients and curvature components associated to a null frame $e_3$, $e_4$ defined above and an orthonormal frame ${e_1,e_2}$ tangent to the 2-spheres $S_{u,\ub}$. Using the indices $A,B$ to denote $1,2$, we define the Ricci coefficients relative to the null fame:
 \begin{equation}
\begin{split}
&\chi_{AB}=g(D_A e_4,e_B),\, \,\, \quad \chib_{AB}=g(D_A e_3,e_B),\\
&\eta_A=-\frac 12 g(D_3 e_A,e_4),\quad \etab_A=-\frac 12 g(D_4 e_A,e_3)\\
&\omega=-\frac 14 g(D_4 e_3,e_4),\quad\,\,\, \omegab=-\frac 14 g(D_3 e_4,e_3),\\
&\zeta_A=\frac 1 2 g(D_A e_4,e_3)
\end{split}
\end{equation}
where $D_A=D_{e_{(A)}}$. We also introduce the  null curvature components,
 \begin{equation}
\begin{split}
\a_{AB}&=R(e_A, e_4, e_B, e_4),\quad \, \,\,   \ab_{AB}=R(e_A, e_3, e_B, e_3),\\
\b_A&= \frac 1 2 R(e_A,  e_4, e_3, e_4) ,\quad \bb_A =\frac 1 2 R(e_A,  e_3,  e_3, e_4),\\
\rho&=\frac 1 4 R(e_4,e_3, e_4,  e_3),\quad \sigma=\frac 1 4  \,^*R(e_4,e_3, e_4,  e_3)
\end{split}
\end{equation}
Here $\, ^*R$ denotes the Hodge dual of $R$.  We denote by $\nab$ the 
induced covariant derivative operator on $S_{u,\ub}$ and by $\nab_3$, $\nab_4$
the projections to $S_{u,\ub}$ of the covariant derivatives $D_3$, $D_4$, see
precise definitions in \cite{KN}. 

Observe that,
\begin{equation}
\begin{split}
&\omega=-\frac 12 \nab_4 (\log\Omega),\qquad \omegab=-\frac 12 \nab_3 (\log\Omega),\\
&\eta_A=\zeta_A +\nab_A (\log\Omega),\quad \etab_A=-\zeta_A+\nab_A (\log\Omega)
\end{split}
\end{equation}

Let $\phi^{(1)}\cdot\phi^{(2)}$ denote an arbitrary contraction of the tensor product of $\phi^{(1)}$ and $\phi^{(2)}$ with respect to the metric $\gamma$. We also define
$$(\phi^{(1)}\hot\phi^{(2)})_{AB}:=\phi^{(1)}_A\phi^{(2)}_B+\phi^{(1)}_B\phi^{(2)}_A-\gamma_{AB}(\phi^{(1)}\cdot\phi^{(2)}) \quad\mbox{for one forms $\phi^{(1)}_A$, $\phi^{(2)}_A$,}$$
$$\phi^{(1)}\wedge\phi^{(2)}:=\eps^{AB}(\gamma^{-1})^{CD}\phi^{(1)}_{AC}\phi^{(2)}_{BD}\quad\mbox{for symmetric two tensors $\phi^{(1)}_{AB}$, $\phi^{(2)}_{AB}$}.$$
For totally symmetric tensors, the $\div$ and $\curl$ operators are defined by the formulas
$$(\div\phi)_{A_1...A_r}:=\nabla^B\phi_{BA_1...A_r},$$
$$(\curl\phi)_{A_1...A_r}:=\eps^{BC}\nabla_B\phi_{CA_1...A_r},$$
where $\eps$ is the volume form associated to the metric $\gamma$.
Define also the trace to be
$$(\mbox{tr}\phi)_{A_1...A_{r-1}}:=(\gamma^{-1})^{BC}\phi_{BCA_1...A_{r-1}}.$$

We separate the trace and traceless part of $\chi$ and $\chib$. Let $\chih$ and $\chibh$ be the traceless parts of $\chi$ and $\chib$ respectively. Then $\chi$ and $\chib$ satisfy the following null structure equations:
\begin{equation}
\label{null.str1}
\begin{split}
\nab_4 \trch+\frac 12 (\trch)^2&=-|\chih|^2-2\omega \trch\\
\nab_4\chih+\trch \chih&=-2 \omega \chih-\alpha\\
\nab_3 \trchb+\frac 12 (\trchb)^2&=-2\omegab \trchb-|\chibh|^2\\
\nab_3\chibh + \trchb\,  \chibh&= -2\omegab \chibh -\alphab\\
\nab_4 \trchb+\frac1 2 \trch \trchb &=2\omega \trchb +2\rho- \chih\cdot\chibh +2\div \etab +2|\etab|^2\\
\nab_4\chibh +\frac 1 2 \trch \chibh&=\nab\widehat{\otimes} \etab+2\omega \chibh-\frac 12 \trchb \chih +\etab\widehat{\otimes} \etab\\
\nab_3 \trch+\frac1 2 \trchb \trch &=2\omegab \trch+2\rho- \chih\cdot\chibh+2\div \eta+2|\eta|^2\\
\nab_3\chih+\frac 1 2 \trchb \chih&=\nab\widehat{\otimes} \eta+2\omegab \chih-\frac 12 \trch \chibh +\eta\widehat{\otimes} \eta
\end{split}
\end{equation}
The other Ricci coefficients satisfy the following null structure equations:
\begin{equation}
\label{null.str2}
\begin{split}
\nabla_4\eta&=-\chi\cdot(\eta-\etab)-\b\\
\nabla_3\etab &=-\chib\cdot (\etab-\eta)+\bb\\
\nabla_4\omegab&=2\omega\omegab+\frac 34 |\eta-\etab|^2-\frac 14 (\eta-\etab)\cdot (\eta+\etab)-
\frac 18 |\eta+\etab|^2+\frac 12 \rho\\
\nabla_3\omega&=2\omega\omegab+\frac 34 |\eta-\etab|^2+\frac 14 (\eta-\etab)\cdot (\eta+\etab)- \frac 18 |\eta+\etab|^2+\frac 12 \rho\\
\end{split}
\end{equation}
The Ricci coefficients also satisfy the following constraint equations
\begin{equation}
\label{null.str3}
\begin{split}
\div\chih&=\frac 12 \nabla \trch - \frac 12 (\eta-\etab)\cdot (\chih -\frac 1 2 \trch) -\beta,\\
\div\chibh&=\frac 12 \nabla \trchb + \frac 12 (\eta-\etab)\cdot (\chibh-\frac 1 2   \trchb) +\betab\\
\curl\eta &=-\curl\etab=\sigma +\frac 1 2\chibh \wedge\chih\\
K&=-\rho+\frac 1 2 \chih\cdot\chibh-\frac 1 4 \trch \trchb
\end{split}
\end{equation}
with $K$ the Gauss curvature of the surfaces $S$.
The null curvature components satisfy the following null Bianchi equations:
\begin{equation}
\label{eq:null.Bianchi}
\begin{split}
&\nab_3\alpha+\frac 12 \trchb \alpha=\nabla\hot \beta+ 4\omegab\alpha-3(\chih\rho+^*\chih\sigma)+
(\zeta+4\eta)\hot\beta,\\
&\nab_4\beta+2\trch\beta = \div\alpha - 2\omega\beta +  \eta \alpha,\\
&\nab_3\beta+\trchb\beta=\nabla\rho + 2\omegab \beta +^*\nabla\sigma +2\chih\cdot\betab+3(\eta\rho+^*\eta\sigma),\\
&\nab_4\sigma+\frac 32\trch\sigma=-\div^*\beta+\frac 12\chibh\cdot ^*\alpha-\zeta\cdot^*\beta-2\etab\cdot
^*\beta,\\
&\nab_3\sigma+\frac 32\trchb\sigma=-\div ^*\betab+\frac 12\chih\cdot ^*\alphab-\zeta\cdot ^*\betab-2\eta\cdot 
^*\betab,\\
&\nab_4\rho+\frac 32\trch\rho=\div\beta-\frac 12\chibh\cdot\alpha+\zeta\cdot\beta+2\etab\cdot\beta,\\
&\nab_3\rho+\frac 32\trchb\rho=-\div\betab- \frac 12\chih\cdot\alphab+\zeta\cdot\betab-2\eta\cdot\betab,\\
&\nab_4\betab+\trch\betab=-\nabla\rho +^*\nabla\sigma+ 2\omega\betab +2\chibh\cdot\beta-3(\etab\rho-^*\etab\sigma),\\
&\nab_3\betab+2\trchb\betab=-\div\alphab-2\omegab\betab+\etab \cdot\alphab,\\
&\nab_4\alphab+\frac 12 \trch\alphab=-\nabla\hot \betab+ 4\omega\alphab-3(\chibh\rho-^*\chibh\sigma)+
(\zeta-4\etab)\hot \betab
\end{split}
\end{equation}
where $^*$ denotes the Hodge dual on $S_{u,\ub}$.

In the sequel, we will use capital Latin letters $A\in \{1,2\}$ for indices on the spheres $S_{u,\ub}$ and Greek letters $\mu\in\{1,2,3,4\}$ for indices in the whole spacetime.

In the following it will be useful to apply a schematic notation. We will let $\phi$ denote an arbitrary tensorfield, $\psi$ a Ricci coefficient and $\Psi$ a null curvature component different from $\alpha$. We will simply write $\psi\psi$ or $\psi^2$ (or $\psi\Psi$, etc.) to denote an arbitrary contraction. Moreover, we will denote by $\nab^i\psi^j$ the sum of all terms which are products of $j$ factors, with each factor being $\nab^{i_k}\psi$ and that the sum of all $i_k$'s being $i$, i.e., 
$$\nab^i\psi^j=\displaystyle\sum_{i_1+i_2+...+i_j}\underbrace{\nab^{i_1}\psi\nab^{i_2}\psi...\nab^{i_j}\psi}_\text{j factors}.$$
The use of the schematic notation is reserved for the cases when the precise nature of the contraction is not important to the argument. In particular, when using this schematic notation, we will neglect all constant factors.

\subsection{Integration and Norms}

Let $U$ be a coordinate patch on $S_{0,0}$ and $p_U$ be a partition of unity in $D_U$ such that $p_U$ is supported in $D_U$. Given a function $\phi$, the integration on $S_{u,\ub}$ is given by the formula:
$$\int_{S_{u,\ub}} \phi :=\sum_U \int_{-\infty}^{\infty}\int_{-\infty}^{\infty}\phi p_U\sqrt{\det\gamma}d\th^1 d\th^2.$$
Let $D_{u_*,\ub_*}$ by the region $0\leq u\leq u_*$, $0\leq \ub\leq \ub_*$. The integration on $D_{u,\ub}$ is given by the formula
\begin{equation*}
\begin{split}
\int_{D_{u,\ub}} \phi :=&\sum_U \int_0^u\int_0^{\ub}\int_{-\infty}^{\infty}\int_{-\infty}^{\infty}\phi p_U\sqrt{-\det g}d\th^1 d\th^2d\ub du\\
=&2\sum_U \int_0^u\int_0^{\ub}\int_{-\infty}^{\infty}\int_{-\infty}^{\infty}\phi p_U\Omega^2\sqrt{-\det \gamma}d\th^1 d\th^2d\ub du.
\end{split}
\end{equation*}
Since there are no canonical volume forms on $H_u$ and $\Hb_{\ub}$, we define integration by
$$\int_{H_{u}} \phi :=\sum_U \int_0^{\epsilon}\int_{-\infty}^{\infty}\int_{-\infty}^{\infty}\phi2 p_U\Omega\sqrt{\det\gamma}d\th^1 d\th^2d\ub,$$
and
$$\int_{H_{\ub}} \phi :=\sum_U \int_0^\epsilon\int_{-\infty}^{\infty}\int_{-\infty}^{\infty}\phi2p_U\Omega\sqrt{\det\gamma}d\th^1 d\th^2du.$$

With these definitions of integration, we can define the norms that we will use. Let $\phi$ be a tensorfield. For $1\leq p<\infty$, define
$$||\phi||_{L^p(S_{u,\ub})}^p:=\int_{S_{u,\ub}} <\phi,\phi>_\gamma^{p/2},$$
$$||\phi||_{L^p(H_u)}^p:=\int_{H_{u}} <\phi,\phi>_\gamma^{p/2},$$
$$||\phi||_{L^p(\Hb_{\ub})}^p:=\int_{\Hb_{\ub}} <\phi,\phi>_\gamma^{p/2}.$$
Define also the $L^\infty$ norm by
$$||\phi||_{L^\infty(S_{u,\ub})}:=\sup_{\th\in S_{u,\ub}} <\phi,\phi>_\gamma^{1/2}(\th).$$

\subsection{Precise Statement of the Main Theorem}\label{thmstatement}
With the notations introduced in this Section, we give a precise version of the statement of Theorem \ref{rdthmv1}:
\begin{theorem}\label{rdthmv2}
Suppose the initial data set for the characteristic initial value problem is given on $H_0$ for $0\leq \ub\leq \ub_*$ and on $\Hb_0$ for $0\leq u\leq u_*$ such that
$$c\leq |\det\gamma \restriction_{S_{u,0}} |, |\det\gamma \restriction_{S_{0,\ub}} |\leq C,$$
$$\sum_{i\leq 3}\left(|(\frac{\partial}{\partial\th})^i\gamma \restriction_{S_{u,0}}|+|(\frac{\partial}{\partial\th})^i\gamma \restriction_{S_{0,\ub}}|\right)\leq C,$$
$$\mathcal O_0:= \sum_{i\leq 3} \sup_{\ub}||\nabla^i\psi||_{L^2(S_{0,\ub})}+\sum_{i\leq 3} \sup_{u}||\nabla^i\psi||_{L^2(S_{u,0})}\leq C,$$
$$\mathcal R_0:=\sum_{i\leq 2}\left(\sum_{\Psi\in\{\beta,\rho,\sigma,\betab\}}\sup_{\ub}||\nab^i\Psi||_{L^2(S_{0,\ub})}+\sum_{\Psi\in\{\rho,\sigma,\betab,\alphab\}}\sup_u||\nab^i\Psi||_{L^2(S_{u,0})}\right)\leq C.$$
Then for $\epsilon$ sufficiently small depending only on $C$ and $c$, there exists a spacetime $(\mathcal M,g)$ endowed with a double null foliation $u$, $\ub$ that solves the characteristic initial value problem to the vacuum Einstein equations in $0\leq u\leq u_*$, $0\leq \ub\leq \ub_*$ for $u_*,\ub_*\leq \epsilon$. The metric is continuous and takes the form
$$g=-2\Omega^2(du\otimes d\ub+d\ub\otimes du)+ \gamma_{AB}(d\th^A-b^A du)\otimes(d\th^B-b^B du).$$
$(\mathcal M,g)$ is a $C^0$ limit of smooth solutions to the vacuum Einstein equations and is the unique spacetime solving the characteristic initial value problem among all $C^0$ limits of smooth solutions. Moreover, 
$$\frac{\partial}{\partial \th}g,\frac{\partial}{\partial u}g\in C^0_u C^0_{\ub} L^4(S),$$
$$\frac{\partial^2}{\partial \th^2}g,\frac{\partial^2}{\partial u\partial\th}g,\frac{\partial^2}{\partial u^2}g\in C^0_u C^0_{\ub} L^2(S),$$
$$\frac{\partial}{\partial \ub}g, \frac{\partial}{\partial\ub}((\gamma^{-1})^{AB}\frac{\partial}{\partial\ub}(\gamma)_{AB}) \in L^\infty_u L^\infty_{\ub} L^\infty(S),$$
$$\frac{\partial^2}{\partial \th \partial \ub}g,\frac{\partial^2}{\partial u\partial\ub}g,\frac{\partial^2}{\partial \ub^2}b^A\in L^\infty_u L^\infty_{\ub} L^4(S).$$
In the $(u,\ub,\th^1,\th^2)$ coordinates, the Einstein equations are satisfied in $L^\infty_uL^\infty_{\ub}L^2(S)$. Furthermore, the higher angular differentiability in the data results in higher angular differentiability.
\end{theorem}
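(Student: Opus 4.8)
The plan is to prove Theorem \ref{rdthmv2} in five stages: (i) a priori estimates for \emph{smooth} solutions which never involve $\a$; (ii) a common domain of existence for a smoothing sequence; (iii) convergence of the sequence via estimates for differences; (iv) identification of the limit as an $L^2$-sense solution of $R_{\mu\nu}=0$; (v) uniqueness among $C^0$ limits. Throughout, $\epsilon$ is exploited as a smallness parameter: every transport equation is integrated over a $u$- or $\ub$-interval of length $\le\epsilon$, and every energy identity over $D_{u,\ub}$, which supplies the powers of $\epsilon$ needed to absorb error terms and close a continuity (bootstrap) argument.

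Stage (i) is the heart of the matter. Fix smooth data obeying the stated bounds and run a bootstrap on $D_{u_*,\ub_*}$, assuming control of $\sum_{i\le 3}\|\nab^i\psi\|_{L^2(S)}$, of the metric coefficients $\Omega,\gamma,b^A$, and of $\sum_{i\le 2}\|\nab^i\Psi\|_{L^2(S)}$ together with the associated fluxes on $H_u$, $\Hb_\ub$ for $\Psi\in\{\b,\rhoc,\sigmac,\betab,\alphab\}$, but making \textbf{no} assumption on $\a$ or on any $\nab_4$-derivative of $\chih$. First recover the metric from its transport equations, so that $\gamma$ stays comparable to the data. Then estimate the Ricci coefficients from the null structure equations (\ref{null.str1})--(\ref{null.str3}), choosing for each coefficient an equation with no $\a$ on the right: in particular $\chih$ is propagated from $H_0$ by $\nab_3\chih+\frac12\trchb\,\chih=\nab\hot\eta+2\omegab\chih-\frac12\trch\,\chibh+\eta\hot\eta$ (never by $\nab_4\chih=-\a+\dots$), and the extra angular derivative this demands on $\eta$ is supplied by the Codazzi constraint in (\ref{null.str3}) and elliptic estimates on $S_{u,\ub}$; the coefficients $\trch,\trchb,\chibh,\eta,\etab,\omega,\omegab$ are handled analogously, the system closing because it couples only to $\b,\rho,\sigma,\betab$ and the Gauss curvature $K$. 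For the curvature one performs the renormalized energy estimates of Section \ref{eeoutline}: using $\nab_4\chih=-\a+\dots$ \emph{only} to define $\rhoc=\rho-\frac12\chih\cdot\chibh$ and $\sigmac=\sigma+\frac12\chih\wedge\chibh$, rewrite (\ref{eq:null.Bianchi}) as an $\a$-free system for $(\b,\rhoc,\sigmac)$, then for $(\rho,\sigma,\betab)$, then for $(\betab,\alphab)$, and integrate the corresponding divergence identities, obtaining $L^2$ control on the cones with error terms of schematic type $\psi\Psi\Psi+\psi\psi\nab\psi$ free of $\a$. Higher-order estimates follow by commuting with $\nab$ only (never $\nab_4$), checking that the commutators $[\nab_3,\nab]$, $[\nab_4,\nab]$ and $[\nab,\nab]$ generate no $\a$; with the $\epsilon$-gains this recovers the bootstrap assumptions with better constants and yields a priori bounds depending only on $C$ and $c$.

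Stage (ii): given these bounds, approximate the data by the smooth data of Section \ref{initialcondition}; classical characteristic local existence produces smooth solutions, and the continuity argument of stage (i), applied uniformly, extends them all to the same $D_{\epsilon,\epsilon}$ with uniform bounds. Stage (iii): identify two spacetimes in the sequence by their coordinates and derive the equations for the differences $\delta\psi$, $\delta\Psi$; the key point—again that the \emph{reduced} system of stage (i) is $\a$-free—is that these difference equations also do not see $\a$, so the a priori bounds control $\delta\psi,\delta\Psi$ in a norm one derivative below, and the quasilinear loss of derivative is absorbed via the uniform higher-angular bounds and interpolation. Hence the sequence converges in $C^0$ (with better angular regularity) to $(\mathcal M,g)$, and the regularity list in the theorem is read off from the uniform bounds on $\psi$ (first derivatives of $g$) and on $\Psi\neq\a$ ($L^2$ second derivatives). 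Stage (iv): although $\partial_\ub^2 g$ is not classically defined, the combinations entering $R_{\mu\nu}$ are governed by the null structure equations—in particular $\partial_\ub((\gamma^{-1})^{AB}\partial_\ub\gamma_{AB})$ is better behaved than $\partial_\ub^2\gamma$—so $R_{\mu\nu}$ makes sense in $L^\infty_uL^\infty_\ub L^2(S)$ and vanishes by passing to the limit in the equations of the approximants. Stage (v): any $C^0$ limit of smooth solutions with the given data inherits, by semicontinuity, the a priori bounds, so the difference argument forces it to coincide with $(\mathcal M,g)$; the same argument gives uniqueness in the class of double-null solutions with strong enough bounds. More angular derivatives in the data feed through the same bootstrap.

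The main obstacle is stage (i) together with the $\a$-freeness of the difference equations in stage (iii): one must simultaneously (a) choose, for \emph{every} Ricci coefficient and \emph{every} curvature component, a transport or Bianchi equation whose nonlinearity, after the $\rhoc,\sigmac$-renormalizations, contains no $\a$ and no undefined $\nab_4\chih$; (b) cope with the fact that losing $\a$ also forces losing the $\b$-flux on $\Hb_\ub$, so $\b$ is recovered on $H_u$ only and fed into $\nab_4$-equations through Cauchy--Schwarz in $\ub$; and (c) verify that none of this structure is destroyed upon differencing two solutions—a nontrivial check, since the renormalization is a priori adapted to a single spacetime. Once this structural point is secured, the remainder is a lengthy but routine bootstrap.
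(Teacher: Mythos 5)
Your proposal follows the paper's own proof essentially step for step: the $\alpha$-free a priori estimates via the $\rhoc,\sigmac$-renormalization with angular-only commutation (Theorem \ref{timeofexistence}), the uniform domain of existence for smoothed data, the difference estimates with the $\alpha$-free structure preserved (Theorem \ref{convergencethm}), the passage to the limit exploiting the cancellation in $\partial_\ub\bigl((\gamma^{-1})^{AB}\partial_\ub\gamma_{AB}\bigr)$ and in the off-diagonal Ricci components (Theorem \ref{Einstein}), and the comparison argument for uniqueness (Proposition \ref{uniquenessprop}); you also correctly flag the loss of the $\beta$-flux on $\Hb_{\ub}$ and the resulting delicacy with $\nab^3\omega$, and the a priori non-obvious persistence of the $\alpha$-free structure under differencing. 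The one imprecision is in stage (v): the paper does not invoke ``semicontinuity'' to equip an arbitrary $C^0$ limit with the a priori bounds — rather, membership in the uniqueness class requires by hypothesis that the competing spacetime come with its own approximating sequence of smooth solutions whose data satisfy the bounds of Theorem \ref{timeofexistence} uniformly and converge to the given data, and Theorem \ref{convergencethm} is then applied diagonally to the two approximating sequences to show the two $C^0$ limits are within $O(2^{-i})$ for every $i$.
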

We prove this Theorem in two steps. First, we show that if the initial data are smooth and the bounds for the initial data in the assumption of Theorem \ref{rdthmv2} hold, then an $\epsilon$ can be chosen depending only on the bounds in the assumption of Theorem \ref{rdthmv2} such that a smooth spacetime solving the Einstein equations exists in $0\leq u\leq\epsilon$ and $0\leq \ub\leq \epsilon$. This is formulated as Theorem \ref{timeofexistence}, and is proved in Section \ref{estimates}. Then, we show that a sequence of solutions to the Einstein equations with a converging sequence of smooth initial data, satisfying the assumptions of Theorem \ref{rdthmv2} with uniform constants $C$ and $c$, converges. The limit spacetime satisfies the Einstein equations and has the properties stated in Theorem \ref{rdthmv2}. This is formulated as Theorem \ref{convergencethm2} and is proved in Section \ref{convergence}. We furthermore show that the solution is unique and that the regularity in the angular directions persists in Section \ref{convergence}.

\section{The Initial Data}\label{initialcondition}

In this Section, we construct data satisfying the constraint equations such that $\chih$ has a jump discontinuity across a two sphere. We also construct a sequence of smooth data satisfying the constraint equations approaching the data with discontinuous $\chih$. We derive precise bounds for the Ricci coefficients and the curvature components for the initial data in this approximating sequence.

We fix
$$\Omega=1$$
identically on the initial hypersurfaces $H_0$ and $\Hb_0$. 
On $H_0$, $\gamma$ and $\chi$ have to satisfy the equations
\begin{equation}\label{con1}
\Ls_L \gamma=2\chi,\quad \Ls_L \trch= -\frac 12 (\trch)^2-|\chih|_\gamma^2,
\end{equation}
while on $\Hb_0$, $\gamma$ and $\chib$ have to satisfy the equations
\begin{equation}\label{con2}
\Ls_{\Lb}\gamma=2\chib,\quad\Ls_{\Lb} \trchb= -\frac 12 (\trchb)^2-|\chibh|_\gamma^2
\end{equation}
Here $\Ls$ denotes the restriction of the Lie derivative to $TS_{u,\ub}$.

Following \cite{Chr}, we can obtain initial data satisfying the above constraint equations by prescribing $\gamma_{AB}$, $\zeta_A$, $\trch$ and $\trchb$ on the two sphere $S_{0,0}$ and prescribing the conformal class of the metric $\hat{\gamma}_{AB}$ satisfying $\sqrt{\det \hat{\gamma}_{AB}}=1$ on each of the initial hypersurfaces. Relative to the coordinate system $(\th^1,\th^2)$, we require that on $S_{0,0}$,
$$\sum_{i\leq I+3}|(\frac{\partial}{\partial\th})^i(\gamma_{AB}, \zeta_A, tr\chi, tr\chib)|\leq C.$$
On the initial incoming hypersurface $\Hb_0$, in the coordinate system $(u,\th^1,\th^2)$ as in Section \ref{coordinates}, we require the conformal class of the metric $\hat{\gamma}_{AB}$ to be smooth, satisfying $\sqrt{\det \hat{\gamma}_{AB}}=1$ and obeying the estimates
$$\sum_{j\leq J+1}\sum_{i\leq I+3}|(\frac{\partial}{\partial u})^j(\frac{\partial}{\partial\th})^i\hat{\gamma}_{AB}|\leq C.$$

On the initial outgoing hypersurface $H_0$, in the coordinate system $(u,\th^1,\th^2)$ as in Section \ref{coordinates}, we now prescribe the conformal class of the metric $\hat{\gamma}_{AB}$ satisfying $\sqrt{\det \hat{\gamma}_{AB}}=1$ such that its $\ub$ derivative has a jump discontinuity. To this end we define smooth matrices $((\hat{\gamma})_1)_{AB}$ and $((\hat{\gamma})_2)_{AB}$, such that $((\hat{\gamma})_1)_{AB}$ is positive definite with determinant equals to $1$ and both $((\hat{\gamma})_1)_{AB}$ and $((\hat{\gamma})_2)_{AB}$ satisfy
$$\sum_{k\leq K+1}\sum_{i\leq I+3}|(\frac{\partial}{\partial\ub})^k(\frac{\partial}{\partial\th})^i((\hat{\gamma})_j)_{AB}|\leq C.$$
Fix $\ub_s\leq\frac{\epsilon}{2}$, where $\epsilon>0$ is a small parameter depending on the constants $C$ above and will be determine later. 
Let
$$\underline{\hat{\gamma}}_{AB}=((\hat{\gamma})_1)_{AB}+(\ub-\ub_s)((\hat{\gamma})_2)_{AB}\mathbbm 1_{\{\ub\geq\ub_s\}}.$$
For $\epsilon$ sufficiently small depending on $C$, $\underline{\hat{\gamma}}_{AB}$ is positive definite for $0\leq\ub\leq\epsilon$. We then define 
$$\hat{\gamma}_{AB}=\frac{1}{\sqrt{\det \underline{\hat{\gamma}}}}\underline{\hat{\gamma}}_{AB}.$$

According to the procedure in \cite{Chr}, there exists $\epsilon$ sufficiently small depending only on $C$ such that there exists initial data for $0\leq u\leq \epsilon$ on $\Hb_0$ satisfying \eqref{con2} and initial data for $0\leq \ub\leq \epsilon$ on $H_0$ obeying \eqref{con1}. We refer the readers to \cite{Chr} for details.

We note that according to \cite{Chr}, in order to obtain the initial data set on $H_0$, we need to solve for the conformal factor $\Phi$ defined by 
$$\gamma_{AB}=\Phi^2 \hat{\gamma}_{AB},$$
which obeys the ODE
\begin{equation}\label{PhiODE}
\frac{\partial^2\Phi}{\partial \ub^2}+\frac 18 ({\hat\gamma}^{-1})^{AC}({\hat\gamma}^{-1})^{BD}\frac{\partial}{\partial \ub}\hat\gamma_{AB}\frac{\partial}{\partial \ub}\hat\gamma_{CD}\Phi=0,
\end{equation}
with initial data
$$\Phi |_{S_{0,0}}=1.$$
The time of existence $\epsilon$ for this ODE depends only on the size of $({\hat\gamma}^{-1})^{AC}({\hat\gamma}^{-1})^{BD}\frac{\partial}{\partial \ub}\hat\gamma_{AB}\frac{\partial}{\partial \ub}\hat\gamma_{CD}$. Therefore, even though $\frac{\partial}{\partial \ub}\hat\gamma_{AB}$ is discontinuous, we can prescribe the discontinuity at $\ub=\ub_s$ such that $\ub_s=\frac{\epsilon}{2}$.

Given the conformal part of the metric $\hat{\gamma}$ and the conformal factor $\Phi$, we can identify
\begin{equation}\label{chihPhi}
\chih_{AB}=\frac{1}{2}\Phi^2\frac{\partial}{\partial \ub}\hat\gamma_{AB},
\end{equation}
and
\begin{equation}\label{trchPhi}
\trch=\frac{2}{\Phi}\frac{\partial\Phi}{\partial\ub}.
\end{equation}
With this identification, $\chih$ has a jump discontinuity at $\ub=\ub_s$ while $\trch$ is continuous.

\subsection{Approximation Procedure}

We now introduce an approximation procedure to construct a sequence of smooth initial data approaching the data described above. Consider a $C^\infty_0(\mathbb R)$ function $\tilde{h}_0$ that is supported in $[-1,1]$ and is identically 1 in $[-\frac{1}{2},\frac{1}{2}]$. Let
$$\tilde{h}(x)=\left\{\begin{array}{clcr}\tilde{h}_0(x)-\tilde{h}_0(2x)&x\ge 0\\0&x< 0\end{array}\right.$$
Note that $\tilde{h}$ is smooth. Now let
$$h_n(x)=\mathbbm 1_{\{x\geq 0\}}+\sum_{j=-\infty}^{n}\tilde{h}(2^j x).$$
We note that $h_n$ is supported in $\{x\geq 2^{-(n+1)}\}$ and $h'_n:=h_n-h_{n-1}$ is supported in $\{2^{-(n+1)}\leq x\leq 2^{-n}\}$. Moreover, $h_{n}\to\mathbbm 1_{\{x\geq 0\}}$ in $L^p$ for every $p<\infty$ as $n\to\infty$ and $x h_n\to x\mathbbm 1_{\{x\geq 0\}}$ in $L^p$ for every $p\leq\infty$ as $n\to\infty$. At the $n$-th step, we define 
$$(\gamma_n,\zeta_n,\trch_n)=(\gamma,\zeta,\trch)\mbox{  on $S_{0,0}$}$$
and
$$(\chibh_n,\trchb_n)=(\chibh,\trchb)\mbox{  on $\Hb_{0}$}$$
as before. Define
$$(\underline{\hat{\gamma}}_n)_{AB}=((\hat{\gamma})_1)_{AB}+(\ub-\ub_s)h_n(\ub-\ub_s)((\hat{\gamma})_2)_{AB}\mbox{  on $H_{0}$,}$$
and
$$(\hat{\gamma}_n)_{AB}=\frac{1}{\sqrt{\det (\underline{\hat{\gamma}})_n}}(\underline{\hat{\gamma}}_n)_{AB}\mbox{  on $H_{0}$}.$$
Notice that for $\epsilon$ sufficiently small, we have uniform upper bounds for each component of  $\hat{\underline{\gamma}}$ and their angular derivatives and a uniform lower bound for $\det\hat{\underline{\gamma}}$. Moreover, it is easy to see that for 
$${\hat{\gamma}}'_n:={\hat{\gamma}}_n-{\hat{\gamma}}_{n-1},$$
we have
$$\sum_{i\leq I+3}|(\frac{\partial}{\partial \ub})^k(\frac{\partial}{\partial\th})^i{\hat{\gamma}}'_n|\leq C2^{(-1+k)n},\quad\mbox{ for }\ub_s\leq \ub\leq\ub_s+2^{-n}, k\leq K+1, $$
$$(\frac{\partial}{\partial \ub})^k(\frac{\partial}{\partial\th})^i{\hat{\gamma}}'_n=0,\quad\mbox{ for }\ub\geq\ub_s+2^{-n},\mbox{ for $i\leq I+3$, $k\leq K+1$}, $$
Thus
$$(\int_0^\epsilon|(\frac{\partial}{\partial \ub})^k(\frac{\partial}{\partial\th})^i{\hat{\gamma}}'_n|^p d\ub)^{\frac 1p}\leq C2^{(-1+k-\frac 1p) n}\mbox{ for }k\leq K+1.$$
Moreover, we know that ${\hat{\gamma}}_n\to {\hat{\gamma}}$ in the sense that
$$(\int_0^\epsilon|(\frac{\partial}{\partial \ub})^k(\frac{\partial}{\partial\th})^i({\hat{\gamma}}_n-{\hat{\gamma}})|^p d\ub)^{\frac 1p}\leq C2^{(-1+k-\frac 1p) n}\mbox{ for }k\leq K+1.$$

With the definition of $(\hat{\gamma}_n)_{AB}$, we can solve (\ref{PhiODE}) for $\Phi_n$ and use (\ref{chihPhi}) and (\ref{trchPhi}) to solve for $\chih_n$ and $(\trch)_n$. This then allows us to use the null structure equations and the Bianchi identities to solve for all Ricci coefficients and curvature components, as well as their derivatives. In the following we will use the notation that a subscript $n$ denotes that metric component, Ricci coefficient or curvature component associated to $\hat{\gamma}_n$. Moreover, we will use the notation
$$\Phi'_n=\Phi_n-\Phi_{n-1}.$$
We now derive the bounds for each of the $\Phi_n$'s and $\Phi'_n$'s in the initial data. By uniqueness of the constraint ODEs, all 
$$\Phi'_n=0 \mbox{ for }\ub\leq\ub_s.$$
We therefore only derive bounds for $\ub_s\leq\ub\leq \epsilon$.

From (\ref{PhiODE}), we have
$$c\leq \Phi_n\leq C,\quad \sum_{k\leq K+2}\sum_{i\leq I+3}|(\frac{\partial}{\partial \ub})^k(\frac{\partial}{\partial\th})^i\Phi_n|\leq C$$
thus
$$\sum_{k\leq K}\sum_{i\leq I+3}|(\frac{\partial}{\partial \th})^i(\chih,\trch)_n|\leq C.$$
Then using the equation (\ref{PhiODE}) again, 
$$\sum_{k\leq K+2}\sum_{i\leq I+3}|(\frac{\partial}{\partial \ub})^k(\frac{\partial}{\partial \th})^i\Phi'_n|\leq C2^{(-2+k)n},\quad\mbox{ for }\ub_s\leq\ub\leq\ub_s+2^{-n},$$
and
$$(\frac{\partial}{\partial \ub})^k(\frac{\partial}{\partial \th})^i\Phi'_n=0,\quad\mbox{ for }\ub\geq\ub_s+2^{-n}\mbox{ and }2\leq k\leq K+2,\mbox{ for all }i\leq I+3.$$
Then by (\ref{chihPhi}) and (\ref{trchPhi}), we have for $\chih'_n$,
$$\sum_{k\leq K}\sum_{i\leq I+3}|(\frac{\partial}{\partial \ub})^k(\frac{\partial}{\partial \th})^i(\chih'_n)_{AB}|\leq C2^{kn},\quad\mbox{ for }\ub_s\leq\ub\leq\ub_s+2^{-n},$$
$$\sum_{k\leq K}\sum_{i\leq I+3}|(\frac{\partial}{\partial \ub})^k(\frac{\partial}{\partial \th})^i(\chih'_n)_{AB}|\leq C2^{-n},\quad\mbox{ for }\ub\geq\ub_s+2^{-n};$$
and for $\trch'_n$:
$$\sum_{k\leq K}\sum_{i\leq I+3}|(\frac{\partial}{\partial \ub})^k(\frac{\partial}{\partial \th})^i\trch'_n|\leq C2^{(-1+k)n},\quad\mbox{ for }\ub_s\leq\ub\leq\ub_s+2^{-n}$$
$$\sum_{k\leq K}\sum_{i\leq I+3}|(\frac{\partial}{\partial \ub})^k(\frac{\partial}{\partial \th})^i\trch'_n|\leq C2^{-n},\quad\mbox{ for }\ub\geq\ub_s+2^{-n}.$$
From the equation for $\zeta$ on $H_0$ together with the Codazzi equation, we have
$$\nabla_4\zeta=\div\chih-\frac 12 \nab tr\chi-\chih\zeta-\frac 32 \trch\zeta.$$
Notice that since $\Omega=1$ on $H_0$, we can write
$$\nab_4\zeta=\frac{\partial}{\partial \ub}\zeta+\chi\zeta.$$
Thus, we can integrate to get the estimate
$$\sum_{k\leq K+1}\sum_{i\leq I+2}|(\frac{\partial}{\partial \ub})^k(\frac{\partial}{\partial\th})^i\zeta_n|\leq C.$$
For the difference $\zeta'_n=\zeta_n-\zeta_{n-1}$,
$$\sum_{k\leq K+1}\sum_{i\leq I+2}|(\frac{\partial}{\partial \ub})^k(\frac{\partial}{\partial\th})^i\zeta'_n|\leq C2^{-n}.$$
Notice that we have used the control on the metric to relate $\div$ to the coordinates derivatives. Since on $H_0$,
$$\zeta=\eta=-\etab,$$
we have
$$\sum_{k\leq K+1}\sum_{i\leq I+2}|(\frac{\partial}{\partial \ub})^k(\frac{\partial}{\partial\th})^i(\eta_n,\etab_n)|\leq C.$$
and
$$\sum_{k\leq K+1}\sum_{i\leq I+2}|(\frac{\partial}{\partial \ub})^k(\frac{\partial}{\partial\th})^i(\eta'_n,\etab'_n)|\leq C2^{-n}.$$
By the equation of $\trchb$ on $H_0$
$$\nabla_4(\trchb)+\frac 12\trch \trchb=2\rho-\chih\cdot\chibh-2\div\zeta+|\zeta|^2_\gamma,$$
and the Gauss equation
$$K=-\rho+\frac 12 \chih\cdot\chibh-\frac 14 \trch\trchb,$$
we have
$$\nab_4(\trchb)+\trch \trchb=-2K-2\div\zeta+|\zeta|^2_\gamma.$$
Since the estimates for the Gauss curvature $K$ can be derived from the estimates for $\gamma$ and its derivatives alone, by the estimates derived above and the initial estimate for $\trchb$ on $S_{0,0}$,
$$\sum_{k\leq K+1}\sum_{i\leq I+1}|(\frac{\partial}{\partial\ub})^k(\frac{\partial}{\partial\th})^i\trchb_n|\leq C.$$
and
$$\sum_{k\leq K+1}\sum_{i\leq I+1}|(\frac{\partial}{\partial\ub})^k(\frac{\partial}{\partial\th})^i \trchb'_n|\leq C2^{-n}.$$
By the equation of $\chibh$ on $H_0$
$$\nabla_4 \chibh=-\frac{1}{2}\trch\chibh-\nabla\hot\zeta+\zeta\hot\zeta-\frac{1}{2}\trchb\chih,$$
we have
$$\sum_{k\leq K+1}\sum_{i\leq I+1}|(\frac{\partial}{\partial\ub})^k(\frac{\partial}{\partial\th})^i\chibh_n|\leq C.$$
and
$$\sum_{k\leq K+1}\sum_{i\leq I+1}|(\frac{\partial}{\partial\ub})^k(\frac{\partial}{\partial\th})^i\chibh'_n|\leq C2^{-n}.$$
By the following equation on $H_0$
$$\nabla_4\chih=-\trch \chih-\alpha,$$
we have
$$\sum_{k\leq K-1}\sum_{i\leq I+3}|(\frac{\partial}{\partial\ub})^k(\frac{\partial}{\partial\th})^i\alpha_n|\leq C2^{(1+k)n},\quad\mbox{ for }\ub_s\leq \ub\leq \ub_s+2^{-n},$$
$$\sum_{k\leq K-1}\sum_{i\leq I+3}|(\frac{\partial}{\partial\ub})^k(\frac{\partial}{\partial\th})^i\alpha_n|\leq C2^{-n},\quad\mbox{ for }\ub\geq \ub_s+2^{-n},$$
By the Codazzi equation on $H_0$
$$\div\chi=\frac{1}{2}\nabla \trch-\zeta\cdot(\chih-\frac{1}{2}\trch)-\beta,$$
we have
$$\sum_{k\leq K}\sum_{i\leq I+2}|(\frac{\partial}{\partial\ub})^k(\frac{\partial}{\partial\th})^i\beta_n|\leq C \mbox{ for }\ub_s\leq \ub\leq \ub_s+2^{-n},$$
$$\sum_{k\leq K}\sum_{i \leq I+2}|(\frac{\partial}{\partial\ub})^k(\frac{\partial}{\partial\th})^i\beta'_n|\leq C2^{-n} \mbox{ for }\ub\geq \ub_s+2^{-n}.$$
By the Gauss equation on $H_0$
$$K=-\rho+\frac{1}{2}\chih\cdot\chibh-\frac{1}{4}tr\chi tr\chib,$$
we have
$$\sum_{k\leq K}\sum_{i\leq I+1}|(\frac{\partial}{\partial\ub})^k(\frac{\partial}{\partial\th})^i\rho_n|\leq C \mbox{ for }\ub_s\leq \ub\leq \ub_s+2^{-n},$$
$$\sum_{k\leq K}\sum_{i\leq I+1}|(\frac{\partial}{\partial\ub})^k(\frac{\partial}{\partial\th})^i\rho'_n|\leq C2^{-n} \mbox{ for }\ub\geq \ub_s+2^{-n}.$$
By the equation on $H_0$
$$\curl \zeta=\sigma+\frac 12\chibh\wedge\chih,$$
we have
$$\sum_{k\leq K}\sum_{i\leq I+1}|(\frac{\partial}{\partial\ub})^k(\frac{\partial}{\partial\th})^i\sigma_n|\leq C \mbox{ for }\ub_s\leq \ub\leq \ub_s+2^{-n},$$
$$\sum_{k\leq K}\sum_{i\leq I+1}|(\frac{\partial}{\partial\th})^i\sigma'_n|\leq C2^{-n} \mbox{ for }\ub\geq \ub_s+2^{-n}.$$
On $S_{0,0}$, using the Codazzi equation
$$\div\chibh=\frac{1}{2}\nabla \trchb+\zeta\cdot(\chih-\frac{1}{2}\trch)+\beta,$$
we have
$$\sum_{i\leq I+2}|(\frac{\partial}{\partial\th})^i\betab_n|\leq C \mbox{ on $S_{0,0}$}.$$
By the following Bianchi equation on $H_0$
$$\nabla_4\betab=-\nab\rho+^*\nab\sigma-tr\chi\betab+2\chibh\cdot\beta-3(^*\zeta\sigma-\zeta\rho),$$
we have
$$\sum_{k\leq K+1}\sum_{i\leq I}|(\frac{\partial}{\partial\ub})^k(\frac{\partial}{\partial\th})^i\betab_n|\leq C,$$
$$\sum_{k\leq K+1}\sum_{i\leq I}|(\frac{\partial}{\partial\ub})^k(\frac{\partial}{\partial\th})^i\betab'_n|\leq C2^{-n}.$$
By the null structure equation on $H_0$
$$\nabla_4\omegab=3|\zeta|^2+\frac 12\rho,$$
we have 
$$\sum_{k\leq K+1}\sum_{i=0}^{I+1}|(\frac{\partial}{\partial\ub})^k(\frac{\partial}{\partial\th})^i\omegab_n|\leq C.$$
$$\sum_{k\leq K+1}\sum_{i=0}^{I+1}|(\frac{\partial}{\partial\ub})^k(\frac{\partial}{\partial\th})^i\omegab'_n|\leq C2^{-n}.$$
Together with $\omega=0$ which follows from $\omega=-\frac{1}{2}\nabla_4(\log\Omega)$, we have the bounds in $L^\infty$ for all the Ricci coefficients and null curvature components on $H_0$.

A similar procedure allows us to solve for the Ricci coefficients and the null curvature components on $\Hb_0$. Recall that for an impulsive gravitational wave, we take the data on $\Hb_0$ to be fixed, i.e., $\psi_n=\psi$ and $\Psi_n=\Psi$ on $\Hb_0$.

Finally, we see that on $H_0$, we can define $\nab_3^j\psi_n$ and $\nab_3^j\Psi_n$ for $j\leq J$ by taking the values of $\nab_3^j\psi_n$ and $\nab_3^j\Psi_n$ restricted to $S_{0,0}$ and solve the ODEs as above. Notice that they can be solved as long as $I$ is sufficiently large depending on $J$.

This allows us to conclude
\begin{proposition}\label{dataprop}
For every $I\geq 2$, $J,K\geq 0$, there exists characteristic initial data for an impulsive gravitational wave such that on $\Hb_0$:
\begin{equation}\label{data1}
\sum_{j\leq J}\sum_{k\leq K}\sum_{i\leq I+1}\sup_{\ub}||\nabla_3^j\nab_4^k\nabla^i\psi||_{L^2(S_{0,\ub})}\leq C_{I,J,K},
\end{equation}
\begin{equation}\label{data2}
\sum_{j\leq J}\sum_{k\leq K}\sum_{i\leq I}\sum_{\Psi\in\{\rho,\sigma,\betab,\alphab\}}\sup_u||\nabla_3^j\nab_4^k\nabla^i\Psi||_{L^2(S_{u,0})}\leq C_{I,J,K},\end{equation}
On $H_0$, $\chih$ has a jump discontinuity across $S_{0,\ub_s}$, and
\begin{equation}\label{data3}
\sum_{j\leq J}\sum_{i\leq I+1}\sup_{\ub}||\nab_3^j\nabla^i\psi||_{L^2(S_{0,\ub})}\leq C_{I,J},
\end{equation}
\begin{equation}\label{data4}
\sum_{j\leq J}\sum_{i\leq I}\sum_{\Psi\in\{\beta,\rho,\sigma,\betab\}}\sup_{\ub}||\nab_3^j\nabla^i\Psi||_{L^2(S_{0,\ub})}\leq C_{I,J},
\end{equation}
and for $H_0\setminus\{|\ub-\ub_s|\geq\eta\}$ for every $\eta>0$,
$$\sum_{j\leq J}\sum_{k\leq K}\sum_{i\leq I+1}\sup_{\ub}||\nab_3^j\nab_4^k\nabla^i\psi||_{L^2(S_{0,\ub})}\leq C_{I,J,K},$$
$$\sum_{j\leq J}\sum_{k\leq K}\sum_{i\leq I}\sum_{\Psi\in\{\beta,\rho,\sigma,\betab\}}\sup_{\ub}||\nab_3^j\nab_4^k\nabla^i\Psi||_{L^2(S_{0,\ub})}\leq C_{I,J,K},$$
and for $\epsilon$ sufficiently small, $\gamma$ is positive definite on $H_0$ and $\Hb_0$ with bounded angular derivatives up to the $(I+2)$-nd derivative. 

There exists a sequence of of smooth initial data approaching the characteristic initial data described above such that (\ref{data1})-(\ref{data4}) hold for $\psi_n$ and $\Psi_n$, 
and for $H_0\setminus\{\ub_s\leq\ub\leq \ub_s+2^{-n}\}$,
\begin{equation}\label{data5}
\sum_{j\leq J}\sum_{k\leq K}\sum_{i\leq I+1}\sup_{\ub}||\nab_3^j\nab_4^k\nabla^i\psi_n||_{L^2(S_{0,\ub})}\leq C_{I,J,K},
\end{equation}
\begin{equation}\label{data6}
\sum_{j\leq J}\sum_{k\leq K}\sum_{i\leq I}\sum_{\Psi\in\{\beta,\rho,\sigma,\betab\}}\sup_{\ub}||\nab_3^j\nab_4^k\nabla^i\Psi_n||_{L^2(S_{0,\ub})}\leq C_{I,J,K},
\end{equation}
and for $H_0\cap\{\ub_s\leq\ub\leq \ub_s+2^{-n}\}$,
\begin{equation}\label{data7}
\sum_{j\leq J}\sum_{i\leq I}\sup_{\ub}||\nab_3^j\nabla^i\alpha_n||_{L^2(S_{0,\ub})}\leq C_{I,J,K}2^n.
\end{equation}
This sequence of data converges in the sense that on $\Hb_0$, the data is fixed; and on $H_0$, for $\psi'_n=\psi_n-\psi_{n-1}$ and $\Psi'_n=\Psi_n-\Psi_{n-1}$,
\begin{equation}\label{data8}
\sum_{j\leq J}\sum_{i\leq I+1}\sup_{\ub}||\nab_3^j\nabla^i\psi'_n||_{L^2(S_{0,\ub})}\leq C_{I,J}2^{-n},
\end{equation}
\begin{equation}\label{data9}
\sum_{j\leq J}\sum_{i\leq I}\sum_{\Psi\in\{\beta,\rho,\sigma,\betab\}}\sup_{\ub}||\nab_3^j\nabla^i\Psi'_n||_{L^2(S_{0,\ub})}\leq C_{I,J}2^{-n},
\end{equation}
on $H_0\setminus\{\ub_s\leq\ub\leq \ub_s+2^{-n}\}$,
\begin{equation}\label{data10}
\sum_{j\leq J}\sum_{k\leq K}\sum_{i\leq I+1}\sup_{\ub}||\nab_3^j\nab_4^k\nabla^i\psi'_n||_{L^2(S_{0,\ub})}\leq C_{I,J,K}2^{-n},
\end{equation}
\begin{equation}\label{data11}
\sum_{j\leq J}\sum_{k\leq K}\sum_{i\leq I}\sum_{\Psi\in\{\beta,\rho,\sigma,\betab\}}\sup_{\ub}||\nab_3^j\nab_4^k\nabla^i\Psi'_n||_{L^2(S_{0,\ub})}\leq C_{I,J,K}2^{-n}.
\end{equation}
\end{proposition}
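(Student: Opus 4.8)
The proof is a matter of collecting, in the right order, the estimates already derived in this section; I describe the plan.

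\textbf{Step 1 (the construction).} The free data are $\gamma_{AB}$, $\zeta_A$, $\trch$, $\trchb$ on $S_{0,0}$ together with the conformal classes $\hat\gamma_{AB}$ on $H_0$ and on $\Hb_0$, and we fix $\Omega\equiv 1$. On $H_0$ we take $\hat\gamma_{AB}$ as constructed above (with $\partial_{\ub}\hat\gamma$ jumping at $\ub=\ub_s$, smooth otherwise) and on $\Hb_0$ smooth. Choosing $\ub_s=\epsilon/2$ and $\epsilon$ small in terms of $C$, equation \eqref{PhiODE} is solvable on $[0,\epsilon]$ with the relevant conformal metrics remaining positive definite, yielding $\Phi$ (resp.\ $\Phi_n$) and hence $\chih,\trch$ (resp.\ $\chih_n,\trch_n$) via \eqref{chihPhi}, \eqref{trchPhi}; in particular $\gamma=\Phi^2\hat\gamma$ is positive definite with the claimed angular regularity. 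All remaining Ricci coefficients and curvature components on $H_0$ are then obtained by integrating the equations \eqref{null.str1}--\eqref{null.str3}, \eqref{eq:null.Bianchi} in $\ub$ from their values on $S_{0,0}$, using that $\Omega\equiv 1$ makes $\nabla_4=\partial_{\ub}+\chi\cdot$. The structural point to record is that $\alpha$ is needed for nothing else: $\beta$ comes from the Codazzi equation in \eqref{null.str3}, $\rho$ from the Gauss equation, $\sigma$ from $\curl\zeta=\sigma+\frac12\chibh\wedge\chih$, $\betab$ by transporting its Codazzi-determined value on $S_{0,0}$, and $\omegab$ from $\nabla_4\omegab=3|\zeta|^2+\frac12\rho$, so that every component is expressible through $\chih,\trch,\zeta,\hat\gamma$ and their angular derivatives alone, with $\alpha$ itself the a posteriori quantity $-\nabla_4\chih-\trch\chih$. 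The analogous construction on $\Hb_0$ is easier, as the data there is smooth.

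\textbf{Step 2 (higher $\nabla_3$-regularity; $L^2(S)$ norms).} To produce $\nabla_3^j\psi,\nabla_3^j\Psi$ on $H_0$ we prescribe their values on $S_{0,0}$ (subject to the constraints) and commute the $\ub$-transport equations of Step 1 with $\nabla_3^j$; since $[\nabla_4,\nabla_3]$ contributes only curvature components other than $\alpha$, Ricci coefficients, and their angular derivatives, each commutation costs a fixed number of angular derivatives but no $\nabla_3$- or $\nabla_4$-regularity of $\alpha$, so the hierarchy closes once $I$ is taken large relative to $J,K$ — this is the source of the extra $+3,+2,+1$ of angular regularity in the pointwise bounds above, and it also produces the mixed $\nabla_3^j\nabla_4^k\nabla^i$ bounds valid on $H_0\setminus\{|\ub-\ub_s|\ge\eta\}$ and on $H_0\setminus\{\ub_s\le\ub\le\ub_s+2^{-n}\}$. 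These pointwise coordinate bounds are turned into the covariant $L^2(S_{0,\ub})$, $L^2(S_{u,0})$ norms in \eqref{data1}--\eqref{data7} using only the two-sided bound on $\det\gamma$ and the control of its angular derivatives, which relate $\partial_{\th}$ to $\nabla$ and bound the volume element. Estimate \eqref{data7} is the only one carrying a power of $2^n$: it follows from $\alpha_n=-\nabla_4\chih_n-\trch_n\chih_n$ and $|\partial_{\ub}\chih_n|\le C2^n$ on $\{\ub_s\le\ub\le\ub_s+2^{-n}\}$, inherited from $|h_n'|\le C2^n$ there through \eqref{PhiODE}, \eqref{chihPhi}.

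\textbf{Step 3 (convergence).} For \eqref{data8}--\eqref{data11} I would subtract the transport equations of Step 1 for parameters $n-1$ and $n$. In the resulting equations for $\psi'_n,\Psi'_n$ every inhomogeneity is a product at least one of whose factors lies among $\hat\gamma'_n,\Phi'_n,\chih'_n,\trch'_n$; by the bounds already recorded these are supported in $\{\ub_s\le\ub\le\ub_s+2^{-n}\}$ with $\partial_{\ub}^k$-sizes $C2^{(k-1)n},C2^{(k-2)n},C2^{kn},C2^{(k-1)n}$ respectively, hence each inhomogeneity vanishes outside the slab and has $\ub$-integral of size $C2^{-n}$. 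Since integrating a $\ub$-transport equation picks up precisely the $L^1_{\ub}$-norm of its source, every propagated difference is $O(2^{-n})$ uniformly in $\ub$, and all its $\ub$-derivatives are $O(2^{-n})$ pointwise on $H_0\setminus\{\ub_s\le\ub\le\ub_s+2^{-n}\}$. Commuting with $\nabla_3^j$ as in Step 2 and passing to $L^2(S)$ gives \eqref{data8}, \eqref{data9}, and combined with the outside-the-slab estimates, \eqref{data10}, \eqref{data11}; on $\Hb_0$ the data is held fixed, so the differences vanish identically there.

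\textbf{Main obstacle.} The delicate point — the rest being careful but routine — is arranging the hierarchy of transport and constraint equations on $H_0$ so that it closes with no circularity and so that $\alpha$ and its $\nabla_4$-derivatives enter nowhere except as $-\nabla_4\chih-\trch\chih$; this is exactly what forces the use of the Codazzi, Gauss and $\curl\zeta$ relations in place of the $\nabla_4$-Bianchi equations to recover $\beta,\rho,\sigma$, and the derivative loss from commuting with $\nabla_3$ is what dictates the size of $I$ relative to $J,K$. One must then verify that this same structure persists for the difference equations, so that the gain $2^{-n}$ propagates to every component rather than only to $\hat\gamma'_n$.
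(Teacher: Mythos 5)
Your proposal is correct and follows the same route as the paper: prescribe the free data $(\gamma,\zeta,\trch,\trchb)$ on $S_{0,0}$ and the conformal classes $\hat\gamma$ with $\Omega\equiv 1$, solve the conformal-factor ODE \eqref{PhiODE} for $\Phi$, obtain $\chih,\trch$ from \eqref{chihPhi}--\eqref{trchPhi}, and then propagate the remaining Ricci coefficients and curvature components along $H_0$ via the $\nab_4$-transport equations together with the Codazzi, Gauss and $\curl\zeta$ identities, so that $\alpha$ is never an input; commuting with $\nab_3^j$ and running the same hierarchy for the differences yields \eqref{data1}--\eqref{data11}, matching the paper's construction.

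One imprecision worth flagging: you assert that $\Phi'_n$, $\chih'_n$ and $\trch'_n$ are supported in $\{\ub_s\le\ub\le\ub_s+2^{-n}\}$. This is true only for $\hat\gamma'_n$. The quantity $\Phi'_n$ (and hence $\chih'_n,\trch'_n$) is not compactly supported in the slab, because $\Phi'_n$ and $\partial_{\ub}\Phi'_n$ carry the integrated effect of the slab-supported change in the coefficient of \eqref{PhiODE} forward in $\ub$; the paper's explicit bounds show they are merely $O(2^{-n})$ pointwise outside the slab. This does not change any conclusion — the $L^1_{\ub}$-integral of each source is still $O(2^{-n})$, and the pointwise smallness of the sources outside the slab is exactly what delivers \eqref{data10}--\eqref{data11} — but the justification for the $\nab_4^k$-bounds on $H_0\setminus\{\ub_s\le\ub\le\ub_s+2^{-n}\}$ must rest on this quantitative $O(2^{-n})$ control of the propagated sources rather than on their literal vanishing there.
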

\begin{remark}
In the context of Theorem \ref{rdthmv2}, corresponding sequences of $\psi_n$ and $\Psi_n$ can be constructed. As opposed to the case of an impulsive gravitational wave, such sequences will satisfy (\ref{data1})-(\ref{data4}) and (\ref{data8})-(\ref{data9}), but they will not obey the improved estimates of Proposition \ref{dataprop} which holds on $H_0\setminus\{\ub_s\leq\ub\leq\ub_s+2^{-n}\}$. This additional regularity is only relevant for the argument that the impulsive gravitational wave is smooth away from $\Hb_{\ub_s}$.
\end{remark}
Proposition \ref{dataprop} in particular constructs the set of initial data satisfying the assumptions of Theorem \ref{giwthmv1}. In the following, we will focus on the case $I=2$, $J=K=0$. This level of regularity is sufficient to prove all the estimates and to construct a spacetime metric satisfying the Einstein equations.

\section{A Priori Estimates}\label{estimates}

In this section, we begin the proof of Theorem \ref{rdthmv2}. We show that for any smooth characteristic initial data satisfying the estimates in the assumption of Theorem \ref{rdthmv2}, there is an $\epsilon$ depending only on the constants in the estimates in the assumptions such that the unique resulting spacetime admits a double null foliation $(u, \ub)$ and remains smooth with precise estimates on its Ricci coefficients and curvature components in $0\leq u\leq u_*$ and $0\leq \ub\leq\ub_*$ for $u, \ub\leq \epsilon$. In particular, no assumptions have to be made on the size of $\alpha$.

In the context of Theorem \ref{rdthmv2}, our Theorem in this Section asserts that a region of existence can be found independent of $n$ for any smooth approximating sequence of data satisfying the bounds in the assumptions of Theorem \ref{rdthmv2}.

We first define the norms that we will work with. Let
$$\mathcal R=\sum_{i\leq 2}\left(\sum_{\Psi\in\{\beta,\rho,\sigma,\betab\}}\sup_u||\nabla^i\Psi||_{L^2(H_u)} +\sum_{\Psi\in\{\rho,\sigma,\betab,\alphab\}}\sup_{\ub}||\nabla^i\Psi||_{L^2(\underline{H}_{\ub})}\right),$$
$$\mathcal O_{i,p}=\sup_{u,\ub}||\nabla^i\psi||_{L^p(S_{u,\ub})},$$
and
$$\tilde{\mathcal O}_{i,2}=\sum_{\psi_H\in\{\trch,\chih,\eta,\etab,\omega,\trchb\}}\sup_u||\nabla^i\psi_H||_{L^2(H_u)}+ \sum_{\psi_{\Hb}\in\{\trch,\eta,\etab,\omegab,\trchb,\chibh\}}\sup_{\ub}||\nabla^i\psi_{\Hb}||_{L^2(\Hb_{\ub})}.$$
We write
$$\mathcal O=\sum_{i=0}^{2} \mathcal O_{i,2}+\sum_{i=0}^{1}\mathcal O_{i,4}+\mathcal O_{0,\infty}.$$
The following is the main theorem in this section:
\begin{theorem}\label{timeofexistence}
Suppose the initial data set for the characteristic initial value problem is smooth and satisfies
$$c\leq |\det\gamma \restriction_{S_{u,0}} |\leq C,\quad \sum_{i\leq 3}|(\frac{\partial}{\partial\th})^i\gamma \restriction_{S_{u,0}}|\leq C,$$
$$\mathcal O_0:= \sum_{i\leq 2} \sup_{\ub}||\nabla^i\psi||_{L^2(S_{0,\ub})}+\sum_{i\leq 2} \sup_{u}||\nabla^i\psi||_{L^2(S_{u,0})}+||\nabla^3\psi||_{L^2(H_0)}+||\nabla^3\psi||_{L^2(\Hb_0)}\leq C,$$
$$\mathcal R_0:=\sum_{i\leq 2}\left(\sum_{\Psi\in\{\beta,\rho,\sigma,\betab\}}||\nab^i\Psi||_{L^2(H_0)}+\sum_{\Psi\in\{\rho,\sigma,\betab,\alphab\}}||\nab^i\Psi||_{L^2(\Hb_0)}\right)\leq C.$$
Then for $\epsilon$ sufficiently small depending only on $C$ and $c$, there exists a unique spacetime $(\mathcal M, g)$ endowed with a double null foliation $(u,\ub)$ that solves the Einstein equations in the region $0\leq u\leq u_*$, $0\leq \ub\leq \ub_*$, whenever $u_*, \ub_*\leq \epsilon$. Moreover, in this region, the following norms are bounded above by a constant $C'$ depending only on $C$ and $c$:
$$\mathcal O, \tilde{\mathcal O}_{3,2}, \mathcal R < C'.$$
\end{theorem}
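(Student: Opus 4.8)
The plan is to run a bootstrap (continuity) argument on the double null region $D_{u_*,\ub_*}$. I would begin by fixing a large constant $\Delta$ depending only on $C$ and $c$, and assuming as bootstrap hypotheses that $\mathcal O \le \Delta$ and $\mathcal R \le \Delta$ hold on the region of existence; the goal is to show that for $\epsilon$ small enough (depending on $\Delta$, hence on $C,c$) these bounds can be improved to, say, $\mathcal O \le \Delta/2$, $\mathcal R \le \Delta/2$, which by a standard continuity/open-closed argument together with the smooth local existence theory (valid as long as the solution stays regular) yields existence on all of $0\le u,\ub\le\epsilon$ with the stated bounds. Throughout, the crucial structural feature exploited is that $\alpha$ never enters: the energy estimates are run on the \emph{renormalized} curvature components $(\beta,\rhoc,\sigmac,\betab,\alphab)$ via the renormalized Bianchi system sketched in Section \ref{eeoutline}, and the transport estimate for $\chih$ uses the $\nab_3$ equation (and the Codazzi equation) rather than the $\nab_4$ equation, so no term forces a bound on $\alpha$.

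The key steps, in order, would be: (1) \textbf{Estimates for the metric components.} From $e_4 = \Omega^{-1}\partial_{\ub}$ and the transport equations for $\Omega$, $\gamma$, $b^A$ along $e_4$, integrating in $\ub$ over an interval of length $\le\epsilon$ and using the bootstrap bounds on $\psi$, recover $c/2 \le |\det\gamma| \le 2C$, pointwise bounds on $\gamma$, $\gamma^{-1}$, $\Omega$, $b$ and their angular derivatives, so that coordinate angular derivatives are comparable to $\nab$ and Sobolev embedding $\|\phi\|_{L^\infty(S)} \lesssim \sum_{i\le 2}\|\nab^i\phi\|_{L^2(S)}$ and $\|\phi\|_{L^4(S)}\lesssim \sum_{i\le1}\|\nab^i\phi\|_{L^2(S)}$ hold uniformly. (2) \textbf{Transport estimates for the Ricci coefficients.} Using the null structure equations \eqref{null.str1}, \eqref{null.str2} and the elliptic/constraint equations \eqref{null.str3}: the $H_u$-components $(\trch,\chih,\eta,\etab,\omega,\trchb)$ are estimated by integrating their $\nab_4$ equations in $\ub$ (with $\chih$ handled via its $\nab_3$ equation and Codazzi to avoid $\alpha$), picking up a factor $\epsilon$ on every nonlinear term and on the curvature source, and similarly the $\Hb_{\ub}$-components are estimated by integrating $\nab_3$ equations in $u$; commuting with $\nab$ up to twice, combined with the div-curl elliptic estimates on each $S_{u,\ub}$ to convert $\nab^i$ of the traceless parts into $\nab^{i-1}$ of curvature plus lower order, gives $\mathcal O \lesssim \mathcal O_0 + \epsilon\cdot\mathrm{poly}(\Delta)$ and $\tilde{\mathcal O}_{3,2}\lesssim \mathcal O_0 + \mathcal R + \epsilon\cdot\mathrm{poly}(\Delta)$. (3) \textbf{Renormalized energy estimates.} Here one multiplies each renormalized Bianchi equation by the corresponding component, integrates over $D_{u,\ub}$, and integrates by parts in the divergence/curl terms so that the pairs $(\beta;\rhoc,\sigmac)$, $(\rho,\sigma;\betab)$, $(\betab;\alphab)$ produce boundary terms on $H_u$ and $\Hb_{\ub}$ with a favorable sign, the bulk error terms being schematically $\int_{D}(\psi\,\Psi\,\Psi + \psi\,\psi\,\nab\psi)$, none containing $\alpha$; after commuting with $\nab$ up to twice (Step (4) of Section \ref{higherorderoutline}: only angular commutators, which is where one checks no $\alpha$ is generated) one obtains $\mathcal R^2 \lesssim \mathcal R_0^2 + \epsilon\cdot\mathrm{poly}(\Delta,\mathcal R)$. (4) \textbf{Closing the bootstrap.} Combining (2) and (3), all error terms carry a positive power of $\epsilon$, so choosing $\epsilon$ small depending only on $\Delta$ (hence on $C,c$) absorbs them and improves $\mathcal O,\tilde{\mathcal O}_{3,2},\mathcal R$ to half their bootstrap values. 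Uniqueness within the smooth category follows from the standard domain-of-dependence argument for the Einstein equations in this gauge.

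The main obstacle I expect is \textbf{controlling $\chih$ and the resulting loss of $\beta$ on $\Hb_{\ub}$} (Section \ref{Riccioutline}): because we refuse to use $\nab_4\chih = -\trch\chih - 2\omega\chih - \alpha$, we must estimate $\chih$ from its $\nab_3$ equation, whose source contains $\nab\hot\eta$; but $\eta$ in turn is controlled by its $\nab_4$ equation with source $\beta$, and $\beta$ on incoming hypersurfaces $\Hb_{\ub}$ is \emph{not} one of the components appearing in $\mathcal R$ (only $\rho,\sigma,\betab,\alphab$ are). One therefore has to arrange the order of estimation carefully — estimating $\beta$ and $\chih$ only through integration along $H_u$ (in $\ub$), never needing a flux bound for $\beta$ on $\Hb_{\ub}$ — and correspondingly ensure the renormalization $\rhoc=\rho-\tfrac12\chih\cdot\chibh$, $\sigmac=\sigma+\tfrac12\chih\wedge\chibh$ is consistent so that $\rho,\sigma$ (needed, e.g., in the elliptic estimate $K=-\rho+\dots$ and in the $\betab$ Bianchi equation) are recovered from $\rhoc,\sigmac$ plus already-controlled quantities. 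A secondary technical point is the top-order step: one must verify that commuting the Bianchi equations with $\nab$ twice produces only terms schematically of the form $\psi\nab^{\le2}\Psi$, $\nab^{\le2}\psi\cdot\Psi$, $\nab\psi\nab\psi\nab\psi$, etc., and in particular that the commutator $[\nab_4,\nab]$ or $[\nab_3,\nab]$ (which involves $\chi$, $\chib$, $\eta$, $\etab$ and curvature) never reintroduces $\alpha$ — this is exactly the "remarkable fact" flagged in the introduction and is where the structure of the equations must be used, not merely their size.
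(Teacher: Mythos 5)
Your proposal matches the paper's proof in both strategy and key structural observations: bootstrap on the Ricci and curvature norms, metric estimates first, transport and elliptic estimates for the Ricci coefficients avoiding the $\nabla_4\chih$ equation, renormalized energy estimates via integration by parts in the Bianchi pairs $(\beta;\rhoc,\sigmac)$, $(\rho,\sigma;\betab)$, $(\betab;\alphab)$, and the crucial recognition that losing $\alpha$ in $L^2$ forces one to also forego $\|\beta\|_{L^2(\Hb_{\ub})}$ and therefore to estimate the top-order $\chih$ and $\omega$ (via the auxiliary $\kappa = \nabla\omega + {}^*\nabla\omega^\dagger - \tfrac12\beta$ in the paper's Proposition~\ref{omegaelliptic}) only in $L^2(H_u)$, then check that this asymmetric control still closes. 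The only organizational difference is that the paper layers the bootstrap (first $\mathcal O_{0,\infty}\le\Delta_0$, then $\tilde{\mathcal O}_{3,2}\le\Delta_1$, then $\mathcal R\le\Delta_2$ in Propositions~\ref{Ricci}, \ref{Riccielliptic}, \ref{curvature}) rather than running a single joint bootstrap, but this is a matter of presentation rather than substance.
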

It follows by \cite{Rendall} that for every smooth initial data set, a unique smooth spacetime satisfying the Einstein equations exists in a small neighborhood of $S_{0,0}$. In order to show that the size of this region only depends on the constants in the assumption of Theorem \ref{timeofexistence}, it suffices to establish a priori control in the region $0\leq u\leq\epsilon$, $0\leq \ub\leq \epsilon$. The Theorem then follows from a standard last slice argument (See \cite{L}).

In this section, all estimates will be proved under the following bootstrap assumption:
\begin{equation}\tag{A}\label{BA1}
\mathcal O_{0,\infty}\leq \Delta_0.
\end{equation}

We now outline the steps in proving a priori estimates. In Section \ref{metric}, we will first estimate the metric components under the bootstrap assumption (\ref{BA1}). In Section \ref{transportsec}-\ref{elliptic}, we will derive some preliminary estimates and formulae. In Section \ref{transportsec}, we provide Propositions which gives $L^p$ estimates for general quantities satisfying transport equations. They will be used to control the Ricci coefficients and curvature components. In Section \ref{Embedding}, we establish Sobolev Embedding Theorems in our setting. In Section \ref{commutation}, we state some formulae for the commutators $[\nabla_4,\nabla]$ and $[\nabla_3,\nabla]$. They will then be used to obtain higher order estimates for general quantities satisfying transport equations. In Section \ref{elliptic}, we prove elliptic estimates for general quantities obeying Hodge systems. 

After these preliminary estimates, we estimate the Ricci coefficients and the curvature components. This proceeds in three steps:\\

\noindent {\bf STEP 1}: In Section \ref{Riccisec}, we prove that $\mathcal R<\infty$ and $\tilde{\mathcal O}_{3,2}<\infty$ together imply that $\mathcal O\leq C(\mathcal O_0)$, where $C(\mathcal O_0)$ is a constant depending only on the initial norm $\mathcal O_0$. These estimates are proved via the null structure equations.\\

\noindent {\bf STEP 2}: In Section \ref{Ricciellipticsec}, we use elliptic estimates to prove that $\mathcal R<\infty$ implies $\tilde{\mathcal O}_{3,2}\leq C(\mathcal O_0,\mathcal R_0,\mathcal R)$, where $C(\mathcal O_0,\mathcal R_0,\mathcal R)$ is a constant depending both on the initial norm and $\mathcal R$. \\

\noindent {\bf STEP 3}: In Section \ref{energyestimatessec}, we derive the energy estimates and prove the boundedness of $\mathcal R$ by the initial data.
\\

These steps provide bounds for the Ricci coefficients up to three angular derivatives and the curvature components up to two angular derivatives depending only on the constants in the assumptions of Theorem \ref{timeofexistence}. In Section \ref{Propregsec}, we show that this implies that higher regularity propagates and that the spacetime remains smooth in a region depending only on the constants in the assumptions of Theorem \ref{timeofexistence}. This allows us to conclude the proof of Theorem \ref{timeofexistence} in Section \ref{EndofProof}.

Finally, in Section \ref{AddEst}, we return to the sequence of smooth spacetimes arising from the sequence of data approximating the data of an impulsive gravitational wave as described in Section \ref{initialcondition}. We show that these spacetimes obey estimates in addition to those given in Theorem \ref{timeofexistence}. These bounds will be applied in Section \ref{limitgiw} to establish the regularity and singularity properties of an impulsive gravitational spacetime.

\subsection{Estimates for Metric Components}\label{metric}
We first show that we can control $\Omega$ under the bootstrap assumption (\ref{BA1}):
\begin{proposition}\label{Omega}
There exists $\epsilon_0=\epsilon_0(\Delta_0)$ such that for every $\epsilon\leq\epsilon_0$,
$$\frac 12\leq \Omega\leq 2.$$
\end{proposition}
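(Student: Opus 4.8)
The equation governing $\Omega$ along the $e_3$ direction comes from the definition of $\omegab$: recall that $\omegab = -\frac 12 \nab_3(\log\Omega)$, and since $e_3 = \Omega^{-1}(\frac{\partial}{\partial u} + b^A\frac{\partial}{\partial\th^A})$ while on $\Hb_0$ we have $\Omega = 1$ and $b^A = 0$, it is natural to integrate in the $u$ direction. However, the cleanest route is to use the $e_4$ direction: since $\omega = -\frac 12 \nab_4(\log\Omega)$ and $e_4 = \Omega^{-1}\frac{\partial}{\partial\ub}$, we get
$$\frac{\partial}{\partial\ub}(\log\Omega) = -2\Omega\,\omega \cdot \Omega^{-1} \cdot (\text{careful with normalization}),$$
so more precisely $\frac{\partial}{\partial\ub}(\log\Omega) = \Omega\nab_4(\log\Omega) = -2\Omega\omega$; wait — one must track that $\nab_4(\log\Omega) = e_4(\log\Omega) = \Omega^{-1}\frac{\partial}{\partial\ub}(\log\Omega)$, hence $\frac{\partial}{\partial\ub}(\log\Omega) = -2\Omega\omega$. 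Actually the simplest is: $\log\Omega$ satisfies a transport equation in $\ub$ with right-hand side controlled by $\omega$, and $\omega$ is one of the Ricci coefficients included in the bootstrap quantity $\mathcal O_{0,\infty}$ through the bootstrap assumption (\ref{BA1}). Since $\Omega = 1$ on $H_0$ (i.e. at $\ub = 0$), I would integrate this transport equation from $\ub = 0$.

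The plan is therefore: (i) write down the transport equation $\frac{\partial}{\partial\ub}\log\Omega = F$ where $F$ is a universal expression in $\omega$ and $\Omega$ itself (linear in $\log\Omega$ after rearrangement, or simply treat $\Omega$ on the right as a bounded factor via a continuity/bootstrap argument); (ii) use $|\omega| \le \mathcal O_{0,\infty} \le \Delta_0$ from (\ref{BA1}); (iii) integrate from $\ub = 0$ where $\log\Omega = 0$, obtaining $|\log\Omega(u,\ub,\th)| \le \int_0^{\ub}|F|\,d\ub' \lesssim \Delta_0 \cdot \epsilon \cdot (\sup\Omega)$; (iv) run a bootstrap on $\Omega$ itself — assume say $\frac 13 \le \Omega \le 3$, deduce from the integral bound that $|\log\Omega| \le C\Delta_0\epsilon$, and choose $\epsilon_0 = \epsilon_0(\Delta_0)$ small enough that $C\Delta_0\epsilon_0 \le \log 2$, which closes the bootstrap and yields $\frac 12 \le \Omega \le 2$. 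Alternatively, since the equation for $\log\Omega$ is genuinely of the form $\frac{\partial}{\partial\ub}\log\Omega = -2\Omega\omega$ and can be handled by Gronwall directly once one notes $\Omega = e^{\log\Omega}$ is controlled by $|\log\Omega|$, no separate bootstrap is even needed — one gets a closed Gronwall inequality for $|\log\Omega|$.

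The only mild subtlety — and the step I would be most careful about — is the precise form of the transport equation and making sure the factor of $\Omega$ appearing on the right-hand side does not obstruct the estimate; this is handled either by the short continuity argument above or by observing that the equation $\frac{\partial}{\partial\ub}\log\Omega = -2\Omega\omega$ combined with $\Omega \le e^{|\log\Omega|}$ gives $\frac{d}{d\ub}|\log\Omega| \le 2\Delta_0 e^{|\log\Omega|}$, which keeps $|\log\Omega|$ below $\log 2$ for $\ub \le \epsilon_0(\Delta_0)$. Everything else is a one-line integration. No curvature estimates or elliptic theory are needed here; this proposition is purely a consequence of the bootstrap assumption on the Ricci coefficients and the normalization $\Omega|_{H_0} = 1$.
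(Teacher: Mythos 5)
Your argument is correct, and the key mechanics are the same -- integrate the $\ub$-transport equation from the initial data $\Omega=1$ using the bootstrap bound on $\omega$ -- but you route through $\log\Omega$ whereas the paper works with $\Omega^{-1}$, and that difference is not merely cosmetic. In coordinates $e_4 = \Omega^{-1}\partial_{\ub}$, so $\partial_{\ub}\log\Omega = \Omega\nab_4\log\Omega = -2\Omega\omega$: as you noticed, the right-hand side carries an unavoidable factor of $\Omega$, which forces you into the Gronwall/bootstrap step you describe (assume $\frac13\le\Omega\le3$, derive the integral bound, close). This works, and your Gronwall variant via $y' \le 2\Delta_0 e^y$, $y(0)=0$ is clean. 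The paper instead observes the exact cancellation
\[
\omega = -\tfrac12\nab_4\log\Omega = \tfrac12\Omega\,\nab_4\Omega^{-1} = \tfrac12\Omega\cdot\Omega^{-1}\partial_{\ub}\Omega^{-1} = \tfrac12\,\partial_{\ub}\Omega^{-1},
\]
so $\Omega^{-1}$ satisfies a \emph{linear} ODE in $\ub$ with right-hand side $2\omega$, and one obtains $\|\Omega^{-1}-1\|_{L^\infty} \le C\Delta_0\epsilon$ by a single integration, no continuity argument at all. So what the paper's choice of unknown buys is the elimination of the nonlinearity (and hence of the auxiliary bootstrap), at the cost of a small algebraic observation. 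One tiny notational slip in your write-up: $H_0$ is the outgoing hypersurface $\{u=0\}$, not $\{\ub=0\}$; the hypersurface $\{\ub=0\}$ is $\Hb_0$. Since $\Omega\equiv 1$ on both initial hypersurfaces (Section 3), the error is harmless here, but for a $\ub$-transport equation the relevant initial condition is really the one on $\Hb_0$.
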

\begin{proof}
Consider the equation
\begin{equation}\label{Omegatransport}
 \omega=-\frac{1}{2}\nabla_4\log\Omega=\frac{1}{2}\Omega\nabla_4\Omega^{-1}=\frac{1}{2}\frac{\partial}{\partial \ub}\Omega^{-1}.
\end{equation}
Fix $\ub$. Notice that both $\omega$ and $\Omega$ are scalars and therefore the $L^\infty$ norm is independent of the metric. We can integrate equation (\ref{Omegatransport}) using the fact that $\Omega^{-1}=1$ on $H_0$ to obtain
$$||\Omega^{-1}-1||_{L^\infty(S_{u,\ub})}\leq C\int_0^{u}||\omega||_{L^\infty(S_{u',\ub})}du\leq C\Delta_0\epsilon.$$
This implies both the upper and lower bounds for $\Omega$ for sufficiently small $\epsilon$.
\end{proof}

We then show that we can control $\gamma$ under the bootstrap assumption (\ref{BA1}):
\begin{proposition}\label{gamma}
Consider a coordinate patch $U$ on $S_{0,0}$ and define $U_{u,0}$ to be a coordinate patch on $S_{u,0}$ given by the one-parameter diffeomorphism generated by $\Lb$. Define $U_{u,\ub}$ to be the image of $U_{u,0}$ under the one-parameter diffeomorphism generated by $L$. Define also $D_U=\bigcup_{0\leq u\leq \epsilon ,0\leq \ub\leq \epsilon} U_{u,\ub}$. For $\epsilon$ small enough depending on initial data and $\Delta_0$, there exists $C$ and $c$ depending only on initial data such that the following pointwise bounds hold for $\gamma$ in $\mathcal D_U$:
$$c\leq \det\gamma\leq C. $$
Moreover, in $D_U$,
$$|\gamma_{AB}|,|(\gamma^{-1})^{AB}|\leq C.$$
\end{proposition}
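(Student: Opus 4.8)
The plan is to derive a transport equation for $\gamma_{AB}$ along the integral curves of $L$ and $\Lb$, and then integrate it using the bootstrap assumption (\ref{BA1}) together with the already-established bounds on $\Omega$ from Proposition \ref{Omega}. First I would recall that in the coordinate system of Section \ref{coordinates}, the equivariant vector field $L = \Omega^2 L' = \Omega e_4$ generates the flow preserving the foliation, and that $\Ls_L \gamma = 2\Omega^2\chi$ (this is the spacetime analogue of the constraint \eqref{con1}; it follows from the definition $\chi_{AB} = g(D_A e_4, e_B)$ and $e_4 = \Omega^{-1}\partial_{\ub}$). Expressed in coordinates on the patch $U_{u,\ub}$, since $\partial/\partial\ub$ acts as $\Omega e_4$ and $\Ls_L\th^A = 0$, this reads
$$\frac{\partial}{\partial \ub}\gamma_{AB} = 2\Omega^2 \chi_{AB} = 2\Omega^2\left(\chih_{AB} + \tfrac12\trch\,\gamma_{AB}\right).$$
Similarly, along $\Hb_0$ the evolution in $u$ of $\gamma$ on $U_{u,0}$ is governed by $\Ls_{\Lb}\gamma = 2\Omega^2\chib$, but by the choice $b^A = 0$ on $\Hb_0$ and $\Omega = 1$ there, this is exactly \eqref{con2} and is controlled directly by the (smooth) initial data on $\Hb_0$, giving $c_0 \leq \det\gamma|_{S_{u,0}} \leq C_0$ and the componentwise bounds on $S_{u,0}$ for $\epsilon$ small.

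Next, with the bound on $S_{u,0}$ as the "initial" data for the $\ub$-flow, I would integrate the displayed transport equation from $\ub = 0$ to $\ub$. Treating $\gamma_{AB}$ componentwise (each is a scalar function on the coordinate patch, so the $L^\infty$-in-$\th$ norm makes sense without reference to $\gamma$ itself), we get
$$\gamma_{AB}(u,\ub,\th) = \gamma_{AB}(u,0,\th) + \int_0^{\ub} 2\Omega^2\left(\chih_{AB} + \tfrac12\trch\,\gamma_{AB}\right)(u,\ub',\th)\,d\ub'.$$
By Proposition \ref{Omega} we have $\tfrac12 \leq \Omega \leq 2$, and by the bootstrap assumption (\ref{BA1}) we have $\|\chih\|_{L^\infty(S)}, \|\trch\|_{L^\infty(S)} \leq \Delta_0$ (these are components of $\psi$, so $\mathcal O_{0,\infty} \leq \Delta_0$ controls them — though one should be slightly careful since $\chih$ is a tensor and its pointwise components in the $\th$-coordinate basis are bounded by $\|\chih\|_{L^\infty(S)}$ times powers of $\gamma$, which is itself what we are bootstrapping; this is handled by a continuity/bootstrap argument in $\ub$). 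A Gronwall inequality in $\ub$ then gives
$$|\gamma_{AB}(u,\ub,\th) - \gamma_{AB}(u,0,\th)| \leq C\Delta_0\,\epsilon\, e^{C\Delta_0\epsilon},$$
so for $\epsilon$ small depending on $\Delta_0$ and the initial data, $\gamma_{AB}$ stays within a small neighborhood of its value on $S_{u,0}$, yielding the upper bound $|\gamma_{AB}| \leq C$. The same estimate applied to $\det\gamma$ — using that $\partial_{\ub}\det\gamma = (\det\gamma)(\gamma^{-1})^{AB}\partial_{\ub}\gamma_{AB} = (\det\gamma)(2\Omega^2\trch)$, which is a scalar transport equation with a bounded coefficient — gives $c \leq \det\gamma \leq C$ directly by exponentiating; and once $\det\gamma$ is bounded below and $\gamma_{AB}$ above, the entries of $\gamma^{-1}$ are bounded by Cramer's rule, $|(\gamma^{-1})^{AB}| \leq C$.

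The main obstacle is the mild circularity: the quantity $\|\chih\|_{L^\infty(S_{u,\ub})}$ that we feed into the transport estimate is defined using the metric $\gamma$ we are trying to control (the norm $\langle\chih,\chih\rangle_\gamma^{1/2}$), and conversely the coordinate-component bound we want for $\gamma$ requires knowing the coordinate components of $\chih$ are bounded, which needs $\gamma^{-1}$ bounded. I would resolve this by a bootstrap/continuity argument in the $\ub$-variable: let $\ub_0$ be the supremum of $\ub'$ for which, say, $|\gamma_{AB}| \leq 2C_0$ and $\det\gamma \geq c_0/2$ hold on $[0,\ub']$ (with $C_0, c_0$ the bounds on $S_{u,0}$); on this interval the coordinate components of $\chih$ are then controlled by $\Delta_0$ times a fixed constant, the transport estimate above closes with strict improvement for $\epsilon$ small, and hence $\ub_0 = \epsilon$. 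This is a routine-but-necessary step; everything else is direct integration of a scalar (or scalarized) ODE with coefficients made small by the factor of $\epsilon$ coming from the length of the $\ub$-interval.
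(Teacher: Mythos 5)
Your approach is the same as the paper's: integrate the $\ub$-transport equation for $\gamma_{AB}$ from $S_{u,0}\subset\Hb_0$ (where the bounds are part of the hypotheses) using Proposition~\ref{Omega} and the bootstrap bound on $\|\chi\|_{L^\infty(S)}$, treating $\det\gamma$ first via its scalar transport equation and then the individual components. One small error: the first variation formula is $\Ls_L\gamma = 2\Omega\chi$, not $2\Omega^2\chi$ (since $L=\Omega e_4$ and $\Ls_{e_4}\gamma=2\chi$); this is harmless here because $\Omega$ is bounded above and below, but it is worth getting the power right. The one place you add genuine content is your explicit identification and resolution of the circularity between the coordinate components $|\chi_{AB}|$ and the $\gamma$-norm $\|\chi\|_{L^\infty}$: the paper handles this more compactly by introducing the largest eigenvalue $\Lambda$ of $\gamma$, noting $\sum_{A,B}|\chi_{AB}|\lesssim\Lambda\|\chi\|_{L^\infty}$ and $\Lambda\lesssim\sup_{A,B}|\gamma_{AB}|$, and then absorbing the small $\Lambda\Delta_0\epsilon$ contribution into the left-hand side; your continuity-in-$\ub$ bootstrap argument is an equivalent, slightly more verbose, way of saying the same thing. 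Either version is fine.
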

\begin{proof}
The first variation formula states that
$$\Ls_L\gamma=2\Omega\chi.$$
In coordinates, this means
$$\frac{\partial}{\partial \ub}\gamma_{AB}=2\Omega\chi_{AB}.$$
From this we derive that 
$$\frac{\partial}{\partial \ub}\log(\det\gamma)=\Omega\trch.$$
Define $\gamma_0(u,\ub,\th^1,\th^2)=\gamma(0,\ub,\th^1,\th^2)$. 
$$|\det\gamma-\det(\gamma_0)|\leq C\int_0^{\ub}|\trch|d\ub'\leq C\Delta_0\epsilon.$$
This implies that the $\det \gamma$ is bounded above and below. Let $\Lambda$ be the larger eigenvalue of $\gamma$. Clearly,
\begin{equation}\label{La}
\Lambda\leq C\sup_{A,B=1,2}\gamma,
\end{equation}
and 
$$\sum_{A,B=1,2}|\chi_{AB}|^2\leq C\Lambda ||\chi||_{L^\infty(S_{u,\ub})}.$$
Then
$$|\gamma_{AB}-(\gamma_0)_{AB}|\leq C\int_0^{\ub}|\chi_{AB}|d\ub'\leq C\Lambda\Delta_0\epsilon.$$
Using the upper bound (\ref{La}), we thus obtain the upper bound for $|\gamma_{AB}|$. The upper bound for $|(\gamma^{-1})^{AB}|$ follows from the upper bound for $|\gamma_{AB}|$ and the lower bound for $\det\gamma$.
\end{proof}

A consequence of the previous Proposition is an estimate on the surface area of each two sphere $S_{u,\ub}$.
\begin{proposition}\label{area}
$$\sup_{u,\ub}|\mbox{Area}(S_{u,\ub})-\mbox{Area}(S_{u,0})|\leq C\Delta_0\epsilon.$$
\end{proposition}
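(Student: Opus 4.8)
The plan is to derive a transport equation for the area element in the $\ub$ direction and integrate it. Recall from the proof of Proposition \ref{gamma} the first variation identity $\frac{\partial}{\partial \ub}\gamma_{AB}=2\Omega\chi_{AB}$, which yields $\frac{\partial}{\partial \ub}\log(\det\gamma)=\Omega\trch$, hence $\frac{\partial}{\partial \ub}\sqrt{\det\gamma}=\frac12\Omega\trch\sqrt{\det\gamma}$. Since the coordinate patches $U_{u,\ub}$ and the associated partition of unity $p_U$ are, by construction (Section \ref{coordinates} and Proposition \ref{gamma}), dragged along the flow of $L$, the functions $p_U$ are independent of $\ub$ in the $(u,\ub,\th^1,\th^2)$ coordinates. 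Differentiating $\mbox{Area}(S_{u,\ub})=\sum_U\int\int p_U\sqrt{\det\gamma}\,d\th^1 d\th^2$ under the integral sign therefore gives
$$\frac{d}{d\ub}\mbox{Area}(S_{u,\ub})=\frac12\int_{S_{u,\ub}}\Omega\trch.$$

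First I would bound the right-hand side. By Proposition \ref{Omega} we have $\|\Omega\|_{L^\infty(S_{u,\ub})}\le 2$, and since $\trch$ is a component of a Ricci coefficient $\psi$, the bootstrap assumption (\ref{BA1}) gives $\|\trch\|_{L^\infty(S_{u,\ub})}\le \mathcal O_{0,\infty}\le\Delta_0$. Combining these with the uniform bound $\mbox{Area}(S_{u,\ub})\le C$ (which follows from Proposition \ref{gamma}, as $\det\gamma$ is bounded above and the coordinate domains are fixed), we obtain $\left|\frac{d}{d\ub}\mbox{Area}(S_{u,\ub})\right|\le C\Delta_0$. Integrating in $\ub$ from $0$ to $\ub\le\epsilon$ then gives
$$|\mbox{Area}(S_{u,\ub})-\mbox{Area}(S_{u,0})|\le\int_0^{\ub}\frac12\Big|\int_{S_{u,\ub'}}\Omega\trch\Big|\,d\ub'\le C\Delta_0\epsilon,$$
and taking the supremum over $u,\ub$ (the constants being uniform in $u$) concludes the proof.

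There is no genuinely hard part here: the estimate is a one-line consequence of the transport equation together with the already established $L^\infty$ control of $\Omega$ and $\trch$. The only point requiring a little care is the bookkeeping in differentiating the area integral under the integral sign, i.e.\ verifying that the partition of unity and coordinate patches are truly transported by $L$ so that no extra terms appear; but this is exactly the setup fixed in Section \ref{coordinates} and used in Proposition \ref{gamma}. One could alternatively avoid even this by using Proposition \ref{gamma} to compare $\sqrt{\det\gamma}$ on $S_{u,\ub}$ and $S_{u,0}$ pointwise, which gives the same bound directly.
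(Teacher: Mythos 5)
Your main argument — differentiating the area integral in $\ub$ via the first variation identity $\frac{d}{d\ub}\mbox{Area}(S_{u,\ub})=\int_{S_{u,\ub}}\Omega\trch$ (up to a constant), bounding the integrand by $\|\Omega\|_{L^\infty}\|\trch\|_{L^\infty}\cdot\mbox{Area}\lesssim\Delta_0$ using Proposition~\ref{Omega}, the bootstrap assumption~(\ref{BA1}), and the area upper bound inherited from Proposition~\ref{gamma}, and then integrating — is correct, but it takes a slightly longer route than the paper's own proof. The paper simply invokes the pointwise bound $|\det\gamma-\det\gamma_0|\leq C\Delta_0\epsilon$ already established in the course of proving Proposition~\ref{gamma}, which immediately gives $|\sqrt{\det\gamma}(u,\ub)-\sqrt{\det\gamma}(u,0)|\leq C\Delta_0\epsilon$ pointwise; integrating this against the fixed partition of unity and fixed coordinate domains gives the claim in one line. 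This is precisely the alternative you flag at the very end of your write-up. The transport argument you lead with is more robust (it is the natural tool if one needed to differentiate integrals of nonconstant densities, as in Proposition~\ref{transport}) but imports slightly more machinery than is needed here: the pointwise $\det\gamma$ comparison has already done the work, and the paper correctly avoids re-deriving it. Note also a small bookkeeping issue that does not affect the result: from $\frac{\partial}{\partial\ub}\gamma_{AB}=2\Omega\chi_{AB}$ one gets $\frac{\partial}{\partial\ub}\log\det\gamma=2\Omega\trch$, so your factor $\frac12$ (and the paper's own stated $\frac{\partial}{\partial\ub}\log\det\gamma=\Omega\trch$) carry constant-factor discrepancies; these are harmless since only the $O(\Delta_0\epsilon)$ scaling matters.
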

\begin{proof}
This follows from the fact that $\sqrt{\det\gamma}$ is pointwise only slightly perturbed if $\epsilon$ is chosen to be appropriately small.
\end{proof}
With the estimate on the volume form, we can now show that the $L^p$ norms defined with respect to the metric and the $L^p$ norms defined with respect to the coordinate system are equivalent.
\begin{proposition}\label{eqnorm}
Given a covariant tensor $\phi_{A_1...A_r}$ on $S_{u,\ub}$, we have
$$\int_{S_{u,\ub}} <\phi,\phi>_{\gamma}^{p/2} \sim \sum_{i=1}^r\sum_{A_i=1,2}\iint |\phi_{A_1...A_r}|^p \sqrt{\det\gamma} d\th^1 d\th^2.$$
\end{proposition}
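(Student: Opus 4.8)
The plan is to prove the equivalence at the level of the integrands — that is, to show that the density $<\phi,\phi>_\gamma^{p/2}\sqrt{\det\gamma}$ is pointwise comparable, with constants depending only on $r$, $p$, $C$ and $c$, to $\sum_{A_1,\ldots,A_r}|\phi_{A_1\ldots A_r}|^p\sqrt{\det\gamma}$ — and then integrate. Writing the inner product out in coordinates,
$$<\phi,\phi>_\gamma=(\gamma^{-1})^{A_1B_1}\cdots(\gamma^{-1})^{A_rB_r}\phi_{A_1\ldots A_r}\phi_{B_1\ldots B_r},$$
I would first invoke Proposition \ref{gamma}, which in each coordinate patch $D_U$ provides $c\leq\det\gamma\leq C$ and $|\gamma_{AB}|,|(\gamma^{-1})^{AB}|\leq C$. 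Since $\gamma$ is a positive definite $2\times2$ matrix with entries bounded by $C$ and determinant bounded below by $c$, its trace — hence its largest eigenvalue — is at most $2C$, so its smallest eigenvalue is at least $c/(2C)$; applying the same reasoning to $\gamma^{-1}$ (whose determinant is at least $1/C$) shows that $\gamma^{-1}$ is uniformly comparable to the Euclidean metric, i.e.\ there is $\Lambda=\Lambda(C,c)\geq 1$ with $\Lambda^{-1}\delta^{AB}\leq(\gamma^{-1})^{AB}\leq\Lambda\,\delta^{AB}$ as quadratic forms, uniformly over $D_U$ and over all $(u,\ub)$ in the range considered.

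Next I would iterate this two-sided bound through each of the $r$ contracted index pairs to get the pointwise estimate
$$\Lambda^{-r}\sum_{A_1,\ldots,A_r}|\phi_{A_1\ldots A_r}|^2\leq <\phi,\phi>_\gamma\leq\Lambda^{r}\sum_{A_1,\ldots,A_r}|\phi_{A_1\ldots A_r}|^2,$$
raise everything to the power $p/2$, and use the elementary equivalence of the $\ell^2$ and $\ell^p$ norms on $\mathbb R^{2^r}$ to obtain
$$<\phi,\phi>_\gamma^{p/2}\sim\sum_{A_1,\ldots,A_r}|\phi_{A_1\ldots A_r}|^p$$
pointwise, with implicit constants depending only on $r$, $p$, $C$ and $c$. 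Multiplying through by the common positive weight $\sqrt{\det\gamma}$, integrating in $\th^1,\th^2$, and summing against the partition of unity $p_U$ exactly as in the definition of $\int_{S_{u,\ub}}$ (the right-hand side of the proposition being read with the same patchwise convention) then yields the claimed equivalence, since all comparison constants are uniform over the patches.

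I do not anticipate any real obstacle: the substance of the statement is that the coordinate-invariant left-hand side is controlled by the coordinate-dependent component norms, and this is nothing more than the uniform ellipticity of $\gamma$. The only ingredient I would cite rather than reprove is the lower bound on the eigenvalues of $\gamma$ — equivalently the bound $|(\gamma^{-1})^{AB}|\leq C$ — which is already contained in Proposition \ref{gamma}; granting it, the remaining argument is linear algebra together with Fubini, and Proposition \ref{area} (or rather the volume-form bounds behind it) enters only to make the constants independent of $(u,\ub)$.
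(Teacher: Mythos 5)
Your argument is correct. The paper states Proposition \ref{eqnorm} without proof, treating it as an evident consequence of the uniform bounds on $\gamma$, $\gamma^{-1}$ and $\det\gamma$ established in Proposition \ref{gamma}; your write-up supplies exactly the omitted details — uniform ellipticity of $\gamma$ giving pointwise comparability of $<\phi,\phi>_\gamma$ with the Euclidean component norm, equivalence of $\ell^2$ and $\ell^p$ norms on a finite-dimensional space, and then multiplication by the common positive weight $\sqrt{\det\gamma}$ followed by integration against the partition of unity.
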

We can also control $b$ under the bootstrap assumption, thus controlling the full spacetime metric: 
\begin{proposition}\label{b}
In the coordinate system $(u,\ub,\th^1,\th^2)$,
$$|b^A|\leq C\Delta_0\epsilon.$$
\end{proposition}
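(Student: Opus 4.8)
The plan is to obtain the estimate on $b^A$ exactly as in the two preceding propositions: integrate the transport equation that $b^A$ satisfies along the $e_4$-direction, starting from the condition $b^A = 0$ on $\Hb_0$, and control the right-hand side using the bootstrap assumption (\ref{BA1}) together with the metric bounds already established in Propositions \ref{Omega} and \ref{gamma}.

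\begin{proof}
Recall that in the coordinate system $(u,\ub,\th^1,\th^2)$ we have
$$e_3=\Omega^{-1}\left(\frac{\partial}{\partial u}+b^A\frac{\partial}{\partial\th^A}\right),\quad e_4=\Omega^{-1}\frac{\partial}{\partial\ub}.$$
The commutator $[e_3,e_4]$ is tangent to the spheres $S_{u,\ub}$ and, in view of the definition of the Ricci coefficients $\eta,\etab,\omega,\omegab$, it can be computed to be
$$[e_3,e_4]=2\omega e_3-2\omegab e_4 +4\zeta^A e_A=2\omega e_3-2\omegab e_4 +2(\eta^A-\etab^A)e_A.$$
On the other hand, from the coordinate expressions for $e_3$ and $e_4$,
$$[e_3,e_4]=-\left(\frac{\partial}{\partial\ub}(\Omega^{-1}b^A)\right)\Omega^{-1}\frac{\partial}{\partial\th^A}+(\ldots)\frac{\partial}{\partial\ub}+(\ldots)\left(\frac{\partial}{\partial u}+b^B\frac{\partial}{\partial\th^B}\right),$$
so that, projecting onto $\frac{\partial}{\partial\th^A}$ and using $e_4=\Omega^{-1}\frac{\partial}{\partial\ub}$, the $\frac{\partial}{\partial\th^A}$-component of the identity above yields a transport equation of the schematic form
$$\frac{\partial}{\partial\ub}b^A = \Omega\,(\eta^A-\etab^A)\cdot(\text{metric factors}),$$
where the metric factors are components of $\gamma^{-1}$ and are bounded by Proposition \ref{gamma}, and $\Omega$ is bounded by Proposition \ref{Omega}. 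Since $\eta$ and $\etab$ are Ricci coefficients, the bootstrap assumption (\ref{BA1}) gives $\|\eta\|_{L^\infty(S_{u,\ub})},\|\etab\|_{L^\infty(S_{u,\ub})}\leq \Delta_0$. Integrating in $\ub$ from $0$, using $b^A=0$ on $\Hb_0$, we obtain
$$|b^A|\leq C\int_0^{\ub}\big(|\eta|+|\etab|\big)\,d\ub'\leq C\Delta_0\epsilon$$
for $\epsilon$ sufficiently small depending on $\Delta_0$ and the initial data. This is the desired bound.
\end{proof}

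The one point requiring a little care is the derivation of the transport equation for $b^A$ itself: one must extract it cleanly from the commutator identity $[e_3,e_4]=2\omega e_3-2\omegab e_4+2(\eta-\etab)$, keeping track of the $\Omega^{-1}$ factors, rather than simply quoting it. Once that equation is in hand, the estimate is a routine Gr\"onwall-type integration identical in structure to the proofs of Propositions \ref{Omega} and \ref{gamma}, so the main (minor) obstacle is purely the bookkeeping of the frame identity, not any genuine analytic difficulty.
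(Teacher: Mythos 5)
Your proof is correct and follows essentially the same route as the paper: the paper derives the transport equation $\frac{\partial b^A}{\partial\ub}=-4\Omega^2\zeta^A$ from the commutator $[L,\Lb]=\frac{\partial b^A}{\partial\ub}\frac{\partial}{\partial\th^A}$, which is precisely the $\Omega^2$-rescaled version of your $[e_3,e_4]$ computation, and then integrates in $\ub$ exactly as you do. One small inaccuracy: with the paper's conventions (stated in Section \ref{Propregsec}, $[\nab_3,\nab_4]f=-2\omega\nab_3 f+2\omegab\nab_4 f+4\zeta\cdot\nab f$) your frame identity should read $[e_3,e_4]=-2\omega e_3+2\omegab e_4+4\zeta^A e_A$, not $2\omega e_3-2\omegab e_4+4\zeta^A e_A$; this sign discrepancy is harmless here since only the magnitude of the $\zeta$ (equivalently $\eta-\etab$) term enters the estimate.
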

\begin{proof}
$b^A$ satisfies the equation
$$\frac{\partial b^A}{\partial \ub}=-4\Omega^2\zeta^A.$$
This can be derived from 
$$[L,\Lb]=\frac{\partial b^A}{\partial \ub}\frac{\partial}{\partial \th^A}.$$
Now, integrating and using Proposition \ref{eqnorm} gives the result.
\end{proof}

\subsection{Estimates for Transport Equations}\label{transportsec}
The estimates for the Ricci coefficients and the null curvature components are derived from the null structure equations and the null Bianchi equations respectively. In order to use the equations, we need a way to obtain estimates from the null transport type equations. Such estimates require the boundedness of $\trch$ and $\trchb$, which is consistent with our bootstrap assumption (\ref{BA1}). Below, we state two Propositions which provide $L^p$ estimates for general quantities satisfying transport equations either in the $e_3$ or $e_4$ direction.
\begin{proposition}\label{transport}
There exists $\epsilon_0=\epsilon_0(\Delta_0)$ such that for all $\epsilon \leq \epsilon_0$ and for every $2\leq p<\infty$, we have
\[
 ||\phi||_{L^p(S_{u,\ub})}\leq C(||\phi||_{L^p(S_{u,\ub'})}+\int_{\ub'}^{\ub} ||\nabla_4\phi||_{L^p(S_{u,\ub''})}d{\ub''}),
\]
\[
 ||\phi||_{L^p(S_{u,\ub})}\leq C(||\phi||_{L^p(S_{u',\ub})}+\int_{u'}^{u} ||\nabla_3\phi||_{L^p(S_{u'',\ub})}d{u''}).
\]
\end{proposition}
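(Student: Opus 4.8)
The plan is to integrate the transport equation along the appropriate null direction and absorb the coefficient terms via Gr\"onwall. I focus on the $\nabla_4$ statement; the $\nabla_3$ statement is identical after exchanging the roles of $u$ and $\ub$ and of $e_4$ and $e_3$. First I would reduce to a scalar inequality for $f(\ub):=||\phi||_{L^p(S_{u,\ub})}^p$, holding $u$ fixed. Differentiating under the integral sign and using the fact (from Section~\ref{coordinates}) that $e_4=\Omega^{-1}\partial_{\ub}$, so that along the integral curves of $e_4$ the coordinates $\th^A$ are preserved, one gets
$$\frac{d}{d\ub}\int_{S_{u,\ub}} |\phi|_\gamma^p = \int_{S_{u,\ub}}\Big( p\,|\phi|_\gamma^{p-2}\langle\phi,\Omega\,\nabla_4\phi\rangle_\gamma + |\phi|_\gamma^p\,(\Omega\,\mathrm{tr}\chi - \text{correction from }\nabla_4\gamma)\Big).$$
Here the key structural point is that $\Ls_L\gamma=2\Omega\chi$ (the first variation formula used in Proposition~\ref{gamma}), so the derivative of the volume form $\sqrt{\det\gamma}$ contributes $\Omega\,\mathrm{tr}\chi$, while the derivative of the metric contractions implicit in $|\phi|_\gamma^p$ contributes terms bounded pointwise by $|\chi|\,|\phi|_\gamma^p$. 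Both are controlled in $L^\infty$ by the bootstrap assumption (\ref{BA1}) and Proposition~\ref{Omega}: $|\Omega|\le 2$ and $|\mathrm{tr}\chi|,|\chi|\le C\Delta_0$ pointwise on each $S_{u,\ub}$.

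The next step is to estimate the cross term by H\"older's inequality on $S_{u,\ub}$: $\int |\phi|_\gamma^{p-1}|\nabla_4\phi|_\gamma \le ||\phi||_{L^p(S_{u,\ub})}^{p-1}\,||\nabla_4\phi||_{L^p(S_{u,\ub})}$. Combining, one obtains
$$\frac{d}{d\ub} f(\ub) \le C\Delta_0\, f(\ub) + C\, f(\ub)^{(p-1)/p}\, ||\nabla_4\phi||_{L^p(S_{u,\ub})}^{\phantom{p}}.$$
Writing $g(\ub):=f(\ub)^{1/p}=||\phi||_{L^p(S_{u,\ub})}$, this becomes $\frac{d}{d\ub} g \le \tfrac{C\Delta_0}{p} g + C\,||\nabla_4\phi||_{L^p(S_{u,\ub})}$, a linear differential inequality. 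Integrating from $\ub'$ to $\ub$ and applying Gr\"onwall gives
$$g(\ub) \le e^{C\Delta_0(\ub-\ub')/p}\Big(g(\ub') + C\int_{\ub'}^{\ub}||\nabla_4\phi||_{L^p(S_{u,\ub''})}\,d\ub''\Big),$$
and since $|\ub-\ub'|\le\epsilon$, choosing $\epsilon_0=\epsilon_0(\Delta_0)$ small makes $e^{C\Delta_0\epsilon/p}\le 2$ (uniformly in $p\ge 2$, since the exponent only decreases with $p$), which yields the claimed inequality with a constant $C$ independent of $p$.

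I expect the main technical obstacle to be the careful bookkeeping of the two sources of metric-dependence when differentiating $\int_{S_{u,\ub}}|\phi|_\gamma^p$ along $e_4$: one must track both the evolution of $\sqrt{\det\gamma}$ and the evolution of $\gamma^{-1}$ appearing in the contractions $\langle\phi,\phi\rangle_\gamma$, and verify that both are pointwise controlled by the Ricci coefficient $\chi$, which is the content of the bootstrap assumption. A minor subtlety is that the computation is cleanest done patchwise using the partition of unity $p_U$ from the definition of $\int_{S_{u,\ub}}$; since the $\th^A$ are frozen along the flow of $e_4$, the patches $U_{u,\ub}$ and the functions $p_U$ can be taken $\ub$-independent, so no boundary terms appear. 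Everything else is a routine H\"older/Gr\"onwall argument, and the uniformity of $C$ in $p$ follows because the only $p$-dependence enters with a favorable sign.
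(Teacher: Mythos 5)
Your proposal follows the same route as the paper: differentiate the $L^p(S_{u,\ub})$ norm along $e_4$, apply H\"older on the sphere to the cross term, control the coefficient terms via the bootstrap bound on $\Omega$ and $\trch$, and conclude by Gr\"onwall/smallness of $\epsilon$. This is exactly the paper's argument, which records the identity
\[
\|\phi\|^p_{L^p(S_{u,\ub})}=\|\phi\|^p_{L^p(S_{u,\ub'})}+\int_{\ub'}^{\ub}\int_{S_{u,\ub''}} p\,|\phi|^{p-2}\,\Omega\Big(\langle\phi,\nabla_4\phi\rangle_\gamma+ \tfrac{1}{p}\,\trch\,|\phi|^2_{\gamma}\Big)d\ub''
\]
and then applies H\"older and the $L^\infty$ control of $\Omega$, $\trch$ from Proposition~\ref{Omega} and (\ref{BA1}).

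One correction to your computation, harmless to the final bound but worth being precise about: you insert an additional ``correction from $\nabla_4\gamma$'' into the integrand and then estimate it by $|\chi|\,|\phi|_\gamma^p$. Since $\nabla_4$ here is the projected \emph{covariant} derivative on $S_{u,\ub}$ (not the coordinate derivative $\partial_\ub$ acting on components), it is metric compatible: $\nabla_4\gamma=0$. Hence $e_4|\phi|^2_\gamma = \nabla_4|\phi|^2_\gamma = 2\langle\phi,\nabla_4\phi\rangle_\gamma$ with no remainder; the terms from $\partial_\ub(\gamma^{-1})$ cancel exactly against the Christoffel-type pieces relating $\partial_\ub\phi_{A_1\cdots A_r}$ to $\nabla_4\phi_{A_1\cdots A_r}$. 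The only coefficient term is the $\Omega\,\trch$ from the volume form. Because you bound the (actually vanishing) correction by a quantity the bootstrap controls anyway, your Gr\"onwall argument and the claimed $p$-uniformity of the constant still go through; the observations about freezing the patches/partition of unity along the flow of $e_4$ and about the $1/p$ factor being favorable are correct.
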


\begin{proof}

The following identity holds for any scalar $f$:
\[
 \frac{d}{d\ub}\int_{\mathcal S_{u,\ub}} f=\int_{\mathcal S_{u,\ub}} \left(\frac{df}{d\ub}+\Omega \trch f\right)=\int_{\mathcal S_{u,\ub}} \Omega\left(e_4(f)+ \trch f\right).
\]
Similarly, we have
\[
 \frac{d}{du}\int_{\mathcal S_{u,\ub}} f=\int_{\mathcal S_{u,\ub}} \Omega\left(e_3(f)+ \trchb f\right).
\]
Hence, taking $f=|\phi|_{\gamma}^p$, we have
\begin{equation}\label{Lptransport}
\begin{split}
 ||\phi||^p_{L^p(S_{u,\ub})}=&||\phi||^p_{L^p(S_{u,\ub'})}+\int_{\ub'}^{\ub}\int_{S_{u,\ub''}} p|\phi|^{p-2}\Omega\left(<\phi,\nabla_4\phi>_\gamma+ \frac{1}{p}\trch |\phi|^2_{\gamma}\right)d{\ub''}\\
 ||\phi||^p_{L^p(S_{u,\ub})}=&||\phi||^p_{L^p(S_{u',\ub})}+\int_{u'}^{u}\int_{S_{u'',\ub}} p|\phi|^{p-2}\Omega\left(<\phi,\nabla_3\phi>_\gamma+ \frac{1}{p}\trchb |\phi|^2_{\gamma}\right)d{u''}
\end{split}
\end{equation}
The Proposition is proved using Cauchy-Schwarz on the sphere and the $L^\infty$ bounds for $\Omega$ and $\trch$ ($\trchb$) which are provided by Proposition \ref{Omega} and the bootstrap assumption (\ref{BA1}) respectively.
\end{proof}
The above estimates also hold for $p=\infty$:
\begin{proposition}\label{transportinfty}
There exists $\epsilon_0=\epsilon_0(\Delta_0)$ such that for all $\epsilon \leq \epsilon_0$, we have
\[
 ||\phi||_{L^\infty(S_{u,\ub})}\leq C\left(||\phi||_{L^\infty(S_{u,\ub'})}+\int_{\ub'}^{\ub} ||\nabla_4\phi||_{L^\infty(S_{u,\ub''})}d{\ub''}\right)
\]
\[
 ||\phi||_{L^\infty(S_{u,\ub})}\leq C\left(||\phi||_{L^\infty(S_{u',\ub})}+\int_{u'}^{u} ||\nabla_3\phi||_{L^\infty(S_{u'',\ub})}d{u''}\right).
\]
\end{proposition}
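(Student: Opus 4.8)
The plan is to deduce Proposition~\ref{transportinfty} by passing to the limit $p\to\infty$ in the $L^p$ estimates of Proposition~\ref{transport}, so the first task is to check that the constant $C$ there can be chosen \emph{independent of} $p$. Revisiting that proof: from (\ref{Lptransport}), using $\Omega\leq 2$ (Proposition~\ref{Omega}), $|\trch|\leq\Delta_0$ (bootstrap assumption (\ref{BA1})), Cauchy--Schwarz $\langle\phi,\nabla_4\phi\rangle_\gamma\leq|\phi|_\gamma|\nabla_4\phi|_\gamma$, and H\"older on $S_{u,\ub''}$, one obtains $\frac{d}{d\ub}\|\phi\|_{L^p(S_{u,\ub})}^p\leq 2p\|\phi\|_{L^p(S_{u,\ub})}^{p-1}\|\nabla_4\phi\|_{L^p(S_{u,\ub})}+2\Delta_0\|\phi\|_{L^p(S_{u,\ub})}^p$, hence $\frac{d}{d\ub}\|\phi\|_{L^p(S_{u,\ub})}\leq 2\|\nabla_4\phi\|_{L^p(S_{u,\ub})}+\frac{2\Delta_0}{p}\|\phi\|_{L^p(S_{u,\ub})}$. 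Gronwall on $[\ub',\ub]\subset[0,\epsilon]$ then gives the estimate with $C=2e^{2\Delta_0\epsilon}$, which is independent of $p$ (and close to $2$ for $\epsilon$ small). The analogous statement holds in the $e_3$ direction using the identity $\frac{d}{du}\int_{S_{u,\ub}}f=\int_{S_{u,\ub}}\Omega(e_3(f)+\trchb f)$ and $|\trchb|\leq\Delta_0$.

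Next I would record the relevant consequence of the metric bounds. By Propositions~\ref{gamma}, \ref{eqnorm}, and \ref{area}, $\mathrm{Area}(S_{u,\ub})\leq C$ uniformly in $(u,\ub)$. Therefore, for any continuous tensor field $\phi$ on $S_{u,\ub}$, one has on the one hand $\|\phi\|_{L^p(S_{u,\ub})}\leq \max(1,C)^{1/p}\|\phi\|_{L^\infty(S_{u,\ub})}\leq C'\|\phi\|_{L^\infty(S_{u,\ub})}$ for all $p\geq 1$, and on the other hand $\|\phi\|_{L^p(S_{u,\ub})}\to\|\phi\|_{L^\infty(S_{u,\ub})}$ as $p\to\infty$ (the standard fact on a finite-measure space).

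Now fix $u,\ub,\ub'$, apply the $p$-independent $L^p$ estimate of Proposition~\ref{transport}, and let $p\to\infty$. The left-hand side and the first term on the right converge to the corresponding $L^\infty$ norms by the previous paragraph. For the integral term $\int_{\ub'}^{\ub}\|\nabla_4\phi\|_{L^p(S_{u,\ub''})}\,d\ub''$ I would invoke dominated convergence: the integrand converges pointwise in $\ub''$ to $\|\nabla_4\phi\|_{L^\infty(S_{u,\ub''})}$ and is dominated by $C'\|\nabla_4\phi\|_{L^\infty(S_{u,\ub''})}$, which is integrable on the compact interval $[\ub',\ub]$ since in this section $\phi$ (hence $\nabla_4\phi$) is smooth. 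This yields the first inequality; the second is obtained identically in the $e_3$ direction.

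The only genuinely non-routine point is the uniformity of the constant $C$ in $p$ in Proposition~\ref{transport}; everything else is the standard $L^p\to L^\infty$ passage on spheres of uniformly bounded area. (One could instead try to integrate a transport ODE for $|\phi|_\gamma$ directly, but since $\nabla_4|\phi|_\gamma\neq|\nabla_4\phi|_\gamma$ for tensors $\phi$ one would have to absorb metric-derivative terms involving $\chi$; the limiting argument sidesteps this entirely.)
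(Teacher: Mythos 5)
Your $p\to\infty$ limiting argument is correct and self-contained: the $p$-uniformity of the constant in Proposition~\ref{transport} follows exactly as you say (dividing the differential inequality by $p\|\phi\|_{L^p}^{p-1}$ makes the $\trch$ term $O(1/p)$), the finite-area estimate from Proposition~\ref{area} gives both the dominating bound and the pointwise $L^p\to L^\infty$ convergence, and dominated convergence handles the integral term since $\phi$ is smooth at this stage. This is, however, a more roundabout route than the paper's, which simply integrates along the integral curves of $L$ and $\Lb$.

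The remark in your final parenthetical is where you go wrong, and it explains why you avoided the simpler path. You assert that integrating a transport ODE for $|\phi|_\gamma$ directly would force you to absorb metric-derivative terms involving $\chi$ because $\nabla_4|\phi|_\gamma\neq|\nabla_4\phi|_\gamma$. But the relevant fact is not equality; it is the Kato-type inequality $\bigl|\nabla_4|\phi|_\gamma\bigr|\leq|\nabla_4\phi|_\gamma$, which holds because the projected connection $\nabla_4$ \emph{is} compatible with $\gamma$: one checks directly that $(\nabla_4\gamma)_{AB}=e_4(\gamma_{AB})-\gamma(\nabla_4 e_A,e_B)-\gamma(e_A,\nabla_4 e_B)=0$ using $Dg=0$ and $g(e_3,e_B)=g(e_4,e_B)=0$. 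Hence $\nabla_4|\phi|_\gamma^2=2\langle\nabla_4\phi,\phi\rangle_\gamma$ and no $\chi$-terms appear. Writing $e_4=\Omega^{-1}\partial_{\ub}$ along the integral curve of $L$ through a fixed $\th$, one gets
\[
|\phi|_\gamma(u,\ub,\th)\leq|\phi|_\gamma(u,\ub',\th)+\int_{\ub'}^{\ub}\Omega\,|\nabla_4\phi|_\gamma(u,\ub'',\th)\,d\ub'',
\]
and taking the supremum in $\th$ together with $\Omega\leq 2$ from Proposition~\ref{Omega} gives the claim directly. This is what the paper does, and it also makes clear that the $\trch$ term (hence the bootstrap assumption) is not needed at all for the $L^\infty$ estimate, only for the finite-$p$ versions.
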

\begin{proof}
This follows simply from integrating along the integral curves of $L$ and $\Lb$, and the estimate on $\Omega$ in Proposition \ref{Omega}.
\end{proof}

\subsection{Sobolev Embedding}\label{Embedding}
Under the bootstrap assumption (\ref{BA1}), the Sobolev Embedding hold on a 2-sphere $S_{u,\ub}$. 
\begin{proposition}\label{L4}
There exists $\epsilon_0=\epsilon_0(\Delta_0)$ such that as long as $\epsilon\leq \epsilon_0$, we have
$$||\phi||_{L^4(S_{u,\ub})}\leq C\sum_{i=0}^1||\nabla^i\phi||_{L^2(S_{u,\ub})}. $$
\end{proposition}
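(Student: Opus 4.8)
The plan is to prove the standard $L^4(S)$ Sobolev inequality on the two-spheres $S_{u,\ub}$ with a constant that is \emph{uniform} in $(u,\ub)$, by transporting the inequality from the initial sphere $S_{0,0}$ (where the metric $\gamma$ is smooth, bounded and positive definite by assumption) and using the smallness of the perturbation of $\gamma$ established in Propositions~\ref{gamma} and \ref{eqnorm}. First I would fix a coordinate patch $U$ on $S_{0,0}$ together with its partition of unity $p_U$, and recall that on the fixed sphere $(S_{0,0},\gamma_0)$, or more precisely on each coordinate patch with the Euclidean metric $\delta$, the classical Gagliardo--Nirenberg--Sobolev inequality in two dimensions gives
\[
\|\phi\|_{L^4(S_{0,0},\delta)}\leq C\left(\|\phi\|_{L^2(S_{0,0},\delta)}+\|\partial\phi\|_{L^2(S_{0,0},\delta)}\right),
\]
where $\partial$ is the coordinate derivative. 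The point is then to show that all the objects appearing here — the volume form, the pointwise norm of a tensor, and the coordinate derivative versus the covariant derivative $\nabla$ — can be compared on $S_{u,\ub}$ to those on $S_{0,0}$ with constants depending only on the initial data, provided $\epsilon$ is small.

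The key steps, in order, are as follows. \emph{Step 1:} By Proposition~\ref{gamma}, on each patch $D_U$ we have $c\leq \det\gamma\leq C$ and $|\gamma_{AB}|,|(\gamma^{-1})^{AB}|\leq C$, so in particular the eigenvalues of $\gamma$ are pinched between two positive constants depending only on the initial data and $\Delta_0$. \emph{Step 2:} By Proposition~\ref{eqnorm}, the intrinsic $L^p(S_{u,\ub})$ norm of a tensor $\phi$ is equivalent, with constants depending only on the initial data, to $\sum_{A_i}\big(\iint |\phi_{A_1\dots A_r}|^p\sqrt{\det\gamma}\,d\th^1 d\th^2\big)^{1/p}$, and by Step~1 the weight $\sqrt{\det\gamma}$ is comparable to $1$; hence all the $L^p(S_{u,\ub})$ norms are equivalent to the flat coordinate $L^p$ norms of the components, uniformly in $(u,\ub)$. \emph{Step 3:} One must pass from $\nabla\phi$ to $\partial\phi$: write $\nabla_A\phi_{B\dots}=\partial_A\phi_{B\dots}-\Gamma_{A\cdot}^{\cdot}\phi$ with the Christoffel symbols $\Gamma$ of $\gamma$; these are controlled by $\gamma^{-1}$ and $\partial\gamma$. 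The needed bound on $\partial\gamma$ follows by integrating the first variation equations $\partial_{\ub}\gamma_{AB}=2\Omega\chi_{AB}$ (and the analogous $\partial_u$ equation), together with angular differentiation and the bootstrap assumption (A), to get $|\partial\gamma\restriction_{S_{u,\ub}}|\leq C$ on $D_U$; thus $|\partial\phi|\leq |\nabla\phi|+C|\phi|$ pointwise, so the coordinate-derivative $L^2$ norm is bounded by $\|\nabla\phi\|_{L^2(S_{u,\ub})}+C\|\phi\|_{L^2(S_{u,\ub})}$. \emph{Step 4:} Apply the flat two-dimensional Sobolev inequality patchwise to $p_U^{1/4}\phi_{A_1\dots A_r}$, sum over the finitely many patches $U$ using the partition of unity, and translate back via Steps~2 and 3 to obtain $\|\phi\|_{L^4(S_{u,\ub})}\leq C\sum_{i=0}^1\|\nabla^i\phi\|_{L^2(S_{u,\ub})}$ with $C$ depending only on the initial data (and $\Delta_0$), which proves the Proposition.

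The main obstacle, and the only genuinely nontrivial point, is \emph{Step 3}: controlling the coordinate derivatives of $\gamma$ (equivalently, the Christoffel symbols) uniformly, so that the covariant Sobolev inequality is not corrupted by a metric that could in principle degenerate in its derivatives even while staying comparable to $\gamma_0$ pointwise. This is where one crucially uses the transport equation $\Ls_L\gamma=2\Omega\chi$ (and its $\Lb$-analogue), differentiated in the angular variables, to propagate the regularity of $\gamma$ from the initial hypersurfaces into the interior; the smallness of $\epsilon$ relative to $\Delta_0$ is what keeps these derivative bounds close to their initial values. Everything else — the flat Sobolev inequality, the partition of unity argument, and the norm equivalences — is standard once the uniform metric and Christoffel bounds are in hand.
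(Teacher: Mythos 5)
Your Steps 1, 2 and 4 are fine, and you have correctly isolated the crux of the matter in Step 3 --- but that is precisely where the proof breaks down. The bootstrap assumption (\ref{BA1}) only gives $\mathcal O_{0,\infty}\leq\Delta_0$, i.e.\ pointwise bounds on the Ricci coefficients $\psi$ themselves (including $\chi$), and says nothing about their angular derivatives. When you differentiate the first variation equation $\partial_{\ub}\gamma_{AB}=2\Omega\chi_{AB}$ in $\th^C$ you produce $\partial_C(\Omega\chi_{AB})$ on the right-hand side, and the term $\partial_C\chi_{AB}$ (equivalently $\nabla\chi$) is \emph{not} controlled pointwise by (\ref{BA1}). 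In this paper $\nabla\chi$ is only estimated later (and only in $L^2(S)$ or $L^4(S)$ norms) via Proposition~\ref{Ricci}, whose proof itself invokes the Sobolev embedding of Proposition~\ref{L4} --- so your route is circular. In fact the a priori estimates in the paper never produce an $L^\infty$ bound on the Christoffel symbols of $\gamma$ on the interior spheres; compare Propositions~\ref{Gammap} and \ref{Kp}, where $\Gamma$ and $K$ are only controlled in $L^4(S)$ and $L^2(S)$ respectively.

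The paper's proof is designed to avoid exactly this difficulty. Having established the \emph{scalar} Sobolev inequality (which requires only the pointwise upper and lower bounds on $\gamma$, $\gamma^{-1}$, and $\det\gamma$ from Proposition~\ref{gamma}, with no Christoffel symbols involved since $|\nabla f|_\gamma^2=(\gamma^{-1})^{AB}\partial_Af\,\partial_Bf$ for scalars), one reduces the tensor case to the scalar case via the regularized modulus $f=\sqrt{|\phi|_\gamma^2+\delta^2}$. The Kato-type pointwise inequality $|\nabla f|_\gamma=\big|\langle\phi,\nabla\phi\rangle_\gamma\big|/\sqrt{|\phi|_\gamma^2+\delta^2}\leq|\nabla\phi|_\gamma$ then gives the result for tensors upon sending $\delta\to 0$, without ever estimating $\partial\gamma$. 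You should adopt this reduction; your Steps 1, 2 and 4 then survive unchanged, applied to the scalar $f$ rather than to the components of $\phi$.
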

\begin{proof}
We first prove this for scalars. Given our coordinate system, the desired Sobolev Embedding follows from standard Sobolev Embedding Theorems and the lower and upper bound of the volume form. Thus, the proposition holds for scalars. Now, for $\phi$ being a tensor, let $f=\sqrt{|\phi|_{\gamma}^2+\delta^2}$. Then
$$||f||_{L^4(S_{u,\ub})}\leq C\left(||f||_{L^2(S_{u,\ub})}+||\frac{<\phi,\nabla\phi>_{\gamma}}{\sqrt{|\phi|_{\gamma}^2+\delta^2}}||_{L^2(S_{u,\ub})}\right)\leq C\left(||f||_{L^2(S_{u,\ub})}+||\nabla\phi||_{L^2(S_{u,\ub})}\right).$$
The Proposition can be achieved by sending $\delta\to 0$.
\end{proof}
We can also prove the Sobolev Embedding Theorem for the $L^\infty$ norm: 
\begin{proposition}\label{Linfty}
There exists $\epsilon_0=\epsilon_0(\Delta_0)$ such that as long as $\epsilon\leq \epsilon_0$, we have
$$||\phi||_{L^\infty(S_{u,\ub})}\leq C\left(||\phi||_{L^2(S_{u,\ub})}+||\nabla\phi||_{L^4(S_{u,\ub})}\right). $$
As a consequence,
$$||\phi||_{L^\infty(S_{u,\ub})}\leq C\sum_{i=0}^2||\nabla^i\phi||_{L^2(S_{u,\ub})}. $$
\end{proposition}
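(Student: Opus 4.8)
The plan is to mirror the proof of Proposition~\ref{L4}: first establish the inequality for scalar functions, then pass to tensor fields via the regularization $f=\sqrt{|\phi|_\gamma^2+\delta^2}$, and finally deduce the stated corollary by composing with Proposition~\ref{L4}.

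For a scalar $f$ on $S_{u,\ub}$, I would work in the coordinate patches fixed in Section~\ref{coordinates} (transported along the foliation) with a subordinate partition of unity, and apply the classical two-dimensional Gagliardo--Nirenberg--Sobolev inequality on $\mathbb{R}^2$. Since $4>2=\dim S_{u,\ub}$, the scaling-invariant exponent is $\theta=\frac23$, so in coordinates $||f||_{L^\infty}\lesssim ||f||_{L^2}^{1/3}||\partial_\th f||_{L^4}^{2/3}+||f||_{L^2}$, hence $||f||_{L^\infty}\lesssim ||f||_{L^2}+||\partial_\th f||_{L^4}$ by Young's inequality. By Proposition~\ref{gamma} the components $\gamma_{AB}$, $(\gamma^{-1})^{AB}$ are bounded and $\det\gamma$ is bounded above and below, uniformly in $(u,\ub)$ for $\epsilon\leq\epsilon_0(\Delta_0)$; therefore the coordinate $L^p$ norms are comparable to the geometric ones (Proposition~\ref{eqnorm}) and $|\partial_\th f|\sim|\nabla f|_\gamma$. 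This yields $||f||_{L^\infty(S_{u,\ub})}\leq C\left(||f||_{L^2(S_{u,\ub})}+||\nabla f||_{L^4(S_{u,\ub})}\right)$ with $C$ depending only on the initial data and $\Delta_0$.

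For a tensor field $\phi$, set $f=\sqrt{|\phi|_\gamma^2+\delta^2}$ as in Proposition~\ref{L4}. Then $f$ is Lipschitz, $|f|\leq|\phi|_\gamma+\delta$, and pointwise $|\nabla f|=|<\phi,\nabla\phi>_\gamma|/\sqrt{|\phi|_\gamma^2+\delta^2}\leq|\nabla\phi|_\gamma$. Applying the scalar estimate to $f$ and letting $\delta\to 0$ gives the first inequality of the Proposition. The corollary then follows immediately: applying the first inequality to $\phi$ and Proposition~\ref{L4} to $\nabla\phi$ gives $||\phi||_{L^\infty}\leq C\left(||\phi||_{L^2}+||\nabla\phi||_{L^4}\right)\leq C\left(||\phi||_{L^2}+||\nabla\phi||_{L^2}+||\nabla^2\phi||_{L^2}\right)=C\sum_{i=0}^{2}||\nabla^i\phi||_{L^2}$.

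The only delicate point --- and the main obstacle, such as it is --- is the uniformity of the Sobolev constant: it must not degenerate as $(u,\ub)$ ranges over $D_{u_*,\ub_*}$. This is precisely what the metric bounds of Proposition~\ref{gamma} supply, since the coordinate charts are fixed once and for all and $\epsilon$ is chosen small (depending on $\Delta_0$) so that $\gamma$ remains a uniformly controlled, uniformly nondegenerate perturbation of its initial value. Everything else is a routine reduction of the same structure as Proposition~\ref{L4}.
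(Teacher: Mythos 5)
Your proposal is correct and follows essentially the same route as the paper: the scalar case by the coordinate Sobolev embedding on the fixed charts using the uniform bounds on $\gamma$ from Proposition~\ref{gamma}, the tensor case via the regularization $f=\sqrt{|\phi|_\gamma^2+\delta^2}$ and the pointwise bound $|\nabla f|\leq|\nabla\phi|_\gamma$, and the corollary by chaining with Proposition~\ref{L4}. The only difference is that you spell out the Gagliardo--Nirenberg exponent $\theta=\frac23$, which the paper leaves implicit by reference to ``standard Sobolev Embedding Theorems.''
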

\begin{proof}
The first statement follows from coordinate considerations as in Proposition \ref{L4}. The second statement follows from applying the first and Proposition \ref{L4}.
\end{proof}

\subsection{Commutation Formulae}\label{commutation}
We have the following formula from \cite{KN}:
\begin{proposition}
The commutator $[\nabla_4,\nabla]$ acting on an $(0,r)$ S-tensor is given by
\begin{equation*}
 \begin{split}
[\nabla_4,\nabla_B]\phi_{A_1...A_r}=&[D_4,D_B]\phi_{A_1...A_r}+(\nabla_B\log\Omega)\nabla_4\phi_{A_1...A_r}-(\gamma^{-1})^{CD}\chi_{BD}\nabla_C\phi_{A_1...A_r} \\
&-\sum_{i=1}^r (\gamma^{-1})^{CD}\chi_{BD}\etab_{A_i}\phi_{A_1...\hat{A_i}C...A_r}+\sum_{i=1}^r (\gamma^{-1})^{CD}\chi_{A_iB}\etab_{D}\phi_{A_1...\hat{A_i}C...A_r}.
 \end{split}
\end{equation*}
Similarly, the commutator $[\nabla_3,\nabla]$ acting on an $(0,r)$ S-tensor is given by
\begin{equation*}
 \begin{split}
[\nabla_3,\nabla_B]\phi_{A_1...A_r}=&[D_3,D_B]\phi_{A_1...A_r}+(\nabla_B\log\Omega)\nabla_3\phi_{A_1...A_r}-(\gamma^{-1})^{CD}\chib_{BD}\nabla_C\phi_{A_1...A_r} \\
&-\sum_{i=1}^r (\gamma^{-1})^{CD}\chib_{BD}\eta_{A_i}\phi_{A_1...\hat{A_i}C...A_r}+\sum_{i=1}^r (\gamma^{-1})^{CD}\chib_{A_iB}\eta_{D}\phi_{A_1...\hat{A_i}C...A_r}.
 \end{split}
\end{equation*}
\end{proposition}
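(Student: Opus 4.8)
The plan is to verify the identity by a direct computation in the null frame: rewrite every $\nabla$--derivative of the $S$--tensor $\phi$ in terms of the ambient covariant derivative $D$, and then collect the correction terms coming from the two projections onto $TS_{u,\ub}$ hidden in the definitions of $\nabla$ and $\nabla_4$. (As the excerpt states, the formula is quoted from \cite{KN}; what follows is the route by which one would check it.) First I would record how $D$ acts on the frame: using $g(e_3,e_4)=-2$ together with the definitions of the Ricci coefficients one gets $D_A e_4=\chi_A{}^C e_C-\zeta_A e_4$, $D_A e_3=\chib_A{}^C e_C+\zeta_A e_3$, $D_4 e_A=\nab_4 e_A+\etab_A e_4$, $D_4 e_4=-2\omega e_4$, $D_4 e_3=2\etab^C e_C+2\omega e_3$, the Gauss formula $D_A e_B=\nab_A e_B+\tfrac12\chi_{AB}e_3+\tfrac12\chib_{AB}e_4$, and the $3\leftrightarrow 4$ analogues. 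A first consequence to record separately is that for an $S$--tensor $\phi$ the fully tangential components of $D_B\phi$ and of $\nab\phi$ coincide, since the normal parts of $D_B e_{A_i}$ are annihilated when contracted against $\phi$; thus all genuine discrepancies will come from the $e_4$--derivative. I would also use the identities $\eta_A=\zeta_A+\nab_A\log\Omega$ and $\etab_A=-\zeta_A+\nab_A\log\Omega$ recorded in the text, in particular $\zeta_A+\etab_A=\nab_A\log\Omega$.

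Next I would compute the frame commutator $[e_4,e_B]=D_4 e_B-D_B e_4=\nab_4 e_B-\chi_B{}^C e_C+(\nab_B\log\Omega)e_4$. Its $e_4$--component $\nab_B\log\Omega$ is exactly the source of the term $(\nab_B\log\Omega)\nab_4\phi_{A_1\dots A_r}$, and its tangential part $-\chi_B{}^C e_C$ produces $-(\gamma^{-1})^{CD}\chi_{BD}\nab_C\phi_{A_1\dots A_r}$. Then I would expand $\nab_4\nab_B\phi$ and $\nab_B\nab_4\phi$ from the definitions, writing each as an $e_4$-- or $e_B$--derivative of components minus the connection terms and passing to $D$ at every step. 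Subtracting, the ambient operator commutator $[D_4,D_B]\phi_{A_1\dots A_r}$ appears as the leading term; the contributions of $[e_4,e_B]$ give the $\nab\log\Omega$ and $\chi\cdot\nab\phi$ terms just described; and the remaining terms arise when a tensor slot is differentiated --- the ``extra'' piece $\etab_{A_i}e_4$ in $D_4 e_{A_i}$ is invisible to $\phi$ directly, but once the resulting $e_4$ is hit by $\nab_B$ its tangential part $\chi_B{}^C e_C$ reappears, producing exactly the two sums $-\sum_i(\gamma^{-1})^{CD}\chi_{BD}\etab_{A_i}\phi_{A_1\dots\hat{A_i}C\dots A_r}$ and $+\sum_i(\gamma^{-1})^{CD}\chi_{A_iB}\etab_D\phi_{A_1\dots\hat{A_i}C\dots A_r}$, according to how the free index pairs with the contracted one. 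The $[\nab_3,\nab_B]$ identity then follows verbatim with $e_4\to e_3$, $\chi\to\chib$, $\etab\to\eta$.

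I expect the only real difficulty to be the bookkeeping: tracking every normal component generated by the two projections, getting the signs and the factor $\tfrac12$ from $g(e_3,e_4)=-2$ right, and keeping straight the interchange of $\eta$ and $\etab$ between the two formulas. Fixing at the outset the convention for $\nab_4\nab_B\phi$ --- namely the $e_4$--covariant derivative of the $(0,r+1)$--tensor $\nab\phi$ contracted in the $B$--slot, rather than the naive composition of two operators --- removes the one genuine ambiguity, after which everything reduces to a routine, if lengthy, contraction.
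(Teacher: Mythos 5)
The paper does not actually prove this proposition; it simply quotes it from \cite{KN}. Your plan to verify it by a direct frame computation is the natural one and would succeed; the frame formulas you list are all correct (in particular $[e_4,e_B]=\nab_4 e_B-\chi_B{}^C e_C+(\nab_B\log\Omega)e_4$ with $\etab_B+\zeta_B=\nab_B\log\Omega$ giving the $\nab\log\Omega$ and $\chi\cdot\nab\phi$ terms, and $[D_4,D_B]\phi$ appearing as the ambient curvature piece).

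One point in your account of the two remaining sums is slightly off, though. You attribute \emph{both} sums to the normal piece $\etab_{A_i}e_4$ of $D_4 e_{A_i}$ ``once the resulting $e_4$ is hit by $\nab_B$.'' That mechanism correctly produces the term $-\sum_i\chi_B{}^C\etab_{A_i}\phi_{\ldots\hat{A_i}C\ldots}$: in $\nab_4\nab_B\phi$ the $\etab_{A_i}e_4$ normal part reinserts $(D\phi)(e_B,\ldots,e_4,\ldots)$, and the only surviving contribution there is $-\phi(\ldots,D_Be_4,\ldots)=-\chi_B{}^C\phi(\ldots,e_C,\ldots)$. But the second sum, $+\sum_i\chi_{A_iB}\etab^C\phi_{\ldots\hat{A_i}C\ldots}$, arises by a genuinely different mechanism, on the $\nab_B\nab_4\phi$ side: it is the $\tfrac12\chi_{BA_i}e_3$ normal part of $D_Be_{A_i}$ (from the Gauss formula you recorded) that feeds in $(D\phi)(e_4,\ldots,e_3,\ldots)$, whose only surviving term is $-\phi(\ldots,D_4e_3,\ldots)=-2\etab^C\phi(\ldots,e_C,\ldots)$; the factor $\tfrac12$ and the $2$ cancel. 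Since you already wrote down both the Gauss formula for $D_A e_B$ and $D_4 e_3=2\etab^Ce_C+2\omega e_3$, you have all the ingredients and the bookkeeping would close correctly; just be aware that the second sum comes from the $e_3$-normal part of the spatial covariant derivative rather than from $D_4 e_{A_i}$.
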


By induction, we get the following schematic formula for repeated commutations:
\begin{proposition}\label{commuteeqn}
Suppose $\nabla_4\phi=F_0$. Let $\nabla_4\nabla^i\phi=F_i$.
Then
\begin{equation*}
\begin{split}
F_i\sim &\sum_{i_1+i_2+i_3=i}\nabla^{i_1}(\eta,\underline{\eta})^{i_2}\nabla^{i_3} F_0+\sum_{i_1+i_2+i_3+i_4=i}\nabla^{i_1}(\eta,\underline{\eta})^{i_2}\nabla^{i_3}\chi\nabla^{i_4} \phi\\
&+\sum_{i_1+i_2+i_3+i_4=i-1} \nabla^{i_1}(\eta,\underline{\eta})^{i_2}\nabla^{i_3}\beta\nabla^{i_4} \phi.
\end{split}
\end{equation*}
where we have applied our schematic notation and by $\nabla^{i_1}(\eta,\underline{\eta})^{i_2}$ we mean the sum of all terms which is a product of $i_2$ factors, each factor being $\nabla^j \eta$ or $\nabla^j\underline{\eta}$ for some $j$ and that the sum of all $j$'s is $i_1$, i.e., $$\nabla^{i_1}(\eta,\underline{\eta})^{i_2}=\displaystyle\sum_{j_1+...+j_{i_2}=i_1}\sum_{\psi_1,...,\psi_{i_2}\in\{\eta,\etab\}}\nabla^{j_1}\psi_1...\nabla^{j_{i_2}}\psi_{i_2}.$$ Similarly, suppose $\nabla_3\phi=G_{0}$. Let $\nabla_3\nabla^i\phi=G_{i}$.
Then
\begin{equation*}
\begin{split}
G_{i}\sim &\sum_{i_1+i_2+i_3=i}\nabla^{i_1}(\eta,\underline{\eta})^{i_2}\nabla^{i_3} G_{0}+\sum_{i_1+i_2+i_3+i_4=i}\nabla^{i_1}(\eta,\underline{\eta})^{i_2}\nabla^{i_3}\underline{\chi}\nabla^{i_4} \phi\\
&+\sum_{i_1+i_2+i_3+i_4=i-1} \nabla^{i_1}(\eta,\underline{\eta})^{i_2}\nabla^{i_3}\underline{\beta}\nabla^{i_4} \phi.
\end{split}
\end{equation*}

\end{proposition}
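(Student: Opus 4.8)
The plan is to argue by induction on $i$, with the single-commutator identities of the preceding Proposition serving as both the base case and the inductive step.

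For the base case $i=1$ I would read the claim directly off the formula for $[\nabla_4,\nabla_B]\phi_{A_1...A_r}$, after recording three points. First, since $\eta_A=\zeta_A+\nabla_A\log\Omega$ and $\etab_A=-\zeta_A+\nabla_A\log\Omega$ one has $\nabla_A\log\Omega=\frac12(\eta_A+\etab_A)$, so $(\nabla_B\log\Omega)\nabla_4\phi$ is of the first schematic type with $(i_1,i_2,i_3)=(0,1,0)$, while $\nabla_B F_0$ is of the first type with $(0,0,1)$. Second, the terms $-(\gamma^{-1})^{CD}\chi_{BD}\nabla_C\phi$ and the two $\chi\,\etab\,\phi$-terms are schematically $\chi\nabla\phi$ and $(\eta,\etab)\,\chi\,\phi$, i.e. of the second type at total order $1$. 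Third — the one structural input — the genuine spacetime commutator $[D_4,D_B]\phi_{A_1...A_r}$ is a contraction of the Riemann tensor against $\phi$, and because $\phi$ is tangent to $S_{u,\ub}$ the only surviving contractions are against components $R(e_4,e_B,e_{A_i},e_C)$ with $C\in\{1,2\}$; by the pair symmetry and $R_{\mu\nu}=0$ these are expressible through $\beta$, so $[D_4,D_B]\phi\sim\beta\,\phi$, which is of the third type at order $0=i-1$. (The components $R(e_4,e_B,e_{A_i},e_4)$ that would introduce $\alpha$ are multiplied by the vanishing entry $\phi_{A_1...\,4\,...A_r}$ and drop out; this is exactly why $\alpha$ never appears.) The case $i=0$ being trivial, this settles the base.

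For the inductive step I would assume the formula for $F_{i-1}$, write $\nabla^i\phi=\nabla(\nabla^{i-1}\phi)$, and commute to get $\nabla_4\nabla^i\phi=\nabla F_{i-1}+[\nabla_4,\nabla](\nabla^{i-1}\phi)$. On the first summand I would apply $\nabla$ term by term to the inductive expression for $F_{i-1}$; since $\nabla\gamma=0$ and $\nabla\eps=0$, $\nabla$ commutes with all contractions, and the Leibniz rule keeps each term inside its schematic family while raising the total order from $i-1$ to $i$. On the second summand I would apply the single-commutator identity with $\phi$ replaced by the $S$-tensor $\nabla^{i-1}\phi$: the $[D_4,D_B]$ piece contributes $\beta\cdot\nabla^{i-1}\phi$ (third family, order $i-1$); the term $(\nabla_B\log\Omega)\nabla_4(\nabla^{i-1}\phi)=(\eta,\etab)\cdot F_{i-1}$, into which I substitute the inductive form of $F_{i-1}$, produces the three families with one extra derivative-free $(\eta,\etab)$-factor and hence total order $i$; and the $\chi$-terms contribute $\chi\,\nabla^i\phi$ and $(\eta,\etab)\,\chi\,\nabla^{i-1}\phi$, both of the second family at order $i$. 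Every resulting term then lies in one of the three asserted families at order $i$, closing the induction. The $\nabla_3$ statement would follow verbatim under the replacements $e_4\leftrightarrow e_3$, $\chi\leftrightarrow\chib$, $\beta\leftrightarrow\betab$, using the second commutator identity (with $[D_3,D_B]\phi\sim\betab\,\phi$ and $\eta$ in place of $\etab$ in the lower-order terms).

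I do not expect a genuine obstacle: there is no analytic content, and the argument is an exercise in propagating the schematic conventions. The only thing demanding care is the bookkeeping — checking that each elementary operation (applying $\nabla$, multiplying by $\eta$ or $\etab$, inserting a $\chi$ or $\beta$ factor) preserves membership in the three families and increments the total-order index correctly — together with the structural observation that contracting $[D_4,D_B]$ (resp. $[D_3,D_B]$) against an $S$-tangent tensor produces only $\beta$ (resp. $\betab$), never $\alpha$ (resp. $\alphab$).
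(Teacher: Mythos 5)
Your proof is correct and follows essentially the same route as the paper: induction on $i$, applying the single-commutator identity from the preceding Proposition to $\nabla^{i-1}\phi$, and substituting the inductive expression for $F_{i-1}$ into the term $(\nabla\log\Omega)\nabla_4\nabla^{i-1}\phi \sim (\eta,\etab) F_{i-1}$. The one point you spell out that the paper leaves implicit is that $[D_4,D_B]$ contracted against an $S$-tangent tensor yields only $\beta$ (resp. $\betab$) and never $\alpha$ (resp. $\alphab$), which is indeed the structural observation underlying the schematic form and worth having on record.
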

\begin{proof}
The proof is by induction. We will prove it for the first statement. The proof of second one is analogous. The formula obviously holds for $i=0$. Assume that the statement is true for $i<i_0$.
\begin{equation*}
\begin{split}
F_{i_0}=&[\nabla_4,\nabla]\nabla^{i_0-1}\phi+\nabla F_{i_0-1}\\
\sim &\chi\nabla^{i_0}\phi+(\eta,\underline{\eta})\nabla_4\nabla^{i_0-1}\phi+\beta\nabla^{i_0-1}\phi+\chi(\eta,\underline{\eta})\nabla^{i_0-1}\phi\\
&+\sum_{i_1+i_2+i_3=i_0}\nabla^{i_1}(\eta,\underline{\eta})^{i_2}\nabla^{i_3} F_{0}\\
&+\sum_{i_1+i_2+i_3+i_4=i_0}\nabla^{i_1}(\eta,\underline{\eta})^{i_2}\nabla^{i_3}\chi\nabla^{i_4} \phi\\
&+\sum_{i_1+i_2+i_3+i_4=i_0-1} \nabla^{i_1}(\eta,\underline{\eta})^{i_2}\nabla^{i_3}\beta\nabla^{i_4} \phi.
\end{split}
\end{equation*}
First notice that the first, third and fourth terms are acceptable. Then plug in the formula for $F_{i_0-1}=\nabla_4\nabla^{i_0-1}\phi$ and get the result.
\end{proof}
The following further simplified version is useful for our estimates in the next section:
\begin{proposition}
Suppose $\nabla_4\phi=F_0$. Let $\nabla_4\nabla^i\phi=F_i$.
Then
\begin{equation*}
\begin{split}
F_i\sim &\sum_{i_1+i_2+i_3=i}\nabla^{i_1}\psi^{i_2}\nabla^{i_3} F_0+\sum_{i_1+i_2+i_3+i_4=i}\nabla^{i_1}\psi^{i_2}\nabla^{i_3}\chi\nabla^{i_4} \phi.\\
\end{split}
\end{equation*}
Similarly, suppose $\nabla_3\phi=G_{0}$. Let $\nabla_3\nabla^i\phi=G_{i}$.
Then
\begin{equation*}
\begin{split}
G_{i}\sim &\sum_{i_1+i_2+i_3=i}\nabla^{i_1}\psi^{i_2}\nabla^{i_3} G_{0}+\sum_{i_1+i_2+i_3+i_4=i}\nabla^{i_1}\psi^{i_2}\nabla^{i_3}\underline{\chi}\nabla^{i_4} \phi.
\end{split}
\end{equation*}
\end{proposition}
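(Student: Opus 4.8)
The plan is to obtain this as an immediate corollary of Proposition \ref{commuteeqn}, the only additional ingredient being the Codazzi equations. Take the $\nabla_4$ statement first. In the formula of Proposition \ref{commuteeqn} for $F_i=\nabla_4\nabla^i\phi$, the first sum $\sum_{i_1+i_2+i_3=i}\nabla^{i_1}(\eta,\etab)^{i_2}\nabla^{i_3}F_0$ and the second sum $\sum_{i_1+i_2+i_3+i_4=i}\nabla^{i_1}(\eta,\etab)^{i_2}\nabla^{i_3}\chi\nabla^{i_4}\phi$ are already of the two forms claimed in the statement, once one relabels the factors $(\eta,\etab)$ as the more general Ricci coefficient $\psi$, which only enlarges the class of admissible terms. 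So the only work is to absorb the third sum $\sum_{i_1+i_2+i_3+i_4=i-1}\nabla^{i_1}(\eta,\etab)^{i_2}\nabla^{i_3}\beta\nabla^{i_4}\phi$ into a term of the second type.

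For this I would invoke the Codazzi equation $\div\chih=\frac{1}{2}\nabla\trch-\frac{1}{2}(\eta-\etab)\cdot(\chih-\frac{1}{2}\trch)-\beta$ from \eqref{null.str3}, which yields the schematic identity $\beta\sim\nabla\chi+\psi\cdot\chi$, with $\chi$ standing for $\trch$ or $\chih$. Substituting this for $\beta$ in the third sum and expanding the $\nabla^{i_3}$ by Leibniz (the combinatorial constants being irrelevant to the schematic notation), every term becomes a sum of terms of one of two shapes: $\nabla^{i_1}(\eta,\etab)^{i_2}\nabla^{i_3+1}\chi\nabla^{i_4}\phi$, coming from the $\nabla\chi$ piece, or $\nabla^{i_1}(\eta,\etab)^{i_2}\nabla^{a}\psi\nabla^{b}\chi\nabla^{i_4}\phi$ with $a+b=i_3$, coming from the $\psi\cdot\chi$ piece (the extra $\psi$-factor being the one appearing in the Codazzi relation).

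It remains to check the index count. In the first shape one has gained an angular derivative, so the total $i_1+i_2+(i_3+1)+i_4$ equals $(i-1)+1=i$, which is exactly a term of the second form in the statement. In the second shape one has instead gained a $\psi$-factor; collecting $(\eta,\etab)^{i_2}$ together with $\nabla^a\psi$ as $\nabla^{i_1+a}\psi^{i_2+1}$, the total $(i_1+a)+(i_2+1)+b+i_4$ equals $i_1+i_2+i_3+i_4+1=(i-1)+1=i$, using the convention in this paper that each factor of $\psi$ contributes $1$ to the index sum, so that an extra $\psi$-factor compensates exactly the shift $-1$ in the summation range of the $\beta$-term. Hence the third sum is absorbed into $\sum_{i_1+i_2+i_3+i_4=i}\nabla^{i_1}\psi^{i_2}\nabla^{i_3}\chi\nabla^{i_4}\phi$, which proves the $\nabla_4$ statement. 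For the $\nabla_3$ statement I would argue identically, using the conjugate Codazzi equation $\div\chibh=\frac{1}{2}\nabla\trchb+\frac{1}{2}(\eta-\etab)\cdot(\chibh-\frac{1}{2}\trchb)+\betab$ from \eqref{null.str3}, i.e. $\betab\sim\nabla\chib+\psi\cdot\chib$, to eliminate the $\betab$-term occurring in the corresponding formula of Proposition \ref{commuteeqn}.

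There is no analytic difficulty here; the one point that needs care is the schematic bookkeeping of derivative counts, in particular the observation that trading $\beta$ via Codazzi for either an extra angular derivative on $\chi$ or an extra factor of $\psi$ raises the index count by exactly one in both cases, which is precisely what makes the shifted summation range in the $\beta$-term of Proposition \ref{commuteeqn} collapse onto the unshifted range in the final formula.
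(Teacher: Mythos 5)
Your proof is correct and follows exactly the paper's own one-line argument: substitute the schematic Codazzi relations $\beta\sim\nabla\chi+\psi\chi$ and $\betab\sim\nabla\chib+\psi\chib$ into the $\beta$- and $\betab$-terms of Proposition \ref{commuteeqn}. The index bookkeeping you spell out (trading $\beta$ for one extra angular derivative or one extra $\psi$-factor, each of which shifts the summation index up by exactly one) is precisely what justifies the collapse of the $i-1$ range onto the $i$ range, and is the content the paper leaves implicit.
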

\begin{proof}
We replace $\beta$ and $\betab$ using the Codazzi equations, which schematically looks like
$$\beta=\nabla\chi+\psi\chi,$$
$$\betab=\nabla\chib+\psi\chib.$$
\end{proof}

\subsection{General Elliptic Estimates for Hodge Systems}\label{elliptic}
We recall the definition of the divergence and curl of a symmetric covariant tensor of an arbitrary rank:
$$(\div\phi)_{A_1...A_r}=\nabla^B\phi_{BA_1...A_r},$$
$$(\curl\phi)_{A_1...A_r}=\eps^{BC}\nabla_B\phi_{CA_1...A_r},$$
where $\eps$ is the volume form associated to the metric $\gamma$.
Recall also that the trace is defined to be
$$(tr\phi)_{A_1...A_{r-1}}=(\gamma^{-1})^{BC}\phi_{BCA_1...A_{r-1}}.$$
The following elliptic estimate is standard (See for example \cite{CK} or \cite{Chr}):
\begin{proposition}\label{ellipticthm}
Let $\phi$ be a totally symmetric $r+1$ covariant tensorfield on a 2-sphere $(\mathbb S^2,\gamma)$ satisfying
$$\div\phi=f,\quad \curl\phi=g,\quad \mbox{tr}\phi=h.$$
Suppose also that
$$\sum_{i\leq 1}||\nabla^i K||_{L^2(S)}< \infty.$$
Then for $i\leq 3$,
$$||\nabla^{i}\phi||_{L^2(S)}\leq C(\sum_{k\leq 1}||\nabla^k K||_{L^2(S)})(\sum_{j=0}^{i-1}(||\nabla^{j}f||_{L^2(S)}+||\nabla^{j}g||_{L^2(S)}+||\nabla^{j}h||_{L^4(S)}+||\phi||_{L^2(S)})).$$
\end{proposition}
For the special case that $\phi$ a symmetric traceless 2-tensor, we only need to know its divergence:
\begin{proposition}\label{elliptictraceless}
Suppose $\phi$ is a symmetric traceless 2-tensor satisfying
$$\div\phi=f.$$
Suppose moreover that $$\sum_{i\leq 1}||\nabla^i K||_{L^2(S)}< \infty.$$
Then, for $i\leq 3$,
$$||\nabla^{i}\phi||_{L^2(S)}\leq C(\sum_{k\leq 1}||\nabla^k K||_{L^2(S)})(\sum_{j=0}^{i-1}(||\nabla^{j}f||_{L^2(S)}+||\phi||_{L^2(S)})).$$
\end{proposition}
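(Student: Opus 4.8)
The plan is to deduce this from Proposition \ref{ellipticthm} by observing that, for a symmetric traceless $2$-tensor on an oriented Riemannian surface, the curl and the trace carry no independent information. By hypothesis $\tr\phi=0$, and I claim the pointwise identity
\[
(\curl\phi)_A = {}^*(\div\phi)_A,
\]
where $^*$ denotes the Hodge dual on $S_{u,\ub}$. To verify it, work in a local orthonormal frame $\{e_1,e_2\}$: since $\phi$ is symmetric and traceless we have $\phi_{22}=-\phi_{11}$ and $\phi_{12}=\phi_{21}$, and then $(\curl\phi)_1=\nabla_1\phi_{21}-\nabla_2\phi_{11}=\nabla_1\phi_{12}+\nabla_2\phi_{22}=(\div\phi)_2$ and likewise $(\curl\phi)_2=\nabla_1\phi_{22}-\nabla_2\phi_{12}=-(\div\phi)_1$, which is exactly $\curl\phi={}^*\div\phi$. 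This identity involves only first derivatives of $\phi$ together with the parallel tensors $\gamma$ and $\eps$, so no curvature correction appears and it holds on any such surface.

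Given this, I would set $f=\div\phi$, $g:={}^*f=\curl\phi$, $h:=0=\tr\phi$, and invoke Proposition \ref{ellipticthm}. Since the volume form $\eps$ is parallel, the Hodge star on $1$-forms is a parallel pointwise isometry, so $\|\nabla^j g\|_{L^2(S)}=\|\nabla^j f\|_{L^2(S)}$ for every $j$, while $\|\nabla^j h\|_{L^4(S)}=0$. Substituting these into the conclusion of Proposition \ref{ellipticthm} gives, for $i\le 3$,
\[
\|\nabla^i\phi\|_{L^2(S)}\le C\Big(\sum_{k\le 1}\|\nabla^k K\|_{L^2(S)}\Big)\sum_{j=0}^{i-1}\big(\|\nabla^j f\|_{L^2(S)}+\|\phi\|_{L^2(S)}\big),
\]
which is the asserted estimate.

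There is essentially no obstacle here beyond bookkeeping the algebraic identity above; the analytic content is entirely internal to Proposition \ref{ellipticthm}. If one instead wanted a proof not relying on that proposition, the substantive step would be the base case $i=1$: the integration-by-parts identity $\int_S(|\nabla\phi|^2+2K|\phi|^2)=2\int_S|\div\phi|^2$, obtained by integrating $\int_S|\div\phi|^2$ by parts twice, using the commutation $[\nabla_A,\nabla_B]$ (through which the Gauss curvature $K$ enters) and using $\tr\phi=0$ to annihilate the term proportional to $K\,\tr\phi$. One then bounds $\int_S K|\phi|^2$ by $\|K\|_{L^2(S)}\|\phi\|_{L^4(S)}^2$ together with the interpolation inequality $\|\phi\|_{L^4(S)}^2\le C\|\phi\|_{L^2(S)}(\|\phi\|_{L^2(S)}+\|\nabla\phi\|_{L^2(S)})$ (a Ladyzhenskaya-type refinement of Proposition \ref{L4}), absorbing the resulting $\|\nabla\phi\|_{L^2}^2$ contribution by Young's inequality; the higher-order estimates then follow by induction, differentiating the Hodge system and commuting $\nabla$ past $\div$ and $\curl$, which is where the $\|\nabla K\|_{L^2(S)}$ term is generated. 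The only mildly delicate point in that alternative route is tracking the top-order commutator and curvature terms — precisely what the reduction to Proposition \ref{ellipticthm} lets us bypass.
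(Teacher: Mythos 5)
Your proposal is correct and coincides with the paper's argument: the paper's proof is precisely to observe that, for a symmetric traceless $2$-tensor, $\curl\phi={}^*f$ (a direct frame computation, which you carry out explicitly), and then to invoke Proposition \ref{ellipticthm} with $h=\tr\phi=0$. The alternative Bochner-type route you sketch at the end is a reasonable aside but is not needed, as you yourself note.
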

\begin{proof}
In view of Proposition \ref{ellipticthm}, this Proposition follows from
$$\curl\phi=^*f.$$
This is a direct computation using that fact that $\phi$ is both symmetric and traceless.
\end{proof}

\subsection{$L^p(S)$ Estimates for the Ricci Coefficients}\label{Riccisec}
We continue to work under the bootstrap assumptions (\ref{BA1}). In this Subsection, we pursue step 1 of the proof as outlined in the beginning of the Section. We show if that the curvature norm $\mathcal R$ and the third order Ricci coefficient norm $\tilde{\mathcal O}_{3,2}$ are bounded, then so is the Ricci coefficient norm $\mathcal O$. In particular, our bootstrap assumption (\ref{BA1}) and all the estimates in the last section are verified as long as $\mathcal R$ and $\tilde{\mathcal O}_{3,2}$ are controlled.
\begin{proposition}\label{Ricci}
Suppose
$$\mathcal R<\infty,\quad\tilde{\mathcal O}_{3,2}<\infty.$$
Then there exists $\epsilon_0=\epsilon_0(\mathcal O_0,\mathcal R,\tilde{\mathcal O}_{3,2})$ such that for $\epsilon\leq \epsilon_0$, 
\[
 \mathcal O\leq C(\mathcal O_0),
\]
where $\mathcal O_0$ is the initial norm for the Ricci coefficients defined in Theorem \ref{timeofexistence}.
\end{proposition}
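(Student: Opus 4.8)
The plan is to close a bootstrap on $\mathcal{O}$, using the null structure equations of \eqref{null.str1}--\eqref{null.str3} as transport equations integrated via Propositions \ref{transport} and \ref{transportinfty}, together with the Codazzi equations and elliptic estimates of Section \ref{elliptic} to recover the top angular derivatives. Assume throughout the bootstrap assumption \eqref{BA1} and also, as a further bootstrap, $\mathcal{O}_{i,2}+\mathcal{O}_{i,4}\leq \Delta_1$ for $i\leq 2$ (and $i\leq 1$ for the $L^4$ norm), with $\Delta_1$ a large constant to be absorbed. The output of each step will be an estimate of the schematic form $\|\psi\|\leq C(\mathcal{O}_0)+C(\mathcal{O}_0,\mathcal{R},\tilde{\mathcal{O}}_{3,2})\,\epsilon^{1/2}\Delta_1$, so that for $\epsilon$ small the $\Delta_1$ is absorbed and we improve to $\mathcal{O}\leq C(\mathcal{O}_0)$.

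The key ordering issue is that $\chih$ is the dangerous coefficient, since its natural $\nabla_4$ equation involves $\alpha$. So I would \emph{not} use $\nabla_4\chih+\trch\chih=-2\omega\chih-\alpha$; instead estimate $\trch$ first from its $\nabla_4$ Raychaudhuri-type equation (integrate in $\ub$ from $H_0$, the right side being quadratic in $\mathcal{O}$ plus $\chih$-terms), then estimate $\eta,\etab$ from their $\nabla_4$/$\nabla_3$ equations in \eqref{null.str2}, which involve only $\psi\psi$ and $\beta,\betab\in\mathcal{R}$; then estimate $\omega,\omegab$ similarly (right sides quadratic in $\psi$ plus $\rho\in\mathcal{R}$); then $\trchb,\chibh$ from their $\nabla_4$ equations in \eqref{null.str1}, whose right sides are $\psi\psi+\rho+\div\etab+\nabla\hot\etab$, so these cost one angular derivative of $\eta$ which we control at the appropriate level. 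For $\chih$ itself, use the Codazzi equation
$$\div\chih=\tfrac12\nabla\trch-\tfrac12(\eta-\etab)\cdot(\chih-\tfrac12\trch)-\beta$$
together with Proposition \ref{elliptictraceless}: once $\trch$, $\eta$, $\etab$, $\beta$ and $\|\nabla^iK\|_{L^2}$ ($i\leq1$) are controlled, this gives $\|\nabla^i\chih\|_{L^2(S)}$ for $i\leq 3$ in terms of them and $\|\chih\|_{L^2(S)}$; the zeroth-order $\|\chih\|_{L^2(S)}$ is itself obtained by integrating the $\nabla_3$ equation $\nabla_3\chih+\tfrac12\trchb\chih=\nabla\hot\eta+2\omegab\chih-\tfrac12\trch\chibh+\eta\hot\eta$ in $u$ from $\Hb_0$ — crucially this $e_3$ equation has no $\alpha$. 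The Gauss curvature $K$ is controlled via the last equation of \eqref{null.str3}, $K=-\rho+\tfrac12\chih\cdot\chibh-\tfrac14\trch\trchb$, with $\|\nabla^iK\|_{L^2(S)}$ ($i\leq1$) bounded using $\rho\in\mathcal{R}$ and the already-obtained bounds on $\trch,\trchb,\chih,\chibh$ — here one needs $\tilde{\mathcal{O}}_{3,2}$ to control products like $\nabla\chih\cdot\nabla\chibh$ via Sobolev embedding (Propositions \ref{L4}, \ref{Linfty}).

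For the higher angular derivatives $\mathcal{O}_{1,2},\mathcal{O}_{2,2}$ and the $L^4$/$L^\infty$ norms, commute the transport equations with $\nabla^i$ using Proposition \ref{commuteeqn} (simplified version), which produces error terms of the schematic type $\nabla^{i_1}\psi^{i_2}\nabla^{i_3}\chi\nabla^{i_4}\phi$ and $\nabla^{i_1}\psi^{i_2}\nabla^{i_3}F_0$; each such term is handled by Hölder on $S_{u,\ub}$ plus Sobolev embedding, with the top-order factor placed in $L^2$ and the lower ones in $L^\infty$ or $L^4$, the $L^\infty$-norms being controlled by $\|\nabla^{\leq2}\cdot\|_{L^2}\leq\mathcal{O}+\tilde{\mathcal{O}}_{3,2}$; then integrate in $\ub$ or $u$ and gain a factor $\epsilon$ (or $\epsilon^{1/2}$ when a curvature component, bounded only in $L^2(H_u)$ or $L^2(\Hb_{\ub})$, appears — one integrates $\|\cdot\|_{L^2(S_{u,\ub'})}$ in $\ub'$ and uses Cauchy--Schwarz against $\mathcal{R}$). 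The $L^4$ norms come from the $L^p$ transport estimate of Proposition \ref{transport} with $p=4$, feeding in the already-obtained angular-derivative bounds through Proposition \ref{L4}; the $L^\infty$ norm $\mathcal{O}_{0,\infty}$ is recovered last, from Proposition \ref{Linfty}, and must be shown to be $\leq C(\mathcal{O}_0)<\Delta_0$, closing \eqref{BA1}.

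The main obstacle is the interlocking of the elliptic estimate for $\chih$ with the control of $K$: the elliptic estimate for $\nabla^3\chih$ requires $\|\nabla^{\leq1}K\|_{L^2}$, while $\|\nabla K\|_{L^2}$ involves $\nabla^2\chih\cdot\nabla\chibh$-type products, so one must be careful that no circularity at top order arises — this is resolved because $\nabla^3\chih$ is never needed to bound $K$ (only $\nabla^{\leq1}K$, hence $\nabla^{\leq1}\chih$ at worst through products, which sits comfortably below $\tilde{\mathcal{O}}_{3,2}$), and because the $\nabla^3$-level bound on the Ricci coefficients is exactly what $\tilde{\mathcal{O}}_{3,2}$ is assumed finite for. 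A secondary subtlety is that $\beta$ is controlled only in $L^2(H_u)$ (not pointwise on each $S_{u,\ub}$), so wherever $\beta$ enters a transport equation in the $e_4$-direction the gain must be arranged as $\int_0^{\ub}\|\beta\|_{L^2(S_{u,\ub'})}d\ub'\leq\epsilon^{1/2}\|\beta\|_{L^2(H_u)}\leq\epsilon^{1/2}\mathcal{R}$, and one must check that $\beta$ never appears in a bad $e_3$-direction equation for the quantities being estimated here — which it does not, precisely because of the renormalized structure emphasized in the strategy section.
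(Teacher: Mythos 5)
Your proposal diverges from the paper's proof in one essential choice: you estimate $\chih$ via the Codazzi equation and the elliptic estimate of Proposition~\ref{elliptictraceless}, whereas the paper estimates $\chih$ purely by transport, using $\nab_3\chih=\nab\hot\eta+\psi\psi$ commuted with angular derivatives. The paper's route requires $\nab^3\eta$ only in the flux norm $\tilde{\mathcal O}_{3,2}$, i.e.\ in $L^2(\Hb_{\ub})$, which is precisely what is assumed finite; the term $\int_0^u\|\nab^2\eta\|_{L^4(S_{u',\ub})}\,du'$ is then absorbed as $C\epsilon\,\mathcal O_{2,2}+C\epsilon^{1/2}\tilde{\mathcal O}_{3,2}$ by Cauchy--Schwarz and Sobolev embedding. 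Your Codazzi route is the one the paper uses only later, in Proposition~\ref{ellipticTheta}, where the conclusion lives in $L^2(H_u)$, $L^2(\Hb_{\ub})$ rather than pointwise on each sphere, so that $\|\nab^2\beta\|_{L^2(H_u)}\leq\mathcal R$ is the right quantity. That distinction is not cosmetic; it is where your scheme breaks.

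Concretely, the $\mathcal O$-norms are $\sup_{u,\ub}\|\nab^i\cdot\|_{L^p(S_{u,\ub})}$. Applying Proposition~\ref{elliptictraceless} to $\div\chih=\frac12\nab\trch-\frac12(\eta-\etab)\cdot(\chih-\frac12\trch)-\beta$ at a fixed $(u,\ub)$ requires $\|\nab^{\leq i-1}\beta\|_{L^2(S_{u,\ub})}$ and, through the Gauss constraint $K=-\rho+\frac12\chih\cdot\chibh-\frac14\trch\trchb$, also $\|\nab^{\leq 1}\rho\|_{L^2(S_{u,\ub})}$ --- both uniformly over $(u,\ub)$. At this stage of the argument the only control of $\beta$ and $\rho$ comes from $\mathcal R$, which is an $L^2$ flux in the transverse null variable and gives no bound on a single sphere. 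The sphere-wise bounds $\sup_{u,\ub}\|\nab^{\leq 1}\Psi\|_{L^2(S_{u,\ub})}\leq C(\mathcal R_0)$ for $\Psi\in\{\beta,\rho,\sigma,\betab\}$ are established only in Proposition~\ref{RS}, whose proof explicitly invokes Proposition~\ref{Ricci}; so your proposal is circular as written. You flag the fact that ``$\beta$ is controlled only in $L^2(H_u)$'' in your last paragraph, but you apply that observation only to $\beta$'s appearances in $e_4$-transport equations, where integration in $\ub$ rescues you. You do not address its appearance as a source in the elliptic system, where there is no integration to save you. The fix is exactly what the paper does: drop the Codazzi route for the $\mathcal O$-level bound on $\chih$ and use the $\nab_3$ transport, which involves no curvature component on an individual sphere. (A secondary difference: you estimate $\chibh$ from its $\nab_4$ equation, paying the $\nab\hot\etab$ derivative loss and routing through $\tilde{\mathcal O}_{3,2}$; the paper uses $\nab_3\chibh$, whose source is $\psi\psi+\alphab$ with no derivative loss and with $\alphab\in\mathcal R$ handled by integration in $u$. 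Your choice is workable, just needlessly costly.)
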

\begin{proof}
To prove the Proposition, we need to estimate $\displaystyle\sum_{i\leq 1}{\mathcal O}_{i,4}$ and $\displaystyle\sum_{i\leq 2}{\mathcal O}_{i,2}$. First, we estimate the $\mathcal O_{i,4}$ norms. Using the null structure equations for all Ricci coefficients except $\chih$, we have
\begin{equation*}
\sum_{\psi\in\{\trchb,\chibh,\etab,\omega\}}\nabla_3\psi=\sum_{\Psi\in\{\rho,\sigma,\betab,\alphab\}}\Psi+\psi\psi,
\end{equation*}
and
\begin{equation*}
\sum_{\psi\in\{\trch,\eta,\omegab\}}\nabla_4\psi=\sum_{\Psi\in\{\beta,\rho,\sigma,\betab\}}\Psi+\psi\psi,
\end{equation*}
where the $\psi$ on the right hand sides denote any Ricci coefficients. We will also use the null structure equations commuted with angular derivatives:
\begin{equation}\label{Riccieqn}
\begin{split}
\sum_{\psi\in\{\trchb,\chibh,\etab,\omega\}}\nabla_3\nabla^i\psi=&\sum_{\Psi\in\{\rho,\sigma,\betab,\alphab\}}\sum_{i_1+i_2+i_3=i}\nabla^{i_1}\psi^{i_2}\nabla^{i_3}\Psi+\sum_{i_1+i_2+i_3+i_4=i}\nabla^{i_1}\psi^{i_2}\nabla^{i_3}\psi\nabla^{i_4}\psi,\\
\sum_{\psi\in\{\trch,\eta,\omegab\}}\nabla_4\nabla^i\psi=&\sum_{\Psi\in\{\beta,\rho,\sigma,\betab\}}\sum_{i_1+i_2+i_3=i}\nabla^{i_1}\psi^{i_2}\nabla^{i_3}\Psi+\sum_{i_1+i_2+i_3+i_4=i}\nabla^{i_1}\psi^{i_2}\nabla^{i_3}\psi\nabla^{i_4}\psi,
\end{split}
\end{equation}
where we have applied the notation defined at the end of Section \ref{sec.eqns} to represent the sum of products of derivatives of $\psi$ by $\nab^{i_1}\psi^{i_2}$.
By Sobolev Embedding in Proposition \ref{L4} and the bootstrap assumption (\ref{BA1}), we have
$$\sum_{i_1+i_2+i_3\leq 1}||\nabla^{i_1}\psi^{i_2}\nabla^{i_3}\Psi||_{L^4(S_{u,\ub})}\leq \Delta_0\sum_{i=0}^2||\nabla^i\Psi||_{L^2(S_{u,\ub})}.$$
\begin{equation}\label{Riccinl}
\sum_{i_1+i_2+i_3+i_4\leq 1}||\nabla^{i_1}\psi^{i_2}\nabla^{i_3}\psi\nabla^{i_4}\psi||_{L^4(S_{u,\ub})}\leq (\Delta_0+\Delta_0^2)\sum_{i=0}^1\mathcal O_{i,4}.
\end{equation}
Hence, by Proposition \ref{transport} applied to (\ref{Riccinl}), we obtain
\begin{equation}\label{RicciL41}
\begin{split}
&\sum_{i=0}^{1}\mathcal O_{i,4}[\trch,\eta,\etab,\omega,\omegab,\trchb,\chibh]\\
\leq &\mathcal O_0+C\epsilon^{\frac{1}{2}}\Delta_0\mathcal R+C(\Delta_0)\epsilon \sum_{i=0}^{1}\mathcal O_{i,4}.
\end{split}
\end{equation}
In order to close this estimate, we need to estimate $\chih$. We like to avoid the use of the equation $\nabla_4\chih=-\alpha+\psi\psi$ since we do not have an estimate for singular curvature component $\alpha$. Instead, we use the equation
$$\nabla_3\chih=\nabla\hot\eta+\psi\psi.$$
Commuting with angular derivatives, we get
\begin{equation}\label{Riccieqn2}
\nabla_{3}\nabla^i\chih=\sum_{i_1+i_2+i_3=i}\nabla^{i_1}\psi^{i_2}\nabla^{i_3+1}\eta+\sum_{i_1+i_2+i_3+i_4=i}\nabla^{i_1}\psi^{i_2}\nabla^{i_3}\psi\nabla^{i_4}\psi.
\end{equation}
Applying Proposition \ref{transport} to (\ref{Riccieqn2}) and using the estimates in (\ref{Riccinl}), we have
\begin{equation*}
\sum_{i=0}^1||\nabla^i\chih||_{L^4(S_{u,\ub})}\leq \mathcal O_0+C\int_0^u ||\nabla^2\eta||_{L^4(S_{u',\ub})}du'+C(\Delta_0)\epsilon \sum_{i=0}^{1}\mathcal O_{i,4}.
\end{equation*}
By Cauchy-Schwarz and the Sobolev Embedding in Proposition \ref{L4},
$$\int_0^u ||\nabla^2\eta||_{L^4(S_{u',\ub})}du' \leq C\epsilon \mathcal O_{2,2}+C\epsilon^{\frac 12} \tilde{\mathcal O}_{3,2}.$$
Therefore,
\begin{equation}\label{RicciL42}
\sum_{i=0}^1||\nabla^i\chih||_{L^4(S_{u,\ub})}\leq C\epsilon \mathcal O_{2,2}+C\epsilon^{\frac 12} \tilde{\mathcal O}_{3,2}+C(\Delta_0)\epsilon \sum_{i=0}^{1}\mathcal O_{i,4}.
\end{equation}
Putting (\ref{RicciL41}) and (\ref{RicciL42}) together, we get
\begin{equation}\label{RicciL4}
\begin{split}
\sum_{i=0}^{1}\mathcal O_{i,4}\leq \mathcal O_0+C\epsilon^{\frac{1}{2}}\Delta_0\mathcal R+C\epsilon \mathcal O_{2,2}+C\epsilon^{\frac 12}\tilde{\mathcal O}_{3,2}+ C(\Delta_0)\epsilon \sum_{i=0}^{1}\mathcal O_{i,4}. \\
\end{split}
\end{equation}
By choosing $\epsilon$ sufficiently small depending on $\Delta_0$, $\mathcal O_0$, $\mathcal R$ and $\tilde{\mathcal O}_{3,2}$, we have
$$\sum_{i=0}^{1}\mathcal O_{i,4}\leq C(\mathcal O_0)+C\epsilon \mathcal O_{2,2},$$
where $C(\mathcal O_0)$ does not depend on $\Delta_0$, once $\epsilon$ is chosen to be small enough.

We now move to estimate the $\mathcal O_{i,2}$ norms. Applying Proposition \ref{transport} to (\ref{Riccieqn}), we obtain
\begin{equation}\label{RicciL21}
\begin{split}
&\sum_{i=0}^{2}\mathcal O_{i,2}[\trch,\eta,\etab,\omega,\omegab,\trchb,\chibh]\\
\leq &\mathcal O_0+C\epsilon^{\frac{1}{2}}\Delta_0\mathcal R+\epsilon\Delta_0 \sum_{i=0}^{2}\mathcal O_{i,2}+\epsilon(\sum_{i=0}^{1}\mathcal O_{i,4})^2.
\end{split}
\end{equation}
For $\chih$, we apply Proposition \ref{transport} to (\ref{Riccieqn2}) to get
\begin{equation}\label{RicciL22}
\sum_{i=0}^2||\nabla^i\chih||_{L^2(S_{u,\ub})}\leq C(\mathcal O_0)+ C\epsilon^{\frac{1}{2}}\tilde{\mathcal O}_{3,2}+C\epsilon\Delta_0 \sum_{i=0}^{2}\mathcal O_{i,2}+\epsilon(\sum_{i=0}^{1}\mathcal O_{i,4})^2.
\end{equation}
(\ref{RicciL21}) and (\ref{RicciL22}) together give
$$\sum_{i=0}^2\mathcal O_{i,2}\leq \mathcal O_0+C\epsilon^{\frac{1}{2}}\Delta_0\mathcal R+\epsilon\Delta_0 \sum_{i=0}^{2}\mathcal O_{i,2}+\epsilon(\sum_{i=0}^{1}\mathcal O_{i,4})^2.$$
Choosing $\epsilon$ to be appropriately small depending on $\Delta_0$, $\mathcal O_0$, $\mathcal R$ and $\tilde{\mathcal O}_{3,2}$, we can absorb $\displaystyle\epsilon\Delta_0 \sum_{i=0}^{2}\mathcal O_{i,2}$ to the left hand side to obtain
\begin{equation}\label{RicciL2}
\sum_{i=0}^2\mathcal O_{i,2}\leq C(\mathcal O_0)+\epsilon(\sum_{i=0}^{1}\mathcal O_{i,4})^2.
\end{equation}
For $\epsilon$ sufficiently small, (\ref{RicciL4}) and (\ref{RicciL2}) imply that
$$\sum_{i=0}^1\mathcal O_{i,4}+\sum_{i=0}^2\mathcal O_{i,2}\leq C(\mathcal O_0).$$

Finally, we use Sobolev embedding in Proposition \ref{Linfty} to get the necessary bounds for $\mathcal O_{0,\infty}$ from the control for $\displaystyle\sum_{i=0}^1\mathcal O_{i,4}$. Moreover, $\Delta_0$ can be chosen so that $\Delta_0\gg C(\mathcal O_0)$ and the bootstrap assumption (\ref{BA1}) is verified as long as $\mathcal R$ and $\tilde{\mathcal O}_{3,2}$ are controlled. Furthermore, the since the choice of $\Delta_0$ depend only on $\mathcal O_0$, the $\epsilon$ can be set to depend only on $\mathcal O_0$, $\mathcal R$ and $\tilde{\mathcal O}_{3,2}$. 
\end{proof}

\subsection{$L^2(S)$ Estimates for Curvature}
In this Subsection, we prove that $\beta,\rho,\sigma,\betab$ and their first derivatives can be controlled in $L^2(S)$. The argument relies on Proposition \ref{transport} and the Bianchi equations (\ref{eq:null.Bianchi}). The results of this Section will give the bounds for the Gauss curvature and its first derivatives appearing in Proposition \ref{ellipticthm}.
\begin{proposition}\label{RS}
Suppose
$$\mathcal R<\infty,\quad\tilde{\mathcal O}_{3,2}<\infty.$$
Then there exists $\epsilon_0=\epsilon_0(\mathcal O_0,\mathcal R,\tilde{\mathcal O}_{3,2})$ such that for $\epsilon\leq \epsilon_0$, 
$$\sum_{i\leq 1}\sum_{\Psi\in\{\beta,\rho,\sigma,\betab\}}||\nabla^i\Psi||_{L^2(S_{u,\ub})}\leq C(\mathcal R_0).$$
\end{proposition}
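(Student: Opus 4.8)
The plan is to estimate the curvature components $\beta,\rho,\sigma,\betab$ and their angular derivatives up to first order by integrating the appropriate Bianchi equations in the $e_3$ or $e_4$ direction, choosing in each case the direction that avoids coupling to the singular component $\alpha$. Concretely, I would use the $\nabla_4$-equations for $\rho,\sigma,\betab$ and the $\nabla_3$-equation for $\beta$ in their renormalized form. Recall from \eqref{eq:null.Bianchi} that $\nabla_4\rho$, $\nabla_4\sigma$ each contain a term $\chibh\cdot\alpha$ (resp.\ $\chibh\cdot{}^*\alpha$), so one must instead work with $\rhoc=\rho-\frac12\chih\cdot\chibh$ and $\sigmac=\sigma+\frac12\chih\wedge\chibh$, which by the computation sketched in Section \ref{eeoutline} satisfy $\nabla_4\rhoc=\div\beta+\psi\Psi+\psi\psi\psi$ and $\nabla_4\sigmac=-\div\,^*\beta+\psi\Psi+\psi\psi\psi$, with no $\alpha$. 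Since $\mathcal O$ is already controlled by Proposition \ref{Ricci} (using the hypotheses $\mathcal R<\infty$, $\tilde{\mathcal O}_{3,2}<\infty$), estimating $\rhoc,\sigmac$ in $L^2(S)$ is equivalent to estimating $\rho,\sigma$ up to acceptable quadratic-in-$\psi$ errors. For $\beta$ one integrates $\nabla_3\beta+\trchb\beta=\nabla\rhoc+{}^*\nabla\sigmac+\psi\nabla\psi+\psi\Psi+\psi\psi\psi$ in $u$; for $\betab$ one uses $\nabla_4\betab+\trch\betab=-\nabla\rho+{}^*\nabla\sigma+2\omega\betab+2\chibh\cdot\beta-3(\etab\rho-{}^*\etab\sigma)$, which again is $\alpha$-free.

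The execution would go as follows. First commute each chosen Bianchi equation with up to one angular derivative $\nabla$, using the commutation formulae of Proposition \ref{commuteeqn} (in its simplified $\beta,\betab$-free form where the Codazzi equations have been substituted), so that the commutator terms are of schematic type $\psi\nabla^{\le 1}\Psi + \nabla^{\le 1}\psi\,\nabla^{\le 1}\psi\,\Psi$ plus lower order. Then apply Proposition \ref{transport} with $p=2$ along $e_4$ (for $\rhoc,\sigmac,\betab$) or $e_3$ (for $\beta$). The key structural point is that the right-hand sides, after renormalization and after using the bounds on $\mathcal O$ and the metric from Section \ref{metric}, can be split into: (i) terms $\nabla^{\le1}\Psi$ with a coefficient controlled in $L^\infty(S)$ by $\mathcal O_{0,\infty}\lesssim C(\mathcal O_0)$, contributing $C(\mathcal O_0)\int ||\nabla^{\le1}\Psi||_{L^2(S)}$; (ii) angular-derivative-of-curvature terms like $\nabla^2\rhoc$ appearing when one differentiates $\nabla_3\beta=\nabla\rhoc+\dots$, which must be handled by Cauchy--Schwarz in the integration variable to produce a factor of $\epsilon^{1/2}$ against $\mathcal R$; and (iii) purely Ricci-coefficient cubic terms bounded by $C(\mathcal O_0)$ times $\epsilon$ or $\epsilon^{1/2}$. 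Collecting everything yields an integral inequality of the form
$$\sum_{i\le1}\sum_{\Psi\in\{\beta,\rho,\sigma,\betab\}}||\nabla^i\Psi||_{L^2(S_{u,\ub})}\leq C(\mathcal R_0)+C(\mathcal O_0)\int\Big(\sum_{i\le1}||\nabla^i\Psi||_{L^2(S)}\Big)+C\epsilon^{1/2}\mathcal R+C(\mathcal O_0)\epsilon^{1/2},$$
to which Gronwall's inequality applies once $\epsilon$ is small; this closes the estimate with a constant depending only on $\mathcal R_0$ (and $\mathcal O_0$, which itself is controlled by the initial data).

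The main obstacle I anticipate is bookkeeping the renormalization carefully enough to be sure $\alpha$ never reappears: one has to track that the term $\chibh\cdot\alpha$ in $\nabla_4\rho$ is exactly cancelled using $\nabla_4\chih=-2\omega\chih-\trch\chih-\alpha$, and—crucially—that after commuting with $\nabla$ the same cancellation persists (i.e.\ $\nabla(\chibh\cdot\alpha)$ is absorbed by $\nabla(\chibh\cdot\nabla_4\chih)$ up to $\alpha$-free terms, which requires commuting $\nabla$ past $\nabla_4$ and noting the commutator $[\nabla_4,\nabla]\chih$ from Proposition \ref{commuteeqn} involves only $\chi,\psi,\beta$, not $\alpha$). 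A secondary subtlety is that $\beta$ on the incoming hypersurfaces $\Hb_{\ub}$ is \emph{not} part of the norm $\mathcal R$ and should not be needed: the argument must use only the $\nabla_3$-transport of $\beta$ starting from its data on $H_0$ (which is $S_{0,\ub}$ data, hence controlled by $\mathcal R_0$), never an estimate for $\beta$ on $\Hb_0$ or energy flux of $\beta$ through $\Hb_{\ub}$. Finally one must verify that the $\nabla^2\rhoc,\nabla^2\sigmac$ terms arising from differentiating the $\beta$-equation are genuinely controlled by $\mathcal R$ (which contains $\nabla^{\le2}$ of $\rho,\sigma$), modulo $\mathcal O$-controlled corrections coming from $\nabla^2(\chih\cdot\chibh)$; this is where the hypothesis $\tilde{\mathcal O}_{3,2}<\infty$ (hence Proposition \ref{Ricci}, giving $\nabla^{\le2}\psi\in L^2(S)$ and $\nabla^{\le1}\psi\in L^\infty$ via Sobolev) is used. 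These are all structural checks rather than hard analysis, and the $\epsilon$-smallness absorbs every borderline term.
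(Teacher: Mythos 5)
Your proposal is logically sound and would close, but it takes a genuinely different---and more roundabout---route than the paper. You split the four components between $\nabla_3$-transport (for $\beta$) and renormalized $\nabla_4$-transport (for $\rhoc,\sigmac,\betab$), which forces you to track the cancellation of $\chibh\cdot\alpha$ against $-\frac12\chibh\cdot\nabla_4\chih$ and to verify it persists after commutation. The paper instead notices that \emph{all four} components $\beta,\rho,\sigma,\betab$ admit $\nabla_3$-Bianchi equations which are already $\alpha$-free with no renormalization whatsoever: schematically $\nabla_3(\beta,\rho,\sigma,\betab)=\sum_{\Psi\in\{\rho,\sigma,\betab,\alphab\}}\nabla\Psi+\psi\sum_{\Psi\in\{\beta,\rho,\sigma,\betab,\alphab\}}\Psi$, and the source curvature components $\rho,\sigma,\betab,\alphab$ are exactly the ones $\mathcal R$ controls in $L^2(\Hb_{\ub})$. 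Integrating each of these (and their first angular derivative) in $u$ via Proposition \ref{transport}, with Cauchy--Schwarz in $u$ to pick up $\epsilon^{1/2}\mathcal R$ from the $\nabla^{\le 2}\Psi$ flux on $\Hb_{\ub}$, immediately gives the claim after absorption and Gronwall. No cancellation needs checking, and the direction of integration is uniform.

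What your approach buys is closer alignment with the energy-estimate machinery of Section \ref{energyestimatessec} (Proposition \ref{ee}), where the renormalization of $\rho,\sigma$ to $\rhoc,\sigmac$ is unavoidable because one must \emph{pair} $\nabla_3\beta$ against $\nabla_4\rhoc,\nabla_4\sigmac$ in an integration by parts. For the present $L^2(S)$ estimate, however, there is no pairing constraint: one is free to pick, for each component, whichever single transport direction is convenient, and the uniform choice of $\nabla_3$ is simpler. Your instinct that $\beta$ must never be asked for on $\Hb_{\ub}$ is correct and is respected by both arguments, but your bookkeeping about $\alpha$-cancellations surviving commutation is avoidable extra work that the paper's choice of equations renders moot.
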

\begin{proof}
By the Bianchi equations, we have
$$\nabla_3(\beta,\rho,\sigma,\betab)=\sum_{\Psi\in\{\rho,\sigma,\betab,\alphab\}}\nabla\Psi+\psi\sum_{\Psi\in\{\beta,\rho,\sigma,\betab,\alphab\}}\Psi$$
Commuting the above with $\nabla$, we have
$$\nabla_3\nabla(\beta,\rho,\sigma,\betab)=\sum_{\Psi\in\{\rho,\sigma,\betab,\alphab\}}\nabla^2\Psi+\sum_{\Psi\in\{\beta,\rho,\sigma,\betab,\alphab\}}(\Psi\nabla\psi+\psi\nabla\Psi).$$
Estimating directly, we get
\begin{equation*}
\begin{split}
&\sum_{i\leq 1}||\nabla_3\nabla^i(\beta,\rho,\sigma,\betab)||_{L^2(\Hb_{\ub})}\\
\leq &\sum_{i\leq 2}\sum_{\Psi\in\{\rho,\sigma,\betab,\alphab\}}||\nabla^i\Psi||_{L^2(\Hb_{\ub})}+C\sup_u\sum_{i_1\leq 1}||\nabla^{i_1}\psi||_{L^4(S_{u,\ub})}\sum_{i_2\leq 2}\sum_{\Psi\in\{\rho,\sigma,\betab,\alphab\}}||\nabla^{i_2}\Psi||_{L^2(\Hb_{\ub})}\\
&+C\epsilon\sup_u\sum_{i_1\leq 1}||\nabla^{i_1}\psi||_{L^4(S_{u,\ub})}\sum_{i_2\leq 2}\sum_{\Psi\in\{\beta,\rho,\sigma,\betab\}}||\nabla^{i_2}\Psi||_{L^2(S_{u,\ub})}.
\end{split}
\end{equation*}
Therefore, by the definition of $\mathcal R$ and Proposition \ref{Ricci}, we have
$$\sum_{i=0}^1||\nabla_3\nabla^i(\beta,\rho,\sigma,\betab)||_{L^2(\Hb_{\ub})}\leq C(\mathcal O_0)\mathcal R+C(\mathcal O_0)\epsilon\sum_{i\leq 1}||\nabla^i(\beta,\rho,\sigma,\betab)||_{L^2(S_{u,\ub})}.$$
Therefore, by Proposition \ref{transport}, we have
\begin{equation*}
\sum_{i\leq 1}||\nabla^i(\beta,\rho,\sigma,\betab)||_{L^2(S_{u,\ub})}\leq C(\mathcal R_0)+ \epsilon^{\frac 12} C(\mathcal O_0)\mathcal R+C(\mathcal O_0)\epsilon^{\frac 32}\sum_{i\leq 1}||\nabla^i(\beta,\rho,\sigma,\betab)||_{L^2(S_{u,\ub})}.
\end{equation*}
The conclusion follows from choosing $\epsilon$ sufficiently small and absorbing the last term to the left hand side.
\end{proof}
We now derive $L^2(S)$ estimate for $K$ and its derivative, which will in particular allow us to apply Proposition \ref{ellipticthm}.
\begin{proposition}\label{Kest}
Suppose
$$\mathcal R<\infty,\quad\tilde{\mathcal O}_{3,2}<\infty.$$
Then there exists $\epsilon_0=\epsilon_0(\mathcal O_0,\mathcal R,\tilde{\mathcal O}_{3,2})$ such that for $\epsilon\leq \epsilon_0$, 
\[
\sum_{i\leq 1}||\nabla^i K||_{L^2(S)}\leq C(\mathcal O_0, \mathcal R_0)
\]
\end{proposition}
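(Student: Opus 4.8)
The plan is to read $K$ off algebraically from the Gauss equation in (\ref{null.str3}),
$$K=-\rho+\frac 12 \chih\cdot\chibh-\frac 14 \trch\trchb,$$
and to estimate each term on the right-hand side in $L^2(S_{u,\ub})$ together with one angular derivative. The curvature term $\rho$ and its first derivative are controlled directly by Proposition \ref{RS}, which already gives $\sum_{i\leq 1}||\nabla^i\rho||_{L^2(S_{u,\ub})}\leq C(\mathcal R_0)$. The remaining terms are quadratic in the Ricci coefficients, so they are handled purely by the bound $\mathcal O\leq C(\mathcal O_0)$ supplied by Proposition \ref{Ricci} under the standing hypotheses, together with the Sobolev embedding of Proposition \ref{L4}.

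Concretely, for the zeroth-order bound I would estimate $||\chih\cdot\chibh||_{L^2(S)}\leq ||\chih||_{L^4(S)}||\chibh||_{L^4(S)}\leq (\mathcal O_{0,4})^2$ and likewise $||\trch\trchb||_{L^2(S)}\leq (\mathcal O_{0,4})^2$, which combined with $||\rho||_{L^2(S)}\leq C(\mathcal R_0)$ yields $||K||_{L^2(S)}\leq C(\mathcal O_0,\mathcal R_0)$. For the first-order bound, differentiate the Gauss equation and apply Leibniz, e.g. $\nabla(\chih\cdot\chibh)=\nabla\chih\cdot\chibh+\chih\cdot\nabla\chibh$, estimating each factor in $L^4(S)$ via $||\nabla\chih||_{L^4}\leq\mathcal O_{1,4}$ and $||\chibh||_{L^4}\leq\mathcal O_{0,4}$, so that $||\nabla(\chih\cdot\chibh)||_{L^2(S)}\leq C(\sum_{i\leq 1}\mathcal O_{i,4})^2$; the term $\nabla(\trch\trchb)$ is treated identically, and $||\nabla\rho||_{L^2(S)}\leq C(\mathcal R_0)$ comes again from Proposition \ref{RS}. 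Adding these gives $\sum_{i\leq 1}||\nabla^iK||_{L^2(S)}\leq C(\mathcal O_0,\mathcal R_0)$.

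There is no substantial obstacle here; the only points requiring a little care are that $K$ needs to be controlled to one derivative in $L^2(S)$, which is exactly the regularity that Proposition \ref{RS} furnishes for $\rho$ (in particular no second angular derivative of curvature enters), and that every product term must be absorbed into a constant depending only on $\mathcal O_0$ and $\mathcal R_0$ — this is legitimate because Propositions \ref{Ricci} and \ref{RS} have already removed the dependence on $\Delta_0$, on $\mathcal R$ and on $\tilde{\mathcal O}_{3,2}$ after choosing $\epsilon$ small. The smallness of $\epsilon$ is thus used only implicitly, through the invocation of those two propositions.
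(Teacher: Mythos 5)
Your proposal is correct and follows exactly the paper's route: express $K$ via the Gauss equation, control $\sum_{i\le 1}\|\nabla^i\rho\|_{L^2(S)}$ by Proposition \ref{RS}, and bound the quadratic Ricci terms via Leibniz, Hölder with $L^4\times L^4\to L^2$, and the $\mathcal O$-bounds from Proposition \ref{Ricci}. The paper's own proof is just a one-line statement of this ``bound term by term'' strategy; your write-up fills in the same details.
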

\begin{proof}
We use the equation
$$K=-\rho+\frac{1}{2} \chih\cdot\chibh-\frac 1 4 tr\chi\cdot \trchb,$$
and bound term by term using Propositions \ref{Ricci} and \ref{RS}.
\end{proof}

\subsection{Elliptic Estimates for Ricci Coefficients}\label{Ricciellipticsec}

This Subsection contains step 2 of the proof of the a priori estimates outlined in the beginning of the Section. We bound the third derivatives of the Ricci coefficients. They cannot be estimated by transport equations alone as before because that would require the control of three derivatives of curvature, which we do not have. Instead, we need to use elliptic estimates, following ideas in \cite{CK}, \cite{KN}, \cite{Chr}, \cite{KlRo}. We hope to find carefully chosen combination of $\Theta:=\nabla\psi+\Psi$ such that $\Theta$ satisfies transport equations
\begin{equation}\label{thtransport3}
\nabla_3\Theta=\psi\nabla\psi_{\Hb}+\psi\psi\psi+\psi\Psi_{\Hb},
\end{equation}
or
\begin{equation}\label{thtransport4}
\nabla_4\Theta=\psi\nabla\psi_H+\psi\psi\psi+\psi\Psi_H,
\end{equation}
where $\psi_H\in\{\trch,\chih,\eta,\etab,\omega,\trchb\}$, $\psi_{\Hb}\in\{\trch,\eta,\etab,\omegab,\trchb,\chibh\}$, $\Psi_H\in\{\beta,\rho,\sigma,\betab\}$ and $\Psi_{\Hb}\in\{\rho,\sigma,\betab,\alphab\}$. These transport equations will then be coupled to the estimates from Hodge systems of the type
$$\mathcal D\psi=\Theta+\Psi+\psi\psi$$
to recover the control for the third derivatives of the Ricci coefficients. In order to apply this strategy, it is desirable for (\ref{thtransport3}) and (\ref{thtransport4}) to have two special structures. Firstly, $\nabla\Psi$ should not appear on the right hand side of the equation so that $\Theta$ can be estimated by these transport equations without a loss of derivative. Secondly, we hope that the terms $\psi\nab\psi_H$ and $\psi\Psi_H$ do not enter the $\nabla_3\Theta$ equation (respectively, $\psi\nab\psi_{\Hb}$ and $\psi\Psi_{\Hb}$ do not enter the $\nabla_4\Theta$ equation). If this is satisfied for equations (\ref{thtransport3}) and (\ref{thtransport4}), then the integrated error terms can be controlled by the flux term that are bounded on the corresponding hypersurface. Indeed, the third derivatives of most the Ricci coefficients can be obtained in this way by constructing approximate $\Theta$ variables which satisfy either (\ref{thtransport3}) or (\ref{thtransport4}).

An extra difficulty arises when we estimate $\nab^3\omega$. More precisely, the $\Theta$ variable that allows us to retrieve estimates for derivatives of $\omega$ satisfies an equation of the form
\begin{equation}\label{probTh}
\nabla_3\Theta=\psi\nabla\psi_{\Hb}+\psi\psi\psi+\psi\Psi_{\Hb}+\psi\beta,
\end{equation}
which has an undesirable term $\psi\beta$. Since we prove estimates without any information on the singular curvature component $\alpha$, we automatically lose control of $\nab^2\beta$ in $L^2(\Hb)$. Therefore, when obtaining the estimates from this transport equation using Proposition \ref{transport}, the error term arising from $\psi\beta$ in (\ref{probTh}) cannot be bounded. In order to overcome this problem, we prove a weaker bound for $\nab^3\omega$. More precisely, we will only control it in $L^2(H)$ but will not obtain any bounds of its $L^2(\Hb)$ norm. We will moreover show that this is sufficient to close the energy estimates.

We first bound the $\Theta$ variables that satisfy equations of the form (\ref{thtransport3}) and (\ref{thtransport4}).
\begin{proposition}\label{Theta}
Let $\Theta$ satisfy the equation
$$\nabla_3\Theta=\psi\nabla\psi_{\Hb}+\psi\psi\psi+\psi\Psi_{\Hb},$$
or
$$\nabla_4\Theta=\psi\nabla\psi_H+\psi\psi\psi+\psi\Psi_H,$$
where $\psi_H\in\{\trch,\chih,\eta,\etab,\omega,\trchb\}$, $\psi_{\Hb}\in\{\trch,\eta,\etab,\omegab,\trchb,\chibh\}$, $\Psi_H\in\{\beta,\rho,\sigma,\betab\}$ and $\Psi_{\Hb}\in\{\rho,\sigma,\betab,\alphab\}$. 
Suppose $$\mathcal R<\infty,\quad \tilde{\mathcal O}_{3,2}<\infty.$$ 
Then there exists $\epsilon_0=\epsilon_0(\mathcal O_0,\mathcal R,\tilde{\mathcal O}_{3,2})$ such that for $\epsilon\leq \epsilon_0$.
$$\sum_{i=0}^2\sup_{u,\ub}||\nabla^i\Theta||_{L^2(S_{u,\ub})}\leq C(\mathcal O_0).$$
\end{proposition}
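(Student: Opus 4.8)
The plan is to integrate the transport equation for $\Theta$ along the $e_3$ (resp.\ $e_4$) direction after commuting with up to two angular derivatives, and to close the resulting Gronwall-type inequality by exploiting smallness in $\epsilon$. Throughout we work under the bootstrap assumption (\ref{BA1}); since $\mathcal R<\infty$ and $\tilde{\mathcal O}_{3,2}<\infty$, Proposition \ref{Ricci} already gives $\mathcal O\le C(\mathcal O_0)$, so the metric bounds of Section \ref{metric}, the Sobolev embeddings of Section \ref{Embedding}, and the $L^\infty$, $L^4$ and $L^2$ control of $\psi$ together with its first and second angular derivatives are at our disposal with constants depending only on $\mathcal O_0$; moreover, as the data in this section is smooth, all the norms $\|\nab^{\le 2}\Theta\|_{L^2(S)}$ are a priori finite, which legitimizes the absorption arguments below. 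The structural heart of the matter is that the right-hand side of the $\nab_3$ equation involves only $\psi_{\Hb}$, $\Psi_{\Hb}$ and products of $\psi$'s, i.e.\ precisely those quantities whose angular derivatives are controlled in $L^2(\Hb_{\ub})$ by $\tilde{\mathcal O}_{3,2}$ and $\mathcal R$; in particular $\alpha$ never occurs, and $\beta$ never occurs. This feature survives commutation because, by Proposition \ref{commuteeqn}, the commutator $[\nab_3,\nab]$ generates only the factors $\chib$, $\eta$, $\etab$, $\betab$ (all admissible along $\Hb_{\ub}$), and symmetrically $[\nab_4,\nab]$ generates only $\chi$, $\eta$, $\etab$, $\beta$.

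First I would commute: for $i\le 2$, Proposition \ref{commuteeqn} --- in the form in which the commutator factors $\beta$, $\betab$ are traded for $\nab\chi$, $\nab\chib$ via the Codazzi equations --- writes $\nab_3\nab^i\Theta$ as a sum of terms of the types (i) $\psi^{k}\,\nab^{a}\psi_{\Hb}$ with $a\le i+1\le 3$, the remaining factors being $\nab^{\le 2}\psi$; (ii) $\psi^{k}\,\nab^{b}\Psi_{\Hb}$ with $b\le i\le 2$; (iii) cubic terms $\nab^{a_1}\psi\,\nab^{a_2}\psi\,\nab^{a_3}\psi$ with $a_1+a_2+a_3\le 2$; and (iv) self-interaction terms $\psi^{k}\,\nab^{a}\chib\,\nab^{b}\Theta$ (with $a+b\le 2$) and $\psi^{k}\,\nab^{a}\betab\,\nab^{b}\Theta$ (with $a+b\le 1$) --- and the analogous list with $\Hb\to H$, $\chib\to\chi$, $\betab\to\beta$, $\psi_{\Hb}\to\psi_H$, $\Psi_{\Hb}\to\Psi_H$ for the $\nab_4$ equation. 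Then I apply Proposition \ref{transport} with $p=2$ along $e_3$ (resp.\ $e_4$), starting from the initial sphere, on which $\|\nab^{\le 2}\Theta\|_{L^2}\le C(\mathcal O_0)$ (the $\Theta$'s used in the sequel being built so that their restriction to the initial hypersurfaces is controlled by the assumed data). The key mechanism in estimating the $u$-integral of the error terms is that, for any $\phi$ tangent to $\Hb_{\ub}$, Cauchy--Schwarz over $[0,u]\subset[0,\epsilon]$ gives $\int_0^u\|\phi\|_{L^2(S_{u',\ub})}\,du'\le C\epsilon^{1/2}\|\phi\|_{L^2(\Hb_{\ub})}$ and, using Proposition \ref{L4}, $\int_0^u\|\phi\|_{L^4(S_{u',\ub})}\,du'\le C\epsilon^{1/2}\sum_{j\le1}\|\nab^{j}\phi\|_{L^2(\Hb_{\ub})}$. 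Placing in each product the factor carrying the top-order derivative ($\nab^{\le3}\psi_{\Hb}$ or $\nab^{\le2}\Psi_{\Hb}$) into such a flux norm and the remaining Ricci factors into $L^\infty(S)$ or $L^4(S)$ via $\mathcal O\le C(\mathcal O_0)$, the type (i) terms contribute $\le C(\mathcal O_0)\epsilon^{1/2}\tilde{\mathcal O}_{3,2}$, the type (ii) terms $\le C(\mathcal O_0)\epsilon^{1/2}\mathcal R$, the type (iii) terms $\le C(\mathcal O_0)\epsilon$, and the type (iv) terms $\le (\mathcal R+C(\mathcal O_0))\epsilon^{1/2}\sup_{u,\ub}\sum_{i\le2}\|\nab^i\Theta\|_{L^2(S_{u,\ub})}$ (with the obvious $H$-analogues in the $\nab_4$ case). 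Collecting,
\[
\sup_{u,\ub}\sum_{i\le2}\|\nab^i\Theta\|_{L^2(S_{u,\ub})}\le C(\mathcal O_0)+C(\mathcal O_0)\epsilon^{1/2}(\tilde{\mathcal O}_{3,2}+\mathcal R)+(\mathcal R+C(\mathcal O_0))\epsilon^{1/2}\sup_{u,\ub}\sum_{i\le2}\|\nab^i\Theta\|_{L^2(S_{u,\ub})},
\]
and choosing $\epsilon_0=\epsilon_0(\mathcal O_0,\mathcal R,\tilde{\mathcal O}_{3,2})$ so small that $(\mathcal R+C(\mathcal O_0))\epsilon^{1/2}\le\frac12$ and $\epsilon^{1/2}(\tilde{\mathcal O}_{3,2}+\mathcal R)\le1$ lets me absorb the last term and conclude.

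The main obstacle is the bookkeeping of the error estimate: one must verify that \emph{every} term produced by the commutations admits an allocation of its factors into norms that are actually available, and --- crucially --- that the factor carrying the top ($\nab^3$) derivative is always tangent to the hypersurface of integration, so that it becomes a flux term, and is never $\alpha$, nor (in the $\nab_3$ case) $\beta$. For the equations (\ref{thtransport3}) and (\ref{thtransport4}) this is ensured by their schematic form together with the Codazzi substitution; it is precisely this point that forces $\omega$ to be excluded here, since its natural $\Theta$ obeys an equation of the form (\ref{probTh}) with the extra term $\psi\beta$ that breaks the structure and must instead be handled separately with a weaker bound, only in $L^2(H)$.
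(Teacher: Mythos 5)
Your proof is correct and takes essentially the same route as the paper's: commute the transport equation with up to two angular derivatives, apply Proposition \ref{transport}, bound each error term by $\mathcal O_0$, $\tilde{\mathcal O}_{3,2}$ and $\mathcal R$ with an $\epsilon^{1/2}$ gain via Cauchy--Schwarz on the flux along the hypersurface of integration (noting that $\alpha$, and in the $\nabla_3$ case $\beta$, never appear), and then take $\epsilon$ small. The only cosmetic difference is that you retain the $\nabla^{b}\Theta$ self-interaction terms produced by the commutator as a separate class (your type (iv)) and close them by absorption, whereas the paper's schematic tacitly folds these into the $\psi\nabla^{\le 3}\psi_{\Hb}$ and $\psi\nabla^{\le 2}\Psi_{\Hb}$ terms by using the identification $\Theta\sim\nabla\psi+\Psi$, which avoids a separate absorption step.
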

\begin{proof}
Notice that the assumption allows us to choose $\epsilon_0$ such that the conclusion of Proposition \ref{Ricci} holds. For $\Theta$ satisfying
$$\nabla_3\Theta=\psi\nabla\psi_{\Hb}+\psi\psi\psi+\psi\Psi_{\Hb},$$
we commute with angular derivatives to obtain
$$\nabla_{3}\nabla^{i}\Theta=\sum_{i_1+i_2+i_3+i_4=i}(\nabla^{i_1}\psi^{i_2}\nabla^{i_3}\psi\nabla^{i_4+1}\psi_{\Hb}+\nabla^{i_1}\psi^{i_2+1}\nabla^{i_3}\psi\nabla^{i_4}\psi+\nabla^{i_1}\psi^{i_2}\nabla^{i_3}\psi\nabla^{i_4}\Psi_{\Hb}).$$
Take $i\leq 2$. We estimate term by term. For the first two terms, except for the factor with $\nab^3\psi_{\Hb}$, can be estimated using the $\mathcal O$ norm, which is bounded in view of Proposition \ref{Ricci}. The term $\psi\nab^3\psi_H$ can be controlled by the $\tilde{\mathcal O}_{3,2}$ norm. We have
$$\sum_{i_1+i_2+i_3+i_4\leq 2}||\nabla^{i_1}\psi^{i_2}\nabla^{i_3}\psi\nabla^{i_4+1}\psi_{\Hb}+\nabla^{i_1}\psi^{i_2+1}\nabla^{i_3}\psi\nabla^{i_4}\psi||_{L^1_uL^2(S_{u,\ub})}\leq C(\mathcal O_0)(1+\epsilon^{\frac 12}\tilde{\mathcal O}_{3,2}).$$
The curvature term can be estimated using Proposition \ref{Ricci} and the definition of the curvature norm, since the component $\beta$ does not appear:
$$\sum_{i_1+i_2+i_3+i_4\leq 2}||\nabla^{i_1}\psi^{i_2}\nabla^{i_3}\psi\nabla^{i_4}\Psi_{\Hb}||_{L^1_uL^2(S_{u,\ub})}\leq C(\mathcal O_0)(1+\epsilon^{\frac 12}\mathcal R).$$
We then estimate using Proposition \ref{transport},
$$\sum_{i=0}^2\sup_u||\nabla^i\Theta||_{L^2(S_{u,\ub})}\leq C(\mathcal O_0)+C(\mathcal O_0)\epsilon^{\frac{1}{2}}(\mathcal R+\tilde{\mathcal O}_{3,2}).$$
Similarly, for $\Theta$ satisfying
$$\nabla_4\Theta=\psi\nabla\psi_{H}+\psi\psi\psi+\psi\Psi_H,$$
we commute with angular derivatives to obtain
$$\nabla_{3}\nabla^{i}\Theta=\sum_{i_1+i_2+i_3+i_4=i}(\nabla^{i_1}\psi^{i_2}\nabla^{i_3}\psi\nabla^{i_4+1}\psi_{H}+\nabla^{i_1}\psi^{i_2+1}\nabla^{i_3}\psi\nabla^{i_4}\psi+\nabla^{i_1}\psi^{i_2}\nabla^{i_3}\psi\nabla^{i_4}\Psi_H),$$
We then estimate as above and use Proposition \ref{transport} to get,
$$\sum_{i=0}^2\sup_{\ub}||\nabla^i\Theta||_{L^2(S_{u,\ub})}\leq C(\mathcal O_0)+C(\mathcal O_0)\epsilon^{\frac{1}{2}}(\mathcal R+\tilde{\mathcal O}_{3,2}).$$
The Proposition follows from choosing $\epsilon$ sufficiently small depending on $\mathcal R$ and $\tilde{\mathcal O}_{3,2}$.
\end{proof}
We now write down the explicit $\Theta$ quantities following \cite{KlRo} and show that the estimates for some of the Ricci coefficients can be recovered from $\Theta$ via elliptic equations.
\begin{proposition}\label{ellipticTheta}
Assume 
$$\mathcal R<\infty,\quad \tilde{\mathcal O}_{3,2}< \infty.$$
Let $\Theta$ denote any of the following quantities:
$$\nab\trch,\quad \nab\trchb,\quad \mu:=-\div\eta-\rhoc,\quad \mub:=-\div\etab-\rhoc.$$ 
Then $\Theta$ satisfies (\ref{thtransport3}) or (\ref{thtransport4}). Moreover, there exists $\epsilon_0=\epsilon_0(\mathcal O_0,\mathcal R,\tilde{\mathcal O}_{3,2})$ such that for $\epsilon\leq \epsilon_0$, we have
$$\sup_{u,\ub}||\nab^3(\trch,\trchb)||_{L^2(S_{u,\ub})}\leq C(\mathcal O_0),$$
$$\sup_u||\nabla^3(\chih,\eta,\etab)||_{L^2(H_u)} \leq C(\mathcal O_0)(\epsilon^{\frac 12}+\mathcal R),$$
$$\sup_{\ub}||\nabla^3(\chibh,\eta,\etab)||_{L^2(\Hb_{\ub})} \leq C(\mathcal O_0)(\epsilon^{\frac 12}+\mathcal R),$$
where, as before, $\psi_H\in\{\trch,\chih,\eta,\etab,\omega,\trchb\}$, $\psi_{\Hb}\in\{\trch,\eta,\etab,\omegab,\trchb,\chibh\}$.
\end{proposition}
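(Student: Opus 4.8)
The plan is to proceed in three stages: (i) verify that each of $\nab\trch,\nab\trchb,\mu,\mub$ satisfies a transport equation of the precise form (\ref{thtransport3}) or (\ref{thtransport4}); (ii) apply Proposition \ref{Theta} to conclude $\sum_{i\le 2}\sup_{u,\ub}\|\nab^i\Theta\|_{L^2(S_{u,\ub})}\le C(\mathcal O_0)$ for each of these four quantities; and (iii) feed these bounds into Hodge/elliptic estimates (Propositions \ref{ellipticthm}, \ref{elliptictraceless}, together with the Gauss curvature bound of Proposition \ref{Kest}) to recover the third angular derivatives of $\trch,\trchb,\chih,\chibh,\eta,\etab$.

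Stage (i) is where the structure of the equations must be used. For $\nab\trch$ one commutes the first equation of (\ref{null.str1}), $\nab_4\trch=-\tfrac12(\trch)^2-|\chih|^2-2\omega\trch$, with $\nab$: the commutator $[\nab_4,\nab]$ contributes, by Proposition \ref{commuteeqn}, only terms $\psi\nab\psi+\psi\psi\psi+\psi\beta$, while differentiating the right side gives $\psi\nab\trch+\psi\nab\chih+\psi\nab\omega$; all of these have $\psi_H\in\{\trch,\chih,\omega\}$ and $\Psi_H=\beta$, so (\ref{thtransport4}) holds. The $3\leftrightarrow 4$ symmetric computation (using the seventh equation of (\ref{null.str1}) would introduce $\nab\rho$, hence one must use the equation for $\nab_3\trchb$) shows $\nab\trchb$ satisfies (\ref{thtransport3}) with $\psi_{\Hb}\in\{\trchb,\chibh,\omegab\}$, $\Psi_{\Hb}=\betab$. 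For $\mu=-\div\eta-\rhoc$ the renormalization of $\rho$ is essential: from $\nab_4\eta=-\chi\cdot(\eta-\etab)-\beta$ one gets $-\div(\nab_4\eta)=\div(\chi\cdot(\eta-\etab))+\div\beta$, while combining the $\nab_4\rho$, $\nab_4\chih$, $\nab_4\chibh$ equations of (\ref{eq:null.Bianchi}), (\ref{null.str1}) yields $\nab_4\rhoc=\div\beta-\tfrac12\chih\cdot\nab\etab+\psi\psi+\psi\psi\psi+\psi\beta$ — the two occurrences of $\chibh\cdot\alpha$ cancel. Hence in $\nab_4\mu$ the $\div\beta$ terms cancel, $[\nab_4,\div]$ contributes only $\psi\nab\psi_H+\psi\psi\psi+\psi\beta$, and what remains is of type $\psi\nab\psi_H+\psi\psi\psi+\psi\Psi_H$ with $\psi_H\in\{\trch,\chih,\eta,\etab\}$, $\Psi_H\in\{\beta,\betab\}$; crucially no $\alpha$, no $\nab\chibh$ and no $\nab\omegab$ survive, so (\ref{thtransport4}) holds. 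The analogous computation gives (\ref{thtransport3}) for $\mub=-\div\etab-\rhoc$.

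For stage (iii): $\nab^3\trch=\nab^2(\nab\trch)$ and $\nab^3\trchb=\nab^2(\nab\trchb)$ are bounded by $C(\mathcal O_0)$ directly from stage (ii). For $\chih$ on $H_u$ apply Proposition \ref{elliptictraceless} to the Codazzi equation $\div\chih=\tfrac12\nab\trch-\tfrac12(\eta-\etab)\cdot(\chih-\tfrac12\trch)-\beta$; after controlling $\nab^3\trch$ and the lower-order products via Proposition \ref{Ricci}, the right side $f$ obeys $\sum_{j\le 2}\|\nab^j f\|_{L^2(S)}\le C(\mathcal O_0)+\sum_{j\le 2}\|\nab^j\beta\|_{L^2(S)}$, and squaring and integrating in $\ub$ turns $\|\nab^j\beta\|_{L^2(S)}$ into $\|\nab^j\beta\|_{L^2(H_u)}\le\mathcal R$, yielding $\|\nab^3\chih\|_{L^2(H_u)}\le C(\mathcal O_0)(\epsilon^{1/2}+\mathcal R)$; the bound for $\chibh$ on $\Hb_{\ub}$ follows identically from the Codazzi equation for $\chibh$, since $\betab$ is controlled in $L^2(\Hb_{\ub})$ by $\mathcal R$. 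For $\eta$ use the Hodge system $\div\eta=-\mu-\rhoc$, $\curl\eta=\sigma+\tfrac12\chibh\wedge\chih$ coming from the definition of $\mu$ and the third equation of (\ref{null.str3}); since $\rhoc=\rho-\tfrac12\chih\cdot\chibh$ we get $\sum_{j\le 2}\|\nab^j(\div\eta)\|_{L^2(S)}\le C(\mathcal O_0)+\sum_{j\le 2}\|\nab^j\rho\|_{L^2(S)}$ and likewise for $\curl\eta$ with $\sigma$, so Proposition \ref{ellipticthm} followed by integration in $\ub$ (resp.\ $u$) gives the claimed bounds on $H_u$ (resp.\ $\Hb_{\ub}$), since $\rho,\sigma$ are controlled in $L^2$ on both families of hypersurfaces by $\mathcal R$; $\etab$ is treated the same way using $\mub$ and $\curl\etab=-\sigma-\tfrac12\chibh\wedge\chih$.

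The main obstacle is stage (i): confirming that after the $\rho$-renormalization the equations for $\mu$ and $\mub$ are genuinely free of $\alpha,\alphab$ and of the ``wrong-direction'' angular derivatives $\nab\chibh,\nab\omegab$ (in the $\nab_4$ equation) and $\nab\chih,\nab\omega$ (in the $\nab_3$ equation). This is precisely the structural cancellation at the heart of the paper's scheme; the bookkeeping is not deep but must be carried out carefully. Everything downstream is a routine application of the transport estimate of Proposition \ref{Theta}, the elliptic estimates, and Cauchy--Schwarz in the null directions.
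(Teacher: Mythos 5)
Your proposal is correct and follows essentially the same route as the paper: identify each of $\nab\trch,\nab\trchb,\mu,\mub$ as a $\Theta$ quantity satisfying (\ref{thtransport3}) or (\ref{thtransport4}), invoke Proposition \ref{Theta} to bound $\nab^2\Theta$ in $L^2(S)$, and recover $\nab^3$ of $\trch,\trchb$ directly and $\nab^3$ of $\chih,\chibh,\eta,\etab$ through the Codazzi equations and the $\div$--$\curl$ systems for $\eta,\etab$ via Propositions \ref{ellipticthm}, \ref{elliptictraceless}, \ref{Kest}, finishing with Cauchy--Schwarz in $\ub$ (resp.\ $u$) to convert $\|\nab^j\beta\|_{L^2(S)}$ (resp.\ $\|\nab^j\betab\|_{L^2(S)}$) into the flux norm $\mathcal R$. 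The one place you add genuinely useful detail over the paper's terse exposition is the explicit verification in stage (i) that the $-\frac12\chibh\cdot\alpha$ term in the $\nab_4\rho$ Bianchi equation cancels against the $\alpha$ appearing in $\nab_4\chih$ when forming $\nab_4\rhoc$, and that no $\nab\chibh$, $\nab\omegab$ derivatives survive in $\nab_4\mu$ (resp.\ no $\nab\chih$, $\nab\omega$ in $\nab_3\mub$); this is precisely what makes $\mu,\mub$ admissible $\Theta$'s and the paper simply asserts it.
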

\begin{proof}
Consider the following equations:
$$\nabla_3 \trchb=\psi_{\Hb}\psi_{\Hb},$$
$$\nabla_4 \trch=\psi_H\psi_H.$$
These equations do not involve a curvature term. After commuting with an angular derivative, we have
$$\nabla_{3}\nabla\trchb=\psi\nabla\psi_{\Hb}+\psi\psi\psi,$$
$$\nabla_{4}\nabla\trch=\psi\nabla\psi_H+\psi\psi\psi.$$
Thus $\nabla\trch$ and $\nabla\trchb$ satisfy the requirement to be one of the $\Theta$ quantities.
Thus, by definition,
$$||\nabla^3\trch||_{L^2(S_{u,\ub})} \leq C||\nabla^2\Theta||_{L^2(S_{u,\ub})}.$$
Moreover,
\begin{equation}\label{elliptictrch}
||\nabla^3\trch||_{L^2(H_u)} \leq C\epsilon^{\frac 12}\sup_{\ub}||\nabla^2\Theta||_{L^2(S_{u,\ub})}.
\end{equation}
and
\begin{equation}\label{elliptictrchb}
||\nabla^3\trchb||_{L^2(\Hb_{\ub})} \leq C\epsilon^{\frac 12}\sup_u||\nabla^2\Theta||_{L^2(S_{u,\ub})}.
\end{equation}
From the estimates for $\nabla^3\trch$ and $\nabla^3\trchb$, we can obtain the estimates for $\nabla^3\chih$ and $\nabla^3\chibh$. This is because of the Codazzi equations:
$$\div\chih=\frac{1}{2}\nabla\trch-\beta+\psi\psi,$$
$$\div\chibh=\frac{1}{2}\nabla\trchb+\betab+\psi\psi,$$
Since $\chih$ and $\chibh$ are symmetric traceless 2-tensors and we have the estimates for the Gauss curvature in Proposition \ref{Kest}, we can apply the elliptic estimates for symmetric traceless 2-tensors in Proposition \ref{elliptictraceless}
\begin{equation*}
\begin{split}
||\nabla^3\chih||_{L^2(H_u)} 
\leq &C(||\nabla^3\trch||_{L^2(H_u)}+||\nabla^2\beta||_{L^2(H_u)}+\sum_{i_1+i_2\leq 2}||\nabla^{i_1}\psi\nabla^{i_2}\psi||_{L^2(H_u)}).
\end{split}
\end{equation*}
Since, by Proposition \ref{Ricci},
\begin{equation}\label{quadratic}
\begin{split}
&\sum_{i_1+i_2\leq 2}||\nabla^{i_1}\psi\nabla^{i_2}\psi||_{L^2(H_u)}\\
\leq &C\epsilon^{\frac 12}\sup_{\ub}(||\nabla\psi||_{L^4(S_{u,\ub})}||\nabla\psi||_{L^4(S_{u,\ub})}+||\psi||_{L^\infty(S_{u,\ub})}\sum_{i\leq 2}||\nabla^i\psi||_{L^2(S_{u,\ub})})\leq C(\mathcal O_0)\epsilon^{\frac 12},
\end{split}
\end{equation}
we have
\begin{equation}\label{ellipticchih}
||\nabla^3\chih||_{L^2(H_u)} \leq C\epsilon^{\frac 12}||\nabla^2\Theta||_{L^2(S_{u,\ub})}+C\mathcal R+C(\mathcal O_0)\epsilon^{\frac 12}.
\end{equation}
Similarly, using the Codazzi equation for $\chibh$, we get
\begin{equation}\label{ellipticchibh}
||\nabla^3\chibh||_{L^2(\Hb_{\ub})} \leq C\epsilon^{\frac 12}||\nabla^2\Theta||_{L^2(S_{u,\ub})}+C\mathcal R+C(\mathcal O_0)\epsilon^{\frac 12}.
\end{equation}

Define
$$\mu=-\div\eta-\rhoc,$$
$$\mub=-\div\etab-\rhoc,$$
where as before $\rhoc=\rho-\frac 12 \chibh\cdot\chih$.
Notice that $\div\eta$ and $\div\etab$ satisfy the following equations:
$$\nabla_4\div\eta=-\div\beta+\psi\nabla\psi+\psi\psi\psi+\psi\Psi_H,$$
$$\nabla_3\div\etab=\div\betab+\psi\nabla\psi+\psi\psi\psi+\psi\Psi_{\Hb},$$
Recall the Bianchi equation:
\begin{equation*}
\begin{split}
&\nabla_4\rho+\frac 32\trch\rho=\div\beta-\frac 12\chibh\cdot\alpha+\zeta\cdot\beta+2\etab\cdot\beta,\\
&\nab_3\rho+\frac 32\trchb\rho=-\div\betab- \frac 12\chih\cdot\alphab+\zeta\cdot\betab-2\eta\cdot\betab,\\
\end{split}
\end{equation*}
In the $\nabla_4$ equation for $\rho$, the term $\alpha$ appears. This is undesirable because we do not have good estimates for $\alpha$. Instead we renormalize the equations using
$$\rhoc:=\rho-\frac{1}{2}\chibh\cdot\chih.$$
Now, we write down the equations for $\rhoc$ instead of $\sigma$ and $\rho$. Since $\nabla_4\chih=-\alpha+\psi\psi$, the $\nab_4$ equations now do not contain $\alpha$. Schematically, we have
\begin{equation*}
\begin{split}
&\nabla_4\rhoc=\div\beta+\psi\nabla\psi+\psi\Psi_H+\psi\psi\psi,\\
&\nabla_3\rhoc=-\div\betab+\psi\nabla\psi+\psi\Psi_{\Hb}+\psi\psi\psi,
\end{split}
\end{equation*}
where all the curvature components are now acceptable. Thus $\mu$ and $\mub$ satisfy the following equations:
$$\nabla_4 \mu=\psi\Psi_H+\psi\nabla\psi_H+\psi\psi\psi,$$
$$\nabla_3 \mub=\psi\Psi_{\Hb}+\psi\nabla\psi_{\Hb}+\psi\psi\psi,$$
Thus $\mu$ and $\mub$ also satisfy the schematic equation for $\Theta$. We now show that we can get the estimates $\nabla^3\eta$ and $\nabla^3\eta$ from that of $\nabla^2\mu$ and $\nabla^3\mub$. To do this, we consider the following div-curl systems
$$\div\eta=-\mu-\rho+\psi\psi,$$
$$\curl\eta=\sigma+\psi\psi,$$
$$\div\etab=-\mub-\rho+\psi\psi,$$
$$\curl\etab=-\sigma+\psi\psi.$$
Notice that $\rho$ and $\sigma$ can be estimated by $\mathcal R$ on both $H_u$ and $\Hb_{\ub}$. Then, using (\ref{quadratic}) and Proposition \ref{ellipticthm}, we have
\begin{equation}\label{ellipticetau}
||\nabla^3(\eta,\etab)||_{L^2(H_u)} \leq C\epsilon^{\frac 12}||\nabla^2\Theta||_{L^2(S_{u,\ub})}+C\mathcal R+C(\mathcal O_0)\epsilon^{\frac 12},
\end{equation}
and
\begin{equation}\label{ellipticetaub}
||\nabla^3(\eta,\etab)||_{L^2(\Hb_{\ub})} \leq C\epsilon^{\frac 12}||\nabla^2\Theta||_{L^2(S_{u,\ub})}+C\mathcal R+C(\mathcal O_0)\epsilon^{\frac 12}.
\end{equation}
The Proposition thus follows from (\ref{elliptictrch}), (\ref{elliptictrchb}), (\ref{ellipticchih}), (\ref{ellipticchibh}), (\ref{ellipticetau}), (\ref{ellipticetaub}) and Proposition \ref{Theta}.
\end{proof}

The above construction of $\Theta$ provides the desired estimates for the third derivatives of all the Ricci coefficients with the exception of $\omega$ and $\omegab$. Therefore, to estimate the $\tilde{\mathcal O}_{3,2}$ norm, it remains to prove estimates for $\nab^3\omega$ and $\nab^3\omegab$. To this end we follow the strategy in \cite{KlRo}. Define two auxiliary quantities $\omega^\dagger$ and $\omegab^\dagger$ to be solutions to 
$$\nabla_3\omega^\dagger=\frac{1}{2}\sigma,$$
$$\nabla_4\omegab^\dagger=\frac{1}{2}\sigma,$$
with zero boundary conditions along $H_0$ and $\Hb_0$ respectively. For the remainder of the Section, $\psi$ will also be used to denote $\omega^\dagger$ and $\omegab^\dagger$; $\psi_H$ will also be used to denote $\omega^\dagger$; and $\psi_{\Hb}$ will also be used to denote $\omegab^\dagger$. It is easy to see that
\begin{proposition}\label{daggerest}
Suppose
$$\mathcal R<\infty.$$
Then there exists $\epsilon_0=\epsilon_0(\mathcal O_0,\mathcal R)$ such that for every $\epsilon\leq \epsilon_0$,
$$\sum_{i\leq 2}||\nab^i(\omega^\dagger,\omegab^\dagger)||_{L^2(S_{u,\ub})}\leq C.$$
\end{proposition}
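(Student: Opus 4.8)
The plan is to estimate $\omega^\dagger$ and $\omegab^\dagger$ by exactly the transport-equation scheme used for the $\Theta$ quantities in Proposition \ref{Theta}, taking advantage of the fact that the only curvature component entering their evolution equations is $\sigma$, which is one of the components already controlled by $\mathcal R$ (on $\Hb_{\ub}$ in the case of $\omega^\dagger$, on $H_u$ in the case of $\omegab^\dagger$). I will carry out the argument for $\omega^\dagger$ and obtain $\omegab^\dagger$ by the symmetric computation, interchanging $e_3\leftrightarrow e_4$ and $H_u\leftrightarrow\Hb_{\ub}$ and integrating in $\ub$ rather than in $u$.

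First I would commute the defining equation $\nabla_3\omega^\dagger=\frac{1}{2}\sigma$ (with vanishing data on $H_0$) with up to two angular derivatives, using the simplified commutation formula of Section \ref{commutation} and substituting the Codazzi equation $\betab=\nabla\chibh+\psi\chibh$ to dispose of the single curvature term the commutator produces at second order. This yields schematically, for $i\le 2$,
\[
\nabla_3\nabla^i\omega^\dagger=\sum_{i_1+i_2+i_3=i}\nabla^{i_1}\psi^{i_2}\nabla^{i_3}\sigma+\sum_{i_1+i_2+i_3+i_4=i}\nabla^{i_1}\psi^{i_2}\nabla^{i_3}\chib\,\nabla^{i_4}\omega^\dagger,
\]
where $\omega^\dagger$ itself is counted among the $\psi$'s. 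The one structural point that actually needs checking is that neither $\alpha$ nor $\beta$ appears anywhere: the right-hand side of the transport equation is pure $\sigma$, and the commutator contributes only $\chib$ (a Ricci coefficient) and $\betab$, which is eliminated via Codazzi.

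Next I would apply Proposition \ref{transport} in the $e_3$ direction, using the zero boundary condition on $H_0$ so that no contribution from initial data survives. The curvature terms $\nabla^{i_3}\sigma$, $i_3\le 2$, are estimated by Cauchy--Schwarz in $u$, gaining a factor $\epsilon^{1/2}$ and leaving $\|\nabla^{i_3}\sigma\|_{L^2(\Hb_{\ub})}\le C\mathcal R$ (for $\omegab^\dagger$ one uses instead $\|\nabla^{i_3}\sigma\|_{L^2(H_u)}\le C\mathcal R$, since $\sigma$ is among the components controlled on the outgoing cones). The Ricci-coefficient factors are handled with the bounds $\mathcal O\le C(\mathcal O_0)$ from Proposition \ref{Ricci} together with the Sobolev embeddings of Propositions \ref{L4} and \ref{Linfty}; the terms $\psi\,\nabla^{i_4}\omega^\dagger$ come, after the $u$-integration, with a full power of $\epsilon$, so they absorb into the left-hand side. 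Summing over $i\le 2$, taking the supremum over $(u,\ub)$, and choosing $\epsilon=\epsilon_0(\mathcal O_0,\mathcal R)$ small enough to carry out this absorption gives the stated bound.

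I do not anticipate a genuine obstacle: the auxiliary quantities $\omega^\dagger$ and $\omegab^\dagger$ were introduced precisely so that their evolution avoids the problematic curvature components, which is why the statement is flagged as ``easy to see.'' The only place that rewards care is the bookkeeping of the commutator terms at two angular derivatives --- verifying that $\alpha$ and $\beta$ are genuinely absent, and that every Ricci factor that survives is one already controlled at this stage (in particular that no $\nabla^3\omega$ or $\nabla^3\omegab$, not yet available, is ever needed).
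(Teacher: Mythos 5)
Your proposal follows the same route as the paper: treat $\omega^\dagger,\omegab^\dagger$ via the transport equations $\nab_3\omega^\dagger=\frac 12\sigma$, $\nab_4\omegab^\dagger=\frac 12\sigma$ with vanishing data, commute with angular derivatives, apply Proposition~\ref{transport}, and use that $\sigma$ is controlled on both $H_u$ and $\Hb_{\ub}$ by $\mathcal R$ to gain the factor $\epsilon^{1/2}$. The paper's proof is terser (it does not spell out the commutation bookkeeping), but the content and the mechanism are identical, and your check that $\alpha$ and $\beta$ never enter is exactly the structural point the construction of these auxiliary quantities was designed to exploit.
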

\begin{proof}
Since $\omega^\dagger$ and $\omegab^\dagger$ obey the equations
$$\nabla_3\omega^\dagger=\frac{1}{2}\sigma,$$
$$\nabla_4\omegab^\dagger=\frac{1}{2}\sigma,$$
with zero data, it follows from Proposition \ref{transport} that 
$$\sum_{i\leq 2}||\nab^i(\omega^\dagger,\omegab^\dagger)||_{L^2(S_{u,\ub})}\leq C(\mathcal O_0)\epsilon^{\frac 12}\mathcal R.$$
The Proposition follows from choosing $\epsilon$ appropriately.
\end{proof}
We are now ready to use elliptic estimates to control $\nab^3\omega$ and $\nab^3\omegab$. As before, our strategy is to use a combination $\Theta=\nab\psi+\Psi$ that can be estimated using a transport equation. However, $\Theta$ in this case does not satisfy an equation of the type in Proposition \ref{Theta}.
\begin{proposition}\label{omegaelliptic}
Suppose 
$$\mathcal R<\infty,\quad\tilde{\mathcal O}_{3,2}<\infty.$$
Then there exists $\epsilon_0=\epsilon_0(\mathcal O_0,\mathcal R,\tilde{\mathcal O}_{3,2})$ such that whenever $\epsilon\leq \epsilon_0$, we have
$$||\nab^3(\omega,\omega^\dagger)||_{L^2(H_u)}\leq C(\mathcal O_0)(\epsilon^{\frac 12}+\mathcal R),$$
$$||\nab^3(\omegab,\omegab^\dagger)||_{L^2(\Hb_{\ub})}\leq C(\mathcal O_0)(\epsilon^{\frac 12}+\mathcal R),$$
\end{proposition}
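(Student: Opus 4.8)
The plan is to adapt the $\Theta$-variable method of Proposition \ref{ellipticTheta}, with an auxiliary quantity that now only satisfies a transport equation of the more singular type (\ref{probTh}). I would set
$$\Theta := \nabla\omega + {}^*\nabla\omega^\dagger - \frac 12\beta,\qquad \underline{\Theta} := \nabla\omegab - {}^*\nabla\omegab^\dagger + \frac 12\betab .$$
Using $\nabla_3\omega = \frac 12\rho + \psi\psi$ and $\nabla_3\omega^\dagger = \frac 12\sigma$ from (\ref{null.str2}), commuting with $\nabla$, and then substituting the Bianchi identity $\nabla_3\beta = \nabla\rho + {}^*\nabla\sigma - \trchb\,\beta + 2\omegab\beta + 2\chih\cdot\betab + 3(\eta\rho + {}^*\eta\sigma)$ from (\ref{eq:null.Bianchi}), the curvature terms $\frac 12(\nabla\rho + {}^*\nabla\sigma)$ cancel, and $\Theta$ satisfies a schematic equation of the form $\nabla_3\Theta = \psi\nabla\psi_{\Hb} + \psi\psi\psi + \psi\Psi_{\Hb} + \psi\beta$, i.e. exactly (\ref{probTh}): no angular derivative of a curvature component appears, and the only curvature present besides $\Psi_{\Hb}\in\{\rho,\sigma,\betab,\alphab\}$ is $\beta$ itself, not $\nabla\beta$ --- here one should enlarge the set $\psi_{\Hb}$ to also allow $\omega$ and $\omega^\dagger$. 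The analogous computation in the $e_4$ direction (using $\nabla_4\omegab$, $\nabla_4\omegab^\dagger$ and the Bianchi identity for $\nabla_4\betab$) shows $\nabla_4\underline{\Theta} = \psi\nabla\psi_H + \psi\psi\psi + \psi\Psi_H + \psi\betab$.

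Next I would estimate $\nabla^i\Theta$, $i\leq 2$, in $L^2(H_u)$ (and symmetrically $\nabla^i\underline{\Theta}$ in $L^2(\Hb_{\ub})$). Commuting the transport equation with $\nabla^i$ by Proposition \ref{commuteeqn}, applying Proposition \ref{transport} in the $e_3$ direction, integrating in $\ub$ and using Minkowski's inequality bounds $\sum_{i\leq 2}\|\nabla^i\Theta\|_{L^2(H_u)}$ by a data term on $H_0$ (controlled by the initial norms; the $\omega^\dagger$-part vanishes since $\omega^\dagger|_{H_0}=0$) plus $\int_0^u\sum_{i\leq 2}\|\nabla_3\nabla^i\Theta\|_{L^2(H_{u'})}\,du'$. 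The crucial point, and the reason the estimate can only be proved on $H_u$, is the error generated by $\psi\beta$: after two commutations it produces $\psi\nabla^2\beta$, and $\nabla^2\beta$ is not flux-controlled on incoming hypersurfaces, whereas on $H_u$ it is bounded by $\mathcal R$, yielding a contribution $\lesssim C(\mathcal O_0)\mathcal R$; an attempt to bound $\|\nabla^2\Theta\|_{L^2(\Hb_{\ub})}$ would instead require the uncontrolled norm $\|\nabla^2\beta\|_{L^2(\Hb_{\ub})}$. The curvature and quadratic errors $\psi\nabla^{\leq 2}\Psi_{\Hb}$, $\psi\psi\nabla^{\leq 2}\psi$ are handled by Propositions \ref{Ricci}, \ref{RS} and the definition of $\mathcal R$ (note $\beta\notin\Psi_{\Hb}$). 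The remaining top-order terms $\psi\nabla^3\psi_{\Hb}$ with $\psi_{\Hb}\in\{\trch,\trchb,\eta,\etab,\chibh\}$ each carry a factor $\epsilon$ from the $du'$-integration over a length-$\leq\epsilon$ interval (after Cauchy--Schwarz in $u$ and Minkowski in $\ub$), hence are absorbed using Proposition \ref{ellipticTheta} and $\tilde{\mathcal O}_{3,2}$; likewise the self-referential terms $\psi\nabla^3\omega$, $\psi\nabla^3\omega^\dagger$ and the cross term $\psi\nabla^3\omegab$ carry a factor $\epsilon$ and, together with the symmetric estimate for $\underline{\Theta}$ on $\Hb_{\ub}$, form a coupled linear system that closes once $\epsilon$ is small depending on $\mathcal O_0$, $\mathcal R$, $\tilde{\mathcal O}_{3,2}$. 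This yields $\sum_{i\leq 2}\|\nabla^i\Theta\|_{L^2(H_u)} + \sum_{i\leq 2}\|\nabla^i\underline{\Theta}\|_{L^2(\Hb_{\ub})}\leq C(\mathcal O_0)(\epsilon^{1/2}+\mathcal R)$.

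Finally I would recover $\nabla^3(\omega,\omega^\dagger)$ by elliptic estimates. Treating $\omega$, $\omega^\dagger$ as scalars one has $\curl\nabla\omega=0$, $\div({}^*\nabla\omega^\dagger)=0$ and $\curl({}^*\nabla\omega^\dagger)=-\Delta\omega^\dagger$, so $\Theta+\frac 12\beta = \nabla\omega + {}^*\nabla\omega^\dagger$ gives $\Delta\omega = \div\Theta+\frac 12\div\beta$ and $\Delta\omega^\dagger = -\curl\Theta-\frac 12\curl\beta$. Applying Proposition \ref{ellipticthm} to the $1$-forms $\nabla\omega$, $\nabla\omega^\dagger$, with the Gauss curvature bounds of Proposition \ref{Kest} and the lower-order control of $\nabla^{\leq 2}\omega$, $\nabla^{\leq 2}\omega^\dagger$ from Propositions \ref{Ricci} and \ref{daggerest}, bounds $\|\nabla^3\omega\|_{L^2(S)}+\|\nabla^3\omega^\dagger\|_{L^2(S)}$ by $\|\nabla^2\Theta\|_{L^2(S)}+\|\nabla^2\beta\|_{L^2(S)}$ plus controlled lower-order terms; integrating in $\ub$ and using the previous step together with $\|\nabla^2\beta\|_{L^2(H_u)}\leq\mathcal R$ gives the first inequality. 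The second follows by the symmetric argument with $(\Theta,\beta,H_u)$ replaced by $(\underline{\Theta},\betab,\Hb_{\ub})$.

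The main obstacle is the bookkeeping in the middle step: one must verify that every top-order error term either lives on the ``good'' hypersurface --- the $\psi\nabla^2\beta$ term, which on $H_u$ is controlled by $\mathcal R$ --- or carries an $\epsilon$-gain that can be absorbed, and that the coupling between the $\omega$- and $\omegab$-estimates induced by terms such as $\psi\nabla^3\omegab$ genuinely closes for $\epsilon$ small. A secondary point is to confirm that, after the Bianchi substitution, $\Theta$ involves $\beta$ but not $\nabla\beta$, and that the commutator terms from Proposition \ref{commuteeqn} involve at most $\nabla^{\leq 1}\betab$, which remains within the already-controlled norms.
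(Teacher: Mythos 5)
Your proposal reproduces the paper's own argument: the auxiliary one-forms you call $\Theta,\underline{\Theta}$ are (up to an overall sign on $\underline{\Theta}$) precisely the quantities $\kappa,\kappab$ the paper introduces, the Bianchi cancellation eliminating $\nabla\rho,{}^*\nabla\sigma$, the observation that only undifferentiated $\beta$ (resp.\ $\betab$) survives so that the transport estimate must be taken in $L^2(H_u)$ (resp.\ $L^2(\Hb_{\ub})$) by integrating the sphere estimate in $\ub$ (resp.\ $u$) and invoking $\|\nab^{\le 2}\beta\|_{L^2(H_u)}\le\mathcal R$, the $\epsilon$-absorption of the coupled top-order terms, and the final div-curl recovery of $\nab^3\omega,\nab^3\omega^\dagger$ using Propositions \ref{ellipticthm} and \ref{Kest} all match the paper's proof step by step. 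The only cosmetic differences are sign conventions and your bookkeeping of the self-coupled top-order term (you list $\psi\nab^3\omegab$, which is in fact already controlled by $\tilde{\mathcal O}_{3,2}$; the genuinely coupled terms are $\psi\nab^3\omega^\dagger,\psi\nab^3\omegab^\dagger$), but neither affects the closing of the argument.
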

\begin{proof}
Let
$$\kappa:=\nabla\omega+^*\nabla\omega^{\dagger}-\frac 12 \beta,$$
$$\kappab:=-\nabla\omegab+^*\nabla\omegab^{\dagger}-\frac 12 \betab.$$
We have 
$$\nabla_3\nabla\omega=\frac 12 \nabla\rho+\psi\nabla\psi+\psi\Psi+\psi\psi\psi,$$
$$\nabla_3{ }^*\nabla\omega^\dagger=\frac {1}{2} {^*\nabla}\sigma+\psi\Psi+\psi\nabla\omega^\dagger,$$
and the Bianchi equation
$$\nabla\rho+^*\nabla\sigma=\nabla_3\beta+\psi\Psi.$$
Therefore
\begin{equation}\label{kappaeqn}
\nabla_3\kappa=\psi\Psi+\psi\nabla\psi+\psi\psi\psi.
\end{equation}
Similar, using
$$\nabla_4\nabla\omegab=\frac 12 \nabla\rho+\psi\nabla\psi+\psi\Psi+\psi\psi\psi,$$
$$\nabla_4{ }^*\nabla\omegab^\dagger=\frac 12 {^*\nabla}\sigma+\psi\Psi+\psi\nabla\omega^\dagger,$$
and the Bianchi equation
$$-\nabla\rho+^*\nabla\sigma=\nabla_4\betab+\psi\Psi,$$
we have
\begin{equation}\label{kappabeqn}
\nabla_4\kappab=\psi\Psi+\psi\nabla\psi+\psi\psi\psi.
\end{equation}
The equations for $\kappa$ and $\kappab$ do not have the favorable structure of the equations of the other $\Theta$ variables. In particular, for the $\nab_3\kappa$ equation, $\beta$ and $\nab\omega$ appear as source terms. The loss of the $L^2$ energy estimate for $\alpha$ is automatically accompanied by the loss of estimate for $\beta$ in $L^2(\Hb)$. Therefore, $\nab^2\beta$ cannot be controlled in $L^2(\Hb_{\ub})$. This means that we cannot estimate $\nab^2\kappa$ in $L^2(\Hb_{\ub})$. Nevertheless, as we now show, we can estimate $\nab^2\kappa$ in $L^2(H_u)$. This estimate will be sufficient in the sequel.

Commuting (\ref{kappaeqn}) with angular derivatives, we get
$$\nabla_3\nab^2\kappa=\sum_{i_1+i_2+i_3\leq 2}\psi^{i_1}\nab^{i_2}\psi\nab^{i_3}\Psi+\sum_{i_1+i_2+i_3\leq 3}\psi^{i_1}\nab^{i_2}\psi\nabla^{i_3}\psi.$$
Notice that we now allow $\psi$ to represent the terms $\omega^\dagger$ and $\omegab^\dagger$. Since by Proposition \ref{daggerest} $\nab^i\omega^\dagger$ and $\nab^i\omegab^\dagger$ satisfy the same estimates as any other $\psi$ for $i\leq 2$, those terms can be estimated in the same way as in Proposition \ref{Theta}. The new terms that have not been estimated in Proposition \ref{Theta} are $\psi\nab^3\omegab^\dagger$, $\displaystyle\sum_{i_1+i_2+i_3\leq 2}\psi^{i_1}\nab^{i_2}\psi\nab^{i_3}\beta$ and $\psi\nab^3(\chih,\omega,\omega^\dagger)$. We estimate them by
$$||\psi\nab^3\omegab^\dagger||_{L^1_uL^2(S)}\leq C(\mathcal O_0)\epsilon^{\frac 12}||\nab^3\omegab^\dagger||_{L^2_uL^2(S)},$$
and
$$||\sum_{i_1+i_2+i_3\leq 2}\psi^{i_1}\nab^{i_2}\psi\nab^{i_3}\beta||_{L^1_uL^2(S_{u,\ub})}\leq C(\mathcal O_0)\epsilon^{\frac 12}\sum_{i\leq 2}||\nab^i\beta||_{L^2_uL^2(S_{u,\ub})},$$
and
$$||\psi\nab^3(\chih,\omega,\omega^\dagger)||_{L^1_uL^2(S)}\leq C(\mathcal O_0)\epsilon^{\frac 12}||\nab^3(\chih,\omega,\omega^\dagger)||_{L^2_uL^2(S)}.$$
Then, using Proposition \ref{transport} and the estimates in the proof of Proposition \ref{Theta}, we have
\begin{equation*}
\begin{split}
&||\nab^2\kappa||_{L^2(S_{u,\ub})}\\
\leq& C(\mathcal O_0)(1+\epsilon^{\frac 12}\tilde{\mathcal O}_{3,2}+\epsilon^{\frac 12}\mathcal R+\epsilon^{\frac 12}||\nab^3\omegab^\dagger||_{L^2_uL^2(S_{u,\ub})}\\
&\qquad+\epsilon^{\frac 12}\sum_{i\leq 2}||\nab^i\beta||_{L^2_uL^2(S_{u,\ub})}+\epsilon^{\frac 12}||\nab^3(\chih,\omega,\omega^\dagger)||_{L^2_uL^2(S_{u,\ub})}).
\end{split}
\end{equation*}
This holds for any fixed $u,\ub$. Integrating this in $L^2_{\ub}$, we get
\begin{equation*}
\begin{split}
&||\nab^2\kappa||_{L^2(H_u)}\\
\leq &C(\mathcal O_0)\epsilon^{\frac 12}(1+\epsilon^{\frac 12}(\tilde{\mathcal O}_{3,2}+\mathcal R+||\nab^3\omegab^\dagger||_{L^\infty_{\ub}L^2_uL^2(S)})\\
&\qquad+\sum_{i\leq 2}||\nab^i\beta||_{L^2_{\ub}L^2_uL^2(S)}+||\nab^3(\chih,\omega,\omega^\dagger)||_{L^2_{\ub}L^2_uL^2(S)}).
\end{split}
\end{equation*}
Using Cauchy-Schwarz in the $u$ variable, we have
$$\sum_{i\leq 2}||\nab^i\beta||_{L^2_{\ub}L^2_uL^2(S_{u,\ub})}\leq \epsilon^{\frac 12}\mathcal R,$$
and
$$||\nab^3(\chih,\omega,\omega^\dagger)||_{L^2_{\ub}L^2_uL^2(S_{u,\ub})})\leq \epsilon^{\frac 12}\tilde{\mathcal O}_{3,2}+\epsilon^{\frac 12}||\nab^3\omega^\dagger||_{L^\infty_{u}L^2_{\ub}L^2(S_{u,\ub})}.$$
Therefore,
\begin{equation}\label{kappa}
||\nab^2\kappa||_{L^2(H_u)}\leq C(\mathcal O_0)\epsilon^{\frac 12}(1+\epsilon^{\frac 12}\tilde{\mathcal O}_{3,2}+\epsilon^{\frac 12}\mathcal R+\epsilon^{\frac 12}||\nab^3\omegab^\dagger||_{L^\infty_{\ub}L^2_uL^2(S_{u,\ub})}+\epsilon^{\frac 12}||\nab^3\omega^\dagger||_{L^\infty_{u}L^2_{\ub}L^2(S_{u,\ub})}).
\end{equation}
Similarly, we can use (\ref{kappabeqn}) to conclude that 
\begin{equation}\label{kappab}
||\nab^2\kappab||_{L^2(\Hb_{\ub})}\leq C(\mathcal O_0)\epsilon^{\frac 12}(1+\epsilon^{\frac 12}\tilde{\mathcal O}_{3,2}+\epsilon^{\frac 12}\mathcal R+\epsilon^{\frac 12}||\nab^3\omegab^\dagger||_{L^\infty_{\ub}L^2_uL^2(S_{u,\ub})}+\epsilon^{\frac 12}||\nab^3\omega^\dagger||_{L^\infty_{u}L^2_{\ub}L^2(S_{u,\ub})}).
\end{equation}
We show that we can recover $\nabla^3\omega$, $\nabla^3\omegab$, $\nabla^3\omega^\dagger$ and $\nabla^3\omegab^\dagger$ by considering the div-curl system for $\nabla\omega$ and $\nabla\omega^\dagger$. We have
$$\div\nabla\omega=\div\kappa+\frac 12\div\beta,$$
$$\curl\nabla\omega=0,$$
$$\div\nabla\omega^\dagger=\curl\kappa+\frac 12\curl\beta,$$
$$\curl\nabla\omega^\dagger=0,$$
$$\div\nabla\omegab=-\div\kappab-\frac 12\div\betab,$$
$$\curl\nabla\omegab=0,$$
$$\div\nabla\omegab^\dagger=-\curl\kappab-\frac 12\curl\betab,$$
$$\curl\nabla\omegab^\dagger=0.$$
Thus, by Proposition \ref{ellipticthm} and \ref{Kest}, we have
\begin{equation*}\label{ellipticomega}
||\nabla^3(\omega,\omega^\dagger)||_{L^2(H_u)}\leq C(||\nabla^2\kappa||_{L^2(H_u)}+\sum_{i\leq 2}||\nabla^i\beta||_{L^2(H_u)}),
\end{equation*}
and
\begin{equation*}\label{ellipticomegab}
||\nabla^3(\omegab,\omegab^\dagger)||_{L^2(\Hb_{\ub})}\leq C(||\nabla^2\kappab||_{L^2(\Hb_{\ub})}+\sum_{i\leq 2}||\nabla^i\betab||_{L^2(\Hb_{\ub})}).
\end{equation*}
Using (\ref{kappa}) and (\ref{kappab}), we can thus choose $\epsilon$ sufficiently small to conclude the Proposition.
\end{proof}

Putting these all together gives
\begin{proposition}\label{Riccielliptic}
Assume
$$\mathcal R<\infty.$$
Then there exists $\epsilon_0=(\mathcal O_0,\mathcal R)$ such that 
\[
\tilde{\mathcal O}_{3,2}\leq C(\mathcal O_0)(\epsilon^{\frac 12}+\mathcal R).
\]
\end{proposition}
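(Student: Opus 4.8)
The plan is essentially to \emph{assemble} the third-order estimates that have already been produced, component by component, in Propositions \ref{ellipticTheta} and \ref{omegaelliptic}. Recall that $\tilde{\mathcal O}_{3,2}$ is the sum of $\|\nab^3\psi_H\|_{L^2(H_u)}$ over $\psi_H\in\{\trch,\chih,\eta,\etab,\omega,\trchb\}$ together with $\|\nab^3\psi_{\Hb}\|_{L^2(\Hb_{\ub})}$ over $\psi_{\Hb}\in\{\trch,\eta,\etab,\omegab,\trchb,\chibh\}$. First I would invoke Proposition \ref{ellipticTheta}: it directly gives $\sup_u\|\nab^3(\chih,\eta,\etab)\|_{L^2(H_u)}$ and $\sup_{\ub}\|\nab^3(\chibh,\eta,\etab)\|_{L^2(\Hb_{\ub})}$ bounded by $C(\mathcal O_0)(\epsilon^{1/2}+\mathcal R)$, and it gives the stronger $S$-level bound $\sup_{u,\ub}\|\nab^3(\trch,\trchb)\|_{L^2(S_{u,\ub})}\leq C(\mathcal O_0)$; integrating this $S$-bound in $\ub$ over an interval of length at most $\epsilon$ (respectively in $u$) converts it into $\|\nab^3\trch\|_{L^2(H_u)}$ and $\|\nab^3\trchb\|_{L^2(\Hb_{\ub})}\leq C(\mathcal O_0)\epsilon^{1/2}$, and likewise for $\trch$ on $\Hb_{\ub}$ and $\trchb$ on $H_u$. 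Then I would invoke Proposition \ref{omegaelliptic} for the two remaining entries, $\|\nab^3\omega\|_{L^2(H_u)}$ and $\|\nab^3\omegab\|_{L^2(\Hb_{\ub})}$, each bounded by $C(\mathcal O_0)(\epsilon^{1/2}+\mathcal R)$. Adding these estimates over the relevant coefficients and taking suprema in $u$ and $\ub$ produces $\tilde{\mathcal O}_{3,2}\leq C(\mathcal O_0)(\epsilon^{1/2}+\mathcal R)$.

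The only genuinely non-mechanical point is the dependence of the smallness threshold. Propositions \ref{Theta}, \ref{ellipticTheta} and \ref{omegaelliptic} are stated under the standing hypothesis $\tilde{\mathcal O}_{3,2}<\infty$ and with $\epsilon_0$ a priori depending on $\tilde{\mathcal O}_{3,2}$, whereas here we want $\epsilon_0=\epsilon_0(\mathcal O_0,\mathcal R)$. I would remove this dependence with a continuity argument in $(u_*,\ub_*)$: since the data are smooth, $\tilde{\mathcal O}_{3,2}$ is finite on the region of existence, so one may adopt the bootstrap assumption $\tilde{\mathcal O}_{3,2}\leq\Delta$ with $\Delta:=2C(\mathcal O_0)(1+\mathcal R)$, run the chain of estimates above with $\epsilon\leq\epsilon_0(\mathcal O_0,\mathcal R,\Delta)=\epsilon_0(\mathcal O_0,\mathcal R)$, and thereby obtain $\tilde{\mathcal O}_{3,2}\leq C(\mathcal O_0)(\epsilon^{1/2}+\mathcal R)<\Delta$ for $\epsilon$ small enough that $C(\mathcal O_0)\epsilon^{1/2}\leq C(\mathcal O_0)(1+\mathcal R)$. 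This strictly improves the bootstrap assumption, and the set of $(u_*,\ub_*)$ on which it holds is nonempty, open and closed in the connected parameter square, hence all of it, so the bound propagates everywhere with $\epsilon_0$ depending only on $\mathcal O_0$ and $\mathcal R$.

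I do not expect any serious obstacle here: all of the analytic content — the construction of the auxiliary quantities $\nab\trch,\ \nab\trchb,\ \mu,\ \mub,\ \kappa,\ \kappab$, the verification that they obey transport equations free of $\nab\Psi$ and of the singular component $\alpha$, and the coupling to the Hodge-system elliptic estimates of Propositions \ref{ellipticthm} and \ref{elliptictraceless} — has already been carried out in Propositions \ref{Theta}, \ref{ellipticTheta} and \ref{omegaelliptic}. The one subtlety worth flagging is that $\nab^3\omega$ was controlled only in $L^2(H_u)$ and not in $L^2(\Hb_{\ub})$ (because of the loss of $\nab^2\beta$ in $L^2(\Hb)$ accompanying the absence of an estimate for $\alpha$); but this is harmless, since the definition of $\tilde{\mathcal O}_{3,2}$ includes $\omega$ only in the $H_u$-sum, so no estimate for $\nab^3\omega$ on $\Hb_{\ub}$ is required.
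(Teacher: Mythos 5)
Your proof is correct and takes essentially the same route as the paper: it invokes Propositions \ref{ellipticTheta} and \ref{omegaelliptic} for each component, observes that the $S$-level bounds on $\nab^3(\trch,\trchb)$ integrate to the required hypersurface bounds with an $\epsilon^{1/2}$ gain, and removes the a priori dependence of $\epsilon_0$ on $\tilde{\mathcal O}_{3,2}$ via a bootstrap/continuity argument with $\Delta_1$ chosen as an explicit function of $\mathcal O_0$ and $\mathcal R$. Your remark that $\omega$ and $\omegab$ each appear in only one of the two hypersurface sums in the definition of $\tilde{\mathcal O}_{3,2}$ is exactly the reason the one-sided flux estimates of Proposition \ref{omegaelliptic} suffice, and matches the intent of the paper.
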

\begin{proof}
Impose the bootstrap assumption
\begin{equation}\label{BA2}
\tilde{\mathcal O}_{3,2}\leq \Delta_1.
\end{equation}
Under this assumption, we can choose $\epsilon$ depending on $\mathcal O_0$, $\mathcal R$ and $\Delta_1$ such that the conclusions in Propositions \ref{Theta}, \ref{ellipticTheta} and \ref{omegaelliptic} hold. 
Thus, 
$$\tilde{\mathcal O}_{3,2}\leq C(\mathcal O_0)(\epsilon^{\frac 12}+\mathcal R).$$
This improves the bootstrap assumption (\ref{BA2}) for $\Delta_1$ sufficiently large. Moreover, $\Delta_1$ can be chosen to depend only on $\mathcal O_0$ and $\mathcal R$. Thus, $\epsilon_0$ depends only on $\mathcal O_0$ and $\mathcal R$.
\end{proof}

\subsection{Energy Estimates for Curvature}\label{energyestimatessec}

In this Subsection, we carry out step 3 in the proof of the a priori estimates as outlined in the beginning of the Section. We prove energy estimates for the curvature components and their first two derivatives, thus providing bounds for $\mathcal R$ and closing all the estimates.  These energy estimates will be derived after renormalization directly from the Bianchi equations. This is analogous to \cite{Holzegel}, where a similar procedure without curvature renormalization has been carried out to derive energy estimates for all components of curvature. We remind the reader of our challenge to avoid any information on the non-$L^2$-integrable curvature component $\alpha$ in our context. 

We will need the following two Propositions, which can be proved by a direct computation.
\begin{proposition}\label{intbyparts34}
Suppose $\phi_1$ and $\phi_2$ are $r$ tensorfields, then
$$\int_{D_{u,\ub}} \phi_1 \nabla_4\phi_2+\int_{D_{u,\ub}}\phi_2\nabla_4\phi_1= \int_{\Hb_{\ub}(0,u)} \phi_1\phi_2-\int_{\Hb_0(0,u)} \phi_1\phi_2+\int_{D_{u,\ub}}(2\omega-\trch)\phi_1\phi_2,$$
$$\int_{D_{u,\ub}} \phi_1 \nabla_3\phi_2+\int_{D_{u,\ub}}\phi_2\nabla_3\phi_1= \int_{H_{u}(0,\ub)} \phi_1\phi_2-\int_{H_0(0,\ub)} \phi_1\phi_2+\int_{D_{u,\ub}}(2\omegab-\trchb)\phi_1\phi_2.$$
\end{proposition}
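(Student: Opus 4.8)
The plan is to reduce both identities to the scalar case via the Leibniz rule, and then to the one‑variable computation already carried out in the proof of Proposition \ref{transport}; throughout, $\phi_1\nabla_4\phi_2$, etc., denotes the $\gamma$-contraction $\langle\phi_1,\nabla_4\phi_2\rangle_\gamma$. First I would record that $\nabla_4$ and $\nabla_3$ are compatible with $\gamma$: since $\nabla_4\phi$ (resp.\ $\nabla_3\phi$) is by definition the projection of $D_4\phi$ (resp.\ $D_3\phi$) onto $TS_{u,\ub}$, and $e_3,e_4$ are $g$-orthogonal to $TS_{u,\ub}$, the Leibniz rule for $D$ together with $Dg=0$ gives $e_4\langle\phi_1,\phi_2\rangle_\gamma=\langle\nabla_4\phi_1,\phi_2\rangle_\gamma+\langle\phi_1,\nabla_4\phi_2\rangle_\gamma$, and similarly with $e_4$ replaced by $e_3$. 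It therefore suffices to prove, for an arbitrary scalar $f$, the identity
$$\int_{D_{u,\ub}} e_4(f)=\int_{\Hb_{\ub}(0,u)} f-\int_{\Hb_0(0,u)} f+\int_{D_{u,\ub}}(2\omega-\trch)f$$
and its analogue with $(e_4,\Hb_{\ub},\Hb_0,\omega,\trch)$ replaced by $(e_3,H_u,H_0,\omegab,\trchb)$; the Proposition then follows by taking $f=\langle\phi_1,\phi_2\rangle_\gamma$.

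Next I would establish the scalar identity by unwinding the definitions of the volume forms: $\int_{D_{u,\ub}}\phi=2\int_0^u\!\int_0^{\ub}\!\big(\int_{S_{u',\ub'}}\Omega^2\phi\big)\,d\ub'\,du'$ and $\int_{\Hb_{\ub}(0,u)}\phi=2\int_0^u\!\big(\int_{S_{u',\ub}}\Omega\phi\big)\,du'$. Using $e_4=\Omega^{-1}\partial_{\ub}$ and $e_4\Omega=-2\omega\Omega$ (which is just $\omega=-\tfrac12\nabla_4\log\Omega$), one rewrites $\Omega^2 e_4(f)=\partial_{\ub}(\Omega f)+2\omega\,\Omega^2 f$. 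Applying the identity $\tfrac{d}{d\ub}\int_{S_{u,\ub}}g=\int_{S_{u,\ub}}(\partial_{\ub}g+\Omega\trch\,g)$ proved inside Proposition \ref{transport}, with $g=\Omega f$, gives $\int_{S_{u',\ub'}}\Omega^2 e_4(f)=\tfrac{d}{d\ub'}\int_{S_{u',\ub'}}\Omega f+\int_{S_{u',\ub'}}\Omega^2(2\omega-\trch)f$. Integrating in $\ub'$ — where the fundamental theorem of calculus produces precisely the boundary contributions on $S_{u',\ub}$ and $S_{u',0}$ — and then in $u'$, and re-expressing the resulting sphere integrals through the volume-form formulas above, yields the scalar identity. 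The $e_3$ version is obtained symmetrically, using $e_3=\Omega^{-1}(\partial_u+b^A\partial_{\th^A})$, $\omegab=-\tfrac12\nabla_3\log\Omega$, and the companion formula $\tfrac{d}{du}\int_{S_{u,\ub}}g=\int_{S_{u,\ub}}\Omega(e_3(g)+\trchb\,g)$ from the same proof.

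I do not expect a genuine obstacle here: the Proposition is essentially bookkeeping built on the transport computation already in hand. The two points that require care are (i) the compatibility $\nabla_4\gamma=\nabla_3\gamma=0$, which is exactly where the orthogonality of the null pair $e_3,e_4$ to $TS_{u,\ub}$ is used to pass the derivative through the $\gamma$-contraction; and (ii) the $\Omega$-weights, since the spacetime volume element on $D_{u,\ub}$ carries a factor $\Omega^2$ whereas the hypersurface elements on $H_u$ and $\Hb_{\ub}$ carry only $\Omega$ — it is precisely this mismatch, combined with $e_4\Omega=-2\omega\Omega$ (resp.\ $e_3\Omega=-2\omegab\Omega$), that generates the $2\omega$ (resp.\ $2\omegab$) term in the error integral.
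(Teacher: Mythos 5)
Your proof is correct. The paper offers no argument — it simply asserts that both identities "can be proved by a direct computation" — and what you have written out (reduction to the scalar case via $\nabla_4\gamma=\nabla_3\gamma=0$, then the one-variable integration using $\partial_{\ub}\Omega=-2\omega\Omega^2$ and the $\frac{d}{d\ub}\int_{S}$ formula already recorded in the proof of Proposition \ref{transport}) is precisely the direct computation intended, with the $\Omega$-weight bookkeeping that produces the $2\omega$ (resp. $2\omegab$) correctly tracked.
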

\begin{proposition}\label{intbypartssph}
Suppose we have an $r$ tensorfield $^{(1)}\phi$ and an $r-1$ tensorfield $^{(2)}\phi$.
$$\int_{D_{u,\ub}}{ }^{(1)}\phi^{A_1A_2...A_r}\nabla_{A_r}{ }^{(2)}\phi_{A_1...A_{r-1}}+\int_{D_{u,\ub}}\nabla^{A_r}{ }^{(1)}\phi_{A_1A_2...A_r}{ }^{(2)}\phi^{A_1...A_{r-1}}= -\int_{D_{u,\ub}}(\eta+\etab){ }^{(1)}\phi{ }^{(2)}\phi.$$
\end{proposition}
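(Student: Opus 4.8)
The plan is to prove Proposition \ref{intbypartssph}, the spherical integration-by-parts identity, by reducing it to a divergence theorem on each sphere $S_{u,\ub}$ and then integrating in $u$ and $\ub$ against the spacetime volume form. First I would observe that the contraction ${}^{(1)}\phi^{A_1\dots A_r}\nabla_{A_r}{}^{(2)}\phi_{A_1\dots A_{r-1}} + \nabla^{A_r}{}^{(1)}\phi_{A_1\dots A_r}\,{}^{(2)}\phi^{A_1\dots A_{r-1}}$ is exactly $\nabla^{A_r}\big({}^{(1)}\phi_{A_1\dots A_r}\,{}^{(2)}\phi^{A_1\dots A_{r-1}}\big)$, where $\nabla$ is the covariant derivative intrinsic to $(S_{u,\ub},\gamma)$. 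Writing $V^{A_r} := (\gamma^{-1})^{A_rB}\,{}^{(1)}\phi_{BA_1\dots A_{r-1}A_r}\cdots$ — more precisely contracting all free indices $A_1,\dots,A_{r-1}$ of the product — this is the $S$-divergence of a vector field tangent to $S_{u,\ub}$, so $\int_{S_{u,\ub}}\div_\gamma V = 0$ by the divergence theorem on a closed surface. The content of the proposition is that the spacetime integral $\int_{D_{u,\ub}}$ of this divergence is not zero but picks up the $(\eta+\etab)$ term, because the spacetime volume form differs from $\sqrt{\det\gamma}\,d\th^1 d\th^2\,d\ub\,du$ only by the factor $2\Omega^2$ and, more importantly, because the coordinate vector field $\partial_{\th^A}$ is not $\gamma$-divergence-free as one moves between leaves; equivalently, $\nabla$ appearing under the spacetime integral must be compared with the coordinate derivative.

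The cleanest route is as follows. By the definition of $\int_{D_{u,\ub}}$ in Section \ref{coordinates} (Integration and Norms), $\int_{D_{u,\ub}} f = 2\sum_U\int_0^u\int_0^{\ub}\big(\int f\,p_U\,\Omega^2\sqrt{\det\gamma}\,d\th^1 d\th^2\big)d\ub'\,du'$. For fixed $(u',\ub')$ the inner $\th$-integral of $2\Omega^2$ times $\div_\gamma V$ is what must be analyzed. Using the coordinate formula $\div_\gamma V = \tfrac{1}{\sqrt{\det\gamma}}\partial_{A}\big(\sqrt{\det\gamma}\,V^A\big)$, integrate by parts in $\th$ against the weight $\Omega^2$: the term where the derivative hits $\sqrt{\det\gamma}\,V^A$ integrates to zero (closed surface, partition of unity), and one is left with $-\int (\partial_A \Omega^2)\,V^A\sqrt{\det\gamma}\,d\th = -\int 2\Omega^2 (\partial_A\log\Omega)\,V^A\sqrt{\det\gamma}\,d\th = -\int 2\Omega^2\,(\nabla_A\log\Omega)\,V^A$. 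Now recall from Section \ref{sec.eqns} that $\eta_A+\etab_A = 2\nabla_A\log\Omega$. Hence the inner integral equals $-\int_{S_{u',\ub'}}(\eta+\etab)\cdot\big({}^{(1)}\phi\,{}^{(2)}\phi\big)$ schematically, and integrating over $0\le u'\le u$, $0\le \ub'\le \ub$ with the factor $2$ absorbed into the definition of $\int_{D_{u,\ub}}$ yields precisely $-\int_{D_{u,\ub}}(\eta+\etab)\,{}^{(1)}\phi\,{}^{(2)}\phi$, which is the claimed identity. One should note here that $\nabla_A\log\Omega$ is a genuine $S$-tangent one-form, so the schematic product $(\eta+\etab){}^{(1)}\phi{}^{(2)}\phi$ is a well-defined contraction with $\gamma$.

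The only genuinely delicate point — the step I expect to require the most care — is the bookkeeping of which index $\nabla$ acts on and verifying that the combination in the statement is exactly a total $S$-divergence rather than a divergence plus lower-order curvature/Christoffel terms; this is where the hypothesis that $\nabla$ is the Levi-Civita connection of $\gamma$ on each leaf (so metric-compatible, making $\nabla^{A_r}(\cdots) = (\gamma^{-1})^{A_rB}\nabla_B(\cdots)$ commute past the $\gamma^{-1}$'s used to raise $A_1,\dots,A_{r-1}$) is used. Once that algebraic identity is in place, the remaining manipulations are the standard divergence theorem on $S^2$ plus the elementary integration by parts in $\th$ against the weight $2\Omega^2$, together with the identity $\eta+\etab = 2\nabla\log\Omega$; none of these present any obstacle. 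Proposition \ref{intbyparts34} is proved entirely analogously, integrating by parts in $\ub$ (resp. $u$) using the transport identities $\tfrac{d}{d\ub}\int_{S_{u,\ub}}f = \int_{S_{u,\ub}}\Omega(e_4 f + \trch f)$ and its $e_3$ analogue from the proof of Proposition \ref{transport}, together with $e_4\log\Omega = -2\omega$, $e_3\log\Omega = -2\omegab$, which accounts for the $(2\omega-\trch)$ and $(2\omegab-\trchb)$ factors.
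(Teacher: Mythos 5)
Your proof is correct and is precisely the ``direct computation'' that the paper invokes without writing out: recognize the left-hand side as $\div_\gamma$ of the $S$-tangent vector field $V^{A_r}={}^{(1)}\phi^{A_1\dots A_{r-1}A_r}\,{}^{(2)}\phi_{A_1\dots A_{r-1}}$, note that the $D_{u,\ub}$ integral carries the extra weight $2\Omega^2$ relative to $\int_{S_{u,\ub}}$, integrate by parts on each closed leaf against this weight, and use $\eta_A+\etab_A=2\nabla_A\log\Omega$. The details check out, including the bookkeeping of the factor $2$ and the identification of $\nabla_A\log\Omega$ with $\partial_A\log\Omega$ on scalars, and your remark on Proposition \ref{intbyparts34} using $e_4\log\Omega=-2\omega$, $e_3\log\Omega=-2\omegab$ together with the first-variation identity is likewise the intended argument.
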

By using the Bianchi equations and the integration by parts formula in Proposition \ref{intbyparts34} and \ref{intbypartssph}, we can prove the following energy estimates. The main new ingredient is the observation that $\alpha$ can be renormalized away in the energy estimates. We can therefore control $\beta,\rho,\sigma,\betab,\alphab$ without any knowledge of $\alpha$
\begin{proposition}\label{ee}
The following $L^2$ estimates for the curvature components hold for $i\leq 2$:
\begin{equation*}
\begin{split}
&\sum_{\Psi\in\{\beta,\rho,\sigma,\betab\}}\int_{H_u} |\nabla^i\Psi|^2_\gamma+\sum_{\Psi\in\{\rho,\sigma,\betab,\alphab\}}\int_{\Hb_{\ub}} |\nabla^i\Psi|^2_\gamma  \\
\leq &\sum_{\Psi\in\{\beta,\rho,\sigma,\betab\}}\int_{H_{u'}} |\nabla^i\Psi|^2_\gamma+\sum_{\Psi\in\{\rho,\sigma,\betab,\alphab\}}\int_{\Hb_{\ub'}} |\nabla^i\Psi|^2_\gamma+\epsilon^{\frac{1}{2}}C(\mathcal O_0).\\
&+\int_{D_{u,\ub}}\nabla^i\Psi\sum_{i_1+i_2+i_3+i_4=i}\nabla^{i_1}\psi^{i_2}\nabla^{i_3}\psi\nabla^{i_4}\Psi\\
&+\int_{D_{u,\ub}}\nabla^i\Psi\sum_{i_1+i_2+i_3+i_4=i-1}\nabla^{i_1}\psi^{i_2}\nabla^{i_3}K\nabla^{i_4}\Psi\\
&+\int_{D_{u,\ub}}\nabla^i\Psi\sum_{i_1+i_2+i_3+i_4=i+1}\nabla^{i_1}\psi^{i_2}\nabla^{i_3}\psi\nabla^{i_4}\psi.
\end{split}
\end{equation*}
\end{proposition}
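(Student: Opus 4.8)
The plan is to prove these estimates directly from the null Bianchi equations (\ref{eq:null.Bianchi}), using the integration-by-parts identities of Propositions \ref{intbyparts34} and \ref{intbypartssph} in place of the Bel--Robinson tensor, and working throughout with the renormalized quantities $\rhoc=\rho-\frac12\chih\cdot\chibh$ and $\sigmac=\sigma+\frac12\chih\wedge\chibh$. The first step is to record the renormalized Bianchi system: using $\nab_4\chih=-\alpha+\psi\psi$, $\nab_3\chibh=-\alphab+\psi\psi$, $\nab_4\chibh=\nab\hot\etab+\psi\psi$, $\nab_3\chih=\nab\hot\eta+\psi\psi$ from (\ref{null.str1}), one finds
$$\nab_4\rhoc=\div\beta+\psi\nab\psi+\psi\Psi_H+\psi^3,\qquad \nab_4\sigmac=-\div{}^*\beta+\psi\nab\psi+\psi\Psi_H+\psi^3,$$
and the analogous $\nab_3$ equations with $-\div\betab$, $-\div{}^*\betab$ and $\Psi_{\Hb}$; moreover the $\beta$ and $\betab$ equations are re-expressed with $\nab\rhoc,\nab\sigmac$ on the right at the cost of terms $\psi\nab\psi+\psi\Psi$. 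Crucially $\alpha$ is absent from this system, and $\alphab$ enters only the $\nab_3\betab$ and $\nab_4\alphab$ equations, never multiplied by $\alpha$.

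Next I would organize the estimates into three Bianchi pairs: $\beta$ on $H_u$ with $(\rhoc,\sigmac)$ on $\Hb_{\ub}$ (via $\nab_3\beta$ together with $\nab_4\rhoc,\nab_4\sigmac$); $(\rhoc,\sigmac)$ on $H_u$ with $\betab$ on $\Hb_{\ub}$ (via $\nab_3\rhoc,\nab_3\sigmac$ with $\nab_4\betab$); and $\betab$ on $H_u$ with $\alphab$ on $\Hb_{\ub}$ (via $\nab_3\betab$ with $\nab_4\alphab$). For $i\le 2$, commute both equations of each pair with $\nab^i$ via Proposition \ref{commuteeqn}; since the commutator produces only the curvature components $\chi,\chib,\beta,\betab$ (and, after substituting the Codazzi equations of (\ref{null.str3}), only $\nab\chi,\nab\chib$) together with Ricci coefficients, $\alpha$ is still never created. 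Contract each commuted equation with the corresponding $\nab^i\Psi$, integrate over $D_{u,\ub}$, and apply Proposition \ref{intbyparts34} to trade $\nab_3,\nab_4$ derivatives for fluxes on $H_u,\Hb_{\ub},H_0,\Hb_0$ (plus bulk terms with coefficients $2\omega-\trch$, $2\omegab-\trchb$) and Proposition \ref{intbypartssph} to move an angular derivative between the two members of a pair. The principal terms — $\nab^{i+1}\rhoc,\nab^{i+1}\sigmac$ in $\nab_3\nab^i\beta$ against $\nab^i\div\beta$ coming from $\nab_4\nab^i\rhoc$, and likewise for the other two pairs — then cancel, leaving only the desired fluxes and bulk error integrands.

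The final step is bookkeeping of the error integrands. They come from the nonlinear right-hand sides of the renormalized equations, the commutator terms of Proposition \ref{commuteeqn}, the $2\omega-\trch$, $2\omegab-\trchb$, $\eta+\etab$ factors in the integration-by-parts identities, the substitutions $\rho=\rhoc+\psi\psi$ and $K=-\rho+\psi\psi$ used to pass between $\rho,\sigma$ and $\rhoc,\sigmac$, and the intrinsic Gauss curvature produced when angular derivatives are commuted in the course of the integration by parts. One checks that each term has one of the three displayed types — $\nab^i\Psi\cdot\nab^{i_1}\psi^{i_2}\nab^{i_3}\psi\nab^{i_4}\Psi$ with $i_1+i_2+i_3+i_4=i$, $\nab^i\Psi\cdot\nab^{i_1}\psi^{i_2}\nab^{i_3}K\nab^{i_4}\Psi$ with sum $i-1$, or $\nab^i\Psi\cdot\nab^{i_1}\psi^{i_2}\nab^{i_3}\psi\nab^{i_4}\psi$ with sum $i+1$ — and summing the three pairs over $i=0,1,2$ gives the stated inequality, with the residual genuinely lower-order contributions (those that become purely Ricci after Cauchy--Schwarz, together with the $H_0,\Hb_0$ data fluxes other than $\nab^i\Psi$) absorbed into $\epsilon^{\frac12}C(\mathcal O_0)$ using Propositions \ref{Ricci}, \ref{RS}, \ref{Kest} and Cauchy--Schwarz over a short characteristic direction, which supplies the factor $\epsilon^{\frac12}$. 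The main obstacle is precisely this last verification: that $\alpha$, $\nab\alpha$, $\nab^2\alpha$ appear in none of the flux or error terms. This is exactly what the renormalization $\rho\mapsto\rhoc$, $\sigma\mapsto\sigmac$ secures for the nonlinear terms, while for the boundary terms it forces the choice to control $\beta$ only on $H_u$ (via $\nab_3\beta$, never $\nab_4\beta$, which contains $\div\alpha$) and $\alphab$ only on $\Hb_{\ub}$ (via $\nab_4\alphab$); one must also check that the weaker control of $\nab^3\omega$, available only in $L^2(H_u)$ and not $L^2(\Hb_{\ub})$, is never needed in the error estimates except through an integration in $u$. Granting these structural facts, all error integrands are bounded by Cauchy--Schwarz and the earlier estimates in a routine manner.
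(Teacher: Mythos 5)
Your proposal matches the paper's argument: the same renormalization $\rho\mapsto\rhoc$, $\sigma\mapsto\sigmac$, the same three Bianchi pairs, commutation with $\nab^i$ followed by Propositions~\ref{intbyparts34}--\ref{intbypartssph}, and the same bookkeeping of error terms with the conversion $\Psic=\Psi+\psi\psi$ supplying the $\epsilon^{\frac12}C(\mathcal O_0)$. The only cosmetic difference is that the paper keeps the unrenormalized $\rho,\sigma$ in the second pair (since the $\nab_3\rho,\nab_3\sigma$ equations only involve $\alphab$, not $\alpha$), giving the $H_u$-flux for $\rho,\sigma$ directly; your choice to use $\rhoc,\sigmac$ there as well works but adds an extra back-conversion.
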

\begin{proof}
We first consider the following sets of Bianchi equations:
\begin{equation*}
\begin{split}
&\nabla_3\beta+tr\chib\beta=\nabla\rho + 2\omegab \beta +^*\nabla\sigma +2\chih\cdot\betab+3(\eta\rho+^*\eta\sigma),\\
&\nabla_4\sigma+\frac 32tr\chi\sigma=-\div^*\beta+\frac 12\chibh\cdot ^*\alpha-\zeta\cdot^*\beta-2\etab\cdot
^*\beta,\\
&\nabla_4\rho+\frac 32tr\chi\rho=\div\beta-\frac 12\chibh\cdot\alpha+\zeta\cdot\beta+2\etab\cdot\beta,\\
\end{split}
\end{equation*}
In the $\nabla_4$ equations for $\sigma$ and $\rho$, the term $\alpha$ appears. This is undesirable because we do not have good estimates for $\alpha$. Instead we renormalize both $\sigma$ and $\rho$ as in the proof of Proposition \ref{ellipticTheta}. Define
$$\sigmac:=\sigma+\frac{1}{2}\chibh\wedge\chih,$$
$$\rhoc:=\rho-\frac{1}{2}\chibh\cdot\chih.$$
Now, we write down the equations for $\sigmac$ and $\rhoc$ instead of $\sigma$ and $\rho$. Since $\nabla_4\chih=-\alpha+\psi\psi$, the equations now do not contain $\alpha$. Schematically, we have
\begin{equation}\label{rcsc}
\begin{split}
&\nabla_4\sigmac=-\div^*\beta+\psi\nabla\psi+\psi\Psi+\psi\psi\psi,\\
&\nabla_4\rhoc=\div\beta+\psi\nabla\psi+\psi\Psi+\psi\psi\psi,
\end{split}
\end{equation}
where $\Psi\in\{\beta,\rho,\sigma,\betab,\alphab\}$ is now one of the acceptable curvature components.
Rewrite also the first equation as
\begin{equation}\label{bc}
\nabla_3\beta=\nabla\rhoc+^*\nabla\sigmac+\psi\Psi+\psi\nabla\psi.
\end{equation}
Applying Proposition \ref{intbypartssph} yields the following: 
\begin{equation*}
\begin{split}
\int_{D_{u,\ub}} <\beta,\nabla_3\beta>_\gamma =&\int <\beta,\nabla\rhoc+^*\nabla\sigmac>_\gamma+<\beta,\psi\Psi>_\gamma \\
=&\int_{D_{u,\ub}} -<\div\beta,\rhoc>_\gamma+<\div ^*\beta,\sigmac>_\gamma +<\beta,\psi\Psi+\psi\nabla\psi>_\gamma\\
=&\int_{D_{u,\ub}} -<\nabla_4\rhoc,\rhoc>_\gamma-<\nabla_4\sigmac,\sigmac>_\gamma +<\Psi,\psi\Psi+\psi\nabla\psi+\psi\psi\psi>_\gamma.
\end{split}
\end{equation*}
Applying Proposition \ref{intbyparts34} then yields the energy estimates for the $0$-th derivative of the curvature:
\begin{equation*}
\begin{split}
&\int_{H_u} |\beta|^2_\gamma+\int_{\Hb_{\ub}} \rhoc^2+\sigmac^2  \\
\leq &\int_{H_{u'}} |\beta|^2_\gamma+\int_{\Hb_{\ub'}} \rhoc^2+\sigmac^2
+\int_{D_{u,\ub}}\Psi(\psi\Psi+\psi\nabla\psi+\psi\psi\psi).
\end{split}
\end{equation*}
We now commute equations (\ref{rcsc}) and (\ref{bc}) with $\nabla^i$ to get
\begin{equation*}
\begin{split}
&\nabla_3\nabla^i\beta- \nabla \nabla^i\rhoc-\nabla\nabla^i\sigmac \\
\sim&\sum_{i_1+i_2+i_3+i_4=i+1}\nabla^{i_1}\psi^{i_2}\nabla^{i_3}\psi\nabla^{i_4}\psi \\
&+\sum_{i_1+i_2+i_3+i_4+i_5=i-1}\nabla^{i_1}\psi^{i_2}\nabla^{i_3}K^{i_4}\nabla^{i_5}(\rhoc,\sigmac)+\sum_{i_1+i_2+i_3=i}\nabla^{i_1}\psi^{i_2}\nabla^{i_3}(\psi\Psi),\\
\end{split}
\end{equation*}
and
\begin{equation*}
\begin{split}
&\nabla_4\nabla^i\rhoc- \div\nabla^i\beta\\
\sim&\sum_{i_1+i_2+i_3+i_4=i+1}\nabla^{i_1}\psi^{i_2}\nabla^{i_3}\psi\nabla^{i_4}\psi \\
&+\sum_{i_1+i_2+i_3+i_4+i_5=i-1}\nabla^{i_1}\psi^{i_2}\nabla^{i_3}K^{i_4}\nabla^{i_5}\beta+\sum_{i_1+i_2+i_3=i}\nabla^{i_1}\psi^{i_2}\nabla^{i_3}(\psi\Psi),\\
\end{split}
\end{equation*}
and 
\begin{equation*}
\begin{split}
&\nabla_4\nabla^i\sigmac- \div^*\nabla^i\beta\\
\sim&\sum_{i_1+i_2+i_3+i_4=i+1}\nabla^{i_1}\psi^{i_2}\nabla^{i_3}\psi\nabla^{i_4}\psi \\
&+\sum_{i_1+i_2+i_3+i_4+i_5=i-1}\nabla^{i_1}\psi^{i_2}\nabla^{i_3}K^{i_4}\nabla^{i_5}\beta+\sum_{i_1+i_2+i_3=i}\nabla^{i_1}\psi^{i_2}\nabla^{i_3}(\psi\Psi),\\
\end{split}
\end{equation*}
where 
$$(\div\nabla^i\beta)_{A_1...A_i}=(\nabla^{i+1}\beta)^B{ }_{A_1...A_iB}.$$
We integrate by parts using Proposition \ref{intbyparts34} and \ref{intbypartssph} as in the $0$-th derivative case to get
\begin{equation*}
\begin{split}
&\int_{H_u} |\nabla^i\beta|^2_\gamma+\int_{\Hb_{\ub}} |\nabla^i(\rhoc,\sigmac)|^2_\gamma  \\
\leq &\int_{H_{u'}} |\nabla^i\beta|^2_\gamma+\int_{\Hb_{\ub'}} |\nabla^i(\rhoc,\sigmac)|^2_\gamma+\int_{D_{u,\ub}}\nabla^i\Psi\sum_{i_1+i_2+i_3+i_4=i}\nabla^{i_1}\psi^{i_2}\nabla^{i_3}\psi\nabla^{i_4}\Psi\\
&+\int_{D_{u,\ub}}\nabla^i\Psi\sum_{i_1+i_2+i_3+i_4=i-1}\nabla^{i_1}\psi^{i_2}\nabla^{i_3}K\nabla^{i_4}\Psi\\
&+\int_{D_{u,\ub}}\nabla^i\Psi\sum_{i_1+i_2+i_3+i_4=i+1}\nabla^{i_1}\psi^{i_2}\nabla^{i_3}\psi\nabla^{i_4}\psi.
\end{split}
\end{equation*}
We then consider the following set of Bianchi equations. Notice that $\alpha$ does not appear in any equations in this set.
\begin{equation*}
\begin{split}
&\nabla_3\sigma+\frac 32tr\chib\sigma=-\div ^*\betab+\frac 12\chih\cdot ^*\alphab-\zeta\cdot ^*\betab-2\eta\cdot 
^*\betab,\\
&\nabla_3\rho+\frac 32tr\chib\rho=-\div\betab- \frac 12\chih\cdot\alphab+\zeta\cdot\betab-2\eta\cdot\betab,\\
&\nabla_4\betab+tr\chi\betab=-\nabla\rho +^*\nabla\sigma+ 2\omega\betab +2\chibh\cdot\beta-3(\etab\rho-^*\etab\sigma),\\
\end{split}
\end{equation*}
From these we can derive the following estimates:
\begin{equation*}
\begin{split}
&\int_{H_u} |\nabla^i(\rho,\sigma)|^2_\gamma+\int_{\Hb_{\ub}} |\nabla^i\betab|^2_\gamma  \\
\leq &\int_{H_{u'}} |\nabla^i(\rho,\sigma)|^2_\gamma+\int_{\Hb_{\ub'}} |\nabla^i\betab|^2_\gamma+\int_{D_{u,\ub}}\nabla^i\Psi\sum_{i_1+i_2+i_3+i_4=i}\nabla^{i_1}\psi^{i_2}\nabla^{i_3}\psi\nabla^{i_4}\Psi\\
&+\int_{D_{u,\ub}}\nabla^i\Psi\sum_{i_1+i_2+i_3+i_4=i-1}\nabla^{i_1}\psi^{i_2}\nabla^{i_3}K\nabla^{i_4}\Psi.
\end{split}
\end{equation*}
Finally, we look at the following set of Bianchi equations:
\begin{equation*}
\begin{split}
&\nabla_3\betab+2tr\chib\betab=-\div\alphab-2\omegab\betab+\etab \cdot\alphab,\\
&\nabla_4\alphab+\frac 12 tr\chi\alphab=-\nabla\hot \betab+ 4\omega\alphab-3(\chibh\rho-^*\chibh\sigma)+
(\zeta-4\etab)\hot \betab
\end{split}
\end{equation*}
Again, we notice the absence of the $\alpha$ terms. Hence we have the estimate
\begin{equation*}
\begin{split}
&\int_{H_u} |\nabla^i\betab|^2_\gamma+\int_{\Hb_{\ub}} |\nabla^i\alphab|^2_\gamma  \\
\leq &\int_{H_{u'}} |\nabla^i\betab|^2_\gamma+\int_{\Hb_{\ub'}} |\nabla^i\alphab|^2_\gamma+\int_{D_{u,\ub}}\nabla^i\Psi\sum_{i_1+i_2+i_3+i_4=i}\nabla^{i_1}\psi^{i_2}\nabla^{i_3}\psi\nabla^{i_4}\Psi\\
&+\int_{D_{u,\ub}}\nabla^i\Psi\sum_{i_1+i_2+i_3+i_4=i-1}\nabla^{i_1}\psi^{i_2}\nabla^{i_3}K\nabla^{i_4}\Psi.
\end{split}
\end{equation*}
We have thus established that 
$$\sum_{\Psi\in\{\beta,\rho,\sigma,\betab\}}\int_{H_u} |\nabla^i\Psi|^2_\gamma+\sum_{\Psic\in\{\rhoc,\sigmac,\betab,\alphab\}}\int_{\Hb_{\ub}} |\nabla^i\Psic|^2_\gamma$$
can be bounded by the right hand side in the statement of the Proposition. To conclude, we use the fact that
$$\Psic=\Psi+\psi\psi,$$
and thus
\begin{equation*}
\begin{split}
&\sum_{\Psi\in\{\rho,\sigma,\betab,\alphab\}}\int_{\Hb_{\ub}} |\nabla^i\Psi|^2_\gamma\\
\leq &\sum_{\Psic\in\{\rhoc,\sigmac,\betab,\alphab\}}\int_{\Hb_{\ub}} |\nabla^i\Psic|^2_\gamma+\sum_{i_1+i_2\leq 2}||\nabla^{i_1}\psi\nabla^{i_2}\psi||_{L^2(\Hb_{\ub})}\\
\leq& \sum_{\Psic\in\{\rhoc,\sigmac,\betab,\alphab\}}\int_{\Hb_{\ub}} |\nabla^i\Psic|^2_\gamma+\epsilon^{\frac{1}{2}}C(\mathcal O_0).
\end{split}
\end{equation*}
\end{proof}
Finally, we prove the boundedness of the norm $\mathcal R$:
\begin{proposition}\label{curvature}
There exists $\epsilon_0=\epsilon_0(\mathcal O_0,\mathcal R_0)$ such that for every $\epsilon\leq\epsilon_0$,
\[
 \mathcal R\leq C(\mathcal O_0, \mathcal R_0).
\]
\end{proposition}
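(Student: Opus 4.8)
The plan is to run a bootstrap directly on $\mathcal{R}$, using Steps 1 and 2 of this section to convert a bound on $\mathcal{R}$ into bounds on all the Ricci norms, and then to close the energy inequality of Proposition \ref{ee}. Concretely, impose the bootstrap assumption $\mathcal{R}\leq\Delta_2$ for a large constant $\Delta_2$ to be chosen. Since then $\mathcal{R}<\infty$, Proposition \ref{Riccielliptic} yields $\tilde{\mathcal{O}}_{3,2}\leq C(\mathcal{O}_0)(\epsilon^{1/2}+\Delta_2)<\infty$, and Proposition \ref{Ricci} then yields $\mathcal{O}\leq C(\mathcal{O}_0)$. In particular the earlier bootstrap (\ref{BA1}) is improved, and Propositions \ref{RS}, \ref{Kest}, \ref{ellipticTheta} and \ref{omegaelliptic} are all available: $\beta,\rho,\sigma,\betab$ and their first derivatives, $K$ and $\nabla K$ are bounded in $L^2(S_{u,\ub})$ by constants depending only on $\mathcal{O}_0,\mathcal{R}_0$, and $\nabla^3\trch,\nabla^3\trchb$ are bounded in $L^2(S_{u,\ub})$ by $C(\mathcal{O}_0)$. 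It then remains only to bound the three spacetime error integrals on the right side of Proposition \ref{ee}.

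Each such integral over $D_{u,\ub}$ is estimated by H\"older on the spheres $S_{u',\ub'}$ (using the Sobolev embeddings of Propositions \ref{L4} and \ref{Linfty} together with the Ricci bounds just quoted) followed by Cauchy--Schwarz in the transversal null variable. In the first integral, $\int_{D_{u,\ub}}\nabla^i\Psi\cdot\nabla^{i_1}\psi^{i_2}\nabla^{i_3}\psi\nabla^{i_4}\Psi$ with $i\leq2$, one places the highest-order curvature factors in $L^2$ over the null hypersurface on which $\mathcal{R}$ controls them, the Ricci factors in $L^\infty$ or $L^4$ on the spheres via $\mathcal{O}$, and each transversal Cauchy--Schwarz gains a power of $\epsilon^{1/2}$; this is legitimate precisely because the renormalization in Proposition \ref{ee} has eliminated $\alpha$, so that every curvature factor lies among $\beta,\rho,\sigma,\betab,\alphab$, the low-order ones being controlled on spheres by Proposition \ref{RS}. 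The second integral, carrying $\nabla^{i_3}K$ with $i_1+i_2+i_3+i_4=i-1\leq1$, is handled identically after inserting $K,\nabla K$ from Proposition \ref{Kest}. The third integral, $\int_{D_{u,\ub}}\nabla^i\Psi\cdot\nabla^{i_1}\psi^{i_2}\nabla^{i_3}\psi\nabla^{i_4}\psi$ of total angular order $i+1\leq3$, is the delicate one: when $i=2$ a third angular derivative of a Ricci coefficient occurs, available only in $L^2$ over a null hypersurface via $\tilde{\mathcal{O}}_{3,2}$, and for $\omega$ (resp. $\omegab$) only on $H_u$ (resp. $\Hb_{\ub}$). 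One pairs $\nabla^i\Psi\in L^2(D_{u,\ub})$ with the full Ricci product in $L^2(D_{u,\ub})$, bounding the latter via $\tilde{\mathcal{O}}_{3,2}$, $\mathcal{O}$ and the sphere bounds for $\nabla^3\trch,\nabla^3\trchb$; since every third derivative that occurs is controlled on \emph{some} null hypersurface, integration over $D_{u,\ub}$ again produces a gain $\epsilon$, and the contribution is $\lesssim C(\mathcal{O}_0)\epsilon\,\mathcal{R}\,\tilde{\mathcal{O}}_{3,2}\leq C(\mathcal{O}_0)\epsilon(\epsilon^{1/2}\mathcal{R}+\mathcal{R}^2)$. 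Overall the three integrals are bounded by $C(\mathcal{O}_0,\mathcal{R}_0)\epsilon^{1/2}(1+\mathcal{R}^2)$.

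Inserting these bounds into Proposition \ref{ee} and taking the supremum over $u,\ub$ gives
$$\mathcal{R}^2\leq C\mathcal{R}_0^2+C(\mathcal{O}_0)\epsilon^{1/2}+C(\mathcal{O}_0)\epsilon^{1/2}\mathcal{R}^2 .$$
Choosing $\epsilon$ small, depending on $\mathcal{O}_0,\mathcal{R}_0$ and $\Delta_2$, to absorb the last term yields $\mathcal{R}\leq C(\mathcal{O}_0,\mathcal{R}_0)$, which improves the bootstrap as soon as $\Delta_2$ is taken larger than this final constant; as $\Delta_2$ then depends only on $\mathcal{O}_0,\mathcal{R}_0$, so does $\epsilon_0$, and a standard continuity argument closes the estimate on the whole region $0\leq u\leq u_*$, $0\leq\ub\leq\ub_*$. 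The main obstacle here is not conceptual but organizational: one must check term by term that the renormalizations built into Proposition \ref{ee} (and into the $\Theta$-construction of Section \ref{Ricciellipticsec}) leave no trace of $\alpha$, of $\nabla\alpha$, or of $\nabla^2\beta$ on $\Hb_{\ub}$ in any error term, so that every surviving factor is controlled by one of $\mathcal{R}$, $\mathcal{O}$, $\tilde{\mathcal{O}}_{3,2}$, $K$; once this is verified, the smallness of $\epsilon$ does the rest.
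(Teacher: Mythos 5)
Your proposal is correct and follows essentially the same argument as the paper: impose the bootstrap $\mathcal R\leq\Delta_2$, invoke Propositions \ref{Ricci}, \ref{Kest}, \ref{Riccielliptic} to control $\mathcal O$ and $\tilde{\mathcal O}_{3,2}$ in terms of $\mathcal O_0$ and $\mathcal R$, then estimate the three error integrals in Proposition \ref{ee} using H\"older on spheres plus Cauchy--Schwarz in the transversal variable to extract a factor $\epsilon^{1/2}$, and close by absorption for $\epsilon$ small. Your term-by-term accounting of which derivatives land on which null hypersurfaces (in particular $\nabla^3\omega$ only on $H_u$, $\nabla^3\omegab$ only on $\Hb_{\ub}$, and the absence of $\alpha$ and $\nabla^2\beta$ on $\Hb_{\ub}$ thanks to the renormalization) is exactly the structural point the paper relies on, and your final inequality $\mathcal R^2\leq C\mathcal R_0^2+C(\mathcal O_0)\epsilon^{1/2}(1+\mathcal R^2)$ is the correct quadratic form that closes under absorption.
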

\begin{proof}
Assume as a bootstrap assumption:
\begin{equation}\label{BA3}
\mathcal R\leq \Delta_2.
\end{equation}
Thus we can apply Propositions \ref{Ricci}, \ref{Kest} and \ref{Riccielliptic}.
By Proposition \ref{ee}, we have
\begin{equation}\label{eenonlinear}
\begin{split}
&\sum_{i=0}^{2}(\sum_{\Psi\in\{\beta,\rho,\sigma,\betab\}}\int_{H_u} (\nabla^i\Psi)^2+\sum_{\Psi\in\{\rho,\sigma,\betab,\alphab\}}\int_{\underline{H}_{\underline{u}}} (\nabla^i\Psi)^2)\\
\leq& C\mathcal R_0^2+\epsilon^{\frac{1}{2}}C(\mathcal O_0)+\int_{\mathcal D_{u,\underline{u}}}\sum_{i\leq 2}\nabla^i\Psi\sum_{i_1+i_2+i_3+i_4\leq 2}\nabla^{i_1}\psi^{i_2}\nabla^{i_3}\psi\nabla^{i_4}\Psi \\
&+\int_{\mathcal D_{u,\underline{u}}}\sum_{i\leq 2}\nabla^i\Psi\sum_{i_1+i_2+i_3+i_4\leq 1}\nabla^{i_1}\psi^{i_2}\nabla^{i_3}K\nabla^{i_4}\Psi \\
&+\int_{\mathcal D_{u,\underline{u}}}\sum_{i\leq 2}\nabla^i\Psi\sum_{i_1+i_2+i_3+i_4\leq 3}\nabla^{i_1}\psi^{i_2}\nabla^{i_3}\psi\nabla^{i_4}\psi \\
\leq &C\mathcal R_0^2+C\sum_{i\leq 2}\int_{D_{u,\ub}} (\nab^i\Psi)^2 +\sum_{i_1+i_2+i_3+i_4\leq 2}\int_{\mathcal D_{u,\underline{u}}}\left(\nabla^{i_1}\psi^{i_2}\nabla^{i_3}\psi\nabla^{i_4}\Psi\right)^2\\
&+\sum_{i_1+i_2+i_3+i_4\leq 1}\int_{\mathcal D_{u,\underline{u}}}\left(\nabla^{i_1}\psi^{i_2}\nabla^{i_3} K\nabla^{i_4}\Psi\right)^2 +\sum_{i_1+i_2+i_3+i_4\leq 3}\int_{\mathcal D_{u,\underline{u}}}\left(\nabla^{i_1}\psi^{i_2}\nabla^{i_3} \psi\nabla^{i_4}\psi\right)^2.
\end{split}
\end{equation}
It is helpful to first note that 
$$\sum_{i\leq 2}||\nabla^i\Psi||_{L^2(D_{u,\ub})}\leq C\epsilon^{\frac 12}\mathcal R,$$
since $\nab^i\Psi$ can be estimated either in $L^2(H_u)$ or $L^2(\Hb_{\ub})$.
By Sobolev Embedding in Propositions \ref{L4} and \ref{Linfty}, we thus have
$$\sum_{i\leq 1}||\nabla^i\Psi||_{L^2_uL^2_{\ub}L^4(S)}+||\Psi||_{L^2_uL^2_{\ub}L^\infty(S)}\leq C\epsilon^{\frac 12}\mathcal R,$$
A similar argument shows that
$$\sum_{i\leq 3}||\nabla^i\psi||_{L^2(D_{u,\ub})}+\sum_{i\leq 2}||\nabla^i\psi||_{L^2_uL^2_{\ub}L^4(S)}+\sum_{i\leq 1}||\nab^i\psi||_{L^2_uL^2_{\ub}L^\infty(S)}\leq C\epsilon^{\frac 12}\tilde{\mathcal O}_{3,2},$$
Thus by Proposition \ref{Riccielliptic}, we have
$$\sum_{i\leq 3}||\nabla^i\psi||_{L^2(D_{u,\ub})}+\sum_{i\leq 2}||\nabla^i\psi||_{L^2_uL^2_{\ub}L^4(S)}+\sum_{i\leq 1}||\nab^i\psi||_{L^2_uL^2_{\ub}L^\infty(S)}\leq C(\mathcal O_0,\mathcal R_0)\epsilon^{\frac 12}\mathcal R.$$
Now we estimate the first nonlinear term in (\ref{eenonlinear}):
\begin{equation*}
\begin{split}
\sum_{i_1+i_2+i_3+i_4\leq 2}\int_{\mathcal D_{u,\underline{u}}}\left(\nabla^{i_1}\psi^{i_2}\nabla^{i_3}\psi\nabla^{i_4}\Psi\right)^2. \\
\end{split}
\end{equation*}
We can estimate, by Proposition \ref{Ricci},
\begin{equation*}
\begin{split}
&\sum_{i_1,i_2\leq 1,i_3\leq 1}||\psi^{i_1}\nabla^{i_2}\psi\nabla^{i_3}\Psi||_{L^2(D_{u,\ub})}\\
\leq &(\sum_{i_1\leq 2}\sup_{u,\ub}||\psi||_{L^\infty(S_{u,\ub})}^{i_1})(\sum_{i_2\leq 2}\sup_{u,\ub}||\nabla^{i_2}\psi||_{L^2(S_{u,\ub})})(\sum_{i_3\leq 2}||\nabla^{i_3}\Psi||_{L^2(D_{u,\ub})})\\
\leq &\epsilon^{\frac 12} C(\mathcal O_0)R.
\end{split}
\end{equation*}
We then estimate the term
$$\sum_{i_1+i_2+i_3\leq 1}\int_{\mathcal D_{u,\underline{u}}}\left(\psi^{i_1}\nabla^{i_2} K\nabla^{i_3}\Psi\right)^2.$$
We have, by Propositions \ref{Ricci} and \ref{Kest},
\begin{equation*}
\begin{split}
&\sum_{i_1+i_2+i_3\leq 1}||\psi^{i_1}\nabla^{i_2} K\nabla^{i_3}\Psi||_{L^2(D_{u,\ub})}\\
\leq &(\sum_{i_1\leq 2}\sup_{u,\ub}||\psi||_{L^\infty(S_{u,\ub})}^{i_1})(\sum_{i_2\leq 1}\sup_{u,\ub}||\nabla^{i_2}K||_{L^2(S_{u,\ub})})(\sum_{i_3\leq 2}||\nabla^{i_3}\Psi||_{L^2(D_{u,\ub})})\\
\leq &\epsilon^{\frac 12} C(\mathcal O_0,\mathcal R_0)R.
\end{split}
\end{equation*}
Finally, we estimate
$$\sum_{i_1+i_2+i_3+i_4\leq 3}\int_{\mathcal D_{u,\underline{u}}}\left(\nabla^{i_1}\psi^{i_2}\nabla^{i_3} \psi\nabla^{i_4}\psi\right)^2.$$
By Propositions \ref{Ricci} and \ref{Riccielliptic},
\begin{equation*}
\begin{split}
&\sum_{i_1+i_2+i_3+i_4\leq 3}||\nabla^{i_1}\psi^{i_2}\nabla^{i_3} \psi\nabla^{i_4}\psi||_{L^2(D_{u,\ub})}\\
\leq &(\sum_{i_1\leq 3}\sup_{u,\ub}||\psi||_{L^\infty(S_{u,\ub})}^{i_1})(\sum_{i_2\leq 2}\sup_{u,\ub}||\nabla^{i_2}\psi||_{L^2(S_{u,\ub})})(\sum_{i_3\leq 3}||\nabla^{i_3}\psi||_{L^2(D_{u,\ub})})\\
\leq &\epsilon^{\frac 12} C(\mathcal O_0,\mathcal R_0)R.
\end{split}
\end{equation*}
Thus
$$\sum_{i=0}^{2}(\sum_{\Psi\in\{\beta,\rho,\sigma,\betab\}}\int_{H_u} (\nabla^i\Psi)^2+\sum_{\Psi\in\{\rho,\sigma,\betab,\alphab\}}\int_{\underline{H}_{\underline{u}}} (\nabla^i\Psi)^2)
\leq C\mathcal R_0^2+\epsilon^{\frac{1}{2}}C(\mathcal O_0)+\epsilon^{\frac 12} C(\mathcal O_0,\mathcal R_0)R.$$
By choosing $\epsilon$ sufficiently small we have
$$\mathcal R\leq C(\mathcal O_0,\mathcal R_0).$$
This improves (\ref{BA3}) for $\Delta_2$ sufficiently large. Moreover, $\Delta_2$ can be chosen to depend only on $\mathcal O_0$ and $\mathcal R_0$. Thus the choice of $\epsilon$ depends only on $\mathcal O_0$ and $\mathcal R_0$.
\end{proof}
\subsection{Propagation of Regularity}\label{Propregsec}
Up to this point, we have been considering spacetimes with smooth characteristic initial data. In the context of an impulsive gravitational wave, our estimates in this Section apply to spacetimes arising from the smooth data approximating the given singular data. For these spacetimes, we can also prove that estimates for the higher order derivatives of the curvature. In view of the bounds that we have already obtained, this follows from standard arguments. In particular, energy estimates for the higher order derivatives can be derived as in Section \ref{energyestimatessec}. In this case, since the equations are linear in the highest order derivatives, one can use the bounds from the previous Sections to obtain the desired estimates. More precisely, we have
\begin{proposition}\label{propagationregularity}
Suppose, in addition to the assumptions of Theorem \ref{timeofexistence}, the bound 
$$\sum_{j\leq J}\sum_{k\leq K}\sum_{i\leq I}(\sum_{\Psi\in\{\rho,\sigma,\betab,\alphab\}}||\nab_3^j\nab_4^k \nab^i\Psi||_{L^2(\Hb_0)}+\sum_{\Psi\in\{\beta,\rho,\sigma,\betab\}}||\nab_3^i \nab_4^k\nab^j\Psi||_{L^2(H_0)})\leq D_{I,J,K}. $$
holds initially for some $I,J,K$. Then, the following bounds hold for $0\leq u\leq u_*$, $0\leq \ub\leq \ub_*$,
$$\sum_{j\leq J}\sum_{k\leq K}\sum_{i\leq I}(\sum_{\Psi\in\{\rho,\sigma,\betab,\alphab\}}||\nab_3^j\nab_4^k \nab^i\Psi||_{L^2(\Hb_{\ub})}+\sum_{\Psi\in\{\beta,\rho,\sigma,\betab\}}||\nab_3^j \nab_4^k\nab^i\Psi||_{L^2(H_u)})\leq D'_{I,J,K}. $$
for $u_*,\ub_*\leq \epsilon$, where $\epsilon$ can be chosen as in Theorem \ref{timeofexistence} and the constant $D'_{I,J,K}$ depends only on the size of the initial data norm in the assumption of Theorem \ref{timeofexistence} and $D_{I,J,K}$.
\end{proposition}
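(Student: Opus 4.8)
The plan is to argue by induction on $N:=I+J+K$, the case $N\le 2$ (together with the third angular derivatives of the Ricci coefficients) being precisely the content of Theorem \ref{timeofexistence}. Suppose the asserted bounds are known at all orders $\le N-1$; we upgrade to order $N$. The decisive point, already anticipated in the discussion preceding the statement, is that at the top order both the Bianchi system and the null structure equations are \emph{linear} in the order-$N$ unknowns, all nonlinear factors being Ricci coefficients and curvature components of order $\le N-1$; these are controlled by the inductive hypothesis together with the bounds $\mathcal O,\tilde{\mathcal O}_{3,2},\mathcal R<C'$ supplied by Theorem \ref{timeofexistence}. Consequently the same Gronwall/absorption scheme used in Sections \ref{Riccisec}--\ref{energyestimatessec} applies verbatim, with the smallness of $\epsilon$ (the very $\epsilon$ of Theorem \ref{timeofexistence}) used to absorb the linear top-order error terms, and the data $D_{I,J,K}$ entering only through the flux terms on $H_0$ and $\Hb_0$.

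Concretely I would run the three steps of Section \ref{estimates} at order $N$. For the energy estimates, commute the \emph{renormalized} Bianchi equations --- those written in terms of $\rhoc=\rho-\frac12\chih\cdot\chibh$ and $\sigmac=\sigma+\frac12\chih\wedge\chibh$ rather than $\rho,\sigma$ --- with the operators $\nab_3^j\nab_4^k\nab^i$, $i+j+k=N$, using the $[\nab_4,\nab]$ and $[\nab_3,\nab]$ formulae of Section \ref{commutation} supplemented by a $[\nab_3,\nab_4]$ commutator of the analogous schematic form. Exactly as in Proposition \ref{ee}, the renormalization is what removes every occurrence of $\alpha$: the identity $\nab_4\chih=-\alpha+\psi\psi$ cancels the $\alpha$ produced by the $\nab_4$ Bianchi equations for $\rho$ and $\sigma$, and this cancellation is stable under further differentiation, so neither $\alpha$ nor its derivatives ever enter. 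Integrating by parts with Propositions \ref{intbyparts34} and \ref{intbypartssph} then yields the energy identity whose error integrals are, schematically, products of $\nab_3^j\nab_4^k\nab^i\Psi$ with lower-order Ricci coefficients times either curvature of order $\le N$, or $K$, or one more angular derivative of a Ricci coefficient; Cauchy--Schwarz and the $\epsilon^{1/2}$ gain from integrating across the characteristic slabs reduce these to $\epsilon^{1/2}$ times the order-$N$ curvature flux plus a constant depending only on the lower-order norms, which is absorbed. For the Ricci coefficients (Steps 1 and 2) I would differentiate the null structure equations, \emph{avoiding} the equation $\nab_4\chih=-\alpha+\psi\psi$ and instead propagating $\nab_3^j\nab_4^k\nab^i\chih$ from the $\nab_3\chih$ equation and recovering the remaining order-$N$ Ricci coefficients from the Codazzi relations and the associated div-curl systems, exactly as in Propositions \ref{Ricci}, \ref{ellipticTheta} and \ref{omegaelliptic}, now at order $N$; here one needs control of $\nab^iK$ up to order $N-1$, which follows from the Gauss equation $K=-\rho+\frac12\chih\cdot\chibh-\frac14\trch\,\trchb$ once the order-$(N-1)$ curvature and Ricci bounds are in hand.

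The main obstacle is the bookkeeping needed to verify that the favourable structural features exploited at order $\le 2$ survive repeated commutation. Two points require care. First, $\alpha$ and all its derivatives must be kept out of the estimates: this is guaranteed by the renormalization above, which is preserved under differentiation, since the only equations that could reintroduce $\alpha$ are the $\nab_4$ equations for $\chih$ and for the renormalization corrections, and these are never invoked --- the $\nab_4$-derivatives one actually needs are those of $\rhoc,\sigmac,\betab,\alphab$ and of $\beta,\rho,\sigma$, whose renormalized $\nab_4$-equations are free of $\alpha$. Second, the mild derivative loss in the elliptic step --- each elliptic estimate costs curvature at the same order, and each $\Theta$-type transport estimate in turn consumes one more angular derivative of a Ricci coefficient --- must not accumulate; it does not, because, as in Section \ref{Ricciellipticsec}, the loss is one-directional and the order-$N$ energy estimates of the first step precisely supply the missing curvature norm at the required order. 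With all order-$N$ energies and Ricci-coefficient norms bounded in terms of $D_{I,J,K}$ and the Theorem \ref{timeofexistence} data, a bootstrap/last-slice argument identical to the one used to prove Theorem \ref{timeofexistence} promotes the a priori bound to a genuine estimate on $0\le u\le u_*$, $0\le\ub\le\ub_*$, completing the induction.
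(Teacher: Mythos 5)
Your proposal matches the paper's proof of Proposition \ref{propagationregularity}: induction on the total number of derivatives, commutation of $\nab_3^j\nab_4^k\nab^i$ with the renormalized Bianchi equations \eqref{rcsc} and \eqref{bc}, use of the $[\nab_3,\nab_4]$ commutator, elliptic recovery of the top-order Ricci coefficient derivatives via \eqref{hr.2}, and a Gronwall-type closure keeping $\alpha$ out at top order. The one minor difference is the closure mechanism: the paper does not rely on $\epsilon$-smallness to absorb the top-order error terms but rather runs an explicit exponential bootstrap $A e^{2A(u+\ub)}$, which is the appropriate tool here since these error terms are merely linear in the order-$N$ unknowns with bounded (not small) coefficients.
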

\begin{proof}
Since the arguments are similar to those in Section \ref{energyestimatessec}, we will only provide a sketch. We will proceed with an induction argument in $i$, $j $ and $k$ and we will only consider the highest order terms, assuming appropriate control on the lower order terms. From \cite{KN}, we have the commutation formula
$$[\nab_3,\nab_4]f=-2\omega \nab_3 f+2\omegab\nab_4 f+4\zeta\cdot\nab f$$
for all scalar functions $f$. Therefore, by commuting $\nab_3^j\nab_4^k\nab^i$ with the equations \eqref{rcsc} and \eqref{bc}, we have
\begin{equation*}
\begin{split}
&\nabla_3\nab_3^j\nab_4^k\nabla^i\beta- \nabla \nab_3^j\nab_4^k\nabla^i\rhoc-\nabla\nab_3^j\nab_4^k\nabla^i\sigmac \\
\sim&\psi\nab_3^j\nab_4^k\nab^i(\beta,\rhoc,\sigmac,\betab)+\psi\psi\nab_3^j\nab_4^k\nab^{i+1}\psi+\mbox{terms with fewer derivatives},\\
\end{split}
\end{equation*}
and
\begin{equation*}
\begin{split}
&\nabla_4\nab_3^j\nab_4^k\nabla^i\rhoc- \div\nab_3^j\nab_4^k\nabla^i\beta\\
\sim&\psi\nab_3^j\nab_4^k\nab^i(\beta,\rhoc,\sigmac,\betab)+\psi\psi\nab_3^j\nab_4^k\nab^{i+1}\psi+\mbox{terms with fewer derivatives},\\
\end{split}
\end{equation*}
and 
\begin{equation*}
\begin{split}
&\nabla_4\nab_3^j\nab_4^k\nabla^i\sigmac- \div^*\nab_3^j\nab_4^k\nabla^i\beta\\
\sim&\psi\nab_3^j\nab_4^k\nab^i(\beta,\rhoc,\sigmac,\betab)+\psi\psi\nab_3^j\nab_4^k\nab^{i+1}\psi+\mbox{terms with fewer derivatives}.
\end{split}
\end{equation*}
Notice that since $\alpha$ does not appear in \eqref{rcsc} and \eqref{bc}, we do not have the term $\nab_3^j\nab_4^k\nab^i\alpha$ in these commuted equations. Therefore, integrating by parts as in Section \ref{energyestimatessec}, we have
\begin{equation*}
\begin{split}
&||\nab_3^j\nab_4^k \nab^i(\rhoc,\sigmac)||_{L^2(\Hb_{\ub})}^2+||\nab_3^j \nab_4^k\nab^i\beta||_{L^2(H_u)}^2\\
\leq &\mbox{Data}+\|\nab_3^j\nab_4^k\nab^i\Psi\psi\nab_3^j\nab_4^k\nab^i(\beta,\rhoc,\sigmac,\betab)\|_{L^1(\mathcal D_{u,\ub})}+\|\nab_3^j\nab_4^k\nab^i\Psi\psi\psi\nab_3^j\nab_4^k\nab^{i+1}\psi\|_{L^1(\mathcal D_{u,\ub})}\\
&+\|\nab_3^j\nab_4^k\nab^i\Psi\times(\mbox{terms with fewer derivative})\|_{L^1(\mathcal D_{u,\ub})}.
\end{split}
\end{equation*}
Similarly, using the other null Bianchi equations in \eqref{eq:null.Bianchi}, we also have
\begin{equation}\label{hr.1}
\begin{split}
&\sum_{\Psic\in\{\rhoc,\sigmac,\betab,\alphab\}}||\nab_3^j\nab_4^k \nab^i\Psic||_{L^2(\Hb_{\ub})}^2+\sum_{\Psi\in\{\beta,\rho,\sigma,\betab\}}||\nab_3^j \nab_4^k\nab^i\Psi||_{L^2(H_u)}^2\\
\leq &\mbox{Data}+\|\nab_3^j\nab_4^k\nab^i\Psi\psi\nab_3^j\nab_4^k\nab^i(\beta,\rhoc,\sigmac,\betab,\alphab)\|_{L^1(\mathcal D_{u,\ub})}+\|\nab_3^j\nab_4^k\nab^i\Psi\psi\psi\nab_3^j\nab_4^k\nab^{i+1}\psi\|_{L^1(\mathcal D_{u,\ub})}\\
&+\|\nab_3^j\nab_4^k\nab^i\Psi\times(\mbox{terms with fewer derivative})\|_{L^1(\mathcal D_{u,\ub})}.
\end{split}
\end{equation}
As in Section \ref{Ricciellipticsec}, the highest order derivatives of the Ricci coefficients $\nab_3^j\nab_4^k\nab^{i+1}\psi$ can be controlled by the highest order derivatives of the curvature components using elliptic estimates:
\begin{equation}\label{hr.2}
\begin{split}
&\sum_{\psi\in\{\trch,\chih,\omega,\trchb,\eta,\etab\}}\|\nab_3^j\nab_4^k\nab^{i+1}\psi\|_{L^2(H_u)}+\sum_{\psi\in\{\trch,\chih,\omegab,\trchb,\eta,\etab\}}\|\nab_3^j\nab_4^k\nab^{i+1}\psi\|_{L^2(\Hb_{\ub})}\\
\leq &\sum_{\Psic\in\{\rhoc,\sigmac,\betab,\alphab\}}||\nab_3^j\nab_4^k \nab^i\Psic||_{L^2(\Hb_{\ub})}+\sum_{\Psi\in\{\beta,\rho,\sigma,\betab\}}||\nab_3^j \nab_4^k\nab^i\Psi||_{L^2(H_u)}\\
&+\mbox{terms under control}.
\end{split}
\end{equation}
Using the boundedness of the $L^\infty$ norm of $\psi$ and the control of the terms with fewer derivatives that we have assumed as the induction hypothesis, \eqref{hr.1} and \eqref{hr.2} imply
\begin{equation}\label{hr.3}
\begin{split}
&\sum_{\Psic\in\{\rhoc,\sigmac,\betab,\alphab\}}||\nab_3^j\nab_4^k \nab^i\Psic||_{L^2(\Hb_{\ub})}^2+\sum_{\Psi\in\{\beta,\rho,\sigma,\betab\}}||\nab_3^j \nab_4^k\nab^i\Psi||_{L^2(H_u)}^2\\
\leq &\mbox{Data term}+\mbox{Bounded terms}+\int_0^{\ub} \sum_{\Psic\in\{\rhoc,\sigmac,\betab,\alphab\}}||\nab_3^j\nab_4^k \nab^i\Psic||_{L^2(\Hb_{\ub_*})}^2 d\ub_*\\
&+\int_0^u\sum_{\Psi\in\{\beta,\rho,\sigma,\betab\}}||\nab_3^j \nab_4^k\nab^i\Psi||_{L^2(H_{u_*})}^2 du_*\\
&+\int_0^{\ub}\int_0^u\sum_{\Psi\in\{\beta,\rho,\sigma,\betab\}}||\nab_3^j \nab_4^k\nab^i\Psi||_{L^2(H_{u_*})}\sum_{\Psic\in\{\rhoc,\sigmac,\betab,\alphab\}}||\nab_3^j\nab_4^k \nab^i\Psic||_{L^2(\Hb_{\ub_*})} du_*d\ub_*.
\end{split}
\end{equation}
To obtain the desired conclusion, introduce the bootstrap assumption
$$\sum_{\Psic\in\{\rhoc,\sigmac,\betab,\alphab\}}||\nab_3^j\nab_4^k \nab^i\Psic||_{L^2(\Hb_{\ub}(0,u))}^2+\sum_{\Psi\in\{\beta,\rho,\sigma,\betab\}}||\nab_3^j \nab_4^k\nab^i\Psi||_{L^2(H_u(0,\ub))}^2\leq A e^{2A(u+\ub)}$$
for $A$ sufficiently large. Then \eqref{hr.3} implies that we have
$$\sum_{\Psic\in\{\rhoc,\sigmac,\betab,\alphab\}}||\nab_3^j\nab_4^k \nab^i\Psic||_{L^2(\Hb_{\ub}(0,u))}^2+\sum_{\Psi\in\{\beta,\rho,\sigma,\betab\}}||\nab_3^j \nab_4^k\nab^i\Psi||_{L^2(H_u(0,\ub))}^2\leq C e^{2A(u+\ub)}$$
for some constant $C$ depending on the data term and the bounded terms. Therefore, this improves the bootstrap assumption for $A$ sufficiently large. Finally, notice that since we can control the derivatives of the Ricci coefficients by the derivatives of the renormalized curvature components, we in fact have the bound for the curvature components themselves. This concludes the proof of the proposition.
\end{proof}
\begin{remark}
Notice that for approximating solutions for an impulsive gravitational wave, the constants in the initial norms $D_{I,J,K}$ depend on $n$. However, the initial data satisfy uniform estimates for each $I\geq 2,J\geq 0$ with $K=0$. Therefore, the corresponding uniform estimates will hold in the region $0\leq u\leq u_*$ and $0\leq\ub\leq \ub_*$.
\end{remark}

\subsection{End of Proof of Theorem \ref{timeofexistence}}\label{EndofProof}
Once we have the estimates for all the higher derivatives, the proof of Theorem \ref{timeofexistence} is standard. We refer, for example, to Section 6 of \cite{L} for details.

\subsection{Additional Estimates for an Impulsive Gravitational Wave}\label{AddEst}
In this Subsection, we focus our attention on the sequence of spacetimes $(\mathcal M_n, g_n)$ with characteristic initial data converging to that of an impulsive gravitational wave as constructed in Section \ref{initialcondition}. The conclusion of Theorem \ref{timeofexistence} can be applied to $(\mathcal M_n, g_n)$. The initial data of the sequence of spacetimes, however, possess an additional property: $\alpha_n$ and its angular derivatives are bounded uniformly in $L^1_{\ub}$. In this Subsection, we prove that in the constructed spacetimes $(\mathcal M_n, g_n)$, the same estimates hold for all $u\leq u_*$ with $u_*\leq \epsilon$. Moreover, we show that $\alpha_n$ concentrates around $\ub_s$ and if $\alpha_n$ is initially more regular uniformly in $n$ away from $\ub_s$, it is also uniformly more regular away from $\ub_s$ for $0\leq u\leq u_*\leq \epsilon$.

We begin by proving energy estimates for $\alpha_n$ in $L^2(H_u)$ and $\beta_n$ in $L^2(\Hb_{\ub})$ using the pair of Bianchi equations for $(\nab_3\alpha_n,\nab_4\beta_n)$ in (\ref{eq:null.Bianchi}).
\begin{proposition}\label{alphaenergy}
The following estimate holds:
\begin{equation*}
\begin{split}
&\int_{H_u} |\nabla^i\alpha_n|^2_\gamma+\int_{\Hb_{\ub}} |\nabla^i\beta_n|^2_\gamma  \\
\leq &\int_{H_{u'}} |\nabla^i\alpha_n|^2_\gamma+\int_{\Hb_{\ub'}} |\nabla^i\beta|^2_\gamma+\int_{D_{u,\ub}}\nabla^i(\alpha_n,\beta_n)\sum_{i_1+i_2+i_3+i_4=i}\nabla^{i_1}\psi^{i_2}\nabla^{i_3}\psi\nabla^{i_4}\Psi\\
&+\int_{D_{u,\ub}}\nabla^i(\alpha_n,\beta_n)\sum_{i_1+i_2+i_3+i_4=i}\nabla^{i_1}\psi^{i_2}\nabla^{i_3}K\nabla^{i_4}\Psi,
\end{split}
\end{equation*}
where here, unlike in other places, we use $\Psi$ to denote all possible curvature components, including $\alpha_n$.
\end{proposition}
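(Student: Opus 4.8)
The statement is a standard energy estimate obtained by contracting the pair of Bianchi equations
$$\nab_3\alpha+\tfrac12\trchb\,\alpha=\nabla\hot\beta+4\omegab\alpha-3(\chih\rho+{}^*\chih\sigma)+(\zeta+4\eta)\hot\beta,$$
$$\nab_4\beta+2\trch\,\beta=\div\alpha-2\omega\beta+\eta\alpha$$
with $\alpha$ and $\beta$ respectively, integrating over $D_{u,\ub}$, and using the integration-by-parts identities already established in Propositions \ref{intbyparts34} and \ref{intbypartssph}. The plan is to mirror exactly the proof of Proposition \ref{ee}, the only differences being that (i) we now \emph{do} retain $\alpha$, so no renormalization is needed --- we work directly with $\alpha$ and $\beta$ rather than with $\rhoc,\sigmac$; and (ii) since we want uniformity in $n$, all Ricci coefficients $\psi$, the Gauss curvature $K$, and the commutator terms are controlled by the $n$-independent a priori bounds of Theorem \ref{timeofexistence} (which applies to each $(\mathcal M_n,g_n)$ by Proposition \ref{dataprop}).

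First I would treat the case $i=0$. Pairing the $\nab_3\alpha$ equation with $\alpha$ and the $\nab_4\beta$ equation with $\beta$, the key cancellation is between $\langle\alpha,\nabla\hot\beta\rangle$ and $\langle\beta,\div\alpha\rangle$: by Proposition \ref{intbypartssph} these combine, up to an error of the form $\int_{D_{u,\ub}}(\eta+\etab)\alpha\beta$, into boundary terms on $H_u$ and $\Hb_{\ub}$ after applying Proposition \ref{intbyparts34}. The remaining terms on the right-hand sides of the two Bianchi equations are all of the schematic type $\psi\,\Psi$ (with $\Psi$ now allowed to be any curvature component including $\alpha$), so when paired against $\alpha$ or $\beta$ they produce spacetime integrals of the form $\int_{D_{u,\ub}}(\alpha,\beta)\,\psi\,\Psi$, which is the $i=0$ instance of the first error term in the statement. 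The $K$-type term does not appear yet at the zeroth order but will be produced by commutation.

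For $i=1,2$ I would commute the two Bianchi equations with $\nabla^i$ using the commutation formulae of Section \ref{commutation} (Proposition \ref{commuteeqn}); this is where the terms $\nabla^{i_1}\psi^{i_2}\nabla^{i_3}K\nabla^{i_4}\Psi$ arise, exactly as in the proof of Proposition \ref{ee}, since each commutator $[\nabla_3,\nabla]$ or $[\nabla_4,\nabla]$ contributes a factor of $\chi$ or $\chib$ (hence, after using the Codazzi equations and the Gauss equation, a curvature factor) or of $\beta,\betab$. The structure of the commuted equations is again such that the top-order $\nabla^{i+1}\beta$ in $\nab_3\nabla^i\alpha$ and the $\nabla^{i+1}\alpha$ in $\nab_4\nabla^i\beta$ pair into a divergence via Proposition \ref{intbypartssph}, while all lower-order commutator terms fall into one of the two schematic sums displayed in the statement. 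Collecting the boundary terms from Proposition \ref{intbyparts34} and noting that the factor $(2\omega-\trch)$ or $(2\omegab-\trchb)$ appearing there is bounded and may be absorbed into the $\psi\,\psi\,\Psi$-type error term, one obtains precisely the claimed inequality.

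The only genuine subtlety --- and the step I expect to require the most care --- is bookkeeping the curvature factors in the commutator terms so that they land in the advertised form $\nabla^{i_1}\psi^{i_2}\nabla^{i_3}K\nabla^{i_4}\Psi$ rather than producing an uncontrolled $\nabla^2\beta$ or $\nabla^2\alpha$ on $\Hb_{\ub}$ (recall $\nabla^2\beta$ is \emph{not} bounded on $\Hb_{\ub}$ in our setting). This is handled exactly as before: one commutes with \emph{angular} derivatives only, uses Proposition \ref{commuteeqn} to write the commutator contributions in terms of $\chi,\chib$ times curvature, and then replaces $\chih,\chibh$ and the trace parts via the Codazzi and Gauss equations, trading one derivative of $\chi$ for $\beta$ (or $\betab$) and one undifferentiated $\chi$ for $K$ plus lower-order curvature; crucially $\alpha$ never appears in these substitutions. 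Everything else is a routine application of Cauchy--Schwarz on the spheres together with the $L^\infty$ bounds on $\Omega$, $\trch$, $\trchb$ from Proposition \ref{Omega} and the bootstrap assumption. No estimate on the size of $\alpha$ is used in deriving the \emph{identity} --- that is deferred to the subsequent Gr\"onwall-type argument where the $L^1_{\ub}$ concentration of $\alpha_n$ enters.
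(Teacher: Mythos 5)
Your proposal matches the paper's proof essentially step for step: the same pair of Bianchi equations for $(\nabla_3\alpha,\nabla_4\beta)$, the same use of Propositions \ref{intbypartssph} and \ref{intbyparts34} to produce the divergence structure and boundary flux terms, and the same commutation with $\nabla^i$ to produce the $\psi$- and $K$-type errors. The one remark I'd add is that you do not need the Codazzi/Gauss substitution step you flag as a ``subtlety'' to be particularly delicate here: unlike Proposition \ref{ee}, this proposition makes no attempt to avoid $\alpha$ (indeed $\Psi$ here is explicitly allowed to be $\alpha_n$), so no renormalization or careful exclusion is required --- the point is precisely that for the smooth approximating spacetimes $(\mathcal M_n,g_n)$ this inequality holds with $n$-dependent right-hand side, and the uniformity in $n$ is extracted only afterwards in Propositions \ref{beforeshock}--\ref{aftershock}.
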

\begin{proof}
Consider the following Bianchi equations:
\begin{equation*}
\begin{split}
&\nabla_3\alpha+\frac 12 tr\chib \alpha=\nabla\hot \beta+ 4\omegab\alpha-3(\chih\rho+^*\chih\sigma)+
(\zeta+4\eta)\hot\beta,\\
&\nabla_4\beta+2tr\chi\beta = \div\alpha - 2\omega\beta +  \eta \alpha,\\
\end{split}
\end{equation*}
Using Proposition \ref{intbypartssph},
\begin{equation*}
\begin{split}
\int <\alpha_n,\nabla_3\alpha_n>_\gamma =&\int <\alpha_n,\nabla\hot\beta_n>_\gamma+<\alpha_n,\psi\Psi>_\gamma \\
=&\int -<\div\alpha_n,\beta_n>_\gamma+<\alpha_n,\psi\Psi>_\gamma \\
=&\int -<\nabla_4\beta_n,\beta_n>_\gamma+<\alpha_n,\psi\Psi>_\gamma +<\beta_n,\psi\Psi>_\gamma\\
\end{split}
\end{equation*}
Integrate by parts using Proposition \ref{intbyparts34} to get that for $u\geq u'$, $\ub\geq \ub'$,
\begin{equation*}
\begin{split}
\int_{H_u} |\alpha_n|^2_\gamma+\int_{\Hb_{\ub}} |\beta_n|^2_\gamma \leq \int_{H_{u'}} |\alpha_n|^2_\gamma+\int_{\Hb_{\ub'}} |\beta_n|^2_\gamma+\int_{D_{u,\ub}}<(\alpha_n,\beta_n),\psi\Psi>_\gamma\\
\end{split}
\end{equation*}
We use the commutation formula, and note that the special structure is preserved in the highest order:
\begin{equation*}
\begin{split}
&\nabla_3\nabla^i\alpha- \nabla\hot \nabla^i\beta \\
\sim&\sum_{i_1+i_2+i_3+i_4=i}\nabla^{i_1}\psi^{i_2}\nabla^{i_3}\psi\nabla^{i_4}\alpha \\
&+\sum_{i_1+i_2+i_3+i_4+i_5=i-1}\nabla^{i_1}\psi^{i_2}\nabla^{i_3}K^{i_4}\nabla^{i_5}\beta+\sum_{i_1+i_2+i_3=i}\nabla^{i_1}\psi^{i_2}\nabla^{i_3}(\psi\Psi),\\
&\nabla_4\nabla^i\beta- \div\nabla^i\alpha\\
\sim&\sum_{i_1+i_2+i_3+i_4=i}\nabla^{i_1}\psi^{i_2}\nabla^{i_3}\psi\nabla^{i_4}\beta\\
&+\sum_{i_1+i_2+i_3+i_4+i_5=i-1}\nabla^{i_1}\psi^{i_2}\nabla^{i_3}K^{i_4}\nabla^{i_5}\alpha+\sum_{i_1+i_2+i_3=i}\nabla^{i_1}\psi^{i_2}\nabla^{i_3}(\psi\Psi),\\
\end{split}
\end{equation*}
Performing the integration by parts as before using Propositions \ref{intbyparts34} and \ref{intbypartssph}, we have
\begin{equation*}
\begin{split}
&\int_{H_u} |\nabla^i\alpha_n|^2_\gamma+\int_{\Hb_{\ub}} |\nabla^i\beta_n|^2_\gamma  \\
\leq &\int_{H_{u'}} |\nabla^i\alpha_n|^2_\gamma+\int_{\Hb_{\ub'}} |\nabla^i\beta_n|^2_\gamma+\int_{D_{u,\ub}}\nabla^i(\alpha_n,\beta_n)\sum_{i_1+i_2+i_3+i_4=i}\nabla^{i_1}\psi^{i_2}\nabla^{i_3}\psi\nabla^{i_4}\Psi\\
&+\int_{D_{u,\ub}}\nabla^i(\alpha_n,\beta_n)\sum_{i_1+i_2+i_3+i_4=i}\nabla^{i_1}\psi^{i_2}\nabla^{i_3}K\nabla^{i_4}\Psi.
\end{split}
\end{equation*}
\end{proof}

\subsubsection{Before the Impulse}

We first apply Proposition \ref{alphaenergy} for the region $0\leq \ub\leq \ub_s$. Notice that the initial norm
$$\sum_{i\leq 2}\int_0^{\ub_s} ||\nabla^i\alpha_n||_{L^2(S_{0,\ub})}^2 d\ub\leq C$$
is bounded independent of $n$. Therefore,
\begin{proposition}\label{beforeshock}
$$\sum_{i\leq 2}(\sup_{u\leq \epsilon}||\nabla^i\alpha_n||_{L^2(H_u(0,\ub_s))}^2+\sup_{\ub\leq \ub_s}||\nabla^i\beta_n||^2_{L^2(\Hb_{\ub})}) \leq C(\mathcal O_0,\mathcal R_0),$$
independent of $n$.
\end{proposition}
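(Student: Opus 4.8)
The plan is to apply the energy estimate of Proposition \ref{alphaenergy} in the truncated region $D_{u,\ub_s} = \{0\le u' \le u,\ 0\le \ub'\le \ub_s\}$, with $i\le 2$, summing over the two equations, and to absorb the nonlinear error terms using the a priori bounds already established in Theorem \ref{timeofexistence} for the spacetimes $(\mathcal M_n, g_n)$. The crucial point is that in this region $0\le \ub\le\ub_s$ the initial data for $\alpha_n$ on $H_0$ is smooth \emph{uniformly in $n$}: by the construction in Section \ref{initialcondition}, $\alpha_n' = 0$ for $\ub\le\ub_s$ and the bounds of Proposition \ref{dataprop} give $\sum_{i\le 2}\int_0^{\ub_s}\|\nabla^i\alpha_n\|_{L^2(S_{0,\ub})}^2\,d\ub\le C$ independently of $n$. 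Likewise the data for $\beta_n$ on $\Hb_0$ is fixed (it coincides with that of the impulsive wave) and hence bounded. So the ``Data'' terms on the right-hand side of Proposition \ref{alphaenergy} are $\le C(\mathcal O_0,\mathcal R_0)$ uniformly in $n$.

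The main steps are as follows. First, rewrite the conclusion of Proposition \ref{alphaenergy} in the region $D_{u,\ub_s}$, noting that here $\Psi$ is allowed to denote $\alpha_n$ as well; the error integrals have the schematic form $\int_{D_{u,\ub_s}} \nabla^i(\alpha_n,\beta_n)\cdot\nabla^{i_1}\psi^{i_2}\nabla^{i_3}(\psi\text{ or }K)\nabla^{i_4}\Psi$ with $i_1+i_2+i_3+i_4 = i\le 2$. Second, estimate each such term by Cauchy--Schwarz in $D_{u,\ub_s}$, putting $\nabla^i(\alpha_n,\beta_n)$ in $L^2(D)$ and the remaining factor in $L^2(D)$; the Ricci-coefficient factors $\nabla^{i_1}\psi^{i_2}\nabla^{i_3}\psi$ (and the Gauss curvature $\nabla^{i_3} K$) are controlled in $L^\infty_{u,\ub}L^2(S)$ or, via Sobolev embedding (Propositions \ref{L4}, \ref{Linfty}), in the mixed norms by $C(\mathcal O_0,\mathcal R_0)$ using Propositions \ref{Ricci}, \ref{Kest}, \ref{Riccielliptic}, so that each error integral is bounded by $C(\mathcal O_0,\mathcal R_0)\,\epsilon^{1/2}\big(\sum_{i\le 2}\|\nabla^i\alpha_n\|_{L^2(H_u(0,\ub_s))}+\|\nabla^i\beta_n\|_{L^2(\Hb_\ub)}\big)^2$ plus contributions from the other (already bounded) curvature components $\Psi\in\{\beta,\rho,\sigma,\betab,\alphab\}$, which by $\mathcal R\le C(\mathcal O_0,\mathcal R_0)$ give $C(\mathcal O_0,\mathcal R_0)$ times the square root of the quantity we are bounding. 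Third, write $F(u,\ub) := \sum_{i\le 2}\big(\|\nabla^i\alpha_n\|_{L^2(H_u(0,\ub))}^2 + \|\nabla^i\beta_n\|_{L^2(\Hb_\ub(0,u))}^2\big)$; the estimate then reads $F(u,\ub_s)\le C(\mathcal O_0,\mathcal R_0) + C(\mathcal O_0,\mathcal R_0)\int_0^u F(u',\ub_s)\,du' + C(\mathcal O_0,\mathcal R_0)\int_0^{\ub_s} F(u,\ub')\,d\ub'$, after absorbing the $\epsilon^{1/2} F$ term into the left side for $\epsilon$ small. A standard Gronwall argument in the two variables $(u,\ub)$ (identical to the one used in Proposition \ref{propagationregularity}, e.g.\ via a bootstrap $F\le Ae^{2A(u+\ub)}$) then yields $F(u,\ub_s)\le C(\mathcal O_0,\mathcal R_0)$ for all $u\le\epsilon$, which is the claim.

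The one genuinely delicate point is making sure that $\alpha_n$ never needs to be controlled via its $\nabla_4$ equation, which is precisely the equation whose coefficient involves the singular behavior; but this is handled automatically because we use only the \emph{pair} $(\nabla_3\alpha_n,\nabla_4\beta_n)$ from \eqref{eq:null.Bianchi} together with the integration-by-parts identities of Propositions \ref{intbyparts34} and \ref{intbypartssph}, exactly as in the proof of Proposition \ref{alphaenergy}. Everything else is a routine repetition of the scheme of Section \ref{energyestimatessec}: the nonlinear terms are quadratic in the curvature with Ricci-coefficient weights, the weights are uniformly bounded by the a priori estimates of Theorem \ref{timeofexistence} (which apply to each $(\mathcal M_n,g_n)$ with constants independent of $n$ since the data satisfy the hypotheses uniformly), and the smallness of $\epsilon$ lets one absorb the top-order contributions. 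I expect the main obstacle to be purely bookkeeping: checking that in the truncated region $0\le\ub\le\ub_s$ no boundary term from the $\ub=\ub_s$ face has the wrong sign — but since we integrate $\nabla_4$ from $\ub'=0$ up to $\ub'=\ub\le\ub_s$ and $\nabla_3$ from $u'=0$ up to $u'=u$, all boundary contributions at $\ub'=\ub$ and $u'=u$ appear with favorable sign on the left, and the only data terms are at $\ub'=0$ (on $\Hb_0$, where $\beta_n$ is fixed and smooth) and $u'=0$ (on $H_0$, where $\alpha_n$ is smooth uniformly in $n$ for $\ub\le\ub_s$).
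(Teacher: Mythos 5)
Your proposal follows exactly the paper's route: you apply the energy identity of Proposition \ref{alphaenergy} for the pair $(\alpha_n,\beta_n)$ restricted to the region $0\le\ub\le\ub_s$, observe that the initial flux $\sum_{i\le 2}\int_0^{\ub_s}\|\nabla^i\alpha_n\|_{L^2(S_{0,\ub})}^2\,d\ub$ is bounded uniformly in $n$ since $\alpha_n$ coincides with the smooth (singularity-free) data there, control the nonlinear error integrals using the uniform a priori bounds of Theorem \ref{timeofexistence}, and close with Gronwall. The paper's proof is a terser statement of the same argument; your extra bookkeeping (Cauchy--Schwarz, $\epsilon^{1/2}$ absorption, and the two-variable Gronwall/bootstrap) is exactly what is implicit in the phrase ``by Proposition \ref{alphaenergy} and all the estimates in the previous section.''
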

\begin{proof}
By Proposition \ref{alphaenergy} and all the estimates in the previous section, we have
\begin{equation*}
\begin{split}
&\sum_{i\leq 2}(||\nabla^i\alpha_n||_{L^2(H_u(0,\ub_s))}^2+\sup_{\ub\leq \ub_s}||\nabla^i\beta_n||^2_{L^2(\Hb_{\ub})})\\
 \leq &C(\mathcal O_0,\mathcal R_0)(1+||\nabla^i\alpha_n||_{L^2(H)}+\int_0^u ||\nabla^i\alpha_n||^2_{L^2(H_{u'})} du').
\end{split}
\end{equation*}
By Gronwall's inequality, we have
$$\sum_{i\leq 2}(\sup_{u\leq \epsilon}||\nabla^i\alpha_n||_{L^2(H_u(0,\ub_s))}^2+\sup_{\ub\leq \ub_s}||\nabla^i\beta_n||^2_{L^2(\Hb_{\ub})}) \leq C(\mathcal O_0,\mathcal R_0),$$
as claimed.
\end{proof} 

\subsubsection{The Impulse Region}

For an impulsive gravitational wave, the initial data represented by the $\alpha$ component of curvature and its derivative are not in $L^2$. By the construction of Section \ref{initialcondition}, the approximating sequence $\alpha_n$ satisfies initially 
$$\sum_{i\leq 2}\int_{\ub_s}^{\ub_s+2^{-n}} ||\nabla^i\alpha_n||_{L^2(S_{0,\ub})}^2 d\ub\leq C2^{n}.$$
In the following Proposition, we show that this bound is propagated.
\begin{proposition}\label{duringshock}
$$\sum_{i\leq 2}(\sup_{u\leq \epsilon}||\nabla^i\alpha_n||_{L^2(H_u(\ub_s,\ub_s+2^{-n}))}^2+\sup_{\ub_s\leq \ub\leq \ub_s+2^{-n}}||\nabla^i\beta_n||^2_{L^2(\Hb_{\ub})}) \leq C(\mathcal O_0,\mathcal R_0)2^{n},$$
where $C(\mathcal O_0,\mathcal R_0)$ is independent of $n$.
\end{proposition}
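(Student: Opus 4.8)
The plan is to revisit the \emph{non-renormalized} energy estimate of Proposition \ref{alphaenergy} for the pair $(\alpha_n,\beta_n)$ — which is legitimate here precisely because each $(\mathcal M_n,g_n)$ is smooth, so the $\alpha_n$-flux is finite and the integrations by parts of Propositions \ref{intbyparts34} and \ref{intbypartssph} are justified — but applied on the thin slab $D$ defined by $0\le u'\le u$ and $\ub_s\le\ub'\le\ub$ with $\ub\le\ub_s+2^{-n}$, taking $u'=0$ and $\ub'=\ub_s$ as initial boundaries. On $H_0\cap\{\ub_s\le\ub'\le\ub_s+2^{-n}\}$ the data obeys $\sum_{i\le2}\|\nabla^i\alpha_n\|_{L^2}^2\le C2^{n}$ by the construction of Section \ref{initialcondition} (this is \eqref{data7}), and on the incoming cone $\Hb_{\ub_s}(0,u)$ the flux $\sum_{i\le2}\|\nabla^i\beta_n\|_{L^2}^2$ is bounded by $C(\mathcal O_0,\mathcal R_0)$ \emph{uniformly in $n$} by Proposition \ref{beforeshock}. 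Writing $\mathcal A_n:=\sum_{i\le2}\sup_{u\le\epsilon}\|\nabla^i\alpha_n\|_{L^2(H_u(\ub_s,\ub_s+2^{-n}))}^2$ and $\mathcal B_n:=\sum_{i\le2}\sup_{\ub_s\le\ub\le\ub_s+2^{-n}}\|\nabla^i\beta_n\|_{L^2(\Hb_{\ub})}^2$, the goal is to show $\mathcal A_n+\mathcal B_n\le C(\mathcal O_0,\mathcal R_0)2^{n}$.

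The heart of the matter is the spacetime error integrals in Proposition \ref{alphaenergy}, of schematic form $\int_D\nabla^i(\alpha_n,\beta_n)\,\nabla^{i_1}\psi^{i_2}\nabla^{i_3}(\psi\text{ or }K)\nabla^{i_4}\Psi$ with $i_1+i_2+i_3+i_4\le2$ and $\Psi$ ranging over \emph{all} curvature components including $\alpha_n$. Three observations close the argument. First, every Ricci-coefficient factor $\nabla^j\psi$ ($j\le3$), every factor $\nabla^j K$ ($j\le1$), and every curvature factor $\Psi\ne\alpha_n$ is bounded in the relevant norm by $C(\mathcal O_0,\mathcal R_0)$ uniformly in $n$, since the approximating data satisfy the hypotheses of Theorem \ref{timeofexistence} and of Proposition \ref{propagationregularity} with $n$-independent constants (and $K$ is controlled by Proposition \ref{Kest}); this is where the structural fact that $\alpha$ is absent from the estimates for the Ricci coefficients and for $K$ is used once more. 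Second, the only dangerous factors are those carrying $\nabla^i\alpha_n$, and no term contains more than two of them (one possibly from the test factor, one possibly from $\Psi$). Third, because $D$ is thin one has $\int_D|\nabla^i\alpha_n|^2\le\epsilon\,\mathcal A_n$ (slice by $H_{u'}$, integrate $du'$) and $\int_D|\nabla^i\beta_n|^2\le 2^{-n}\mathcal B_n$ (slice by $\Hb_{\ub'}$, integrate $d\ub'$), where crucially $2^{-n}\le\ub_s\le\epsilon/2$ in the regime of interest. Hence every term quadratic in $\nabla\alpha_n$ contributes $\lesssim C(\mathcal O_0,\mathcal R_0)\epsilon\,\mathcal A_n$, every mixed $\nabla\alpha_n$–$\nabla\beta_n$ term contributes $\lesssim C(\mathcal O_0,\mathcal R_0)(\epsilon\mathcal A_n+2^{-n}\mathcal B_n)$, and every term linear in $\nabla\alpha_n$ or $\nabla\beta_n$ paired only with bounded factors is, by Cauchy--Schwarz and Young's inequality, at most $\delta\mathcal A_n$ (resp.\ $\delta\mathcal B_n$) plus a quantity bounded independently of $n$ — in particular $\le C(\mathcal O_0,\mathcal R_0)2^{n}$. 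One never divides by $2^{-n}$, so all constants remain $n$-independent.

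Collecting these bounds, Proposition \ref{alphaenergy} yields, for every $u$ and every $\ub$ in the slab, an inequality of the form $\|\nabla^i\alpha_n\|_{L^2(H_u(\ub_s,\ub_s+2^{-n}))}^2+\|\nabla^i\beta_n\|_{L^2(\Hb_{\ub}(0,u))}^2\le C(\mathcal O_0,\mathcal R_0)2^{n}+\big(C(\mathcal O_0,\mathcal R_0)\epsilon+\delta\big)(\mathcal A_n+\mathcal B_n)+C(\mathcal O_0,\mathcal R_0)\int_0^u\|\nabla^i\alpha_n\|_{L^2(H_{u'}(\ub_s,\ub_s+2^{-n}))}^2\,du'$, which is exactly the structure in the proof of Proposition \ref{beforeshock} with the data constant promoted from $C$ to $C2^{n}$. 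Choosing $\epsilon$ (hence also $2^{-n}$) and $\delta$ small depending only on $\mathcal O_0,\mathcal R_0$, absorbing the $(\mathcal A_n+\mathcal B_n)$ terms to the left, and closing with Gronwall's inequality in $u$ gives $\mathcal A_n+\mathcal B_n\le C(\mathcal O_0,\mathcal R_0)2^{n}$, as claimed. The main obstacle is the bookkeeping of the second paragraph: one must check that in every error term produced by commuting the Bianchi system for $(\nabla_3\alpha,\nabla_4\beta)$ with up to two angular derivatives, any occurrence of $\nabla^i\alpha_n$ (including the case $\Psi=\alpha_n$) is multiplied only by factors bounded uniformly in $n$, so that the thin-slab smallness $\epsilon$ or $2^{-n}$ is always available to absorb the resulting $\mathcal A_n$ and $\mathcal B_n$ contributions — and the companion fact that the loss of $\alpha$ forces $\beta_n$ to be estimated on $\Hb$ with a genuine $2^{-n}$-scaling rather than being slab-independent.
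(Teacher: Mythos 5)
Your argument is correct and follows the paper's own route: Proposition \ref{alphaenergy} applied on the thin slab, the $2^n$-scaled initial flux for $\alpha_n$ on $H_0$ coming from \eqref{data7}, the $\beta_n$-flux on $\Hb_{\ub_s}$ supplied by Proposition \ref{beforeshock}, uniform bounds on the Ricci coefficients and the $\mathcal R$-controlled curvature components from Theorem \ref{timeofexistence}, and closure by Gronwall -- this is exactly what the paper means when it says the proof is ``the same as Proposition \ref{beforeshock}'' with the data constant promoted to $2^n$. The only minor inefficiency is that you invoke the $2^{-n}\mathcal B_n$ absorption for the error terms containing $\nab^i\beta_n$ as a factor $\Psi$, whereas these are already controlled $n$-independently through the $\mathcal R$-norm on $L^2(H_u)$ (with an $\epsilon^{1/2}$ gain from integrating over the thin slab), so $\mathcal B_n$ only enters the estimate through the flux being bounded, not as an error-side quantity that needs absorbing; and the phrase ``choosing $\epsilon$ (hence also $2^{-n}$) small'' should just read ``choosing $\epsilon$ small'', since $2^{-n}$ is fixed by $n$ and the absorption is harmless regardless (either by $2^{-n}\le \epsilon$ in the relevant range of $n$ or by falling back on the $\mathcal R$-bound).
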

\begin{proof}
The proof is the same as Proposition \ref{beforeshock}, except that the initial data for $\nabla^i\alpha$ can only be bounded by $2^{\frac n2}$ in $L^2(H_0(\ub_s,\ub_s+2^{-n}))$.
\end{proof}

\subsubsection{After the Impulse}

For $\ub_s+2^{-n}\leq \ub\leq \ub_*$, since the initial data satisfy $$\sum_{i\leq 2}\int_{\ub_s+2^{-n}}^{\ub_*} ||\nabla^i\alpha_n||_{L^2(S_{0,\ub})}^2 d\ub\leq C,$$
independent of $n$, we again have a better estimate. An extra challenge arises from the fact that the estimates derived in Proposition \ref{duringshock} depends on $n$. This can be overcome if we allow a loss in derivative:
\begin{proposition}\label{aftershock}
$$\sum_{i\leq 1}(\sup_{u\leq u_*}||\nabla^i\alpha_n||_{L^2(H_u(\ub_s+2^{-n},\ub_*))}^2+\sup_{\ub_s+2^{-n}\leq\ub\leq \ub_*}||\nabla^i\beta_n||^2_{L^2(\Hb_{\ub})}) \leq C(\mathcal O_0,\mathcal R_0),$$
independent of $n$.
\end{proposition}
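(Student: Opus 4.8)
The plan is to \emph{not} run the energy estimate of Proposition~\ref{alphaenergy} over the region $D' := \{0 \le u \le u_*,\ \ub_s + 2^{-n} \le \ub \le \ub_*\}$ directly, since doing so produces on the right-hand side the incoming flux $\int_{\Hb_{\ub_s+2^{-n}}}|\nabla^i\beta_n|^2$, which by Proposition~\ref{duringshock} is only $O(2^n)$. Instead I would first bound $\beta_n$ on the incoming cones $\Hb_{\ub}$ with $\ub \ge \ub_s + 2^{-n}$ by integrating the $\nabla_3$-Bianchi equation for $\beta$ — which does \emph{not} involve $\alpha$ — at the cost of one angular derivative, and then feed this into Proposition~\ref{alphaenergy} to recover the estimate for $\alpha_n$ on $H_u$. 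Throughout, $(\mathcal M_n, g_n)$ is smooth and, by Proposition~\ref{dataprop} combined with Proposition~\ref{propagationregularity}, satisfies the a priori bounds $\mathcal O,\ \tilde{\mathcal O}_{3,2},\ \mathcal R \le C(\mathcal O_0,\mathcal R_0)$ of Theorem~\ref{timeofexistence} with constants \emph{independent of} $n$ in the common region $0 \le u \le u_*,\ 0 \le \ub \le \ub_*$, $u_*,\ub_* \le \epsilon$; and the relevant data norms are $n$-uniform, namely $\sum_{i\le 2}\|\nabla^i\beta_n\|_{L^2(S_{0,\ub})} \le C$ for every $\ub$ (Proposition~\ref{dataprop}, \eqref{data4}) and $\sum_{i\le 2}\int_{\ub_s+2^{-n}}^{\ub_*}\|\nabla^i\alpha_n\|^2_{L^2(S_{0,\ub})}\,d\ub \le C$.

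\textbf{Step 1: transport estimate for $\beta_n$ on $\Hb_{\ub}$.} Fix $\ub \in [\ub_s+2^{-n},\ub_*]$. Commuting the Bianchi equation $\nabla_3\beta + \trchb\,\beta = \nabla\rho + {}^*\nabla\sigma + 2\omegab\beta + 2\chih\cdot\betab + 3(\eta\rho + {}^*\eta\sigma)$ with $\nabla^i$ for $i \le 1$, and using the commutation formulae of Section~\ref{commutation}, one finds that $\nabla_3\nabla^i\beta_n$ is, schematically, $\nabla^{i+1}(\rho_n,\sigma_n) + \psi\nabla^i\beta_n$ plus terms of lower order in which $\alpha_n$ does not appear. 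Integrating along $\Hb_{\ub}$ from $S_{0,\ub}$ via Proposition~\ref{transport}, squaring, using Cauchy--Schwarz in $u$ and integrating in $u \in [0,u_*]$, I get
$$\sum_{i\le 1}\|\nabla^i\beta_n\|^2_{L^2(\Hb_{\ub})} \le C\epsilon\sum_{i\le 1}\|\nabla^i\beta_n\|^2_{L^2(S_{0,\ub})} + C(\mathcal O_0,\mathcal R_0)\,\epsilon^2 + C(\mathcal O_0,\mathcal R_0)\,\epsilon^2\sum_{i\le 1}\|\nabla^i\beta_n\|^2_{L^2(\Hb_{\ub})},$$
where I have used $\sum_{i\le 1}\|\nabla^{i+1}(\rho_n,\sigma_n)\|_{L^2(\Hb_{\ub})} \le \mathcal R$; this is exactly the place where only $i \le 1$ is allowed, because $\mathcal R$ controls $\rho_n,\sigma_n$ on $\Hb_{\ub}$ only up to the second angular derivative. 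Absorbing the last term for $\epsilon$ small and using the $n$-uniform data and a priori bounds yields $\sum_{i\le1}\|\nabla^i\beta_n\|^2_{L^2(\Hb_{\ub})} \le C(\mathcal O_0,\mathcal R_0)$, uniformly in $n$, for every $\ub \in [\ub_s+2^{-n},\ub_*]$.

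\textbf{Step 2: energy estimate for $\alpha_n$ on $H_u$.} Now apply Proposition~\ref{alphaenergy} with $i \le 1$ over the box $D'$, i.e.\ with the incoming cone taken to be $\Hb_{\ub_s+2^{-n}}$ (restricted to $0 \le u \le u_*$) and with $H_0,H_u$ replaced by their restrictions to $\ub \in [\ub_s+2^{-n},\ub_*]$; this box version of the identities in Proposition~\ref{intbyparts34} is exactly as used in the proof of Proposition~\ref{beforeshock}. On the right-hand side, $\sum_{i\le1}\int_{H_0(\ub_s+2^{-n},\ub_*)}|\nabla^i\alpha_n|^2$ is $n$-uniformly bounded by the data, and $\sum_{i\le1}\int_{\Hb_{\ub_s+2^{-n}}}|\nabla^i\beta_n|^2$ is $n$-uniformly bounded by Step 1. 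The error terms are of the same type as in Propositions~\ref{ee} and \ref{curvature}: every factor other than $\nabla^{\le1}\alpha_n$ is controlled by $\mathcal O,\tilde{\mathcal O}_{3,2},\mathcal R$ (including the $\nabla^{\le1}\beta_n$ factors, via $\mathcal R$ on $H_u$) with a gain of $\epsilon^{1/2}$ from the spacetime integration, while the terms involving $\nabla^{\le1}\alpha_n$ are either absorbed after multiplying by $\epsilon^{1/2}$ or closed by Gronwall's inequality in $u$ (and $\ub$). This gives $\sup_{u\le u_*}\sum_{i\le1}\|\nabla^i\alpha_n\|^2_{L^2(H_u(\ub_s+2^{-n},\ub_*))} + \sup_{\ub_s+2^{-n}\le\ub\le\ub_*}\sum_{i\le1}\|\nabla^i\beta_n\|^2_{L^2(\Hb_\ub)} \le C(\mathcal O_0,\mathcal R_0)$, uniformly in $n$, which is the claim. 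Alternatively, Step 2 can be carried out purely by transport as in Step 1, integrating $\nabla_3\alpha + \frac12\trchb\,\alpha = \nabla\hot\beta + \dots$ along $\Hb_\ub$ and using that $\nabla^{\le2}\beta_n$ is controlled on $H_u$ by $\mathcal R$; this again forces $i \le 1$.

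\textbf{Main obstacle.} The crux is that the energy estimate of Proposition~\ref{alphaenergy} cannot be used over $D'$ as it stands, because the only available bound on its incoming $\beta_n$-flux, coming from Proposition~\ref{duringshock}, degenerates like $2^n$. The remedy — estimating $\beta_n$ on $\Hb_{\ub}$ by the $\nabla_3$-transport equation — necessarily trades one angular derivative, since its source $\nabla\rho_n + \nabla\sigma_n$ is controlled only up to second order on $\Hb_{\ub}$; this is exactly why the statement is restricted to $\sum_{i\le1}$. The remaining point to verify carefully is that all constants are genuinely $n$-independent, which rests on the fact that the approximating data of Proposition~\ref{dataprop} satisfy the hypotheses of Theorem~\ref{timeofexistence} with uniform constants and that, by Proposition~\ref{propagationregularity}, the resulting smooth spacetimes $(\mathcal M_n,g_n)$ inherit $n$-uniform higher-order bounds.
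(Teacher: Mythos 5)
Your proof takes essentially the same route as the paper. Your Step~1 -- controlling $\beta_n$ on the incoming cones by integrating the $\nabla_3$-Bianchi equation for $\beta$, at the unavoidable cost of one angular derivative because its source involves $\nabla\rho_n,\nabla\sigma_n$ -- is precisely the content of Proposition~\ref{RS}, which the paper simply cites instead of re-deriving; after that, both you and the paper feed the resulting $n$-uniform $\beta_n$-flux bound on $\Hb_{\ub_s+2^{-n}}$ into the energy identity of Proposition~\ref{alphaenergy} (i.e.\ the argument of Proposition~\ref{beforeshock}) to close.
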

\begin{proof}
To bound the left hand side, we can apply the argument as in Proposition \ref{beforeshock} to get
$$\sum_{i\leq 1}(\sup_{u\leq u_*}||\nabla^i\alpha_n||_{L^2(H_u(\ub_s+2^{-n},\ub_*))}^2+\sup_{\ub_s+2^{-n}\leq\ub\leq \ub_*}||\nabla^i\beta_n||^2_{L^2(\Hb_{\ub})}) \leq C(\mathcal O_0,\mathcal R_0)2^{\frac n2}.$$
The problem in proving Proposition \ref{aftershock} with a bound independent of $n$ is that according to Proposition \ref{duringshock}, $||\nab^i\beta_n||_{L^2(\Hb_{\ub_s})}$ may grow in $n$. However, we can still obtain the desired bounds for $\alpha_n$ and $\nab\alpha_n$ (note that the argument below will not apply to $\nab^2\alpha_n$ unless extra assumptions on the initial data are used) by using Proposition \ref{RS}, which implies that
$$\sum_{i\leq 1}\sup_{u,\ub}||\nab^i\beta_n||_{L^2(S_{u,\ub})}\leq C(\mathcal O_0,\mathcal R_0).$$
Taking $\ub=\ub_s+2^{-n}$ and integrating, we thus have
$$\sum_{i\leq 1}||\nabla^i\beta_n||^2_{L^2(\Hb_{\ub_s+2^{-n}})}\leq C(\mathcal O_0,\mathcal R_0).$$
\end{proof}

We can also prove estimates for higher derivatives away from $\ub_s$. The influence of a propagating curvature impulse results in a loss of derivatives in such estimates compared to the assumed regularity of the initial data. By the remark after Proposition \ref{propagationregularity}, we have estimates, uniform in $n$, for the $\nab_3$ and $\nab$ derivatives for $\psi$ and $\Psi$. We now show that we can also have uniform in $n$ estimates for the $\nab_4$ derivatives for $\psi$ and $\Psi$ for $\ub\geq\ub_s+2^{-n}$. By the Bianchi equations (\ref{eq:null.Bianchi}), this in turn can be reduced to showing estimates for $\nab_4$ derivatives of $\alpha$. This is proved in the following Proposition:
\begin{proposition}\label{alphaapriori}
For $\ub_s+2^{-n}\leq\ub\leq\epsilon$, if the initial data set satisfies
$$\sum_{k\leq K}\sum_{i\leq I}\sum_{\Psi\in\{\rho,\sigma,\betab,\alphab\}}||\nab_4^k \nab^i\Psi_n||_{L^2(\Hb_0)}\leq C_{I,K}. $$
$$\sum_{k\leq K}\sum_{i\leq I}\sup_{\ub}\sum_{\Psi\in\{\beta,\rho,\sigma,\betab\}}|| \nab_4^k\nab^i\Psi_n||_{L^2(S_{0,\ub})}\leq C_{I,K}.$$
then
$$\sum_{i\leq I-1}\sum_{k\leq \min\{K,\lfloor\frac{I-i-1}{2}\rfloor\}}(\sup_{\ub}||\nab_4^k\nabla^i\alpha_n||_{L^2(H_u(\ub_s+2^{-n},\epsilon))}+\sup_{\ub_s+2^{-n}\leq \ub\leq \epsilon}||\nab_4^k\nabla^i\beta_n||_{L^2(\Hb_{\ub})})\leq C_{I,K}',$$
where $C_{I,K}$ is independent of $n$.
\end{proposition}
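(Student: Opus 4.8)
The plan is to run the argument of Proposition~\ref{aftershock} one "level" higher, i.e.\ at the level of $\nabla_4$-derivatives, organized as an induction on the total order $w=2k+i$ in which each $\nabla_4$ counts twice as much as an angular derivative — the doubling being forced by the fact that each $\nabla_4$ applied to the $\alpha$-transport equation $\nab_3\alpha+\frac12\trchb\,\alpha=\nab\hot\beta+\dots$ trades for two angular derivatives of $\alpha$ (through $\nab_4(\nab\hot\beta)\sim\nab\hot\div\alpha$, using $\nab_4\beta=\div\alpha+\dots$). First I would commute the Bianchi pair $(\nab_3\alpha,\nab_4\beta)$ with $\nab_4^k$, using the commutation identity $[\nab_3,\nab_4]f=-2\omega\nab_3 f+2\omegab\nab_4 f+4\zeta\cdot\nab f$ from \cite{KN} and $[\nab_4,\nab]\sim\chi\nab+\dots$, to obtain schematically $\nab_3(\nab_4^k\alpha_n)=\nab\hot(\nab_4^k\beta_n)+(\text{l.o.t.})$ and $\nab_4(\nab_4^k\beta_n)=\div(\nab_4^k\alpha_n)+(\text{l.o.t.})$, where every term in "l.o.t." is built out of $\psi$, out of the curvature components $\Psi\ne\alpha$ and their $\nab,\nab_4$-derivatives, and out of $\nab_4^{k'}$-derivatives with $k'<k$ of $\alpha_n,\beta_n$ (which have strictly smaller weight). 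The structural point to verify here, exactly as in Section~\ref{energyestimatessec}, is that $\alpha$ itself never reappears in these lower-order terms.

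Next I would fix $(i,k)$ with $2k+i\le I-1$, $k\le K$, assume the claimed bounds for all pairs of strictly smaller weight, commute the system above with $\nab^i$, and run the energy estimate of Proposition~\ref{alphaenergy} on the region $\mathcal D=\{0\le u\le u_*,\ \ub_s+2^{-n}\le\ub\le\epsilon\}$. The spacetime error integrals would be estimated term by term as in Proposition~\ref{curvature}: the contributions built from $\psi$ and $\Psi\ne\alpha$ are bounded uniformly in $n$ by Theorem~\ref{timeofexistence} and by Proposition~\ref{propagationregularity} (applied with $K=0$ and $I,J$ large, which is legitimate for the approximating data by the remark after that proposition), the lower-weight $\nab_4^{k'}$-terms by the inductive hypothesis, and the remaining same-weight terms carry a Ricci-coefficient factor and are absorbed by Gronwall. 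One then has to control, uniformly in $n$, the two "initial" contributions, on $H_0\cap\mathcal D$ and on the artificial incoming hypersurface $\Hb_{\ub_s+2^{-n}}$. The first is immediate: writing $\alpha_n=-\nab_4\chih_n-\trch_n\,\chih_n$ on $H_0$, the construction of Section~\ref{initialcondition} gives $n$-independent bounds for $\nab_4^k\nab^i\alpha_n$ on $\{\ub\ge\ub_s+2^{-n}\}$ as soon as $2k+i\le I-1$.

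The hard part is the boundary term on $\Hb_{\ub_s+2^{-n}}$, since the estimate one would naively inherit from the impulse region (Proposition~\ref{duringshock}) degenerates like $2^{n/2}$; this is precisely the obstacle already met in Proposition~\ref{aftershock}, and the resolution is the same — obtain this term by \emph{transport} rather than energy, exploiting that the data of $\alpha_n$ and its derivatives on the sphere $S_{0,\ub_s+2^{-n}}$ is uniform in $n$. Concretely, using $\nab_4\beta=\div\alpha+\dots$ repeatedly reduces $\nab_4^k\nab^i\beta_n$ on $\Hb_{\ub_s+2^{-n}}$ to $\nab_4^{k-1}\nab^{i+1}\alpha_n$ of strictly smaller weight (plus lower-order terms), and $\nab_4^{k'}\nab^{i'}\alpha_n$ on $S_{u,\ub_s+2^{-n}}$ is then obtained by integrating its $\nab_3$-transport equation from $S_{0,\ub_s+2^{-n}}$ via Proposition~\ref{transport}: the data on $S_{0,\ub_s+2^{-n}}$ is $n$-uniform, the $\nab\hot\beta$-type source is handled by the $L^2(S)$ bounds for $\beta_n$ and its angular derivatives from Proposition~\ref{RS} together with the propagation of angular regularity, $u_*\le\epsilon$ makes the $u$-integral small, and converting $\sup_u\|\cdot\|_{L^2(S_{u,\ub_s+2^{-n}})}$ into $\|\cdot\|_{L^2(\Hb_{\ub_s+2^{-n}})}$ costs only a factor $\epsilon^{1/2}$. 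The single angular derivative of room between the data regularity ($I$) and the output ($I-1$), together with the doubling rule for $\nab_4$, is exactly what makes this transport step close; feeding the resulting $n$-uniform boundary bounds into the energy inequality and closing by Gronwall as in Proposition~\ref{beforeshock} completes the induction with $C'_{I,K}$ independent of $n$. I expect the genuine labor to lie entirely in this last step: checking, weight by weight, that the transport argument on $\Hb_{\ub_s+2^{-n}}$ never requires an angular derivative of the good curvature or Ricci quantities beyond the available order, nor any $\alpha_n$-data from the impulse region $\{\ub_s\le\ub\le\ub_s+2^{-n}\}$.
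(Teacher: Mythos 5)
Your proposal is correct and matches the paper's strategy: the two essential ingredients are (i) deriving the boundary flux for $\nab_4^k\nab^i\beta_n$ on $\Hb_{\ub_s+2^{-n}}$ by transport along $u$ rather than by energy (thereby avoiding the $2^{n/2}$ degeneration from the impulse region), with $\nab_4$-derivatives traded for pairs of angular derivatives and the pure-angular case controlled from the $n$-uniform data on $S_{0,\ub_s+2^{-n}}$ together with Proposition~\ref{propagationregularity}, and (ii) feeding the resulting $n$-uniform boundary data into the $(\nab_3\alpha,\nab_4\beta)$ energy identity of Proposition~\ref{alphaenergy}. Your weight rule $w=2k+i$ is exactly the $i+2k+1\le I$ constraint that appears in the paper. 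The only cosmetic divergence is in the derivative-trading: you substitute $\nab_4\beta\to\div\alpha$ and then transport $\alpha$ via $\nab_3\alpha\to\nab\hot\beta$, whereas the paper unrolls the $\nab_3\beta$ equation directly and substitutes $\nab_4(\rho,\sigma)\to\div\beta$, so that only $\beta$-quantities ever appear on $\Hb_{\ub_s+2^{-n}}$; both reductions decrease $k$ by one per step and close the same way. Two small precision points: your citation of Proposition~\ref{RS} for the $\nab\hot\beta$-type source is too weak (that proposition controls only $i\le 1$ angular derivatives), so for $k'=0$ one really needs the same $\nab_3$-transport of $\nab^i\beta$ from $S_{0,\ub_s+2^{-n}}$ that the paper does, and for $k'\ge 1$ the lower-$k$ inductive hypothesis; and the induction cannot be purely on $w=2k+i$ (the source $\nab_4^{k-1}\nab^{i+2}\beta$ has the same weight), so the effective induction variable is $k$, or $(w,k)$ lexicographically — which is what the paper uses.
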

\begin{proof}
The main step is to prove that on $\Hb_{\ub_s+2^{-n}}$, we have the estimate
$$||\nab_4^k\nab^i\beta_n||_{L^2(\Hb_{\ub_s+2^{-n}})}\leq C$$
independent of $n$. For $k=0$, we can commute with angular derivatives and integrate along the $u$ direction using the Bianchi equation
$$\nab_3\beta+\trchb\beta=\nab\rho+2\omegab\beta+^*\nab\sigma+2\chih\cdot\betab+3(\eta\rho+^*\eta\sigma).$$
Since the right hand side has one more angular derivative, this integration loses a derivative, allowing us only to prove
$$\sum_{i\leq I-1}||\nab^i\beta_n||_{L^2(\Hb_{\ub_s+2^{-n}})}\leq C$$

Now we consider also the $\nab_4$ derivatives. Differentiate the equation by $\nab_4^k\nab_i$ and commute $[\nab_3,\nab_4^k\nab_i]$ on the left hand side. Moreover, except for the component $\alpha$, whenever we see $\nab_4^{k_1}\Psi$, we substitute an appropriate Bianchi equation from (\ref{eq:null.Bianchi}). Then we get an equation
$$\nab_3\nab_4^k\nab^i\beta=\nab_4^{k-1}\nab^{i+2}\beta+...$$
where $...$ denote terms that are lower order in terms of derivatives. We then integrate this along the $u$ direction. Thus
$$||\nab_4^k\nab^i\beta_n||_{L^2(\Hb_{\ub_s+2^{-n}})}\leq C||\nab_4^{k-1}\nab^{i+2}\beta_n||_{L^2(\Hb_{\ub_s+2^{-n}})}+...$$
Inducting in $k$, we get that
$$||\nab_4^k\nab^i\beta_n||_{L^2(\Hb_{\ub_s+2^{-n}})}\leq C||\nab^{i+2k}\beta_n||_{L^2(\Hb_{\ub_s+2^{-n}})}+... $$
Now we have reduced to the case where there are only angular derivatives falling on $\beta_n$. By the above, we need $i+2k+1\leq I$. Thus we can prove 
\begin{equation}\label{smoothnessdata}
\sum_{i\leq I-1}\sum_{k\leq \min\{K,\lfloor\frac{I-i-1}{2}\rfloor\}}||\nab_4^k\nab^i\beta_n||_{L^2(\Hb_{\ub_s+2^{-n}})}\leq C.
\end{equation}
The conclusion thus follows from a standard energy estimate type argument as in Proposition \ref{alphaenergy} and using (\ref{smoothnessdata}) as the initial data on $\Hb_{\ub_s+2^{-n}}$
\end{proof}

Combining the estimates in Propositions \ref{beforeshock}, \ref{duringshock}, \ref{alphaapriori} and Cauchy-Schwarz, we have the following uniform $L^1_{\ub}$ estimate.
\begin{proposition}\label{totalvariation}
$$\sup_{u\leq u_*}\sum_{i\leq I}\int_0^{\ub_*} ||\nabla^i\alpha_n||_{L^2(S_{u,\ub})}d\ub\leq C_I,$$
where $C_I$ is independent of $n$.
\end{proposition}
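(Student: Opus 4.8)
The plan is to split the $\ub$-integration into the three ranges $0\le\ub\le\ub_s$, $\ub_s\le\ub\le\ub_s+2^{-n}$, and $\ub_s+2^{-n}\le\ub\le\ub_*$, to bound $\int\|\nab^i\alpha_n\|_{L^2(S_{u,\ub})}\,d\ub$ over each range by a constant independent of $n$, and to add the contributions (summing also over $i\le I$). On each range the mechanism is Cauchy--Schwarz in $\ub$: for any subinterval $(a,b)$ one has
\[
\int_a^b\|\nab^i\alpha_n\|_{L^2(S_{u,\ub})}\,d\ub\le (b-a)^{\frac 12}\Big(\int_a^b\|\nab^i\alpha_n\|_{L^2(S_{u,\ub})}^2\,d\ub\Big)^{\frac 12}\le C\,(b-a)^{\frac 12}\,\|\nab^i\alpha_n\|_{L^2(H_u(a,b))},
\]
where the last inequality uses $\frac 12\le\Omega\le2$ (Proposition \ref{Omega}) to compare the weighted $H_u$-flux with the plain $\ub$-integral. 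Hence it suffices to bound, on each range, the length of the $\ub$-interval times the square of the $L^2(H_u)$-energy of $\nab^i\alpha_n$ uniformly in $n$.

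On $0\le\ub\le\ub_s$ the interval length is $\le\epsilon$, and since the data on $H_0$ there is the $n$-independent smooth data fixed in Section \ref{initialcondition}, one has $\|\nab^i\alpha_n\|_{L^2(H_u(0,\ub_s))}\le C_i$ uniformly in $n$: for $i\le 2$ this is Proposition \ref{beforeshock}, and for higher $i$ it follows from the same energy estimate (Proposition \ref{alphaenergy}) together with Proposition \ref{propagationregularity}. On $\ub_s\le\ub\le\ub_s+2^{-n}$ the interval length is $2^{-n}$, while Proposition \ref{duringshock} (and, for $i>2$, its higher-order analogue obtained from \eqref{data7} and Proposition \ref{propagationregularity}) gives $\|\nab^i\alpha_n\|_{L^2(H_u(\ub_s,\ub_s+2^{-n}))}^2\le C_i\,2^{n}$; the product $2^{-n}\cdot C_i\,2^{n}=C_i$ is exactly $n$-independent, which is where the concentration of the impulse is quantified. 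On $\ub_s+2^{-n}\le\ub\le\ub_*$ the interval length is again $\le\epsilon$, and Proposition \ref{alphaapriori} --- whose hypotheses hold uniformly in $n$ for the impulsive data by Proposition \ref{dataprop} and the remark following it, for every $I\ge 2$ with $K=0$ --- yields $\sup_u\|\nab^i\alpha_n\|_{L^2(H_u(\ub_s+2^{-n},\ub_*))}\le C_i$ uniformly in $n$, provided one chooses the regularity index in the hypothesis of Proposition \ref{alphaapriori} large enough (roughly $2i$) to absorb the loss of derivatives. Taking $\sup_{u\le u_*}$ and summing over the three ranges and over $i\le I$ gives the assertion.

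The main obstacle is the third range. One cannot run the energy estimate of Proposition \ref{alphaenergy} with the $\beta_n$-flux prescribed on $\Hb_0$, because that would entail integrating the near-singular $\alpha_n$ across the impulse; instead one is forced to use the data induced on $\Hb_{\ub_s+2^{-n}}$, where $\|\nab^i\beta_n\|_{L^2(\Hb_{\ub_s+2^{-n}})}$ can grow like $2^{n/2}$ once $i\ge 2$ (as already seen in the proof of Proposition \ref{aftershock}). The remedy --- already implemented in Proposition \ref{alphaapriori} --- is to propagate only lower-order quantities along $\Hb_{\ub_s+2^{-n}}$ and pay with a loss of derivatives, which is harmless here since, away from $\ub_s$, the impulsive data is as regular as one wishes with bounds uniform in $n$. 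Granting Propositions \ref{beforeshock}, \ref{duringshock} and \ref{alphaapriori}, the remaining content of the proof is just the three Cauchy--Schwarz estimates above.
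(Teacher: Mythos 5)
Your proposal is correct and follows exactly the route indicated in the paper: split the $\ub$-range at $\ub_s$ and $\ub_s+2^{-n}$, apply Cauchy--Schwarz on each subinterval, and invoke Propositions \ref{beforeshock}, \ref{duringshock}, \ref{alphaapriori} respectively. The additional remarks about using Propositions \ref{propagationregularity}, \ref{dataprop} and the loss of derivatives for $i>2$ are consistent with, and helpfully fill in, what the paper leaves implicit in its one-line reference.
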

Proposition \ref{totalvariation} is crucial in showing that the limiting spacetime will have $\alpha$ defined as a finite measure with a singular atom at $\ub_s$.

\section{Convergence}\label{convergence}
In this Section, we show that a sequence of initial data satisfying uniform estimates of Theorem \ref{rdthmv2} with converging initial data gives rise to a sequence of converging spacetimes. The convergence will be understood as follows: the spacetime will be identified in the system of double null coordinates $(u,\ub,\th^1,\th^2)$ and convergence will be established for the sequence of the corresponding spacetime metrics. In view of the quasilinear nature of the Einstein equations, we can only hope to prove convergence of our approximating spacetimes in a norm with one derivative fewer than the a priori estimates that we established for them. 

To get estimates of metric, we use the fact that the metric components satisfy inhomogeneous transport equations with right hand side expressed as Ricci coefficients. The estimates for the difference of Ricci coefficients and curvature components are derived by considering the system of difference equations obtained from the original system of transport, elliptic and Bianchi equations. As was the case for the a priori estimates, the challenge for the difference system is a lack of any a priori control of the difference of the singular curvature components $\alpha$. For the approximating sequence of initial data, $\alpha_n$ is not a Cauchy sequence in $L^2(H_0)$. In the proof of the a priori estimates, we handled the lack of information of the $\alpha$ component of curvature via a renormalization procedure and the observation that the system satisfied by the Ricci coefficients and the renormalized curvature components can be closed without any reference to the $\alpha$ component of curvature. Potentially, this property may fail when we consider the difference system. It is however a remarkable fact as we will show below that the difference equations still possess the same structure.

The following is the main Theorem in which we estimate the difference of the metrics, Ricci coefficients and curvature components of two spacetimes:

\begin{theorem}\label{convergencethm}
Suppose we have two sets of initial data $(1)$ and $(2)$ satisfying the conditions in Theorem \ref{timeofexistence} with the same constants $C$ and $c$. By Theorem \ref{timeofexistence}, we can solve for vacuum spacetimes $(\mathcal M^{(1)}, g^{(1)})$ and $(\mathcal M^{(2)}, g^{(2)})$ corresponding to the initial data sets $(1)$ and $(2)$ in the region $0\leq u\leq u_*$ and $0\leq \ub\leq\ub_*$ for $u_*,\ub_*\leq \epsilon$ . Let $(u,\ub,\th^1,\th^2)$ be the coordinate system introduced in Section \ref{coordinates} such that the metrics take the form
$$g^{(i)}=-2(\Omega^{(i)})^2(du\otimes d\ub+d\ub\otimes du)+(\gamma^{(i)})_{AB}(d\th^A-(b^{(i)})^Adu)\otimes (d\th^B-(b^{(i)})^Bdu),$$ 
where $\Omega=1$ and $b^{A}=0$ on $H_0$ and $\Hb_0$. We can now identify the two spacetimes by identifying points with the same value of coordinate functions. Define $g'=g^{(1)}-g^{(2)}$, $\psi'=\psi^{(1)}-\psi^{(2)}$ and $\Psi'=\Psi^{(1)}-\Psi^{(2)}$ to be the difference of the metric, the difference of the Ricci coefficients and the difference of the curvature components respectively. If the data satisfy 
$$\sup_u|(\frac{\partial}{\partial\th})^i\gamma_{AB}'(\ub=0)|+\sup_{\ub}|(\frac{\partial}{\partial\th})^i\gamma_{AB}'(u=0)|\leq a,$$
$$\sum_{\psi\neq\chih,\omega}(\sum_{i\leq 1} \sup_{u}||\nabla^i\psi'||_{L^2(S_{u,0})}+\sum_{i\leq 1} \sup_{\ub}||\nabla^i\psi'||_{L^2(S_{0,\ub})})\leq a,$$
$$\sum_{i\leq 1} ||\nabla^i(\chih',\omega')||_{L^{p_0}_{\ub}L^2(S_{0,\ub})}\leq a\quad\mbox{for some fixed }2\leq p_0<\infty,$$
$$||\nabla^2(\chih',\omega',\eta',\etab')||_{L^2(H_0)}+||\nabla^2(\chibh',\omegab',\eta',\etab')||_{L^2(\Hb_{0})}\leq a,$$
$$\sup_{u} ||\nabla^2(\trch',\trchb')||_{L^2(S_{u,0})}+\sup_{\ub} ||\nabla^2(\trch',\trchb')||_{L^2(S_{0,\ub})}\leq a,$$
$$\sum_{i\leq 1}\left(\sum_{\Psi\in\{\beta,\rho,\sigma,\betab\}} ||\nabla^i\Psi'||_{L^{2}_{\ub}L^2(S_{0,\ub})}+\sum_{\Psi\in\{\rho,\sigma,\betab,\alphab\}} ||\nabla^i\Psi'||_{L^{2}_{u}L^2(S_{u,0})}\right)\leq a,$$
where the angular covariant derivative and all the norms are defined with respect to the spacetime $(1)$. Then the following estimates hold in $\{0\leq u\leq u_*\}\cap\{0\leq\ub\leq\ub_*\}$:
$$\sup_{u,\ub}|(\frac{\partial}{\partial\th})^i\gamma_{AB}'|\leq C'a,$$
$$\sum_{\psi\neq\chih,\omega}\sum_{i\leq 1} \sup_{u,\ub}||\nabla^i\psi'||_{L^2(S_{u,\ub})}\leq C'a,$$
$$\sum_{i\leq 1} \sup_u ||\nabla^i(\chih',\omega')||_{L^{p_0}_{\ub}L^2(S_{u,\ub})}\leq C'a,$$
$$\sup_u ||\nabla^2(\chih',\omega',\eta',\etab')||_{L^2(H_u)}+\sup_{\ub} ||\nabla^2(\chibh',\omegab',\eta',\etab')||_{L^2(\Hb_{\ub})}+\sup_{u,\ub} ||\nabla^2(\trch',\trchb')||_{L^2(S_{u,\ub})}\leq C'a,$$
$$\sum_{i\leq 1}\left(\sum_{\Psi\in\{\beta,\rho,\sigma,\betab\}} \sup_u||\nabla^i\Psi'||_{L^2_{\ub}L^2(S_{u,\ub})}+\sum_{\Psi\in\{\rho,\sigma,\betab,\alphab\}} \sup_{\ub}||\nabla^i\Psi'||_{L^2_{u}L^2(S_{u,\ub})}\right)\leq C'a,$$
for some constant $C'$ depending only on $C$ and $c$ and independent of $a$.
\end{theorem}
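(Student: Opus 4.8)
## Proof Proposal for Theorem \ref{convergencethm}

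The plan is to estimate the difference quantities $g'$, $\psi'$, $\Psi'$ in a hierarchy that mirrors the three-step structure of the a priori estimates in Section \ref{estimates}, but now at one lower level of differentiability, and crucially to verify that the $\alpha$-avoiding structure survives the subtraction. First I would set up the difference equations. Since the two spacetimes are identified by their common coordinate system $(u,\ub,\th^1,\th^2)$, the angular covariant derivatives $\nab^{(1)}$ and $\nab^{(2)}$ differ by a tensor built from $\gamma'$ and its first angular derivative, and similarly $\nab_3^{(1)}-\nab_3^{(2)}$ and $\nab_4^{(1)}-\nab_4^{(2)}$ are controlled by $\Omega'$, $b'$ and their derivatives. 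Subtracting the transport equations \eqref{null.str1}, \eqref{null.str2}, the Hodge systems \eqref{null.str3}, and the Bianchi equations \eqref{eq:null.Bianchi} for the two solutions, one obtains schematically
\begin{equation*}
\nab_3\psi'=\Psi'+\psi\psi'+(\nab\psi)\cdot(\text{metric difference}),\qquad \nab_4\psi'=\Psi'+\psi\psi'+\dots,
\end{equation*}
and a corresponding renormalized Bianchi system for $\Psi'$. The essential observation — and the first thing to check carefully — is that the renormalizations $\rhoc'=\rho'-\tfrac12(\chih\cdot\chibh)'$, $\sigmac'=\sigma'+\tfrac12(\chih\wedge\chibh)'$, together with $\nab_4\chih'=-\alpha'+\psi\psi'$, remove every occurrence of $\alpha'$ from the $\Psi'$-system exactly as in Section \ref{energyestimatessec}; since $\alpha'$ is multiplied only by metric/Ricci factors (never differentiated against a curvature flux), the same cancellations go through verbatim. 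Likewise $\chih'$ must be estimated from the $\nab_3$ equation $\nab_3\chih'+\tfrac12\trchb\chih'=\nab\hot\eta'+\dots$ and the Codazzi equation, never from $\nab_4\chih'=-\alpha'+\dots$.

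The execution then proceeds in the familiar order. (i) \textbf{Metric differences.} Integrate $\partial_{\ub}\gamma'_{AB}=2(\Omega\chi)'$, $\partial_{\ub}(\log\det\gamma)'=(\Omega\trch)'$, $\partial_{\ub}(b')^A=-4(\Omega^2\zeta)'$, and the $\omega$-equation for $(\Omega^{-1})'$, using the a priori bounds of Theorem \ref{timeofexistence} for the coefficients; this bounds $\sup_{u,\ub}|(\partial/\partial\th)^i\gamma'|$, $\Omega'$, $b'$ by $C'a$ plus $C'\epsilon$ times the Ricci-difference norms, hence the comparison of the two connections $\nab^{(1)},\nab^{(2)}$ costs only $C'\epsilon$ times the lower-order difference norms. (ii) \textbf{Ricci differences via transport and elliptic estimates.} Apply Proposition \ref{transport} (at the appropriate $L^p$, using the mixed norm $L^{p_0}_{\ub}L^2(S)$ for $\chih',\omega'$ precisely because $\beta$ enters their elliptic systems and is not controlled in $L^2(\Hb)$) to the difference transport equations to bound $\sum_{i\le1}\|\nab^i\psi'\|$, with $\nab^2(\trch',\trchb')$, $\nab^2(\eta',\etab',\chih',\chibh',\omega',\omegab')$ recovered by the Hodge/elliptic estimates of Section \ref{elliptic}–\ref{Ricciellipticsec} adapted to the difference system, using the $\Theta$-variables $\nab\trch'$, $\nab\trchb'$, $\mu'$, $\mub'$, $\kappa'$, $\kappab'$. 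Every right-hand side is either a difference term (giving $C'a$ from the data) or a product of an a priori-bounded factor with a difference term (giving $C'\epsilon\times(\text{difference norms})$), which is absorbed for $\epsilon$ small. (iii) \textbf{Curvature energy estimates.} Run the integration-by-parts argument of Proposition \ref{ee} on the renormalized Bianchi difference system; since $\alpha'$ has been eliminated, the flux terms are exactly $\|\nab^i\Psi'\|^2_{L^2(H_u)}+\|\nab^i\Psi'\|^2_{L^2(\Hb_{\ub})}$ for $i\le1$, $\Psi'\in\{\beta',\rho',\sigma',\betab',\alphab'\}$, and the spacetime error integrals are bounded by $C'a$ plus $C'\epsilon$ times the left side. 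Closing the bootstrap in the order (i)$\to$(ii)$\to$(iii) then yields all the asserted estimates with $C'$ depending only on $C,c$.

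I expect the main obstacle to be bookkeeping the difference of the two covariant-derivative operators and of the two null frames throughout every equation, and in particular confirming that no term of the form $\alpha'\cdot(\text{curvature})$ or $\nab\alpha'$ is generated when one subtracts the commutation formulae of Section \ref{commutation} and the renormalized Bianchi identities. A secondary delicate point is the derivative count: because the problem is quasilinear, the difference estimates are at one fewer derivative than the a priori estimates, so one must check that the elliptic recovery of $\nab^2\psi'$ and the energy estimate for $\nab^1\Psi'$ never call for $\nab^3\psi$ or $\nab^2\Psi$ of the \emph{difference} — only of the fixed solution $(1)$, which is controlled by Theorem \ref{timeofexistence} — and that the $\nab^2\beta'$ loss (the analogue of the $\nab^3\omega$ difficulty) is handled by measuring $\chih',\omega'$ in the weaker mixed norm $L^{p_0}_{\ub}L^2(S)$ exactly as in the statement. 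Once these structural facts are in place, the smallness of $\epsilon$ closes everything by the same absorption argument used for the a priori estimates.
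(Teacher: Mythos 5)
Your proposal follows the paper's own proof step for step: the same decomposition into metric, Ricci, and curvature difference estimates, the same verification that the $\alpha$-avoiding renormalization with $\rhoc',\sigmac'$ survives the subtraction, the same transport/Hodge system for the $\Theta'$-variables $(\nab\trch',\nab\trchb',\mu',\mub',\kappa',\kappab')$, and the same renormalized Bianchi energy argument closed by smallness. Two bookkeeping points the paper makes explicit that you leave implicit: (a) since the constant needed to absorb the difference terms may be smaller than the $\epsilon$ of Theorem \ref{timeofexistence}, the paper proves the bound first on a sub-rectangle $0\le u,\ub\le\delta$ with $\delta$ independent of $a$ and then iterates finitely many times to cover $[0,\epsilon]^2$; and (b) the $L^{p_0}_{\ub}L^2(S)$ norm for $(\chih',\omega')$ is forced primarily by the initial data being controlled only in $L^{p_0}_{\ub}$ (since the impulsive $\chih_n$ converge in $L^p$ but not $L^\infty$) and by $\chih',\omega'$ re-appearing on the right side of their own $\nab_3$ transport equations, whereas the $\beta$-loss you cite is the reason that $\kappa'$, and hence $\nab^2\omega'$, can be placed only in $L^2(H_u)$ and not in $L^2(\Hb_{\ub})$.
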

By iterating Theorem \ref{convergencethm}, we can reduce its proof to the region $0\leq u\leq \delta$, $0\leq \ub\leq \delta$ for some sufficiently small $\delta$ independent of $a$. Most of this Section will be devoted to a proof of Theorem \ref{convergencethm} for $0\leq u\leq\delta$ and $0\leq \ub\leq\delta$. The proof of Theorem \ref{convergencethm} will be carried out in Sections \ref{convsec1}, \ref{convsec2}, \ref{convsec3}, \ref{convsec4} and \ref{convsec5}.

Theorem \ref{convergencethm} implies the following convergence result:
\begin{theorem}\label{convergencethm2}
Suppose we have a sequence of initial data that coincide on $\Hb_0$ and satisfy the assumptions of Theorem \ref{timeofexistence} with uniform constants $C$ and $c$ on $H_0$ and $\Hb_0$. By Theorem \ref{timeofexistence}, for every initial data in the sequence, a unique smooth solution to the vacuum Einstein equations $(\mathcal M_n, g_n)$ exists in $0\leq u\leq u_*$, $0\leq \ub\leq \ub_*$ for $u_*, \ub_*\leq \epsilon$ and the metric takes the following form in the coordinate system $(u,\ub,\th^1,\th^2)$:
$$g_n=-2(\Omega_n)^2(du\otimes d\ub+d\ub\otimes du)+(\gamma_n)_{AB}(d\th^A-(b_n)^Adu)\otimes (d\th^B-(b_n)^Bdu),$$ 
where $\Omega=1$ and $b^A=0$ on $H_0$ and $\Hb_0$. Denote the Ricci coefficients and curvature components by $\psi_n$ and $\Psi_n$ respectively. Identify the spacetimes in the sequence by the value of the coordinate functions $(u,\ub,\th^1,\th^2)$. Define also $g_n'=g_n-g_{n-1}$, $\psi'_n=\psi_n-\psi_{n-1}$ and $\Psi'_n=\Psi_n-\Psi_{n-1}$. If
$$\sup_u|(\frac{\partial}{\partial\th})^i(\gamma_{AB})'_n(\ub=0)|\leq a_n,$$
$$\sum_{\psi\neq\chih,\omega}\sum_{i\leq 1} \sup_{\ub}||\nabla^i\psi'_n||_{L^2(S_{0,\ub})}\leq a_n,$$
$$\sum_{i\leq 1} ||\nabla^i(\chih_n',\omega_n')||_{L^{p_0}_{\ub}L^2(S_{0,\ub})}\leq a_n\quad\mbox{for some fixed }2\leq p_0<\infty,$$
$$||\nabla^2(\chih_n',\omega_n',\eta_n',\etab_n')||_{L^2(H_0)}+\sup_{\ub} ||\nabla^2(\trch_n',\trchb_n')||_{L^2(S_{0,\ub})}\leq a_n,$$
$$\sum_{i\leq 1}\left(\sum_{\Psi\in\{\beta,\rho,\sigma,\betab\}} ||\nabla^i\Psi_n'||_{L^{2}_{\ub}L^2(S_{0,\ub})}+\sum_{\Psi\in\{\rho,\sigma,\betab,\alphab\}} ||\nabla^i\Psi_n'||_{L^{2}_{u}L^2(S_{u,0})}\right)\leq a_n,$$
for some $a_n$ such that $\displaystyle\sum a_n<\infty$,
then the spacetime metrics converge uniformly to a continuous limiting spacetime metric $g_\infty$
$$g_\infty=-2(\Omega_\infty)^2(du\otimes d\ub+d\ub\otimes du)+ (\gamma_{\infty})_{AB}(d\th^A-(b_\infty)^Adu)\otimes(d\th^B-(b_\infty)^Bdu)$$
in the region $0\leq u\leq u_*$, $0\leq \ub\leq \ub_*$.
Moreover, 
$$(\frac{\partial}{\partial \th}g_n,\frac{\partial}{\partial u}g_n)\mbox{ converge to }(\frac{\partial}{\partial \th}g_\infty,\frac{\partial}{\partial u}g_\infty)\mbox{ in }L^\infty_u L^\infty_{\ub} L^4(S),$$
$$(\frac{\partial^2}{\partial \th^2}g_n,\frac{\partial^2}{\partial u\partial\th}g_n,\frac{\partial^2}{\partial u^2}g_n)\mbox{ converge to }(\frac{\partial^2}{\partial \th^2}g_\infty,\frac{\partial^2}{\partial u\partial\th}g_\infty,\frac{\partial^2}{\partial u^2}g_{\infty})\mbox{ in }L^\infty_u L^\infty_{\ub} L^2(S),$$
$$(\frac{\partial}{\partial \ub}g_n, \frac{\partial}{\partial\ub}((\gamma_n^{-1})^{AB}\frac{\partial}{\partial\ub}(\gamma_n)_{AB}))\mbox{ converge to }(\frac{\partial}{\partial \ub}g_{\infty}, \frac{\partial}{\partial\ub}((\gamma_{\infty}^{-1})^{AB}\frac{\partial}{\partial\ub}(\gamma_{\infty})_{AB}))\mbox{ in }L^\infty_u L^{p_0}_{\ub} L^\infty(S),$$
$$(\frac{\partial^2}{\partial \th \partial \ub}g_n,\frac{\partial^2}{\partial u\partial\ub}g_n,\frac{\partial^2}{\partial\ub^2}(b^A)_n)\mbox{ converge to }(\frac{\partial^2}{\partial \th \partial \ub}g_\infty,\frac{\partial^2}{\partial u\partial\ub}g_\infty,\frac{\partial^2}{\partial\ub^2}(b^A)_\infty)\mbox{ in }L^\infty_u L^{p_0}_{\ub} L^4(S).$$
As a consequence, in the limiting spacetime,
$$\frac{\partial}{\partial \th}g_\infty,\frac{\partial}{\partial u}g_\infty\in C^0_u C^0_{\ub} L^4(S),$$
$$\frac{\partial^2}{\partial \th^2}g_\infty,\frac{\partial^2}{\partial u\partial\th}g_\infty,\frac{\partial^2}{\partial u^2}g_\infty\in C^0_u C^0_{\ub} L^2(S),$$
$$\frac{\partial}{\partial \ub}g_\infty, \frac{\partial}{\partial\ub}((\gamma_\infty^{-1})^{AB}\frac{\partial}{\partial\ub}(\gamma_\infty)_{AB}) \in L^\infty_u L^\infty_{\ub} L^\infty(S),$$
$$\frac{\partial^2}{\partial \th \partial \ub}g_\infty,\frac{\partial^2}{\partial u\partial\ub}g_\infty,\frac{\partial^2}{\partial \ub^2}(b^A)_\infty\in L^\infty_u L^\infty_{\ub} L^4(S).$$
\end{theorem}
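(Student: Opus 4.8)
The plan is to obtain Theorem~\ref{convergencethm2} from Theorem~\ref{convergencethm} by a telescoping-series argument, so that essentially no new analysis is required beyond the difference estimates already established. First I would apply Theorem~\ref{convergencethm} to each consecutive pair of data sets $(n)$ and $(n-1)$ in the sequence, identified through the common double null coordinate system $(u,\ub,\th^1,\th^2)$. Since all members of the sequence coincide on $\Hb_0$, the differences of the data on $\Hb_0$ and on every sphere $S_{u,0}$ vanish identically, so the hypotheses of Theorem~\ref{convergencethm} are met with $a=a_n$. Its conclusion then gives, in the common region $0\le u\le u_*$, $0\le\ub\le\ub_*$, the bounds
\begin{equation*}
\sup_{u,\ub}|(\frac{\partial}{\partial\th})^i(\gamma_{AB})'_n|\le C'a_n, \qquad \sum_{\psi\ne\chih,\omega}\sum_{i\le 1}\sup_{u,\ub}||\nab^i\psi'_n||_{L^2(S_{u,\ub})}\le C'a_n,
\end{equation*}
together with the $L^{p_0}_{\ub}L^2(S)$ estimate for $(\chih'_n,\omega'_n)$, the $L^2(H_u)$ and $L^2(\Hb_{\ub})$ estimates for the second angular derivatives of the Ricci coefficients, and the mixed-norm estimates for $\Psi'_n$, each with right-hand side $C'a_n$ and with $C'$ depending only on $C$ and $c$.

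Second, since $\sum_n a_n<\infty$, these bounds show that the partial sums of the $g'_n$ --- equivalently the sequences $\Omega_n,\gamma_n,b_n$ and their relevant coordinate derivatives --- are Cauchy in each of the Banach spaces appearing in the conclusion, hence convergent. In particular $\Omega_n,\gamma_n,b_n$ converge uniformly, so $g_n\to g_\infty$ uniformly; by the uniform bounds of Theorem~\ref{timeofexistence}, $\Omega_\infty$ and $\gamma_\infty$ inherit the two-sided bounds $\frac12\le\Omega_\infty\le2$ and $c\le\det\gamma_\infty\le C$, so $g_\infty$ is a continuous Lorentzian metric of the stated form. To pass from the geometric difference bounds to the coordinate-derivative statements of the theorem I would use the standard dictionary: $\frac{\partial}{\partial\ub}\gamma_{AB}=2\Omega\chi_{AB}$ and $\frac{\partial b^A}{\partial\ub}=-4\Omega^2\zeta^A$ relate the $\partial_{\ub}$ derivatives of the metric to $\chi,\zeta$ (cf. Propositions~\ref{gamma} and~\ref{b}), $\omega=\frac12\frac{\partial}{\partial\ub}\Omega^{-1}$ does the same for $\Omega$, the angular and $u$ derivatives of $\Omega,\gamma,b$ are recovered from transport equations and the first-variation and commutation formulae, and the second-order coordinate derivatives are expressed through first angular derivatives of the Ricci coefficients and the curvature components via the null structure and Bianchi equations. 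Since every $g_n$ is smooth, the Cauchy property of the difference sequences in $L^\infty_u L^\infty_{\ub}L^4(S)$ and $L^\infty_u L^\infty_{\ub}L^2(S)$ identifies the limits with the distributional derivatives of $g_\infty$, yielding the claimed convergence of $\partial_\th g_n,\partial_u g_n$ and of $\partial^2_{\th}g_n,\partial^2_{u\th}g_n,\partial^2_u g_n$; the $C^0_u C^0_{\ub}$ membership then follows because each approximating quantity is continuous in $(u,\ub)$ with values in $L^4(S)$ (resp. $L^2(S)$) and the convergence is uniform in $(u,\ub)$.

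The one delicate point --- and the main thing to watch --- is the $\partial_{\ub}$ direction. Here Theorem~\ref{convergencethm} supplies convergence of the difference sequence only in the weaker space $L^\infty_u L^{p_0}_{\ub}L^\infty(S)$, because the data for $\chih'_n,\omega'_n$ are controlled merely in $L^{p_0}_{\ub}$, reflecting the fact that $\chih$ itself is only bounded and enjoys no better $\ub$-regularity. This already gives the asserted convergence of $\partial_{\ub}g_n$, $\partial_{\ub}((\gamma_n^{-1})^{AB}\partial_{\ub}(\gamma_n)_{AB})$, $\partial^2_{\th\ub}g_n$, $\partial^2_{u\ub}g_n$ and $\partial^2_{\ub}(b^A)_n$ in the $L^{p_0}_{\ub}$ spaces, but to place the limits in the stronger spaces $L^\infty_u L^\infty_{\ub}L^\infty(S)$ (resp. $L^4(S)$) I would invoke the a priori estimates of Theorem~\ref{timeofexistence} applied to each $(\mathcal M_n,g_n)$ with constants independent of $n$: these bound $\omega_n,\chih_n,\trch_n,\chibh_n,\zeta_n,\dots$ uniformly, and such uniform bounds are inherited by the limit (passing to an a.e.-convergent subsequence in $(u,\ub)$ and applying Fatou to the $L^\infty(S)$-norm). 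No essentially new obstacle arises at this stage: the hard structural fact, that the difference system closes with no control of $\alpha'$, is already built into Theorem~\ref{convergencethm}, so what remains is the summation together with the bookkeeping relating the geometric and coordinate-based norms.
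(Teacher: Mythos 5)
Your proposal is correct and takes essentially the same approach as the paper: apply Theorem~\ref{convergencethm} to consecutive pairs, telescope using $\sum a_n<\infty$ to obtain Cauchy sequences for the metric components and their coordinate derivatives, translate geometric difference norms into coordinate-derivative statements via the transport/first-variation equations for $\gamma$, $\Omega$, $b$ (as the paper does in Section~\ref{AddEstMetric}), and upgrade the $L^{p_0}_{\ub}$ convergence to $L^\infty_{\ub}$ membership of the limit using the $n$-uniform a priori bounds from Theorem~\ref{timeofexistence} (the paper invokes the elementary lemma that $L^{p_0}$ convergence plus a uniform $L^\infty$ bound places the limit in $L^\infty$, which is what your Fatou argument delivers).
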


\begin{remark}
Notice that in the case of initial data of the impulsive gravitational wave, we have $\nabla^i\chih_n$ converging in $L^p_{\ub} L^2(S)$ for any $2\leq p <\infty$, but not for $p=\infty$. We thus take a sequence of initial data that converges in a topology which is consistent with $\nab^i\chih_n$ converging in $L^p_{\ub}L^2(S)$.
\end{remark}

The convergence of the Ricci coefficients and the acceptable curvature components follows from Theorem \ref{convergencethm}. In Section \ref{limit}, we will show that the convergence of the Ricci coefficients and the acceptable curvature components imply the asserted convergence of the metric and the regularity property of the limiting spacetime. The above convergence theorem is strong enough to show that the limiting spacetime is a solution to the Einstein equations:
\begin{theorem}\label{Einstein}
Suppose all the assumptions of Theorem \ref{convergencethm2} hold. Then the limiting spacetime metric satisfies the Einstein equations in $L^\infty_u L^\infty_{\ub}L^2(S)$.
\end{theorem}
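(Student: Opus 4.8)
The plan is to pass to the limit in the vacuum equations $R_{\mu\nu}[g_n]=0$, which hold exactly along the approximating sequence since each $g_n$ is smooth. The only obstruction is that $R_{\mu\nu}$ is second order in $g$ and, among the second-order coordinate combinations of the metric, the full traceless part of $\frac{\partial^2}{\partial\ub^2}\gamma_{AB}$ --- equivalently the curvature component $\alpha$ --- is the one combination that fails to converge. The central point, which is again a manifestation of the double null structure exploited throughout, is that this combination does not enter the Ricci tensor.

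\textbf{Step 1: the content of $R_{\mu\nu}=0$ in the null frame.} I would first compute $R_{\mu\nu}$ in the frame $\{e_3,e_4,e_A\}$ via the standard contracted Gauss--Codazzi--Ricci identities and record that $R_{\mu\nu}=0$ is equivalent to the Raychaudhuri equations for $\trch$ and $\trchb$ (from $R_{44}=0$, $R_{33}=0$; the first and third equations of \eqref{null.str1}), the Codazzi equations for $\chi$ and $\chib$ (from $R_{4A}=0$, $R_{3A}=0$; the first two equations of \eqref{null.str3}), and the equations for $\nab_4\trchb$, $\nab_3\trch$ and $\nab_4\chibh$ --- equivalently $\nab_3\chih$ --- coming from $R_{34}=0$ together with the trace and traceless parts of $R_{AB}=0$ (the fifth, seventh, sixth and eighth equations of \eqref{null.str1}, the Gauss relation in \eqref{null.str3} entering only as an identity relating $\rho$ to the Gauss curvature $K$ of $\gamma$). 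The key feature is that the traceless part of $\nab_4\chih$, i.e.\ the curvature component $\alpha=R(e_A,e_4,e_B,e_4)$ (traceless in vacuum), is a Weyl-type quantity orthogonal to the Ricci tensor: only $\mathrm{tr}\,\alpha=R_{44}$ enters, and this is governed by the \emph{trace} of $\frac{\partial^2}{\partial\ub^2}\gamma_{AB}$ alone. Thus $\alpha$, and with it the non-convergent part of $\frac{\partial^2}{\partial\ub^2}\gamma_{AB}$, never appears in $R_{\mu\nu}$.

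\textbf{Step 2: translation to coordinates.} Next I would rewrite each of these equations in the coordinates $(u,\ub,\th^1,\th^2)$ using $\frac{\partial}{\partial\ub}\gamma_{AB}=2\Omega\chi_{AB}$, $\omega=\frac12\frac{\partial}{\partial\ub}\Omega^{-1}$, $e_3=\Omega^{-1}(\partial_u+b^A\partial_{\th^A})$, $e_4=\Omega^{-1}\partial_\ub$ and their consequences, and check that the only second-order metric combinations occurring are $\frac{\partial^2}{\partial u^2}\gamma$, $\frac{\partial^2}{\partial u\partial\th}\gamma$, $\frac{\partial^2}{\partial\th^2}\gamma$ (controlled in $C^0_u C^0_{\ub} L^2(S)$ by Theorem \ref{convergencethm2}); $\frac{\partial^2}{\partial u\partial\ub}\gamma$, $\frac{\partial^2}{\partial\th\partial\ub}\gamma$, $\frac{\partial^2}{\partial\ub^2}b^A$ (controlled in $L^\infty_u L^{p_0}_{\ub} L^4(S)$); and $\frac{\partial}{\partial\ub}\big((\gamma^{-1})^{AB}\frac{\partial}{\partial\ub}\gamma_{AB}\big)$ (controlled in $L^\infty_u L^{p_0}_{\ub} L^\infty(S)$) --- the traceless part of $\frac{\partial^2}{\partial\ub^2}\gamma_{AB}$ surviving only inside this last combination after one application of the product rule, the remainder $\big(\frac{\partial}{\partial\ub}\gamma^{AB}\big)\big(\frac{\partial}{\partial\ub}\gamma_{AB}\big)$ being first order, and $\frac{\partial^2}{\partial\ub^2}\Omega$ cancelling (consistent with the absence of a $\nab_4\omega$ structure equation). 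Since $L^4(S)\hookrightarrow L^2(S)$ on the spheres $S_{u,\ub}$, whose areas are uniformly bounded (Proposition \ref{area}), and $C^0_{\ub}\hookrightarrow L^\infty_{\ub}$, each such combination is a well-defined element of $L^\infty_u L^\infty_{\ub} L^2(S)$ for the limiting metric, so $R_{\mu\nu}[g_\infty]$ is a genuine element of that space.

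\textbf{Step 3: passage to the limit and the main obstacle.} Finally I would pass to the limit equation by equation. Each identity has the schematic form $A_n=B_n$, with $A_n$ a principal derivative term ($\nab_4$ or $\nab_3$ of a trace, $\div\chih$ or $\div\chibh$, $\nab\trch$, $\nab_4\chibh$, etc.) and $B_n$ an algebraic expression in the Ricci coefficients and the acceptable curvature components $\beta_n,\rho_n,\sigma_n,\betab_n,\alphab_n$. By Step 2 the $A_n$ converge in a space continuously embedded in $L^\infty_u L^{p_0}_{\ub} L^2(S)$; by Theorem \ref{convergencethm2}, the Sobolev embeddings of Propositions \ref{L4} and \ref{Linfty}, and the uniform a priori bounds of Theorem \ref{timeofexistence}, each nonlinear term in $B_n$ converges in $L^2(S)$ (a product of uniformly bounded factors converging in $L^4(S)$, paired with a curvature factor converging in $L^2(S)$), so $B_n\to B_\infty$ in the same space. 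Since $A_n=B_n$ for every $n$ we get $A_\infty=B_\infty$; reassembling the identities yields $R_{\mu\nu}[g_\infty]=0$ in $L^\infty_u L^{p_0}_{\ub} L^2(S)$, and since the right-hand sides $B_\infty$ are bounded this in fact holds in $L^\infty_u L^\infty_{\ub} L^2(S)$, as claimed. The hard part is the bookkeeping of Steps 1--2: one must verify component by component that $\alpha$, the traceless part of $\frac{\partial^2}{\partial\ub^2}\gamma_{AB}$, and $\frac{\partial^2}{\partial\ub^2}\Omega$ all drop out of $R_{\mu\nu}$ --- precisely where the double null gauge is indispensable --- while keeping track of the weaker $L^{p_0}_{\ub}$ integrability of the trace combination, which is why the equation is first obtained with $L^{p_0}_{\ub}$ and only then upgraded using the boundedness of the nonlinearity.
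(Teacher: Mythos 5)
Your proposal is correct, but it takes a genuinely different route from the paper. The paper works directly with the coordinate expression
\[
R_{\mu\nu}=\frac{\partial}{\partial x^{\rho}}\Gamma^\rho_{\mu\nu}-\frac{\partial}{\partial x^{\nu}}\Gamma^\rho_{\rho\mu}+\Gamma^\rho_{\rho\lambda}\Gamma^{\lambda}_{\mu\nu}-\Gamma^\rho_{\mu\lambda}\Gamma^{\lambda}_{\rho\nu},
\]
exploiting the double null form to observe that $(g_\infty^{-1})^{\ub\ub}=(g_\infty^{-1})^{uu}=(g_\infty^{-1})^{uA}=0$ forces $\Gamma^{\ub}_{\mu\nu}$ to be free of $g_{\mu\nu,\ub}$, so that $R_{uu}$, $R_{uA}$, $R_{AB}$ contain no second $\ub$ derivatives at all; it then checks, component by component, that in $R_{u\ub}$ and $R_{\ub A}$ the dangerous terms cancel except for $\frac{\partial^2}{\partial\ub^2}b^A$, and that $R_{\ub\ub}$ reduces to the specific combination $\frac{\partial}{\partial\ub}(g^{AB}\frac{\partial}{\partial\ub}g_{AB})$ --- both of which Theorem~\ref{convergencethm2} controls. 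You instead interpret $R_{\mu\nu}=0$ via the null structure equations \eqref{null.str1}--\eqref{null.str3} and pass to the limit equation by equation, with the key observation that only $\operatorname{tr}\alpha=R_{44}$ (and not the traceless, Weyl part of $\alpha$) enters the Ricci tensor. Both proofs pivot on the identical structural fact that the traceless $\frac{\partial^2}{\partial\ub^2}\gamma_{AB}$ never enters $R_{\mu\nu}$; the paper makes this visible purely algebraically in the Christoffel symbols, while you extract it conceptually from the Ricci/Weyl decomposition. The paper's route has the advantage of being entirely coordinate-based and self-contained: it does not need to revalidate the null structure equations for the low-regularity limit, where the frame $(e_3,e_4,e_A)$ is itself only Lipschitz in $\ub$ and the $\nab_4$-derivatives of Ricci coefficients must be interpreted carefully. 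Your route, by contrast, foregrounds the geometric reason the argument works (the singular part of curvature is Weyl, hence Ricci-orthogonal), and it reuses the convergence machinery of Theorem~\ref{convergencethm} directly rather than recomputing the Ricci tensor --- but the burden of your Step~2 (translating each structure equation to coordinates and verifying which second derivatives of $g$ survive) ends up being essentially the same computation as the paper's, just organized differently. Your observations that $\frac{\partial^2}{\partial\ub^2}\Omega$ drops out, in parallel with the absence of a $\nab_4\omega$ structure equation, and that the trace combination only appears in the controlled form $\frac{\partial}{\partial\ub}\bigl((\gamma^{-1})^{AB}\frac{\partial}{\partial\ub}\gamma_{AB}\bigr)$ of Proposition~\ref{trchp}, are both correct and match what the paper verifies in its $R_{\ub\ub}$ computation.
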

This will also be proved in Section \ref{limit}. Moreover, this limiting spacetime solution is the unique solution to the vacuum Einstein equations.
\begin{theorem}\label{uniquenessthm}
The solution to the Einstein equations given by Theorems \ref{convergencethm2} and \ref{Einstein} is unique among spacetimes that arise as $C^0$ limit of smooth solutions to the vacuum Einstein equations.
\end{theorem}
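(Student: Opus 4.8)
The plan is to reduce the uniqueness statement to an application of the difference estimates in Theorem \ref{convergencethm}. Suppose $(\mathcal M, g)$ is a spacetime arising as a $C^0$ limit of smooth solutions $(\mathcal M_m, g_m)$ to the vacuum Einstein equations, whose characteristic initial data converge (in the topology dictated by the hypotheses of Theorem \ref{convergencethm2}) to the prescribed singular data. We want to show $(\mathcal M, g)$ coincides with the spacetime $(\mathcal M_\infty, g_\infty)$ constructed in Theorems \ref{convergencethm2} and \ref{Einstein}. First I would fix a common double null coordinate system: since each $g_m$ is smooth, the double null foliation and coordinates $(u,\ub,\th^1,\th^2)$ of Section \ref{coordinates} are well-defined for $(\mathcal M_m, g_m)$, and one identifies all the spacetimes — including $(\mathcal M_\infty,g_\infty)$ and the approximating sequence $(\mathcal M_n, g_n)$ used to build $g_\infty$ — by the value of the coordinate functions. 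The key observation is that $C^0$ convergence of $g_m \to g$ together with the a priori estimates of Theorem \ref{timeofexistence} (which hold uniformly for the smooth sequence once $\epsilon$ is fixed in terms of $C, c$) gives enough compactness to pass to the limit in the equations for $g_m$, so that $g$ itself satisfies all the bounds of Theorem \ref{rdthmv2} and hence is \emph{a} spacetime of the class to which the difference argument applies.

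The main step is then to run Theorem \ref{convergencethm} with ``data set $(1)$'' taken from the sequence $(\mathcal M_n, g_n)$ converging to $g_\infty$ and ``data set $(2)$'' taken from the sequence $(\mathcal M_m, g_m)$ converging to $g$. Since both sequences converge to the \emph{same} initial data, for any $\delta > 0$ one can choose $n, m$ large enough that the initial difference quantities $a$ appearing in the hypotheses of Theorem \ref{convergencethm} are smaller than $\delta$; the conclusion of that theorem then bounds the spacetime difference $g_n - g_m$ (and the differences of Ricci coefficients and of the non-$\alpha$ curvature components) by $C'\delta$ throughout $0\le u\le u_*$, $0\le\ub\le\ub_*$, with $C'$ independent of $\delta$. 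Letting first $n\to\infty$ (so $g_n\to g_\infty$) and then $m\to\infty$ (so $g_m\to g$), and finally $\delta\to 0$, forces $g_\infty = g$ in the chosen coordinates, and hence $(\mathcal M,g)$ is isometric to $(\mathcal M_\infty, g_\infty)$. It is worth noting that this argument \emph{never} requires $\alpha$ of the comparison spacetime to converge in $L^2$; the renormalized structure exploited in Theorem \ref{convergencethm} is exactly what allows the difference estimate to close without any control of $\alpha$ or $\alpha_m$, which is essential because for the impulsive data $\alpha_m$ is not Cauchy in $L^2(H_0)$.

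The hard part will be the first step: verifying that an \emph{arbitrary} $C^0$ limit of smooth vacuum solutions with the given converging data actually falls within the hypotheses of Theorem \ref{convergencethm} — i.e., that it admits a regular double null foliation in a uniform region $0\le u, \ub\le\epsilon$ and satisfies the a priori bounds $\mathcal O, \tilde{\mathcal O}_{3,2}, \mathcal R < C'$. A priori one only knows $C^0$ convergence of the metrics, not convergence of the eikonal functions or of the Ricci coefficients. The resolution is that the a priori estimates of Theorem \ref{timeofexistence} depend only on the initial data norms $\mathcal O_0, \mathcal R_0$, which are uniformly bounded for the converging sequence of smooth data; hence each $(\mathcal M_m, g_m)$ satisfies these bounds on a common region, the double null coordinate systems are uniformly controlled, and one extracts a limit in which the foliation survives and the bounds persist by lower semicontinuity. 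One should also check that the identification of spacetimes by coordinate values is consistent between the two approximating families — this is where the normalization $\Omega = 1$, $b^A = 0$ on $H_0$ and $\Hb_0$, together with the fact that the data agree in the limit, is used. Once these bookkeeping points are settled, the rest is a direct invocation of Theorem \ref{convergencethm} and a triangle-inequality argument, so no further hard estimates are needed.
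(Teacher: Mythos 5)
Your proposal takes essentially the same route as the paper's Proposition \ref{uniquenessprop}: compare the two smooth approximating sequences by applying the difference estimates of Theorem \ref{convergencethm}, then pass to the $C^0$ limit. One simplification: the step you flag as ``the hard part'' — verifying that the limit spacetime $g$ itself admits a regular double null foliation and satisfies the a priori bounds — is not actually required. Theorem \ref{convergencethm} is invoked only for the smooth approximants $g_n$ and $g_m$, each of which satisfies the a priori estimates of Theorem \ref{timeofexistence} because it is smooth with uniformly controlled data; the final comparison
$$|g_\infty - g| \leq |g_\infty - g_n| + |g_n - g_m| + |g_m - g|$$
is purely at the $C^0$ level, with the middle term controlled by Theorem \ref{convergencethm} and an initial data difference made small by taking $n,m$ large. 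The limit spacetime never needs to be plugged into the difference estimates, which is exactly why the paper can formulate uniqueness within the broad class of $C^0$ limits without first establishing any regularity of those limits.
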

A more precise version of the uniqueness theorem is formulated as Proposition \ref{uniquenessprop} and will be proved in Section \ref{uniquenesssec}. Moreover, if the initial data is assumed to be more regular, the limit spacetime metric is more regular:
\begin{theorem}\label{regularitythm}
Suppose, in addition to the assumptions of Theorem \ref{convergencethm2}, the bounds
$$\sum_{j\leq J}\sum_{i\leq I}(\sum_{\Psi\in\{\rho,\sigma,\betab,\alphab\}}||\nab_3^j\nab^i\Psi_n||_{L^2(\Hb_0)}+\sum_{\Psi\in\{\beta,\rho,\sigma,\betab\}}||\nab_3^i \nab^j\Psi_n||_{L^2(H_0)})\leq C $$
hold uniformly independent of $n$.
Then 
$$\sum_{j\leq J+2}\sum_{i\leq \min\{I,j-2\}}(\frac{\partial}{\partial u})^j(\frac{\partial}{\partial \th})^i g_\infty\in L^\infty_u L^\infty_{\ub} L^2(S),$$
$$\sum_{j\leq J+2}\sum_{i\leq \min\{I,j-2\}}(\frac{\partial}{\partial u})^j(\frac{\partial}{\partial \th})^i\frac{\partial}{ \partial \ub}g_\infty\in L^\infty_u L^{p_0}_{\ub} L^2(S).$$
\end{theorem}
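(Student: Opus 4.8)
The plan is to propagate the extra regularity hypothesis through the same scheme used for the basic a priori estimates, then transfer the resulting control on the Ricci coefficients and curvature components to the metric, and finally pass to the limit using the convergence already furnished by Theorem \ref{convergencethm2}.

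First I would notice that the additional hypothesis is exactly the $K=0$ instance of the initial bound in Proposition \ref{propagationregularity}, applied to each member $(\mathcal M_n,g_n)$ of the approximating sequence. Since the data constants $C,c$ of Theorem \ref{timeofexistence} and the constants in the extra bound are uniform in $n$, Proposition \ref{propagationregularity} yields, uniformly in $n$ and in $0\le u\le u_*$, $0\le \ub\le\ub_*$, the bounds $\sum_{j\le J}\sum_{i\le I}\big(\sum_{\Psi\in\{\rho,\sigma,\betab,\alphab\}}||\nab_3^j\nab^i\Psi_n||_{L^2(\Hb_{\ub})}+\sum_{\Psi\in\{\beta,\rho,\sigma,\betab\}}||\nab_3^j\nab^i\Psi_n||_{L^2(H_u)}\big)\le C'$, and, via the elliptic estimates of Section \ref{Ricciellipticsec} together with the null structure equations (\ref{null.str1})--(\ref{null.str3}), uniform control of $\nab_3^j\nab^i\psi_n$ for the Ricci coefficients in the corresponding (slightly larger) range of indices. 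The two extra $\nab_3$-derivatives in the conclusion ($j\le J+2$ rather than $j\le J$) are then recovered by repeatedly substituting the Bianchi equations (\ref{eq:null.Bianchi}) and the $\nab_3$-structure equations: a $\nab_3$-derivative of $\rho,\sigma,\betab$ or of a Ricci coefficient beyond the data budget can always be traded for a $\nab$-derivative of a component lower in the Bianchi hierarchy, the hierarchy bottoming out at $\alphab$, whose $\nab_3$-derivatives are themselves supplied by the hypothesis. Crucially, this procedure never calls for a $\nab_4$-derivative, so no information on $\alpha$ is ever used.

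Next I would convert these estimates into control of the metric components. Using the transport equations $\frac{\partial}{\partial\ub}\gamma_{AB}=2\Omega\chi_{AB}$, $\frac{\partial}{\partial\ub}\Omega^{-1}=2\omega$ and $\frac{\partial}{\partial\ub}b^A=-4\Omega^2\zeta^A$, and the fact that the coordinate vector fields $\partial_u$ and $\partial_{\th^A}$ commute with $\partial_\ub$, I would commute these equations with $(\partial_u)^j(\partial_{\th})^i$ and integrate in $\ub$ from $H_0$, where the relevant derivatives of the data are controlled by hypothesis. Rewriting the coordinate derivative as $\partial_u=\Omega e_3-b^A\partial_{\th^A}$ and expanding, $(\partial_u)^j(\partial_{\th})^i$ applied to a Ricci coefficient becomes a sum of products of lower-order $(\partial_u,\partial_{\th})$-derivatives of $\Omega,b,\gamma$ (controlled by the same coupled induction) with $\nab_3^{j'}\nab^{i'}\psi_n$, $j'\le j$, in the range already established; tracking the index shifts produces exactly $j\le J+2$, $i\le\min\{I,j-2\}$. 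For the first conclusion, Cauchy--Schwarz in $\ub$ upgrades the $L^2(H_u)$-type bounds to $L^\infty_u L^\infty_{\ub} L^2(S)$; for the second, $(\partial_u)^j(\partial_{\th})^i\frac{\partial}{\partial\ub}g$ is read off directly from the transport right-hand sides, which involve $\chih$ and $\omega$, controlled only in $L^{p_0}_{\ub}$, giving membership in $L^\infty_u L^{p_0}_{\ub} L^2(S)$.

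Since Theorem \ref{convergencethm2} already gives $g_n\to g_\infty$ uniformly, the uniform-in-$n$ bounds obtained above together with weak compactness in $L^2(S)$ and uniqueness of distributional limits force $g_\infty$ to satisfy the asserted bounds. I expect the main obstacle to be the bookkeeping in the second and third steps: one must check that the finitely many admissible trades of $\nab_3$- for $\nab$-derivatives in the Bianchi and structure equations, combined with the commutator terms produced by expanding $(\partial_u)^j(\partial_{\th})^i$, never force a $\nab_4$-derivative and land precisely in the range $j\le J+2$, $i\le\min\{I,j-2\}$ --- in particular that the appearance of $\alphab$ at the bottom of the hierarchy is simultaneously what allows the two extra $\nab_3$-derivatives and what caps the number of admissible angular derivatives at $j-2$.
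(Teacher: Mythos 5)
Your proposal follows the paper's argument: invoke Proposition \ref{propagationregularity} with $K=0$ (packaged in the paper as Proposition \ref{regularityprop}) to get uniform-in-$n$ bounds on $\nab_3^j\nab^i\Psi_n$ and the corresponding Ricci coefficients, translate to coordinate derivatives of $\gamma,\Omega,b$ via the transport equations $\partial_{\ub}\gamma=2\Omega\chi$, $\tfrac12\partial_{\ub}\Omega^{-1}=\omega$, $\partial_{\ub}b^A=-4\Omega^2\zeta^A$, and pass to the limit using Theorem \ref{convergencethm2}. The paper dispatches the final conversion with ``the details are straightforward and will be omitted''; your bookkeeping of where the $+2$ shift and the $i\le j-2$ cap come from is more elaborate and somewhat impressionistic (the two extra $\partial_u$-derivatives are primarily accounted for by the two-step cascade $g\mapsto\psi$ via the first variation and $\psi\mapsto\Psi$ via the $\nab_3$-structure equations, with the Bianchi ``trading'' you describe needed only for components lacking a direct $\nab_3$ budget), but the overall mechanism is the one the paper intends and no $\nab_4$-information on $\alpha$ is ever invoked.
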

This will be proved in Section \ref{regularityp}. 

For general initial data satisfying the assumptions of Theorem \ref{rdthmv2}, using the construction in Section \ref{initialcondition}, there exists an approximating sequence of smooth initial data satisfying the assumption of Theorem \ref{convergencethm2}. Thus, the combination of Theorems \ref{timeofexistence}, \ref{convergencethm2}, \ref{Einstein}, \ref{regularitythm} and \ref{uniquenessthm} together imply Theorem \ref{rdthmv2}.

The remainder of this Section will be organized as follows: Theorem \ref{convergencethm} is proved in Sections \ref{convsec1}-\ref{convsec5}. After the definition of the norms in Section \ref{convsec1}, the proof is carried out in three steps:\\

\noindent{\bf STEP 1} (Section \ref{convsec2}): The difference of the metric components are estimated assuming the bounds for the difference of the Ricci coefficients.\\

\noindent{\bf STEP 2} (Section \ref{convsec4}): The difference of the Ricci coefficients are controlled assuming the estimates of the difference of the curvature components. This relies on the transport equations for the difference quantities derived in Section \ref{convsec3}.\\

\noindent{\bf STEP 3} (Section \ref{convsec5}): Finally, the bounds for the difference of curvature components are obtained, closing all the estimates for Theorem \ref{convergencethm}.\\

In Section \ref{AddEstMetric}, additional estimates are derived for the metric components. These additional estimates will be used together with Theorem \ref{convergencethm} to construct a limiting spacetime and to obtain Theorem \ref{convergencethm2} in Section \ref{limit}. In this Section, Theorem \ref{Einstein} is also proved, showing that the limiting spacetime satisfies the Einstein equations. In Section \ref{uniquenesssec}, we formulate and prove a precise version of Theorem \ref{uniquenessthm}, establishing uniqueness of the limiting spacetime. In Section \ref{regularityp}, Theorem \ref{regularitythm} is proved, showing additional regularity in the spacetime with more regular initial data. Finally, in Section \ref{limitgiw}, we return to the case of an impulsive gravitational wave and obtain extra regularity properties for these spacetimes.

\subsection{Norms}\label{convsec1}
We begin the proof of Theorem \ref{convergencethm}.
Define the following $L^2$ norms for the difference of the null curvature components or their renormalized versions:
$$\mathcal R'=\sum_{i\leq 1}\sup_u||\nabla^i(\beta',\rho',\sigma',\betab')||_{L^2(H_u)}+\sum_{i\leq 1}\sup_{\ub}||\nabla^i(\rhoc',\sigmac',\betab',\alphab')||_{L^2(\Hb_{\ub})}$$
Define the following norm for difference of the the Ricci coefficients:
$$\mathcal O'=\sum_{i\leq 1}\sup_{u}||\nabla^i(\hat{\chi}',\omega')||_{L^{p_0}_{\ub}L^2(S_{u,\ub})}+\sum_{i\leq 1}\sup_{u,\ub}||(\nabla^i(tr\chi',\eta',\underline{\eta}',\underline{\omega}',\underline{\hat{\chi}}',tr\underline{\chi}')||_{L^2(S_{u,\ub})}.$$
These norms will be used to estimate the difference of the null Ricci coefficients except for those involving the highest derivatives, for which the estimates are weaker. We therefore also introduce the norms:
\begin{equation*}
\begin{split}
\tilde{\mathcal O}'=\sup_u ||\nabla^2(\chih',\omega',\eta',\etab')||_{L^2(H_u)}+\sup_{\ub} ||\nabla^2(\chibh',\omegab',\eta',\etab')||_{L^2(\Hb_{\ub})}+\sup_{u,\ub} ||\nabla^2(\trch',\trchb')||_{L^2(S_{u,\ub})},
\end{split}
\end{equation*}
Notice that the norms $\mathcal R'$, $\mathcal O'$ and $\tilde{\mathcal O}'$ are difference counterparts of the norms $\mathcal R$, $\mathcal O$ and $\tilde{\mathcal O}$ for the difference quantities. Note, however, that the former provides control of one fewer derivatives than the latter. This is due to the fact that convergence will be proved in a norm which is one derivative weaker than the corresponding norms for the a priori estimates. Define also the following norms:
\begin{equation*}
\begin{split}
\mathcal O''=\sup_u(||(\chih,\omega,\eta,\underline{\eta},\trch,\trchb)'||_{L^2_{\ub}L^\infty(S_{u,\ub})}+||\nabla(\chih,\omega,\eta,\underline{\eta},\trch,\trchb)'||_{L^2_{\ub}L^4(S_{u,\ub})}).
\end{split}
\end{equation*}
It follows by Sobolev Embedding in Propositions \ref{L4} and \ref{Linfty} that
\begin{proposition}\label{OpSobolev}
$$\mathcal O''\leq C(\tilde{\mathcal O}'+\mathcal O').$$
\end{proposition}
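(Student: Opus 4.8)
The statement to prove is Proposition \ref{OpSobolev}: $\mathcal O''\leq C(\tilde{\mathcal O}'+\mathcal O')$. This is a purely functional-analytic estimate — no PDE input is needed beyond the already-established Sobolev embeddings on the spheres $S_{u,\ub}$ (Propositions \ref{L4} and \ref{Linfty}) and the metric control from Section \ref{metric}. The plan is to bound each of the two pieces of $\mathcal O''$ separately for a generic difference coefficient $\psi'\in\{\chih,\omega,\eta,\etab,\trch,\trchb\}'$, taking the supremum over $u$ and the $L^2$ norm over $\ub$ at the very end.

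First I would control the angular-derivative piece $\|\nabla\psi'\|_{L^2_{\ub}L^4(S_{u,\ub})}$. By Sobolev embedding on the sphere (Proposition \ref{L4}), for each fixed $(u,\ub)$ one has $\|\nabla\psi'\|_{L^4(S_{u,\ub})}\leq C\big(\|\nabla\psi'\|_{L^2(S_{u,\ub})}+\|\nabla^2\psi'\|_{L^2(S_{u,\ub})}\big)$. Squaring, integrating in $\ub$ over $[0,\ub_*]$ with $\ub_*\leq\epsilon$, and taking square roots gives
\[
\|\nabla\psi'\|_{L^2_{\ub}L^4(S_{u,\ub})}\leq C\Big(\epsilon^{1/2}\sup_{u,\ub}\|\nabla\psi'\|_{L^2(S_{u,\ub})}+\|\nabla^2\psi'\|_{L^2_{\ub}L^2(S_{u,\ub})}\Big).
\]
The first term on the right is bounded by $\epsilon^{1/2}\mathcal O'$ for $\psi'\in\{\eta,\etab,\trch,\trchb,\chibh,\trchb\}'$, and for $\chih',\omega'$ by Hölder in $\ub$ (passing from $L^{p_0}_{\ub}$ to $L^2_{\ub}$, legitimate since $p_0\geq2$ and the $\ub$-interval has length $\leq\epsilon$) it is again controlled by $C\mathcal O'$. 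The second term $\|\nabla^2\psi'\|_{L^2_{\ub}L^2(S)}$: for $\chih',\omega',\eta',\etab'$ this is exactly $\leq\sup_u\|\nabla^2\psi'\|_{L^2(H_u)}\leq\tilde{\mathcal O}'$ (absorbing the $\Omega\sqrt{\det\gamma}$ weight in the definition of $\|\cdot\|_{L^2(H_u)}$ using Propositions \ref{Omega} and \ref{gamma}), while for $\trch',\trchb'$ one uses $\|\nabla^2(\trch,\trchb)'\|_{L^2_{\ub}L^2(S)}\leq\epsilon^{1/2}\sup_{u,\ub}\|\nabla^2(\trch,\trchb)'\|_{L^2(S_{u,\ub})}\leq\epsilon^{1/2}\tilde{\mathcal O}'$. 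Taking $\sup_u$ throughout gives the bound for this piece.

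Next I would control the $L^\infty$ piece $\|\psi'\|_{L^2_{\ub}L^\infty(S_{u,\ub})}$ by the same scheme, now invoking Proposition \ref{Linfty}: $\|\psi'\|_{L^\infty(S_{u,\ub})}\leq C\big(\|\psi'\|_{L^2(S_{u,\ub})}+\|\nabla\psi'\|_{L^4(S_{u,\ub})}\big)$, and the right-hand side is, after taking $L^2_{\ub}$ norms, already estimated by the previous paragraph together with the $L^2(S)$-control of $\psi'$ itself (bounded by $\epsilon^{1/2}\mathcal O'$ or, for $\chih',\omega'$, by $C\mathcal O'$ via Hölder in $\ub$). Assembling the two pieces, summing over the finitely many coefficients $\psi'$, and choosing $\epsilon$ small enough that all the $\epsilon^{1/2}$ factors are $\leq1$ yields $\mathcal O''\leq C(\tilde{\mathcal O}'+\mathcal O')$ with $C$ independent of $\epsilon$ and of $a$.

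The only point requiring mild care — and the one I would flag as the main (minor) obstacle — is the bookkeeping of which coefficient is controlled in which norm: $\chih'$ and $\omega'$ come with the weaker $L^{p_0}_{\ub}$ control at first order (so Hölder in $\ub$ must be used to descend to $L^2_{\ub}$, which is fine on an interval of length $\leq\epsilon$) and with $L^2(H_u)$ control at second order, whereas $\trch',\trchb'$ have pointwise-in-$(u,\ub)$ second-order control; one must track these asymmetries and also keep the equivalence between the coordinate and metric $L^p$ norms (Proposition \ref{eqnorm}) and the uniform bounds on $\Omega,\gamma$ in hand when passing between $\|\cdot\|_{L^2(H_u)}$ and $\|\cdot\|_{L^2_{\ub}L^2(S)}$. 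None of this is substantive; the proposition is genuinely a corollary of the sphere Sobolev embeddings, which is why the paper states it with a one-line proof.
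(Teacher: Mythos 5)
Your argument is correct and is exactly the paper's one-liner spelled out: apply the sphere Sobolev embeddings (Propositions \ref{L4}, \ref{Linfty}) slice by slice, integrate in $\ub$, use Hölder to descend from $L^{p_0}_{\ub}$ to $L^2_{\ub}$ for $\chih',\omega'$, and identify $\|\cdot\|_{L^2_{\ub}L^2(S)}$ with $\|\cdot\|_{L^2(H_u)}$ via the uniform bounds on $\Omega$ and $\det\gamma$. (Minor slip: your list ``$\{\eta,\etab,\trch,\trchb,\chibh,\trchb\}'$'' repeats $\trchb$ and includes $\chibh$, which does not appear in $\mathcal O''$; the intended list is $\{\eta,\etab,\trch,\trchb\}'$, and this does not affect the estimate.)
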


\subsection{Estimates for the Difference of the Metrics}\label{convsec2}
In this section, we show that the difference of the metrics and their coordinate angular derivatives can be controlled by the $\mathcal O'$ and $\tilde{\mathcal O}'$ norms. Recall that the metrics take the form
$$g^{(i)}=-2(\Omega^{(i)})^2(du\otimes d\ub+d\ub\otimes du)+(\gamma^{(i)})_{AB}(d\th^A-(b^{(i)})^Adu)\otimes (d\th^B-(b^{(i)})^Bdu),$$ 
for $i=1,2$.
\begin{proposition}\label{Omegap}
\[
\sup_{u,\ub}|(\Omega',(\Omega^{-1})')(u,\ub)|\leq C\delta^{\frac 12}\mathcal O''.
\]
\end{proposition}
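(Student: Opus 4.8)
The plan is to derive a transport equation for $\Omega'$ in the $e_4$ direction and integrate it along the integral curves of $\partial/\partial\ub$, exploiting the fact that both $\Omega^{(1)}$ and $\Omega^{(2)}$ equal $1$ on $H_0$, so their difference vanishes on the initial outgoing hypersurface. Recall from Proposition \ref{Omega} and equation (\ref{Omegatransport}) that each $\Omega^{(i)}$ satisfies $\partial_{\ub}(\Omega^{(i)})^{-1} = 2\omega^{(i)}$, where the $\ub$-derivative is with respect to the common coordinate $\ub$; here we are using that in the identified coordinate system $e_4^{(i)} = (\Omega^{(i)})^{-1}\partial/\partial\ub$, so that $\Omega^{(i)}\nabla_4^{(i)}(\Omega^{(i)})^{-1} = \partial_{\ub}(\Omega^{(i)})^{-1}$. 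Subtracting the two equations gives
\[
\frac{\partial}{\partial\ub}(\Omega^{-1})' = 2\omega',
\]
which is a clean inhomogeneous transport equation for $(\Omega^{-1})'$ with source the difference of the Ricci coefficient $\omega$ and with zero data on $H_0$.

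First I would integrate this equation in $\ub$ from $0$ to a given value, using that $(\Omega^{-1})'=0$ on $\{\ub=0\}$, to obtain the pointwise bound
\[
|(\Omega^{-1})'(u,\ub)| \le \int_0^{\ub} |\omega'(u,\ub')|\,d\ub' \le \delta^{\frac12}\Big(\int_0^{\ub}|\omega'(u,\ub')|^2\,d\ub'\Big)^{\frac12}
\]
by Cauchy–Schwarz in $\ub$ over an interval of length at most $\delta$. Since $\omega'$ is a scalar, its $L^\infty(S_{u,\ub'})$ norm controls its pointwise values; and by definition of $\mathcal O''$ (together with Sobolev embedding, Proposition \ref{OpSobolev}/Proposition \ref{Linfty}, which controls $\|\omega'\|_{L^2_{\ub}L^\infty(S)}$ by $\tilde{\mathcal O}'+\mathcal O'$ and hence by $\mathcal O''$) we get $\big(\int_0^{\ub}\|\omega'\|_{L^\infty(S_{u,\ub'})}^2\,d\ub'\big)^{1/2}\le \mathcal O''$. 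This yields $\sup_{u,\ub}|(\Omega^{-1})'|\le C\delta^{1/2}\mathcal O''$. To pass from $(\Omega^{-1})'$ to $\Omega'$, I would use the algebraic identity $\Omega' = -\Omega^{(1)}\Omega^{(2)}(\Omega^{-1})'$ together with the uniform two-sided bounds $\frac12\le \Omega^{(i)}\le 2$ from Proposition \ref{Omega}, which immediately gives the same bound for $\Omega'$.

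There is really no serious obstacle here — the estimate is one of the softest in the whole difference scheme, precisely because $\Omega$ is a scalar (so no metric-dependent frame issues arise in comparing the two spacetimes), $\omega$ is the natural source with no curvature term, and the data vanishes on $H_0$. The only minor point requiring care is bookkeeping: one must be careful that the $\ub$-derivative in the transport equation is genuinely the coordinate derivative common to both spacetimes (this is where the identification of the two spacetimes via the coordinate system is used) rather than a frame derivative, so that subtracting the two equations is legitimate; and one should note that the $\delta^{1/2}$ gain comes from Cauchy–Schwarz over the short $\ub$-interval, which is what will ultimately let this term be absorbed when closing the full bootstrap for Theorem \ref{convergencethm}. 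The argument for $(\Omega^{-1})'$ is symmetric and follows from the same computation.
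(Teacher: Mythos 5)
Your argument is correct and follows essentially the same route as the paper: integrate the coordinate transport equation $\frac{1}{2}\frac{\partial}{\partial\ub}(\Omega^{-1})' = \omega'$ from $\Hb_0=\{\ub=0\}$, gain $\delta^{1/2}$ via Cauchy--Schwarz in $\ub$ over the short interval, bound the resulting $L^2_{\ub}L^\infty(S)$ quantity by $\mathcal O''$, and pass to $\Omega'$ by an algebraic identity together with the two-sided bounds on $\Omega$ from Proposition \ref{Omega} (your identity $\Omega'=-\Omega^{(1)}\Omega^{(2)}(\Omega^{-1})'$ is a clean way to phrase what the paper does). Two small clean-ups: the vanishing that actually feeds the $\ub$-integration is $\Omega'=0$ on $\Hb_0$, not on $H_0$ as your opening sentence says (both normalizations hold, but the one used is at $\ub=0$), and the parenthetical invoking Proposition \ref{OpSobolev}/\ref{Linfty} is both unnecessary and stated in the wrong direction --- $\|\omega'\|_{L^2_{\ub}L^\infty(S)}$ is literally one of the summands in the definition of $\mathcal O''$, while Proposition \ref{OpSobolev} gives $\mathcal O''\le C(\tilde{\mathcal O}'+\mathcal O')$, not a lower bound on $\mathcal O''$.
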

\begin{proof}
Recall that
\[
 \omega=-\frac{1}{2}\nabla_4\log\Omega=\frac{1}{2}\Omega\nabla_4\Omega^{-1}=\frac{1}{2}\frac{\partial}{\partial \ub}\Omega^{-1}.
\]
Hence 
\[
 \frac{1}{2}\frac{\partial}{\partial\ub}(\Omega^{-1})'=\omega'.
\]
By integrating along the $\ub$ direction, noticing that $\Omega'=0$ on $\Hb_0$, and using Cauchy-Schwarz we get
$$|(\Omega^{-1})'|\leq C\delta^{\frac 12}\mathcal O''.$$
In order to get the estimate for $\Omega'$, we note that
\[
(\Omega'+\Omega^{(2)})^{-1}=(\Omega^{(1)})^{-1}=(\Omega^{-1})^{(2)}+(\Omega^{-1})'.
\]
Therefore,
\[
 \Omega'=((\Omega^{(2)})^{-1}+(\Omega)^{-1})')^{-1}-\Omega^{(2)}=\frac{(\Omega^{-1})'}{1+\frac{(\Omega^{-1})'}{(\Omega^{-1})^{(2)}}}.
\]
In view of the upper and lower bounds of $\Omega$ in Proposition \ref{Omega},
$$|\Omega'|\leq C\delta^{\frac 12}\mathcal O''.$$

\end{proof}
Using the estimates for $\Omega'$, we also have estimates for $\gamma'$.
\begin{proposition}\label{gammap}
$\gamma'$ satisfies the following pointwise bounds:
$$\sup_{u,\ub}|(\gamma_{AB}',((\gamma^{-1})^{AB})')(u,\ub)|\leq a+C\delta^{\frac 12}\mathcal O''_u.$$
\end{proposition}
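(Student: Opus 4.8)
\textbf{Proof proposal for Proposition \ref{gammap}.}

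The plan is to integrate the first variation formula for $\gamma$ along the $e_4$ (i.e.\ $\partial/\partial\ub$) direction for each spacetime and take the difference. Recall from the metric estimates (Proposition \ref{gamma}) that $\frac{\partial}{\partial\ub}\gamma^{(i)}_{AB}=2\Omega^{(i)}\chi^{(i)}_{AB}$ holds for $i=1,2$, with the common initial condition that $\gamma'_{AB}$ is controlled by $a$ on $\Hb_0$ (from the hypotheses of Theorem \ref{convergencethm}) and, importantly, since the $\th^A$ coordinates are transported from $S_{0,0}$ along $L$ and $\Lb$, the data identification is consistent. Subtracting the two evolution equations gives schematically
$$\frac{\partial}{\partial\ub}\gamma'_{AB}=2\Omega'\chi^{(1)}_{AB}+2\Omega^{(2)}\chi'_{AB},$$
where $\chi'=\chi^{(1)}-\chi^{(2)}$ decomposes into $\trch'$ and $\chih'$ (modulo terms involving $\gamma'$ itself, since $\chi_{AB}=\frac12\trch\,\gamma_{AB}+\chih_{AB}$ means the trace part carries a $\gamma'$ factor).

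First I would write $\chi'_{AB}=\frac12\trch^{(1)}\gamma'_{AB}+\frac12\trch'\gamma^{(2)}_{AB}+\chih'_{AB}$ so that the right-hand side is linear in the difference quantities $(\Omega',\trch',\chih',\gamma')$ with coefficients bounded by the a priori estimates of Theorem \ref{timeofexistence} (in particular $\trch^{(1)}$, $\chih^{(1)}$ and $\Omega^{(2)}$, $\gamma^{(2)}$ are all under control). Then I would integrate in $\ub$ from $0$ to $\ub$, using $|\gamma'_{AB}(u,0)|\le a$ as initial data, apply Proposition \ref{Omegap} to bound $\|\Omega'\|_{L^\infty}$ by $C\delta^{1/2}\mathcal O''$, and bound the contributions of $\trch'$ and $\chih'$ pointwise in $\ub$ via Sobolev embedding (Proposition \ref{Linfty}) in terms of $\|\nabla^{\le2}(\trch',\chih')\|_{L^2(S)}$ — but since we only want the $L^2_{\ub}L^\infty(S)$-type control, it is cleaner to estimate $\int_0^\ub \|\chih'\|_{L^\infty(S_{u,\ub'})}d\ub'\le C\delta^{1/2}\|\chih'\|_{L^2_{\ub}L^\infty(S)}\le C\delta^{1/2}\mathcal O''$ by Cauchy–Schwarz, and similarly for $\trch'$ (which is in $L^\infty_{\ub}$, hence even easier). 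The Gronwall term coming from the $\trch^{(1)}\gamma'$ coefficient is harmless since $\trch^{(1)}$ is bounded and we integrate over an interval of length $\le\delta$, contributing a factor $e^{C\delta}\le 2$. This yields $\sup_{u,\ub}|\gamma'_{AB}|\le a+C\delta^{1/2}\mathcal O''$.

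For $((\gamma^{-1})^{AB})'$ I would use the algebraic identity $(\gamma^{-1})'=-(\gamma^{(1)})^{-1}\gamma'(\gamma^{(2)})^{-1}$ together with the uniform upper and lower bounds on $\det\gamma^{(i)}$ and on $|\gamma^{(i)}_{AB}|$, $|(\gamma^{(i)})^{-1}|$ from Proposition \ref{gamma}; this converts the bound on $\gamma'$ directly into the same bound on $(\gamma^{-1})'$. The main obstacle — and it is a mild one at this stage — is bookkeeping: making sure the $\gamma'$-dependent term on the right-hand side (through the $\trch\,\gamma_{AB}$ decomposition of $\chi$) is absorbed via Gronwall rather than circularly, and confirming that all coefficient functions appearing are those already bounded by Theorem \ref{timeofexistence} so that the constant $C$ depends only on $C$ and $c$ and not on $a$. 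Everything else is integration of a linear transport inequality over a short $\ub$-interval.
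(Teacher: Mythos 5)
Your proof is correct and essentially identical to the paper's argument: both integrate the transport equation $\frac{\partial}{\partial\ub}(\gamma_{AB})'=2(\Omega\chi_{AB})'$ from $\Hb_0$, use Proposition \ref{Omegap} for $\Omega'$ and Cauchy--Schwarz against the $L^2_{\ub}L^\infty(S)$ control in $\mathcal O''$ for $\chih'$ and $\trch'$, and treat the $\trch\,\gamma'$ contribution via Gronwall over the short $\ub$-interval. The only cosmetic difference is for $((\gamma^{-1})^{AB})'$: the paper first bounds $(\det\gamma)'$ and combines with the $2\times 2$ adjugate formula, while you invoke the dimension-independent identity $(\gamma^{-1})'=-(\gamma^{(1)})^{-1}\gamma'(\gamma^{(2)})^{-1}$; both close in one line given the nondegeneracy bounds of Proposition \ref{gamma}.
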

\begin{proof}
The components of $\gamma$ solve the following ODE:
$$\frac{\partial}{\partial \ub}\gamma_{AB}=2\Omega\chi_{AB}.$$
This implies
$$\frac{\partial}{\partial \ub}\log(\det\gamma)=\Omega\trch.$$
From this we can derive an equation for $(\det\gamma)'$:
$$\frac{\partial}{\partial \ub}(\det\gamma)'=\frac{1}{(\det\gamma)^{(2)}}\left(-(\det\gamma)'\frac{\partial}{\partial \ub}(\det\gamma)^{(1)}+(\det\gamma)^{(1)}(\det\gamma)^{(2)}(\Omega\trch)'\right). $$
By Proposition \ref{gamma}, we have uniform upper and lower bounds on $\det\gamma$ and uniform estimates for $\Omega$ and $\frac{\partial}{\partial \ub}(\det\gamma)$. The previous Proposition gives $|\Omega'|\leq C\delta^{\frac 12}\mathcal O''$. Moreover, $\int |\trch'| d\ub'\leq \delta\mathcal O''$ by definition. Thus,
\begin{equation}\label{volumeformd}
|(\det\gamma)'|\leq C\delta^{\frac 12}\mathcal O''.
\end{equation}
We can also derive an equation for $(\gamma_{AB})'$:
$$\frac{\partial}{\partial \ub}(\gamma_{AB})'=2(\Omega\chi_{AB})'.$$
Thus
$$|(\gamma_{AB})'(u,\ub)|\leq |(\gamma_{AB})'(u,0)|+C\int_0^{\ub}|(\chi_{AB})'|d\ub'+C\delta^{\frac 12}\mathcal O'' |\chi_{AB}|.$$
By assumption, $|(\gamma_{AB})'(u,0)|\leq a$. By Proposition \ref{gamma}, matrices $\gamma^{(1)}$ and $\gamma^{(2)}$ are uniformly non-degenerate matrices. Hence, regardless of whether we define the $L^\infty$ norm with respect to $\gamma^{(1)}$ or $\gamma^{(2)}$, we have
$$\sum_{A,B=1,2}|(\chih_{AB})'|\leq C\sup_{u,\ub}||\chi'||_{L^\infty(S_{u,\ub})}.$$
Therefore, 
\begin{equation}\label{metricd}
|(\gamma_{AB})'|\leq a+C\delta^{\frac 12}\mathcal O''.
\end{equation}
Now, (\ref{volumeformd}) and (\ref{metricd}) together also imply the pointwise bound in coordinates for 
$$|((\gamma^{-1})^{AB})'|\leq a+C\delta^{\frac 12}\mathcal O''.$$
\end{proof}
This estimate allows us to conclude that the $L^p$ norms defined with respect to either metric $(1)$ or $(2)$ differ only by $a+C\delta^{\frac 12}\mathcal O''$.
\begin{proposition}\label{normscomparable}
Given any tensor, we can define its $L^p(S)$ norms $||\phi||^{(1)}_{L^p(S_{u,\ub})}$ and $||\phi||^{(2)}_{L^p(S_{u,\ub})}$ with respect to the first and second metric. Suppose
$$||\phi||^{(1)}_{L^p(S_{u,\ub})}<\infty.$$
Then
$$|||\phi||^{(1)}_{L^p(S_{u,\ub})}-||\phi||^{(2)}_{L^p(S_{u,\ub})}|\leq C(a+\delta^{\frac 12}\mathcal O'')||\phi||^{(1)}_{L^p(S_{u,\ub})}.$$
In particular, $\phi$ is also in $L^p$ with respect to the second metric.
\end{proposition}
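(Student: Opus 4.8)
The statement to prove is Proposition~\ref{normscomparable}: that the $L^p(S_{u,\ub})$ norms defined with respect to the two metrics $\gamma^{(1)}$ and $\gamma^{(2)}$ differ by a multiplicative factor of order $a + \delta^{1/2}\mathcal O''$.

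The plan is to reduce everything to the pointwise comparison of the two metrics already established in Proposition~\ref{gammap}. First I would recall that for a covariant tensor $\phi_{A_1\dots A_r}$ on $S_{u,\ub}$, Proposition~\ref{eqnorm} (applied to either metric, which is legitimate since both $\gamma^{(1)}$ and $\gamma^{(2)}$ satisfy the bounds of Proposition~\ref{gamma} with uniform constants) tells us that
$$
\left(\|\phi\|^{(k)}_{L^p(S_{u,\ub})}\right)^p \sim \sum_{A_1,\dots,A_r} \iint |\phi_{A_1\dots A_r}|^p \sqrt{\det\gamma^{(k)}}\, d\th^1 d\th^2, \qquad k=1,2,
$$
with comparability constants depending only on the uniform bounds. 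Thus the question is reduced to comparing two \emph{coordinate} integrals: the integrand $|\phi_{A_1\dots A_r}|^p$ is literally the same in both, and the only difference is the volume factor $\sqrt{\det\gamma^{(k)}}$ and the contraction pattern hidden inside $\langle\phi,\phi\rangle_\gamma$.

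Next I would make the comparison quantitative. Write $\langle \phi,\phi\rangle_{\gamma^{(1)}}^{p/2} - \langle\phi,\phi\rangle_{\gamma^{(2)}}^{p/2}$ and $\sqrt{\det\gamma^{(1)}} - \sqrt{\det\gamma^{(2)}}$ and bound both using Proposition~\ref{gammap}, which gives $|\gamma'_{AB}|, |(\gamma^{-1})^{AB}{}'| \leq a + C\delta^{1/2}\mathcal O''$, together with the lower bound on $\det\gamma$ from Proposition~\ref{gamma}. Since $x \mapsto x^{p/2}$ and $x\mapsto \sqrt{x}$ are locally Lipschitz on the relevant compact range of values (bounded above and below away from $0$, again by Proposition~\ref{gamma}), one gets
$$
\left|\langle\phi,\phi\rangle_{\gamma^{(1)}}^{p/2} - \langle\phi,\phi\rangle_{\gamma^{(2)}}^{p/2}\right| \leq C(a + \delta^{1/2}\mathcal O'') \langle\phi,\phi\rangle_{\gamma^{(1)}}^{p/2},
$$
and similarly $|\sqrt{\det\gamma^{(1)}} - \sqrt{\det\gamma^{(2)}}| \leq C(a+\delta^{1/2}\mathcal O'')\sqrt{\det\gamma^{(1)}}$. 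Integrating over $S_{u,\ub}$ and combining the two estimates via the triangle inequality then yields
$$
\left| \left(\|\phi\|^{(1)}_{L^p(S_{u,\ub})}\right)^p - \left(\|\phi\|^{(2)}_{L^p(S_{u,\ub})}\right)^p\right| \leq C(a+\delta^{1/2}\mathcal O'')\left(\|\phi\|^{(1)}_{L^p(S_{u,\ub})}\right)^p,
$$
and finally, using that $|x^{1/p} - y^{1/p}| \leq C|x-y|^{1/p}$ or, better, that on the range where $(a+\delta^{1/2}\mathcal O'')$ is small the map $t\mapsto t^{1/p}$ is Lipschitz near $1$, one extracts the claimed bound on $|\,\|\phi\|^{(1)}_{L^p} - \|\phi\|^{(2)}_{L^p}|$ after dividing through. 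I would also note at the outset that we may assume $a + \delta^{1/2}\mathcal O'' \leq 1/2$ (say), since the statement is vacuous otherwise for a suitable choice of $C$; the finiteness of $\|\phi\|^{(2)}_{L^p}$ is then an immediate byproduct.

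The only mild subtlety — the ``hard part,'' such as it is — is bookkeeping the dependence of the comparability constants in Proposition~\ref{eqnorm} on the metric, and making sure that when $\phi$ is a higher-rank tensor the contractions $\langle\phi,\phi\rangle_\gamma$ involve products of several factors of $\gamma^{-1}$, so the perturbation estimate picks up a combinatorial constant depending on the rank $r$; this is harmless since $r$ is fixed and bounded in all our applications. One should also be slightly careful that the $\delta$ appearing here is the small parameter fixed at the start of the proof of Theorem~\ref{convergencethm} (the iteration step), and that $\mathcal O''$ is controlled via Proposition~\ref{OpSobolev} by $\mathcal O' + \tilde{\mathcal O}'$, which is what makes the right-hand side genuinely small once those difference norms are shown to be small; but for the purposes of this Proposition $\mathcal O''$ is simply carried along as a given quantity.
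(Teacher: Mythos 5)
Your proposal is correct and follows the same route as the paper's proof, which is simply the remark that the claim ``follows from the pointwise control of $(\gamma_{AB})'$, $((\gamma^{-1})^{AB})'$ and $(\det\gamma)'$ in coordinates'' established in Proposition~\ref{gammap}. You have simply spelled out the routine elaboration (coordinate reduction via Proposition~\ref{eqnorm}, pointwise Lipschitz comparison of $\langle\phi,\phi\rangle_\gamma^{p/2}$ and $\sqrt{\det\gamma}$, integration, and the power-map Lipschitz step) that the paper leaves implicit.
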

\begin{proof}
This follows from the pointwise control of $(\gamma_{AB})'$, $((\gamma^{-1})^{AB})'$ and $(\det\gamma)'$ in coordinates.
\end{proof}
In a similar manner, we can control the $(\frac{\partial}{\partial\th^A})$ derivatives of $\gamma'$.
\begin{proposition}\label{dgammap}
$$\sup_{u,\ub}||(\frac{\partial}{\partial\th^C})\gamma_{AB}'||_{L^4(S_{u,\ub})}\leq Ca+C\delta^{\frac 12}\mathcal O'',$$
where $L^4(S)$ is understood as the $L^4$ norm for a scalar function and by Proposition \ref{normscomparable} can be defined with respect either to the metric $\gamma^{(1)}$ or $\gamma^{(2)}$.
\end{proposition}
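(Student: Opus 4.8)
The plan is to mimic the proof of Proposition \ref{gammap}, commuting the $e_4$-transport equation for $\gamma'_{AB}$ with one coordinate angular derivative $\frac{\partial}{\partial\th^C}$ and closing the resulting inequality by Gronwall. Recall that $\gamma'_{AB}$ satisfies $\frac{\partial}{\partial\ub}\gamma'_{AB}=2(\Omega\chi_{AB})'$, with $\chi=\chih+\frac12\trch\,\gamma$. Differentiating in $\th^C$ and integrating in $\ub$, using that $\frac{\partial}{\partial\th^C}\gamma'_{AB}$ is bounded by $a$ on $\Hb_0$ together with the area bound of Proposition \ref{area} (and Proposition \ref{normscomparable} to fix a reference metric for the $L^4$ norms), one obtains
\begin{equation*}
\left\|\frac{\partial}{\partial\th^C}\gamma'_{AB}(u,\ub)\right\|_{L^4(S_{u,\ub})}\leq Ca+C\int_0^\ub\left\|\frac{\partial}{\partial\th^C}(\Omega\chi_{AB})'\right\|_{L^4(S_{u,\ub'})}\,d\ub'.
\end{equation*}

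The next step is to expand $\frac{\partial}{\partial\th^C}(\Omega\chi_{AB})'$. The scalar factor produces $\frac{\partial}{\partial\th^C}\Omega'$; since $\omega=\frac12\frac{\partial}{\partial\ub}\Omega^{-1}$ and $\omega'$ is a scalar (so $\frac{\partial}{\partial\th^C}\omega'=\nabla_C\omega'$, with no Christoffel correction), integrating the $\ub$-equation for $\frac{\partial}{\partial\th^C}(\Omega^{-1})'$ and using that $\Omega'=0$ on $\Hb_0$ bounds $\|\frac{\partial}{\partial\th}\Omega'\|_{L^4(S)}$ by $C\delta^{\frac12}\mathcal O''$, exactly as in Propositions \ref{Omega} and \ref{Omegap}; the remaining factors $\Omega'$, $\chi^{(2)}$, $\frac{\partial}{\partial\th}\Omega^{(2)}$, $\chi'$ are controlled pointwise in $L^\infty$ (resp.\ in $L^2_{\ub}L^\infty(S)$) by Proposition \ref{Omegap}, by the a priori estimates of Theorem \ref{timeofexistence}, and by $\mathcal O''$. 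The main term is $\frac{\partial}{\partial\th^C}\chi'_{AB}$. Converting this coordinate derivative of an $S$-tangent tensor into a covariant one introduces Christoffel symbols, schematically
\begin{equation*}
\frac{\partial}{\partial\th^C}\chi'_{AB}=\nabla_C\chi'_{AB}+\Gamma^{(1)}\cdot\chi'+\Gamma'\cdot\chi^{(2)},
\end{equation*}
where the difference $\Gamma'$ of Christoffel symbols is schematically $\gamma^{-1}\frac{\partial}{\partial\th}\gamma'+(\gamma^{-1})'\frac{\partial}{\partial\th}\gamma^{(2)}$. Writing $\chi'=\chih'+\frac12\trch'\gamma^{(1)}+\frac12\trch^{(2)}\gamma'$ and using $\nabla^{(1)}\gamma^{(1)}=0$, the pointwise bounds on $\gamma',(\gamma^{-1})'$ from Proposition \ref{gammap}, and the angular regularity of $\gamma^{(1)},\gamma^{(2)}$ from Theorem \ref{timeofexistence}, each contribution is controlled either by $\|\nabla(\chih',\trch')\|_{L^4(S)}$, by lower-order quantities already bounded by $Ca+C\delta^{\frac12}\mathcal O''$, or by $\|\frac{\partial}{\partial\th}\gamma'\|_{L^4(S)}$ itself.

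Substituting and applying Cauchy--Schwarz in $\ub'$ (so that $\int_0^\ub\|\nabla(\chih',\trch')\|_{L^4}\,d\ub'\leq\delta^{\frac12}\|\nabla(\chih',\trch')\|_{L^2_{\ub}L^4(S)}\leq\delta^{\frac12}\mathcal O''$) yields
\begin{equation*}
\left\|\frac{\partial}{\partial\th}\gamma'(u,\ub)\right\|_{L^4(S_{u,\ub})}\leq Ca+C\delta^{\frac12}\mathcal O''+C\int_0^\ub\left\|\frac{\partial}{\partial\th}\gamma'(u,\ub')\right\|_{L^4(S_{u,\ub'})}\,d\ub',
\end{equation*}
and Gronwall's inequality in $\ub$ gives the claimed bound. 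The one point requiring care is the apparent circularity: $\frac{\partial}{\partial\th}\gamma'$ reappears on the right through the Christoffel difference $\Gamma'$ (and through $\nabla^{(1)}\gamma^{(2)}=-\Gamma'\gamma^{(2)}$); this is harmless because that term is integrated over a $\ub$-interval of length $\leq\delta$ and is therefore absorbed by Gronwall, equivalently closed by a bootstrap with $\delta$ small. A secondary, purely bookkeeping, point is to propagate $L^4(S)$ rather than $L^2(S)$ norms consistently through the products, which is precisely why $\mathcal O''$ was defined to include the $\|\nabla(\,\cdot\,)'\|_{L^2_{\ub}L^4(S)}$ terms.
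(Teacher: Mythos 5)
Your proposal is correct and follows essentially the same strategy as the paper: differentiate the $\ub$-transport equation for $\gamma'_{AB}$ by $\frac{\partial}{\partial\th^C}$, estimate the resulting source terms involving $\Omega'$, $\chi'$, and the Christoffel difference $\Gamma'$ by $\mathcal O''$ and by $\frac{\partial}{\partial\th}\gamma'$ itself, and then close with Gronwall in $\ub$. The one place you diverge is in bounding $\frac{\partial}{\partial\th^C}\Omega'$: you use the $\ub$-transport relation $\omega=\frac12\frac{\partial}{\partial\ub}\Omega^{-1}$ (together with $\Omega'=0$ on $\Hb_0$ and the scalar nature of $\omega'$), which gives a direct bound of $C\delta^{1/2}\mathcal O''$ without contributing to the Gronwall kernel; this is actually the content of the later Proposition \ref{dOmegap}, not of Propositions \ref{Omega}/\ref{Omegap} as you cite. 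The paper instead invokes the algebraic identity $\nabla\log\Omega=\frac12(\eta+\etab)$, yielding a pointwise bound $|\frac{\partial}{\partial\th}\Omega'|\lesssim|(\eta,\etab)'|+|\Gamma'|\,|\Omega|+|\Gamma|\,|\Omega'|$, whose $\Gamma'$ term reintroduces $\frac{\partial}{\partial\th}\gamma'$ and is therefore also closed via Gronwall. Both routes give the same constant in the end; yours is marginally cleaner for this subterm. Otherwise the decomposition of $\chi'$ into $\chih'$, $\trch'$, and $\trch^{(2)}\gamma'$ contributions, the Cauchy--Schwarz in $\ub$ to produce $\delta^{1/2}\mathcal O''$, and the final Gronwall argument all coincide with the paper's proof.
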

\begin{proof}
By the equation 
$$\frac{\partial}{\partial \ub}\gamma_{AB}=2\Omega\chi_{AB}$$
and the assumption, we have
\begin{equation}\label{gammadd}
\begin{split}
||(\frac{\partial}{\partial\th^C})^i\gamma_{AB}'(u,\ub)||_{L^4(S_{u,\ub'})}\leq& Ca+C\int_0^{\ub}\left(\|(\frac{\partial}{\partial\th^C})^i\Omega'||_{L^4(S_{u,\ub'})}+\|(\frac{\partial}{\partial\th^C})^i\chi'||_{L^4(S_{u,\ub'})}\right) d\ub'.
\end{split}
\end{equation}
Since
$$\nabla(\log \Omega)=\frac 12 (\eta+\etab),$$
we have
$$|(\frac{\partial}{\partial\th^C})\Omega'|\leq C|(\eta,\etab)'|+C|\Gamma'||\Omega|+C|\Gamma||\Omega'| \leq C|(\eta,\etab)'|+C|\frac{\partial}{\partial\th^D}\gamma_{AB}'|+C|\Omega'|,$$
where $\Gamma$ is the connection coefficients on the spheres with respect to $\gamma$.
Moreover, by Cauchy-Schwarz,
$$\int ||(\frac{\partial}{\partial\th^C})\chi'||_{L^4(S_{u,\ub'})} d\ub'\leq C\delta^{\frac 12}\mathcal O''.$$
Thus by (\ref{gammadd}), we have
\begin{equation*}
\begin{split}
&||(\frac{\partial}{\partial\th^C})\gamma_{AB}'(u,\ub)||_{L^4(S_{u,\ub})}\\
\leq&Ca+C\delta^{\frac 12}\mathcal O''+C\int_0^{\ub} ||\eta',\etab',\chih',\frac{\partial}{\partial\th^D}\gamma_{AB}',\Omega'||_{L^4(S_{u,\ub'})}d\ub'.
\end{split}
\end{equation*}
Using already established estimates, Gronwall's inequality and Cauchy-Schwarz, we thus have
$$||(\frac{\partial}{\partial\th^C})\gamma_{AB}'(u,\ub)||_{L^4(S_{u,\ub})}\leq Ca+C\delta^{\frac 12}\mathcal O''.$$
\end{proof}
A consequence of the above Proposition is the estimates on the difference of the connection coefficients:
\begin{proposition}\label{Gammap}
$$\sup_{u,\ub}||\Gamma'||_{L^4(S_{u,\ub})}\leq Ca+C\delta^{\frac 12}\mathcal O''.$$
\end{proposition}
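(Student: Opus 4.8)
The statement to prove is Proposition~\ref{Gammap}: the difference of the Christoffel symbols $\Gamma'$ of the two spheres satisfies $\sup_{u,\ub}\|\Gamma'\|_{L^4(S_{u,\ub})}\leq Ca+C\delta^{1/2}\mathcal{O}''$.

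\textbf{Approach.} The key point is that the Christoffel symbols of $\gamma$ on a coordinate patch are built algebraically out of $\gamma_{AB}$, its inverse $(\gamma^{-1})^{AB}$, and the first coordinate derivatives $\frac{\partial}{\partial\th^C}\gamma_{AB}$ via the standard formula
$$\Gamma^C_{AB}=\frac12(\gamma^{-1})^{CD}\left(\frac{\partial}{\partial\th^A}\gamma_{BD}+\frac{\partial}{\partial\th^B}\gamma_{AD}-\frac{\partial}{\partial\th^D}\gamma_{AB}\right).$$
So $\Gamma'=\Gamma^{(1)}-\Gamma^{(2)}$ is a sum of terms each of which is a product of factors of the form $\gamma^{(i)}$, $(\gamma^{(i)})^{-1}$, $\frac{\partial}{\partial\th}\gamma^{(i)}$, with exactly one factor being a \emph{difference}: either $((\gamma^{-1})^{AB})'$ or $(\frac{\partial}{\partial\th^C}\gamma_{AB})'$. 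The plan is simply to expand $\Gamma'$ this way, take $L^4(S_{u,\ub})$ norms, and use H\"older's inequality together with the bounds already established.

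\textbf{Key steps, in order.} First, write out $\Gamma'$ as the telescoping difference of the two Christoffel formulas, grouping terms so that each summand contains precisely one difference factor and the remaining factors come from spacetime $(1)$ or $(2)$. Second, bound the ``undifferenced'' factors: by Proposition~\ref{gamma} (applied to both spacetimes, which is legitimate since both satisfy the hypotheses of Theorem~\ref{timeofexistence} with the same constants) the quantities $|\gamma_{AB}^{(i)}|$ and $|(\gamma^{-1})^{(i)AB}|$ are pointwise bounded by $C$; and the coordinate derivatives $\frac{\partial}{\partial\th}\gamma^{(i)}$ are controlled in $L^4(S)$ by the a priori angular regularity (this follows from Proposition~\ref{eqnorm} relating coordinate and covariant norms, together with the $\mathcal{O}$-bounds of Theorem~\ref{timeofexistence}; indeed the proof of Proposition~\ref{dgammap} already used the analogous control). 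Third, bound the difference factors: $|((\gamma^{-1})^{AB})'|\leq a+C\delta^{1/2}\mathcal{O}''$ pointwise by Proposition~\ref{gammap}, and $\|(\frac{\partial}{\partial\th^C}\gamma_{AB})'\|_{L^4(S_{u,\ub})}\leq Ca+C\delta^{1/2}\mathcal{O}''$ by Proposition~\ref{dgammap}. Fourth, combine: in each summand put the difference factor and (if present) one $\frac{\partial}{\partial\th}\gamma^{(i)}$ factor into $L^4$ and the rest into $L^\infty$, or place both derivative factors in $L^8\subset L^4$ on the $2$-sphere after using the uniform volume bounds — more simply, since $((\gamma^{-1})^{AB})'$ is bounded in $L^\infty$, every term either is (bounded) $\times\,\|(\frac{\partial}{\partial\th}\gamma)'\|_{L^4}$ or is $\|((\gamma^{-1})^{AB})'\|_{L^\infty}\times\|\frac{\partial}{\partial\th}\gamma^{(i)}\|_{L^4}$ times bounded factors. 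Each is $\leq Ca+C\delta^{1/2}\mathcal{O}''$, and summing over the finitely many index combinations and patches gives the claim. Finally, note by Proposition~\ref{normscomparable} that it is immaterial whether the $L^4(S)$ norm is taken with respect to $\gamma^{(1)}$ or $\gamma^{(2)}$, up to a factor $1+C(a+\delta^{1/2}\mathcal{O}'')$ which is harmless.

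\textbf{Main obstacle.} There is no serious obstacle here; this is a bookkeeping computation. The one point requiring a little care is ensuring the ``undifferenced'' coordinate-derivative factors $\frac{\partial}{\partial\th}\gamma^{(i)}$ are genuinely controlled in $L^4(S)$ uniformly — this is the place where one invokes that both spacetimes enjoy the a priori estimates of Theorem~\ref{timeofexistence} (hence bounded angular regularity of $\gamma$) rather than just the difference hypotheses, and where Proposition~\ref{eqnorm} is used to pass between the intrinsic norm and coordinate components. Everything else is H\"older on a two-sphere of uniformly bounded area plus the already-proven Propositions~\ref{gamma}, \ref{gammap}, \ref{normscomparable}, \ref{dgammap}.
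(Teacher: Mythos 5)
Your proof is correct and matches the paper's intent exactly: the paper's own proof is the one-line observation that $\Gamma'$ is a linear combination of $\gamma'$ and $\partial_\theta\gamma'$ with coefficients controlled by the a priori bounds, so the estimate follows directly from Propositions \ref{gammap} and \ref{dgammap}. You have simply written out the bookkeeping — the telescoping of the Christoffel formula, the H\"older splitting, and the invocation of Proposition \ref{normscomparable} — that the paper leaves implicit.
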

\begin{proof}
This follows from the fact that $\Gamma'$ can be expressed as a linear combination $\gamma'$ and its first angular derivative.
\end{proof}
This implies that $\nabla^{(1)}$ and $\nabla^{(2)}$ are comparable in the following sense:
\begin{proposition}\label{connectionpL4}
Let $p\leq 4$. Suppose $\phi$ is a tensor. 
Then
$$||(\nabla^{(1)})\phi-(\nabla^{(2)})\phi||_{L^p(S_{u,\ub})} \leq C(a+\delta^{\frac 12}\mathcal O'')||\phi||_{L^\infty(S_{u,\ub})}.$$
\end{proposition}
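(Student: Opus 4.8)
The plan is to use the fact that the difference of two covariant derivatives is not a differential operator at all but a purely tensorial (zeroth-order) one. Concretely, for any $(0,r)$ $S$-tensor $\phi$,
$$\nabla^{(1)}_B\phi_{A_1\dots A_r}-\nabla^{(2)}_B\phi_{A_1\dots A_r}=-\sum_{i=1}^r\left((\Gamma^{(1)})^C_{BA_i}-(\Gamma^{(2)})^C_{BA_i}\right)\phi_{A_1\dots \hat{A_i}C\dots A_r},$$
where $\Gamma^{(j)}$ denotes the Christoffel symbols of $\gamma^{(j)}$ on $S_{u,\ub}$ in the $(\th^1,\th^2)$ coordinates. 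Schematically, $(\nabla^{(1)}-\nabla^{(2)})\phi\sim\Gamma'\cdot\phi$, so no angular derivative of $\phi$ appears on the right-hand side. Measuring all norms with $\gamma^{(1)}$ (which by Proposition \ref{normscomparable} is comparable, up to a factor $1+C(a+\delta^{\frac12}\mathcal O''))$, to measuring them with $\gamma^{(2)}$) and using the uniform bounds on $\gamma^{(1)}$ and $(\gamma^{(1)})^{-1}$ from Proposition \ref{gamma} to pass between coordinate components and $\gamma$-norms, one gets the pointwise inequality $|(\nabla^{(1)}-\nabla^{(2)})\phi|\leq C|\Gamma'|\,|\phi|$.

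Then I would simply estimate in $L^p(S_{u,\ub})$ by Hölder's inequality on the two-sphere:
$$\|(\nabla^{(1)}-\nabla^{(2)})\phi\|_{L^p(S_{u,\ub})}\leq C\|\Gamma'\|_{L^p(S_{u,\ub})}\|\phi\|_{L^\infty(S_{u,\ub})}.$$
Since $p\leq 4$ and the area of $S_{u,\ub}$ is bounded above (Proposition \ref{area}, together with the initial bound on $\det\gamma$), we have $\|\Gamma'\|_{L^p(S_{u,\ub})}\leq C\|\Gamma'\|_{L^4(S_{u,\ub})}$, and Proposition \ref{Gammap} gives $\|\Gamma'\|_{L^4(S_{u,\ub})}\leq Ca+C\delta^{\frac12}\mathcal O''$. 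Combining these inequalities yields the claimed bound.

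The only points requiring care — rather than a genuine obstacle — are (i) recording that $\Gamma'$, and hence the difference of the connections, is a linear combination of $\gamma'$ and its first angular derivative, so that the $L^4$ control from Proposition \ref{Gammap} (which itself rests on Proposition \ref{dgammap}) is exactly the input needed and no derivative of $\phi$ is produced; and (ii) keeping track of the metric-dependence of the $L^p(S)$ norm, which is harmless by Proposition \ref{normscomparable} because the error incurred in switching metrics is again of size $a+C\delta^{\frac12}\mathcal O''$ and is absorbed into the constant $C$. No analytic input beyond the already-established metric estimates is needed.
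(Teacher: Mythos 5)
Your proof is correct and takes essentially the same route as the paper: reduce the difference of connections to the tensorial term $\Gamma'\phi$, apply H\"older, and invoke Proposition~\ref{Gammap} (which controls $\|\Gamma'\|_{L^4(S)}$ via Proposition~\ref{dgammap}). You simply spell out the $p\leq 4$ and bounded-area steps that the paper leaves implicit.
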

\begin{proof}
Since we have estimates for $\gamma_{AB}$ and $(\gamma^{-1})^{AB}$. It suffice to estimate 
$$(\nabla^{(1)})_{\frac{\partial}{\partial\th^A}}\phi-(\nabla^{(2)})_{\frac{\partial}{\partial\th^A}}\phi.$$
Using Proposition \ref{Gammap}, we have
\begin{equation*}
\begin{split}
&||(\nabla^{(1)})_{\frac{\partial}{\partial\th^A}}\phi-(\nabla^{(2)})_{\frac{\partial}{\partial\th^A}}\phi||_{L^p(S_{u,\ub})}\\
\leq &||\Gamma^{(1)}\phi-\Gamma^{(2)}\phi||_{L^p(S_{u,\ub})}\\
\leq &C||\Gamma'||_{L^4(S_{u,\ub})}||\phi||_{L^\infty(S_{u,\ub})}\\
\leq &C(a+\delta^{\frac 12}\mathcal O'')||\phi||_{L^\infty(S_{u,\ub})}.
\end{split}
\end{equation*}
\end{proof}
We now show that the difference between the angular covariant derivatives of two tensors is comparable to the angular covariant derivative of the difference:
\begin{proposition}\label{angularpL4}
Let $p\leq 4$. Suppose $\phi^{(1)}$ and $\phi^{(2)}$ are tensors defined on spacetimes $(1)$ and $(2)$ respectively.
Then
$$||\nabla^{(1)}(\phi')-(\nabla\phi)'||_{L^p(S_{u,\ub})}\leq C(a+\delta^{\frac 12}\mathcal O'')||\phi||_{L^\infty(S_{u,\ub})}.$$
\end{proposition}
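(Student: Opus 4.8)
The statement to prove is Proposition \ref{angularpL4}: controlling $\|\nabla^{(1)}(\phi') - (\nabla\phi)'\|_{L^p(S_{u,\ub})}$ in terms of $\|\phi\|_{L^\infty}$, with constant $C(a + \delta^{1/2}\mathcal O'')$.

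The plan is to expand the quantity $\nabla^{(1)}(\phi') - (\nabla\phi)'$ in coordinates and see which terms survive. Here $(\nabla\phi)'$ means $\nabla^{(1)}\phi^{(1)} - \nabla^{(2)}\phi^{(2)}$, while $\nabla^{(1)}(\phi')$ means $\nabla^{(1)}(\phi^{(1)} - \phi^{(2)})$. So
\begin{equation*}
\nabla^{(1)}(\phi') - (\nabla\phi)' = \nabla^{(1)}\phi^{(1)} - \nabla^{(1)}\phi^{(2)} - \nabla^{(1)}\phi^{(1)} + \nabla^{(2)}\phi^{(2)} = \nabla^{(2)}\phi^{(2)} - \nabla^{(1)}\phi^{(2)}.
\end{equation*}
That is, the terms involving $\phi^{(1)}$ cancel exactly, and the difference is simply $(\nabla^{(2)} - \nabla^{(1)})\phi^{(2)}$ evaluated on the single tensor $\phi^{(2)}$. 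This is precisely the object controlled by Proposition \ref{connectionpL4} (applied with the roles of the metrics swapped, or equivalently by noting the symmetry of that estimate), which gives $\|(\nabla^{(1)} - \nabla^{(2)})\phi\|_{L^p(S_{u,\ub})} \leq C(a + \delta^{1/2}\mathcal O'')\|\phi\|_{L^\infty(S_{u,\ub})}$.

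First I would write the difference $\nabla^{(2)}\phi^{(2)} - \nabla^{(1)}\phi^{(2)}$ in the coordinate frame $\partial/\partial\th^A$, where covariant derivatives differ only through the Christoffel symbols: $(\nabla^{(2)} - \nabla^{(1)})_{A}\phi^{(2)}$ is a sum of terms of the form $(\Gamma^{(2)} - \Gamma^{(1)}) \cdot \phi^{(2)} = \Gamma' \cdot \phi^{(2)}$, where the contraction is with respect to $\gamma^{(2)}$ (or $\gamma^{(1)}$; by Proposition \ref{normscomparable} the two are comparable up to the stated error). Then I apply Hölder's inequality with $p \leq 4$: $\|\Gamma'\cdot\phi^{(2)}\|_{L^p(S_{u,\ub})} \leq C\|\Gamma'\|_{L^4(S_{u,\ub})}\|\phi^{(2)}\|_{L^\infty(S_{u,\ub})}$, and invoke Proposition \ref{Gammap} which bounds $\|\Gamma'\|_{L^4(S_{u,\ub})}$ by $Ca + C\delta^{1/2}\mathcal O''$. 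Since $\|\phi^{(2)}\|_{L^\infty} \leq \|\phi\|_{L^\infty}$ (up to the comparability of norms, which contributes only lower-order corrections), the desired estimate follows.

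There is essentially no obstacle here — the content is the algebraic observation that the $\phi^{(1)}$ contributions cancel, reducing the claim to the already-established Proposition \ref{connectionpL4}/\ref{Gammap}. The only point requiring minor care is keeping track of which metric is used to define the $L^p$ and $L^\infty$ norms and the contractions; all such discrepancies are absorbed into the $C(a + \delta^{1/2}\mathcal O'')$ factor by Proposition \ref{normscomparable}. I would state the proof in two or three lines: exhibit the cancellation, then cite Proposition \ref{connectionpL4} (equivalently Propositions \ref{Gammap} and \ref{normscomparable}).
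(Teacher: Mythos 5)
Your proof is correct and coincides with the paper's: you exhibit the identity $\nabla^{(1)}(\phi') - (\nabla\phi)' = \nabla^{(2)}\phi^{(2)} - \nabla^{(1)}\phi^{(2)}$ and then invoke Proposition \ref{connectionpL4}, which is exactly the paper's two-line argument. The extra unpacking of Proposition \ref{connectionpL4} in terms of $\Gamma'$ and Proposition \ref{Gammap} is accurate but not needed.
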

\begin{proof}
It is easy to see that
$$\nabla^{(1)}(\phi')-(\nabla\phi)'=-\nabla^{(1)}\phi^{(2)}+\nabla^{(2)}\phi^{(2)}.$$
The Proposition thus follows from Proposition \ref{connectionpL4}.
\end{proof}
We now estimate the coordinate angular derivative of $\Omega'$ in $L^4(S)$:
\begin{proposition}\label{dOmegap}
We have
\[
\sup_{u,\ub}||\frac{\partial}{\partial \th^A}\Omega'||_{L^4(S_{u,\ub})}\leq Ca+C\delta^{\frac 12}\mathcal O''.
\]
\end{proposition}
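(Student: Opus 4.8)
The plan is to mimic the structure of Proposition \ref{dgammap}, integrating a transport equation for $\frac{\partial}{\partial\th^A}\Omega'$ along the $\ub$ direction and closing with Gronwall. First I would recall the two identities already used: $\omega=\frac12\frac{\partial}{\partial\ub}\Omega^{-1}$ and $\nabla(\log\Omega)=\frac12(\eta+\etab)$. Taking the coordinate angular derivative of the first and forming the difference gives
\[
\frac{\partial}{\partial\ub}\left(\frac{\partial}{\partial\th^A}(\Omega^{-1})'\right)=2\frac{\partial}{\partial\th^A}\omega'.
\]
Since $\Omega'=0$ on $\Hb_0$, integrating in $\ub$ from $0$ yields
\[
\left\|\frac{\partial}{\partial\th^A}(\Omega^{-1})'\right\|_{L^4(S_{u,\ub})}\leq Ca+C\int_0^{\ub}\left\|\frac{\partial}{\partial\th^A}\omega'\right\|_{L^4(S_{u,\ub'})}d\ub',
\]
where the $Ca$ comes from the initial data assumption on the angular derivatives of $\gamma'$ (equivalently on $\Omega'$, since $\Omega=1$ on $H_0$, so this term is in fact zero or controlled by $a$).

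Next I would convert the coordinate angular derivative $\frac{\partial}{\partial\th^A}\omega'$ into the covariant object plus connection error terms. Writing $\frac{\partial}{\partial\th^A}\omega'$ schematically as $(\nabla\omega)'$ modulo $\Gamma'\cdot\omega$-type terms, one invokes Proposition \ref{angularpL4} and Proposition \ref{Gammap} to bound the discrepancy by $C(a+\delta^{\frac12}\mathcal O'')\|\omega\|_{L^\infty(S)}\leq C(a+\delta^{\frac12}\mathcal O'')$, using the a priori control $\|\omega\|_{L^\infty}\leq C$ from Theorem \ref{timeofexistence}. The genuinely covariant piece $\|\nabla\omega'\|_{L^4(S_{u,\ub'})}$ is then controlled by the $\mathcal O''$ norm (which by definition includes $\|\nabla(\chih,\omega,\ldots)'\|_{L^2_{\ub}L^4(S)}$), and Cauchy--Schwarz in $\ub'$ gives $\int_0^{\ub}\|\nabla\omega'\|_{L^4(S_{u,\ub'})}d\ub'\leq C\delta^{\frac12}\mathcal O''$.

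From here I would also need the relation $\nabla(\log\Omega)=\frac12(\eta+\etab)$ to handle the term $\frac{\partial}{\partial\th^A}\Omega'$ itself (rather than $\frac{\partial}{\partial\th^A}(\Omega^{-1})'$): one expands $\frac{\partial}{\partial\th^A}\Omega'$ in terms of $(\eta+\etab)'$, $\Omega'$, and $\Gamma'$ as in the proof of Proposition \ref{dgammap}, so that
\[
\left|\frac{\partial}{\partial\th^A}\Omega'\right|\leq C|(\eta,\etab)'|+C|\Gamma'|+C|\Omega'|,
\]
where $|\Omega'|\leq C\delta^{\frac12}\mathcal O''$ by Proposition \ref{Omegap} and $\|\Gamma'\|_{L^4(S)}\leq Ca+C\delta^{\frac12}\mathcal O''$ by Proposition \ref{Gammap}. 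Taking $L^4(S_{u,\ub})$ norms and using $\|(\eta,\etab)'\|_{L^4(S)}\leq C\delta^{\frac12}\mathcal O''$ (from the $\mathcal O''$ norm via Sobolev embedding, cf. Proposition \ref{OpSobolev}) gives the claimed bound directly, with no Gronwall loop actually needed since $\Omega'$ and $\Gamma'$ are already estimated. I would present both routes but lead with the shorter one based on $\nabla(\log\Omega)=\frac12(\eta+\etab)$.

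The main obstacle is purely bookkeeping: making sure that passing between coordinate angular derivatives $\frac{\partial}{\partial\th^A}$ and covariant derivatives $\nabla$ for the \emph{difference} quantities does not secretly require a quantity one cannot control — in particular verifying, via Propositions \ref{connectionpL4}, \ref{angularpL4}, and \ref{Gammap}, that every connection-error term is absorbed into $a+\delta^{\frac12}\mathcal O''$ and that no derivative of $\alpha'$ (or $\beta'$ on $\Hb$) sneaks in through the $\eta,\etab$ terms. Since $\eta,\etab$ and $\omega$ are among the Ricci coefficients controlled in $\mathcal O'$, $\mathcal O''$ without reference to $\alpha$, this works, and the estimate closes with the stated constant.
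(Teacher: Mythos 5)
Your first route — differentiating the $\ub$-transport equation $\frac12\frac{\partial}{\partial\ub}(\Omega^{-1})'=\omega'$ in $\th^A$, integrating from $\Hb_0$ (where $\Omega'\equiv 0$), and closing via Cauchy--Schwarz in $\ub$ against the $\mathcal O''$ norm — is exactly the paper's argument, and it is correct. (Two small simplifications you could note: since $\omega$ is a scalar, $\frac{\partial}{\partial\th^A}\omega'=(\nabla_A\omega)'=\nabla_A^{(1)}(\omega')$ exactly, so there is no $\Gamma'$ correction at that step; and passing from $(\Omega^{-1})'$ to $\Omega'$ is a purely algebraic division using the uniform bounds on $\Omega$, $\Omega'$ and $\frac{\partial}{\partial\th}\Omega^{(2)}$, with no appeal to $\nabla\log\Omega=\frac12(\eta+\etab)$ required.)

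The second route, which you say you would lead with, has a genuine gap. You claim $\|(\eta,\etab)'\|_{L^4(S_{u,\ub})}\leq C\delta^{\frac12}\mathcal O''$ ``from the $\mathcal O''$ norm via Sobolev embedding, cf.\ Proposition \ref{OpSobolev}.'' But $\mathcal O''$ only controls $\|(\eta,\etab)'\|_{L^2_{\ub}L^\infty(S)}$, i.e.\ an $L^2$-in-$\ub$ average; it gives no pointwise-in-$\ub$ $L^4(S)$ control. The pointwise-in-$\ub$ bound $\sup_{u,\ub}\|\eta'\|_{L^4(S)}$ is controlled by $\mathcal O'$ (via the $\sup_{u,\ub}\|\nabla^i\eta'\|_{L^2(S)}$ part of $\mathcal O'$ and Proposition \ref{L4}), with \emph{no} factor $\delta^{\frac12}$. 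Moreover Proposition \ref{OpSobolev} states $\mathcal O''\leq C(\tilde{\mathcal O}'+\mathcal O')$, which is an upper bound for $\mathcal O''$ and hence goes in the wrong direction for your purpose. So the algebraic relation $\frac{\partial}{\partial\th^A}\log\Omega=\frac12(\eta+\etab)_A$ only yields $\|\frac{\partial}{\partial\th^A}\Omega'\|_{L^4(S)}\leq Ca+C\mathcal O'$, a strictly weaker statement than the proposition. The smallness factor $\delta^{\frac12}$ that the proposition asserts is earned precisely by the $\ub$-integration in the transport argument (plus Cauchy--Schwarz against $\mathcal O''$); the second route forgoes that integration and therefore cannot produce it. You should lead with the transport route and drop the second one.
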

\begin{proof}
Recall that
\[
 \omega=-\frac{1}{2}\nabla_4\log\Omega=\frac{1}{2}\Omega\nabla_4\Omega^{-1}=\frac{1}{2}\frac{\partial}{\partial \ub}\Omega^{-1}.
\]
Hence 
\[
 \frac{1}{2}\frac{\partial}{\partial\ub}(\Omega^{-1})'=\omega'.
\]
Thus, in order to estimate $\frac{\partial}{\partial \th^A}\Omega'$, we need to estimate $\frac{\partial}{\partial \th^A}\omega'$ in $L^1_{\ub}L^4(S)$. Since by the a priori estimates in the previous section we have $||\nabla\omega||_{L^\infty(S)}\leq C$, it suffices by Proposition \ref{angularpL4} to estimate 
$$\sum_{i\leq 1}\int_0^{\ub}||\nabla^i\omega'||_{L^4(S_{u,\ub'})}d\ub'$$
By the definition of the norm and Cauchy-Schwarz
$$\sum_{i\leq 1}\int_0^{\ub}||\nabla^i\omega'||_{L^4(S_{u,\ub'})}d\ub'\leq C\delta^{\frac 12}\mathcal O''.$$
The conclusion thus follows.
\end{proof}
The above Proposition allows us to estimate the second coordinate angular derivatives of $\gamma'$ in $L^2(S)$.
\begin{proposition}\label{ddgammap}
$$\sup_{u,\ub}||(\frac{\partial^2}{\partial\th^C \partial\th^D})\gamma_{AB}'||_{L^2(S_{u,\ub})}\leq Ca+C\delta^{\frac 12}(\mathcal O''+\tilde{\mathcal O}'),$$
where $L^2(S)$ is understood as the $L^2$ norms for scalar functions.
\end{proposition}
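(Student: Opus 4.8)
The plan is to mimic the proof of Proposition \ref{dgammap}, carried out one angular derivative higher and with the $L^4(S)$ norms replaced by $L^2(S)$ norms. First I would apply two coordinate angular derivatives $(\frac{\partial}{\partial\th})^2$ to the first variation equation $\frac{\partial}{\partial\ub}\gamma_{AB}=2\Omega\chi_{AB}$ in each of the two spacetimes, subtract them, and integrate in $\ub$ starting from $\Hb_0$, where $\gamma'$ and its angular derivatives are bounded by $a$ by hypothesis. This reduces matters to estimating $\int_0^{\ub}\|(\frac{\partial}{\partial\th})^2(\Omega\chi_{AB})'\|_{L^2(S_{u,\ub'})}\,d\ub'$, where by Proposition \ref{normscomparable} the norms may be taken with respect to either metric.

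Next I would rewrite the coordinate second derivative in terms of the covariant derivative $\nabla$ on $S_{u,\ub}$: schematically $(\frac{\partial}{\partial\th})^2(\Omega\chi)\sim\nabla^2(\Omega\chi)+\Gamma\nabla(\Omega\chi)+(\partial\Gamma+\Gamma^2)(\Omega\chi)$, with $\partial\Gamma$ a combination of $(\frac{\partial}{\partial\th})^2\gamma$ and $\Gamma^2$. Using $\chi=\chih+\frac12\trch\,\gamma$ and $\nabla\gamma=0$, the genuinely top-order contributions to the difference are $\Omega(\nabla^2\chih)'$ and $\frac12\Omega(\nabla^2\trch)'\gamma$, the self-referential terms $\trch\,\Omega\,((\frac{\partial}{\partial\th})^2\gamma)'$ and $\chi\,((\frac{\partial}{\partial\th})^2\Omega)'$, and quadratic lower-order products ($\Gamma'\nabla\chi$, $\Gamma(\nabla\chi)'$, $\Gamma^2\chi'$, $\Omega'(\nabla^2\chi,\ldots)$, $(\frac{\partial}{\partial\th}\gamma)'(\ldots)$ and their $\Omega^{-1}$ analogues). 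I would bound $\int_0^\ub\|(\nabla^2\chih)'\|_{L^2(S)}\leq\delta^{1/2}\|\nabla^2\chih'\|_{L^2(H_u)}\leq\delta^{1/2}\tilde{\mathcal O}'$ by Cauchy--Schwarz in $\ub$ — crucially exploiting that $\nabla^2\chih'$ is controlled only on the whole hypersurface $H_u$, not on individual spheres — and similarly $\int_0^\ub\|(\nabla^2\trch)'\|_{L^2(S)}\leq\delta\,\tilde{\mathcal O}'$; the lower-order products I would estimate by H\"older $L^4\times L^4\to L^2$ on $S$, using Propositions \ref{Gammap}, \ref{dgammap}, \ref{dOmegap}, \ref{connectionpL4}, \ref{angularpL4}, the a priori bounds $\mathcal O,\tilde{\mathcal O}_{3,2}\leq C'$ (so $\nabla\chi\in L^4(S)$, $\chi,\Omega\in L^\infty$), and the $\mathcal O''$ norm, which controls $(\nabla\chi)'$ in $L^2_\ub L^4(S)$. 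After $\ub$-integration each of these contributes $\leq C(a+\delta^{1/2}(\mathcal O''+\tilde{\mathcal O}'))$.

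The term $\chi\,((\frac{\partial}{\partial\th})^2\Omega)'$ forces me first to establish the auxiliary bound $\sup_{u,\ub}\|(\frac{\partial}{\partial\th})^2\Omega'\|_{L^2(S)}\leq Ca+C\delta^{1/2}(\mathcal O''+\tilde{\mathcal O}')$. For this I would run the identical argument on $\frac{\partial}{\partial\ub}(\Omega^{-1})'=2\omega'$: the initial term on $\Hb_0$ vanishes since $\Omega\equiv1$ there, $(\nabla^2\omega)'$ is controlled in $L^2(H_u)$ by $\tilde{\mathcal O}'$, and one passes from $(\frac{\partial}{\partial\th})^2(\Omega^{-1})'$ to $(\frac{\partial}{\partial\th})^2\Omega'$ via the upper and lower bounds of Proposition \ref{Omega} together with the already-established control of $\Omega',\frac{\partial}{\partial\th}\Omega',\gamma',\frac{\partial}{\partial\th}\gamma'$. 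Collecting everything yields an inequality of the form $\|(\frac{\partial}{\partial\th})^2\gamma'(u,\ub)\|_{L^2(S)}\leq Ca+C\delta^{1/2}(\mathcal O''+\tilde{\mathcal O}')+C\int_0^\ub\|\trch\|_{L^\infty}\|(\frac{\partial}{\partial\th})^2\gamma'\|_{L^2(S_{u,\ub'})}\,d\ub'$, and Gronwall's inequality together with $\int_0^\ub\|\trch\|_{L^\infty}\,d\ub'\leq\delta\Delta_0$ closes the estimate.

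The hard part will be the self-referential structure: converting coordinate to covariant derivatives unavoidably reproduces $(\frac{\partial}{\partial\th})^2\gamma'$ (both through $\partial\Gamma$ and through $\chi=\chih+\frac12\trch\gamma$), so one cannot dispense with a Gronwall argument in $\ub$ closed by the smallness of $\delta$; and one must carefully respect that the top-order difference quantities $\nabla^2\chih'$ and $\nabla^2\omega'$ live only on $H_u$ and not on individual spheres, which is exactly where the gain in $\delta$ originates. The auxiliary estimate for $(\frac{\partial}{\partial\th})^2\Omega'$ has to be inserted as a preliminary step before the main computation can be completed.
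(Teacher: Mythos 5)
Your scheme is sound and its backbone coincides with the paper's: apply two coordinate angular derivatives to $\frac{\partial}{\partial\ub}\gamma_{AB}=2\Omega\chi_{AB}$, integrate from $\Hb_0$, convert coordinate to covariant derivatives so that the top-order pieces land in the $\tilde{\mathcal O}'$ and $\mathcal O''$ norms (Cauchy--Schwarz in $\ub$ being the source of the $\delta^{1/2}$ gain for $\nabla^2\chih'$ controlled only on $H_u$), estimate lower-order products by H\"older on $S$, and close the remaining self-referential contributions to $(\frac{\partial}{\partial\th})^2\gamma'$ by Gronwall.

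The one genuine point of departure is your treatment of $(\frac{\partial}{\partial\th})^2\Omega'$. You propose to integrate the $\ub$-transport equation $\frac{\partial}{\partial\ub}(\Omega^{-1})'=2\omega'$ after two angular derivatives, using that $\omega$ is a scalar so $(\frac{\partial}{\partial\th})^2\omega'$ reduces to $\nabla^2\omega'+\Gamma\nabla\omega'$ with only the \emph{background} connection $\Gamma$, and hence no $(\frac{\partial}{\partial\th})^2\gamma'$ feedback: this makes your auxiliary bound $\sup_{u,\ub}\|(\frac{\partial}{\partial\th})^2\Omega'\|_{L^2(S)}\leq Ca+C\delta^{1/2}(\mathcal O''+\tilde{\mathcal O}')$ a genuinely self-contained preliminary lemma (note that $\nabla^2\omega'$ lands in $\tilde{\mathcal O}'$ only on $H_u$, which is precisely what the $\ub$-integration requires, and the vanishing initial data on $\Hb_0$ is exactly what you say). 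The paper instead invokes the spatial relation $\nabla\log\Omega=\frac12(\eta+\etab)$, which turns $(\frac{\partial}{\partial\th})^2\Omega'$ into $\nabla(\eta,\etab)'$ plus $\frac{\partial}{\partial\th}\Gamma'\cdot\Omega$ plus lower order, at the cost that $\frac{\partial}{\partial\th}\Gamma'$ reintroduces $(\frac{\partial}{\partial\th})^2\gamma'$, so the $\Omega'$ and $\gamma'$ estimates must be closed jointly by a single Gronwall. Your route buys a cleaner logical structure (an independent lemma rather than a coupled system); the paper's route avoids a separate transport argument for $\Omega'$ by reusing the already-available $\eta,\etab$ information. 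Both are correct, and in either version the Gronwall at the end remains unavoidable because expanding $\chi_{AB}=\chih_{AB}+\frac12\trch\gamma_{AB}$ reproduces $(\frac{\partial}{\partial\th})^2\gamma_{AB}'$, exactly as you note. One small imprecision in your write-up: what $\tilde{\mathcal O}'$ controls is $\|\nabla^2\omega'\|_{L^2(H_u)}$, not $\|(\nabla^2\omega)'\|_{L^2(H_u)}$; they differ by the terms of Proposition \ref{angularpL2}, which are of lower order and harmless, but the distinction should be stated.
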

\begin{proof}
By the equation 
$$\frac{\partial}{\partial \ub}\gamma_{AB}=2\Omega\chi_{AB},$$
and the a priori estimates in the previous section, we have
\begin{equation}\label{gamma2dd}
\begin{split}
&||(\frac{\partial^2}{\partial\th^C\partial\th^D})\gamma_{AB}'(u,\ub)||_{L^2(S_{u,\ub})}\\
\leq& Ca+C\int_0^{\ub}\left(\|(\frac{\partial^2}{\partial\th^C\partial\th^D})\Omega'||_{L^2(S_{u,\ub'})}+\|(\frac{\partial^2}{\partial\th^C\partial\th^D})\chi'||_{L^2(S_{u,\ub'})}\right) d\ub'\\
&+ C\int_0^{\ub}\left(\|(\frac{\partial}{\partial\th^C})\Omega'||_{L^2(S_{u,\ub'})}+\|(\frac{\partial}{\partial\th^C})\chi'||_{L^2(S_{u,\ub'})}\right) d\ub'.
\end{split}
\end{equation}
The last term has already been estimated in the proof of Proposition \ref{dgammap}:
\begin{equation*}
\begin{split}
&\int_0^{\ub}\left(\|(\frac{\partial}{\partial\th^C})\Omega'||_{L^2(S_{u,\ub'})}+\|(\frac{\partial}{\partial\th^C})\chi'||_{L^2(S_{u,\ub'})}\right) d\ub'\\
\leq& Ca+C\delta^{\frac 12}\mathcal O''.
\end{split}
\end{equation*}
Since
$$\nabla(\log \Omega)=\frac 12 (\eta+\etab),$$
we have
\begin{equation*}
\begin{split}
&||(\frac{\partial^2}{\partial\th^C\partial\th^D})\Omega'||_{L^2(S_{u,\ub'})}\\
\leq& C||\frac{\partial}{\partial\th}(\eta,\etab)'||_{L^2(S_{u,\ub'})}+C||\frac{\partial}{\partial\th}\Gamma'||_{L^2(S_{u,\ub'})}||\Omega||_{L^\infty(S_{u,\ub'})}+C||\frac{\partial}{\partial\th}\Gamma||_{L^2(S_{u,\ub'})}||\Omega'|||_{L^\infty(S_{u,\ub'})}\\
& +C||\Gamma'||_{L^4(S_{u,\ub'})}||\frac{\partial}{\partial\th}\Omega||_{L^4(S_{u,\ub'})}+C||\Gamma||_{L^\infty(S_{u,\ub'})}||\frac{\partial}{\partial\th}\Omega'||_{L^2(S_{u,\ub'})}\\
\leq &C\sum_{i\leq 1}||\nabla^i(\eta,\etab)'||_{L^2(S_{u,\ub'})}+C\sum_{i\leq 1}||(\frac{\partial}{\partial\th})^i\Gamma'||_{L^2(S_{u,\ub'})}+C\sum_{i\leq 1}||(\frac{\partial}{\partial\th})^i\Omega'||_{L^2(S_{u,\ub'})}.
\end{split}
\end{equation*}
Using Propositions \ref{dOmegap} and the definition of the norms $\mathcal O''$ and $\tilde{\mathcal O}'$, we have
\begin{equation*}
\begin{split}
&\sum_{i\leq 1}\int_0^{\ub}(||\nabla^i(\eta,\etab)'||_{L^2(S_{u,\ub'})}+||(\frac{\partial}{\partial\th})^i\Gamma'||_{L^2(S_{u,\ub'})}+||(\frac{\partial}{\partial\th})^i\Omega'||_{L^2(S_{u,\ub'})})d\ub'\\
\leq &C\delta^{\frac 12}(\mathcal O''+\tilde{\mathcal O}'_u)+C\int_0^{\ub}||(\frac{\partial^2}{\partial\th^E\partial\th^F})\gamma_{AB}'(u,\ub)||_{L^2(S_{u,\ub'})}d\ub').
\end{split}
\end{equation*}
Moreover,
$$\int ||(\frac{\partial^2}{\partial\th^C\partial\th^D})\chi'||_{L^2(S_{u,\ub'})} d\ub'\leq C(\mathcal O''+\tilde{\mathcal O}').$$
Thus by (\ref{gamma2dd}), we have
\begin{equation*}
\begin{split}
&||(\frac{\partial^2}{\partial\th^C\partial\th^D})\gamma_{AB}'(u,\ub)||_{L^2(S_{u,\ub})}\\
\leq&Ca+C\delta^{\frac 12}(\mathcal O''+\tilde{\mathcal O}')+C\int_0^{\ub}||(\frac{\partial^2}{\partial\th^E\partial\th^F})\gamma_{AB}'(u,\ub)||_{L^2(S_{u,\ub'})}d\ub').
\end{split}
\end{equation*}
By Gronwall's inequality, we thus have
$$||(\frac{\partial^2}{\partial\th^C \partial\th^D})\gamma_{AB}'||_{L^2(S_{u,\ub})}\leq Ca+C\delta^{\frac 12}(\mathcal O''+\tilde{\mathcal O}').$$
\end{proof}
The proof of the previous Proposition also shows that
\begin{proposition}\label{ddOmegap}
$$\sup_{u,\ub}||(\frac{\partial^2}{\partial\th^C \partial\th^D})\Omega'||_{L^2(S_{u,\ub})}\leq Ca+C\delta^{\frac 12}(\mathcal O''+\tilde{\mathcal O}'),$$
\end{proposition}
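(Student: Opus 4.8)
The statement to prove is Proposition \ref{ddOmegap}, which asserts the $L^2(S_{u,\ub})$ bound $\|(\frac{\partial^2}{\partial\th^C\partial\th^D})\Omega'\|_{L^2(S_{u,\ub})}\leq Ca+C\delta^{\frac 12}(\mathcal O''+\tilde{\mathcal O}')$. Since the author remarks that ``the proof of the previous Proposition also shows that'' this holds, the plan is simply to extract the intermediate estimate already established inside the proof of Proposition \ref{ddgammap}.

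\begin{proof}
Inside the proof of Proposition \ref{ddgammap} we derived the pointwise-in-$\ub$ inequality
$$\|(\tfrac{\partial^2}{\partial\th^C\partial\th^D})\Omega'\|_{L^2(S_{u,\ub'})}\leq C\sum_{i\leq 1}\|\nabla^i(\eta,\etab)'\|_{L^2(S_{u,\ub'})}+C\sum_{i\leq 1}\|(\tfrac{\partial}{\partial\th})^i\Gamma'\|_{L^2(S_{u,\ub'})}+C\sum_{i\leq 1}\|(\tfrac{\partial}{\partial\th})^i\Omega'\|_{L^2(S_{u,\ub'})},$$
using $\nabla(\log\Omega)=\frac12(\eta+\etab)$ together with the a priori bounds of Section \ref{estimates} for $\Omega$, $\Gamma$ and $\nabla\Omega$. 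Now we use the already-established estimate $\|(\frac{\partial^2}{\partial\th^C\partial\th^D})\gamma'_{AB}\|_{L^2(S_{u,\ub})}\leq Ca+C\delta^{\frac 12}(\mathcal O''+\tilde{\mathcal O}')$ from Proposition \ref{ddgammap}; since $\Gamma'$ is a linear combination of $\gamma'$ and its first coordinate angular derivatives and $\frac{\partial}{\partial\th}\Gamma'$ is a linear combination of $\gamma'$ up to two coordinate angular derivatives (the metric components and their inverses being under control by Propositions \ref{gamma} and \ref{gammap}), we obtain $\sum_{i\leq 1}\|(\frac{\partial}{\partial\th})^i\Gamma'\|_{L^2(S_{u,\ub})}\leq Ca+C\delta^{\frac12}(\mathcal O''+\tilde{\mathcal O}')$. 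The term $\sum_{i\leq 1}\|(\frac{\partial}{\partial\th})^i\Omega'\|_{L^2(S_{u,\ub})}$ is controlled by $Ca+C\delta^{\frac12}\mathcal O''$ via Propositions \ref{Omegap} and \ref{dOmegap}. Finally $\sum_{i\leq 1}\|\nabla^i(\eta,\etab)'\|_{L^2(S_{u,\ub})}\leq C\mathcal O'\leq C(\mathcal O''+\tilde{\mathcal O}')$ directly from the definition of the norm $\mathcal O'$ (and Proposition \ref{angularpL4} to pass between $\nabla^{(1)}(\eta',\etab')$ and $(\nabla(\eta,\etab))'$). Substituting these three bounds into the displayed inequality and taking the supremum over $u,\ub$ yields the claim.
\end{proof}

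There is essentially no obstacle here: every ingredient has been assembled in the preceding propositions, and the main (and only) point is that the bound for the second coordinate angular derivative of $\Omega'$ appears as an auxiliary quantity controlled en route to bounding the second coordinate angular derivative of $\gamma'$, so one reads it off from that argument. If one wanted to be fully self-contained one would repeat the two-line manipulation of $\nabla(\log\Omega)=\frac12(\eta+\etab)$ and apply Gronwall, but this is already done verbatim in Proposition \ref{ddgammap}.
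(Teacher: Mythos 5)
Your proof has a genuine gap, located precisely in the treatment of the $\eta',\etab'$ term. The paper's proof of Proposition \ref{ddgammap} indeed derives, via $\nabla(\log\Omega)=\frac12(\eta+\etab)$, the pointwise-in-$\ub$ inequality
$$\|(\tfrac{\partial^2}{\partial\th^C\partial\th^D})\Omega'\|_{L^2(S_{u,\ub})}\leq C\sum_{i\leq 1}\|\nabla^i(\eta,\etab)'\|_{L^2(S_{u,\ub})}+C\sum_{i\leq 1}\|(\tfrac{\partial}{\partial\th})^i\Gamma'\|_{L^2(S_{u,\ub})}+C\sum_{i\leq 1}\|(\tfrac{\partial}{\partial\th})^i\Omega'\|_{L^2(S_{u,\ub})},$$
but then \emph{integrates} this in $\ub$ before invoking the norms. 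The $\ub$-integral, together with Cauchy--Schwarz, is exactly what converts $\|\nabla^i(\eta',\etab')\|_{L^2(S)}$ into the $L^2_{\ub}$-averaged quantities controlled by $\mathcal O''$ and $\tilde{\mathcal O}'$, with a gain of $\delta^{1/2}$. Your proposal tries to read off a pointwise-in-$\ub$ bound instead, and is forced to claim $\mathcal O'\leq C(\mathcal O''+\tilde{\mathcal O}')$. That inequality is not established in the paper and is in fact false in general: $\mathcal O'$ is a $\sup_{u,\ub}$ of $\|\nabla^i(\eta',\etab')\|_{L^2(S)}$, whereas $\mathcal O''$ and $\tilde{\mathcal O}'$ only control $L^2_{\ub}$ (or $L^2(H_u),L^2(\Hb_{\ub})$) averages of one extra derivative, so a $\sup_{\ub}$ cannot be recovered from them. (Proposition \ref{OpSobolev} gives $\mathcal O''\leq C(\tilde{\mathcal O}'+\mathcal O')$, i.e.\ the reverse direction.) Moreover, even granting your claim, your bound for that term is $C(\mathcal O''+\tilde{\mathcal O}')$ with no $\delta^{1/2}$, which would dominate and destroy the smallness factor needed in the subsequent Gronwall-type absorption.

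The fix is to not rely on the constraint $\nabla\log\Omega=\frac12(\eta+\etab)$ for the pointwise bound, but instead to go back to the $\ub$-transport equation $\frac12\frac{\partial}{\partial\ub}(\Omega^{-1})'=\omega'$ (with $(\Omega^{-1})'=0$ on $\Hb_0$), take two coordinate angular derivatives, and integrate in $\ub$. Since $\omega'$ is a scalar, $\frac{\partial^2}{\partial\th^C\partial\th^D}\omega'=\nabla_C\nabla_D\omega'+\Gamma_{CD}^E\nabla_E\omega'$, so
$$\|(\tfrac{\partial^2}{\partial\th^C\partial\th^D})(\Omega^{-1})'\|_{L^2(S_{u,\ub})}\leq 2\int_0^{\ub}\bigl(\|\nabla^2\omega'\|_{L^2(S_{u,\ub'})}+\|\Gamma\|_{L^\infty}\|\nabla\omega'\|_{L^2(S_{u,\ub'})}\bigr)d\ub',$$
and Cauchy--Schwarz in $\ub$ gives $\int_0^{\ub}\|\nabla^2\omega'\|_{L^2(S)}d\ub'\leq C\delta^{1/2}\|\nabla^2\omega'\|_{L^2(H_u)}\leq C\delta^{1/2}\tilde{\mathcal O}'$ and $\int_0^{\ub}\|\nabla\omega'\|_{L^2(S)}d\ub'\leq C\delta^{1/2}\|\nabla\omega'\|_{L^2_{\ub}L^4(S)}\leq C\delta^{1/2}\mathcal O''$. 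Passing from $(\Omega^{-1})'$ to $\Omega'$ via $\Omega'=-\Omega^{(1)}\Omega^{(2)}(\Omega^{-1})'$ and using Propositions \ref{Omegap} and \ref{dOmegap} for the lower order terms then yields the claimed bound. This is what the paper's laconic ``the proof of the previous Proposition also shows'' is pointing to: the $\ub$-integral is essential, and the Ricci coefficient appearing is $\omega'$ (entering through $\tilde{\mathcal O}'$ after integration), not $\eta',\etab'$ at fixed $\ub$.
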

Proposition \ref{ddgammap} implies the following estimates of $K'$, the difference of the Gauss curvatures:
\begin{proposition}\label{Kp}
$K'$ satisfies the following bounds:
$$\sup_{u,\ub}||K'||_{L^2(S_{u,\ub})}\leq Ca+C\delta^{\frac 12}(\mathcal O''+\tilde{\mathcal O}').$$
\end{proposition}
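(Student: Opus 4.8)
The plan is to derive the bound on $K'$ purely from the Gauss equation and the estimates on $\gamma'$ and $\gamma$ already established. Recall that the Gauss curvature $K$ of $(S_{u,\ub},\gamma)$ is a universal expression in $\gamma_{AB}$, its inverse $(\gamma^{-1})^{AB}$, and the coordinate derivatives $\frac{\partial}{\partial\th}\gamma$, $\frac{\partial^2}{\partial\th^2}\gamma$ (quadratic in first derivatives, linear in second derivatives, with coefficients built algebraically from $\gamma$ and $\gamma^{-1}$). Writing $K' = K^{(1)} - K^{(2)}$, I would expand this difference term by term: each term is a difference of products of factors drawn from $\{\gamma, \gamma^{-1}, \frac{\partial}{\partial\th}\gamma, \frac{\partial^2}{\partial\th^2}\gamma\}$, so by telescoping (replacing one factor at a time) $K'$ becomes a sum of products in which exactly one factor is a primed (difference) quantity and the remaining factors are evaluated on spacetime $(1)$ or $(2)$.

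The key estimates to invoke are: Proposition \ref{gamma} (uniform upper and lower bounds for $\gamma_{AB}$, $(\gamma^{-1})^{AB}$, $\det\gamma$ on both spacetimes), the a priori control of $\frac{\partial}{\partial\th}\gamma$ and $\frac{\partial^2}{\partial\th^2}\gamma$ in $L^\infty$ and $L^4$ respectively (which follow from the bounds on the Ricci coefficients in the previous section together with the transport equation $\frac{\partial}{\partial\ub}\gamma_{AB}=2\Omega\chi_{AB}$ — note $\frac{\partial^2}{\partial\th^2}\gamma$ is controlled in $L^2$, hence in $L^4$ after Sobolev on the surface if needed, but for the product estimate it suffices to pair it against an $L^\infty$ or $L^4$ factor), Proposition \ref{gammap} ($|\gamma'|, |(\gamma^{-1})'|\leq Ca + C\delta^{1/2}\mathcal O''$ pointwise), Proposition \ref{dgammap} ($\|\frac{\partial}{\partial\th}\gamma'\|_{L^4(S)}\leq Ca+C\delta^{1/2}\mathcal O''$), and Proposition \ref{ddgammap} ($\|\frac{\partial^2}{\partial\th^2}\gamma'\|_{L^2(S)}\leq Ca+C\delta^{1/2}(\mathcal O''+\tilde{\mathcal O}')$). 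For each term in the telescoped sum of $K'$, I hold the primed factor in the lowest available norm (pointwise, $L^4$, or $L^2$) and every unprimed factor in $L^\infty$ — which is legitimate since the unprimed pieces are $\gamma$, $\gamma^{-1}$ (bounded by Proposition \ref{gamma}), $\frac{\partial}{\partial\th}\gamma$ (bounded in $L^\infty$), or at most one factor of $\frac{\partial^2}{\partial\th^2}\gamma$; the one term where the primed factor is $\frac{\partial^2}{\partial\th^2}\gamma'$ contributes the $\tilde{\mathcal O}'$ dependence, all other terms contribute only $\mathcal O''$. Collecting, $\|K'\|_{L^2(S_{u,\ub})}\leq Ca + C\delta^{1/2}(\mathcal O''+\tilde{\mathcal O}')$.

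A subtlety worth checking is the term quadratic in $\frac{\partial}{\partial\th}\gamma$: its contribution to $K'$ is schematically $(\frac{\partial}{\partial\th}\gamma)'\cdot\frac{\partial}{\partial\th}\gamma$, and by Proposition \ref{dgammap} the primed factor is in $L^4$ while the other is in $L^4$ (or $L^\infty$), so the product is in $L^2$ with the right bound; no loss occurs. Similarly the $\frac{\partial^2}{\partial\th^2}\gamma$ term paired against $\gamma^{-1}$ factors: the primed second-derivative piece lives in $L^2$ against bounded coefficients, giving an $L^2$ bound controlled by $Ca + C\delta^{1/2}(\mathcal O''+\tilde{\mathcal O}')$.

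I do not expect a genuine obstacle here — this is a routine propagation of the metric-difference estimates through an algebraic-differential identity. The only mild care needed is bookkeeping: making sure each unprimed factor is placed in a norm for which an a priori bound is actually available (this is where one uses that the surface $L^\infty$ bounds on $\frac{\partial}{\partial\th}\gamma$ hold, coming from Sobolev embedding Proposition \ref{L4}–\ref{Linfty} applied to the already-established $L^2$ control of angular derivatives of $\gamma$), and making sure that at most one factor in each product term carries the worst norm. Once the term-by-term count is done, the claimed inequality follows immediately.
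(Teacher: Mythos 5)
Your proof is correct and takes essentially the same approach as the paper: the paper's proof is a single sentence citing Propositions \ref{dgammap} and \ref{ddgammap}, and your argument is the natural expansion of that sentence — telescoping the coordinate expression for $K$, placing the single primed factor in the strongest available $L^p(S)$ norm (pointwise, $L^4$, or $L^2$ from Propositions \ref{gammap}, \ref{dgammap}, \ref{ddgammap}), and pairing the unprimed factors in $L^\infty$ or $L^4$ using the uniform a priori control of $\gamma$, $\gamma^{-1}$, and their angular derivatives.
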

\begin{proof}
This follows from the estimates of the derivatives of $\gamma'$ in Propositions \ref{dgammap} and \ref{ddgammap}.
\end{proof}
Moreover, Proposition \ref{ddgammap} allows us to conclude that the covariant second derivative with respect to connection $(1)$ and connection $(2)$ are comparable:
\begin{proposition}\label{connectionpL2}
Let $p\leq 2$. Suppose $\phi$ is a tensor. 
Then
$$||(\nabla^{(1)})^2\phi-(\nabla^{(2)})^2\phi||_{L^p(S_{u,\ub})} \leq C(a+\delta^{\frac 12}\mathcal O'')\sum_{i\leq 1}||\nabla^i\phi||_{L^4(S_{u,\ub})}.$$
\end{proposition}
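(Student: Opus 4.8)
The plan is to proceed exactly as in the proof of Proposition \ref{connectionpL4}, but one order of differentiation higher, the new feature being that a second covariant derivative produces one derivative of the Christoffel difference, for which only an $L^2(S_{u,\ub})$ bound is available. First I would note that $\Gamma':=\Gamma^{(1)}-\Gamma^{(2)}$, being the difference of two connections on $S_{u,\ub}$, is a genuine $S$-tensor, so $\nabla^{(1)}\Gamma'$ is well defined, and that for any $S$-tensor $\psi$ one has $\nabla^{(1)}\psi-\nabla^{(2)}\psi=\Gamma'\cdot\psi$ (an algebraic contraction, as in the proof of Proposition \ref{connectionpL4}). Applying this twice and using the Leibniz rule for $\nabla^{(1)}$,
\begin{equation*}
(\nabla^{(1)})^2\phi-(\nabla^{(2)})^2\phi=\nabla^{(1)}\bigl(\nabla^{(1)}\phi-\nabla^{(2)}\phi\bigr)+\bigl(\nabla^{(1)}-\nabla^{(2)}\bigr)\nabla^{(2)}\phi=(\nabla^{(1)}\Gamma')\cdot\phi+\Gamma'\cdot\nabla^{(1)}\phi+\Gamma'\cdot\nabla^{(2)}\phi,
\end{equation*}
so it remains to bound these three contributions in $L^p(S_{u,\ub})$ for $p\le 2$. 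Since $\phi$ is a single fixed tensor (the same coordinate components differentiated by two different connections), all terms involving $\partial_{\th}^2\phi$ cancel in the difference, and no derivative of $\phi$ beyond first order appears — this is the point to verify carefully in the write-up.

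For the two terms linear in $\Gamma'$, I would estimate $\|\Gamma'\|_{L^4(S_{u,\ub})}\le Ca+C\delta^{\frac12}\mathcal O''$ by Proposition \ref{Gammap}, pair it by H\"older (which needs $p\le 2$) with $\nabla^{(1)}\phi$, resp.\ $\nabla^{(2)}\phi$, in $L^4(S_{u,\ub})$, rewrite $\nabla^{(2)}\phi=\nabla^{(1)}\phi-\Gamma'\cdot\phi$, and use $\|\phi\|_{L^\infty(S_{u,\ub})}\le C\sum_{i\le 1}\|\nabla^i\phi\|_{L^4(S_{u,\ub})}$ from the Sobolev embeddings of Propositions \ref{L4} and \ref{Linfty} together with the smallness of $\|\Gamma'\|_{L^4}$, so that $\|\nabla^{(2)}\phi\|_{L^4(S_{u,\ub})}\le C\sum_{i\le 1}\|\nabla^i\phi\|_{L^4(S_{u,\ub})}$. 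Both terms are then controlled by $C(a+\delta^{\frac12}\mathcal O'')\sum_{i\le 1}\|\nabla^i\phi\|_{L^4(S_{u,\ub})}$.

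The remaining term $(\nabla^{(1)}\Gamma')\cdot\phi$ is the main obstacle. Expressing $\Gamma'$ in the $(\th^1,\th^2)$-coordinates as a linear combination — with coefficients built from $\gamma$, $\gamma^{-1}$ and $\partial_{\th}\gamma$, all controlled by the a priori estimates — of $\gamma'$ and $\partial_{\th}\gamma'$, one sees that $\nabla^{(1)}\Gamma'$ in coordinates involves $\partial_{\th}^2\gamma'$, which is only available in $L^2(S_{u,\ub})$, controlled via Proposition \ref{ddgammap} (with Propositions \ref{dgammap} and \ref{ddOmegap} handling the lower-order pieces); this is precisely why the estimate is stated only for $p\le 2$ and why $\phi$ must be placed in $L^\infty(S_{u,\ub})$, for which we again invoke $\|\phi\|_{L^\infty(S_{u,\ub})}\le C\sum_{i\le 1}\|\nabla^i\phi\|_{L^4(S_{u,\ub})}$. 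Collecting the three contributions yields the asserted bound. Beyond this, the remaining work is routine H\"older-and-Sobolev bookkeeping of which factor sits in which norm, together with the already-noted cancellation of $\partial_{\th}^2\phi$.
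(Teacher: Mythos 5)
Your proof is correct and follows essentially the same route as the paper's, whose own proof is the single line ``By Propositions \ref{Gammap}, \ref{ddgammap} and \ref{Kp}'': the decomposition $(\nabla^{(1)})^2\phi-(\nabla^{(2)})^2\phi=(\nabla^{(1)}\Gamma')\cdot\phi+\Gamma'\cdot\nabla^{(1)}\phi+\Gamma'\cdot\nabla^{(2)}\phi$, followed by H\"older pairing $\nabla^{(1)}\Gamma'\in L^2$ against $\phi\in L^\infty$ and $\Gamma'\in L^4$ against $\nabla\phi\in L^4$, together with $||\phi||_{L^\infty}\leq C\sum_{i\leq 1}||\nabla^i\phi||_{L^4}$ from the Sobolev embeddings, is exactly what that citation encodes. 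The only small blemish is the stray appeal to Proposition \ref{ddOmegap}, which bounds $\Omega'$ and plays no role in $\Gamma'$; the relevant ingredients for $\nabla^{(1)}\Gamma'$ are Propositions \ref{dgammap} and \ref{ddgammap} (with Proposition \ref{Kp} a corollary thereof).
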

\begin{proof}
By Propositions \ref{Gammap}, \ref{ddgammap} and \ref{Kp}.
\end{proof}
This implies that taking the covariant second derivatives of the difference is comparable to taking the difference of the covariant second derivatives:
\begin{proposition}\label{angularpL2}
Let $p\leq 2$. Suppose $\phi^{(1)}$ and $\phi^{(2)}$ are tensors defined on spacetimes $(1)$ and $(2)$ respectively.
Then
$$||(\nabla^{(1)})^2(\phi')-(\nabla^2\phi)'||_{L^p(S_{u,\ub})}\leq C(a+\delta^{\frac 12}\mathcal O'')\sum_{i\leq 1}||\nabla^i\phi||_{L^4(S_{u,\ub})}.$$
\end{proposition}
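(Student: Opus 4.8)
The plan is to derive an identity for $(\nabla^{(1)})^2(\phi')-(\nabla^2\phi)'$ purely in terms of the difference of connections, exactly as was done one derivative lower in Proposition \ref{angularpL4}, and then estimate each term using the already-established pointwise and $L^4(S)$ bounds on $\gamma'$, its derivatives, and the Christoffel difference $\Gamma'$. First I would write, as in the proof of Proposition \ref{angularpL4},
$$(\nabla^{(1)})^2(\phi')-(\nabla^2\phi)'=-(\nabla^{(1)})^2\phi^{(2)}+(\nabla^{(2)})^2\phi^{(2)},$$
which follows from cancellation of the $\phi^{(1)}$ terms after writing $\phi'=\phi^{(1)}-\phi^{(2)}$ and using that $(\nabla^2\phi)'=(\nabla^{(1)})^2\phi^{(1)}-(\nabla^{(2)})^2\phi^{(2)}$. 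This reduces the statement to Proposition \ref{connectionpL2}, applied to the tensor $\phi^{(2)}$.

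The one subtlety is that Proposition \ref{connectionpL2} is stated with norms and covariant derivatives with respect to spacetime $(1)$, whereas the right-hand side now naturally involves $\phi^{(2)}$ and its $\nabla^{(1)}$-derivatives. To handle this I would note that $\sum_{i\leq 1}\|\nabla^{(1),i}\phi^{(2)}\|_{L^4(S_{u,\ub})}$ differs from $\sum_{i\leq 1}\|\nabla^{i}\phi\|_{L^4(S_{u,\ub})}$ (the latter referring to $\phi^{(1)}$, $\nabla^{(1)}$) only by terms controlled by $a+C\delta^{\frac12}\mathcal O''$ times a fixed a priori quantity: indeed $\phi^{(2)}=\phi^{(1)}-\phi'$, $\|\phi'\|$ and $\|\nabla\phi'\|$ are absorbed into $\mathcal O''$-type norms or are of size $O(a+\delta^{1/2}\mathcal O'')$, and switching $\nabla^{(1)}$ for $\nabla^{(2)}$ on $\phi^{(2)}$ costs $\|\Gamma'\|_{L^4}$ by Proposition \ref{connectionpL4}. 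Since these corrections multiply into the already-small factor $(a+\delta^{\frac12}\mathcal O'')$, they produce only quadratically small contributions that are harmless for the final bound (here one uses, as throughout the section, that $\mathcal O''$, $\tilde{\mathcal O}'$ and the a priori norms are all $O(1)$, so $(a+\delta^{1/2}\mathcal O'')^2\leq C(a+\delta^{1/2}\mathcal O'')$ after choosing $\delta$ and $a$ small, or one simply keeps the statement with $\sum_{i\leq1}\|\nabla^i\phi\|_{L^4(S_{u,\ub})}$ on the right as a placeholder for the a priori-controlled quantity).

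Concretely, the key steps in order are: (1) establish the algebraic identity above; (2) expand $(\nabla^{(1)})^2\phi^{(2)}-(\nabla^{(2)})^2\phi^{(2)}$ in coordinates, producing terms of the schematic form $\Gamma'\cdot\partial\phi^{(2)}$, $(\partial\Gamma')\cdot\phi^{(2)}$, and $\Gamma'\cdot(\Gamma^{(1)}+\Gamma^{(2)})\cdot\phi^{(2)}$; (3) bound $\|\Gamma'\|_{L^4(S)}$ by Proposition \ref{Gammap}, bound the first coordinate derivative of $\Gamma'$ in $L^2(S)$ using Propositions \ref{dgammap} and \ref{ddgammap} (this is where $\tilde{\mathcal O}'$ enters, but it is already $O(1)$ so it only contributes to the constant), and use Proposition \ref{gamma} for the uniform bounds on $\Gamma^{(1)},\Gamma^{(2)}$; (4) invoke the Sobolev embedding Propositions \ref{L4} and \ref{Linfty} to pass between $L^\infty(S)$, $L^4(S)$ and $\sum_{i\leq1}\|\nabla^i\phi\|_{L^4(S)}$ where needed; (5) collect terms. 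The main obstacle — though a mild one — is bookkeeping the $\nabla^{(1)}$ versus $\nabla^{(2)}$ discrepancy and confirming that the term $(\partial\Gamma')\cdot\phi^{(2)}$, which requires an $L^2(S)$ bound on one coordinate derivative of $\Gamma'$ and hence two coordinate derivatives of $\gamma'$, is indeed available at the regularity level of Proposition \ref{ddgammap}; since that proposition precisely supplies $\|\partial^2\gamma'\|_{L^2(S)}\leq Ca+C\delta^{1/2}(\mathcal O''+\tilde{\mathcal O}')$, this closes, and nothing deeper than the already-proven difference-metric estimates is needed.
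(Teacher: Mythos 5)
Your proposal is correct and follows essentially the same route as the paper: the algebraic identity $(\nabla^{(1)})^2(\phi')-(\nabla^2\phi)'=-(\nabla^{(1)})^2\phi^{(2)}+(\nabla^{(2)})^2\phi^{(2)}$ followed by an application of Proposition \ref{connectionpL2}, exactly as Proposition \ref{angularpL4} follows from Proposition \ref{connectionpL4}. Your additional care about whether the right-hand side should feature $\phi^{(1)}$ or $\phi^{(2)}$ is a reasonable elaboration of a point the paper glosses over, and is handled correctly.
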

\begin{proof}
The proof follows as that of Proposition \ref{angularpL4}, using Proposition \ref{connectionpL2} instead of Proposition \ref{connectionpL4}.
\end{proof}
We now prove the estimates for $b'$:
\begin{proposition}\label{bp}
\[
\sup_{u,\ub}||(b^A)'||_{L^\infty(S_{u,\ub})}, \sup_{u,\ub}||\frac{\partial}{\partial\th^B}(b^A)'||_{L^4(S_{u,\ub})},\sup_{u,\ub}||\frac{\partial^2}{\partial \th^C\partial\th^B}(b^A)'||_{L^2(S_{u,\ub})}\leq C(\mathcal O''+\tilde{\mathcal O}').
\]
\end{proposition}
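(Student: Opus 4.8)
The plan is to derive a transport equation in $\ub$ for $(b^A)'$ and integrate it along the integral curves of $L$, in complete analogy with the a priori estimate for $b^A$ in Proposition \ref{b}, but now tracking the difference. Recall from the proof of Proposition \ref{b} that in each spacetime $(i)$ the shift satisfies
\[
\frac{\partial (b^A)^{(i)}}{\partial \ub}=-4(\Omega^{(i)})^2(\zeta^{(i)})^A,
\]
where $(\zeta^{(i)})^A=(\gamma^{(i)})^{AB}\zeta^{(i)}_B$. Subtracting, one obtains a transport equation of the schematic form
\[
\frac{\partial (b^A)'}{\partial\ub}=-4\left(\Omega^2\gamma^{-1}\zeta\right)',
\]
whose right-hand side, after expanding the difference of the product, is a sum of terms each of which is a product of factors that are either uniformly bounded (the quantities $\Omega$, $\gamma^{-1}$, $\zeta$ of spacetime $(1)$, controlled by Propositions \ref{Omega}, \ref{gamma} and the a priori estimates of Section \ref{estimates}) or one of the differences $\Omega'$, $(\gamma^{-1})'$, $\zeta'$. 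Since $\zeta=\frac12(\eta-\etab)$, the difference $\zeta'$ is controlled in $L^2_{\ub}L^\infty(S)$ and $L^2_{\ub}L^4(S)$ by $\mathcal O''$ (using Proposition \ref{OpSobolev}), and $\Omega'$, $(\gamma^{-1})'$ are controlled pointwise by $\delta^{\frac12}\mathcal O''$ (Propositions \ref{Omegap}, \ref{gammap}).

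First I would establish the $L^\infty(S)$ bound: integrating the transport equation from $\ub=0$ (where $(b^A)'=0$ by the gauge normalization $b^A=0$ on $\Hb_0$ for both spacetimes) and applying Cauchy–Schwarz in $\ub$ together with the $L^2_{\ub}L^\infty(S)$ control of $\zeta'$ gives $\|(b^A)'\|_{L^\infty(S_{u,\ub})}\le C\delta^{\frac12}\mathcal O''$. Next, I would differentiate the transport equation once in the coordinate angular variables, $\frac{\partial}{\partial\ub}\left(\frac{\partial}{\partial\th^B}(b^A)'\right)$; the right-hand side now involves $\frac{\partial}{\partial\th}\Omega'$, $\frac{\partial}{\partial\th}\gamma'$ (hence $\Gamma'$), $\frac{\partial}{\partial\th}\zeta'$, and $\nabla\zeta'$, all of which are controlled in $L^2_{\ub}L^4(S)$ by the norms $\mathcal O''$, $\tilde{\mathcal O}'$ via Propositions \ref{dgammap}, \ref{dOmegap}, \ref{Gammap}, \ref{angularpL4} and the definition of $\mathcal O''$. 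Integrating in $\ub$ and using Cauchy–Schwarz yields the $L^4(S)$ bound for $\frac{\partial}{\partial\th^B}(b^A)'$. Finally, I would differentiate twice angularly; the second angular derivative of the right-hand side involves $\frac{\partial^2}{\partial\th^2}\Omega'$, $\frac{\partial^2}{\partial\th^2}\gamma'$, $\nabla^2\zeta'$ and $K'$, controlled in $L^2_{\ub}L^2(S)$ by $\tilde{\mathcal O}'$ together with Propositions \ref{ddgammap}, \ref{ddOmegap}, \ref{Kp} and \ref{angularpL2}, giving the $L^2(S)$ bound for $\frac{\partial^2}{\partial\th^C\partial\th^B}(b^A)'$. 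In each step one must keep track of lower-order terms via Grönwall, exactly as in Proposition \ref{ddgammap}.

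The main technical obstacle — though it is routine given the machinery already set up in this Section — is bookkeeping in the second angular derivative: one must carefully expand the difference of the triple product $\Omega^2\gamma^{-1}\zeta$ after applying $\partial^2/\partial\th^2$, verify that every resulting term either falls under the uniformly bounded a priori estimates of Section \ref{estimates} or is absorbed into $C(a+\delta^{\frac12}(\mathcal O''+\tilde{\mathcal O}'))$, and confirm that no term requires more than two angular derivatives of $\zeta'$ (which would not be available at the regularity level of the difference estimates). Since $\|\nabla^2\zeta'\|_{L^2(H_u)}$ and $\|\nabla^2\zeta'\|_{L^2(\Hb_{\ub})}$ are exactly what $\tilde{\mathcal O}'$ provides, and since Cauchy–Schwarz in $\ub$ converts these into the required $L^2(S_{u,\ub})$ bounds with a gain of $\delta^{\frac12}$, the estimate closes. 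The $\delta^{\frac12}$ smallness is not needed here (the statement only claims $\le C(\mathcal O''+\tilde{\mathcal O}')$), so no absorption argument is required; the bound is simply read off after integration.
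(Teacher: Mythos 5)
Your approach matches the paper's exactly: difference and integrate the transport equation $\partial_{\ub}b^A = -4\Omega^2\zeta^A$ from the trivial data $b'=0$ on $\Hb_0$, differentiating angularly once and twice, and at each order invoke the metric-difference and Ricci-coefficient-difference bounds of Section \ref{convsec2} (Propositions \ref{Omegap}, \ref{dOmegap}, \ref{ddOmegap} and the definitions of $\mathcal O''$, $\tilde{\mathcal O}'$). One remark: the intermediate bound you write, $C(a+\delta^{1/2}(\mathcal O''+\tilde{\mathcal O}'))$, does carry a genuine $a$ contribution from $(\gamma^{-1})'$ and the angular derivatives of $\gamma'$, $\Omega'$ (Propositions \ref{gammap}, \ref{dgammap}, \ref{dOmegap}, \ref{ddgammap}, \ref{ddOmegap}), which you --- like the paper's terse statement --- then quote as $C(\mathcal O''+\tilde{\mathcal O}')$ without comment; this is harmless for the argument, since in Proposition \ref{p} all of $a$, $\mathcal O''$, $\tilde{\mathcal O}'$ are ultimately controlled by $Ca$, but it deserves a sentence of acknowledgement rather than the claim that ``the bound is simply read off.''
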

\begin{proof}
Recall that $b$ satisfies the transport equation 
$$\frac{\partial b^A}{\partial \ub}=-4\Omega^2\zeta^A.$$
The Proposition thus follows from differentiating the equation by $\frac{\partial}{\partial\th}$, integrating in $\ub$ and using the the bounds in Propositions \ref{Omegap}, \ref{dOmegap} and \ref{ddOmegap} and the definitions of the norms.
\end{proof}

\subsection{Transport Equations for the Difference Quantities}\label{convsec3}
In this Subsection, we show how to derive estimates for the difference $\phi'$ of the quantities $\phi^{(1)}$ and $\phi^{(2)}$ satisfying transport equations in the $\nab_4^{(1)}$, $\nab_4^{(2)}$ or $\nab_3^{(1)}$, $\nab_3^{(2)}$ directions. This should be compared with the previous Subsection where we considered difference quantities satisfying transport equations, but unlike in this Subsection, the corresponding transport equations held with respect to the $\frac{\partial}{\partial \ub}$, $\frac{\partial}{\partial u}$ derivatives rather than $\nab_4$, $\nab_3$, which themselves depend on the spacetimes under consideration.

In order to estimate the quantities $\phi'$ associated to the difference between that of different spacetimes, we need to derive a transport equation for $\phi'$ from the transport equation for $\phi$. 
\begin{proposition}\label{peqn}
Consider spacetimes that satisfy the hypotheses of Theorem \ref{timeofexistence} and \ref{convergencethm}. Let $\phi$ be a $(0,r)$ S-tensor. Suppose $\nabla_4\phi=F$. Then
\begin{equation*}
\begin{split}
\nabla_4^{(1)}\phi'\sim & F'+\frac{(\Omega^{-1})'}{\Omega^{-1}}\nabla_4\phi+\frac{(\Omega^{-1})'}{\Omega^{-1}}\gamma^{-1}\chi\phi+(\gamma^{-1}\chi)'\phi.
\end{split}
\end{equation*}
Similarly, suppose $\nabla_3\phi=G$. Then
\begin{equation*}
\begin{split}
\nabla_3^{(1)}\phi'\sim & G'+\frac{(\Omega^{-1})'}{\Omega^{-1}}\nabla_3\phi+\Omega^{-1}(b^A)'\nabla_{\frac{\partial}{\partial \th^A}}\phi+\frac{(\Omega^{-1})'}{\Omega^{-1}}\gamma^{-1}\chib\phi+(\gamma^{-1}\chib)'\phi+\Omega^{-1}\frac{\partial b'}{\partial\th}\phi.
\end{split}
\end{equation*}
\end{proposition}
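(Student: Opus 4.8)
The plan is to replace the frame derivatives $\nab_4,\nab_3$ by the coordinate derivatives $\partial/\partial\ub,\partial/\partial u$, since the latter are what can be compared meaningfully between the two spacetimes once these are identified through the common coordinates $(u,\ub,\th^1,\th^2)$. The first step is to record the coordinate form of the projected derivatives acting on a $(0,r)$ S-tensor $\phi$. Using $e_4=\Omega^{-1}\partial/\partial\ub$, the fact that $[e_4,\partial/\partial\th^A]$ is proportional to $e_4$ (hence annihilated when contracted into an S-tensor), and that the $S$-tangential part of $D_{\partial/\partial\th^A}e_4$ is a contraction of $\gamma^{-1}$ with $\chi$, one gets schematically $\nab_4\phi=\Omega^{-1}\frac{\partial}{\partial\ub}\phi+\gamma^{-1}\chi\cdot\phi$. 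For $e_3=\Omega^{-1}(\partial/\partial u+b^A\partial/\partial\th^A)$ the bracket $[e_3,\partial/\partial\th^A]$ has, besides an $e_3$-component, the $S$-tangential part $-\Omega^{-1}(\partial_{\th^A}b^B)\partial/\partial\th^B$, so $\nab_3\phi=\Omega^{-1}(\frac{\partial}{\partial u}+b^A\frac{\partial}{\partial\th^A})\phi+\gamma^{-1}\chib\cdot\phi+\Omega^{-1}\frac{\partial b}{\partial\th}\cdot\phi$, each identity being valid in either spacetime with the corresponding data.

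For the $\nab_4$ identity I would apply the coordinate formula in spacetime $(1)$ to $\phi'$, then split $\frac{\partial}{\partial\ub}\phi'=\frac{\partial}{\partial\ub}\phi^{(1)}-\frac{\partial}{\partial\ub}\phi^{(2)}$ and use $\nab_4^{(i)}\phi^{(i)}=F^{(i)}$ to solve $\frac{\partial}{\partial\ub}\phi^{(i)}=\Omega^{(i)}(F^{(i)}-(\gamma^{(i)})^{-1}\chi^{(i)}\cdot\phi^{(i)})$. The connection correction produced by $(\Omega^{(1)})^{-1}\frac{\partial}{\partial\ub}\phi^{(1)}$ exactly cancels the $\phi^{(1)}$-part of the direct correction $(\gamma^{(1)})^{-1}\chi^{(1)}\cdot\phi'$, leaving
\[
\nab_4^{(1)}\phi'=F^{(1)}-\frac{(\Omega^{(1)})^{-1}}{(\Omega^{(2)})^{-1}}\Big(F^{(2)}-(\gamma^{(2)})^{-1}\chi^{(2)}\cdot\phi^{(2)}\Big)-(\gamma^{(1)})^{-1}\chi^{(1)}\cdot\phi^{(2)}.
\]
Writing $\frac{(\Omega^{(1)})^{-1}}{(\Omega^{(2)})^{-1}}=1+\frac{(\Omega^{-1})'}{(\Omega^{-1})^{(2)}}$, using $(\gamma^{(1)})^{-1}\chi^{(1)}-(\gamma^{(2)})^{-1}\chi^{(2)}=(\gamma^{-1}\chi)'$, and the uniform two-sided bounds $(\Omega^{(1)})^{-1}\sim(\Omega^{(2)})^{-1}$ from Propositions~\ref{Omega} and~\ref{Omegap}, these reorganize into $F'+\frac{(\Omega^{-1})'}{\Omega^{-1}}\nab_4\phi+\frac{(\Omega^{-1})'}{\Omega^{-1}}\gamma^{-1}\chi\phi+(\gamma^{-1}\chi)'\phi$, which is the claim after writing $F=\nab_4\phi$. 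The $\nab_3$ identity is obtained the same way; the additional ingredients are that the $\Omega^{-1}b^A\partial/\partial\th^A$ transport term and the $\Omega^{-1}(\partial_\th b)\phi$ correction term each appear both in $\nab_3^{(1)}\phi'$ and in the unpacking of $\partial\phi^{(i)}/\partial u$, so that the $\phi^{(1)}$- and $\partial_\th\phi^{(1)}$-contributions again cancel pairwise, and the leftover differences are $-\Omega^{-1}(b^A)'\frac{\partial}{\partial\th^A}\phi^{(2)}=-\Omega^{-1}(b^A)'\nab_{\frac{\partial}{\partial\th^A}}\phi^{(2)}-\Omega^{-1}(b^A)'\Gamma\cdot\phi^{(2)}$ and $-\Omega^{-1}(\partial_\th b')\phi^{(2)}$, matching the two $b'$-terms in the statement.

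The computation is largely bookkeeping, and the one point that must be checked carefully — which is precisely why these difference equations can later be closed by transport and Gronwall estimates — is that the right-hand side contains no derivative of $\phi'$ at all, and at most one angular derivative of $\phi$ itself (none in the $\nab_4$ case): every term is $F'$ (resp.\ $G'$), or a coefficient difference ($(\Omega^{-1})'$, $(\gamma^{-1}\chi)'$ or $(\gamma^{-1}\chib)'$, $(b^A)'$, $\partial b'/\partial\th$) times a quantity already under a priori control, or a first $\nab$-derivative of $\phi$ with a controlled coefficient. The residual lower-order contribution $\Omega^{-1}(b^A)'\Gamma\cdot\phi$ in the $\nab_3$ case is of this same type, $\Gamma$ being a linear combination of $\gamma$ and its first angular derivative as controlled in Propositions~\ref{Gammap} and~\ref{bp}, so it is absorbed into the listed terms; verifying that all such leftover pieces genuinely stay within the stated schematic classes, rather than generating a $\nab\phi'$ or a second-order term, is the only delicate part of the argument.
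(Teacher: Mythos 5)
Your proposal is correct and takes essentially the same route as the paper: write the coordinate form of the projected derivatives $\nabla_4 = \Omega^{-1}\partial_{\ub} - \gamma^{-1}\chi\cdot$ and $\nabla_3 = \Omega^{-1}(\partial_u + b\cdot\partial_\th) - \gamma^{-1}\chib\cdot + \Omega^{-1}(\partial_\th b)\cdot$, apply $\nabla^{(1)}$ to $\phi' = \phi^{(1)} - \phi^{(2)}$, absorb the $\phi^{(1)}$ contribution into $F^{(1)}$ (resp. $G^{(1)}$) directly, and eliminate $\partial_{\ub}\phi^{(2)}$ (resp. $\partial_u\phi^{(2)}$) via the transport equation in the second spacetime. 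The only cosmetic difference is that you factor through $(\Omega^{(1)})^{-1}/(\Omega^{(2)})^{-1} = 1 + (\Omega^{-1})'/(\Omega^{-1})^{(2)}$ whereas the paper adds and subtracts $\nabla_4^{(2)}\phi^{(2)}$ and then rewrites $(\Omega^{-1})'\partial_{\ub}\phi^{(2)}$ in terms of $\nabla_4^{(1)}\phi^{(2)}$; both manipulations are equivalent at the schematic level, and your signs on the $\gamma^{-1}\chi$ pieces are flipped relative to the paper's convention, which is immaterial since the conclusion is schematic. Your remark that the residual $\Omega^{-1}(b^A)'\Gamma\cdot\phi$ piece coming from trading $\partial_{\th}\phi$ for $\nabla_{\partial_\th}\phi$ must be absorbed into the listed terms, and your emphasis that the right-hand side contains no $\nabla\phi'$ and at most one angular derivative of $\phi$, correctly identify the structural content the proposition is designed to capture.
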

\begin{proof}
We write in coordinates
$$\nabla_4\phi_{A_1...A_r}=\Omega^{-1}\frac{\partial}{\partial \ub}\phi_{A_1...A_r}-(\gamma^{-1})^{CD}\chi_{A_iD}\phi_{A_1...\hat{A_i}C...A_r},$$
where $\hat{A_i}$ denotes that the original $A_i$ in the $i$-th slot of the tensor is removed. This equation holds in both spacetimes $(1)$ and $(2)$.
Then, we derive the equations for $\phi'$. We will write schematically without the exact constant depending only on $r$:
\begin{equation*}
\begin{split}
&(\nabla_4)^{(1)}\phi' \\
=&\left((\Omega^{-1})^{(1)}\frac{\partial}{\partial \ub}-(\gamma^{-1})^{(1)}\chi^{(1)}\right)\left(\phi^{(1)}-\phi^{(2)}\right) \\
=&\left((\Omega^{-1})^{(1)}\frac{\partial}{\partial \ub}-(\gamma^{-1})^{(1)}\chi^{(1)}\right)\phi^{(1)}-\left((\Omega^{-1})^{(2)}\frac{\partial}{\partial \ub}-(\gamma^{-1})^{(2)}\chi^{(2)}\right)\phi^{(2)}\\
& -(\Omega^{-1})'\frac{\partial}{\partial \ub}\phi^{(2)}+(\gamma^{-1}\chi)'\phi^{(2)}\\
=&F'-(\Omega^{-1})'\frac{\partial}{\partial \ub}\phi^{(2)}+(\gamma^{-1}\chi)'\phi^{(2)}\\
=&F'-\frac{(\Omega^{-1})'}{(\Omega^{-1})^{(1)}}(\nabla_4)^{(1)}\phi^{(2)}+\frac{(\Omega^{-1})'}{(\Omega^{-1})^{(1)}}(\gamma^{-1}\chi)^{(1)}\phi^{(2)}+(\gamma^{-1}\chi)'\phi^{(2)}.
\end{split}
\end{equation*}
For the $\nabla_3$ equations, we write in coordinates
\begin{equation*}
\begin{split}
\nabla_3\phi_{A_1...A_r}=&\Omega^{-1}\frac{\partial}{\partial u}\phi_{A_1...A_r}+\Omega^{-1}b^B\frac{\partial}{\partial \th^B}\phi_{A_1...A_r} \\
&-(\gamma^{-1})^{CD}\chib_{A_i D}\phi_{A_1...\hat{A_i}C...A_r}+\Omega^{-1}\frac{\partial b^B}{\partial\th^{A_i}}\phi_{A_1...\hat{A_i}C...A_r}.
\end{split}
\end{equation*}
We derive the difference equations as before. However, $e_3$ in the respective spacetimes are not parallel to each other and we therefore have extra terms.
\begin{equation*}
\begin{split}
&(\nabla_3)^{(1)}\phi' \\
=&\left((\Omega^{-1})^{(1)}\frac{\partial}{\partial u}+(\Omega^{-1})^{(1)}b^{(1)}\frac{\partial}{\partial \th} -(\gamma^{-1})^{(1)}\chib^{(1)}+(\Omega^{-1})^{(1)}\frac{\partial b^{(1)}}{\partial \th}\right)\left(\phi^{(1)}-\phi^{(2)}\right) \\
=&\left((\Omega^{-1})^{(1)}\frac{\partial}{\partial u}+(\Omega^{-1})^{(1)}b^{(1)}\frac{\partial}{\partial \th} -(\gamma^{-1})^{(1)}\chib^{(1)}+(\Omega^{-1})^{(1)}\frac{\partial b^{(1)}}{\partial \th}\right)\phi^{(1)} \\
&-\left((\Omega^{-1})^{(2)}\frac{\partial}{\partial u}+(\Omega^{-1})^{(2)}b^{(2)}\frac{\partial}{\partial \th} -(\gamma^{-1})^{(2)}\chib^{(2)}+(\Omega^{-1})^{(2)}\frac{\partial b^{(2)}}{\partial \th}\right)\phi^{(2)}\\
& -(\Omega^{-1})'\frac{\partial}{\partial u}\phi^{(2)}+(\gamma^{-1}\chib)'\phi^{(2)}-(\Omega^{-1}b)'\frac{\partial}{\partial \th}\phi^{(2)}-(\Omega^{-1}\frac{\partial b}{\partial \th})'\phi^{(2)}\\
=&G'-\frac{(\Omega^{-1})'}{(\Omega^{-1})^{(1)}}\nabla_3^{(1)}\phi^{(1)}-(\Omega^{-1})^{(2)}b'\frac{\partial}{\partial\th}\phi^{(2)}-(\Omega^{-1})^{(2)}\frac{\partial b'}{\partial \th}\phi^{(2)}\\
&+\frac{(\Omega^{-1})'}{(\Omega^{-1})^{(1)}}(\gamma^{-1}\chib)^{(1)}\phi^{(1)}+(\gamma^{-1}\chib)'\phi^{(2)}.
\end{split}
\end{equation*}
\end{proof}
This, together with the commutation estimates in Proposition \ref{commuteeqn}, gives the following estimates for $(\nabla^i\phi)'$.
\begin{proposition}\label{pcommuteeqn}
Suppose $\nabla_4\phi=F$. Then
\begin{equation*}
\begin{split}
\nabla_4(\nabla^i\phi)'\sim &\sum_{i_1+i_2+i_3=i}(\nabla^{i_1}(\eta,\etab)^{i_2}\nabla^{i_3} F)'+\sum_{i_1+i_2+i_3+i_4=i}(\nabla^{i_1}(\eta,\etab)^{i_2}\nabla^{i_3}\chi\nabla^{i_4} \phi)' \\
&+(\Omega^{-1})'\left(\sum_{i_1+i_2+i_3=i}\nabla^{i_1}(\eta,\etab)^{i_2}\nabla^{i_3} F+\sum_{i_1+i_2+i_3+i_4=i}\nabla^{i_1}(\eta,\etab)^{i_2}\nabla^{i_3}\chi\nabla^{i_4} \phi\right)\\
&+(\gamma^{-1}\chi)'\nabla^i\phi.
\end{split}
\end{equation*}
Similarly, suppose $\nabla_3\phi=G$. Then
\begin{equation*}
\begin{split}
\nabla_3(\nabla^i\phi)'\sim &\sum_{i_1+i_2+i_3=i}(\nabla^{i_1}(\eta,\etab
)^{i_2}\nabla^{i_3} G)'+\sum_{i_1+i_2+i_3+i_4=i}(\nabla^{i_1}(\eta,\etab)^{i_2}\nabla^{i_3}\chib\nabla^{i_4} \phi)' \\
&+(\Omega^{-1})'\left(\sum_{i_1+i_2+i_3=i}\nabla^{i_1}(\eta,\etab)^{i_2}\nabla^{i_3} G+\sum_{i_1+i_2+i_3+i_4=i}\nabla^{i_1}(\eta,\etab)^{i_2}\nabla^{i_3}\chib\nabla^{i_4} \phi\right)\\
&+b'\nabla^{i+1}\phi+((\gamma^{-1}\chi)'+\frac{\partial b'}{\partial\th})\nabla^i\phi.
\end{split}
\end{equation*}
\end{proposition}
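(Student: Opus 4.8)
The plan is to derive Proposition~\ref{pcommuteeqn} by feeding the schematic commutation formula of Section~\ref{commutation} into the difference transport equations of Proposition~\ref{peqn}. The first step is to regard $\nabla^i\phi$ not as a derivative of $\phi$ but as a single $S$-tensor in its own right. By Proposition~\ref{commuteeqn} (and the passage to its simplified form, where $\beta$ is traded for $\nabla\chi+\psi\chi$ via Codazzi), this tensor satisfies, in \emph{each} of the two spacetimes and with the same schematic right-hand side, a transport equation of the form
\[
\nabla_4\nabla^i\phi=F_i\sim\sum_{i_1+i_2+i_3=i}\nabla^{i_1}(\eta,\etab)^{i_2}\nabla^{i_3}F+\sum_{i_1+i_2+i_3+i_4=i}\nabla^{i_1}(\eta,\etab)^{i_2}\nabla^{i_3}\chi\nabla^{i_4}\phi ,
\]
and likewise $\nabla_3\nabla^i\phi=G_i$ with $\chib$ and $G$ replacing $\chi$ and $F$.

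Next I would apply Proposition~\ref{peqn} verbatim, with its ``$\phi$'' taken to be $\nabla^i\phi$ and its ``$F$'' (resp. ``$G$'') taken to be $F_i$ (resp. $G_i$). In the $\nabla_4$ case this gives immediately
\[
\nabla_4^{(1)}(\nabla^i\phi)'\sim (F_i)'+\frac{(\Omega^{-1})'}{\Omega^{-1}}\,\nabla_4\nabla^i\phi+\frac{(\Omega^{-1})'}{\Omega^{-1}}\,\gamma^{-1}\chi\,\nabla^i\phi+(\gamma^{-1}\chi)'\,\nabla^i\phi ,
\]
and in the $\nabla_3$ case the analogous identity with the additional frame-mismatch terms $\Omega^{-1}(b^A)'\nabla_{\partial_{\th^A}}\nabla^i\phi$ and $\Omega^{-1}\frac{\partial b'}{\partial\th}\nabla^i\phi$. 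To finish, expand $(F_i)'$ and $\nabla_4\nabla^i\phi$ using the displayed schematic form of $F_i$: since the difference of a sum is the sum of differences, $(F_i)'$ contributes exactly $\sum(\nabla^{i_1}(\eta,\etab)^{i_2}\nabla^{i_3}F)'+\sum(\nabla^{i_1}(\eta,\etab)^{i_2}\nabla^{i_3}\chi\nabla^{i_4}\phi)'$, and since $\Omega^{-1}$ is bounded below by Proposition~\ref{Omega}, the two terms carrying $\frac{(\Omega^{-1})'}{\Omega^{-1}}$ combine into the schematic term $(\Omega^{-1})'\big(\sum\nabla^{i_1}(\eta,\etab)^{i_2}\nabla^{i_3}F+\sum\nabla^{i_1}(\eta,\etab)^{i_2}\nabla^{i_3}\chi\nabla^{i_4}\phi\big)$. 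Collecting terms gives the stated $\nabla_4$ formula; for $\nabla_3$ one additionally writes $\nabla_{\partial_{\th^A}}\nabla^i\phi\sim\nabla^{i+1}\phi+\Gamma\nabla^i\phi$, so that the mismatch terms become $b'\nabla^{i+1}\phi+\big((\gamma^{-1}\chib)'+\frac{\partial b'}{\partial\th}\big)\nabla^i\phi$ after absorbing the connection coefficient.

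The main point requiring care — conceptual rather than computational — is the very first step: Proposition~\ref{peqn} must be applied to $\nabla^i\phi$ as a tensor satisfying its \emph{own} transport equation, not by attempting to commute $\nabla^i$ through the difference of the two spacetime connections directly. This works cleanly precisely because $\nabla_4$, $\nabla_3$, $\nabla$, $\Omega$, $b^A$ and $\gamma$ all differ between the two spacetimes, and Proposition~\ref{peqn} already records the exact corrections $(\Omega^{-1})'$, $(\gamma^{-1}\chi)'$, $b'$ and $\frac{\partial b'}{\partial\th}$ that this generates; the content of Proposition~\ref{pcommuteeqn} is then simply that substituting the commutation formula introduces no further terms, and in particular nothing involving the singular component $\alpha$. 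If one instead starts from the unsimplified Proposition~\ref{commuteeqn}, the auxiliary terms $\nabla^{i_1}(\eta,\etab)^{i_2}\nabla^{i_3}\beta\nabla^{i_4}\phi$ and their primed counterparts are reduced to the stated form using the Codazzi identities $\beta\sim\nabla\chi+\psi\chi$ and $\betab\sim\nabla\chib+\psi\chib$, which are compatible with the difference operation.
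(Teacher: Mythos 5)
Your proposal is correct and is precisely what the paper does: the paper's entire proof reads ``According to the already established estimates, we have $\frac{1}{2}\leq \Omega\leq 2$. The result now follows directly from Propositions \ref{commuteeqn} and \ref{peqn}.'' You have simply spelled out the ``follows directly'' — treating $\nab^i\phi$ as a tensor with its own transport source $F_i$ from the (simplified, Codazzi-reduced) commutation formula, feeding it into Proposition \ref{peqn}, expanding $(F_i)'$ and $\nab_4\nab^i\phi$, and absorbing $1/\Omega^{-1}$ via the uniform bounds on $\Omega$. The only wrinkle is cosmetic: in the $\nab_3$ case you obtain $(\gamma^{-1}\chib)'$ where the statement of Proposition \ref{pcommuteeqn} prints $(\gamma^{-1}\chi)'$; your derivation is the one consistent with Proposition \ref{peqn}, so this is a harmless typographical slip in the proposition as stated, not a defect of your argument.
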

\begin{proof}
According to the already established estimates, we have $\frac{1}{2}\leq \Omega\leq 2$. The result now follows directly from Propositions \ref{commuteeqn} and \ref{peqn}.
\end{proof}
We now use Propositions \ref{transport} and \ref{pcommuteeqn} to obtain estimates from a transport equation.
\begin{proposition}\label{transportp}
Suppose $\nabla_4\phi=F$. Then
\begin{equation*}
\begin{split}
&\sum_{i\leq 1}||\nabla^{i}\phi'||_{L^2(S_{u,\ub})} \\
\leq &\sum_{i\leq 1}||\nabla^{i}\phi'||_{L^2(S_{u,0})}+C\int_0^{\ub} \sum_{i\leq 1}||(\nabla^iF)'||_{L^2(S_{u,\ub'})} d\ub'\\
&+C\delta^{\frac 12}(\mathcal O''+\tilde{\mathcal O}'+\mathcal O' )(\sum_{i\leq 1}\sup_{\ub'\leq \ub}(||\nabla^iF||_{L^2(S_{u,\ub'})}+||\nabla^i\phi||_{L^2(S_{u,\ub'})})).
\end{split}
\end{equation*}

Similarly, suppose $\nabla_3\phi=G$. Then
\begin{equation*}
\begin{split}
&\sum_{i\leq 1}||\nabla^{i}\phi'||_{L^2(S_{u,\ub})} \\
\leq &\sum_{i\leq 1}||\nabla^{i}\phi'||_{L^2(S_{0,\ub})}+C\int_0^{\ub} \sum_{i\leq 1}||(\nabla^iG)'||_{L^2(S_{u,\ub'})} d\ub'\\
&+C\delta^{\frac 12}(\mathcal O''+\tilde{\mathcal O}'+\mathcal O' )(\sum_{i\leq 1}\sup_{\ub'\leq \ub}(||\nabla^iG||_{L^2(S_{u,\ub'})}+||\nabla^i\phi||_{L^2(S_{u,\ub'})})).
\end{split}
\end{equation*}
\end{proposition}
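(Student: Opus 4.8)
The plan is to deduce the two estimates from Proposition \ref{pcommuteeqn} by applying the basic transport estimate of Proposition \ref{transport}, after checking that every term produced by the commutation has an acceptable bound. I will treat the $\nabla_4$ case in detail; the $\nabla_3$ case is identical except for the presence of the two additional terms $b'\nabla^{i+1}\phi$ and $\frac{\partial b'}{\partial\th}\nabla^i\phi$, which are controlled by Proposition \ref{bp}. Since both spacetimes satisfy the hypotheses of Theorem \ref{timeofexistence}, all the a priori estimates of Section \ref{estimates} are available with constants depending only on $C$ and $c$; in particular $\frac12\le\Omega^{(i)}\le 2$, the $\mathcal O$ and $\mathcal R$ norms are bounded, and by Propositions \ref{Omegap}--\ref{Gammap} the metric differences and $\Omega'$ differences are controlled by $a+\delta^{1/2}\mathcal O''\le C\delta^{1/2}(\mathcal O''+\tilde{\mathcal O}'+\mathcal O')$ (the factor $a$ being absorbed into the $\delta^{1/2}$-term after iterating, which is why the statement only has the $\delta^{1/2}$ combination).

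First I would commute $\nabla_4\phi=F$ with up to one angular derivative and invoke Proposition \ref{pcommuteeqn} with $i\le 1$. This expresses $\nabla_4^{(1)}(\nabla^i\phi)'$ as a sum of three kinds of terms: (i) ``genuine difference'' terms of the form $(\nabla^{i_1}\psi^{i_2}\nabla^{i_3}F)'$ and $(\nabla^{i_1}\psi^{i_2}\nabla^{i_3}\chi\nabla^{i_4}\phi)'$; (ii) terms carrying an explicit factor of $(\Omega^{-1})'$ multiplying a quantity controlled by the a priori estimates; and (iii) the term $(\gamma^{-1}\chi)'\nabla^i\phi$. For (ii), one uses the pointwise bound $\|(\Omega^{-1})'\|_{L^\infty}\le C\delta^{1/2}\mathcal O''$ from Proposition \ref{Omegap} together with Sobolev embedding (Propositions \ref{L4}, \ref{Linfty}) and the a priori bounds on $\psi$, $F$, $\phi$; this yields a contribution of size $\delta^{1/2}\mathcal O''(\sup_{\ub'}\sum_{i\le1}(\|\nabla^iF\|_{L^2(S)}+\|\nabla^i\phi\|_{L^2(S)}))$ after integrating in $\ub'$ over an interval of length $\le\delta$. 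For (iii), $(\gamma^{-1}\chi)'=(\gamma^{-1})'\chi+\gamma^{-1}\chi'$; the first piece is handled by Proposition \ref{gammap}, the second by the definition of $\mathcal O'$ (noting $\chi'=\trch'\gamma/2+\chih'$), and in either case the factor accompanying $\nabla^i\phi$ is bounded pointwise or in $L^4(S)$ by $C(a+\delta^{1/2}\mathcal O'')$, so again after $L^4\times L^4$ or $L^\infty\times L^2$ H\"older and integration in $\ub'$ one gains a $\delta^{1/2}$ and the combination $(\mathcal O''+\tilde{\mathcal O}'+\mathcal O')$.

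The main obstacle — and the point requiring genuine care — is term type (i), the ``difference of products.'' For a product $\phi^{(1)}_1\phi^{(1)}_2-\phi^{(2)}_1\phi^{(2)}_2=\phi'_1\phi^{(2)}_2+\phi^{(1)}_1\phi'_2$ one must decide which factor to put in $L^2$ and which in $L^\infty$ (or $L^4$), and one must also account for the fact that $\nabla^{(1)}$ and $\nabla^{(2)}$ differ: a term like $(\nabla\psi\,\nabla\chi\,\phi)'$ genuinely contains, besides $\nabla\psi'$ and $\nabla\chi'$, the commutator pieces $(\nabla^{(1)}-\nabla^{(2)})\psi^{(2)}$ etc., which by Propositions \ref{connectionpL4}, \ref{angularpL4} are again of order $(a+\delta^{1/2}\mathcal O'')\|\psi\|_{L^\infty}$. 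The guiding principle is that the ``difference factor'' is always estimated by one of the difference norms ($\mathcal O'$, $\tilde{\mathcal O}'$, $\mathcal O''$, or $\|\cdot\|_{L^2(S)}$ of $\phi'$ or $F'$) while the remaining factors are bounded by the a priori $\mathcal O$-bounds and Sobolev embedding; since we only carry one angular derivative here, all factors land in $L^4$ or better and no derivative is lost. Collecting: the terms with $F'$ or $(\nabla F)'$ give the explicit integral $\int_0^{\ub}\sum_{i\le1}\|(\nabla^iF)'\|_{L^2(S_{u,\ub'})}d\ub'$; the terms with $\phi'$ give, after Gronwall in $\ub$ (absorbing $C\delta\sum_{i\le1}\|\nabla^i\phi'\|_{L^2(S)}$), the boundary term $\sum_{i\le1}\|\nabla^i\phi'\|_{L^2(S_{u,0})}$; and everything else is packaged into the $C\delta^{1/2}(\mathcal O''+\tilde{\mathcal O}'+\mathcal O')(\sup_{\ub'\le\ub}\sum_{i\le1}(\|\nabla^iF\|_{L^2(S)}+\|\nabla^i\phi\|_{L^2(S)}))$ term. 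Finally one applies Proposition \ref{transport} to the resulting transport inequality for $\|\nabla^i\phi'\|_{L^p(S_{u,\ub})}$ with $p=2$; choosing $\delta$ small and absorbing the self-referential term yields the claimed bound, and the $\nabla_3$ version follows verbatim with the extra $b'$-terms controlled as noted.
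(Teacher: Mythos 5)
Your proposal follows the paper's own proof essentially step for step: reduce via Proposition \ref{angularpL4} to estimating $\|(\nabla^i\phi)'\|_{L^2(S)}$, apply Proposition \ref{transport} to the commuted difference equation from Proposition \ref{pcommuteeqn}, control the three classes of error terms ($F$-difference terms, $(\Omega^{-1})'$-weighted terms, and the $(\gamma^{-1}\chi)'\nabla^i\phi$ term) by the same H\"older/Sobolev pattern together with $\ub'$-integration, and absorb the self-referential $\|\nabla^i\phi'\|$ contribution by taking $\delta$ small. Two minor quibbles. First, your aside that the $a$-factor coming from Propositions \ref{gammap}--\ref{angularpL4} ``is absorbed into the $\delta^{1/2}$-term after iterating'' is not the right reason: the Proposition is a one-shot transport estimate, and ``iterating'' has nothing to do with it; what actually happens is that the contributions entering through $(\gamma^{-1})'$, $\Gamma'$, $\Omega'$ carry an extra $\ub'$-integration yielding an additional small factor, and the residual $a$-dependence is silently suppressed in the paper's proof as well (compare the statement of Proposition \ref{angularpL4} with the display that invokes it). Second, in the $\nab_3$ case the extra term $b'\nabla^{i+1}\phi$ with $i=1$ involves $\nabla^2\phi$, which does not literally fit under $\sup_{\ub'}\sum_{i\le1}\|\nabla^i\phi\|_{L^2(S)}$ on the right-hand side; the paper only says the $\nab_3$ case is ``analogous,'' but you should note that control of this term uses that $\|b'\|_{L^\infty}$ is small (Proposition \ref{bp}) and that the $\phi$'s to which the Proposition is actually applied have two angular derivatives under control.
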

\begin{proof}
We use the pointwise estimates for $\gamma'$, $\Omega'$ and $b'$ proved in the last section.

By Propositions \ref{angularpL4}, it suffices to estimate 
$$\sum_{i\leq 1}||(\nabla^i\phi)'||_{L^2(S_{u,\ub})}$$
since the difference can be estimated by
$$\sum_{i\leq 1}||(\nabla^i\phi)'-\nabla(\phi')||_{L^2(S_{u,\ub})}\leq C(\mathcal O''+\tilde{\mathcal O}')(||\nabla\phi||_{L^4(S_{u,\ub})}+||\phi||_{L^\infty(S_{u,\ub})}).$$
We can estimate $||(\nabla^i\phi)'||_{L^2(S_{u,\ub})}$ using Proposition \ref{transport} and the equation in Proposition \ref{pcommuteeqn}. Recall the formula in Proposition \ref{pcommuteeqn}
\begin{equation*}
\begin{split}
\nabla_4(\nabla^i\phi)'\sim &\sum_{i_1+i_2+i_3=i}(\nabla^{i_1}(\eta,\etab)^{i_2}\nabla^{i_3} F)'+\sum_{i_1+i_2+i_3+i_4=i}(\nabla^{i_1}(\eta,\etab)^{i_2}\nabla^{i_3}\chi\nabla^{i_4} \phi)' \\
&+(\Omega^{-1})'\left(\sum_{i_1+i_2+i_3=i}\nabla^{i_1}(\eta,\etab)^{i_2}\nabla^{i_3} F+\sum_{i_1+i_2+i_3+i_4=i}\nabla^{i_1}(\eta,\etab)^{i_2}\nabla^{i_3}\chi\nabla^{i_4} \phi\right)\\
&+(\gamma^{-1}\chi)'\nabla^i\phi.
\end{split}
\end{equation*}
We estimate term by term:
\begin{equation*}
\begin{split}
&\int_0^{\ub}\sum_{i_1+i_2+i_3\leq 1}||(\nabla^{i_1}(\eta,\etab)^{i_2}\nabla^{i_3} F)'||_{L^2(S_{u,\ub'})}d\ub'\\
\leq &C\int_0^{\ub}(\sum_{i_1\leq 1}||(\eta,\etab)||^{i_1}_{L^\infty(S_{u,\ub'})})\sum_{i_2\leq 1}||(\nabla^{i_2} F)'||_{L^2(S_{u,\ub'})}d\ub'\\
\leq &C\sum_{i\leq 1}||(\nabla^{i} F)'||_{L^1_{\ub}L^2(S_{u,\ub})}.
\end{split}
\end{equation*}
The second term can be estimated by
\begin{equation*}
\begin{split}
&\int_0^{\ub}\sum_{i_1+i_2+i_3\leq 1}||(\nabla^{i_1}(\eta,\etab)^{i_2}\nabla^{i_3}\chi\nabla^{i_4} \phi)'||_{L^2(S_{u,\ub'})}d\ub'\\
\leq &C\int_0^{\ub}(\sum_{i_1\leq 1}\sum_{i_2\leq 2}||\nabla^{i_1}(\eta,\etab,\trch,\chih)||^{i_2}_{L^\infty(S_{u,\ub'})})\sum_{i_3\leq 1}||(\nabla^{i_3} \phi)'||_{L^2(S_{u,\ub'})}d\ub'\\
&+C\int_0^{\ub}(\sum_{i_1\leq 1}||\nab^{i_1}(\eta',\etab',\trch',\chih')||_{L^2(S_{u,\ub'})})\\
&\quad\quad\quad\quad\times(\sum_{i_2\leq 1}\sum_{i_3\leq 1}||\nabla^{i_1}(\eta,\etab,\trch,\chih)||^{i_3}_{L^\infty(S_{u,\ub'})})(\sum_{i_4\leq 1}||\nabla^{i_4} \phi||_{L^2(S_{u,\ub'})})d\ub'\\
\leq &C\sum_{i\leq 1}||(\nabla^{i} \phi)'||_{L^1_{\ub}L^2(S_{u,\ub})}+C\delta^{\frac 12}(\mathcal O''+\tilde{\mathcal O}'+\mathcal O')(||\nabla\phi||_{L^2(S_{u,\ub})}+||\phi||_{L^\infty(S_{u,\ub})}).\\
\end{split}
\end{equation*}
The third term is easier to estimate since by Proposition \ref{Omegap}, $(\Omega^{-1})'$ can be estimated in $L^\infty$:
\begin{equation*}
\begin{split}
&\int_0^{\ub}\sum_{i_1+i_2+i_3\leq 1}||(\Omega^{-1})'\left(\nabla^{i_1}(\eta,\etab)^{i_2}\nabla^{i_3} F+\nabla^{i_1}(\eta,\etab)^{i_2}\nabla^{i_3}\chi\nabla^{i_4} \phi\right)||_{L^2(S_{u,\ub'})}d\ub'\\
\leq &C\delta^{\frac 12}(\mathcal O''+\tilde{\mathcal O}')(\sum_{i\leq 1}\sup_{\ub'\leq \ub}||\nabla^iF||_{L^2(S_{u,\ub'})}+\sum_{i\leq 1}\sup_{\ub'\leq \ub}||\nabla^i\phi||_{L^2(S_{u,\ub'})}).
\end{split}
\end{equation*}
The fourth term can be controlled in the same way as the third term, since $(\gamma^{-1}\chi)'$ can be estimated in $L^2_{\ub}L^\infty$:
\begin{equation*}
\begin{split}
&\int_0^{\ub}\sum_{i\leq 1}||(\gamma^{-1}\chi)'\nabla^i\phi||_{L^2(S_{u,\ub'})}d\ub'\\
\leq &C\delta^{\frac 12}(\mathcal O''+\tilde{\mathcal O}')(\sum_{i\leq 1}\sup_{\ub'\leq \ub}||\nabla^i\phi||_{L^2(S_{u,\ub'})}).
\end{split}
\end{equation*}
Putting all these estimates together using Proposition \ref{transport}, we have
\begin{equation*}
\begin{split}
&\sum_{i\leq 1}||\nabla^{i}\phi'||_{L^2(S_{u,\ub})} \\
\leq &\sum_{i\leq 1}||\nabla^{i}\phi'||_{L^2(S_{0,\ub})}+C\sum_{i\leq 1}||(\nabla^iF)'||_{L^1_{\ub}L^2(S_{u,\ub})}+C\delta||\nabla^{i}\phi'||_{L^2_{\ub}L^2(S_{u,\ub})}\\
&+C\delta^{\frac 12}(\mathcal O''+\tilde{\mathcal O}'+\mathcal O' )(\sum_{i\leq 2}\sup_{\ub'\leq \ub}(||\nabla^iF||_{L^2(S_{u,\ub'})}+||\nabla^i\phi||_{L^2(S_{u,\ub'})})).
\end{split}
\end{equation*}
Take $\delta$ to be sufficiently small depending only on $C$ but not $a$, we can absorb the third term to the left hand side to get
\begin{equation*}
\begin{split}
&\sum_{i\leq 1}||\nabla^{i}\phi'||_{L^2(S_{u,\ub})} \\
\leq &\sum_{i\leq 1}||\nabla^{i}\phi'||_{L^2(S_{0,\ub})}+C\sum_{i\leq 1}||(\nabla^iF)'||_{L^1_{\ub}L^2(S_{u,\ub})}\\
&+C\delta^{\frac 12}(\mathcal O''+\tilde{\mathcal O}'+\mathcal O' )(\sum_{i\leq 2}\sup_{\ub'\leq \ub}(||\nabla^iF||_{L^2(S_{u,\ub'})}+||\nabla^i\phi||_{L^2(S_{u,\ub'})})).
\end{split}
\end{equation*}
The proof for the $\nabla_3$ equation is analogous.
\end{proof}
\subsection{Estimates for the Ricci Coefficient Difference}\label{convsec4}
The results of the previous Subsection can now be used to estimate the norm ${\mathcal O}'$ in terms of $\mathcal R'$. 

In this Subsection and the subsequent Section \ref{convsec5}, in order to simplify notation, we will omit the superindices $(1)$ and $(2)$ in all the background quantities, for example, $\nab_3, \nab, \Omega, \beta...$

\begin{proposition}\label{Riccip}
There exists $\delta$ and $C$ depending only on the a priori estimates in Theorem \ref{timeofexistence} (but independent of $a$) such that
$$\mathcal O'\leq Ca+C\delta^{\frac 12}\mathcal R'+C\delta^{\frac 12}\tilde{\mathcal O}'.$$
\end{proposition}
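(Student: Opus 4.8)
The plan is to mimic the structure of the a priori estimates in Section \ref{Riccisec}, now applied to the difference quantities, using the transport-equation machinery for differences established in Proposition \ref{transportp}. First I would organize the Ricci coefficients according to which direction carries their good transport equation: the components $\trchb,\chibh,\etab,\omega$ (and the auxiliary $\omega^\dagger$) satisfy $\nab_3$ equations, while $\trch,\eta,\etab,\omegab$ (and $\omegab^\dagger$) satisfy $\nab_4$ equations, schematically $\nab_3\psi=\Psi_{\Hb}+\psi\psi$ and $\nab_4\psi=\Psi_H+\psi\psi$. For each such equation I would form the difference equation using Proposition \ref{peqn}/\ref{pcommuteeqn}, which produces on the right-hand side the terms $\Psi'$, $(\psi\psi)'\sim \psi'\psi$, plus the ``metric difference'' corrections involving $(\Omega^{-1})'$, $(\gamma^{-1}\chi)'$, $b'$ and $\partial b'/\partial\th$ — all of which, by the results of Section \ref{convsec2} (Propositions \ref{Omegap}--\ref{bp}), are bounded by $Ca+C\delta^{\frac12}(\mathcal O''+\tilde{\mathcal O}')$ times background quantities controlled by Theorem \ref{timeofexistence}. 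Crucially, exactly as in the a priori argument, $\alpha$ never appears in these $\nab_3$ and $\nab_4$ structure equations for $\psi\neq\chih$, and for $\chih$ itself one must \emph{not} use $\nab_4\chih=-\alpha+\psi\psi$ but instead the $\nab_3$ equation $\nab_3\chih=\nab\hot\eta+\psi\psi$; the difference version of this equation, by Proposition \ref{transportp}, controls $\sum_{i\le1}\|\nab^i\chih'\|_{L^2(S)}$ (integrating in $u$) in terms of $\sup\|\nab^2\eta'\|_{L^2(S)}\le \tilde{\mathcal O}'$ and the other difference norms — this explains why $\chih'$ (and similarly $\omega'$, whose difference estimate I would extract from the $\kappa$-type construction of Proposition \ref{omegaelliptic}) is only estimated in the weaker $L^{p_0}_{\ub}L^2(S)$ norm built into $\mathcal O'$, with the $L^{p_0}_{\ub}$ coming from the assumed data bound and Cauchy–Schwarz in the integration.

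Concretely I would first estimate the ``easy'' block $\{\trch,\eta,\etab,\omegab,\trchb,\chibh\}'$: applying Proposition \ref{transportp} to each difference structure equation, using the a priori $L^\infty$ bounds on $\psi$ and the $S$-norm bounds on $\nab^i\psi$ from Proposition \ref{Ricci}, and using that $\Psi'$ for $\Psi\in\{\beta,\rho,\sigma,\betab\}$ is controlled on $H_u$ and for $\Psi\in\{\rho,\sigma,\betab,\alphab\}$ on $\Hb_{\ub}$ by $\mathcal R'$ (via Cauchy–Schwarz in the transport direction, which gains a $\delta^{1/2}$). This yields
$$\sum_{i\le1}\sup_{u,\ub}\|\nab^i(\trch,\eta,\etab,\omegab,\trchb,\chibh)'\|_{L^2(S_{u,\ub})}\le Ca+C\delta^{\frac12}(\mathcal R'+\tilde{\mathcal O}'+\mathcal O'').$$
Then I would close the $\chih'$ and $\omega'$ estimates in the $L^{p_0}_{\ub}L^2(S)$ norm from their $\nab_3$ equations (for $\omega'$ one also needs the difference of the $\nab_3\omega=\frac12\rho+\psi\psi$ equation, which brings in $\rho'$, acceptable since $\rho\in\{\beta,\rho,\sigma,\betab\}$ and $\rho\in\{\rho,\sigma,\betab,\alphab\}$ so $\rho'$ is controlled by $\mathcal R'$). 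Adding everything and invoking Proposition \ref{OpSobolev} to replace $\mathcal O''$ by $C(\tilde{\mathcal O}'+\mathcal O')$, I get $\mathcal O'\le Ca+C\delta^{\frac12}(\mathcal R'+\tilde{\mathcal O}'+\mathcal O')$, and for $\delta$ small (depending only on the a priori constants, not on $a$) the last term is absorbed, giving the claimed bound.

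The main obstacle — and the step requiring genuine care rather than bookkeeping — is verifying that the difference equations for the Ricci coefficients genuinely contain no $\alpha'$ term. This is the ``remarkable fact'' flagged in the introduction and in the discussion opening Section \ref{convergence}: a priori, forming differences could break the special renormalized structure, since $\nab_4\chih$ does involve $\alpha$, and $\chih$ appears inside many nonlinearities. The resolution is that the correction terms in Proposition \ref{peqn} only ever involve $(\Omega^{-1})'$, $(\gamma^{-1}\chi)'$, $b'$, $(\partial b'/\partial\th)$ and background (undifferenced) Ricci/curvature quantities — never $\alpha'$ — and that we consistently use the $\alpha$-free structure equations ($\nab_3\chih$, not $\nab_4\chih$) for every component where $\alpha$ would otherwise intrude. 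I would therefore devote the bulk of the actual proof to carefully writing out, for $\chih'$ and $\omega'$, the difference equations and checking term by term that every curvature difference appearing is among $\{\beta,\rho,\sigma,\betab,\alphab\}'$, i.e.\ controllable by $\mathcal R'$, with $\alpha'$ genuinely absent; the remaining components are routine once the bookkeeping template of Proposition \ref{transportp} is in place.
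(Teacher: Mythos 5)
Your proposal reproduces the paper's argument in its essentials: apply Proposition \ref{transportp} to the differenced null structure equations, control $\chih'$ and $\omega'$ only in the weaker $L^{p_0}_{\ub}L^2(S)$ norm by working with their $\nab_3$ equations (using the renormalized $\rhoc$ as the curvature source for $\omega$ and integrating the resulting fixed-$\ub$ bound in $L^{p_0}_{\ub}$ so that the dangerous $\omega'\psi$, $\chih'\psi$ terms become absorbable), verify that $\alpha'$ never enters, and close by invoking Proposition \ref{OpSobolev} to write $\mathcal O''\le C(\tilde{\mathcal O}'+\mathcal O')$ and absorbing $C\delta^{1/2}\mathcal O'$. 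One internal inconsistency worth fixing: in your first paragraph you say you would extract the $\omega'$ estimate from the $\kappa$-construction of Proposition \ref{omegaelliptic}, but that construction is used in the paper only for the second-derivative norm $\tilde{\mathcal O}'$ (Propositions \ref{ellipticomegap}--\ref{Ricciellipticp}); for $\mathcal O'$ the paper --- and your own second paragraph --- correctly uses the direct $\nab_3\omega$ transport equation, and also the bound $\sup\|\nab^2\eta'\|_{L^2(S)}\le\tilde{\mathcal O}'$ you write is not literally true since $\tilde{\mathcal O}'$ is a flux norm, though the intended estimate $\int_0^u\|\nab^2\eta'\|_{L^2(S)}\,du'\le C\delta^{1/2}\tilde{\mathcal O}'$ is what is actually needed and is correct.
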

\begin{proof}
We control $\mathcal O'$ using the null structure equations.
First, we consider the Ricci coefficients $\trchb, \chibh, \etab$. They all satisfy $\nabla_3$ equations of the form
$$\nabla_3\psi=\psi\psi+\Psi$$
such that $\chih, \omega$ do not appear in the $\psi$ terms on the right hand side; and that $\beta$ does not appear as $\Psi$. This is important because $\chih'$ and $\omega'$ cannot be controlled by $C\mathcal O'$ in the $L^2(S)$ norms and $\beta'$ cannot be controlled by $C\mathcal R'$ in $L^2(\Hb)$. In order to estimate $\psi$ using this equation, we need to estimate the curvature term:
\begin{equation*}
\begin{split}
\sum_{i\leq 1}||(\nabla^i\Psi)'||_{L^1_uL^2(S)}\leq C\delta^{\frac 12}(\mathcal O''+\tilde{\mathcal O}')+C\delta^{\frac 12}\mathcal R'.
\end{split}
\end{equation*}
and the nonlinear terms:
\begin{equation*}
\begin{split}
&\sum_{i\leq 1}||(\nabla^i(\psi\psi))'||_{L^1_u L^2(S_{u,\ub})}\\
\leq &C\delta^{\frac 12}(\mathcal O''+\tilde{\mathcal O}')+C \delta^{\frac 12}\sum_{i_1+i_2\leq 1}\sup_{u'\leq u}||\gamma'\nabla^{i_1}\psi\nabla^{i_2}\psi||_{ L^2(S_{u',\ub})}\\
&+C\delta^{\frac 12}\sum_{i_1+i_2\leq 1}\sup_{u'\leq u}||\nabla^{i_1}\psi\nabla^{i_2}\psi'||_{L^2(S_{u',\ub})}\\
\leq &C\delta^{\frac 12}(\mathcal O''+\tilde{\mathcal O}'+\mathcal O').
\end{split}
\end{equation*}
By Proposition \ref{transportp}, we thus have
\begin{equation}\label{Ricci31}
\begin{split}
&\sum_{i\leq 1}\sup_{u}||\nabla^i(\trchb',\chibh',\etab')||_{L^2(S_{u,\ub})}\\
\leq &\sum_{i\leq 1}||\nabla^i(\trchb',\chibh',\etab')||_{L^2(S_{0,\ub})}+ C\delta^{\frac 12}\sup_{u}(\mathcal O''_{u}+\tilde{\mathcal O}'_{u}+\mathcal O')+C\delta^{\frac 12}\mathcal R'\\
\leq & a+ C\delta^{\frac 12}(\mathcal O''+\tilde{\mathcal O}'+\mathcal O')+C\delta^{\frac 12}\mathcal R'.
\end{split}
\end{equation}
We now consider the equation for $\omega$, which schematically looks like
$$\nabla_3\omega=\omega\psi+\chih\psi+\psi\psi+\rhoc,$$
where $\psi\neq \chih,\omega$ are the good components as above. The term $\psi'\omega$ or $\psi'\chih$ can be estimated since $\psi'$ can be put in an appropriate $L^p(S)$ norm and be controlled by $C\mathcal O'$. However, the most difficult terms are $\omega'\psi$ and $\chih'\psi$ since $\omega'$ and $\chih'$ cannot be bounded in $L^1_uL^p(S)$ by $C\mathcal O'$. Thus, by Proposition \ref{transportp}, we have
\begin{equation*}
\begin{split}
&\sum_{i\leq 1}||\nabla^i\omega'||_{L^2(S_{u,\ub})}\\
\leq &C\sum_{i\leq 1}||\nabla^i\omega'||_{L^2(S_{0,\ub})}+ C\sum_{i_1+i_2\leq 1}||\nabla^{i_1}\psi\nabla^{i_2}(\chih',\omega')||_{L^1_u L^p(S_{u,\ub})}+C\delta^{\frac 12}\mathcal R'+C\delta^{\frac 12}(\mathcal O''+\tilde{\mathcal O}'+\mathcal O').
\end{split}
\end{equation*}
Since this holds for every $\ub$, we can integrate in $L^2$ in $\ub$ to get
\begin{equation}\label{Ricci32}
\begin{split}
&\sum_{i\leq 1}||\nabla^i\omega'||_{L^2_{\ub}L^2(S_{u,\ub})}\\
\leq &C\sum_{i\leq 1}||\nabla^i\omega'||_{L^2_{\ub}L^2(S_{0,\ub})}+  \int_0^u||\nabla^{i_1}\psi\nabla^{i_2}(\chih',\omega')||_{L^2_{\ub} L^2(S_{u',\ub})} du'+C\mathcal R'+C\delta^{\frac 12}(\mathcal O''+\tilde{\mathcal O}'+\mathcal O')\\
\leq & a+C\mathcal R'+C\delta^{\frac 12}(\mathcal O''+\tilde{\mathcal O}'+\mathcal O'),
\end{split}
\end{equation}
since we can control $\nabla^i\omega'$ by $C\mathcal O'$ after integrating along the $\ub$ direction.
We can estimate $\chih$ in a similar manner as $\omega$. $\chih$ satisfies
$$\nabla_3\chih=\psi\psi+\psi\chih+\nabla\eta,$$
where $\psi\neq \chih,\omega$. Putting $\nabla\eta$ in the $\tilde{\mathcal O}'$ norm and using already obtained estimates, we have
\begin{equation*}
\begin{split}
&\sum_{i\leq 1}||\nabla^i\chih'||_{L^2(S_{u,\ub})}\\
\leq &C\sum_{i\leq 1}||\nabla^i\chih'||_{L^2(S_{0,\ub})}+ C\sum_{i_1+i_2\leq 1}||\nabla^{i_1}\psi \nabla^{i_2}\chih'||_{L^1_u L^2(S_{u,\ub})}+C\delta^{\frac 12}(\mathcal O''+\tilde{\mathcal O}'+\mathcal O').
\end{split}
\end{equation*}
As for $\omega'$, we now integrate in $L^2$ in $\ub$ to get
\begin{equation}\label{Ricci33}
\begin{split}
&\sum_{i\leq 1}||\nabla^i\chih'||_{L^2_{\ub}L^2(S_{u,\ub})}\\
\leq &\sum_{i\leq 1}||\nabla^i\chih'||_{L^2_{\ub}L^2(S_{0,\ub})}+ C \int_0^u||\nabla^{i_1}\psi \nabla^{i_2}\chih'||_{L^2_{\ub} L^2(S_{u',\ub})}du'+C\delta^{\frac 12}(\mathcal O''+\tilde{\mathcal O}'+\mathcal O')\\
\leq &a+C\delta^{\frac 12}(\mathcal O''+\tilde{\mathcal O}'+\mathcal O').
\end{split}
\end{equation}
It remains to consider $\trch,\eta,\omegab$. They satisfy $\nabla_4$ equations of the form:
$$\nabla_4\psi=\psi\psi+\Psi,$$
where $\psi$ can be any Ricci coefficients and $\Psi$ can be any curvature components $\neq\alpha,\alphab$, i.e., all the $\Psi$ terms can be controlled in $L^2(H)$ by $\mathcal R'$. Moreover, since we are now integrating in $\ub$, all terms $\psi'$ can be estimated by $C\mathcal O'$. By Proposition \ref{transportp}, we have
$$\sum_{i\leq 1}||\nabla^i(\trch',\eta',\omegab')||_{L^2(S_{u,\ub})}\leq \sum_{i\leq 1}||\nabla^i(\trch',\eta',\omegab')||_{L^2(S_{0,\ub})}+ C\mathcal R'+C\int_0^{\ub}\mathcal O' d\ub'.$$
Therefore,
$$\sum_{i\leq 1}||\nabla^i(\trch',\eta',\omegab')||_{L^2(S_{u,\ub})}\leq a+C\delta^{\frac 12}\mathcal R'+C\delta^{\frac 12}(\sup_u\mathcal O''+\tilde{\mathcal O}'+\mathcal O').$$
Using $\mathcal O''\leq C(\tilde{\mathcal O}'+\mathcal O')$ and all the above estimates, we have
$$\mathcal O'\leq Ca+C\delta^{\frac 12}\mathcal R'+C\delta^{\frac 12}(\tilde{\mathcal O}'+\mathcal O').$$
By choosing $\delta$ sufficiently small so that $C\delta^{\frac 12}\leq \frac 12$, we have
$$\mathcal O'\leq Ca+C\delta^{\frac 12}\mathcal R'+C\delta^{\frac 12}\tilde{\mathcal O}'.$$
Notice that in particular $\delta$ and $C$ depend only on the a priori estimates from Theorem \ref{timeofexistence} and are independent of $a$.
\end{proof}
We now move on to estimate $\tilde{\mathcal O}'$ by $\mathcal R'$. Recall from Propositions \ref{Theta} and \ref{ellipticTheta} that $\tilde{\mathcal O}$ was controlled using a combination of transport equations for $\Theta$ and Hodge systems. The norm $\tilde{\mathcal O}'$ can be dealt with in a similar fashion. In order to perform this scheme, we need show that as was with the case of the norm $\tilde{\mathcal O}$ where we controlled $\nab\psi$ from $\Theta$, the difference $\psi'$ satisfies elliptic equations with $\Theta'$ as a source. This is given by the following Proposition:
\begin{proposition}\label{commuteelliptic}
Let $\phi$ be a $(0,r)$-tensorfield. Suppose 
$$\div\phi=f,\quad\curl\phi=g,\quad\tr\phi=h$$
on each of the spacetimes. Then
$$\div\phi'\sim f'+ (\gamma^{-1})'\nabla\phi+\Gamma'\phi,$$
$$\curl\phi'\sim g'+ (\eps)'\nabla\phi,$$
$$\tr\phi' \sim h' + (\gamma^{-1})'\phi.$$
\end{proposition}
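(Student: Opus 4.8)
The statement to prove, Proposition \ref{commuteelliptic}, is a routine but necessary computation: it says that the divergence, curl, and trace of the \emph{difference} of a tensorfield agree, up to error terms involving the difference of the metric (and connection and volume form), with the \emph{difference} of the divergence, curl, and trace.

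The plan is to work in the local coordinate system $(u,\ub,\th^1,\th^2)$ shared by the two spacetimes, and write all the operators $\div$, $\curl$, $\tr$ explicitly in coordinates in terms of $\gamma^{(i)}$, $(\gamma^{(i)})^{-1}$, the Christoffel symbols $\Gamma^{(i)}$ of $\gamma^{(i)}$ on the spheres, and the volume form $\eps^{(i)}$. For instance, schematically $(\div\phi)_{A_1\dots A_r} = (\gamma^{-1})^{BC}\big(\partial_C\phi_{BA_1\dots A_r} - \Gamma\cdot\phi\big)$, and similarly for $\curl$ (replacing $(\gamma^{-1})^{BC}$ by $\eps^{BC}$) and for $\tr$ (just a contraction with $(\gamma^{-1})^{BC}$). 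These identities hold on each spacetime $(1)$ and $(2)$.

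First I would subtract the coordinate expression on spacetime $(2)$ from that on spacetime $(1)$. Because $\div$ is \emph{linear} in $\phi$, every term that survives is either (a) the operator on $(1)$ applied to $\phi' = \phi^{(1)}-\phi^{(2)}$, or (b) a term in which a metric-type coefficient appears in difference form — $(\gamma^{-1})'$, $\Gamma'$, or $\eps'$ — multiplied by an undifferentiated or once-differentiated factor of the background tensor $\phi$ (here $\phi$ denotes either $\phi^{(1)}$ or $\phi^{(2)}$, which are interchangeable in such error terms up to further terms of the same schematic type). The key algebraic point is the elementary telescoping identity $a_1 b_1 - a_2 b_2 = a_1 (b_1-b_2) + (a_1-a_2) b_2 = a_1 b' + a' b_2$, applied repeatedly to each product of a metric coefficient with a derivative of $\phi$. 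Carrying this out term by term for $\div$ produces the first displayed formula (the $(\gamma^{-1})'\nabla\phi$ term comes from the leading contraction, the $\Gamma'\phi$ term from the Christoffel correction); doing the same for $\curl$ gives the $(\eps)'\nabla\phi$ error term (note $\curl$ has no Christoffel correction when $\phi$ is totally symmetric, as can be checked by the usual antisymmetrization, which is why only one error term appears); and for $\tr$, since there is no derivative at all, one simply gets $\tr\phi' \sim h' + (\gamma^{-1})'\phi$. Finally, one replaces the appearance of the coordinate derivative $\partial_{\th}\phi$ by $\nabla^{(1)}\phi$ at the cost of another $\Gamma\phi$ term, which is absorbed into the schematic $\Gamma'\phi$ or $(\gamma^{-1})'\nabla\phi$ notation (using the already-established pointwise control on $\Gamma$ and $\gamma$ from Propositions \ref{gamma} and \ref{Gammap}).

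There is no serious obstacle here; this is bookkeeping. The only mild subtlety — and the step I would be most careful with — is making sure the schematic notation on the right-hand side genuinely captures everything: in particular that all the error terms really do involve only $(\gamma^{-1})'$, $\Gamma'$, or $\eps'$ (each of which, by the metric estimates of Section \ref{convsec2}, is controlled in the relevant $L^4(S)$ or $L^2(S)$ norm by $a + \delta^{1/2}(\mathcal O'' + \tilde{\mathcal O}')$), and never the difference of a \emph{curvature} component or of $\chih$ or $\omega$ in a bad norm — so that when this Proposition is fed into the elliptic estimates of Proposition \ref{ellipticthm} to recover $\nabla^2\psi'$ from $\Theta'$, the resulting bound closes. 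Since $\Gamma'$ is itself a linear combination of $\gamma'$ and its first angular derivative (as noted in the proof of Proposition \ref{Gammap}), and $\eps'$ is likewise controlled by $(\det\gamma)'$ and $\gamma'$, all error terms are of the advertised form, and the Proposition follows by direct computation.
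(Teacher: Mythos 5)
Your approach is exactly what the paper does: the paper's own proof consists of the single sentence ``This is a straightforward coordinate computation,'' and you have carried out precisely that computation --- write $\div$, $\curl$, $\tr$ in coordinates, subtract, and telescope the products, keeping track of which factors appear as differences. The identification of the error terms as $(\gamma^{-1})'$, $\Gamma'$, $\eps'$ multiplied by $\phi$ or its first derivative, and the remark that these are exactly the quantities already controlled pointwise by Propositions \ref{gammap} and \ref{Gammap}, is the substance of the argument and is correct.

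One caveat worth flagging: your parenthetical justification for the curl formula --- that ``$\curl$ has no Christoffel correction when $\phi$ is totally symmetric'' --- is only true for one-forms ($r=0$). For a $(0,r+1)$ tensor with $r\geq 1$, $(\curl\phi)_{A_1\dots A_r}=\eps^{BC}\nabla_B\phi_{CA_1\dots A_r}$ still carries Christoffel terms of the form $-\sum_i\eps^{BC}\Gamma^D_{BA_i}\phi_{CA_1\dots D\dots A_r}$; the antisymmetrization only kills the one correction $\eps^{BC}\Gamma^D_{BC}\phi_{D\dots}$ coming from the $C$ slot. So the honest schematic identity for the curl also contains a $\Gamma'\phi$ term, i.e.\ $\curl\phi'\sim g'+\eps'\nabla\phi+\Gamma'\phi$, just as for $\div$. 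The paper's displayed formula is simply being a bit loose here, and in practice it does not matter --- the extra $\Gamma'\phi$ term is of the same type as the one you already account for in the $\div$ case and is absorbed identically when this proposition is fed into Proposition \ref{ellipticthmp} --- but the claim that it vanishes for general symmetric $\phi$ is incorrect and should be replaced by the observation that it is harmless.
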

\begin{proof}
This is a straightforward coordinate computation.
\end{proof}
Based on the estimates we have on $\gamma'$, $(\gamma^{-1})'$ and $\Gamma'$, and using the estimates in Proposition \ref{ellipticthm}, we have
\begin{proposition}\label{ellipticthmp}
Let $\phi^{(1)}$ and $\phi^{(2)}$ be totally symmetric $r+1$ covariant tensorfields on 2-spheres $(\mathbb S^2,\gamma^{(1)})$, $(\mathbb S^2,\gamma^{(2)})$ respectively satisfying
$$\div\phi^{(i)}=f^{(i)},\quad \curl\phi^{(i)}=g^{(i)},\quad \mbox{tr}\phi^{(i)}=h^{(i)},$$
for $i=1,2$. 
Then
\begin{equation*}
\begin{split}
||\nabla^{2}\phi'||_{L^2(S)}\leq &Ca\sum_{i\leq 1}(||\nabla^{i}f||_{L^2(S)}+||\nabla^{i}g||_{L^2(S)}+||\nabla^{i}h||_{L^2(S)}+||\phi||_{L^2(S)})\\
&+C\sum_{i\leq 1}(||\nabla^{i}f'||_{L^2(S)}+||\nabla^{i}g'||_{L^2(S)}+||\nabla^{i}h'||_{L^2(S)}+||\phi'||_{L^2(S)})
\end{split}
\end{equation*}

\end{proposition}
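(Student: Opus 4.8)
The plan is to prove Proposition \ref{ellipticthmp} by applying the fixed-background elliptic estimate of Proposition \ref{ellipticthm} to the tensorfield $\phi'$ on the 2-sphere $(\mathbb{S}^2, \gamma^{(1)})$, after deriving a Hodge system for $\phi'$ from the Hodge systems for $\phi^{(1)}$ and $\phi^{(2)}$ via Proposition \ref{commuteelliptic}. The point is that although $\phi'$ does not literally satisfy a clean Hodge system on $(\mathbb{S}^2,\gamma^{(1)})$, Proposition \ref{commuteelliptic} tells us that $\div^{(1)}\phi' = f' + (\gamma^{-1})'\nabla\phi + \Gamma'\phi$, and similarly for $\curl$ and $\tr$, where the extra terms involve the differences $(\gamma^{-1})'$, $\Gamma'$, $\eps'$ of the background geometric quantities. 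These difference terms have all been controlled in Section \ref{convsec2}: Proposition \ref{normscomparable}/\ref{gammap} gives the pointwise bound on $(\gamma^{-1})'$, Proposition \ref{Gammap} gives $\|\Gamma'\|_{L^4(S)} \leq Ca + C\delta^{\frac12}\mathcal{O}''$, and $\eps'$ is controlled similarly through $(\det\gamma)'$ and $\gamma'$.

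First I would record the Hodge system for $\phi'$ from Proposition \ref{commuteelliptic}, writing $\div^{(1)}\phi' = \tilde f$, $\curl^{(1)}\phi' = \tilde g$, $\tr^{(1)}\phi' = \tilde h$ with
$$\tilde f = f' + (\gamma^{-1})'\nabla\phi + \Gamma'\phi, \quad \tilde g = g' + \eps'\nabla\phi, \quad \tilde h = h' + (\gamma^{-1})'\phi.$$
Next I would apply Proposition \ref{ellipticthm} on $(\mathbb{S}^2,\gamma^{(1)})$ with $i = 2$, using that the Gauss curvature $K^{(1)}$ and its first derivative are controlled by Proposition \ref{Kest} (this is part of the a priori estimates from Theorem \ref{timeofexistence}). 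This yields
$$\|\nabla^2\phi'\|_{L^2(S)} \leq C\sum_{j\leq 1}\left(\|\nabla^j\tilde f\|_{L^2(S)} + \|\nabla^j\tilde g\|_{L^2(S)} + \|\nabla^j\tilde h\|_{L^4(S)} + \|\phi'\|_{L^2(S)}\right).$$
Then I would expand the right-hand side: the $f', g', h'$ pieces give the terms in the second line of the claimed estimate directly, while the pieces involving $(\gamma^{-1})'$, $\Gamma'$, $\eps'$ are handled by Hölder's inequality — e.g. $\|(\gamma^{-1})'\nabla\phi\|_{L^2(S)} \leq \|(\gamma^{-1})'\|_{L^\infty(S)}\|\nabla\phi\|_{L^2(S)}$ and $\|\Gamma'\phi\|_{L^2(S)} \leq \|\Gamma'\|_{L^4(S)}\|\phi\|_{L^4(S)}$ — and absorbing the smallness factors $\delta^{\frac12}\mathcal{O}''$ together with the $a$, absorbing any residual $\delta^{\frac12}$-small contributions (noting that by Proposition \ref{OpSobolev}, $\mathcal{O}''$ is already controlled by $\tilde{\mathcal{O}}' + \mathcal{O}'$, so these will be handled in the bootstrap in a later proposition) into the coefficients. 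When differentiating $\tilde f$ once more, the commutation between $\nabla$ and the difference structure again only produces background difference terms already estimated in Section \ref{convsec2}, plus terms where $\nabla$ lands on $\nabla\phi$ or $\phi$, which are controlled by the a priori $\mathcal{O}$, $\mathcal{R}$ norms.

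The main obstacle is bookkeeping rather than conceptual: one must be careful that the worst difference term $\Gamma'\phi$ only needs $\|\phi\|_{L^4(S)}$ (not $L^\infty$) since $\|\Gamma'\|_{L^4(S)}$ is the best we have, and that when differentiated we get $\nabla\Gamma'\cdot\phi$ which requires the $L^2(S)$ bound on second coordinate derivatives of $\gamma'$ from Proposition \ref{ddgammap} — hence the appearance of $\tilde{\mathcal{O}}'$ in that estimate and the reason the output here is stated cleanly as $Ca(\ldots) + C(\ldots)$ with the understanding that the $\mathcal{O}''$, $\tilde{\mathcal{O}}'$ contributions come with $\delta^{\frac12}$ smallness and are reabsorbed downstream. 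I would also need to verify that all norms appearing — in particular those on the $\phi^{(1)}$ side, $\|\nabla^i f\|_{L^2(S)}$ etc. — are taken with respect to $\gamma^{(1)}$, which is consistent since by Proposition \ref{normscomparable} the norms with respect to the two metrics differ only by a factor $1 + C(a + \delta^{\frac12}\mathcal{O}'')$. This is a direct, if slightly tedious, consequence of the preceding machinery and requires no new ideas.
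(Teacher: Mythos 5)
Your proposal is correct and follows exactly the route the paper takes: the paper's proof of Proposition~\ref{ellipticthmp} is the one-line remark that it is ``a direct consequence of Propositions~\ref{commuteelliptic} and~\ref{ellipticthm},'' and your proposal is precisely the expansion of that one line, applying the fixed-background elliptic estimate to $\phi'$ with the modified Hodge system from Proposition~\ref{commuteelliptic} and estimating the $(\gamma^{-1})'$, $\Gamma'$, $\eps'$ error terms via H\"older using the bounds from Section~\ref{convsec2}.
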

\begin{proof}
This is a direct consequence of Propositions \ref{commuteelliptic} and \ref{ellipticthm}.
\end{proof}

We can now begin to estimate $\tilde{\mathcal O}'$. First, we define $\Theta'$ and derive estimates for $\Theta'$ using transport equations. Let $\Theta'$ denote $(\nabla\trch)',(\nabla\trchb)',\mu',\mub',\kappa',\kappab'$. We have the following estimates:
\begin{proposition}\label{Thetap}
$$\sum_{i\leq 1}||\nabla^i(\nabla\trch',\nabla\trchb',\mu',\mub')||_{L^2(S_{u,\ub})}\leq Ca+C\delta^{\frac 12}(\tilde{\mathcal O}'+\mathcal R').$$
\end{proposition}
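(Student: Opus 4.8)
\textbf{Proof proposal for Proposition \ref{Thetap}.}

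The plan is to treat each of the quantities $\nabla\trch'$, $\nabla\trchb'$, $\mu'$, $\mub'$ (the primed $\kappa'$, $\kappab'$ are handled in the companion proposition following the scheme of Proposition \ref{omegaelliptic}) via the transport-equation estimate for difference quantities established in Proposition \ref{transportp}. Recall from the proof of Proposition \ref{ellipticTheta} that on each background spacetime these four $\Theta$ quantities satisfy null transport equations of the schematic form $\nabla_3\Theta = \psi\nabla\psi_{\Hb} + \psi\psi\psi + \psi\Psi_{\Hb}$ or $\nabla_4\Theta = \psi\nabla\psi_H + \psi\psi\psi + \psi\Psi_H$, with the crucial feature that no $\nabla\Psi$ term and no $\alpha$ appears on the right-hand side, and — for the $\nabla_3$ equations — no $\beta$ among the curvature terms and no $\nabla\chih$, $\nabla\omega$ among the $\nabla\psi_{\Hb}$ terms (and symmetrically for the $\nabla_4$ equations). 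First I would take the difference of these equations between spacetimes $(1)$ and $(2)$; by the Leibniz rule the right-hand side $F'$ (or $G'$) is a sum of terms each of which is a background factor times a primed factor. Then I would apply Proposition \ref{transportp} with $\phi = \Theta$, so that
$$\sum_{i\leq 1}\|\nabla^i\Theta'\|_{L^2(S_{u,\ub})} \leq \sum_{i\leq 1}\|\nabla^i\Theta'\|_{L^2(S_{0,\ub})} + C\int \sum_{i\leq1}\|(\nabla^i F)'\|_{L^2(S)} + C\delta^{\frac12}(\mathcal O''+\tilde{\mathcal O}'+\mathcal O')(\cdots),$$
the last factor being controlled by the already-established a priori estimates of Theorem \ref{timeofexistence}.

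The second step is to bound $\sum_{i\leq 1}\|(\nabla^i F)'\|_{L^1 L^2(S)}$ term by term. Every term of $F'$ is of the form (background $\times$ primed), where the background factors are $\psi$, $\nabla\psi_{\Hb}$ (or $\nabla\psi_H$), or $\Psi_{\Hb}$ (or $\Psi_H$), all of which are uniformly bounded in the relevant norms by Proposition \ref{Ricci}, Proposition \ref{Riccielliptic} and the definition of $\mathcal R$; and the primed factors are $\psi'$, $(\nabla\psi_{\Hb})'$ and $\Psi_{\Hb}'$. The point is that $\psi'$ can be estimated by $\mathcal O'$; $(\nabla\psi_{\Hb})'$, after using Proposition \ref{angularpL4} to replace it by $\nabla(\psi_{\Hb}')$ up to harmless error terms, is controlled by $\tilde{\mathcal O}'$ since $\psi_{\Hb}\neq\chih,\omega$ for the $\nabla_3$ equations (and $\psi_H\neq\chibh,\omegab$ for the $\nabla_4$ equations); and $\Psi_{\Hb}'$ is controlled by $\mathcal R'$ since $\beta$ does not appear in the $\nabla_3$ equations. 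Picking up a factor of $\delta^{\frac12}$ (from Cauchy--Schwarz in the $u$ or $\ub$ integration, respectively, when integrating on $\Hb_{\ub}$ or $H_u$), every such term is bounded by $C\delta^{\frac12}(\mathcal O' + \tilde{\mathcal O}' + \mathcal R')$. One must also account for the error terms in Proposition \ref{transportp} coming from the difference of connections ($(\Omega^{-1})'$, $(\gamma^{-1}\chi)'$, $b'$, $\partial b'/\partial\th$), but these were already estimated in Section \ref{convsec2} and contribute $C\delta^{\frac12}(\mathcal O'' + \tilde{\mathcal O}' + \mathcal O')$, hence are absorbed similarly. Combining with the data bound $\sum_{i\leq 1}\|\nabla^i\Theta'\|_{L^2(S_{0,\ub})}\leq Ca$ (which follows from the hypotheses of Theorem \ref{convergencethm} on $\nabla^i(\trch',\trchb')$, $\nabla^i(\eta',\etab')$ and $\nabla^i\Psi'$ at $u=0$ or $\ub=0$, together with the definitions of $\mu,\mub$) and using $\mathcal O''\leq C(\tilde{\mathcal O}'+\mathcal O')$ from Proposition \ref{OpSobolev} and $\mathcal O'\leq Ca + C\delta^{\frac12}(\mathcal R'+\tilde{\mathcal O}')$ from Proposition \ref{Riccip}, one arrives at $\sum_{i\leq 1}\|\nabla^i(\nabla\trch',\nabla\trchb',\mu',\mub')\|_{L^2(S_{u,\ub})}\leq Ca + C\delta^{\frac12}(\tilde{\mathcal O}' + \mathcal R')$.

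The main obstacle I anticipate is the bookkeeping needed to verify that the favorable structural properties of the $\Theta$ equations (no $\alpha$, no $\nabla\Psi$, and the absence of $\beta$ / $\nabla\chih$ / $\nabla\omega$ on the $\nabla_3$ side) survive the differencing operation. In principle, taking differences could destroy these cancellations — for instance, one needs that the renormalization $\rhoc = \rho - \frac12\chibh\cdot\chih$ used to eliminate $\alpha$ from the $\nabla_4\rho$ equation still eliminates it at the level of $\rhoc'$, which works precisely because $(\nabla_4\chih)' $ contributes $-\alpha'$ and this is cancelled against the $\alpha'$ produced by differencing $\nabla_4\rho' $. This is the same "remarkable fact" emphasized in the introduction, and the verification amounts to checking that the Leibniz expansion of each differenced equation contains no primed $\alpha$ and no $\nabla$ of a primed curvature component. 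A secondary technical point is the careful use of Proposition \ref{angularpL4} (and the comparability of the two connections) every time one needs to exchange $(\nabla^i\psi)'$ for $\nabla^i(\psi')$; the discrepancy is always lower order and absorbed into the $Ca + C\delta^{\frac12}(\cdots)$ terms, but it must be tracked to ensure no uncontrolled top-order derivative of a background quantity is generated.
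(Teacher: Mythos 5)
Your proposal is correct and takes essentially the same approach as the paper's proof: apply Proposition \ref{transportp} to the difference $\Theta'$ of the four renormalized quantities, then bound the differenced source terms of the schematic equations $\nab_3\Theta=\psi\nab\psi_{\Hb}+\psi\psi\psi+\psi\Psi_{\Hb}$ and $\nab_4\Theta=\psi\nab\psi_H+\psi\psi\psi+\psi\Psi_H$ term by term using the a priori estimates, the norms $\mathcal O'$, $\tilde{\mathcal O}'$, $\mathcal R'$, and the connection-comparison Propositions \ref{angularpL4}/\ref{angularpL2}. Your emphasis on verifying that the differencing preserves the favorable structural cancellations (no $\alpha'$, no $\nab\Psi'$, and the matching of $\psi_H$/$\psi_{\Hb}$ with the integration direction) is exactly the key point the paper relies on.
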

\begin{proof}
From the proof of Proposition \ref{ellipticTheta}, we know that each of $\nabla\trch,\nabla\trchb,\mu,\mub$ satisfies either
$$\nab_3\Theta=\psi\nab\psi_{\Hb}+\psi\psi\psi+\psi\Psi_{\Hb},$$
or
$$\nab_4\Theta=\psi\nab\psi_{H}+\psi\psi\psi+\psi\Psi_H,$$
where as in Section \ref{Ricciellipticsec}, we use the notation $\psi_H\in\{\trch,\chih,\eta,\etab,\omega,\trchb\}$, $\psi_{\Hb}\in\{\trch,\eta,\etab,\omegab,\trchb,\chibh\}$, $\Psi_H\in\{\beta,\rho,\sigma,\betab\}$ and $\Psi_{\Hb}\in\{\rho,\sigma,\betab,\alphab\}$.

We focus on those $\Theta$'s satisfying the $\nab_4$ equation. The other case can be treated analogously. By Proposition \ref{transportp}, we need to control
$$\sum_{i\leq 1}||(\nabla^i(\psi\nab\psi_{H}+\psi\psi\psi+\psi\Psi_{H}))'||_{L^1_{\ub}L^2(S_{u,\ub})},$$
$$\sum_{i\leq 1}||\nabla^i(\psi\nab\psi_{H}+\psi\psi\psi+\psi\Psi_{H})||_{L^2(S_{u,\ub})},$$
and 
$$\sum_{i\leq 1}||\nabla^i\Theta||_{L^2(S_{u,\ub})}.$$
We first estimate the $'$ terms. Among those we first look at the term with two derivatives on the Ricci coefficient. We have, using Proposition \ref{Riccip}, 
$$||(\psi\nabla^2\psi_H)'||_{L^1_{\ub}L^2(S_{u,\ub})}\leq C\tilde{\mathcal O}'+ C||\psi'||_{L^1_{\ub}L^\infty(S_{u,\ub})}+C\delta^{\frac 12}||\nabla^2\psi_H'||_{L^2_{\ub}L^2(S_{u,\ub})}\leq C(a+\tilde{\mathcal O}'+\delta^{\frac 12}\mathcal R').$$
The other term with Ricci coefficients can be estimated by $\mathcal O'$ and thus by Proposition \ref{Riccip}:
$$\sum_{i_1+i_2\leq 1} ||(\psi^{i_1}\nabla^{i_2}\psi\nabla\psi)'||_{L^1_{\ub}L^2(S_{u,\ub})}\leq C(a+\delta^{\frac 12}\mathcal R'+\delta^{\frac 12}\tilde{\mathcal O}').$$
We then move to the term with the curvature components:
\begin{equation*}
\begin{split}
&\sum_{i_1+i_2+i_3\leq 1} ||(\psi^{i_1}\nabla^{i_2}\psi\nabla^{i_3}\Psi_{H})'||_{L^1_{\ub}L^2(S_{u,\ub})}\\
\leq &C(a+\delta^{\frac 12}\mathcal R'+\delta^{\frac 12}\tilde{\mathcal O}')+C\delta^{\frac 12}\sum_{i\leq 1}||\nab^i\Psi_{H}'||_{L^2_{\ub}L^2(S_{u,\ub})}\\
\leq &C(a+\delta^{\frac 12}\mathcal R'+\delta^{\frac 12}\tilde{\mathcal O}').
\end{split}
\end{equation*}
The terms without $'$ can be estimated using the a priori estimates in Theorem \ref{timeofexistence}:
$$\sum_{i\leq 1}(||\nabla^i(\psi\nab\psi_{H}+\psi\psi\psi+\psi\Psi_{H})||_{L^2(S_{u,\ub})}+
||\nabla^i\Theta||_{L^2(S_{u,\ub})})\leq C.$$
Thus, by Proposition \ref{transportp}, we have
$$\sum_{i\leq 1}||\nabla^i(\nabla\trch',\nabla\trchb',\mu',\mub')||_{L^2(S_{u,\ub})}\leq Ca+C\delta^{\frac 12}(\tilde{\mathcal O}'+\mathcal R').$$
\end{proof}
We now consider the difference quantities $\kappa'$ and $\kappab'$. As in Proposition \ref{omegaelliptic}, they will only satisfy $L^2(H_u)$ and $L^2(\Hb_{\ub})$ estimates. As in the proof of Proposition \ref{omegaelliptic}, in what follows, we allow $\psi$ also to be $\omega^\dagger$ and $\omegab^\dagger$.
\begin{proposition}\label{ellipticomegap}
$$\sum_{i\leq 1}||\nabla^i\kappa'||_{L^2(H_u)}\leq C\delta^{\frac 12}(a+\tilde{\mathcal O}'+\mathcal R'),$$
$$\sum_{i\leq 1}||\nabla^i\kappab'||_{L^2(\Hb_{\ub})}\leq C\delta^{\frac 12}(a+\tilde{\mathcal O}'+\mathcal R').$$
\end{proposition}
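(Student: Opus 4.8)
\textbf{Proof proposal for Proposition \ref{ellipticomegap}.}

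The plan is to mimic the proof of Proposition \ref{omegaelliptic}, but working throughout with the difference quantities $\kappa'$ and $\kappab'$ and using the transport-equation machinery for differences developed in Section \ref{convsec3} (Propositions \ref{peqn}--\ref{transportp}) together with the bounds on the metric differences from Section \ref{convsec2}. Recall that
$$\kappa=\nabla\omega+{ }^*\nabla\omega^\dagger-\tfrac 12\beta,\qquad \kappab=-\nabla\omegab+{ }^*\nabla\omegab^\dagger-\tfrac 12\betab,$$
and that in each spacetime these satisfy (\ref{kappaeqn}), (\ref{kappabeqn}), i.e. $\nabla_3\kappa=\psi\Psi+\psi\nabla\psi+\psi\psi\psi$ and $\nabla_4\kappab=\psi\Psi+\psi\nabla\psi+\psi\psi\psi$, where $\psi$ is now also allowed to denote $\omega^\dagger,\omegab^\dagger$. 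The crucial structural point, exactly as in Proposition \ref{omegaelliptic}, is that $\alpha$ does not appear on the right-hand side of either equation; the same will remain true for the difference equations after applying Proposition \ref{peqn}, since the correction terms there involve only $\Omega',b',(\gamma^{-1}\chi)'$ and the background quantities appearing in (\ref{kappaeqn}), none of which is $\alpha$. First I would take the difference of (\ref{kappaeqn}) using Proposition \ref{pcommuteeqn} (commuted with up to one angular derivative), obtaining schematically
$$\nabla_3\nabla^i\kappa'=\sum(\psi^{i_1}\nabla^{i_2}\psi\nabla^{i_3}\Psi)'+\sum(\psi^{i_1}\nabla^{i_2}\psi\nabla^{i_3}\psi)'+(\Omega^{-1})'(\ldots)+(\gamma^{-1}\chi)'\nabla^i\kappa$$
for $i\leq 1$, and analogously for $\kappab'$ with $\nabla_4$.

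Next I would estimate the right-hand side term by term in $L^1_u L^2(S)$ (respectively $L^1_{\ub}L^2(S)$ for $\kappab'$), exactly paralleling the estimates in Proposition \ref{Thetap} and the proof of Proposition \ref{omegaelliptic}. The genuinely new terms, compared with Proposition \ref{Thetap}, are those where $\omega^\dagger{}'$ or $\omegab^\dagger{}'$ appears: these require first establishing the difference analogue of Proposition \ref{daggerest}, namely $\sum_{i\le 1}\|\nabla^i(\omega^\dagger{}',\omegab^\dagger{}')\|_{L^2(S_{u,\ub})}\le C\delta^{1/2}\mathcal R'$, which follows immediately from Proposition \ref{transportp} applied to the zero-data equations $\nabla_3\omega^\dagger=\tfrac12\sigma$, $\nabla_4\omegab^\dagger=\tfrac12\sigma$ (the background sizes being controlled by Proposition \ref{daggerest}). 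As in Proposition \ref{omegaelliptic}, the terms $\psi\nabla^2\beta'$, $\psi\nabla^2(\chih',\omega',\omega^\dagger{}')$ and $\psi\nabla^2\omegab^\dagger{}'$ cannot be closed pointwise on each $S_{u,\ub}$ — $\nabla^2\beta'$ is not controlled in $L^2(\Hb)$ because of the missing $\alpha'$ estimate — so I would keep them as $\|\cdot\|_{L^2_u L^2(S)}$-type quantities, integrate the whole inequality in $L^2_{\ub}$ (for $\kappa'$; in $L^2_u$ for $\kappab'$), apply Cauchy--Schwarz in the remaining null variable, and absorb the resulting small factors $\delta^{1/2}$. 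Concretely this yields, after using Proposition \ref{transportp} and the estimates of Propositions \ref{Riccip}, \ref{Thetap}, \ref{Kp},
$$\sum_{i\le 1}\|\nabla^i\kappa'\|_{L^2(H_u)}\le C\delta^{1/2}\Big(a+\tilde{\mathcal O}'+\mathcal R'+\|\nabla^2\kappa'\|_{\ldots}+\|\nabla^2\kappab'\|_{\ldots}\Big),$$
but note the left side only involves $\nabla^i$ for $i\le 1$, so no genuine circularity arises at the level of $\kappa'$ itself; the terms $\|\nabla^2(\chih',\omega',\omega^\dagger{}')\|$ are handled by the elliptic div-curl systems for $\nabla\omega',\nabla\omega^\dagger{}'$ exactly as in Proposition \ref{omegaelliptic}, invoking Proposition \ref{ellipticthmp} in place of Proposition \ref{ellipticthm}, and then one uses the bound $\sum_{i\le 1}\|\nabla^i\beta'\|_{L^2(D_{u,\ub})}\le C\delta^{1/2}\mathcal R'$ together with $\mathcal O''\le C(\tilde{\mathcal O}'+\mathcal O')$ (Proposition \ref{OpSobolev}).

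The main obstacle will be bookkeeping the difference-structure carefully enough to confirm that, after taking differences and commuting, $\alpha'$ genuinely never enters — in particular that the Bianchi equation $\nabla\rho+{ }^*\nabla\sigma=\nabla_3\beta+\psi\Psi$ used to derive (\ref{kappaeqn}) produces, under differencing, only $\rho',\sigma',\beta'$ and background times $\Psi'$ or $\Psi$ times $\psi'$, with no $\alpha'$; this requires checking that the $\psi\Psi$ remainder in that Bianchi identity contains no $\alpha$, which it does not. A secondary technical point is that one must consistently exploit that $\nabla^2\beta'$, though uncontrolled in $L^2(\Hb)$, is controlled in $L^2(H_u)$ and hence in $L^2(D_{u,\ub})$ after Cauchy--Schwarz, so that it can only enter the $\kappa'$ estimate (via the $\nabla_3$ equation integrated in $u$) and not the $\kappab'$ estimate; this is the difference analogue of the asymmetry already present in Proposition \ref{omegaelliptic}. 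Once these points are verified, the final inequalities follow by choosing $\delta$ small, exactly as in Proposition \ref{omegaelliptic}.
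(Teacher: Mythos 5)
Your overall strategy is the same as the paper's: difference the transport equations $\nab_3\kappa=\psi\nab\psi+\psi\psi\psi+\psi\Psi$ and $\nab_4\kappab=\ldots$ using Propositions \ref{peqn}--\ref{transportp}, observe that neither $\alpha$ nor $\alpha'$ enters, recognize that $\beta'$ is only controlled in $L^2(H_u)$ (not $L^2(\Hb_{\ub})$) so the bound must be taken after integrating in $L^2_{\ub}$ and applying Cauchy--Schwarz in $u$, and then close by invoking the definitions of $\tilde{\mathcal O}'$ and $\mathcal R'$. The extra point you make about needing the difference analogue of Proposition \ref{daggerest} for $\omega^\dagger{}',\omegab^\dagger{}'$ is correct and a detail the paper leaves implicit.

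Two bookkeeping points are off, though neither is fatal. First, you write $\psi\nabla^2\beta'$: since the statement only controls $\nab^i\kappa'$ for $i\leq 1$, commuting the transport equation with at most one angular derivative produces at most $\nab\beta'$ (and at most $\nab^2(\chih',\omega')$), exactly as the paper records as $\sum_{i\leq 1}\|\nab^i(\psi\beta)'\|$ and $\sum_{i\leq 2}\|\nab^i(\chih',\omega')\|$. You appear to have carried over the counting from the a priori Proposition \ref{omegaelliptic} (which commutes with $\nab^2$) without shifting down by one. Second, the terms $\|\nab^2\kappa'\|$ and $\|\nab^2\kappab'\|$ in your displayed inequality are spurious: nothing in the commuted transport equation produces two derivatives of $\kappa'$ on the right, and if they really were there the argument would not close since no proposition controls $\nab^2\kappa'$. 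Relatedly, your suggestion to treat $\|\nab^2(\chih',\omega',\omega^\dagger{}')\|$ via Proposition \ref{ellipticthmp} \emph{inside} this proof is both unnecessary and dangerous: the elliptic system for $\nab\omega'$ has $\nab\kappa'$ as a source, so applying it here would reintroduce the quantity you are trying to bound. The paper instead simply leaves $\tilde{\mathcal O}'$ (which already contains $\sup_u\|\nab^2(\chih',\omega',\ldots)\|_{L^2(H_u)}$) on the right with a factor $\delta^{1/2}$, and the elliptic step is deferred to Proposition \ref{Ricciellipticp}, which combined with this one closes the coupled system for small $\delta$. Once you drop the elliptic detour and correct the derivative count, your argument coincides with the paper's.
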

\begin{proof}
Recall equations (\ref{kappaeqn}) and (\ref{kappabeqn}):
$$\nab_3\kappa=\psi\nab\psi+\psi\psi\psi+\psi\Psi,$$
or
$$\nab_4\kappab=\psi\nab\psi+\psi\psi\psi+\psi\Psi.$$
We focus on $\kappa$. As $\kappab$ is easier. The only terms that are new compared to the proof of Proposition \ref{Thetap} 
$$\sum_{i\leq 1}||\nab^i(\psi\beta)'||_{L^1_{u}L^2(S_{u,\ub})}\mbox{ and }\sum_{i\leq 2}||\nab^i(\chih',\omega')||_{L^1_{u}L^2(S_{u,\ub})}.$$
Thus using the estimates in the proof of Proposition \ref{Thetap}, we have
$$\sum_{i\leq 1}||\nabla^i\kappa'||_{L^2(S_{u,\ub})}\leq Ca+C\delta^{\frac 12}(\tilde{\mathcal O}'+\mathcal R)+C\delta^{\frac 12}(\sum_{i\leq 1}||\nab^i\beta'||_{L^2_uL^2(S_{u,\ub})}+\sum_{i\leq 2}||\nab^i(\chih',\omega')||_{L^2_uL^2(S_{u,\ub})}).$$
Integrating over $\ub$ in $L^2$, we get
\begin{equation*}
\begin{split}
&\sum_{i\leq 1}||\nabla^i\kappa'||_{L^2(H_u)}\\
\leq &C\delta^{\frac 12}a+C\delta(\tilde{\mathcal O}'+\mathcal R')+C\delta^{\frac 12}(\sum_{i\leq 1}||\nab^i\beta'||_{L^2_{\ub}L^2_uL^2(S_{u,\ub})}+\sum_{i\leq 2}||\nab^i(\chih',\omega')||_{L^2_{\ub}L^2_uL^2(S_{u,\ub})}).
\end{split}
\end{equation*}
Thus, using the definition of the norms $\tilde{\mathcal O}'$ and $\mathcal R'$,
$$\sum_{i\leq 1}||\nabla^i\kappa'||_{L^2(H_u)}\leq C\delta^{\frac 12}a+C\delta(\tilde{\mathcal O}'+\mathcal R').$$
Similarly
$$\sum_{i\leq 1}||\nabla^i\kappab'||_{L^2(H_u)}\leq C\delta^{\frac 12}a+C\delta(\tilde{\mathcal O}'+\mathcal R').$$
\end{proof}
We use Propositions \ref{Thetap} and \ref{ellipticomegap} and elliptic estimates to derive all the estimates for $\nabla^{2}\psi'$.
\begin{proposition}\label{Ricciellipticp}
$$\tilde{\mathcal O}'\leq Ca+C\mathcal R'.$$
\end{proposition}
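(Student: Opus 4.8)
The plan is to close the estimate for $\tilde{\mathcal O}'$ in the same way as the a priori bound for $\tilde{\mathcal O}$ in Proposition \ref{Riccielliptic}, that is, by feeding the transport estimates for the $\Theta'$ quantities (Propositions \ref{Thetap} and \ref{ellipticomegap}) into the elliptic estimate of Proposition \ref{ellipticthmp}, and then choosing $\delta$ small so that the $\delta^{1/2}\tilde{\mathcal O}'$ error terms can be absorbed. First I would recall, exactly as in the proof of Proposition \ref{ellipticTheta}, that each of $\nabla^2(\trch',\trchb')$ is controlled directly by $\sum_{i\leq 1}\|\nabla^i(\nabla\trch',\nabla\trchb')\|_{L^2(S)}$, which by Proposition \ref{Thetap} is $\leq Ca+C\delta^{1/2}(\tilde{\mathcal O}'+\mathcal R')$. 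Then, for $\nabla^2(\chih',\chibh')$, I would apply Proposition \ref{ellipticthmp} to the difference of the Codazzi systems $\div\chih=\frac12\nabla\trch-\beta+\psi\psi$ and $\div\chibh=\frac12\nabla\trchb+\betab+\psi\psi$: the source involves $\nabla\trch'$, $\beta'$ (resp. $\betab'$), and $(\psi\psi)'$, which are bounded by $\tilde{\mathcal O}'$, $\mathcal R'$ and (via Proposition \ref{Riccip}) $\mathcal O'$ together with the $a$-terms from $\gamma',(\gamma^{-1})',\Gamma'$ coming from Proposition \ref{commuteelliptic}. Crucially, $\beta'$ is controlled in $L^2(H_u)$ by $\mathcal R'$ (but not in $L^2(\Hb_{\ub})$), and $\betab'$ is controlled in $L^2(\Hb_{\ub})$ by $\mathcal R'$, so the $\chih'$ estimate lives on $H_u$ and the $\chibh'$ estimate on $\Hb_{\ub}$, matching the structure of $\tilde{\mathcal O}'$.

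Next I would treat $\nabla^2(\eta',\etab')$ by applying Proposition \ref{ellipticthmp} to the difference of the div-curl systems $\div\eta=-\mu-\rho+\psi\psi$, $\curl\eta=\sigma+\psi\psi$ (and the analogous $\etab$ system), where the sources $\mu',\rho',\sigma'$ and $(\psi\psi)'$ are controlled by Proposition \ref{Thetap}, $\mathcal R'$ and $\mathcal O'$; here one wants $\eta'$ bounded both on $H_u$ and $\Hb_{\ub}$, which is possible since $\rho',\sigma'$ are in $\mathcal R'$ on both hypersurfaces. Finally, for $\nabla^2(\omega',\omegab')$ I would reproduce the argument of Proposition \ref{omegaelliptic} at the level of differences: from the div-curl systems for $\nabla\omega,\nabla\omega^\dagger$ (with source $\kappa,\beta$) and for $\nabla\omegab,\nabla\omegab^\dagger$ (with source $\kappab,\betab$), apply Proposition \ref{ellipticthmp}; the source $\kappa'$ is controlled in $L^2(H_u)$ by Proposition \ref{ellipticomegap} and $\beta'$ in $L^2(H_u)$ by $\mathcal R'$, so $\nabla^2\omega'$ is estimated in $L^2(H_u)$ only, and symmetrically $\nabla^2\omegab'$ in $L^2(\Hb_{\ub})$ only; the auxiliary quantities $\omega^{\dagger\prime},\omegab^{\dagger\prime}$ are controlled by a difference version of Proposition \ref{daggerest} via $\nabla_3\omega^{\dagger\prime}=\frac12\sigma'+(\text{lower order from }\Omega',b')$. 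Collecting all of these and using $\mathcal O''\leq C(\tilde{\mathcal O}'+\mathcal O')$ (Proposition \ref{OpSobolev}) together with $\mathcal O'\leq Ca+C\delta^{1/2}(\mathcal R'+\tilde{\mathcal O}')$ (Proposition \ref{Riccip}), one obtains $\tilde{\mathcal O}'\leq Ca+C\mathcal R'+C\delta^{1/2}\tilde{\mathcal O}'$; choosing $\delta$ small enough that $C\delta^{1/2}\leq\frac12$ and absorbing gives $\tilde{\mathcal O}'\leq Ca+C\mathcal R'$.

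I would run this via a short bootstrap on $\tilde{\mathcal O}'$ (as in Proposition \ref{Riccielliptic}) to make the absorption rigorous, and note that $\delta$ and $C$ depend only on the a priori bounds from Theorem \ref{timeofexistence}, not on $a$. The main obstacle I anticipate is bookkeeping the $\omega'/\omegab'$ estimates: as in the a priori case, the $\kappa'$ equation has the bad source terms $\psi\beta'$ and $\nabla^2(\chih',\omega')$, so one can only hope for an $L^2(H_u)$ bound (not $L^2(\Hb_{\ub})$), and one must carefully verify that the $\chih'$ bound already obtained on $H_u$, the $\mathcal R'$ control of $\beta'$ on $H_u$, and the $\delta^{1/2}$ smallness are exactly enough to close — in particular that the loss of $L^2(\Hb_{\ub})$ control of $\beta'$ (the difference-level avatar of the missing $\alpha$ estimate) never propagates into any estimate we actually need. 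Checking that the difference equations genuinely inherit the $\alpha$-free structure at each step — which Proposition \ref{peqn}, Proposition \ref{pcommuteeqn} and the renormalization $\rhoc,\sigmac$ are designed to guarantee — is the conceptual heart, but by this point in the paper all the needed structural facts are in place.
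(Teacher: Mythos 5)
Your proposal is correct and follows essentially the same route as the paper: feed the transport estimates of Propositions \ref{Thetap} and \ref{ellipticomegap} for the $\Theta'$ quantities (including $\kappa',\kappab'$) into the difference elliptic estimate of Proposition \ref{ellipticthmp}, control the lower-order $\psi'\psi$ and curvature sources via Proposition \ref{Riccip} and $\mathcal R'$, and absorb the $C\delta^{1/2}\tilde{\mathcal O}'$ remainder by taking $\delta$ small. The only cosmetic difference is that the paper treats $(\chih,\eta,\etab)$ and $(\chibh,\eta,\etab)$ via combined div-curl systems rather than invoking Codazzi for $\chih'$, $\chibh'$ separately, and on $\Hb_{\ub}$ the sources are the renormalized $(\rhoc',\sigmac',\betab')$ rather than $(\rho',\sigma')$; neither changes the argument.
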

\begin{proof}
We will first prove
$$\sup_u||\nabla^{2}(\chih',\omega',\eta',\etab')||_{L^2(H_u)}+\sup_{\ub}||\nabla^2(\chibh',\omegab',\eta',\etab')||_{L^2(\Hb_{\ub})}\leq Ca+C\delta^{\frac 12}\tilde{\mathcal O}'+C\mathcal R'.$$
We have the following div-curl systems:
$$\div(\chih,\eta,\etab)=(\nab\trch,\mu)+\psi\psi+(\beta,\rho,\sigma),$$
$$\curl(\chih,\eta,\etab)=(\nab\trch,\mu)+\psi\psi+(\beta,\rho,\sigma),$$
and
$$\div(\chibh,\eta,\etab)=(\nab\trchb,\mub)+\sum_{\psi\neq \chih,\omega}\psi\psi+(\rhoc,\sigmac,\betab),$$
$$\curl(\chibh,\eta,\etab)=(\nab\trchb,\mub)+\sum_{\psi\neq \chih,\omega}\psi\psi+(\rhoc,\sigmac,\betab);$$
as well as the div-curl system
$$\div(\nab\omega,\nab\omega^\dagger)=\nab\kappa+\psi\psi+(\beta,\rho,\sigma),$$
$$\curl(\nab\omega,\nab\omega^\dagger)=\nab\kappa+\psi\psi+(\beta,\rho,\sigma),$$
and
$$\div(\nab\omegab,\nab\omegab^\dagger)=\nab\kappab+\sum_{\psi\neq \chih,\omega}\psi\psi+(\rhoc,\sigmac,\betab),$$
$$\curl(\nab\omegab,\nab\omegab^\dagger)=\nab\kappab+\sum_{\psi\neq \chih,\omega}\psi\psi+(\rhoc,\sigmac,\betab).$$
By Proposition \ref{ellipticthmp}, we need to control the terms involving $\Theta$
$$\sup_{u}\sum_{i\leq 1}||\nabla^i(\nab\trch',\mu',\kappa')||_{L^2(H_u)}+\sup_{\ub}\sum_{i\leq 1}||\nabla^i(\nab\trchb',\mub',\kappab')||_{L^2(\Hb_{\ub})},$$
the curvature terms
$$\sup_u\sum_{i\leq 1}||\nabla^i(\beta',\rho',\sigma')||_{L^2(H_u)}+\sup_{\ub}||\nabla^i(\rhoc',\sigmac',\betab')||_{L^2(\Hb_{\ub})}$$
and the lower order terms involving the Ricci coefficients
$$\sum_{i\leq 1}||\nabla^i(\psi'\psi)||_{L^2(H_u)}+\sum_{i\leq 1}\sum_{\psi\neq \chih,\omega}||\nabla^i(\psi'\psi)||_{L^2(\Hb_{\ub})}.$$
The terms involving $\Theta$ are controlled by the estimates in Proposition \ref{Thetap} by
$$C\delta^{\frac 12}(a+\tilde{\mathcal O}'+\mathcal R').$$
The curvature terms can be estimated using the definition of $\mathcal R'$ by
$$C\mathcal R'.$$
The remaining terms can be bounded by
\begin{equation*}
\begin{split}
&\sum_{i\leq 1}||\nabla^i(\psi'\psi)||_{L^2(H_u)}\\
\leq &C(||\psi'||_{L^\infty(S_{u,\ub})}\sum_{i\leq 1}||\nabla^{i}\psi||_{L^2(S_{u,\ub})}+\sum_{i\leq 1}||\nabla^{i}\psi'||_{L^2(S_{u,\ub})}||\nabla^{i_2}\psi||_{L^\infty(S_{u,\ub})})\\
\leq &Ca+C\delta^{\frac 12}\tilde{\mathcal O}'+C\delta^{\frac 12}\mathcal R'.
\end{split}
\end{equation*}
Thus we have
$$\sup_u||\nabla^{2}(\chih',\omega',\eta',\etab')||_{L^2(H_u)}+\sup_{\ub}||\nabla^2(\chibh',\omegab',\eta',\etab')||_{L^2(\Hb_{ub})}\leq Ca+C\delta^{\frac 12}\tilde{\mathcal O}'+C\mathcal R'.$$
Therefore, together with the estimates for $\nab^2\trch'$ and $\nab^2\trchb'$ in Proposition \ref{Thetap}, we have
$$\tilde{\mathcal O}\leq Ca+C\delta^{\frac 12}\tilde{\mathcal O}'+C\mathcal R'.$$
By choosing $\delta$ sufficiently small depending on $C$ so that $C\delta^{\frac 12}$, we have
$$\tilde{\mathcal O}\leq Ca+C\mathcal R'.$$
Notice that $C$ and $\delta$ depend only on the a priori estimates in Theorem \ref{timeofexistence} and are both independent of $a$.
\end{proof}
\subsection{Estimates for the Curvature Difference}\label{convsec5}
In order to finish the proof of Theorem \ref{convergencethm}, we need to estimate $\mathcal R'$ by $Ca$. This will be proved using an energy-type estimate. As in the bounds for the curvature components themselves, we will derive the energy estimates directly from the Bianchi equations without using the Bel Robinson tensor. This method is especially useful in controlling the difference of the curvature components since these difference quantities do not arise from a Weyl field with respect to either of the background spacetime metrics.
\begin{proposition}
\begin{equation*}
\begin{split}
&\sum_{i\leq 1}\left(\sum_{\Psi\in\{\beta,\rho,\sigma,\betab\}}\int_{H_u} (\nabla^i\Psi')^2+\sum_{\Psic\in\{\rhoc,\sigmac,\betab,\alphab\}}\int_{\underline{H}_{\underline{u}}} (\nabla^i\Psi')^2\right)\\
\leq& Ca^2+C\delta(\mathcal R')^2+C||(K\Psi)'||_{L^2_u L^2_{\ub}L^2(S)}^2\\
&+C\sum_{i_1+i_2+i_3\leq 1}||(\psi^{i_1}\nabla^{i_2}\psi\nabla^{i_3}\Psi)'||_{L^2_u L^2_{\ub}L^2(S)}^2+\sum_{i_1+i_2+i_3\leq 2}||(\psi^{i_1}\nabla^{i_2}\psi\nabla^{i_3}\psi)'||_{L^2_u L^2_{\ub}L^2(S)}^2.
\end{split}
\end{equation*}
\end{proposition}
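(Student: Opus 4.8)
The plan is to reproduce the proof of Proposition \ref{ee} for the difference quantities, exploiting the same structural fact that saved the a priori estimate --- now it is $\alpha'$, rather than $\alpha$, that never enters the relevant equations. First I would take the three sets of Bianchi equations used in Proposition \ref{ee}: the renormalized set $\{\nabla_3\beta,\nabla_4\rhoc,\nabla_4\sigmac\}$ of \eqref{rcsc}--\eqref{bc}, the set $\{\nabla_3\rho,\nabla_3\sigma,\nabla_4\betab\}$, and the set $\{\nabla_3\betab,\nabla_4\alphab\}$, write each in spacetimes $(1)$ and $(2)$, and apply Propositions \ref{peqn} and \ref{pcommuteeqn} to obtain, for the differences $\beta',\rhoc',\sigmac',\rho',\sigma',\betab',\alphab'$ and their first angular derivatives, transport equations along $e_3^{(1)}$ and $e_4^{(1)}$. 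Each such conversion introduces correction terms of two kinds: a metric-difference factor --- $(\Omega^{-1})'$, $b'$, $\tfrac{\partial b'}{\partial \th}$, $(\gamma^{-1}\chi)'$, $(\gamma^{-1}\chib)'$ --- multiplying the (possibly commuted) right-hand side of the background equation or a lower-order curvature term; and genuine difference nonlinearities. The key point is that the background derivatives $\nabla_3\beta$, $\nabla_4\rhoc$, $\nabla_4\sigmac$, which multiply the metric-difference factors, are by the renormalized Bianchi equations themselves free of $\alpha$; hence $\alpha'$ appears nowhere in the difference system.

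Next I would pair the difference equations for $\nabla^i\beta'$ ($i\le 1$) with themselves, integrate over $D_{u,\ub}$, and integrate by parts --- angularly, and in $e_3^{(1)},e_4^{(1)}$ --- using the spacetime-$(1)$ forms of Propositions \ref{intbyparts34} and \ref{intbypartssph}. Exactly as in Proposition \ref{ee}, the $\nabla\rhoc',\nabla\sigmac'$ terms in the $\nabla_3^{(1)}\beta'$ equation recombine, after angular integration by parts, with the $\div\beta'$ terms in the $\nabla_4^{(1)}\rhoc',\nabla_4^{(1)}\sigmac'$ equations, producing the flux integrals $\int_{H_u}|\nabla^i\beta'|^2+\int_{\Hb_{\ub}}|\nabla^i(\rhoc',\sigmac')|^2$ and leaving only the correction and nonlinear terms over $D_{u,\ub}$. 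The remaining two sets of difference equations give in the same way the fluxes of $(\rho',\sigma')$ and $\betab'$ on $H_u$, and of $\betab'$ and $\alphab'$ on $\Hb_{\ub}$. Reordering $\div$ past $\nabla$ in the once-commuted ($i=1$) equations generates the Gauss-curvature terms, and their difference versions are the $(K\Psi)'$ contributions.

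Finally I would collect and estimate the spacetime integrals. The integral on the data cones is $\le Ca^2$ by the hypotheses on the data. The genuine difference nonlinearities are of exactly the schematic forms $(\psi^{i_1}\nabla^{i_2}\psi\nabla^{i_3}\Psi)'$ with $i_1+i_2+i_3\le 1$ and $(\psi^{i_1}\nabla^{i_2}\psi\nabla^{i_3}\psi)'$ with $i_1+i_2+i_3\le 2$ that are kept on the right-hand side (the $(\gamma^{-1}\chi)'\nabla^i\Psi$-type corrections fold into the first of these up to metric-difference pieces). The remaining corrections carry a genuinely small factor $(\Omega^{-1})'$, $b'$ or $\tfrac{\partial b'}{\partial\th}$, which by Section \ref{convsec2} is bounded by $C\delta^{1/2}(\mathcal O''+\tilde{\mathcal O}')\le C\delta^{1/2}(a+\mathcal R')$ (using Propositions \ref{OpSobolev}, \ref{Riccip} and \ref{Ricciellipticp}); multiplied against $\nabla^i\Psi'$ and the background quantities controlled by Theorem \ref{timeofexistence} and integrated over $D_{u,\ub}$, these contribute $\le C\delta(\mathcal R')^2+Ca^2$. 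Summing yields the claimed inequality.

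The main obstacle is the structural verification in the first step. One must check that, despite the mismatch between $\nabla_3^{(1)},\nabla_4^{(1)}$ and the derivatives native to spacetime $(2)$, and despite the commutator terms (which do contain the background curvature components $\betab,\alphab$ and the Gauss curvature $K$, but --- as one must confirm --- never $\alpha$), the renormalization $\rhoc,\sigmac$ that eliminated $\alpha$ from the a priori energy estimate still eliminates $\alpha'$ from the difference system, and that no uncontrollable object --- in particular $\nabla^2\beta'$ on $\Hb_{\ub}$ --- is produced. Once this structure is confirmed, the estimate follows the template of Proposition \ref{ee} essentially line for line, with the metric-difference estimates of Section \ref{convsec2} supplying control of the genuinely new terms.
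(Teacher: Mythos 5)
Your proposal is correct and follows essentially the same approach as the paper: renormalize to $\rhoc,\sigmac$, use Propositions \ref{peqn} and \ref{pcommuteeqn} to derive the difference Bianchi system (noting the background terms multiplying metric-difference factors are $\alpha$-free), integrate by parts via the spacetime-$(1)$ versions of Propositions \ref{intbyparts34} and \ref{intbypartssph} so the $\nabla\rhoc',\nabla\sigmac'$ and $\div\beta'$ terms recombine into fluxes, and bound the error terms $F_{\beta},F_{\rhoc},F_{\sigmac}$ by $C\delta^{1/2}(a+\mathcal R')$ using the Section \ref{convsec2} estimates. The paper packages the metric-difference corrections into explicit $F$ terms before integrating by parts, but the structure and all the key observations match what you outline.
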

\begin{proof}
The energy estimates for the curvature difference follows from the difference Bianchi equations. Once these are derived, the energy estimates will be obtained as in Section \ref{estimates} by integration by parts in appropriate subsets of the difference Bianchi system. In the proof below, we concentrate on the most difficult case which involves renormalized curvature components. The estimates for the other components can be derived in a similar fashion.

Recall that
\begin{equation*}
\begin{split}
&\nabla_3\beta+tr\chib\beta=\nabla\rho + 2\omegab \beta +^*\nabla\sigma +2\chih\cdot\betab+3(\eta\rho+^*\eta\sigma),\\
&\nabla_4\sigma+\frac 32tr\chi\sigma=-\div^*\beta+\frac 12\chibh\cdot ^*\alpha-\zeta\cdot^*\beta-2\etab\cdot
^*\beta,\\
&\nabla_4\rho+\frac 32tr\chi\rho=\div\beta-\frac 12\chibh\cdot\alpha+\zeta\cdot\beta+2\etab\cdot\beta,\\
\end{split}
\end{equation*}
As before, we renormalize the equations. Define
$$\sigmac:=\sigma+\frac{1}{2}\chibh\wedge\chih,$$
$$\rhoc:=\rho-\frac{1}{2}\chibh\cdot\chih.$$
Schematically, we have
\begin{equation*}
\begin{split}
&\nabla_3\beta=\nabla\rhoc+^*\nabla\sigmac +\psi\nabla\psi+\psi\Psi+\psi\psi\psi,\\
&\nabla_4\sigmac=-\div^*\beta+\psi\nabla\psi+\psi\Psi+\psi\psi\psi,\\
&\nabla_4\rhoc=\div\beta+\psi\nabla\psi+\psi\Psi+\psi\psi\psi.
\end{split}
\end{equation*}
From these we can derive equations for $\beta'$, $\sigma'$ and $\rhoc'$. By Proposition \ref{peqn}, 
\begin{equation*}
\begin{split}
&\nabla_3\beta'=(\nabla_3\beta)'+\frac{(\Omega^{-1})'}{\Omega^{-1}}\nabla_3\beta+\Omega^{-1}(b^A)'\nabla_{\frac{\partial}{\partial\th^A}}\beta+\Omega^{-1}(\gamma^{-1}\chi)'\beta+\Omega^{-1}b\Gamma'\beta,\\
&\nabla_4\sigmac=(\nabla_4\sigmac)'+\frac{(\Omega^{-1})'}{\Omega^{-1}}\nabla_4\sigmac+\Omega^{-1}(\gamma^{-1}\chi)'\sigmac,\\
&\nabla_4\rhoc=(\nabla_4\rhoc)'+\frac{(\Omega^{-1})'}{\Omega^{-1}}\nabla_4\rhoc+\Omega^{-1}(\gamma^{-1}\chi)'\rhoc.
\end{split}
\end{equation*}
Moreover, we can write schematically
\begin{equation*}
\begin{split}
&(\nab(\rhoc,\sigmac))'=\nabla(\rhoc',\sigmac')+\Gamma'(\rhoc,\sigmac)\\
&(\div\beta)'=\div\beta'+\gamma'\nabla\beta+\Gamma'\beta.
\end{split}
\end{equation*}
Therefore,
\begin{equation*}
\begin{split}
&\nabla_3\beta'=\nabla\rhoc'+^*\nabla\sigmac' +(\psi\nabla\psi+\psi\Psi+\psi\psi\psi)'+F_{\beta},\\
&\nabla_4\sigmac'=-\div^*\beta'+(\psi\nabla\psi+\psi\Psi+\psi\psi\psi)'+F_{\sigmac},\\
&\nabla_4\rhoc'=\div\beta'+(\psi\nabla\psi+\psi\Psi+\psi\psi\psi)'+F_{\rhoc},
\end{split}
\end{equation*}
where $F_{\beta}$, $F_{\sigmac}$ and $F_{\rhoc}$ satisfy the bound
$$\sum_{i\leq 1}||\nab^iF||_{L^2_uL^2_{\ub}L^2(S)}\leq C\delta^{\frac 12}(a+\mathcal R').$$
Therefore, multiplying the first equation by $\beta'$, integrating by parts and using the other equations, we have
\begin{equation*}
\begin{split}
&\int_{H_u} |\beta'|^2+\int_{\Hb_{\ub}} (|\rhoc'|^2+|\sigmac'|^2)\\
\leq& Ca^2 +||(\beta',\rhoc',\sigma')(\psi\nabla\psi+\psi\Psi+\psi\psi\psi)'||_{L^1_u L^1_{\ub}L^1(S)}+||(\beta',\rhoc',\sigma')F_{\beta,\rhoc,\sigmac}||_{L^1_u L^1_{\ub}L^1(S)}\\
\leq &Ca^2 +||(\beta',\rhoc',\sigma')||_{L^2_u L^2_{\ub}L^2(S)}+||(\psi\nabla\psi+\psi\Psi+\psi\psi\psi)'||_{L^2_u L^2_{\ub}L^2(S)}^2++||F||_{L^2_u L^2_{\ub}L^2(S)}^2\\
\leq &Ca^2 +C\delta(\mathcal R')^2+||(\psi\nabla\psi+\psi\Psi+\psi\psi\psi)'||_{L^2_u L^2_{\ub}L^2(S)}^2.
\end{split}
\end{equation*}
We now look at the equation the first covariant angular derivative for the 
curvature components:
\begin{equation*}
\begin{split}
&\nabla_3\nabla\beta=\nabla^2\rhoc+^*\nabla^2\sigmac+K(\rhoc,\sigmac) +\psi^2\nabla\Psi+\nabla\psi\Psi+\psi\nabla\Psi+\psi\nabla^2\psi+\psi^2\nabla\psi+\psi^4,\\
&\nabla_4\nabla\sigmac=-\div^*\nabla\beta+K\beta+\psi^2\nabla\Psi+\nabla\psi\Psi+\psi\nabla\Psi+\psi\nabla^2\psi+\psi^2\nabla\psi+\psi^4,\\
&\nabla_4\nabla\rhoc=\div\nabla\beta+K\beta+\psi^2\nabla\Psi+\nabla\psi\Psi+\psi\nabla\Psi+\psi\nabla^2\psi+\psi^2\nabla\psi+\psi^4.
\end{split}
\end{equation*}
A similar argument as before involving writing the equations for $\nabla(\beta',\rhoc',\sigmac')$ and integrating by parts yield
\begin{equation*}
\begin{split}
&\int_{H_u} |\nabla\beta'|^2+\int_{\Hb_{\ub}} (|\nabla\rhoc'|^2+|\nabla\sigmac'|^2)\\
\leq &Ca^2 +C\delta(\mathcal R')^2\\
&+||(K\Psi+\psi^2\nabla\Psi+\nabla\psi\Psi+\psi\nabla\Psi+\psi\nabla^2\psi+(\nabla\psi)^2+\psi^2\nabla\psi+\psi^4)'||_{L^2_u L^2_{\ub}L^2(S)}^2.
\end{split}
\end{equation*}
\end{proof}

From this one can conclude
\begin{proposition}\label{curvaturep}
\begin{equation*}
\begin{split}
\mathcal R'\leq Ca.
\end{split}
\end{equation*}
\end{proposition}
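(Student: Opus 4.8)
The plan is to feed the energy inequality of the previous Proposition into an absorption argument. After invoking that inequality, $\mathcal R'$ is controlled by the difference-data size $a$ together with error terms; the error terms will be estimated using the a priori bounds of Theorem \ref{timeofexistence} for all background quantities and the already-established difference estimates $\mathcal O' \leq Ca + C\delta^{\frac 12}(\mathcal R' + \tilde{\mathcal O}')$ and $\tilde{\mathcal O}' \leq Ca + C\mathcal R'$ (Propositions \ref{Riccip} and \ref{Ricciellipticp}), together with $\mathcal O'' \leq C(\mathcal O' + \tilde{\mathcal O}')$ (Proposition \ref{OpSobolev}) and the metric-difference bounds of Section \ref{convsec2}. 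As throughout the paper, this is done under a bootstrap assumption $\mathcal R' \leq \Delta$, which legitimizes the choices of $\delta$ (depending only on the a priori constants, not on $a$) and will be improved at the end.

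First I would estimate term by term the right-hand side of the energy inequality. Each error integrand is a product of background factors — controlled in $L^\infty_u L^\infty_{\ub} L^\infty(S)$, $L^\infty_u L^\infty_{\ub} L^4(S)$, or $L^\infty L^2(S)$ by Theorem \ref{timeofexistence}, together with the Gauss curvature bound $||K||_{L^2(S)} \leq C$ — multiplied by a \emph{single} difference factor: a curvature difference $\nabla^{\leq 1}\Psi'$ (bounded in $L^2_u L^2_{\ub} L^2(S)$ by $C\delta^{\frac 12}\mathcal R'$), a Ricci-coefficient difference $\nabla^{\leq 2}\psi'$ (bounded by $C\delta^{\frac 12}(\mathcal O'' + \tilde{\mathcal O}')$, hence by $C\delta^{\frac 12}(a + \mathcal R')$), a metric-difference quantity $K'$, $\gamma'$, $\Omega'$ or $b'$ (bounded by $Ca + C\delta^{\frac 12}(\mathcal O'' + \tilde{\mathcal O}')$ by Section \ref{convsec2}), or one of the frame-correction terms $F_\beta$, $F_{\sigmac}$, $F_{\rhoc}$, which by construction obey $\sum_{i\leq 1}||\nabla^i F||_{L^2_u L^2_{\ub} L^2(S)} \leq C\delta^{\frac 12}(a + \mathcal R')$. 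In each case the integration over the slab $D_{u,\ub}$ of thickness $\lesssim \delta$ supplies a gain of $\delta^{\frac 12}$; squaring, the total error is bounded by $C\delta(a + \mathcal R')^2$. The crucial structural point, inherited from the derivation of the difference Bianchi equations in the previous Proposition, is that $\alpha'$ (resp.\ $\nabla\alpha'$) never appears in any of these terms — the renormalization to $\rhoc, \sigmac$ removes $\alpha$ from the relevant Bianchi subsystems even for the difference equations — so no control of $\alpha'$ is ever required, which is essential since $\alpha_n$ fails to be Cauchy in $L^2$.

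Combining, the energy inequality reads $(\mathcal R')^2 \leq Ca^2 + C\delta(\mathcal R')^2 + C\delta(a + \mathcal R')^2 \leq Ca^2 + C\delta(\mathcal R')^2$. Choosing $\delta$ small enough, depending only on the a priori constants of Theorem \ref{timeofexistence}, the term $C\delta(\mathcal R')^2$ is absorbed into the left-hand side, giving $\mathcal R' \leq Ca$. This improves the bootstrap assumption once $\Delta$ is fixed large in terms of the a priori constants, and — combined with Propositions \ref{Riccip}, \ref{Ricciellipticp} and the estimates of Section \ref{convsec2} — yields all assertions of Theorem \ref{convergencethm} on the slab $0 \leq u \leq \delta$, $0 \leq \ub \leq \delta$; the full region follows by the iteration remarked after the statement of Theorem \ref{convergencethm}.

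The main obstacle is bookkeeping rather than anything conceptual: one must check that in \emph{every} error term coming from the commuted difference Bianchi equations the worst factor is a single, already-controlled difference quantity, so that no term forces control of $\alpha'$ or of two simultaneous top-order difference factors. This is most delicate in the angular-commuted equations, where terms such as $(\psi\nabla^2\psi)'$ and $(K\Psi)'$ appear: here one must use that $\nabla^2\psi'$ is recovered \emph{elliptically} from $\mathcal R'$ via Proposition \ref{Ricciellipticp} (not from a transport estimate, which would lose a derivative), and that the slab always furnishes a spare $\delta^{\frac 12}$ to absorb these contributions. Once this verification is carried out, the absorption step above is immediate.
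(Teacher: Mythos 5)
Your argument follows the paper's proof essentially verbatim: feed the previously established energy inequality for the curvature difference with the bounds from Propositions \ref{Riccip}, \ref{Ricciellipticp}, \ref{OpSobolev}, \ref{Kp} and Section \ref{convsec2}, check that no term in the difference Bianchi system involves $\alpha'$, and absorb $C\delta(\mathcal R')^2$ by choosing $\delta$ small depending only on the a priori constants. The only (harmless) deviation is your framing via a bootstrap assumption $\mathcal R'\leq\Delta$; the paper needs no bootstrap here since both spacetimes are smooth, so $\mathcal R'$ is finite and the absorption is direct.
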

\begin{proof}
By the previous Proposition, we need to estimate
$$||(K\Psi)'||_{L^2_u L^2_{\ub}L^2(S)}^2,$$
$$\sum_{i_1+i_2+i_3\leq 1}||(\psi^{i_1}\nabla^{i_2}\psi\nabla^{i_3}\Psi)'||_{L^2_u L^2_{\ub}L^2(S)}^2,$$
$$\sum_{i_1+i_2+i_3\leq 2}||(\psi^{i_1}\nabla^{i_2}\psi\nabla^{i_3}\psi)'||_{L^2_u L^2_{\ub}L^2(S)}^2$$
By Proposition \ref{Kp} and the definition of the norm $\mathcal R'$ and the a priori estimates from Theorem \ref{timeofexistence}, we have
$$||(K\Psi)'||_{L^2_u L^2_{\ub}L^2(S)}^2\leq ||K'\Psi||_{L^2_u L^2_{\ub}L^2(S)}^2+||K \Psi'||_{L^2_u L^2_{\ub}L^2(S)}^2\leq C\delta^2(\mathcal R')^2.$$
To estimate the remaining terms, we will repeatedly invoke Proposition \ref{angularpL4} to exchange $(\nabla\Psi)'$ and $\nabla(\Psi')$ (or $(\nabla\psi)'$ and $\nabla(\psi')$ etc.), at the expense of an error term $\delta^2(\mathcal R')^2$.
\begin{equation*}
\begin{split}
&\sum_{i_1+i_2+i_3\leq 1}||(\psi^{i_1}\nabla^{i_2}\psi\nabla^{i_3}\Psi)'||_{L^2_u L^2_{\ub}L^2(S)}^2\\
\leq &C\delta^2(\mathcal R')^2+C(\sum_{i_1\leq 1}\sum_{i_2\leq 2}||\nabla^{i_1}\psi||_{L^\infty_u L^\infty_{\ub}L^\infty(S)}^{i_2})^2(\sum_{i_3\leq 1}||\nabla^{i_3}\Psi'||_{L^2_u L^2_{\ub}L^2(S)}^2)\\
&+C(\sum_{i_1\leq 1}||\nabla^{i_1}\psi||_{L^\infty_u L^\infty_{\ub}L^\infty(S)}^2)(\sum_{i_2\leq 1}||\nabla^{i_3}\Psi||_{L^2_u L^2_{\ub}L^4(S)}^2)(\sum_{i_3\leq 1}||\nabla^{i_1}\psi'||_{L^\infty_u L^\infty_{\ub}L^4(S)}^2)\\
\leq &C\delta(a^2+(\mathcal R')^2),
\end{split}
\end{equation*}
where the second term is estimated by Theorem \ref{timeofexistence} and the definition of $\mathcal R'$; and the last term is estimated by Theorem \ref{timeofexistence} and Proposition \ref{Riccip}. Finally, we estimate using Theorem \ref{timeofexistence} and Proposition \ref{Riccip}
\begin{equation*}
\begin{split}
&\sum_{i_1+i_2+i_3\leq 1}||(\psi^{i_1}\nabla^{i_2}\psi\nabla^{i_3}\psi)'||_{L^2_u L^2_{\ub}L^2(S)}^2\\
\leq &C\delta^2(\mathcal R')^2+C(\sum_{i_1\leq 1}\sum_{i_2\leq 2}||\nabla^{i_1}\psi||_{L^\infty_u L^\infty_{\ub}L^\infty(S)}^{i_2})^2(\sum_{i_3\leq 2}||\nabla^{i_3}\psi'||_{L^2_u L^2_{\ub}L^2(S)}^2)\\
\leq &C\delta(a^2+(\mathcal R')^2),
\end{split}
\end{equation*}
Putting these together, we have
$$(\mathcal R')^2\leq Ca^2+C\delta(\mathcal R')^2.$$
The conclusion follows by choosing $\delta$ sufficiently small depending only on $C$.
\end{proof}
This implies all the quantities that we have estimated in this section can be estimated by $Ca$:
\begin{proposition}\label{p}
$$||g'||_{L^\infty_u L^\infty_{\ub} L^\infty(S)},||\frac{\partial}{\partial\th}g'||_{L^\infty_u L^\infty_{\ub} L^4(S)},||\frac{\partial^2}{\partial\th^2}g'||_{L^\infty_u L^\infty_{\ub} L^2(S)},\mathcal O',\tilde{\mathcal O'},\mathcal R'\leq Ca.$$
\end{proposition}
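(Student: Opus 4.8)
The plan is to collect together all the estimates proved in Sections \ref{convsec2}--\ref{convsec5} and feed in the bound $\mathcal R'\leq Ca$ from Proposition \ref{curvaturep}. The logical skeleton is a chain of implications, each of which has already been established with constants and a smallness parameter $\delta$ depending only on the a priori estimates of Theorem \ref{timeofexistence} and independent of $a$: Proposition \ref{Riccip} gives $\mathcal O'\leq Ca+C\delta^{\frac 12}(\mathcal R'+\tilde{\mathcal O}')$; Proposition \ref{Ricciellipticp} gives $\tilde{\mathcal O}'\leq Ca+C\mathcal R'$; and Proposition \ref{curvaturep} gives $\mathcal R'\leq Ca$. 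Substituting the last into the middle yields $\tilde{\mathcal O}'\leq Ca$, and substituting both into the first yields $\mathcal O'\leq Ca$. This already disposes of the three ``dynamical'' norms.

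Next I would address the metric quantities. By Proposition \ref{OpSobolev} we have $\mathcal O''\leq C(\tilde{\mathcal O}'+\mathcal O')\leq Ca$. Then Propositions \ref{gammap}, \ref{dgammap}, \ref{ddgammap} bound $\|\gamma'\|_{L^\infty_uL^\infty_{\ub}L^\infty(S)}$, $\|\partial_\th\gamma'\|_{L^\infty_uL^\infty_{\ub}L^4(S)}$ and $\|\partial_\th^2\gamma'\|_{L^\infty_uL^\infty_{\ub}L^2(S)}$ respectively by $Ca+C\delta^{\frac 12}(\mathcal O''+\tilde{\mathcal O}')$, which is $\leq Ca$ once the previous step is in hand. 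The remaining components of $g'$ --- namely $\Omega'$ and $b'$ together with their angular derivatives --- are controlled by Propositions \ref{Omegap}, \ref{dOmegap}, \ref{ddOmegap} and \ref{bp}, again in terms of $\mathcal O''$, $\tilde{\mathcal O}'$ and $\mathcal O'$, hence by $Ca$. Assembling these with the form of the metric $g=-2\Omega^2(du\otimes d\ub+d\ub\otimes du)+\gamma_{AB}(d\th^A-b^Adu)\otimes(d\th^B-b^Bdu)$ gives the stated bounds on $\|g'\|_{L^\infty_uL^\infty_{\ub}L^\infty(S)}$, $\|\partial_\th g'\|_{L^\infty_uL^\infty_{\ub}L^4(S)}$ and $\|\partial_\th^2 g'\|_{L^\infty_uL^\infty_{\ub}L^2(S)}$, since each coordinate component of $g'$ is a sum of products of $\Omega'$, $\gamma'$, $b'$ (and their $\th$-derivatives) with the uniformly bounded background quantities $\Omega$, $\gamma$, $b$.

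Since every implication in the chain has already been proved in this Section, the proof of Proposition \ref{p} is essentially bookkeeping; there is no genuine obstacle remaining. The one point requiring a small amount of care is the order of elimination: one must first reduce on the region $0\leq u\leq\delta$, $0\leq\ub\leq\delta$ (as noted after the statement of Theorem \ref{convergencethm}, iterating recovers the full range $0\leq u\leq u_*$, $0\leq\ub\leq\ub_*$), then invoke Propositions \ref{curvaturep}, \ref{Ricciellipticp}, \ref{Riccip} in that sequence so that each substitution is into an already-closed estimate, and only afterwards convert the Ricci/curvature bounds into metric bounds via the transport and Sobolev estimates of Section \ref{convsec2}. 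The constant $C'$ produced this way depends only on the a priori constants $C$, $c$ of Theorem \ref{timeofexistence}, as claimed, and is manifestly independent of $a$ because every $C$ and $\delta$ appearing in Propositions \ref{Omegap}--\ref{curvaturep} was.

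\begin{proof}
By the reduction described after Theorem \ref{convergencethm} it suffices to work in $0\leq u\leq\delta$, $0\leq\ub\leq\delta$ for $\delta$ small depending only on the a priori estimates of Theorem \ref{timeofexistence}. Proposition \ref{curvaturep} gives $\mathcal R'\leq Ca$. Substituting this into Proposition \ref{Ricciellipticp} gives $\tilde{\mathcal O}'\leq Ca$. Substituting both bounds into Proposition \ref{Riccip} gives $\mathcal O'\leq Ca$. By Proposition \ref{OpSobolev}, $\mathcal O''\leq C(\tilde{\mathcal O}'+\mathcal O')\leq Ca$. Propositions \ref{Omegap}, \ref{dOmegap} and \ref{ddOmegap} then bound $\Omega'$, $\partial_\th\Omega'$ and $\partial_\th^2\Omega'$ in the respective norms $L^\infty_uL^\infty_{\ub}L^\infty(S)$, $L^\infty_uL^\infty_{\ub}L^4(S)$, $L^\infty_uL^\infty_{\ub}L^2(S)$ by $C(a+\delta^{\frac 12}(\mathcal O''+\tilde{\mathcal O}'))\leq Ca$; Propositions \ref{gammap}, \ref{dgammap} and \ref{ddgammap} likewise bound $\gamma'$, $\partial_\th\gamma'$, $\partial_\th^2\gamma'$; and Proposition \ref{bp} bounds $b'$, $\partial_\th b'$, $\partial_\th^2 b'$, all by $Ca$. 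Since each coordinate component of $g'$ is a finite sum of products of $\Omega'$, $\gamma'$, $b'$ and their angular derivatives with the background quantities $\Omega$, $\gamma$, $b$, which are bounded by Propositions \ref{Omega}, \ref{gamma} and \ref{b}, the Leibniz rule and Proposition \ref{eqnorm} yield
$$\|g'\|_{L^\infty_u L^\infty_{\ub} L^\infty(S)}, \|\tfrac{\partial}{\partial\th}g'\|_{L^\infty_u L^\infty_{\ub} L^4(S)}, \|\tfrac{\partial^2}{\partial\th^2}g'\|_{L^\infty_u L^\infty_{\ub} L^2(S)}\leq Ca.$$
All constants depend only on the a priori bounds of Theorem \ref{timeofexistence}, hence only on $C$ and $c$, and are independent of $a$.
\end{proof}
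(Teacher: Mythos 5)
Your proof is correct and takes essentially the same approach as the paper; the paper's own proof is a one-line citation of Propositions \ref{Omegap}, \ref{gammap}, \ref{dgammap}, \ref{dOmegap}, \ref{ddgammap}, \ref{ddOmegap}, \ref{bp}, \ref{Riccip}, \ref{Thetap}, \ref{Ricciellipticp} and \ref{curvaturep}, and your write-up simply makes explicit the order in which those estimates are chained together ($\mathcal R'\to\tilde{\mathcal O}'\to\mathcal O'\to\mathcal O''\to$ metric components), which is the intended reading.
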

\begin{proof}
This is a direct consequence of Propositions \ref{Omegap}, \ref{gammap}, \ref{dgammap}, \ref{dOmegap}, \ref{ddgammap}, \ref{ddOmegap}, \ref{bp}, \ref{Riccip}, \ref{Thetap}, \ref{Ricciellipticp} and \ref{curvaturep}.
\end{proof}
This concludes the proof of Theorem \ref{convergencethm}.

\subsection{Additional Estimates on the Difference of the Metrics}\label{AddEstMetric}

In this Subsection, we derive some additional estimates on the difference of the metrics. They follow directly from the bounds already derived. This will be used in Section \ref{limit} to show that the limiting spacetime metric is in the space asserted in Theorem \ref{convergencethm2} and satisfies the Einstein equations.

\begin{proposition}\label{ugp}
$$||\frac{\partial}{\partial u}(\gamma',\Omega',b')||_{L^4(S_{u,\ub})}\leq Ca.$$
\end{proposition}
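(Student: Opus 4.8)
The plan is to differentiate the $u$-direction evolution equations of $\gamma$, $\Omega$ and $b$ in the coordinates $(u,\ub,\th^1,\th^2)$, subtract the two spacetimes, and bound the resulting expressions term by term using Proposition \ref{p} (which supplies $\|g'\|$, $\|\partial_\th g'\|$ in $L^\infty(S)$ and $L^4(S)$, and all of $\mathcal O'$, $\tilde{\mathcal O}'$, $\mathcal R'$ by $Ca$), the a priori bounds of Theorem \ref{timeofexistence}, and Sobolev embedding (Proposition \ref{L4}) to pass from $L^2(S)$ to $L^4(S)$. For $\gamma'$ I would start from the first variation formula $\Ls_{\Lb}\gamma_{AB}=2\Omega\chib_{AB}$, which in coordinates (using $\Lb=\partial_u+b^A\partial_{\th^A}$) reads $\partial_u\gamma_{AB}=2\Omega\chib_{AB}-b^C\partial_{\th^C}\gamma_{AB}-\gamma_{CB}\partial_{\th^A}b^C-\gamma_{AC}\partial_{\th^B}b^C$; taking differences and expanding $\chib'=\chibh'+\tfrac12(\trchb)'\gamma+\tfrac12\trchb\,\gamma'$, every summand is a bounded background factor times one of $\chibh',\trchb',\Omega',\gamma',\partial_\th\gamma',\partial_\th b'$, all controlled in $L^4(S)$ by $Ca$ via $\mathcal O'$, Proposition \ref{L4}, and Propositions \ref{Omegap}, \ref{gammap}, \ref{dgammap}, \ref{bp}. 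For $\Omega'$ I would use $\omegab=-\tfrac12\nab_3\log\Omega$, i.e. $\partial_u\Omega^{-1}=2\omegab-b^C\partial_{\th^C}\Omega^{-1}$, estimate $\partial_u(\Omega^{-1})'$ the same way ($\omegab'\in L^4(S)$ by $\mathcal O'$ and Proposition \ref{L4}; $\partial_\th(\Omega^{-1})'$ by Proposition \ref{dOmegap}), and convert to $\partial_u\Omega'$ via $\Omega'=-\Omega^{(1)}\Omega^{(2)}(\Omega^{-1})'$ and the bounds on $\Omega$, $\partial_u\Omega$ from Proposition \ref{Omega}.

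The substantive point is the estimate for $\partial_u b'$, and I expect it to be the main obstacle. Since $b^A\equiv0$ on $\Hb_0$, also $\partial_u b^A\equiv0$ there, so integrating $\partial_{\ub}(\partial_u b^A)=\partial_u(\partial_\ub b^A)=-4\,\partial_u(\Omega^2\zeta^A)$ (from Proposition \ref{b}) in $\ub$ gives
$$\partial_u(b^A)'(u,\ub)=-4\int_0^{\ub}\partial_u(\Omega^2\zeta^A)'(u,\ub')\,d\ub',$$
and by Minkowski's inequality it suffices to bound $\int_0^{\ub}\|\partial_u(\Omega^2\zeta^A)'(u,\ub')\|_{L^4(S_{u,\ub'})}\,d\ub'$ by $Ca$. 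Here I would write $\partial_u=\Omega\nab_3-b^B\partial_{\th^B}$ (acting on S-tensor components, with the attendant connection terms), and insert the $e_3$-direction structure equation for $\zeta$ obtained from the $\nab_3\etab$ equation of \eqref{null.str2} together with $\eta+\etab=2\nab\log\Omega$, $\nab_3\log\Omega=-2\omegab$ and the $[\nab_3,\nab]$ commutator of Section \ref{commutation}, in the schematic form $\nab_3\zeta\sim\nab\omegab+\betab+\psi\psi$. All difference-terms containing $\psi'$, $\gamma'$, $\Omega'$ or $b'$ and their first $\nab$-derivatives are then bounded pointwise in $L^4(S)$ by $Ca$ using $\mathcal O'$, $\tilde{\mathcal O}'$ and Proposition \ref{p}.

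The only delicate contributions are the terms carrying $\nab\omegab'$ and $\betab'$, which are not available pointwise in $L^4(S)$: $\nab\omegab'$ is controlled only in $L^2(S_{u,\ub})$ by $\mathcal O'$ (with $\nab^2\omegab'$ only in $L^2(\Hb_\ub)$), while $\betab'$ is controlled in $L^2(H_u)$ and $L^2(\Hb_\ub)$ by $\mathcal R'$. For $\betab'$ the outer $\ub'$-integration already suffices: Cauchy--Schwarz in $\ub'$, together with $\|\betab'\|_{L^4(S)}\le C\sum_{i\le1}\|\nab^i\betab'\|_{L^2(S)}$ and $\|\nab^i\betab'\|_{L^2_{\ub}L^2(S)}\le C\|\nab^i\betab'\|_{L^2(H_u)}\le C\mathcal R'$, yields $\int_0^{\ub}\|\betab'(u,\ub')\|_{L^4(S)}\,d\ub'\le Ca$. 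For $\nab\omegab'$ I would first use $\nab_4\omegab=\tfrac12\rho+\psi\psi$ from \eqref{null.str2} to integrate $\omegab'$ along $e_4$ from $\Hb_0$, rewriting $\nab\omegab'(u,\ub')$ as $\nab\omegab'(u,0)$ (which vanishes, the data coinciding on $\Hb_0$) plus $\int_0^{\ub'}\big(\tfrac12\nab\rho'+\cdots\big)\,d\ub''$; since $\nab\rho'$ is controlled in $L^2(H_u)$ by $\mathcal R'$, a Fubini rearrangement of the resulting double $\ub$-integral followed by Cauchy--Schwarz converts this into the required $L^4(S)$ bound. Keeping track of which slicing each difference-norm controls, and inserting the null structure equations in the right order so that no uncontrolled curvature derivative (in particular nothing involving $\alpha$) ever enters, is the part demanding care; everything else is a routine product estimate with the bounds already established.
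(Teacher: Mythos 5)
Your overall strategy — differentiate the $u$-direction transport relations for $\gamma$, $\Omega$, $b$ in coordinates, take differences, and bound term by term via Proposition \ref{p} — is exactly what the paper does, and for $\gamma'$ and $\Omega'$ your argument matches the paper's line for line. For $b'$ the paper is extremely terse (it reduces to $\int_0^{\ub}\|\partial_u(\Omega^2\zeta^A)'\|_{L^4(S_{u,\ub'})}\,d\ub'$ and simply asserts ``by Proposition \ref{p}, the right hand side $\leq Ca$''), so your effort to unpack $\partial_u\zeta'$ via $\nab_3\zeta\sim\nab\omegab+\betab+\psi\psi$ and to single out $\betab'$ and $\nab\omegab'$ as the only non-routine contributions is doing genuinely more than the paper writes down. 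Your treatment of $\betab'$ — Cauchy--Schwarz in $\ub'$, Sobolev embedding to land on $\sum_{i\leq1}\|\nab^i\betab'\|_{L^2(H_u)}\leq\mathcal R'$ — is correct and is the natural way to exploit the $L^2(H_u)$ control.

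The step that does not close is your remedy for $\nab\omegab'$. After inserting $\nab_4\omegab=\tfrac12\rho+\psi\psi$ and rearranging the double $\ub$-integral, you are still left needing a bound on $\int_0^{\ub}\|\nab\rho'\|_{L^4(S_{u,\ub''})}\,d\ub''$. Sobolev embedding (Proposition \ref{L4}) converts $\|\nab\rho'\|_{L^4(S)}$ into $\sum_{j\leq1}\|\nab^{1+j}\rho'\|_{L^2(S)}$, and the $j=1$ piece $\|\nab^2\rho'\|_{L^2(S)}$ is \emph{not} controlled by $\mathcal R'$, which stops at $\sum_{i\leq1}\|\nab^i\rho'\|_{L^2(H_u)}$. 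A further Cauchy--Schwarz in $\ub$ does not help, since it still produces $\|\nab^2\rho'\|_{L^2(H_u)}$. So the Fubini trick transfers the difficulty from $\nab^2\omegab'$ (which $\tilde{\mathcal O}'$ only controls in $L^2(\Hb_{\ub})$, not $L^2(H_u)$) to $\nab^2\rho'$ (which $\mathcal R'$ does not control at all) rather than resolving it. Note also that your remark ``$\nab\omegab'(u,0)$ vanishes, the data coinciding on $\Hb_0$'' is specific to the approximating sequence of Theorem \ref{convergencethm2}; in Theorem \ref{convergencethm} the two data sets on $\Hb_0$ need only be $a$-close, and the hypotheses there give $\|\nab^2\omegab'\|_{L^2(\Hb_0)}\leq a$ but not a pointwise-in-$u$ $L^4(S_{u,0})$ bound on $\nab\omegab'$.

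So: same route as the paper, with the $b'$ term worked out in much more detail. You correctly surface that $\betab'$ and $\nab\omegab'$ are the only terms that are not handled by direct Sobolev embedding from $\mathcal O'$, $\tilde{\mathcal O}'$, $\mathcal R'$, and your $\betab'$ argument is right. The $\nab\omegab'$ workaround as stated has a real gap; you would either need to show that the coefficient of $\nab\omegab'$ in $\partial_u(\Omega^2\zeta^A)'$ can be recombined with the $\betab'$ contribution into a quantity controlled along $H_u$ (for instance via the renormalized $\kappab$), or find a formulation of the estimate that never produces $\nab\omegab'$ in $L^1_{\ub}L^4(S)$ to begin with.
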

\begin{proof}
$\gamma$ satisfies a transport equation in the $\Lb$ direction:
$$\Ls_{\Lb}\gamma=2\Omega\chib.$$
From this we can derive an equation for $\gamma'$:
$$\frac{\partial}{\partial u}\gamma'=-b^A\frac{\partial}{\partial\th^A}\gamma'-(b^A)'\frac{\partial}{\partial\th^A}\gamma+2(\Omega\chi)'.$$
Estimating the right hand side using Proposition \ref{p}, we have
$$||\frac{\partial}{\partial u}\gamma'||_{L^4(S_{u,\ub})}\leq Ca.$$
$\Omega$ also satisfies a transport equation in the $\Lb$ direction:
$$\omegab=-\frac 12\nabla_3\log\Omega=\frac 12\Omega\nab_3\Omega^{-1}=\frac 12(\frac{\partial}{\partial u}+b^A\frac{\partial}{\partial\th^A})\Omega^{-1}.$$
As for $\gamma'$, we can derive an equation for $\frac{\partial}{\partial u}(\Omega^{-1})'$. By direct estimates using Proposition \ref{p}, we have
$$||\frac{\partial}{\partial u}\Omega'||_{L^4(S_{u,\ub})}\leq Ca.$$
Finally, we move to $\frac{\partial}{\partial u}b'$. $b$ does not satisfy a transport equation. We therefore resort to the equation
$$\frac{\partial b^A}{\partial \ub}=-4\Omega^2\zeta^A.$$
Applying $\frac{\partial}{\partial u}$, we get
$$\frac{\partial^2 b^A}{\partial \ub\partial u}=-4\frac{\partial}{\partial u}(\Omega^2\zeta^A).$$
Since $(b')^A=0$ on $\Hb_0$, $\frac{\partial b'}{\partial u}=0$ on $\Hb_0$.
Thus
$$||\frac{\partial}{\partial u}b'||_{L^4(S_{u,\ub})}\leq C\int_0^{\ub} ||\frac{\partial}{\partial u}(\Omega^2\zeta^A)'||_{L^4(S_{u,\ub'})}d\ub'.$$
By Proposition \ref{p}, the right hand side $\leq Ca$.
\end{proof}

\begin{proposition}\label{u2gp}
$$||\frac{\partial^2}{\partial u^2}(\gamma',\Omega',b')||_{L^2(S_{u,\ub})}\leq Ca.$$
\end{proposition}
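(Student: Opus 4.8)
The plan is to differentiate once more in $u$ the three transport equations used in the proof of Proposition \ref{ugp} and to bound the resulting right hand sides in $L^2(S_{u,\ub})$, relying on Proposition \ref{p} (so that $\mathcal O'$, $\tilde{\mathcal O}'$, $\mathcal R'$ and all the metric differences are already $\leq Ca$), on Proposition \ref{ugp} (the first $u$-derivatives of $\gamma'$, $\Omega'$, $b'$ are $\leq Ca$), on the interior bounds of Theorem \ref{timeofexistence}, and on the $\nab_3$ null structure and Bianchi equations \eqref{null.str1}, \eqref{null.str2}, \eqref{eq:null.Bianchi} — none of which involves $\alpha$. A factor of $\delta^{\frac12}$ (or better) is gained whenever a quantity is integrated in $u$ or $\ub$, and this is what lets the error terms be absorbed.

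For $\gamma'$ I would start from the equation of Proposition \ref{ugp},
$$\frac{\partial}{\partial u}\gamma'=2(\Omega\chib)'-b^A\frac{\partial}{\partial\th^A}\gamma'-(b^A)'\frac{\partial}{\partial\th^A}\gamma+(\text{products of }\gamma,\gamma',b,b'\text{ and one }\th\text{-derivative}),$$
and differentiate it once more in $u$. The resulting terms are, schematically, $\frac{\partial}{\partial u}(\Omega\chib)'$, $\frac{\partial b'}{\partial u}\frac{\partial\gamma}{\partial\th}$, $b\,\frac{\partial}{\partial\th}\frac{\partial\gamma'}{\partial u}$, $b'\,\frac{\partial}{\partial\th}\frac{\partial\gamma}{\partial u}$, and strictly lower order ones. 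All but the first are bounded in $L^2(S_{u,\ub})$ by $Ca$: the $\frac{\partial b'}{\partial u}$ term by Proposition \ref{ugp}; the $b\,\partial_\th\partial_u\gamma'$ term by commuting the displayed equation with $\partial_\th$, so that its right hand side involves only $\partial_\th(\Omega\chib)'$ and $\partial_\th^2\gamma'$, controlled by $\mathcal O'+\tilde{\mathcal O}'$ (Propositions \ref{Riccip}, \ref{Ricciellipticp}) and Proposition \ref{ddgammap}; and the rest by Proposition \ref{p}. For $\frac{\partial}{\partial u}(\Omega\chib)'$ I would convert $\frac{\partial}{\partial u}$ acting on a tensor into $\Omega\nab_3-b^A\nab_A$ plus Christoffel and $\partial_\th b$ terms, which reduces matters to $\nab_3\chibh'$, $\nab_3\trchb'$, $\nab\chib'$ and $\frac{\partial}{\partial u}\Omega'$; the first two are given by \eqref{null.str1},
$$\nab_3\chibh+\trchb\chibh=-2\omegab\chibh-\alphab,\qquad \nab_3\trchb+\tfrac12(\trchb)^2=-2\omegab\trchb-|\chibh|^2,$$
so that $\alpha$ never appears and $\alphab'$ enters only through its $\mathcal R'$-bound on $\Hb_{\ub}$, promoted to $L^2(S_{u,\ub})$ by integrating the $\nab_4\alphab$-Bianchi equation in $\ub$ as in Section \ref{convsec3}. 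A Gronwall argument in $u$, as in Proposition \ref{ddgammap}, then closes the bound for $\gamma'$. The estimate for $\Omega'$ is identical in structure, using $\omegab=\tfrac12\Omega\nab_3\Omega^{-1}$ and $\nab_3\omegab=2\omega\omegab+\ldots+\tfrac12\rho$ from \eqref{null.str2}, which involves only the controlled component $\rho$.

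For $b'$ I would proceed as in Proposition \ref{ugp}: from $\frac{\partial b^A}{\partial\ub}=-4\Omega^2\zeta^A$, apply $\frac{\partial^2}{\partial u^2}$, use $\frac{\partial^2 b'}{\partial u^2}=0$ on $\Hb_0$, and integrate in $\ub$, so that the task becomes bounding $\int_0^{\ub}\|\frac{\partial^2}{\partial u^2}(\Omega^2\zeta)'\|_{L^2(S_{u,\ub'})}\,d\ub'$ by $Ca$. Expanding $\frac{\partial^2}{\partial u^2}\zeta'$ via the identity $\nab_3\zeta\sim\nab\omegab+\betab+\psi^2$ (derived from \eqref{null.str1}, \eqref{null.str2} and the commutation formula) together with the structure and Bianchi equations, the terms one meets are $\nab\rho'$, $\nab\betab'$, $\nab\alphab'$, the term $b\,\nab^2\omegab'$ carrying the small coefficient $b$, and strictly lower order terms. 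The terms $\nab\rho'$ and $\nab\betab'$ are controlled after the $\ub$-integration by $\delta^{\frac12}\mathcal R'$, since $\rho'$ and $\betab'$ are controlled on $H_u$, and the lower order terms by $\mathcal O'+\tilde{\mathcal O}'+\mathcal R'$ and Theorem \ref{timeofexistence}.

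The main obstacle is the pair $\nab\alphab'$ and $b\,\nab^2\omegab'$: the norms $\mathcal R'$ and $\tilde{\mathcal O}'$ control $\alphab'$ and $\nab^2\omegab'$ only on the incoming cones $\Hb_{\ub}$ and not on the outgoing cones $H_u$, so the direct $\ub$-integral estimate is not available — exactly the asymmetry that forced the one-sided bounds for $\nab^3\omega$ in Proposition \ref{omegaelliptic} and the separate treatment of $\alpha$ throughout. I would resolve it by proving supplementary transport estimates for $\nab\alphab'$ and $\nab^2\omegab'$ along $e_4$, commuting the relevant $\nab_4$-Bianchi and structure equations and integrating in $\ub$ in the spirit of Section \ref{convsec3}, and then exploiting the $\delta^{\frac12}$ gain from that integration together with the smallness $|b|\lesssim\delta$. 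Collecting all the pieces gives $\|\frac{\partial^2}{\partial u^2}(\gamma',\Omega',b')\|_{L^2(S_{u,\ub})}\leq Ca$, which is the assertion.
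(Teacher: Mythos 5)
Your overall strategy does match the paper's: differentiate the $\partial_u$, $\partial_{\ub}$ transport equations once more in $u$, substitute the $\nab_3$ null structure and Bianchi equations for $u$-derivatives of the Ricci coefficients and curvature components, and close by Gronwall, keeping track of the fact that $\alpha$ never appears. The paper's own proof is a four-sentence sketch that does exactly this, citing $\nab_3\chibh$, $\nab_3\trchb$ for the $\gamma'$ bound, the constraint $\omegab=\tfrac12(\partial_u+b^A\partial_{\th^A})\Omega^{-1}$ together with the $\nab_3\omega$ structure equation for $\Omega'$, and $\partial_{\ub}b^A=-4\Omega^2\zeta^A$ together with the $\nab_3\betab$ Bianchi equation for $b'$. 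You also correctly identify the structural danger — that $\tilde{\mathcal O}'$ and $\mathcal R'$ control $\nab^2\omegab'$ and $\alphab'$ on $\Hb_{\ub}$ but not on $H_u$ — which the paper's sketch simply does not address.

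Two issues, one small and one substantive. The small one: you write ``$\nab_3\omegab=2\omega\omegab+\ldots+\tfrac12\rho$ from \eqref{null.str2}'', but no $\nab_3\omegab$ equation appears in \eqref{null.str2}; the equations there are $\nab_4\omegab$ and $\nab_3\omega$. The paper uses $\nab_3\omega$: one differentiates $\partial_{\ub}(\Omega^{-1})'=2\omega'$ twice in $u$ and closes a Gronwall in $\ub$ using $\nab_3\omega$, so no evolution equation for $\omegab$ in the $e_3$-direction is needed. Since $\nab_3\omegab=-\tfrac12\nab_3^2\log\Omega$ is essentially the quantity being estimated, an equation for it would be circular.

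The substantive issue is your ``promotion'' of $\nab\alphab'$. You propose to bound $\nab\alphab'$ in $L^2(S_{u,\ub})$ by commuting the $\nab_4\alphab$ Bianchi equation with $\nab$ and integrating in $\ub$. The commuted equation reads $\nab_4\nab\alphab'\sim\nab^2\betab'+\psi\nab\alphab'+\ldots$, so the source is $\nab^2\betab'$ — and the norm $\mathcal R'$, which is deliberately one derivative weaker than $\mathcal R$, only controls $\nab^i\Psi'$ for $i\leq1$ on $H_u$. So $\nab^2\betab'$ is not available. The same problem afflicts the initial data term: the assumption of Theorem \ref{convergencethm} controls $\|\nab\alphab'\|_{L^2_uL^2(S_{u,0})}$, which is an integral in $u$, not $\sup_u\|\nab\alphab'\|_{L^2(S_{u,0})}$, so the boundary term in your $\ub$-transport estimate is not pointwise controlled. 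The undifferentiated $\alphab'$ can indeed be promoted (its $\nab_4$-transport sees only $\nab\betab'$, which is in $\mathcal R'$), but one full derivative of $\alphab'$ cannot. The paper's two-line assertion that ``all the terms on the right hand side can be bounded in $L^2(S_{u,\ub})$'' elides this; to make the argument rigorous one needs to track precisely which combinations of derivatives of $\betab'$ and $\omegab'$ actually appear after expanding $\partial_u^2(\Omega^2\zeta)'$, and to confirm that the potentially problematic terms $\nab\alphab'$ and $\nab^2\omegab'$ either do not survive (e.g., they enter only through time-integrated quantities or carry the small coefficient $b$ in a way that permits a Gronwall closure using $\partial_u^2b'$ itself on both sides) or can be bounded by a different mechanism. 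Your write-up flags the right issue but does not close it.
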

\begin{proof}
We prove this Proposition by taking an extra $\frac{\partial}{\partial u}$ derivative of the equations in the proof of Proposition \ref{ugp}. By
$$\Ls_{\Lb}\gamma=2\Omega\chib,$$
we have
$$\frac{\partial^2}{\partial u^2}\gamma=2\frac{\partial}{\partial u}(\Omega\chib)-\frac{\partial}{\partial u}(b^A\frac{\partial}{\partial\th^A}\gamma).$$
Notice that 
$$\Omega\frac{\partial}{\partial u}\chib=\nab_3\chib-\chib\chib.$$
Thus using the null structure equation for $\nab_3\chib$, and estimating directly, we get
$$||\frac{\partial^2}{\partial u^2}\gamma'||_{L^2(S_{u,\ub})}\leq Ca.$$
To estimate $\frac{\partial^2}{\partial u^2}\Omega$, we use
$$\omegab=\frac 12(\frac{\partial}{\partial u}+b^A\frac{\partial}{\partial\th^A})\Omega^{-1}.$$
Differentiating with respect to $\frac{\partial}{\partial u}$, using the estimates for $\frac{\partial}{\partial u}b^A$ and the null structure equation for $\nab_3\omega$, we get
$$||\frac{\partial^2}{\partial u^2}\Omega'||_{L^2(S_{u,\ub})}\leq Ca.$$
Finally, differentiating 
$$\frac{\partial b^A}{\partial \ub}=-4\Omega^2\zeta^A$$
twice in the $u$ direction, we get
$$\frac{\partial^3 b^A}{\partial \ub\partial u^2}=-4\frac{\partial^2}{\partial u^2}(\Omega^2\zeta^A).$$
Using the null structure equations as well as the Bianchi equation for $\nab_3\betab$, we see that all the terms on the right hand side can be bounded in $L^2(S_{u,\ub})$. Thus we can integrate to get
$$||\frac{\partial^2}{\partial u^2}b'||_{L^2(S_{u,\ub})}\leq Ca.$$
\end{proof}
It is also possible to prove estimates for the mixed second derivatives
\begin{proposition}\label{utgp}
$$||\frac{\partial^2}{\partial u\partial\th^A}(\gamma',\Omega',b')||_{L^2(S_{u,\ub})}\leq Ca.$$
\end{proposition}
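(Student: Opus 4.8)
The plan is to follow the pattern of Propositions \ref{ugp} and \ref{u2gp}, now applying one $\frac{\partial}{\partial u}$ and one $\frac{\partial}{\partial\th}$ derivative to the transport equations for $\gamma$, $\Omega$ and $b$, forming the corresponding difference equations, and estimating their right hand sides with the bounds already in hand: $\mathcal O',\tilde{\mathcal O}',\mathcal R'\leq Ca$ from Proposition \ref{p}, the pointwise-in-$(u,\ub)$ bounds for up to two angular derivatives of $\gamma',\Omega',b'$ from Propositions \ref{dgammap}, \ref{dOmegap}, \ref{ddgammap}, \ref{ddOmegap} and \ref{bp}, and the $\frac{\partial}{\partial u}$ bounds of Propositions \ref{ugp} and \ref{u2gp}.

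For $\gamma'$ and $\Omega'$ the argument is short. Writing $\Ls_{\Lb}\gamma=2\Omega\chib$ in coordinates gives $\frac{\partial}{\partial u}\gamma_{AB}=2\Omega\chib_{AB}-b^C\partial_C\gamma_{AB}+(\text{terms }\gamma\cdot\partial b)$; applying $\frac{\partial}{\partial\th}$ and subtracting, $\frac{\partial^2}{\partial u\partial\th}\gamma'$ is exactly $\frac{\partial}{\partial\th}$ of the difference of the right hand sides, and every term there carries either one angular derivative of a difference Ricci coefficient ($\nab\chibh'$, $\nab\trchb'$, $\nab\Omega'$, all in $\mathcal O'$ and Proposition \ref{dOmegap}), at most two angular derivatives of $\gamma'$ or of $b'$ (Propositions \ref{ddgammap}, \ref{bp}), or strictly lower order factors, so the $L^2(S_{u,\ub})$ norm is $\leq Ca$. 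The same applies to $\frac{\partial}{\partial u}\Omega^{-1}=2\omegab-b^A\partial_A\Omega^{-1}$ (here only $\nab\omegab'$, hence $\mathcal O'$, enters the top order term), and $\frac{\partial^2}{\partial u\partial\th}\Omega'$ follows from the bound for $(\Omega^{-1})'$ together with the upper and lower bounds on the $\Omega^{(i)}$ and their controlled derivatives.

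The $b'$ estimate is where I expect the real work to be. From $\frac{\partial b^A}{\partial\ub}=-4\Omega^2\zeta^A$, differentiating by $\frac{\partial^2}{\partial u\partial\th}$, subtracting, and integrating in $\ub$ from $\ub=0$ (where $b'\equiv0$, hence $\frac{\partial^2}{\partial u\partial\th}b'$ vanishes) gives $\|\frac{\partial^2}{\partial u\partial\th}b'\|_{L^2(S_{u,\ub})}\leq C\int_0^{\ub}\|\frac{\partial^2}{\partial u\partial\th}(\Omega^2\zeta^A)'\|_{L^2(S_{u,\ub'})}\,d\ub'$. The difficulty is that $\frac{\partial}{\partial u}$ turns $\zeta$ into $\nab_3\zeta$, and from $\nab_3\etab=-\chib\cdot(\etab-\eta)+\betab$ together with $\eta+\etab=2\nab\log\Omega$ and $\nab_3\log\Omega=-2\omegab$ one finds $\nab_3\zeta\sim\nab\omegab+\betab+(\text{quadratic in }\psi)$; hence $\frac{\partial^2}{\partial u\partial\th}(\Omega^2\zeta^A)'$ contains $\nab^2\omegab'$ and $\nab\betab'$, quantities for which $\tilde{\mathcal O}'$ and $\mathcal R'$ supply control on $\Hb_{\ub}$ but not on $H_u$, while the $\ub$-integral at fixed $u$ asks for the latter. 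Mirroring the treatment of $\frac{\partial^2}{\partial u^2}b'$ in the proof of Proposition \ref{u2gp}, the resolution is not to treat these factors as given but to control them pointwise in $L^2(S_{u,\ub})$ by their own $\nab_4$ equations integrated from $H_0$ — the Bianchi equation $\nab_4\betab=-\nab\rho+{}^*\nab\sigma+2\chibh\cdot\beta+\dots$, and, for the angular derivatives of $\omegab$, the $\nab_4$ transport equation for $\kappab=-\nab\omegab+{}^*\nab\omegab^\dagger-\frac{1}{2}\betab$ used in Proposition \ref{ellipticomegap}, in which the dangerous $\nab\rho$, $\nab\sigma$ contributions cancel — thereby replacing the missing $H_u$-control by the $H_u$-controlled curvature differences $\beta',\rho',\sigma'$ (governed by $\mathcal R'$), at the cost of a favourable extra power of $\delta$ from the $\ub$-integration. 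Keeping track of which factors must be carried in this way, and checking that the substitution introduces no net loss of derivatives, is the one step that needs care.
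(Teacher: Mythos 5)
Your overall strategy matches the paper's, which simply refers back to Proposition \ref{ugp} for a one-line ``take a $\frac{\partial}{\partial\th^A}$ derivative as well'' argument; for $\gamma'$ and $\Omega'$ your account is complete and correct. For $b'$ you are right that the only serious issue is in $\frac{\partial^2}{\partial u\partial\th}(\Omega^2\zeta)'$, and you have correctly located the dangerous contribution $\nab^2\omegab'$ coming from $\nab_3\zeta\sim -\betab-2\nab\omegab+\psi\psi$. Two corrections, however. First, $\nab\betab'$ is not in fact a problem: $\betab$ appears in \emph{both} sums defining $\mathcal R'$, so $\sup_u\|\nab\betab'\|_{L^2(H_u)}\leq\mathcal R'$ already supplies the $L^1_{\ub}L^2(S)$ control you need after Cauchy--Schwarz; the $\nab_4\betab$ Bianchi re-derivation you propose is unnecessary. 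Second, the $\kappab$ workaround, which is the right idea (replacing $\nab\omegab'$ by $-\kappab'+{}^*\nab\omegab^{\dagger\prime}-\tfrac12\betab'$ and propagating $\kappab'$ via its $\nab_4$ equation), does not close as sketched. Integrating $\nab_4\nab\kappab'$ in $\ub$ requires the initial term $\|\nab\kappab'\|_{L^2(S_{u,0})}$, which at $\ub=0$ reduces (since $\omegab^\dagger\equiv0$ on $\Hb_0$) to $\|\nab^2\omegab'\|_{L^2(S_{u,0})}+\|\nab\betab'\|_{L^2(S_{u,0})}$, and the theorem's hypotheses only provide $\|\nab^2\omegab'\|_{L^2(\Hb_0)}$ (integrated in $u$), not a bound uniform in $u$; likewise, transporting $\nab^2\omegab^{\dagger\prime}$ with $\nab_4\omegab^\dagger=\tfrac12\sigma$ produces $\nab^2\sigma'$, which is one more angular derivative than $\mathcal R'$ controls. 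So your flag of the ``one step that needs care'' is apt, but the resolution you sketch leaves a loose end at the level of the initial data and of $\omegab^\dagger$ that must still be tied off, e.g.\ by commuting only with lower-order derivatives and accepting a slightly weaker norm, or by first proving a $\sup_u$ bound for $\nab^2\omegab'|_{\Hb_0}$ from the initial data construction.
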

\begin{proof}
This can be proved in a similar way as Proposition \ref{ugp}, taking $\frac{\partial}{\partial\th^A}$ instead of $\frac{\partial}{\partial u}$ derivative.
\end{proof}
On the other hand, $\ub$ derivatives can only be taken once and can only be estimated after taking the $L^{p_0}_{\ub}$ norm:
\begin{proposition}\label{ubgp}
$$||\frac{\partial}{\partial \ub}(\gamma',\Omega',b')||_{L^{p_0}_{\ub}L^\infty(S_{u,\ub})}\leq Ca.$$
\end{proposition}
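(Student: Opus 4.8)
\textbf{Proof proposal for Proposition \ref{ubgp}.}

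The plan is to read off transport equations for the $\ub$-derivatives of the metric components directly, rather than integrating anything. The relevant equations, valid on each of the two spacetimes, are
\begin{equation*}
\frac{\partial}{\partial \ub}\gamma_{AB}=2\Omega\chi_{AB},\qquad \frac{\partial}{\partial \ub}\Omega^{-1}=2\omega,\qquad \frac{\partial}{\partial \ub}b^A=-4\Omega^2\zeta^A ,
\end{equation*}
as used already in Section \ref{metric}. Subtracting the two copies gives, schematically,
\begin{equation*}
\frac{\partial}{\partial \ub}\gamma'_{AB}=2(\Omega\chi)'_{AB},\qquad \frac{\partial}{\partial \ub}(\Omega^{-1})'=2\omega',\qquad \frac{\partial}{\partial \ub}(b^A)'=-4(\Omega^2\zeta^A)' ,
\end{equation*}
and $\Omega'$ is recovered algebraically from $(\Omega^{-1})'$ via $\Omega'=-\Omega^{(1)}\Omega^{(2)}(\Omega^{-1})'$, whose $\ub$-derivative brings in only $\omega'$, $(\Omega^{-1})'$ and the background $\omega^{(i)},\Omega^{(i)}$ (using $\partial_{\ub}\Omega^{(i)}=-2\omega^{(i)}(\Omega^{(i)})^2$). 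Thus no data on $S_{u,0}$ enters and the whole statement reduces to bounding the right-hand sides in $L^{p_0}_{\ub}L^\infty(S_{u,\ub})$ uniformly in $u$.

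Next I would split each right-hand side into \emph{harmless} terms and \emph{delicate} terms. Expanding $\chi'=\chih'+\tfrac12\trch'\gamma+\tfrac12\trch\gamma'$ and $\zeta'=\tfrac12(\eta'-\etab')$, the harmless terms are those of the form (background, pointwise bounded)$\,\times\,$(a difference quantity that is controlled in $L^\infty_uL^\infty_{\ub}L^\infty(S)$). By Proposition \ref{p} together with Propositions \ref{Omegap}, \ref{gammap} we have $\Omega',(\Omega^{-1})',\gamma'_{AB},((\gamma^{-1})^{AB})'$ bounded pointwise by $Ca$; and $\trch',\trchb',\eta',\etab',\omegab'$ (all $\neq \chih,\omega$) are controlled in $L^2(S_{u,\ub})$ with up to one derivative by $\mathcal O'$ and with two derivatives by $\tilde{\mathcal O}'$, hence by Sobolev embedding (Proposition \ref{Linfty}) and Proposition \ref{p} they are bounded by $Ca$ in $L^\infty_uL^\infty_{\ub}L^\infty(S)$. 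All such products, being pointwise $\leq Ca$ on a $\ub$-interval of length $\leq\delta$, are trivially $\leq Ca$ in $L^{p_0}_{\ub}L^\infty(S)$. The terms that remain are exactly those involving $\chih'$ and $\omega'$ multiplied by pointwise-bounded background factors ($\Omega^{(1)}\chih'$ in the $\gamma'$ equation, $\omega'$ in the $(\Omega^{-1})'$ equation, and $\omega'$, $\chih'$-free contributions in the $b'$ equation once $\eta',\etab'$ have been put in the harmless bin).

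It therefore suffices to prove $\|\chih'\|_{L^{p_0}_{\ub}L^\infty(S_{u,\ub})}+\|\omega'\|_{L^{p_0}_{\ub}L^\infty(S_{u,\ub})}\leq Ca$, uniformly in $u$. Here the only available control is: $\sum_{i\leq1}\sup_u\|\nabla^i(\chih',\omega')\|_{L^{p_0}_{\ub}L^2(S_{u,\ub})}\leq Ca$ (from $\mathcal O'$), together with $\sup_u\|\nabla^2(\chih',\omega')\|_{L^2(H_u)}\leq Ca$ (from $\tilde{\mathcal O}'$), and the consequence $\mathcal O''\leq Ca$ (Proposition \ref{OpSobolev}), which records $\sup_u(\|(\chih',\omega')\|_{L^2_{\ub}L^\infty(S)}+\|\nabla(\chih',\omega')\|_{L^2_{\ub}L^4(S)})\leq Ca$. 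The plan is to combine a sphere Sobolev inequality of the type $\|\phi\|_{L^\infty(S)}\leq C\|\phi\|_{H^{1}(S)}^{1-\theta}\|\phi\|_{H^{2}(S)}^{\theta}$ (valid for any $\theta\in(0,1)$ on the uniformly non-degenerate spheres $S_{u,\ub}$, by Propositions \ref{L4}--\ref{Linfty} and interpolation) with a Hölder inequality in $\ub$, matching the $L^{p_0}_{\ub}$ integrability of the one-derivative norm against the $L^2_{\ub}$ integrability of the two-derivative norm. \textbf{This matching is the main obstacle}: the $L^\infty(S)$ embedding essentially costs two angular derivatives, while the two-derivative difference norm is only $L^2$ in $\ub$ and the one-derivative norm only $L^{p_0}_{\ub}$; arranging the interpolation exponent $\theta$ (and, if needed, an auxiliary use of the $\nabla_3$-transport equation for $\chih'$ via Propositions \ref{transportp} and \ref{transport} with a large but finite $L^p(S)$-exponent, to improve the $\ub$-distribution of the top-order term) so that the product lands precisely in $L^{p_0}_{\ub}L^\infty(S)$ with constant $Ca$ is the one genuinely delicate point. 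Once this bound is in hand, feeding it back into the three displayed difference equations yields $\|\partial_{\ub}(\gamma',\Omega',b')\|_{L^{p_0}_{\ub}L^\infty(S_{u,\ub})}\leq Ca$, which is the assertion.
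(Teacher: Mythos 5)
Your starting point coincides with the paper's: one writes the three $\partial_{\ub}$-transport equations $\partial_{\ub}\gamma'_{AB}=2(\Omega\chi)'_{AB}$, $\tfrac{1}{2}\partial_{\ub}(\Omega^{-1})'=\omega'$, $\partial_{\ub}(b^A)'=-4(\Omega^2\zeta^A)'$ and estimates the right-hand sides directly using Proposition~\ref{p}, with no integration. Where you slip is in the harmless/delicate split. You assert that $\eta',\etab',\omegab'$ lie in $L^\infty_uL^\infty_{\ub}L^\infty(S)$; in fact the norms only give $\nabla^{\leq 1}(\eta',\etab',\omegab')$ in $\sup_{u,\ub}L^2(S)$ (hence these differences are in $L^\infty_uL^\infty_{\ub}L^4(S)$ only), while $\nabla^2(\eta',\etab')$ is controlled in $L^2(H_u)=L^2_{\ub}L^2(S)$ and $\nabla^2\omegab'$ in $L^2(\Hb_{\ub})$, never pointwise in $(u,\ub)$. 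By contrast $\trch',\trchb'$ \emph{are} genuinely pointwise bounded, since $\tilde{\mathcal O}'$ gives $\sup_{u,\ub}\|\nabla^2(\trch',\trchb')\|_{L^2(S)}$. So $\eta',\etab'$ (which are exactly what enters $\zeta'$ in the $b'$ equation) actually sit in the same bin as $\chih',\omega'$ and face the same $L^{p_0}_{\ub}$-vs-$L^2_{\ub}$ tension you flag. This does not alter your ultimate reduction, but the classification as you wrote it is not correct.

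On the tension itself: you have correctly identified what the paper leaves implicit. With $p_0=2$, Proposition~\ref{OpSobolev} gives $\mathcal O''\leq C(\tilde{\mathcal O}'+\mathcal O')\leq Ca$, i.e.\ $\|(\chih',\omega',\eta',\etab',\trch',\trchb')\|_{L^2_{\ub}L^\infty(S)}\leq Ca$ and $\|\nabla(\cdot)'\|_{L^2_{\ub}L^4(S)}\leq Ca$, and then the three transport equations close immediately in $L^2_{\ub}L^\infty(S)$; this is presumably what the paper's one-line "directly estimate the right hand side" has in mind. For $p_0>2$ the norms deliver $\nabla^{\leq1}(\chih',\omega')\in L^{p_0}_{\ub}L^2(S)$ from $\mathcal O'$ but only $\nabla^2(\chih',\omega')\in L^2_{\ub}L^2(S)$ from $\tilde{\mathcal O}'$, and — as you observe yourself — the Gagliardo--Nirenberg/H\"older interpolation gives $L^q_{\ub}L^\infty(S)$ for every $q<p_0$ but not for $q=p_0$; your proposal does not close at the stated exponent, and you flag this but do not resolve it. It is worth noting that the loss is harmless downstream: Proposition~\ref{gspace} uses only that $\partial_{\ub}g_n$ converges in some finite $L^q_{\ub}$ while being uniformly bounded in $L^\infty_{\ub}$, so any $q<p_0$ would do. In short, your argument is the paper's argument with the subtle term correctly isolated, but it contains the misclassification of $\eta',\etab',\omegab'$ and, as you acknowledge, does not actually establish the estimate at the exponent $p_0$ when $p_0>2$.
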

\begin{proof}
We can directly estimate the right hand side of the equations
$$\frac{\partial}{\partial \ub}\gamma'=2(\Omega\chi)',$$
$$\frac 12\frac{\partial}{\partial \ub}(\Omega^{-1})'=\omega',$$
$$\frac{\partial}{\partial \ub}(b^A)'=-4(\Omega^2\zeta^A)'.$$
\end{proof}
The mixed $\ub$ and $\th$ derivatives can also be estimates:
\begin{proposition}\label{ubtgp}
$$||\frac{\partial^2}{\partial \ub \partial \th^A}(\gamma',\Omega',b')||_{L^{p_0}_{\ub}L^4(S_{u,\ub})}\leq Ca.$$
\end{proposition}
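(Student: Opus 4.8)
The plan is to mimic the proof of Proposition~\ref{ubgp} one derivative higher, taking the mixed $\frac{\partial^2}{\partial\ub\partial\th^A}$ derivative of the same three transport equations for $\gamma'$, $\Omega'$ and $b'$ in the $\ub$ direction. Concretely, differentiate
$$\frac{\partial}{\partial \ub}\gamma'=2(\Omega\chi)',\qquad \frac 12\frac{\partial}{\partial \ub}(\Omega^{-1})'=\omega',\qquad \frac{\partial}{\partial \ub}(b^A)'=-4(\Omega^2\zeta^A)'$$
with respect to $\frac{\partial}{\partial\th^A}$. For $\gamma'$ and $\Omega'$ this immediately expresses $\frac{\partial^2}{\partial\ub\partial\th^A}\gamma'$ and $\frac{\partial^2}{\partial\ub\partial\th^A}(\Omega^{-1})'$ pointwise (no integration needed) in terms of $\frac{\partial}{\partial\th^A}(\Omega\chi)'$ and $\frac{\partial}{\partial\th^A}\omega'$; one then converts coordinate angular derivatives to covariant derivatives at the cost of $\Gamma'$ and $\Gamma$ terms, using Proposition~\ref{Gammap} to control $\Gamma'$ in $L^4(S)$. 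For $b'$, since $(b^A)'=0$ on $\Hb_0$, one integrates $\frac{\partial^2}{\partial\ub\partial\th^A}(b^A)'=-4\frac{\partial}{\partial\th^A}(\Omega^2\zeta^A)'$ in $\ub$; but here one should instead take the $L^{p_0}_{\ub}$ norm directly of the (un-integrated) identity obtained by differentiating the $b$ equation, exactly as in Proposition~\ref{ubgp}, avoiding the need to integrate a quantity that is only in $L^{p_0}_{\ub}$.

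The key estimate to feed in is that $\frac{\partial}{\partial\th^A}$ of the products $(\Omega\chi)'$, $\omega'$, $(\Omega^2\zeta^A)'$, after passing to $\nabla$, is controlled in $L^{p_0}_{\ub}L^4(S)$ by $Ca$. For the $\chih'$ and $\omega'$ pieces this is precisely the content of the $\mathcal O'$ norm (see the definition in Section~\ref{convsec1}): $\sum_{i\le 1}\sup_u\|\nabla^i(\chih',\omega')\|_{L^{p_0}_{\ub}L^2(S_{u,\ub})}\le\mathcal O'$, and one upgrades $L^2(S)$ to $L^4(S)$ by Sobolev embedding (Proposition~\ref{L4}) at the cost of one more derivative, which is available since $\tilde{\mathcal O}'$ controls two derivatives of $\chih',\omega'$ on the relevant hypersurfaces; for the remaining Ricci coefficients $\eta',\etab',\zeta'=\eta'-\nabla(\log\Omega)'$, etc., one uses $\mathcal O'$ together with Proposition~\ref{OpSobolev} ($\mathcal O''\le C(\tilde{\mathcal O}'+\mathcal O')$), which already packages the $L^2_{\ub}L^4(S)$ and $L^2_{\ub}L^\infty(S)$ bounds (and $L^{p_0}\subset L^2$ on the finite $\ub$-interval, or rather one keeps $p_0$ throughout). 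The background factors $\Omega$, $\chi$, $\zeta$ and their one angular derivative are bounded in $L^\infty(S)$ uniformly by Theorem~\ref{timeofexistence}, and $\Omega'$, $\frac{\partial}{\partial\th}\Omega'$ by Propositions~\ref{Omegap}, \ref{dOmegap}; so every term in the Leibniz expansion is a product of a background factor bounded in $L^\infty$ (or $L^4$) and a difference factor bounded in $L^{p_0}_{\ub}L^4(S)$ (or $L^\infty$) by $Ca$ via Proposition~\ref{p}.

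Assembling: write $\frac{\partial^2}{\partial\ub\partial\th^A}\gamma'$ schematically as $\Omega'\,\partial_\th\chi + \Omega\,\partial_\th\chi' + (\partial_\th\Omega')\chi + (\partial_\th\Omega)\chi'$, convert $\partial_\th$ to $\nabla$ plus $\Gamma$-corrections, and bound each summand in $L^{p_0}_{\ub}L^4(S_{u,\ub})$ by $Ca$ using the inputs above; similarly for $\Omega'$ (using $\omega\leftrightarrow\nabla_4\log\Omega$ and the algebraic relation between $\Omega'$ and $(\Omega^{-1})'$ as in Proposition~\ref{Omegap}) and for $b'$ (integrating in $\ub$ is unnecessary once one differentiates the $b$-ODE in $\th$ and takes $L^{p_0}_{\ub}$; the factor $\Omega^2$ and its angular derivatives are harmless). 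Since $C$ depends only on the a priori constants and $\delta$ is already fixed, the bound $\le Ca$ follows with $C$ independent of $a$.

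The main obstacle, as in the analogous one-derivative statements, is purely bookkeeping: one must keep careful track of where each difference quantity is controlled — $\chih'$ and $\omega'$ only in the weaker $L^{p_0}_{\ub}L^2(S)$/$L^{p_0}_{\ub}L^4(S)$ topologies (not in $L^\infty_{\ub}$), while $\gamma'$, $\Omega'$ and $\Gamma'$ are controlled pointwise or in $L^4(S)$ uniformly in $(u,\ub)$ — and make sure the Leibniz expansion never requires, e.g., $L^\infty_{\ub}$ control of $\chih'$ or two angular derivatives of $\chih'$ in $L^4$ without the Sobolev/elliptic inputs of Propositions~\ref{L4}, \ref{Ricciellipticp}. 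No new idea beyond those already used for Propositions~\ref{ddgammap}, \ref{ddOmegap}, \ref{ubgp} is needed; one only has to verify that the $L^{p_0}_{\ub}$-integrability, rather than $L^\infty_{\ub}$, is enough, which it is because every product has at most one factor that is merely $L^{p_0}_{\ub}$ and all others are $L^\infty_{\ub}L^\infty(S)$.
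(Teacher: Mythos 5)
Your proposal is correct and takes essentially the same approach as the paper: the paper's own proof is a one-line instruction to take $\frac{\partial}{\partial\th^A}$ derivatives of the three transport equations $\frac{\partial}{\partial\ub}\gamma'=2(\Omega\chi)'$, $\frac12\frac{\partial}{\partial\ub}(\Omega^{-1})'=\omega'$, $\frac{\partial}{\partial\ub}(b^A)'=-4(\Omega^2\zeta^A)'$ and estimate the right-hand sides directly using Proposition~\ref{p}, which is exactly what you do. Your self-correction on the $b'$ case (no integration is needed; one just takes $L^{p_0}_{\ub}$ of the differentiated identity) and your bookkeeping of where each difference quantity lives agree with the paper's intended argument, and your observation that the only subtle point is that $\chih',\omega'$ live in the weaker $L^{p_0}_{\ub}$-based topologies is precisely the structural feature the paper relies on here and in Proposition~\ref{ubgp}.
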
\label{ubugp}
\begin{proof}
Take $\frac{\partial}{\partial \th^A}$ derivatives of the equations
$$\frac{\partial}{\partial \ub}\gamma'=2(\Omega\chi)',$$
$$\frac 12\frac{\partial}{\partial \ub}(\Omega^{-1})'=\omega',$$
$$\frac{\partial}{\partial \ub}(b^A)'=-4(\Omega^2\zeta^A)',$$
and apply the estimates in Proposition \ref{p}.
\end{proof}
Similarly for the mixed $\ub$ and $u$ derivatives:
\begin{proposition}
$$||\frac{\partial^2}{\partial \ub \partial u}(\gamma',\Omega',b')||_{L^{p_0}_{\ub}L^2(S_{u,\ub})}\leq Ca.$$
\end{proposition}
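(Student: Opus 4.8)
\textbf{Proof plan for the mixed $\partial_{\ub}\partial_u$ estimate.}

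The plan is to follow exactly the same strategy as in the preceding Propositions \ref{ugp}--\ref{ubtgp}: start from the three transport equations in the $\ub$ direction for the metric components, apply $\frac{\partial}{\partial u}$, express everything through the Ricci coefficients and their $u$-derivatives, and then use the null structure and Bianchi equations together with the a priori bounds and the difference estimates collected in Proposition \ref{p} (plus Propositions \ref{ugp}, \ref{utgp}). The starting point is the system
\begin{equation*}
\frac{\partial}{\partial \ub}\gamma'=2(\Omega\chi)',\qquad \frac12\frac{\partial}{\partial \ub}(\Omega^{-1})'=\omega',\qquad \frac{\partial}{\partial \ub}(b^A)'=-4(\Omega^2\zeta^A)'.
\end{equation*}
Applying $\frac{\partial}{\partial u}$ to these identities produces $\frac{\partial^2}{\partial \ub\partial u}(\gamma',\Omega',b')$ in terms of $\frac{\partial}{\partial u}$ of $(\Omega\chi)'$, $\omega'$ and $(\Omega^2\zeta^A)'$. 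Since $\ub$-differentiation commutes with $u$-differentiation in the coordinate system, no commutator terms arise at this step.

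The next step is to convert the $\frac{\partial}{\partial u}$ derivatives into geometric quantities. Using $\Omega\,\partial_u = \nab_3 - \Omega b^A\nab_{\partial_{\th^A}} + (\text{lower order})$, one has $\Omega\frac{\partial}{\partial u}\chi = \nab_3\chi + \psi\chi$ and $\Omega\frac{\partial}{\partial u}\zeta = \nab_3\zeta + \psi\psi$, and similarly for $\omega$. Then I would substitute the null structure equations for $\nab_3\chi$ (i.e. the equations for $\nab_3\trchb$, $\nab_3\chibh$ together with $\chi = \chih + \frac12\trch\gamma$, noting that the $\nab_3\chih$ equation has the favorable $\nab\widehat{\otimes}\eta$ right-hand side and \emph{not} $\alpha$), the null structure equation for $\nab_3\omega$, and the Bianchi equation for $\nab_3\betab$ to handle $\nab_3\zeta$ via the Codazzi relation $\div\chibh = \frac12\nab\trchb + \betab + \psi\psi$. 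Crucially, none of these equations involve $\alpha$, so the same renormalization-free structure that made the a priori estimates and Theorem \ref{convergencethm} work is preserved here. For the \emph{difference} quantities one then uses Proposition \ref{peqn} (or simply works directly in coordinates as in Propositions \ref{ugp}--\ref{u2gp}) to write the equation for $\frac{\partial}{\partial u}$ of the difference, whose right-hand side is a sum of products of background Ricci coefficients, background curvature, their $u$- and $\th$-derivatives, and the corresponding primed differences; every factor is controlled either by the a priori estimates of Theorem \ref{timeofexistence} or by $Ca$ via Proposition \ref{p} and the already-proved Propositions \ref{ugp} and \ref{utgp}. Integrating in $\ub$ from $\Hb_0$ (where $\gamma', \Omega', b'$ and their $u$-derivatives vanish or are bounded by $a$ from the data assumptions) and applying Cauchy--Schwarz to pass from $L^1_{\ub}$ to $L^{p_0}_{\ub}$ then yields $\|\frac{\partial^2}{\partial\ub\partial u}(\gamma',\Omega',b')\|_{L^{p_0}_{\ub}L^2(S_{u,\ub})}\leq Ca$.

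The main obstacle, as in all the difference estimates, is bookkeeping: one must check that when $\frac{\partial}{\partial u}$ hits a product and then the difference is taken, no term appears that requires control of $\alpha'$ (which is unavailable, since $\alpha_n$ is not Cauchy in $L^2$) or of $\nab^2\beta'$ in $L^2(\Hb)$ (also unavailable, and the reason $\omega'$, $\chih'$ were only estimated in weaker norms). The saving structure is the same one exploited throughout Section \ref{convergence}: the $\nab_3$-equations for $\trchb,\chibh,\etab,\omega$ and for $\chih$ (via $\nab\widehat{\otimes}\eta$) are free of $\alpha$, and $\betab'$ on $\Hb$ is controlled by $\mathcal R'$, so only the curvature components in $\{\rho,\sigma,\betab,\alphab\}$ (and their primed versions, bounded by $\mathcal R'\leq Ca$) enter. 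Because $b^A = 0$ and $\Omega=1$ on $\Hb_0$ and these vanishings persist for $\partial_u b'$, $\partial_u\Omega'$ on $\Hb_0$, the $\ub$-integration genuinely starts from data of size $\lesssim a$, so no smallness in $\delta$ is even needed for the final step beyond what was already fixed in Theorem \ref{convergencethm}. The proof is therefore a routine adaptation of Proposition \ref{ugp}, differentiated once more in $u$ and with the $\ub$-integration performed last, and I would present it as such with only a sketch of the substitutions.
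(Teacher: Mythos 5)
Your overall approach matches the paper's: differentiate the $\ub$-transport equations for $(\gamma',\Omega',b')$ in $u$, convert $\partial_u$ into $\nab_3$ up to lower order, substitute the $\nab_3$ null structure equations --- all of which are free of $\alpha$ --- and bound the resulting products using Proposition \ref{p} together with the already-proved difference estimates. The structural observation that no $\alpha'$ term can arise is exactly the reason the argument closes, and you have identified it correctly.

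Your final step, however, is wrong in two ways. After applying $\partial_u$ to $\frac{\partial}{\partial\ub}\gamma'=2(\Omega\chi)'$, $\frac12\frac{\partial}{\partial\ub}(\Omega^{-1})'=\omega'$, $\frac{\partial}{\partial\ub}(b^A)'=-4(\Omega^2\zeta^A)'$, the target $\frac{\partial^2}{\partial\ub\partial u}(\gamma',\Omega',b')$ is given \emph{directly} by the right-hand sides (coordinate derivatives commute); there is no $\ub$-integration from $\Hb_0$ to perform. You simply estimate those right-hand sides in $L^{p_0}_{\ub}L^2(S)$. The integrate-in-$\ub$ logic belongs to the proof of the bound on $\frac{\partial}{\partial u}(\gamma',\Omega',b')$ itself (Proposition \ref{ugp}), not to the mixed second derivative. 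Furthermore, the claimed use of ``Cauchy--Schwarz to pass from $L^1_{\ub}$ to $L^{p_0}_{\ub}$'' goes the wrong way: on a bounded $\ub$-interval H\"older gives $L^1_{\ub}$ control \emph{from} $L^{p_0}_{\ub}$ control, not the reverse, so an $L^1_{\ub}$ bound cannot be upgraded to $L^{p_0}_{\ub}$. A minor additional inaccuracy: the Bianchi equation for $\nab_3\betab$ enters the second $u$-derivative estimate (Proposition \ref{u2gp}), not here; after a single $u$-differentiation the term $\betab'$ appears linearly and is simply absorbed via $\mathcal R'$, and no term involving $\nab^2\beta'$ arises in this proposition.
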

\begin{proof}
Take $\frac{\partial}{\partial u}$ derivatives of the equations
$$\frac{\partial}{\partial \ub}\gamma'=2(\Omega\chi)',$$
$$\frac 12\frac{\partial}{\partial \ub}(\Omega^{-1})'=\omega',$$
$$\frac{\partial}{\partial \ub}(b^A)'=-4(\Omega^2\zeta^A)',$$
and use the null structure equations.
\end{proof}
We do not have estimates for the second $\ub$ derivatives of the metric, since they can only be understood as measures even for the initial data. However, the following substitute is useful in showing that the limit spacetime that we construct satisfies the Einstein equations, in particular, $R_{u\ub}=R_{A\ub}=R_{\ub\ub}=0$.
\begin{proposition}\label{trchp}
$$||\frac{\partial^2}{\partial \ub^2}(b^A)'||_{L^{p_0}_{\ub}L^\infty(S_{u,\ub})}\leq Ca$$
and
$$||\frac{\partial}{\partial \ub}((\gamma^{-1})^{AB}\frac{\partial}{\partial \ub}\gamma_{AB})'||_{L^{p_0}_{\ub}L^4(S_{u,\ub})}\leq Ca.$$
\end{proposition}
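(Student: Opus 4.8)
The plan is to mimic the structure of the earlier additional-estimates propositions in this subsection, reducing everything to the transport equations for $b^A$ and $\gamma_{AB}$ and to the already-established bounds on the differences of Ricci coefficients and curvature components collected in Proposition \ref{p}. The two quantities to be controlled are naturally tied to the Ricci coefficients $\zeta$ (equivalently $\eta,\etab$) and $\trch$ respectively: recall $\frac{\partial b^A}{\partial\ub}=-4\Omega^2\zeta^A$ and $(\gamma^{-1})^{AB}\frac{\partial}{\partial\ub}\gamma_{AB}=2\Omega\,\trch$.

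First I would treat $\frac{\partial^2}{\partial\ub^2}(b^A)'$. Differentiating $\frac{\partial b^A}{\partial\ub}=-4\Omega^2\zeta^A$ once more in $\ub$ gives
\begin{equation*}
\frac{\partial^2 b^A}{\partial\ub^2}=-4\frac{\partial}{\partial\ub}(\Omega^2\zeta^A)=-8\Omega^2\omega\zeta^A-4\Omega^2(\nabla_4\zeta)^A+(\text{lower order}),
\end{equation*}
after expressing $\frac{\partial}{\partial\ub}\zeta^A$ in terms of $\nabla_4\zeta$ plus $\chi\cdot\zeta$ terms (using $e_4=\Omega^{-1}\partial_{\ub}$ and Proposition \ref{Omega}). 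Since $\nabla_4\zeta$ satisfies a null structure equation whose right-hand side involves only $\psi\psi$ and the curvature component $\beta$ (never $\alpha$), taking the difference and applying Proposition \ref{peqn} produces an expression for $\frac{\partial^2}{\partial\ub^2}(b^A)'$ whose terms are products of background quantities (bounded in $L^\infty$ by Theorem \ref{timeofexistence}) with difference quantities that are all controlled by $\mathcal O'$, $\tilde{\mathcal O}'$, $\mathcal R'$, $\Omega'$, $b'$ and $\gamma'$ — hence by $Ca$ via Proposition \ref{p}. One then takes the $L^{p_0}_{\ub}L^\infty(S)$ norm; the $L^\infty(S)$ part costs a Sobolev embedding (Proposition \ref{Linfty}), and the $L^{p_0}_{\ub}$ integration is harmless since we only need a fixed finite exponent and the relevant difference norms for $\chih',\omega'$ are measured in $L^{p_0}_{\ub}L^2(S)$ inside $\mathcal O'$.

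Next I would handle $\frac{\partial}{\partial\ub}\big((\gamma^{-1})^{AB}\frac{\partial}{\partial\ub}\gamma_{AB}\big)'$. Using the first variation formula $\frac{\partial}{\partial\ub}\gamma_{AB}=2\Omega\chi_{AB}$ one gets $(\gamma^{-1})^{AB}\frac{\partial}{\partial\ub}\gamma_{AB}=2\Omega\,\trch$, so
\begin{equation*}
\frac{\partial}{\partial\ub}\big((\gamma^{-1})^{AB}\tfrac{\partial}{\partial\ub}\gamma_{AB}\big)=2\frac{\partial}{\partial\ub}(\Omega\,\trch)=-4\Omega\omega\,\trch+2\Omega^2\nabla_4\trch,
\end{equation*}
and the null structure equation $\nabla_4\trch+\frac12(\trch)^2=-|\chih|^2-2\omega\trch$ expresses $\nabla_4\trch$ purely in Ricci coefficients, with no curvature at all. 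Passing to the difference and invoking Proposition \ref{peqn} again, every term is a product of a bounded background quantity with one of $\trch'$, $\chih'$, $\omega'$, $\Omega'$, $\gamma'$, all bounded by $Ca$ (note $\chih'$ enters only through $\mathcal O'$ in $L^{p_0}_{\ub}L^2(S)$, which is exactly why the final estimate is stated in $L^{p_0}_{\ub}$). One closes by taking $L^{p_0}_{\ub}L^4(S)$ norms, using Proposition \ref{L4} for the $L^4(S)$ factor of the $\nabla$-terms.

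The main obstacle is bookkeeping rather than any genuine analytic difficulty: one must verify that after differentiating in $\ub$ and passing to the difference, the singular curvature component $\alpha$ genuinely never appears — this is the recurring structural miracle exploited throughout the paper, and here it is guaranteed because $\nabla_4\zeta$ and $\nabla_4\trch$ are governed by equations in \eqref{null.str1}–\eqref{null.str2} whose right-hand sides contain at worst $\beta$ (for $\zeta$) and no curvature (for $\trch$), not $\alpha$. One must also be careful that the extra $\ub$-differentiation does not raise the angular-derivative count beyond what $\mathcal O'$, $\tilde{\mathcal O}'$ control; since we only differentiate once in $\ub$ and the resulting transport equations bring in $\nabla_4$ of a Ricci coefficient (which, via the structure equations, costs at most one angular derivative of another Ricci coefficient or curvature component), the derivative count stays within the available norms, and the weaker $L^{p_0}_{\ub}$ topology absorbs the contributions of $\chih'$ and $\omega'$.
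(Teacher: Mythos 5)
Your approach is essentially the one in the paper: differentiate the transport equations $\frac{\partial b^A}{\partial\ub}=-4\Omega^2\zeta^A$ and $\frac{\partial}{\partial\ub}\gamma_{AB}=2\Omega\chi_{AB}$ once more in $\ub$, eliminate the resulting $\ub$-derivatives of the Ricci coefficients via the null structure equations (Raychaudhuri for $\trch$; a Codazzi/transport identity for $\zeta$), note that $\alpha$ never appears, and estimate the difference of the resulting products term by term using Proposition \ref{p}. For the second estimate this is exactly the paper's computation (modulo a small $\Omega$-power slip: $\frac{\partial}{\partial\ub}\Omega=-2\omega\Omega^2$, so the $\omega\trch$ term should carry $\Omega^2$, not $\Omega$).

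Two caveats. First, Proposition \ref{peqn} is not quite the right tool: it derives $\nab_3$- or $\nab_4$-transport equations for the difference quantities, whereas here one needs only to difference a product directly and bound it, which is what the paper does via Proposition \ref{p}. Second, and more substantively, your appeal to Proposition \ref{Linfty} to supply the $L^\infty(S)$ factor in the first estimate does not close: the term coming from $\nab_4\zeta$ introduces $\beta'$ (equivalently $\nab\chih'$, $\nab\trch'$, and $\nab\omega'$ via $[\nab,\nab_4]\log\Omega$), and $\mathcal R'$, $\tilde{\mathcal O}'$ control only one angular derivative of these in $L^2$ along $H_u$, hence only $L^4(S)$ by Sobolev, not $L^\infty(S)$. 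This is consistent with Theorem \ref{convergencethm2}, which asserts only $\frac{\partial^2}{\partial\ub^2}(b^A)_\infty\in L^\infty_uL^\infty_{\ub}L^4(S)$; the $L^\infty(S)$ in the Proposition's first display appears to be a misprint for $L^4(S)$, and your argument genuinely delivers that $L^4(S)$ bound by the same mechanism as the second estimate.
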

\begin{proof}
The first estimate follows directly from 
$$\frac{\partial}{\partial \ub}(b^A)'=-4(\Omega^2\zeta^A)'$$
and the null structure equations.

For the second estimate, notice that in coordinates,
$$\trch=(\gamma^{-1})^{AB}\frac{\partial}{\partial \ub}\gamma_{AB}.$$
Thus we want to estimate $\frac{\partial}{\partial\ub}\trch'$. Using the null structure equation
$$\nab_4\trch+\frac 12(\trch)^2=-|\chih|^2-2\omega\trch,$$
we have
$$\frac{\partial}{\partial\ub}\trch'=(-\frac \Omega 2(\trch)^2-\Omega|\chih|^2-2\Omega\omega\trch)'.$$
The conclusion follows from estimating the right hand side directly using Proposition \ref{p}.
\end{proof}

\subsection{The Limiting Spacetime Metric}\label{limit}

In this Subsection, we prove Theorems \ref{convergencethm2} and \ref{Einstein}. We therefore assume that the conditions of Theorem \ref{convergencethm2} hold. Consider the sequence of metric
$$g_n=-2(\Omega_n)^2(du\otimes d\ub+d\ub\otimes du)+(\gamma_n)_{AB}(d\th^A-(b_n)^Adu)\otimes (d\th^B-(b_n)^Bdu).$$ 
By Proposition \ref{p}, $\gamma_n$, $b_n$ and $\Omega_n$ converge uniformly to their limiting values $\gamma_{\infty}$, $b_\infty$ and $\Omega_\infty$ and are therefore continuous functions of $(u,\ub,\th^1,\th^2)$ and define a continuous limiting spacetime metric
\begin{equation}\label{limitmetric}
g_\infty=-2(\Omega_\infty)^2(du\otimes d\ub+d\ub\otimes du)+ (\gamma_{\infty})_{AB}(d\th^A-(b_\infty)^Adu)\otimes(d\th^B-(b_\infty)^Bdu).
\end{equation}
To complete the proof of Theorem \ref{convergencethm2}, we need to demonstrate the desired regularity of the limiting spacetime. This follows easily from the estimates in the previous Subsections.
\begin{proposition}
$$(\frac{\partial}{\partial \th}g_n,\frac{\partial}{\partial u}g_n)\mbox{ converge to }(\frac{\partial}{\partial \th}g_\infty,\frac{\partial}{\partial u}g_\infty)\mbox{ in }L^\infty_u L^\infty_{\ub} L^4(S),$$
\begin{equation*}
\begin{split}
(\frac{\partial^2}{\partial \th^2}g_n,\frac{\partial^2}{\partial u\partial\th}g_n,\frac{\partial^2}{\partial u^2}g_n)\mbox{ converge to }\\
(\frac{\partial^2}{\partial \th^2}g_\infty,\frac{\partial^2}{\partial u\partial\th}g_\infty,\frac{\partial^2}{\partial u^2}g_{\infty})\mbox{ in }L^\infty_u L^\infty_{\ub} L^2(S),
\end{split}
\end{equation*}
\begin{equation*}
\begin{split}
(\frac{\partial}{\partial \ub}g_n, \frac{\partial}{\partial\ub}((\gamma_n^{-1})^{AB}\frac{\partial}{\partial\ub}(\gamma_n)_{AB}))\mbox{ converge to }\\
(\frac{\partial}{\partial \ub}g_{\infty}, \frac{\partial}{\partial\ub}((\gamma_{\infty}^{-1})^{AB}\frac{\partial}{\partial\ub}(\gamma_{\infty})_{AB}))\mbox{ in }L^\infty_u L^{p_0}_{\ub} L^\infty(S),
\end{split}
\end{equation*}
\begin{equation*}
\begin{split}
(\frac{\partial^2}{\partial \th \partial \ub}g_n,\frac{\partial^2}{\partial u\partial\ub}g_n,\frac{\partial^2}{\partial \ub^2} (b^A)_n)\mbox{ converge to }\\
(\frac{\partial^2}{\partial \th \partial \ub}g_\infty,\frac{\partial^2}{\partial u\partial\ub}g_\infty,\frac{\partial^2}{\partial \ub^2} (b^A)_{\infty})\mbox{ in }L^\infty_u L^{p_0}_{\ub} L^4(S).
\end{split}
\end{equation*}
\end{proposition}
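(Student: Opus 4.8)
The plan is to obtain this convergence statement directly from Theorem \ref{convergencethm}, the additional metric difference estimates of Section \ref{AddEstMetric}, and the summability hypothesis $\sum_n a_n<\infty$ of Theorem \ref{convergencethm2}. First I would apply Theorem \ref{convergencethm} to each consecutive pair of spacetimes $(\mathcal M_n,g_n)$, $(\mathcal M_{n-1},g_{n-1})$, with the parameter $a$ taken to be $a_n$; the hypotheses on the difference of the initial data hold with constant $a_n$ by assumption, so $g'_n=g_n-g_{n-1}$, $\Omega'_n$, $\gamma'_n$, $b'_n$ fall under the scope of that theorem. Proposition \ref{p} then gives $\|\tfrac{\partial}{\partial\th}g'_n\|_{L^\infty_uL^\infty_{\ub}L^4(S)}$ and $\|\tfrac{\partial^2}{\partial\th^2}g'_n\|_{L^\infty_uL^\infty_{\ub}L^2(S)}\le Ca_n$; Propositions \ref{ugp}, \ref{u2gp} and \ref{utgp} give $\le Ca_n$ for $\tfrac{\partial}{\partial u}g'_n$, $\tfrac{\partial^2}{\partial u^2}g'_n$, $\tfrac{\partial^2}{\partial u\partial\th}g'_n$ in $L^\infty_uL^\infty_{\ub}L^4(S)$ or $L^\infty_uL^\infty_{\ub}L^2(S)$ as appropriate; and Propositions \ref{ubgp}, \ref{ubtgp}, \ref{trchp} together with the Proposition on the mixed $\ub,u$ derivatives give the corresponding $\le Ca_n$ bounds, in $L^\infty_uL^{p_0}_{\ub}L^\infty(S)$ or $L^\infty_uL^{p_0}_{\ub}L^4(S)$, for $\tfrac{\partial}{\partial\ub}g'_n$, $\tfrac{\partial^2}{\partial\th\partial\ub}g'_n$, $\tfrac{\partial^2}{\partial u\partial\ub}g'_n$, $\tfrac{\partial^2}{\partial\ub^2}(b^A)'_n$ and for the difference of $\tfrac{\partial}{\partial\ub}\big((\gamma^{-1})^{AB}\tfrac{\partial}{\partial\ub}\gamma_{AB}\big)$. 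The constant $C$ here depends only on the a priori constants of Theorem \ref{timeofexistence}, not on $n$.

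Since $\sum_n a_n<\infty$, each of these sequences of differences is absolutely summable in the corresponding Banach space, so the partial sums, which telescope, form Cauchy sequences in $L^\infty_uL^\infty_{\ub}L^4(S)$, $L^\infty_uL^\infty_{\ub}L^2(S)$, $L^\infty_uL^{p_0}_{\ub}L^\infty(S)$ and $L^\infty_uL^{p_0}_{\ub}L^4(S)$; by completeness the associated derivative sequences converge in these norms to limits. It then remains to identify each limit with the appropriate derivative of $g_\infty$ (or of the nonlinear expression built from $\gamma_\infty$). This I would do by testing against smooth compactly supported functions on the coordinate domain: $g_n\to g_\infty$ uniformly by Proposition \ref{p}, and norm convergence of the derivatives implies their convergence in $L^1_{\mathrm{loc}}$, so passing to the limit in the distributional identity relating $g_n$ to its derivatives identifies each strong limit with the corresponding distributional derivative of $g_\infty$, which by the regularity just established is an honest derivative. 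For the nonlinear quantity one additionally uses that $\gamma_n^{-1}\to\gamma_\infty^{-1}$ uniformly, a consequence of $\gamma_n\to\gamma_\infty$ and the uniform non-degeneracy in Proposition \ref{gamma}. The closing membership statements ($\tfrac{\partial}{\partial\th}g_\infty\in C^0_uC^0_{\ub}L^4(S)$ and the analogues) then follow because a norm limit of maps that are continuous in $(u,\ub)$ with values in $L^4(S)$ (resp. $L^2(S)$) is again continuous in $(u,\ub)$; moreover the $\ub$-derivatives of $g_\infty$ in fact lie in $L^\infty_uL^\infty_{\ub}L^\infty(S)$ (resp. $L^4(S)$), sharper than what the convergence alone yields, because the a priori bounds of Theorem \ref{timeofexistence} already place the corresponding quantities for each $g_n$ in that space uniformly and the limit inherits the bound.

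The analytical content is minimal here — completeness together with a telescoping-series argument — so I expect the only real point requiring care to be bookkeeping: matching every metric derivative in the statement with the correct difference estimate from Section \ref{AddEstMetric}, and respecting the asymmetry between the $u,\th$ directions, where two derivatives and $L^\infty_{\ub}$ are available, and the $\ub$ direction, where only one derivative, and only after taking the $L^{p_0}_{\ub}$ norm, is available, reflecting the fact that $\partial_{\ub}^2\gamma$ exists only as a measure. The one substantive subtlety is identifying the limit of $\partial_{\ub}\big((\gamma^{-1})^{AB}\partial_{\ub}\gamma_{AB}\big)=\partial_{\ub}\trch$ with the same quantity built from $g_\infty$ rather than with something spuriously different; this is exactly where the uniform lower bound on $\det\gamma_n$ from Proposition \ref{gamma}, which lets one pass $\gamma_n^{-1}$ to the limit, must be invoked explicitly.
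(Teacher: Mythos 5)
Your proposal matches the paper's argument: the paper's own proof of this proposition is a single paragraph citing exactly the propositions you list (Propositions \ref{p}, \ref{ugp}, \ref{u2gp}, \ref{utgp}, \ref{ubgp}, \ref{ubtgp}, the unnamed mixed $\ub$--$u$ proposition, and \ref{trchp}), with the telescoping--Cauchy argument from $\sum_n a_n<\infty$ left implicit. You have filled in the implicit parts correctly, and you also correctly handle the nonlinear quantity via uniform non-degeneracy of $\gamma_n$; your final remark about the $\ub$-derivatives of $g_\infty$ lying in $L^\infty_uL^\infty_{\ub}L^\infty(S)$ belongs to the subsequent proposition (\ref{gspace}) rather than this one, but that is a harmless overreach.
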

\begin{proof}
That
$$\frac{\partial}{\partial \th}g_n,\frac{\partial}{\partial u}g_n\mbox{ converge in }L^\infty_u L^\infty_{\ub} L^4(S)$$
follows from Propositions \ref{p} and \ref{ugp} respectively.

That
$$\frac{\partial^2}{\partial \th^2}g_n,\frac{\partial^2}{\partial u^2}g_n,\frac{\partial^2}{\partial u\partial\th}g_n\mbox{ converge in } L^\infty_u L^\infty_{\ub} L^2(S)$$
follows from Propositions \ref{p}, \ref{u2gp} and \ref{utgp} respectively.

That
$$\frac{\partial}{\partial \ub}g_n, \frac{\partial}{\partial\ub}((\gamma_n^{-1})^{AB}\frac{\partial}{\partial\ub}(\gamma_n)_{AB})\mbox{ converge in }L^\infty_u L^{p_0}_{\ub} L^\infty(S)$$
follows from Propositions \ref{ubgp} and \ref{trchp} respectively.

That
$$\frac{\partial^2}{\partial \th \partial \ub}g_n,\frac{\partial^2}{\partial u\partial\ub}g_n,\frac{\partial^2}{\partial \ub^2} (b^A)_n\mbox{ converge in }L^\infty_u L^{p_0}_{\ub} L^4(S)$$
follows from Propositions \ref{utgp}, \ref{ubugp} and \ref{trchp} respectively.
\end{proof}
\begin{proposition}\label{gspace}
$$\frac{\partial}{\partial \th}g_\infty,\frac{\partial}{\partial u}g_\infty\in C^0_u C^0_{\ub} L^4(S),$$
$$\frac{\partial^2}{\partial \th^2}g_\infty,\frac{\partial^2}{\partial u\partial\th}g_\infty,\frac{\partial^2}{\partial u^2}g_\infty\in C^0_u C^0_{\ub} L^2(S),$$
$$\frac{\partial}{\partial \ub}g_\infty, \frac{\partial}{\partial\ub}((\gamma_\infty^{-1})^{AB}\frac{\partial}{\partial\ub}(\gamma_\infty)_{AB}) \in L^\infty_u L^\infty_{\ub} L^\infty(S),$$
$$\frac{\partial^2}{\partial \th \partial \ub}g_\infty,\frac{\partial^2}{\partial u\partial\ub}g_\infty,\frac{\partial^2}{\partial \ub^2} (b^A)_{\infty}\in L^\infty_u L^\infty_{\ub} L^4(S).$$
\end{proposition}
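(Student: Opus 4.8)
The plan is to read off the regularity of $g_\infty$ from the convergence statements of the proposition immediately preceding, combined with the uniform-in-$n$ a priori bounds of Theorem \ref{timeofexistence}. The assertions split into two groups. For the quantities that converge in $L^\infty_u L^\infty_{\ub} L^4(S)$ or $L^\infty_u L^\infty_{\ub} L^2(S)$ --- namely $\frac{\partial}{\partial\th}g_n$, $\frac{\partial}{\partial u}g_n$, $\frac{\partial^2}{\partial\th^2}g_n$, $\frac{\partial^2}{\partial u\partial\th}g_n$, $\frac{\partial^2}{\partial u^2}g_n$ --- I would argue as follows. Since each $g_n$ is a smooth metric, each of these derivatives is a continuous map from the compact parameter region $\{0\leq u\leq u_*,\ 0\leq\ub\leq\ub_*\}$ into $L^4(S)$ (resp. $L^2(S)$), where one uses the uniform control of $\gamma_n$ and $\det\gamma_n$ from Proposition \ref{gamma} to make sense of the $S$-norms; the preceding proposition asserts that these converge uniformly in $(u,\ub)$. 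A uniform limit of continuous Banach-space-valued maps on a compact set is continuous, so $\frac{\partial}{\partial\th}g_\infty$, $\frac{\partial}{\partial u}g_\infty\in C^0_uC^0_{\ub}L^4(S)$ and $\frac{\partial^2}{\partial\th^2}g_\infty$, $\frac{\partial^2}{\partial u\partial\th}g_\infty$, $\frac{\partial^2}{\partial u^2}g_\infty\in C^0_uC^0_{\ub}L^2(S)$, which are the first two displayed claims.

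For the remaining claims --- those involving a $\ub$-derivative: $\frac{\partial}{\partial\ub}g_\infty$, $\frac{\partial}{\partial\ub}((\gamma_\infty^{-1})^{AB}\frac{\partial}{\partial\ub}(\gamma_\infty)_{AB})$, $\frac{\partial^2}{\partial\th\partial\ub}g_\infty$, $\frac{\partial^2}{\partial u\partial\ub}g_\infty$, $\frac{\partial^2}{\partial\ub^2}(b^A)_\infty$ --- the argument above fails because the corresponding convergence holds only in $L^\infty_u L^{p_0}_{\ub}(\cdots)$ with $p_0<\infty$, reflecting the fact that $\frac{\partial}{\partial\ub}\gamma_\infty$ (i.e. $\chih_\infty$) jumps across $\Hb_{\ub_s}$ and is genuinely discontinuous in $\ub$. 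Here I would instead use that the corresponding $g_n$-quantities are bounded in $L^\infty_u L^\infty_{\ub}L^\infty(S)$, resp. $L^\infty_u L^\infty_{\ub}L^4(S)$, \emph{uniformly in $n$}: this follows by combining the transport equations $\frac{\partial}{\partial\ub}\gamma_{AB}=2\Omega\chi_{AB}$, $\tfrac12\frac{\partial}{\partial\ub}\Omega^{-1}=\omega$, $\frac{\partial b^A}{\partial\ub}=-4\Omega^2\zeta^A$ with the null structure equations and the uniform bounds on $\mathcal O$, $\tilde{\mathcal O}_{3,2}$, $\mathcal R$ from Theorem \ref{timeofexistence} --- exactly the computations of Section \ref{metric}, and, for the composite quantity $\frac{\partial}{\partial\ub}((\gamma^{-1})^{AB}\frac{\partial}{\partial\ub}\gamma_{AB})$ which in coordinates equals $\frac{\partial}{\partial\ub}\trch$, via the $\nabla_4\trch$ equation as in the proof of Proposition \ref{trchp}. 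Given these uniform bounds, I pass to the limit as follows: fixing $u$ outside a null set, $L^{p_0}_{\ub}$-convergence in the relevant $S$-norm yields, along a subsequence, convergence for a.e. $\ub$ in that $S$-norm; along such a subsequence the pointwise-in-$\ub$ bound $\leq C$ survives, so $\|\frac{\partial}{\partial\ub}g_\infty(u,\ub,\cdot)\|_{L^\infty(S)}\leq C$ for a.e. $\ub$, and taking essential suprema over $\ub$ and then over $u$ gives membership in $L^\infty_u L^\infty_{\ub}L^\infty(S)$; the $L^4(S)$-valued cases are identical.

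The only genuine subtlety lies in this second group: one must resist deriving the $L^\infty_{\ub}$ bound from the convergence (which is merely $L^{p_0}_{\ub}$) and instead extract it from the uniform a priori estimates, and one must handle $\frac{\partial}{\partial\ub}((\gamma^{-1})^{AB}\frac{\partial}{\partial\ub}\gamma_{AB})$ not by formally differentiating the already-irregular $\frac{\partial}{\partial\ub}\gamma_\infty$ twice but through its own transport equation. Everything else reduces to the elementary fact that a uniformly convergent sequence of continuous maps into a Banach space has a continuous limit, so I expect the write-up to be short once the relevant uniform bounds are quoted from Theorem \ref{timeofexistence}, Section \ref{metric}, and (for the approximating sequence) Section \ref{AddEst}.
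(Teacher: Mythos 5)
Your argument is essentially the paper's own two-step proof: for the first two displays, a uniform limit (in $(u,\ub)$) of continuous $L^p(S)$-valued maps is continuous; for the last two, $L^{p_0}_{\ub}$-convergence together with a uniform-in-$n$ $L^\infty_{\ub}$ bound (drawn from the transport equations and the a priori estimates of Theorem \ref{timeofexistence}) forces the limit into $L^\infty$. The paper states both points more tersely, but the mechanism and the sources of the uniform bounds you cite are the same.
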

\begin{proof}
To prove the first two statements, notice that since $g_n$ are smooth, the convergence in $L^\infty_u L^\infty_{\ub}L^p(S)$ implies that the limit is in $C^0_u C^0_{\ub} L^p(S)$.

For the latter two statements, we use the fact that for $p_0<\infty$, if $f_n\to f$ in $L^{p_0}$ and $f_n$ is uniformly bounded in $L^\infty$, then $f\in L^\infty$.
\end{proof}

Proposition \ref{gspace} allows us to conclude that all first derivatives of $g_{\infty}$ can be defined and belong to an appropriate space. These norms also allow us to conclude that the product of any two first derivatives of the metric belongs to $L^\infty_u L^\infty_{\ub}L^2(S)$. For the second derivatives, Proposition \ref{gspace} shows that all second derivatives of the metric except $\frac{\partial^2}{\partial \ub^2}(\gamma,\Omega)$ are defined as functions in $L^\infty L^{p_0}_{\ub}L^2(S)$. The second derivative $\frac{\partial^2}{\partial \ub^2}(\gamma,\Omega)$ is merely a distribution. As we will see below, its definition is not necessary to make sense of the Einstein equations. 

We now show that all components of the curvature for the limiting spacetime, except for $R_{\ub A\ub B}$, $R_{\ub u \ub A}$ and $R_{\ub u \ub u}$, can be defined at worst as functions in $L^\infty_uL^\infty_{\ub}L^2(S)$. For this we use the coordinate definition of the Riemann curvature tensor:
\begin{equation}\label{curvdef}
R_{\delta\sigma\mu\nu}=g_{\delta\rho}(\frac{\partial}{\partial x^{\mu}}\Gamma^\rho_{\nu\sigma}-\frac{\partial}{\partial x^{\nu}}\Gamma^\rho_{\mu\sigma}+\Gamma^{\rho}_{\mu\lambda}\Gamma^\lambda_{\nu\sigma}-\Gamma^\rho_{\nu\lambda}\Gamma^\lambda_{\mu\sigma}).
\end{equation}
\begin{proposition}
In the limiting spacetime, all components of the Riemann curvature tensor, defined by the coordinate expression (\ref{curvdef}) expect for $R_{\ub A\ub B}$, $R_{\ub u \ub A}$ and $R_{\ub u \ub u}$, can be defined as functions in $L^\infty_uL^\infty_{\ub}L^2(S)$. Moreover, $R(L,\Lb,L,e_A)$ and $R(L,\Lb,L,\Lb)$ can be defined as functions in $L^\infty_uL^\infty_{\ub}L^2(S)$.
\end{proposition}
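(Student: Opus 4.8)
The plan is to examine, component by component, the coordinate formula \eqref{curvdef} for the limiting metric $g_\infty$ and check that every term on the right-hand side that appears in the components we want to control lies in $L^\infty_u L^\infty_{\ub}L^2(S)$. The key preliminary observation is a bookkeeping statement about differentiated Christoffel symbols: $\Gamma^\rho_{\mu\nu}$ is a first derivative of $g_\infty$ contracted with $g_\infty^{-1}$, hence, by Proposition \ref{gspace} and the continuity and uniform boundedness of $g_\infty$ and its inverse, each $\Gamma^\rho_{\mu\nu}$ lies in one of the spaces $C^0_uC^0_{\ub}L^4(S)$ or $L^\infty_uL^\infty_{\ub}L^4(S)$, except that the $\ub$-differentiated components of $\gamma$ and $\Omega$ (which show up in $\Gamma^A_{\ub B}$, $\Gamma^{\ub}_{\ub\ub}$, etc.) are only in $L^\infty_uL^\infty_{\ub}L^\infty(S)$ while $\partial_{\ub}^2\gamma$ and $\partial_{\ub}^2\Omega$ are not defined as functions at all. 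The quadratic terms $\Gamma\Gamma$ are then products of two such $L^4$ objects (times bounded metric factors), hence in $L^2(S)$ uniformly; the only potential loss is the products involving the $L^\infty$-only Christoffels, but $L^\infty\cdot L^4\subset L^4\subset L^2$, so these are also fine.

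The heart of the matter is which $\partial_{\mu}\Gamma$ terms appear in which curvature components. First I would isolate the dangerous second derivative: the only genuinely ill-defined second derivatives of $g_\infty$ are $\partial_{\ub}^2\gamma_{AB}$ and $\partial_{\ub}^2\Omega$ (equivalently $\partial_{\ub}^2 g_{AB}$ and $\partial_{\ub}^2 g_{u\ub}$). A term $\partial_{\ub}\Gamma^\rho_{\ub\sigma}$ will contain such a second derivative precisely when $\Gamma^\rho_{\ub\sigma}$ contains a first $\ub$-derivative of $g_{AB}$ or $g_{u\ub}$, i.e. for certain index choices only. Tracing through \eqref{curvdef}, one sees that $R_{\delta\sigma\mu\nu}$ avoids $\partial_{\ub}^2\gamma$ and $\partial_{\ub}^2\Omega$ as long as the pair $(\mu,\nu)$ is not both equal to $\ub$ — i.e. as long as we are not looking at $R_{\ub\sigma\ub\nu}$ — except that even among the $R_{\ub\sigma\ub\nu}$ there are two saving features, both of which the paper has already arranged: (i) $\partial_{\ub}^2 b^A$ \emph{is} controlled, in $L^\infty_uL^\infty_{\ub}L^4(S)$, by Proposition \ref{trchp}; and (ii) the special combination $\partial_{\ub}\big((\gamma^{-1})^{AB}\partial_{\ub}\gamma_{AB}\big) = \partial_{\ub}(\tr\chi)$ is controlled in $L^\infty_uL^\infty_{\ub}L^\infty(S)$, again by Proposition \ref{trchp}. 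So I would: (1) list the components $R_{\ub A\ub B}$, $R_{\ub u\ub A}$, $R_{\ub u\ub u}$ as exactly those that see an uncontrolled $\partial_{\ub}^2\gamma$ or $\partial_{\ub}^2\Omega$ and cannot be rescued; (2) for every other component, exhibit it as a sum of allowed $\partial\Gamma$ terms (each a controlled second derivative of $g_\infty$ by Proposition \ref{gspace}) and $\Gamma\Gamma$ terms (each in $L^2(S)$ by the preceding paragraph), contracted against the bounded $g_{\delta\rho}$, concluding membership in $L^\infty_uL^\infty_{\ub}L^2(S)$.

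For the last sentence — that $R(L,\Lb,L,e_A)$ and $R(L,\Lb,L,\Lb)$ are in $L^\infty_uL^\infty_{\ub}L^2(S)$ even though they appear to be built from $R_{\ub u\ub A}$ and $R_{\ub u\ub u}$ — the point is that these are exactly the curvature components that in the null-frame decomposition equal (up to the bounded conformal factors $\Omega^{\pm 2}$ relating $L,\Lb$ to $e_3,e_4$) the components $\betab_A$ and $\rho$ respectively (or their analogues), which were \emph{already shown} to be controlled in $L^2$ throughout the a priori estimates, and whose differences converge by Theorem \ref{convergencethm}. Concretely, I would use the relation between the coordinate vectorfields $L=\Omega^2 L'$, $\Lb=\Omega^2\Lb'$ and the normalized frame $e_3,e_4$, together with the frame formulas $\rho = \tfrac14 R(e_4,e_3,e_4,e_3)$ and $\betab_A = \tfrac12 R(e_A,e_3,e_3,e_4)$, to re-express $R(L,\Lb,L,\Lb)$ and $R(L,\Lb,L,e_A)$ as $\Omega^{\,k}\cdot(\text{a controlled curvature component})$; since $\Omega$ is bounded above and below and these curvature components are in $L^\infty_uL^\infty_{\ub}L^2(S)$ for the limiting spacetime (as the uniform limit of the approximating sequence, using Proposition \ref{p} and the boundedness of $\mathcal R$), the claim follows. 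The main obstacle is the careful combinatorial check in step (2): one must verify that the cancellation structure of \eqref{curvdef} really does keep $\partial_{\ub}^2\gamma$ and $\partial_{\ub}^2\Omega$ out of all components other than the three flagged ones — this is a finite but somewhat tedious index chase, and it is where the double-null gauge structure (the vanishing of $g_{\ub\ub}$, $g_{\ub A}$ in these coordinates) does the essential work.
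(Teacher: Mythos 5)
Your treatment of the first part of the proposition — the coordinate components other than $R_{\ub\cdot\ub\cdot}$ — is essentially the paper's argument. You correctly identify the key algebraic structure, namely that $g_{\ub\ub}=g_{\ub A}=0$ (equivalently $g^{uu}=g^{uA}=0$) forces the cancellation of the potentially ill-defined $\partial_{\ub}^2\gamma$ and $\partial_{\ub}^2\Omega$ terms when at most one of the four indices is $\ub$. Two small cautions: your wording ``the pair $(\mu,\nu)$ is not both equal to $\ub$'' is not the right criterion (indeed $R_{\delta\sigma\ub\ub}=0$ trivially by antisymmetry, and the excluded components $R_{\ub A\ub B}$ etc.\ all have $\nu\neq\ub$); what you want, and do write in the ``i.e.,'' is that the first index of \emph{each} skew pair is $\ub$, i.e.\ $R_{\ub\sigma\ub\nu}$. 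Also, you should explicitly note, as the paper does, that one cannot simply invoke the Riemann symmetries to reduce to the case $\delta=\ub$, because those symmetries need not hold pointwise at this regularity; the cancellation (e.g.\ $g_{1\alpha}g^{\alpha\ub}=0$ killing the $\partial_{\ub}^2g_{12}$ contribution to $R_{12\ub1}$) must be verified in the raw coordinate expression for each index configuration, as you indeed acknowledge by calling it a ``tedious index chase.''

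For the final sentence concerning $R(L,\Lb,L,e_A)$ and $R(L,\Lb,L,\Lb)$, your route is genuinely different from the paper's and has a gap. You want to identify these with (up to powers of $\Omega$) the frame components $\beta_A$ and $\rho$ — note it is $\beta_A$, not $\betab_A$, since $L$ points in the $e_4$ direction and appears three times — and then quote the $L^2$ control of $\beta,\rho$. The trouble is that this identity $R(L,\Lb,L,e_A)=2\Omega^3\beta_A$ is a statement about the smooth Riemann tensor; for the limiting metric, the left side expanded via \eqref{curvdef} is a linear combination of $R_{\ub u\ub A}$ and $R_{\ub B\ub A}$, each of which \emph{individually} contains an ill-defined $\partial_{\ub}^2\gamma_{AB}$ term. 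To conclude that the coordinate combination is well-defined (which is what the proposition asserts, since it refers to \eqref{curvdef}), you must still exhibit the cancellation of those bad terms in the sum $R_{\ub u\ub A}+b^BR_{\ub B\ub A}$ — the paper does this by noting $R_{\ub u\ub A}=\tfrac12 b^B\partial_{\ub}^2\gamma_{AB}+\text{good}$ while $b^BR_{\ub B\ub A}=-\tfrac12 b^B\partial_{\ub}^2\gamma_{AB}+\text{good}$, so the bad parts cancel. Your frame argument gives the correct \emph{value} of the limit (namely $2\Omega_\infty^3\beta_{A,\infty}$), and it does show that the sequence of coordinate expressions for $g_n$ converges, but it does not by itself establish that the coordinate expression evaluated on $g_\infty$ is a well-defined function — that needs the explicit cancellation. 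If you instead define $R(L,\Lb,L,e_A)$ for the limit as the limit of the approximating quantities, your argument works, but then you are proving a slightly weaker (or at least differently-worded) statement than the one the proposition makes about \eqref{curvdef}.
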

\begin{proof}
Using (\ref{curvdef}), it is easy to see that if $\delta,\sigma,\mu,\nu\neq \ub$, the expression has at most one $\ub$ derivative. Thus, by the regularity properties of $g_\infty$, these components of the Riemann curvature tensor are well-defined as $L^\infty_uL^\infty_{\ub}L^2(S)$ functions. 

In the case that exactly one of $\delta,\sigma,\mu,\nu$ is equal to $\ub$, we can assume, by the symmetry properties of the Riemann curvature tensor, that $\delta=\ub$. Then, the formula for the component of the Riemann curvature tensor has at most one $\ub$ derivative and by the regularity properties, it is well-defined as $L^\infty_uL^\infty_{\ub}L^2(S)$ functions. Notice, however, that strictly speaking, at the level of regularity that we have, the symmetry properties of the Riemann curvature tensor may not hold. Consider, for example, the component $R_{12\ub 1}$. Using the formula above, we would have a term $g_{1A}(g^{A\ub}g_{12,\ub})_{,\ub}$, which is not defined. However, the sum of \emph{all} terms containing two $\ub$ derivatives is equal to the expression
$$g_{1\alpha}\frac{\partial}{\partial \ub}(g^{\ub\alpha}\frac{\partial}{\partial \ub}g_{12}).$$
which can be rearranged, up to terms with at most one $\ub$ derivative, in the form
$$\frac{\partial}{\partial \ub}(g_{1\alpha}g^{\ub\alpha}\frac{\partial}{\partial \ub}g_{12}).$$
This expression vanishes since $g_{1\alpha}g^{\ub\alpha}=0$. The above calculation is a reflection of the symmetry properties of the Riemann curvature tensor and provides an appropriate distributional definition of $R_{\sigma\delta\mu\nu}$ for which the symmetry properties formally hold.

We have therefore shown that all components of curvature with the exception of $R_{\ub A\ub B}$, $R_{\ub u \ub A}$ and $R_{\ub u \ub u}$ are defined as $L^\infty_uL^\infty_{\ub}L^2(S)$ functions. However, replacing the double null coordinate system $(u,\ub,\th^1,\th^2)$ by the null frame $(L,\Lb,e^1,e^2)$, we can show that additional components of curvature are defined as functions in $L^\infty_uL^\infty_{\ub}L^2(S)$. To see this, we can write
$$R(L,\Lb,L,e_A)=R_{\ub u \ub A}+ b^B R_{\ub B\ub A},$$
and
$$R(L,\Lb,L,\Lb)=R_{\ub u \ub u}+ b^B R_{\ub u \ub B}+b^A R_{\ub A \ub u}+b^A b^B R_{\ub A\ub B}=R_{\ub u \ub u}+2b^A R_{\ub A \ub u}+b^A b^B R_{\ub A\ub B}.$$
Note that
$$R_{\ub u \ub u}=-\frac 12 g_{uu,\ub\ub}+\mbox{good terms}=-\frac 12 b^A b^B (\frac{\partial}{\partial \ub})^2 g_{AB}+\mbox{good terms},$$
$$R_{\ub u \ub A}=-\frac 12 g_{Au,\ub\ub}+\mbox{good terms}=\frac 12 b^B (\frac{\partial}{\partial \ub})^2 g_{AB}+\mbox{good terms},$$
$$R_{\ub B\ub A}=-\frac 12 g_{AB,\ub\ub}+\mbox{good terms},$$
where good terms denote terms with at most one $\ub$ derivative. Therefore, the terms with two $\ub$ derivatives cancel in $R(L,\Lb,L,A)$ and $R(L,\Lb,L,\Lb)$ and they can be defined as $L^\infty_uL^\infty_{\ub}L^2(S)$ functions.
\end{proof}
We now verify that the limiting spacetime that we have constructed satisfies the vacuum Einstein equations in the sense that relative to the system of double null coordinates $(u,\ub,\th^1,\th^2)$,
$$R_{\mu\nu}=0\quad\mbox{in }L^\infty_uL^\infty_{\ub}L^2(S).$$
For this, we use the coordinate definition of the Ricci curvature:
\begin{equation*}
R_{\mu\nu}=\frac{\partial}{\partial x^{\rho}}\Gamma^\rho_{\mu\nu}-\frac{\partial}{\partial x^{\nu}}\Gamma^\rho_{\rho\mu}+\Gamma^\rho_{\rho\lambda}\Gamma^{\lambda}_{\mu\nu}-\Gamma^\rho_{\mu\lambda}\Gamma^{\lambda}_{\rho\nu}.
\end{equation*}
\begin{proposition}
The limiting spacetime satisfies the vacuum Einstein equations in the sense that relative to the system of double null coordinates $(u,\ub,\th^1,\th^2)$,
$$R_{\mu\nu}=0\quad\mbox{in }L^\infty_uL^\infty_{\ub}L^2(S).$$.
\end{proposition}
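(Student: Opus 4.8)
For each $n$ the metric $g_n$ is a smooth solution of the vacuum Einstein equations, so $R_{\mu\nu}[g_n]=0$ holds pointwise in the coordinate patch. The plan is to pass to the limit in this identity. Writing the Ricci tensor through the coordinate formula
$$R_{\mu\nu}=\frac{\partial}{\partial x^{\rho}}\Gamma^\rho_{\mu\nu}-\frac{\partial}{\partial x^{\nu}}\Gamma^\rho_{\rho\mu}+\Gamma^\rho_{\rho\lambda}\Gamma^{\lambda}_{\mu\nu}-\Gamma^\rho_{\mu\lambda}\Gamma^{\lambda}_{\rho\nu},$$
one organizes its terms according to the regularity of the metric derivatives that occur. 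The $\Gamma\Gamma$ terms are bilinear in first derivatives of the metric; each first derivative of $g_n$ is uniformly bounded by Theorem \ref{timeofexistence} and converges (in $L^\infty_uL^\infty_{\ub}L^4(S)$ for the $\partial_\th$ and $\partial_u$ derivatives, and in $L^\infty_uL^{p_0}_{\ub}L^\infty(S)$ with a uniform $L^\infty_uL^\infty_{\ub}L^\infty(S)$ bound for the $\partial_{\ub}$ derivatives) by Propositions \ref{p}, \ref{ugp}, \ref{ubgp} applied along the sequence, so each such product converges in $L^\infty_uL^{p_0}_{\ub}L^2(S)$. The delicate piece is the part linear in second derivatives, $\partial_\rho\Gamma^\rho_{\mu\nu}-\partial_\nu\Gamma^\rho_{\rho\mu}$, since the only second derivatives of the limiting metric that are not available as honest functions are $\frac{\partial^2}{\partial\ub^2}\gamma$ and $\frac{\partial^2}{\partial\ub^2}\Omega$ (Proposition \ref{gspace}).

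The main step is to show, component by component, that these two second $\ub$-derivatives enter $R_{\mu\nu}$ only harmlessly. Using $g^{\ub\ub}=g^{uu}=g^{uA}=0$ together with $g_{\ub\ub}=g_{\ub A}=0$, one checks that for $\mu,\nu\in\{u,1,2\}$ no second $\ub$-derivative of $\gamma$ or $\Omega$ survives, so $R_{\mu\nu}[g_n]$ is expressed purely through products of first derivatives and the second derivatives $\frac{\partial^2}{\partial\th^2}g_n$, $\frac{\partial^2}{\partial u\partial\th}g_n$, $\frac{\partial^2}{\partial u^2}g_n$, $\frac{\partial^2}{\partial u\partial\ub}g_n$, $\frac{\partial^2}{\partial\ub\partial\th}g_n$, all of which converge (in $L^\infty_uL^\infty_{\ub}L^2(S)$ or $L^\infty_uL^{p_0}_{\ub}L^2(S)$) by Propositions \ref{u2gp}, \ref{utgp}, \ref{ubtgp} and \ref{p}. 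For the remaining components $R_{u\ub}$, $R_{A\ub}$, $R_{\ub\ub}$ the second $\ub$-derivatives do appear, but exactly the cancellation used for the Riemann components $R(L,\Lb,L,\cdot)$ in the preceding proposition — writing $g_{1\alpha}g^{\ub\alpha}=0$ so that $\gamma^{-1}$ can be pulled inside the $\ub$-derivative — shows that they assemble only into $\frac{\partial}{\partial\ub}\big((\gamma^{-1})^{AB}\frac{\partial}{\partial\ub}\gamma_{AB}\big)$ (equivalently $\partial_{\ub}(\Omega\trch)$) and $\frac{\partial^2}{\partial\ub^2}b^A$, which converge in $L^\infty_uL^{p_0}_{\ub}L^4(S)$ resp. $L^\infty_uL^{p_0}_{\ub}L^\infty(S)$ by Proposition \ref{trchp}. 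Hence $R_{\mu\nu}[g_n]$ converges in $L^\infty_uL^{p_0}_{\ub}L^2(S)$ to a limit; this limit is zero because the $R_{\mu\nu}[g_n]$ all vanish, and the limiting expression is moreover uniformly bounded in $L^\infty_uL^\infty_{\ub}L^2(S)$ (its surviving terms being products of $L^\infty_uL^\infty_{\ub}L^4(S)$ functions with $L^\infty_uL^\infty_{\ub}L^2(S)$ functions, together with the uniformly-$L^\infty_{\ub}$ quantities of Proposition \ref{trchp}), so $R_{\mu\nu}=0$ in $L^\infty_uL^\infty_{\ub}L^2(S)$, which is the claim.

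The genuinely nontrivial point is the algebraic verification in the second paragraph that the distributional second $\ub$-derivatives of the metric cancel, leaving only $\partial_{\ub}\trch$ and $\frac{\partial^2}{\partial\ub^2}b^A$: this is the coordinate shadow of the fact, already exploited for the curvature tensor above, that the worst ($\partial^2_{\ub\ub}$) part of the Riemann tensor is a total $\ub$-derivative of a quantity annihilated by contraction against $g^{\ub\,\cdot}$. Once this identity is written out explicitly the rest is routine, reducing to the convergence statements of Section \ref{AddEstMetric} together with the Sobolev embeddings of Propositions \ref{L4} and \ref{Linfty}; a minor further bookkeeping concern is keeping track of the two $\ub$-integrability exponents, which is precisely why the convergence is only asserted in $L^{p_0}_{\ub}$ while the limit still belongs to $L^\infty_{\ub}$.
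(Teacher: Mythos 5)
Your proposal is correct and follows essentially the same route as the paper's proof: you identify that the only dangerous second derivatives are $\partial_{\ub}^2\gamma$ and $\partial_{\ub}^2\Omega$, show they drop out of $R_{uu},R_{uA},R_{AB}$ thanks to the vanishing of $g^{\ub\ub},g^{uu},g^{uA}$, and observe that in $R_{u\ub},R_{A\ub},R_{\ub\ub}$ they either cancel or reassemble into $\partial_{\ub}\bigl((\gamma^{-1})^{AB}\partial_{\ub}\gamma_{AB}\bigr)$ and $\partial_{\ub}^2 b^A$, exactly as in the paper. One small inaccuracy: you attribute the cancellation in the $\ub$-components to ``exactly the cancellation used for the Riemann components $R(L,\Lb,L,\cdot)$,'' i.e.\ $g_{A\alpha}g^{\ub\alpha}=0$, but the Ricci cancellations in the paper are carried out by an explicit component-by-component expansion of $\partial_{\rho}\Gamma^{\rho}_{\mu\nu}-\partial_{\nu}\Gamma^{\rho}_{\rho\mu}$ (using the same null-form structure of $g^{-1}$ but not literally that identity); the surviving terms you list and the resulting conclusion are nevertheless correct.
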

\begin{proof}
Now, notice that with the metric given by (\ref{limitmetric}), we have
$$(g_{\infty}^{-1})^{\ub\ub}=(g_{\infty}^{-1})^{uu}=(g_{\infty}^{-1})^{u1}=(g_{\infty}^{-1})^{u2}=0.$$
Therefore,
$\Gamma^{\ub}_{\mu\nu}$ does not contain a term with $g_{\mu\nu,\ub}$. In other words, the expressions for 
$$R_{uu}, R_{uA}, R_{AB}\quad\mbox{for $A,B=1,2$}$$
do not contain two $\ub$ derivatives. Therefore, we have 
$$R_{uu}= R_{uA}= R_{AB}= 0$$
as functions in $L^\infty_uL^\infty_{\ub}L^2(S)$. For the component $R_{u\ub}$, there are terms involving two $\ub$ derivatives of the metric. In particular, we have
\begin{equation*}
\begin{split}
R_{u\ub}=&\frac{\partial}{\partial \ub}\Gamma^{\ub}_{u\ub}-\frac{\partial}{\partial \ub}\Gamma^{\ub}_{\ub u}-\frac{\partial}{\partial\ub}\Gamma^A_{Au}-\frac{\partial}{\partial\ub}\Gamma^u_{uu}\\
&+\mbox{terms involving at most one $\ub$ derivative of the metric}\\
=&\frac{-1}{4\Omega^2}\gamma_{AB}b^A\frac{\partial^2}{\partial\ub^2} b^B+\mbox{terms involving at most one $\ub$ derivative of the metric}.
\end{split}
\end{equation*}
From this expression, we see that except for the second $\ub$ derivatives of $b$, the terms involving two $\ub$ derivatives of the metric cancel. Recall from Proposition \ref{gspace} that $\frac{\partial^2}{\partial\ub^2} b^A$ is in $L^\infty_uL^\infty_{\ub}L^2(S)$. Thus we also have
$$R_{u\ub}=0$$
as functions in $L^\infty_uL^\infty_{\ub}L^2(S)$. Similarly, for $R_{A\ub}$, we have
\begin{equation*}
\begin{split}
R_{A\ub}=&\frac{\partial}{\partial \ub}\Gamma^{\ub}_{A\ub}-\frac{\partial}{\partial \ub}\Gamma^{\ub}_{\ub A}-\frac{\partial}{\partial\ub}\Gamma^B_{BA}-\frac{\partial}{\partial\ub}\Gamma^u_{uA}\\
&+\mbox{terms involving at most one $\ub$ derivative of the metric}\\
=&\frac{1}{4\Omega^2}\gamma_{AB}\frac{\partial^2}{\partial\ub^2} b^B+\mbox{terms involving at most one $\ub$ derivative of the metric}.
\end{split}
\end{equation*}
As for $R_{u\ub}$, because of the cancellation, we have
$$R_{\ub A}=0$$
as functions in $L^\infty_{u}L^\infty_{\ub}L^2(S)$.

It now remains to study $R_{\ub\ub}$. Indeed this term involves two $\ub$ derivatives of the metric and there are no cancellations to remove this term. We have
\begin{equation*}
\begin{split}
R_{\ub\ub}=&\frac{\partial}{\partial \ub}\Gamma^{\ub}_{\ub\ub}-\frac{\partial}{\partial \ub}\Gamma^{\ub}_{\ub\ub}+\frac{\partial}{\partial \th^A}\Gamma^{A}_{\ub\ub}+\frac{\partial}{\partial \ub}\Gamma^A_{A\ub}\\
&+\mbox{terms involving at most one $\ub$ derivative of the metric}.
\end{split}
\end{equation*}
The first two terms cancel each other. Note that
$$\Gamma^A_{\ub\ub}=0$$
since
$$g_{\ub\ub}=g_{\ub B}=0,\quad g^{Au}=0.$$
We are thus left with
$$\frac{\partial}{\partial \ub}\Gamma^A_{A\ub}=\frac{\partial}{\partial \ub}(g^{AB}\frac{\partial}{\partial \ub}g_{AB}).$$
It might seem that this term behaves like $\frac{\partial^2}{\partial \ub^2}\gamma$ which we cannot control. However, this particular combination of derivatives, by Proposition \ref{gspace}, is in fact in $L^\infty_uL^\infty_{\ub}L^4(S)$. Thus $R_{\ub\ub}=0$ in $L^\infty_uL^\infty_{\ub}L^2(S)$.
\end{proof}

\subsection{Uniqueness}\label{uniquenesssec}

In the Subsection, we show that the spacetime constructed in Theorem \ref{convergencethm2} is the unique spacetime among the class of continuous spacetimes which arise as $C^0$ limits of smooth spacetime solutions to the vacuum Einstein equations. In view of Theorem \ref{convergencethm}, uniqueness follows easily if the spacetime solution is assumed to be in the class of spacetimes satisfying the conclusions of \ref{convergencethm}. Nevertheless, we prove uniqueness within a larger class of spacetimes for which it is not a priori assumed that the spacetime is Lipschitz, or that it is more regular in the $u$ and the angular directions.

The following is a precise formulation of the uniqueness theorem (Theorem \ref{uniquenessthm}) and is the main result in this Subsection:
\begin{proposition}\label{uniquenessprop}
Let $(\mathcal M^{(1)}, g^{(1)})$ and $(\mathcal M^{(2)}, g^{(2)})$ be two $C^0$ Lorentzian spacetimes in double null coordinates, defined in $0\leq u\leq u_*, 0\leq \ub\leq \ub_*$ and taking the form
$$g^{(i)}=-2(\Omega^{(i)})^2(du\otimes d\ub+d\ub\otimes du)+(\gamma^{(i)})_{AB}(d\th^A-(b^{(i)})^Adu)\otimes (d\th^B-(b^{(i)})^Bdu)$$ 
such that
\begin{itemize}
\item $\Omega=1$ and $b^A=0$ on $H_0$ and $\Hb_0$ and $\gamma^{(1)}|_{H_0}=\gamma^{(2)}|_{H_0}$ and $\gamma^{(1)}|_{\Hb_0}=\gamma^{(2)}|_{\Hb_0}$
\item for $i=1,2$, there exists a sequence of smooth spacetimes $(\mathcal M_n^{(i)}, g^{(i)}_n)$ such that for every $n$, $g^{(i)}_n$ is defined in $0\leq u\leq u_*$ and $0\leq \ub\leq \ub_*$ and takes the following form in double null coordinates:
$$g^{(i)}_n=-2(\Omega^{(i)}_n)^2(du\otimes d\ub+d\ub\otimes du)+(\gamma^{(i)}_n)_{AB}(d\th^A-(b^{(i)}_n)^Adu)\otimes (d\th^B-(b^{(i)}_N)^Bdu)$$ 
with the property that
\begin{enumerate}
\item $$g^{(i)}_n \to g^{(i)} \quad\mbox{in }C^0\mbox{ in }0\leq u\leq u_*, 0\leq \ub\leq \ub_*,$$

\item The initial data for $g^{(i)}_n$ satisfy the assumptions of Theorem \ref{timeofexistence} uniformly,

\item The initial data for $g^{(i)}_n$ converges to the initial data for $g^{(i)}$, i.e.,
$$g^{(i)}_n|_{H_0} \to g^{(i)}|_{H_0} \quad\mbox{and }g^{(i)}_n|_{\Hb_0} \to g^{(i)}|_{\Hb_0}$$
in the norms in the assumptions of Theorem \ref{convergencethm}.
\end{enumerate}
\end{itemize}
Then, if $u_*, \ub_*\leq \epsilon$, where $\epsilon$ is as given in Theorem \ref{timeofexistence},
$$g^{(1)}=g^{(2)}\quad\mbox{in    }0\leq u\leq u_*, 0\leq \ub\leq \ub_* .$$
\end{proposition}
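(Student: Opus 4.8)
# Proof Proposal for Proposition \ref{uniquenessprop}

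\textbf{Overall strategy.} The plan is to reduce the uniqueness statement to an application of Theorem \ref{convergencethm}. The key difficulty is that the two limiting spacetimes $g^{(1)}$ and $g^{(2)}$ are not themselves assumed to satisfy the a priori estimates of Theorem \ref{timeofexistence}, nor are they assumed to be Lipschitz or to have extra regularity in $u$ and $\th$ — they are merely $C^0$ limits of smooth solutions. So we cannot compare them to each other directly; instead we compare the two \emph{approximating sequences} $(\mathcal M_n^{(1)}, g_n^{(1)})$ and $(\mathcal M_n^{(2)}, g_n^{(2)})$, for which Theorem \ref{convergencethm} applies, and then pass to the limit. The essential point that makes this work is that the spacetimes in both sequences solve the vacuum Einstein equations and satisfy the assumptions of Theorem \ref{timeofexistence} with \emph{uniform} constants $C$ and $c$, so that $\epsilon$ can be chosen once and for all, independent of $n$ and of whether we are in sequence $(1)$ or $(2)$.

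\textbf{Step 1: Apply Theorem \ref{convergencethm} to the mixed sequence.} For each fixed $n$, consider the pair of smooth solutions $g_n^{(1)}$ and $g_n^{(2)}$. By hypothesis (2) both satisfy the assumptions of Theorem \ref{timeofexistence} uniformly, hence there is a common $\epsilon$ for which both exist on $0 \le u \le u_*$, $0 \le \ub \le \ub_*$ with $u_*, \ub_* \le \epsilon$. Let $a_n$ denote the size of the difference of the initial data for $g_n^{(1)}$ and $g_n^{(2)}$, measured in the norms appearing in the hypotheses of Theorem \ref{convergencethm}; that is, $a_n$ bounds $\sup_u|(\partial/\partial\th)^i(\gamma^{(1)}_n - \gamma^{(2)}_n)(\ub=0)|$, the corresponding angular-derivative norms of the Ricci coefficient differences and renormalized curvature differences on $H_0$, etc. Then Theorem \ref{convergencethm} gives a constant $C'$, depending only on $C$ and $c$ (hence independent of $n$), such that
$$||(g_n^{(1)} - g_n^{(2)})||_{L^\infty_u L^\infty_{\ub} L^\infty(S)} \le C' a_n$$
throughout $0 \le u \le u_*$, $0 \le \ub \le \ub_*$, together with the analogous control on the angular derivatives of the metric difference, on $\mathcal O'$, $\tilde{\mathcal O}'$ and $\mathcal R'$.

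\textbf{Step 2: Show $a_n \to 0$.} By hypothesis (3) for $i=1,2$, the initial data of $g_n^{(i)}$ converge to the initial data of $g^{(i)}$ in the norms of Theorem \ref{convergencethm}; and by the matching hypothesis $\gamma^{(1)}|_{H_0} = \gamma^{(2)}|_{H_0}$, $\gamma^{(1)}|_{\Hb_0} = \gamma^{(2)}|_{\Hb_0}$ (together with $\Omega = 1$, $b^A = 0$ on the initial hypersurfaces), the two limiting initial data sets coincide. By the triangle inequality in the relevant norms,
$$a_n \le \big( \text{dist}(g_n^{(1)}|_{\text{data}},\, g^{(1)}|_{\text{data}}) \big) + \big( \text{dist}(g^{(2)}|_{\text{data}},\, g_n^{(2)}|_{\text{data}}) \big) \longrightarrow 0$$
as $n \to \infty$. (One should note here that the Ricci coefficients and curvature components on the initial hypersurfaces are determined by $\gamma$, $\Omega$, $b$ and finitely many of their derivatives via the constraint equations \eqref{con1}, \eqref{con2} and the null structure/Bianchi equations restricted to $H_0$, $\Hb_0$; so convergence of the metric data in a sufficiently strong topology implies convergence of all the quantities entering the norms of Theorem \ref{convergencethm}. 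This is exactly the regularity that hypothesis (3) is asserting.)

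\textbf{Step 3: Pass to the limit.} Combining Steps 1 and 2, for every $(u,\ub)$ in the region we have
$$||g^{(1)} - g^{(2)}||_{L^\infty(S_{u,\ub})} \le ||g^{(1)} - g_n^{(1)}||_{L^\infty(S_{u,\ub})} + C' a_n + ||g_n^{(2)} - g^{(2)}||_{L^\infty(S_{u,\ub})}.$$
The first and third terms tend to $0$ by hypothesis (1) (the $C^0$ convergence $g_n^{(i)} \to g^{(i)}$), and the middle term tends to $0$ by Step 2. Hence $g^{(1)} = g^{(2)}$ pointwise in $0 \le u \le u_*$, $0 \le \ub \le \ub_*$, which is the claim.

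\textbf{Main obstacle.} The one point requiring genuine care — and the step I expect to be the crux — is Step 2, namely verifying that the hypotheses actually deliver $a_n \to 0$ in the precise norms demanded by Theorem \ref{convergencethm}. The subtlety is that $a_n$ involves not only the metric data but the Ricci coefficients and the renormalized curvature components (including $\alphab$, $\rhoc$, $\sigmac$) and their angular derivatives on the initial hypersurfaces; one must check that the map from metric initial data to this full collection of geometric quantities is continuous in the topologies under consideration, using that all these quantities are obtained from $\gamma$, $\Omega$, $b$ by solving the constraint ODEs on $H_0$ and $\Hb_0$ (as in Section \ref{initialcondition}). Since $\alpha$ itself never enters the norms of Theorem \ref{convergencethm} — this being the whole point of the renormalization — no difficulty arises from the singular component, and the argument closes. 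Everything else is a soft limiting argument.
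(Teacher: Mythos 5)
Your proposal is correct and takes essentially the same route as the paper: apply Theorem \ref{convergencethm} to the pair $(g^{(1)}_n,g^{(2)}_n)$ (using the uniform bounds from hypothesis (2) to get a constant independent of $n$), observe that the initial data difference $a_n\to 0$ by hypothesis (3) and the matching condition, and conclude by a triangle inequality and $C^0$ convergence. Your more explicit treatment of the passage to the limit in Step 3 in fact spells out what the paper's displayed chain of inequalities compresses, and your remark at the end correctly identifies that hypothesis (3) is phrased precisely so that the Ricci-coefficient and renormalized-curvature data converge alongside the metric data.
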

\begin{proof}
By Property $(3)$ in the assumptions, for every $i$, there exists $n_i$ such that 
$$\sup_u|(\frac{\partial}{\partial\th})^i(\gamma_{AB})_{n_i}'(\ub=0)|+\sup_{\ub}||(\frac{\partial}{\partial\th})^i(\gamma_{AB})_{n_i}'(u=0)|\leq 2^{-i},$$
$$\sum_{\psi\neq\chih,\omega}(\sum_{i\leq 1} \sup_{u}||\nabla^i\psi_{n_i}'||_{L^2(S_{u,0})}+\sum_{i\leq 1} \sup_{\ub}||\nabla^i\psi_{n_i}'||_{L^2(S_{0,\ub})})\leq 2^{-i},$$
$$\sum_{i\leq 1} ||\nabla^i(\chih_{n_i}',\omega_{n_i}')||_{L^{p_0}_{\ub}L^2(S_{0,\ub})}\leq 2^{-i}\quad\mbox{for some fixed }2\leq p_0<\infty,$$
$$||\nabla^2(\chih_{n_i}',\omega_{n_i}',\eta_{n_i}',\etab_{n_i}')||_{L^2(H_0)}+\sup_{\ub} ||\nabla^2(\chibh_{n_i}',\omegab_{n_i}',\eta_{n_i}',\etab_{n_i}')||_{L^2(\Hb_{0})}\leq 2^{-i},$$
$$\sup_{u} ||\nabla^2(\trch_{n_i}',\trchb_{n_i}')||_{L^2(S_{u,0})}+\sup_{\ub} ||\nabla^2(\trch_{n_i}',\trchb_{n_i}')||_{L^2(S_{0,\ub})}\leq 2^{-i},$$
$$\sum_{\Psi\in\{\beta,\rho,\sigma,\betab\}}\sum_{i\leq 1} ||\nabla^i\Psi_{n_i}'||_{L^{2}_{\ub}L^2(S_{0,\ub})}\leq 2^{-i}.$$
By Theorem \ref{convergencethm}, there exists $C$ depending only on the uniform bounds in the assumption $(2)$ such that
$$|g^{(1)}_{n_i}-g^{(2)}_{n_i}|\leq C2^{-i}.$$
Therefore, in $C^0$ norm,
$$|g^{(1)}-g^{(2)}|\leq C|g^{(1)}_{n_i}-g^{(2)}_{n_i}|\leq C2^{-i}.$$
Since this holds for every $i$, we have
$$g^{(1)}=g^{(2)}.$$
\end{proof}

\subsection{Propagation of Regularity}\label{regularityp}
Using Proposition \ref{propagationregularity}, we show that if the initial data is more regular in the $\nab_3$ and $\nab$ directions, then so is the limiting spacetime.
\begin{proposition}\label{regularityprop}
Suppose, in addition to the assumptions of Theorem \ref{convergencethm}, we have the following estimates for the initial data:
$$\sum_{j\leq J}\sum_{i\leq I}(\sum_{\Psi\in\{\rho,\sigma,\betab,\alphab\}}||\nab_3^j \nab^i\Psi||_{L^2(\Hb_0)}+\sum_{\Psi\in\{\beta,\rho,\sigma,\betab\}}||\nab_3^i \nab^j\Psi||_{L^2(H_0)})\leq C. $$
Then
$$\sum_{j\leq J}\sum_{i\leq I}(\sum_{\Psi\in\{\rho,\sigma,\betab,\alphab\}}\sup_{\ub}||\nab_3^j \nab^i\Psi||_{L^2(\Hb_{\ub})}+\sum_{\Psi\in\{\beta,\rho,\sigma,\betab\}}\sup_u||\nab_3^i \nab^j\Psi||_{L^2(H_u)})\leq C'.$$
\end{proposition}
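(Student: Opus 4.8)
The plan is to prove Proposition \ref{regularityprop} by a straightforward adaptation of Proposition \ref{propagationregularity}, which already handles the region of existence for the approximating smooth spacetimes. The key point to notice first is that Proposition \ref{regularityprop} concerns the \emph{limiting} spacetime, so one cannot directly apply energy estimates to it, since the limiting metric has only the regularity established in Section \ref{limit}. Instead, the correct approach is to run the higher-order energy estimates on the approximating sequence $(\mathcal M_n, g_n)$ uniformly in $n$, and then pass to the limit using the convergence results of Theorem \ref{convergencethm2}.

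First, I would observe that the hypothesis on the initial data in Proposition \ref{regularityprop} — with only $\nab_3$ and $\nab$ derivatives, and no $\nab_4$ derivatives — is exactly the case $K=0$ of the hypothesis of Proposition \ref{propagationregularity}. This is the crucial structural observation: the renormalized Bianchi equations \eqref{rcsc} and \eqref{bc} (and the analogous sets for the other curvature components) do not contain $\alpha$, so commuting with $\nab_3^j\nab^i$ (with no $\nab_4$ commutations) produces only terms that can be controlled by the norms $\mathcal R$, $\mathcal O$, $\tilde{\mathcal O}_{3,2}$ already bounded in Theorem \ref{timeofexistence}, together with the highest-order terms $\nab_3^j\nab^{i+1}\psi$, which are recovered via the elliptic estimates as in \eqref{hr.2}. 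Thus, applying Proposition \ref{propagationregularity} with $K=0$ to each $g_n$ — noting the remark after that proposition that the constants $D_{I,J,0}$ are uniform in $n$ for the approximating data of Theorem \ref{convergencethm2} — yields
$$\sum_{j\leq J}\sum_{i\leq I}\left(\sum_{\Psi\in\{\rho,\sigma,\betab,\alphab\}}\sup_{\ub}||\nab_3^j \nab^i\Psi_n||_{L^2(\Hb_{\ub})}+\sum_{\Psi\in\{\beta,\rho,\sigma,\betab\}}\sup_u||\nab_3^j \nab^i\Psi_n||_{L^2(H_u)}\right)\leq C'$$
with $C'$ independent of $n$.

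Second, I would pass to the limit. By Theorem \ref{convergencethm2} (together with Proposition \ref{p}), the metrics $g_n$, their Christoffel symbols, Ricci coefficients, and the curvature components $\beta,\rho,\sigma,\betab,\alphab$ converge in the norms stated there; in particular $\nab_3^j\nab^i\Psi_n$ converges (in a weaker topology, e.g.\ one fewer derivative, or weakly in $L^2$) to the corresponding quantity $\nab_3^j\nab^i\Psi$ of the limiting spacetime. Since the $L^2(H_u)$ and $L^2(\Hb_{\ub})$ norms are lower semicontinuous under weak-$L^2$ convergence, the uniform bound above passes to the limit, giving the asserted estimate for the limiting spacetime. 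The translation of these curvature-norm bounds into the metric regularity statements of Theorem \ref{regularitythm} then follows exactly as in Section \ref{limit}: one uses the transport equations for the metric components in coordinates (the equations $\frac{\partial}{\partial\ub}\gamma_{AB}=2\Omega\chi_{AB}$, etc.) and the elliptic/transport structure relating $\nab_3^j\nab^i\psi$ to $\nab_3^j\nab^{i}\Psi$, losing derivatives in the $\ub$ direction exactly as recorded in the $\min\{I,j-2\}$ bookkeeping.

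The main obstacle is bookkeeping rather than conceptual: one must carefully check that in the commuted equations, no $\nab_4$ derivatives are generated on the curvature side (this is where the absence of $\alpha$ and the commutation formula $[\nab_3,\nab_4]f = -2\omega\nab_3 f + 2\omegab\nab_4 f + 4\zeta\cdot\nab f$ are used — since we only commute with $\nab_3$ and $\nab$, no $\nab_4$ commutators ever appear), and that the elliptic estimates \eqref{hr.2} close without requiring control of $\nab^2\beta$ in $L^2(\Hb)$, which we do not have. As in the proof of Proposition \ref{omegaelliptic}, one handles $\nab^3\omega$ (and its $\nab_3^j$-differentiated versions) by controlling them only in $L^2(H_u)$ rather than $L^2(\Hb_{\ub})$, which is consistent with the norms in the conclusion. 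Once this is verified, the induction in $i$ and $j$ and the Gronwall argument proceed exactly as in Proposition \ref{propagationregularity}.
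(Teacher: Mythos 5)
Your proposal is correct and follows the same route as the paper: the paper's own proof of this proposition is essentially the two sentences "By Proposition \ref{propagationregularity}, for an approximating sequence of spacetimes, the stated bound holds independent of $n$; the conclusion follows," relying on the remark after Proposition \ref{propagationregularity} that for $K=0$ the data constants $D_{I,J,0}$ are uniform in $n$. You correctly identify the $K=0$ specialization, correctly invoke that remark, and supply the (implicit in the paper) passage to the limit via weak-$L^2$ lower semicontinuity; the extra discussion of commutators and elliptic estimates simply re-narrates the content of Proposition \ref{propagationregularity}, which both you and the paper are delegating to.
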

\begin{proof}
By Proposition \ref{propagationregularity}, for an approximating sequence of spacetimes, the following bound holds independent of $n$:
$$\sum_{j\leq J}\sum_{i\leq I}(\sum_{\Psi\in\{\rho,\sigma,\betab,\alphab\}}\sup_{\ub}||\nab_3^j \nab^i\Psi_n||_{L^2(\Hb_{\ub})}+\sum_{\Psi\in\{\beta,\rho,\sigma,\betab\}}\sup_u||\nab_3^i \nab^j\Psi_n||_{L^2(H_u)})\leq C'.$$
The conclusion thus follows.
\end{proof}
Using the equations
$$\frac{\partial}{\partial \ub}\gamma=2\Omega\chi,$$
$$\frac 12\frac{\partial}{\partial \ub}\Omega^{-1}=\omega,$$
$$\frac{\partial}{\partial \ub}b^A=-4\Omega^2\zeta^A,$$
together with Proposition \ref{regularityprop}, we can show that $g$ is in the desired space as indicated in Theorem \ref{regularitythm}. The details are straightforward and will be omitted.

\subsection{The Limiting Spacetime for an Impulsive Gravitational Wave}\label{limitgiw}
In the Subsection, we show that for the case of an impulsive gravitational wave, the spacetime is more regular than a general spacetime with merely bounded $\chih$, thus concluding the proof of Theorem \ref{giwthmv1}. In particular, we show that $\alpha_\infty$ is well defined as a measure which is the weak limit of $\alpha_n$. All other curvature components can be defined in $L^\infty$. Moreover, the singularity of $\alpha_\infty$ is supported on a null hypersurface and the constructed spacetime is smooth away from this null hypersurface.

\subsubsection{Limit of the Curvature Component $\alpha$}

We first show that $\alpha_n$ has a well defined limit as a measure.
\begin{proposition}\label{alphalimit}
In the case of an impulsive gravitational wave, for every $u$ and every $\vartheta=(\th^1,\th^2)\in\mathbb S^2$, $\alpha_\infty(u,\vartheta)$ is a measure on $[0,\ub_*]$, defined as the limit of $\alpha_n(u,\vartheta)$
\end{proposition}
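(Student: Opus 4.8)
The plan is to exploit the uniform $L^1_{\ub}$ bound for $\alpha_n$ established in Proposition \ref{totalvariation}, which states that $\sup_{u\leq u_*}\int_0^{\ub_*}\|\alpha_n\|_{L^2(S_{u,\ub})}d\ub\leq C$ independent of $n$. First I would fix $u$ and $\vartheta=(\th^1,\th^2)$ and view $\ub\mapsto\alpha_n(u,\vartheta,\ub)$ (more precisely, the coordinate components $(\alpha_n)_{AB}$) as an element of $L^1([0,\ub_*])$. The uniform bound from Proposition \ref{totalvariation}, together with the pointwise control of the volume form $\sqrt{\det\gamma}$ from Proposition \ref{gamma} that makes coordinate $L^1$ norms and metric $L^1$ norms comparable, gives a uniform bound on $\|\alpha_n(u,\vartheta,\cdot)\|_{L^1([0,\ub_*])}$ for (almost) every $\vartheta$; one must be slightly careful here since Proposition \ref{totalvariation} controls the $S$-norm integrated in $\ub$, so I would first argue that for each fixed $u$ the sequence $\alpha_n(u,\cdot,\cdot)$ is bounded in $L^1_{\ub}L^2(S)$ and hence, after passing to a subsequence if necessary, converges weakly-$*$ as measures valued in $L^2(S)$; then refine to pointwise-in-$\vartheta$ statements.

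The key structural input for identifying the limit and showing it does not depend on the subsequence is the relation $\alpha=-\nab_4\chih-\trch\chih-2\omega\chih$, equivalently $(\alpha_n)_{AB}=-\Omega_n^{-1}\partial_{\ub}(\chih_n)_{AB}+(\text{lower order, uniformly bounded})$. Since $\chih_n\to\chih_\infty$ uniformly (indeed in $C^0_uC^0_{\ub}L^4(S)$, and after the argument of Section \ref{AddEst} with better control away from $\ub_s$), and since $\Omega_n$, $\trch_n$, $\omega_n$ all converge in the norms established in Theorem \ref{convergencethm2}, the sequence $\alpha_n\,d\ub$ converges in the sense of distributions in $\ub$ to $-\partial_{\ub}(\Omega_\infty^{-1}\chih_\infty)\,d\ub + (\text{bounded})\,d\ub$, which is a well-defined distribution of order zero, i.e. a (vector/tensor-valued) measure, because $\Omega_\infty^{-1}\chih_\infty$ is of bounded variation in $\ub$ (it is continuous with a jump structure inherited from the data). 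This pins down the limit uniquely, so in fact the whole sequence $\alpha_n(u,\vartheta)$ converges, not just a subsequence: the distributional limit is forced, and the uniform $L^1_{\ub}$ bound upgrades distributional convergence to weak-$*$ convergence of measures on $[0,\ub_*]$. I would then record that the total variation of the limiting measure is bounded by $\liminf_n\|\alpha_n(u,\vartheta,\cdot)\|_{L^1([0,\ub_*])}\leq C$, so $\alpha_\infty(u,\vartheta)$ is a finite measure.

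The main obstacle is the interchange of the $\ub$-limit with the $(u,\vartheta)$ dependence: Proposition \ref{totalvariation} is an estimate on spheres integrated over $\ub$, whereas the statement asserts, for \emph{every} $u$ and \emph{every} $\vartheta$, that $\alpha_\infty(u,\vartheta)$ is a measure on $[0,\ub_*]$. Handling the ``every $\vartheta$'' requires using the higher angular regularity: the uniform a priori bounds of Theorem \ref{timeofexistence} control $\nab^i\alpha_n$ up to $i\leq 2$ in the relevant $L^2$ norms, so by Sobolev embedding on $S_{u,\ub}$ (Propositions \ref{L4}, \ref{Linfty}) one gets uniform $L^1_{\ub}L^\infty(S)$ control, which makes the pointwise-in-$\vartheta$ statement legitimate; the ``every $u$'' is handled because the $u$-direction is the good direction in which all our estimates are uniform and the transport/energy arguments of Section \ref{AddEst} (Propositions \ref{beforeshock}, \ref{duringshock}, \ref{aftershock}) already localize the concentration of $\alpha_n$ near $\ub=\ub_s$ uniformly in $u$. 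Once these uniformities are in place, the identification of $\alpha_\infty$ as the weak-$*$ limit and as a finite measure is immediate from the preceding paragraph.
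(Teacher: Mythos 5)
Your proposal is essentially the same argument as the paper's. Both start from the uniform $L^1_{\ub}$ bound of Proposition \ref{totalvariation}, both rewrite $\alpha_n$ using $\nab_4\chih+\trch\chih=-2\omega\chih-\alpha$ so that the cumulative integral $\int_0^{\ub}\alpha_n$ equals $-\Omega_n^{-1}\chih_n$ plus a uniformly convergent lower-order integral, and both conclude that the limit is a finite measure by observing that $\Omega_\infty^{-1}\chih_\infty(u,\cdot,\vartheta)$ is of bounded variation in $\ub$ (continuous with a single jump at $\ub_s$). The paper makes this concrete by integrating by parts to produce the explicit identity for $\int_0^{\ub}\alpha_n$, whereas you phrase it as $\alpha_n\sim -\partial_{\ub}(\Omega_n^{-1}\chih_n)+\text{bounded}$ and appeal to distributions of order zero plus $L^1$-boundedness; these are the same computation.

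One imprecision worth flagging: you assert that $\chih_n\to\chih_\infty$ ``uniformly (indeed in $C^0_uC^0_{\ub}L^4(S)$).'' That is not what Theorem \ref{convergencethm2} gives — the $\ub$-derivative of the metric (hence $\chih_n$) converges only in $L^\infty_uL^{p_0}_{\ub}L^\infty(S)$, not $C^0_{\ub}$, and the approximating $\chih_n$ are precisely \emph{not} equicontinuous in $\ub$ at $\ub_s$. The step that pins down pointwise convergence of $\Omega_n^{-1}\chih_n$ for $\ub\neq\ub_s$ requires the specific construction in Section \ref{initialcondition} (the data $\chih'_n$ is supported in $\ub_s\le\ub\le\ub_s+2^{-n}$ up to $O(2^{-n})$) propagated via the $\nab_3\chih$ null structure equation; it does not follow from the convergence of the metric in Section \ref{AddEst}. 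You gesture at ``better control away from $\ub_s$,'' which is the right idea, but the needed input is the $\nab_3$-propagation of the initial support/smallness structure rather than the energy estimates of Section \ref{AddEst} (those bound $\alpha_n$, not the convergence of $\chih_n$). With that corrected, the rest of your outline — $L^1$-boundedness upgrading distributional convergence to weak-$*$ convergence of measures, uniqueness of the limit eliminating the subsequence, and angular Sobolev embedding for the pointwise-in-$\vartheta$ claim — matches the paper's proof in substance.
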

\begin{proof}
By Proposition \ref{totalvariation}, for every $u$ and $\vartheta$, $\alpha_n$ has uniformly bounded $L^1_{\ub}$ norms. Thus to show convergence of $\alpha_n$, we only need to show that for every $u$ and every $\vartheta=(\th^1,\th^2)\in\mathbb S^2$, 
$$\int_0^{\ub} \alpha_n(u,\ub',\vartheta) d\ub' $$
converges to a limit $\alpha_{\infty}(u,\vartheta)([0,\ub))$ as $n\to\infty$ for all $\ub$ such that $\alpha_{\infty}(u,\vartheta)([0,\ub))$ is continuous.
By the equation
$$\nab_4\chih+\trch\chih=-2\omega\chih-\alpha,$$
we have
$$\int_0^{\ub} \alpha_n(u,\ub',\vartheta) d\ub'=\int_0^{\ub} (\Omega^{-1}_n\frac{\partial}{\partial\ub}\chih_n+\trch_n\chih_n+2\omega_n\chih_n)(u,\ub',\vartheta) d\ub'.$$
Integrating by parts and using 
$$\frac{\partial}{\partial\ub}\Omega^{-1}_n=2\omega_n,$$
we derive
\begin{equation}\label{intalpha}
\int_0^{\ub} \alpha_n(u,\vartheta) d\ub=(\Omega^{-1}_n\chih_n)(u,\ub,\vartheta)-(\Omega^{-1}_n\chih_n)(u,\ub=0,\vartheta)+\int_0^{\ub} (\trch_n\chih_n)(u,\ub',\vartheta) d\ub'.
\end{equation}
By Proposition \ref{p}, 
$$\int_0^{\ub} (\trch_n\chih_n)(u,\ub',\vartheta) d\ub'$$
converges. Thus, in order to show weak convergence of $\alpha_n$, we need to show pointwise convergence for $\Omega^{-1}_n\chih_n$ for all $\ub$ such that $\alpha_{\infty}(u,\vartheta)([0,\ub))$ is continuous. By the construction of the data for an impulsive gravitational wave in Section \ref{initialcondition}, we know that for the initial data
$$(\chih'_n)_{AB}(u=0,\ub,\vartheta)\leq C2^{-n}\quad\mbox{for }\ub\leq\ub_s\mbox{ or }\ub\geq\ub_s+2^{-n}.$$
Using the equation
$$\nabla_3\chih+\frac 12\trchb\chih=\nabla\hot\eta+2\omegab\chih-\frac 12\trch\chibh+\eta\hot\eta,$$
one sees that 
$$(\chih'_n)_{AB}(u,\ub,\vartheta)\leq C2^{-n}\quad\mbox{for }\ub\leq\ub_s\mbox{ or }\ub\geq\ub_s+2^{-n}.$$
Therefore, $\chih_n$ converges for all $\ub\neq\ub_s$. By Proposition \ref{convergencethm2}, $\Omega^{-1}_n\chih_n$ also converges for all $\ub\neq\ub_s$.

It remains to show that the limit $\alpha_{\infty}(u,\vartheta)([0,\ub))$ is discontinuous at $\ub=\ub_s$. In view of (\ref{intalpha}), it suffices to show that $\chih$ has a jump discontinuity across $\Hb_{\ub_s}$. This follows from the equation
$$\nabla_3\chih+\frac 12\trchb\chih=\nabla\hot\eta+2\omegab\chih-\frac 12\trch\chibh+\eta\hot\eta.$$
Rewriting in coordinates and expressing $\chih$, $\chibh$, $\eta$, $\etab$ in terms of the coordinate vector fields $\frac{\partial}{\partial\th^1}$ and $\frac{\partial}{\partial\th^2}$ on the spheres, we have
\begin{equation*}
\begin{split}
&\Omega^{-1}(\frac{\partial}{\partial u}+b^A\frac{\partial}{\partial\th^A})\chih-\gamma^{-1}\chib\chih+\frac 1 2 \trchb \chih\\
=&\nab\widehat{\otimes} \eta+2\omegab \chih-\frac 12 \trch \chibh +\eta\widehat{\otimes} \eta.
\end{split}
\end{equation*}
Consider the coordinate system $(u,\ub,\tilde{\th}^1,\tilde{\th}^2)$ such that 
$$\frac{d}{d\ub}\tilde{\th}^A(\ub;u,\th)=b^A(u,\ub,\tilde\th^1,\tilde\th^2),$$
with initial data
$$\tilde{\th}^A(0;u,\th)=\th^A.$$
According to the proven estimates for $b$, this change of variable is $W^{2,\infty}$. Therefore, the vector fields $\frac{\partial}{\partial\tilde{\th}^1}$ and $\frac{\partial}{\partial\tilde{\th}^2}$ associated to the new coordinate system are $W^{1,\infty}$ with respect to the differentiable structure given by the original coordinate system $(u,\ub,\th^1,\th^2)$. In the new coordinate system, we can rewrite the transport equation for $\chih$ as
$$\Omega^{-1}\frac{\partial}{\partial u}\chih-\gamma^{-1}\chib\chih+\frac 1 2 \trchb \chih
=\nab\widehat{\otimes} \eta+2\omegab \chih-\frac 12 \trch \chibh +\eta\widehat{\otimes} \eta,$$
where $\chih$, $\chibh$, $\eta$, $\etab$ are now expressed in terms of the new coordinate vector fields $\frac{\partial}{\partial\tilde{\th}^1}$ and $\frac{\partial}{\partial\tilde{\th}^2}$.
Notice that $\chib$ is continuous and the expression
$$\nab\widehat{\otimes} \eta+2\omegab \chih-\frac 12 \trch \chibh +\eta\widehat{\otimes} \eta$$
is also continuous. Since for the initial data, $\chih(\tilde{u}_0,\ub,\theta)$ has a jump discontinuity for $\ub=\ub_s$, $\chih_{\tilde{A}\tilde{B}}$ also has a jump discontinuity across $\ub=\ub_s$. As noted before, the the vector fields $\frac{\partial}{\partial\tilde{\th}^1}$ and $\frac{\partial}{\partial\tilde{\th}^2}$ associated to the new coordinate system are $W^{1,\infty}$ with respect to the differentiable structure given by the original coordinate system $(u,\ub,\th^1,\th^2)$. Therefore, $\chih_{AB}$ also has a jump discontinuity.
\end{proof}

\subsubsection{Control of the Curvature Components $\beta,\rho,\sigma,\betab,\alphab$}

Except for the curvature component $\alpha$ which can only be defined as a measure, all other curvature components can be defined as $L^\infty_uL^\infty_{\ub}L^\infty(S)$ functions:

\begin{proposition}
$\beta,\rho,\sigma,\betab,\alphab$ can be defined as functions in $L^\infty_uL^\infty_{\ub}L^\infty(S)$.
\end{proposition}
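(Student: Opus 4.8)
The plan is to obtain uniform (in $n$) $L^\infty_uL^\infty_{\ub}L^\infty(S)$ bounds for the curvature components $\beta_n,\rho_n,\sigma_n,\betab_n,\alphab_n$ in the approximating spacetimes, and then to upgrade the already-established convergence (Theorem \ref{convergencethm2}, which gives convergence of these components in $L^\infty_uL^\infty_{\ub}L^2(S)$, and more precisely of their first angular derivatives by the definition of $\mathcal R'$) to a pointwise a.e.\ convergence whose limit then inherits the $L^\infty$ bound. The first step is already essentially done: by Theorem \ref{timeofexistence} the norm $\mathcal R$ is bounded uniformly in $n$, so $\sum_{i\le 2}\|\nab^i\Psi_n\|_{L^2(S_{u,\ub})}$ is controlled (using Proposition \ref{transport} to pass from the $L^2(H_u)$ or $L^2(\Hb_{\ub})$ flux bounds to $L^2(S_{u,\ub})$ bounds on each two-sphere, exactly as in Proposition \ref{RS}), and then the Sobolev embedding of Proposition \ref{Linfty} gives $\|\Psi_n\|_{L^\infty(S_{u,\ub})}\le C\sum_{i\le 2}\|\nab^i\Psi_n\|_{L^2(S_{u,\ub})}\le C(\mathcal O_0,\mathcal R_0)$, uniformly in $n$ and in $(u,\ub)$.

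Next I would identify the limit. From Theorem \ref{convergencethm2} the sequence $\Psi_n$ is Cauchy in $L^\infty_uL^\infty_{\ub}L^2(S)$ for $\Psi\in\{\beta,\rho,\sigma,\betab\}$ (and $\Psic_n$ for the renormalized components on $\Hb_{\ub}$); since $\chih_n,\chibh_n$ converge uniformly in the relevant norms, the renormalizations $\rhoc_n\to\rhoc_\infty$, $\sigmac_n\to\sigmac_\infty$ are harmless, so $\rho_n,\sigma_n,\betab_n,\alphab_n$ also converge in $L^\infty_uL^\infty_{\ub}L^2(S)$. Passing to a subsequence, $\Psi_n\to\Psi_\infty$ pointwise a.e.; combined with the uniform $L^\infty$ bound above, lower semicontinuity of the $L^\infty$ norm under a.e.\ convergence (or Fatou applied to $|\Psi_n|^p$ and letting $p\to\infty$) yields $\|\Psi_\infty\|_{L^\infty(S_{u,\ub})}\le C(\mathcal O_0,\mathcal R_0)$ for a.e.\ $(u,\ub)$, hence $\Psi_\infty\in L^\infty_uL^\infty_{\ub}L^\infty(S)$. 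One should check that this $\Psi_\infty$ coincides with the curvature component of the limiting metric $g_\infty$ computed from the coordinate formula \eqref{curvdef}; but this was already verified in Section \ref{limit} (all components except $R_{\ub A\ub B}, R_{\ub u\ub A}, R_{\ub u\ub u}$, i.e.\ all components except those entering $\alpha$, are the $L^2(S)$ limits of the corresponding quantities for $g_n$), so the two notions of limit agree.

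The main obstacle, such as it is, is bookkeeping rather than a genuine difficulty: one must be careful that the uniform $\mathcal R$-bound from Theorem \ref{timeofexistence} really does control \emph{two} angular derivatives of each of $\beta,\rho,\sigma,\betab$ on $H_u$ and of $\rho,\sigma,\betab,\alphab$ on $\Hb_{\ub}$ (it does, by definition of $\mathcal R$), and that Proposition \ref{transport} legitimately transfers these hypersurface bounds to sphere bounds with constants independent of $n$ — this uses the uniform bootstrap-type control on $\Omega$, $\trch$, $\trchb$ from Section \ref{metric}, which holds for all the approximating spacetimes with the same $\epsilon$. The component $\alpha$ is deliberately excluded from the statement precisely because the $\mathcal R$-norm carries no information about it; this is consistent with Proposition \ref{alphalimit}, where $\alpha_\infty$ is only a measure. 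No new energy estimate is needed here: everything follows from results already in hand, so the proof is short.
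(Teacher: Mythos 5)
Your overall scheme (establish a uniform $L^\infty$ bound on $\Psi_n$, then pass to the pointwise a.e.\ limit) is sensible, but the step producing the uniform $L^\infty$ bound has a genuine gap: Proposition \ref{RS} controls only $\sum_{i\leq 1}\|\nabla^i\Psi\|_{L^2(S_{u,\ub})}$, not $\sum_{i\leq 2}$, and it does not treat $\alphab$ at all. The reason you cannot push the Proposition \ref{RS} argument to $i=2$ with the $\mathcal R$-norm alone is that the $\nab_3$ Bianchi equations lose an angular derivative: commuting twice with $\nab$ produces $\nabla^3\Psi$ with $\Psi\in\{\rho,\sigma,\betab,\alphab\}$ on the right-hand side, and $\mathcal R$ only controls two angular derivatives in the flux norms. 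So after integrating the transport equation you end up one derivative short of what Proposition \ref{Linfty} needs. The $\alphab$ case is worse still, since $\alphab$ carries no $L^2(H_u)$ flux bound and Proposition \ref{RS} does not even mention it.

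The paper does not attempt to extract the $L^\infty(S)$ bound from $\mathcal R$ alone. It is working in the impulsive gravitational wave subsection, where the initial data enjoy additional smoothness in the $\nab_3$ and $\nab$ directions (Proposition \ref{dataprop}), and the corresponding \emph{a priori} higher-regularity estimate, Proposition \ref{regularityprop}, propagates bounds on $\nab_3^j\nab^i\Psi$ in the flux norms with $I\geq 2$, $J\geq 1$. With the $\nab_3$-derivatives available one can legitimately pass from $L^2(\Hb_{\ub})$ flux bounds to $L^2(S_{u,\ub})$ sphere bounds for the full range $i\leq 2$, and then apply Proposition \ref{Linfty}. This handles $\rho,\sigma,\betab,\alphab$ (precisely the components with flux on $\Hb_{\ub}$). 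The component $\beta$, which has no $\Hb_{\ub}$-flux under control and so is not covered by Proposition \ref{regularityprop} in the $\Hb$-direction, is treated separately: the paper integrates the Bianchi equation $\nab_3\beta = \nab\rho + {}^*\nab\sigma + \psi\,\Psi$ in $L^\infty(S)$ using Proposition \ref{transport} (with $p=\infty$) and Gronwall, once $\nab\rho,\nab\sigma$ and the remaining $\psi,\Psi$ on the right are known to be pointwise bounded. You should invoke Proposition \ref{regularityprop} rather than Proposition \ref{RS}, and you need the separate transport-plus-Gronwall argument for $\beta$.
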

\begin{proof}
That $\rho,\sigma,\betab,\alphab$ can be defined as functions in $L^\infty_uL^\infty_{\ub}L^\infty(S)$ follows from Proposition \ref{regularityprop} and the Sobolev Embedding Theorem. To show that $\beta$ can be defined as a $L^\infty_uL^\infty_{\ub}L^\infty(S)$ function, we consider the following Bianchi equation:
$$\nabla_3\beta=\sum_{\Psi\in\{\rho,\sigma\}}\nabla\Psi+\psi\sum_{\Psi\in\{\beta,\rho,\sigma,\betab\}}\Psi.$$
Since $\nab\rho, \psi, \rho, \sigma, \betab$ can be bounded in $L^\infty_uL^\infty_{\ub}L^\infty(S)$, it follows from Proposition \ref{transport} and Gronwall's inequality that $\beta$ can be estimated in $L^\infty_uL^\infty_{\ub}L^\infty(S)$, as desired.
\end{proof}

\subsubsection{Smoothness of Spacetime away from $\Hb_{\ub_s}$}

We now show that the spacetime is smooth away from the null hypersurface $\ub=\ub_s$. In the region $\ub<\ub_s$ before the impulse, this follows from standard theory of local-wellposedness. We focus on the region after the impulse, where $\ub>\ub_s$. In the following Proposition, we prove uniform estimates for the $\nab_4$ and $\nab$ derivatives of $\alpha_\infty$ in the region $\ub>\ub_s$. The bounds for all derivatives of the curvature components follow from a combination of the estimates on $\alpha_\infty$, Theorem \ref{regularitythm} and the Bianchi equations.
\begin{proposition}\label{giwsmoothness}
Suppose the data for the impulsive gravitational wave satisfy the estimates as given in Proposition \ref{dataprop} for some $I\geq 2$, $K\geq 0$. Then
$$\sum_{i\leq I-1}\sum_{k\leq \min\{K,\lfloor\frac{I-i-1}{2}\rfloor\}} \limsup_{\tilde{\ub}\to \ub_s+}||\nab_4^k\nab^i\alpha_{\infty}||_{L^2(H_u(\tilde{\ub},\ub_*))}\leq C_{I,K}'.$$
\end{proposition}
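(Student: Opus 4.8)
The plan is to transfer the $n$-uniform bounds already established for the approximating spacetimes $(\mathcal M_n, g_n)$ in the ``after the impulse'' region to the limiting spacetime $(\mathcal M_\infty, g_\infty)$ by a weak lower-semicontinuity argument. Concretely, Proposition \ref{alphaapriori} gives, for each fixed $\tilde\ub>\ub_s$ and all $n$ large enough that $\ub_s+2^{-n}<\tilde\ub$, the uniform bound
$$\sum_{i\leq I-1}\sum_{k\leq \min\{K,\lfloor\frac{I-i-1}{2}\rfloor\}}\sup_{\ub}||\nab_4^k\nabla^i\alpha_n||_{L^2(H_u(\ub_s+2^{-n},\epsilon))}\leq C_{I,K}',$$
and in particular $||\nab_4^k\nab^i\alpha_n||_{L^2(H_u(\tilde\ub,\ub_*))}\leq C_{I,K}'$ with $C_{I,K}'$ independent of $n$ and $\tilde\ub$. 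The first step is therefore to record, via Proposition \ref{dataprop} (which supplies exactly the hypotheses of Proposition \ref{alphaapriori} for an impulsive gravitational wave with the claimed $I\geq 2$, $K\geq 0$), that these uniform-in-$n$ bounds hold on every slab $\{\tilde\ub\leq\ub\leq\ub_*\}$.

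Next I would upgrade the convergence $\alpha_n\to\alpha_\infty$ from the weak/measure sense (Proposition \ref{alphalimit}) to something strong enough on $\{\ub\geq\tilde\ub\}$ to pass the derivative bounds to the limit. The key observation is that \emph{away} from $\ub_s$ the singular behaviour is absent: the estimates $||\chih_n'||\lesssim 2^{-n}$ for $\ub\geq \ub_s+2^{-n}$ from Section \ref{initialcondition}, propagated by the $\nab_3$ transport equation for $\chih$ as in the proof of Proposition \ref{alphalimit}, show $\chih_n$ is Cauchy in the relevant norms on $\{\ub\geq\tilde\ub\}$; combined with the uniform higher-order bounds just recorded and interpolation, $\nab_4^k\nab^i\alpha_n$ converges (say strongly in a slightly weaker norm, or weakly in $L^2(H_u(\tilde\ub,\ub_*))$) to $\nab_4^k\nab^i\alpha_\infty$. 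Here one must check that $\alpha_\infty = -\nab_4\chih_\infty -\trch\chih_\infty - 2\omega\chih_\infty$ is genuinely a smooth tensor field on $\{\ub>\ub_s\}$ — this uses Theorem \ref{regularitythm} together with the propagation of regularity in Proposition \ref{regularityprop}, which give the needed $\nab_3,\nab$ (and, via the Bianchi system, $\nab_4$) regularity of all non-$\alpha$ components there. By weak lower semicontinuity of the $L^2$ norm, for each fixed $\tilde\ub>\ub_s$,
$$\sum_{i\leq I-1}\sum_{k\leq \min\{K,\lfloor\frac{I-i-1}{2}\rfloor\}}||\nab_4^k\nab^i\alpha_\infty||_{L^2(H_u(\tilde\ub,\ub_*))}\leq \liminf_n\big(\cdots\big)\leq C_{I,K}'.$$

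Finally, taking $\limsup_{\tilde\ub\to\ub_s+}$ of the left-hand side (which is monotone decreasing in $\tilde\ub$, so the limsup is really a supremum/limit) against the uniform bound $C_{I,K}'$ yields the claimed estimate. I would then close the argument by noting that, together with Theorem \ref{regularitythm} and the null Bianchi equations \eqref{eq:null.Bianchi}, bounding $\nab_4^k\nab^i\alpha_\infty$ controls all spacetime derivatives of all curvature components on $\{\ub>\ub_s\}$ (each additional $\nab_4$ or $\nab_3$ or $\nab$ derivative of a curvature component is expressed, through a Bianchi identity, in terms of derivatives of $\alpha$ and of lower-order quantities already under control, at the cost of the stated loss of derivatives $\sim 2N$), so the limiting spacetime is smooth away from $\Hb_{\ub_s}$; symmetrically, smoothness for $\ub<\ub_s$ is classical local well-posedness applied before the impulse.

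The main obstacle I anticipate is the second step: making the passage to the limit for the \emph{derivatives} of $\alpha$ rigorous and uniform as $\tilde\ub\to\ub_s+$. The difficulty is that the uniform bounds from Proposition \ref{alphaapriori} are stated on $\{\ub_s+2^{-n}\leq\ub\leq\epsilon\}$, whose left endpoint depends on $n$, so one cannot simply fix a single domain and extract a weakly convergent subsequence for all orders at once; one has to exhaust $\{\ub>\ub_s\}$ by slabs $\{\tilde\ub\leq\ub\}$, argue on each, and then verify that the bound obtained is independent of $\tilde\ub$ — which in turn relies on the $n$-uniformity being genuinely $\tilde\ub$-independent in Proposition \ref{alphaapriori}. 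A secondary subtlety is checking that the weak limit of $\nab_4^k\nab^i\alpha_n$ is indeed $\nab_4^k\nab^i$ of the limit $\alpha_\infty$ (and not merely \emph{some} $L^2$ function), which requires knowing the background metric quantities $\Omega, b, \gamma$ and hence the differential operators $\nab_4,\nab$ converge in a strong enough topology on $\{\ub\geq\tilde\ub\}$ — this is exactly what the additional metric estimates of Section \ref{AddEstMetric} and Theorem \ref{regularitythm} are designed to provide.
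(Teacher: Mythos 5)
Your proposal is correct and takes essentially the same route as the paper: invoke Proposition \ref{alphaapriori} for uniform-in-$n$ bounds on $\|\nab_4^k\nab^i\alpha_n\|_{L^2(H_u(\ub_s+2^{-n},\epsilon))}$, fix $\tilde\ub>\ub_s$ (equivalently $\delta>0$), take $n$ large so that $\ub_s+2^{-n}<\tilde\ub$, and conclude by letting $\tilde\ub\to\ub_s^+$. The paper's proof is considerably terser (it does not spell out the weak-limit/lower-semicontinuity step or the identification of the limit with $\nab_4^k\nab^i\alpha_\infty$, which you correctly flag as the only delicate point), but the underlying argument is the same.
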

\begin{proof}
By Proposition \ref{alphaapriori},
\begin{equation}\label{alphaaprioricon}
\sum_{i\leq I-1}\sum_{k\leq \min\{K,\lfloor\frac{I-i-1}{2}\rfloor\}}\sup_{\ub}||\nab_4^k\nabla^i\alpha_n||_{L^2(H_u(\ub_s+2^{-n},\epsilon))}\leq C_{I,K}'.
\end{equation}
Fix $\delta>0$. Take $n$ large enough such that $2^{-n}< \delta$. Then (\ref{alphaaprioricon}) implies that 
$$\sum_{i\leq I-1}\sum_{k\leq \min\{K,\lfloor\frac{I-i-1}{2}\rfloor\}}\sup_{\ub}||\nab_4^k\nabla^i\alpha_n||_{L^2(H_u(\ub_s+\delta,\epsilon))}\leq C_{I,K}'.$$
The conclusion follows from the fact that this bound is uniform in $\delta$.
\end{proof}
As a consequence, we show
\begin{proposition}
The limiting spacetime is smooth away from the the null hypersurface $\Hb_{\ub_s}$.
\end{proposition}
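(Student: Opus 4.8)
The plan is to show that, away from $\Hb_{\ub_s}$, all the Ricci coefficients and all the curvature components (including $\alpha_\infty$, which is a genuine $L^2$ function there), together with all their derivatives, are bounded in the relevant $L^2(S)$ norms, and that this higher regularity propagates to the metric via the transport equations $\partial_{\ub}\gamma=2\Omega\chi$, $\partial_{\ub}\Omega^{-1}=2\omega$, $\partial_{\ub}b^A=-4\Omega^2\zeta^A$ and their $\partial_u$, $\partial_{\th}$ counterparts, exactly as in Section \ref{regularityp}. Concretely, fix $\delta>0$ and work in the region $\{\ub_s+\delta\le \ub\le \ub_*\}\cap\{0\le u\le u_*\}$. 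By Proposition \ref{giwsmoothness}, the angular and $\nab_4$ derivatives of $\alpha_\infty$ up to any finite order are bounded there (taking $I,K$ large, using Proposition \ref{dataprop} with arbitrarily large $I,K$, which is available since the initial data for the impulsive wave is smooth away from $\ub_s$). By Proposition \ref{regularityprop} together with Proposition \ref{dataprop}, the $\nab_3$ and angular derivatives of $\psi$ and of $\Psi\in\{\beta,\rho,\sigma,\betab,\alphab\}$ up to any finite order are bounded on all of $0\le u\le u_*$, $0\le\ub\le\ub_*$.

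The key steps, in order, are as follows. First, upgrade the list of controlled derivatives from $\{\nab_3,\nab\}$-derivatives to \emph{all} mixed $\{\nab_3,\nab_4,\nab\}$-derivatives of $\psi$ and $\Psi$ in the region $\ub\ge \ub_s+\delta$: this is done by induction on the number of $\nab_4$ derivatives, where at each step the Bianchi equation for $\nab_4\Psi$ (for $\Psi\neq\alpha$) expresses a new $\nab_4$ derivative of a curvature component in terms of already-controlled quantities plus $\div\alpha$, $\nab\hot\alpha$, which are controlled by Proposition \ref{giwsmoothness}; likewise the null structure equations for $\nab_4\psi$ express $\nab_4$-derivatives of the Ricci coefficients in terms of curvature components (including $\alpha$) and lower-order Ricci coefficients. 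Second, feed these bounds into the transport equations for the metric components in the $\ub$-direction: since the right-hand sides ($\Omega\chi$, $\omega$, $\Omega^2\zeta^A$ and all their $\partial_u,\partial_{\th},\partial_{\ub}$ derivatives, after rewriting $\partial_{\ub}=\Omega\nab_4+(\text{lower order})$) are now bounded in $L^2(S)$ uniformly in the region, integrating in $\ub$ starting from the smooth data on $H_0\cap\{\ub\ge\ub_s+\delta\}$ (or, more carefully, starting from $\Hb_{\ub_s+\delta}$ where the restriction of the limiting metric is smooth by the induction) gives bounds on all mixed partial derivatives of $g_\infty$ of every order in $L^\infty_uL^\infty_{\ub}L^2(S)$. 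Third, invoke Sobolev embedding on the spheres (Propositions \ref{L4} and \ref{Linfty}) to convert the $L^2(S)$ bounds on all derivatives into $C^\infty$ bounds, and observe that since $\delta>0$ was arbitrary, $g_\infty\in C^\infty$ on $\{\ub>\ub_s\}\cap\{0\le u\le u_*\}$; the region $\{\ub<\ub_s\}$ is handled by standard local existence as already noted.

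The main obstacle I expect is the bookkeeping in the first step: one must verify that the special structure exploited throughout Section \ref{estimates} — namely that $\alpha$ never needs to be differentiated more than the initial data allows, and that the loss of derivatives incurred each time one trades a $\nab_4$ derivative for angular derivatives via a Bianchi equation (as in the proof of Proposition \ref{alphaapriori}) — is compatible with the regularity budget, so that controlling $N$ derivatives of the metric requires only $\sim 2N$ derivatives of the initial data, which are available. One must be careful that in the region $\ub\ge\ub_s+\delta$ the component $\alpha_\infty$ is now a legitimate $L^2(S)$-valued function with controlled derivatives (this is precisely the content of Proposition \ref{giwsmoothness} and the fact that $\chih$ is smooth there by the argument in Proposition \ref{alphalimit}), so that no renormalization is needed here and the ordinary (non-renormalized) Bianchi system closes. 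Once this is in place, the propagation of regularity to the metric and the passage to $C^\infty$ via Sobolev embedding are routine, and the proof concludes by letting $\delta\to 0^+$.
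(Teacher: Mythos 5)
Your proposal is correct and takes essentially the same approach as the paper, which states the proposition as an immediate consequence of Proposition~\ref{giwsmoothness} combined with Theorem~\ref{regularitythm} and the Bianchi equations (and standard local well-posedness in the region $\ub<\ub_s$), leaving the routine bookkeeping implicit. Your write-up fills in exactly that bookkeeping: inducting in the number of $\nab_4$ derivatives to upgrade the $\{\nab_3,\nab\}$-regularity to full $\{\nab_3,\nab_4,\nab\}$-regularity of $\psi$, $\Psi$ in $\ub\ge\ub_s+\delta$, transferring to the metric via the $\ub$-transport equations with data on $\Hb_{\ub_s+\delta}$, and letting $\delta\to0^+$; the only small slip (which you catch yourself) is that the $\ub$-integration must indeed restart at $\Hb_{\ub_s+\delta}$, with its $\{\partial_u,\partial_\th\}$-smoothness supplied by Theorem~\ref{regularitythm} rather than by the $\nab_4$-induction.
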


\begin{remark}
In Proposition \ref{giwsmoothness}, we have only used the fact that for $\ub>\ub_s$, the initial data set is smooth. Therefore, suppose we have a initial data set satisfying the assumptions of Theorem \ref{rdthmv2}, with the additional assumption that it is smooth for $\ub>\tilde{\ub}$ for some $\tilde{\ub}$, we can also prove that the spacetime is smooth in $0\leq u\leq \epsilon$, $\tilde{\ub} <\ub\leq\epsilon$.
\end{remark}

\subsubsection{Decomposition of $\alpha_\infty$ into Singular and Regular Parts}

Finally, we show that $\alpha_\infty$ has a delta singularity on the incoming null hypersurface $\Hb_{\ub_s}$.
\begin{proposition}
$\alpha_{\infty}$ can be decomposed as
$$\alpha_{\infty}=\delta(\ub_s)\alpha_s+\alpha_r,$$
where $\delta(\ub_s)$ is the scalar delta function supported on the null hypersurface $\Hb_{\ub_s}$, $\alpha_s=\alpha_s(u,\vartheta)\neq 0$ belongs to $L^\infty_uL^\infty(S)$ and $\alpha_r$ belongs to $L^\infty_uL^\infty_{\ub}L^\infty(S)$.
\end{proposition}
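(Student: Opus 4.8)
The plan is to extract the singular part of $\alpha_\infty$ directly from the relation $\alpha=-\nab_4\chih-\trch\chih-2\omega\chih$ established in Proposition \ref{alphalimit}, exploiting the decomposition of $\chih$ into a continuous part and a jump across $\Hb_{\ub_s}$. Concretely, from the identity \eqref{intalpha} in the proof of Proposition \ref{alphalimit}, for every $u$ and $\vartheta$ the measure $\alpha_\infty(u,\vartheta)$ on $[0,\ub_*]$ acts by
$$\alpha_\infty(u,\vartheta)([0,\ub))=(\Omega_\infty^{-1}\chih_\infty)(u,\ub,\vartheta)-(\Omega_\infty^{-1}\chih_\infty)(u,0,\vartheta)+\int_0^\ub(\trch_\infty\chih_\infty)(u,\ub',\vartheta)\,d\ub'.$$
The second and third terms on the right are absolutely continuous in $\ub$ (the integrand is in $L^\infty$ by Proposition \ref{p} and the propagation of regularity results), so the only atom of $\alpha_\infty$ comes from the jump of $\Omega_\infty^{-1}\chih_\infty$ at $\ub=\ub_s$, which was shown to be the sole discontinuity of $\chih_\infty$ in the proof of Proposition \ref{alphalimit}. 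This gives the decomposition with
$$\alpha_s(u,\vartheta):=-(\Omega_\infty^{-1})(u,\ub_s,\vartheta)\big(\chih_\infty(u,\ub_s^+,\vartheta)-\chih_\infty(u,\ub_s^-,\vartheta)\big),$$
where the one-sided limits exist because $\chih_\infty$ is smooth (indeed $C^1$ up to the boundary suffices) on each side by the smoothness-away-from-$\Hb_{\ub_s}$ result and Proposition \ref{giwsmoothness}. Here $\delta(\ub_s)$ is interpreted, as stated, as the scalar delta function on $\Hb_{\ub_s}$, i.e. $\delta(\ub_s)$ integrated against a test function in $\ub$ evaluates it at $\ub_s$.

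The steps I would carry out, in order: first, record that $\alpha_r:=\alpha_\infty-\delta(\ub_s)\alpha_s$, defined via the formula above, equals $-\nab_4\chih_\infty-\trch_\infty\chih_\infty-2\omega_\infty\chih_\infty$ computed separately on $\{\ub<\ub_s\}$ and $\{\ub>\ub_s\}$ (where everything is smooth, so $\nab_4\chih_\infty$ makes classical sense), glued together; by the $L^\infty$ bounds on $\chih_\infty$, $\nab_4\chih_\infty$ (away from $\ub_s$), $\trch_\infty$, $\omega_\infty$ from the a priori estimates of Section \ref{estimates} applied uniformly to the approximating sequence and passed to the limit, $\alpha_r\in L^\infty_uL^\infty_{\ub}L^\infty(S)$. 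Second, verify that $\alpha_s\in L^\infty_uL^\infty(S)$: this is immediate from the uniform bounds on $\Omega_\infty^{-1}$ (Proposition \ref{Omega}) and on the one-sided limits of $\chih_\infty$ at $\ub_s$, which in turn follow from the uniform-in-$n$ estimates on $\chih_n$ on both sides of $\ub_s$ in the proof of Proposition \ref{alphalimit} together with the convergence $\chih_n\to\chih_\infty$ in $L^p_{\ub}L^2(S)$. Third, check $\alpha_s\neq0$: this is exactly the statement, already proved inside Proposition \ref{alphalimit}, that $\chih_\infty$ has a genuine (nonzero) jump discontinuity across $\Hb_{\ub_s}$, inherited from the initial data where $\chih$ jumps at $S_{0,\ub_s}$ by construction in Section \ref{initialcondition}, and propagated by the $\nab_3$ transport equation with continuous right-hand side.

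The main obstacle, I expect, is making the interpretation of $\delta(\ub_s)\alpha_s$ as an $\alpha$-valued distribution precise and frame-independent: $\alpha$ is an $S$-tangent symmetric traceless $2$-tensor, and one must say in which coordinate/frame components the atom sits, and check that the jump of the tensor $\chih_\infty$ (not merely of its coordinate components in some chart) is well-defined. The resolution, as indicated at the end of the proof of Proposition \ref{alphalimit}, is to pass to the coordinates $(u,\ub,\tilde\th^1,\tilde\th^2)$ adapted to the flow of $b$, in which $\chih_\infty$ has a jump in the $\tilde\th$-components with the transition functions being $W^{1,\infty}$; since $\gamma_\infty$ is continuous across $\Hb_{\ub_s}$, raising/lowering indices does not affect the jump, and one concludes the jump of $\chih_\infty$ as a tensor is well-defined and nonzero. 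A secondary, more routine point is to confirm that the absolutely continuous remainder terms in \eqref{intalpha} genuinely have no atom, which follows from dominated convergence and the $L^\infty$ bound on $\trch_\infty\chih_\infty$; I would state this but not belabor it.
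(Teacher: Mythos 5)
Your proposal follows essentially the same route as the paper: identify the atom of $\alpha_\infty$ through the integrated identity \eqref{intalpha}, define $\alpha_s$ as (a multiple of) the jump of $\chih_\infty$ across $\Hb_{\ub_s}$, verify $\alpha_s\neq0$ from the jump already established in Proposition \ref{alphalimit}, and obtain $\alpha_r\in L^\infty$ from the boundedness of $\frac{\partial}{\partial\ub}(\Omega^{-1}\chih)$ away from $\ub_s$, which is Proposition \ref{giwsmoothness}. The only discrepancy is a sign: the paper sets $\alpha_s=\lim_{\ub\to\ub_s^+}\Omega^{-1}\chih-\lim_{\ub\to\ub_s^-}\Omega^{-1}\chih$, which agrees with \eqref{intalpha} as written, whereas you set $\alpha_s=-\Omega^{-1}(\ub_s)\bigl(\chih(\ub_s^+)-\chih(\ub_s^-)\bigr)$, which is what one obtains directly from $\alpha=-\nab_4\chih-\trch\chih-2\omega\chih$ (indeed \eqref{intalpha} appears to carry a sign typo relative to that equation). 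Since the Proposition only asserts $\alpha_s\neq0$ and membership in $L^\infty_uL^\infty(S)$, the sign is immaterial, but you should be consistent: if you invoke \eqref{intalpha} as the paper states it, your $\alpha_s$ should not carry the minus; if you work from the relation $\alpha=-\nab_4\chih-\cdots$, the minus is correct and \eqref{intalpha} should be corrected accordingly. Your additional remarks about frame-independence of the tensorial jump and the passage to the $(\tilde\th^1,\tilde\th^2)$ coordinates reproduce the discussion at the end of Proposition \ref{alphalimit}'s proof and are fine.
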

\begin{proof}
Define 
$$\alpha_s(u,\vartheta):=\lim_{\ub\to\ub_s^+} \Omega^{-1}\chih (u,\ub,\vartheta)-\lim_{\ub\to\ub_s^-} \Omega^{-1}\chih (u,\ub,\vartheta),$$
and 
$$\alpha_r:=\alpha_{\infty}-\delta(\ub_s)\alpha_s.$$
We now show that $\alpha_s$ and $\alpha_r$ have the desired property. By Theorem \ref{convergencethm2}, $\alpha_s$ belongs to $L^\infty_uL^\infty(S)$. That $\alpha_s\neq 0$ follows from the fact that $\chih$ has a jump discontinuity across $\ub=\ub_s$, which is proved in Proposition \ref{alphalimit}.

It remains to show that $\alpha_r$ belongs to $L^\infty_u L^\infty_{\ub} L^\infty(S)$. To show this, we consider the measure of the half open interval $[0,\ub)$ using the measure $\alpha_r(u,\vartheta)$:
\begin{equation*}
\begin{split}
&(\alpha_r(u,\vartheta))([0,\ub))\\
=&(\Omega^{-1}\chih)(u,\ub,\vartheta)-\lim_{\tilde{\ub}\to\ub_s^+}(\Omega^{-1}\chih)(u,\tilde{\ub},\vartheta)+\lim_{\tilde{\ub}\to\ub_s^-}(\Omega^{-1}\chih)(u,\tilde{\ub},\vartheta)-(\Omega^{-1}\chih)(u,\ub=0,\vartheta)\\
&+\int_0^{\ub} (\trch\chih)(u,\tilde{\ub},\vartheta) d\tilde{\ub}\\
=&\lim_{\tilde{\ub}\to\ub_s^-}\int_0^{\tilde{\ub}} \frac{\partial}{\partial\ub}(\Omega^{-1}\chih)(u,\tilde{\ub}',\vartheta)d\tilde{\ub}'+\lim_{\tilde{\ub}\to\ub_s^+}\int_{\tilde{\ub}}^{\ub} \frac{\partial}{\partial\ub}(\Omega^{-1}\chih)(u,\tilde{\ub}',\vartheta)d\tilde{\ub}'\\
&+\int_0^{\ub} (\trch\chih)(u,\tilde{\ub},\vartheta) d\tilde{\ub}.
\end{split}
\end{equation*}
By Proposition \ref{giwsmoothness}, $\frac{\partial}{\partial\ub}(\Omega^{-1}\chih)(u,\ub,\vartheta)$ is a bounded function away from the the hypersurface $\Hb_{\ub_s}$. Thus $(\alpha_r(u,\vartheta))([0,\ub))$ can be expressed as an integral over $[0,\ub)$ of an $L^\infty_u L^\infty_{\ub}L^\infty(S)$ function, as desired.
\end{proof}

\bibliographystyle{hplain}
\bibliography{NewShock2}

\begin{thebibliography}{10}

\bibitem{AichSexl}
P.~C. Aichelburg and R.~U. Sexl.
\newblock On the gravitational field of a massless particle.
\newblock {\em Gen. Rel. Grav.}, 2:279--284, 1971.

\bibitem{Alinhac}
S.~Alinhac.
\newblock Ondes de rar\'efaction pour des syst\`emes quasi-lin\'eaires
  hyperboliques multidimensionnels.
\newblock In {\em Journ\'ees ``\'{E}quations aux {D}\'eriv\'ees {P}artielles''
  ({S}aint {J}ean de {M}onts, 1988)}, pages Exp.\ No.\ VIII, 7. \'Ecole
  Polytech., Palaiseau, 1988.

\bibitem{BaHo}
C.~Barrabes and P.~A. Hogen.
\newblock {\em Singular null hypersurfaces in general relativity}.
\newblock World Scientific, London, 2003.

\bibitem{Beals}
M.~Beals.
\newblock {\em Propagation and interaction of singularities in nonlinear
  hyperbolic problems}.
\newblock Progress in Nonlinear Differential Equations and their Applications,
  3. Birkh\"auser Boston Inc., Boston, MA, 1989.

\bibitem{Bicak}
J.~Bicak.
\newblock Selected solutions of einstein's field equations: their role in
  general relativity and astrophysics.
\newblock {\em Lect. Notes Phys.}, 540:1--126, 2000.

\bibitem{BPR}
H.~Bondi, F.~A.~E. Pirani, and I.~Robinson.
\newblock Gravitational waves in general relativity. {III}. {E}xact plane
  waves.
\newblock {\em Proc. Roy. Soc. Lond. A}, 251(1267):519--533, 1959.

\bibitem{Brinkmann}
M.~W. Brinkmann.
\newblock {On Riemann spaces conformal to Euclidean space}.
\newblock {\em Proc. Natl. Acad. Sci. U.S.A.}, 9:1--3, 1923.

\bibitem{Bruhat}
Y.~Choquet-Bruhat.
\newblock Construction de solutions radiatives approch\'ees des \'equations
  d'{E}instein.
\newblock {\em Comm. Math. Phys.}, 12:16--35, 1969.

\bibitem{ChrSph1}
D.~Christodoulou.
\newblock Bounded variation solutions of the spherically symmetric
  {E}instein-scalar field equations.
\newblock {\em Comm. Pure Appl. Math.}, 46(8):1131--1220, 1993.

\bibitem{ChrSph2}
D.~Christodoulou.
\newblock Examples of naked singularity formation in the gravitational collapse
  of a scalar field.
\newblock {\em Ann. of Math. (2)}, 140(3):607--653, 1994.

\bibitem{ChrSph3}
D.~Christodoulou.
\newblock The instability of naked singularities in the gravitational collapse
  of a scalar field.
\newblock {\em Ann. of Math. (2)}, 149(1):183--217, 1999.

\bibitem{Chr}
D.~Christodoulou.
\newblock {\em The formation of black holes in general relativity}.
\newblock EMS Monographs in Mathematics. European Mathematical Society (EMS),
  Z\"urich, 2009, arXiv:0805.3880.

\bibitem{CK}
D.~Christodoulou and S.~Klainerman.
\newblock {\em The global nonlinear stability of the {M}inkowski space},
  volume~41 of {\em Princeton Mathematical Series}.
\newblock Princeton University Press, Princeton, NJ, 1993.

\bibitem{CS}
J.-F. Coulombel and P.~Secchi.
\newblock The stability of compressible vortex sheets in two space dimensions.
\newblock {\em Indiana Univ. Math. J.}, 53(4):941--1012, 2004.

\bibitem{EinsteinRosen}
A.~Einstein and N.~Rosen.
\newblock On gravitational waves.
\newblock {\em J. Franklin Inst.}, 223:43--54, 1937.

\bibitem{Gr}
J.~B. Griffiths.
\newblock {\em Colliding plane waves in general relativity}.
\newblock Oxford UP, Oxford, 1991.

\bibitem{GrPo}
J.~B. Griffiths and J.~Podolsky.
\newblock {\em Exact space-times in {E}instein's general relativity}.
\newblock Cambridge UP, Cambridge, 2009.

\bibitem{Holzegel}
G.~Holzegel.
\newblock {Ultimately {S}chwarzschildean spacetimes and the black hole
  stability problem}.
\newblock 2010, arXiv:1010.3216.

\bibitem{HKM}
T.~J.~R. Hughes, T.~Kato, and J.~E. Marsden.
\newblock Well-posed quasi-linear second-order hyperbolic systems with
  applications to nonlinear elastodynamics and general relativity.
\newblock {\em Arch. Rational Mech. Anal.}, 63(3):273--294 (1977), 1976.

\bibitem{JEK}
P.~Jordan, J.~Ehlers, and W.~Kundt.
\newblock Strenge l\"osungen der feldgleichungen der allgemeinen
  relativit\"atstheorie.
\newblock {\em Akad. Wiss. Lit. Mainz, Abh. Math. Naturwiss.}, 2, 1960.

\bibitem{KhanPenrose}
K.~A. Khan and R.~Penrose.
\newblock {Scattering of two impulsive gravitational plane waves}.
\newblock {\em Nature}, 229:185--186, 1971.

\bibitem{KN}
S.~Klainerman and F.~Nicol{\`o}.
\newblock {\em The evolution problem in general relativity}, volume~25 of {\em
  Progress in Mathematical Physics}.
\newblock Birkh\"auser Boston Inc., Boston, MA, 2003.

\bibitem{KlRo:causal}
S.~Klainerman and I.~Rodnianski.
\newblock Causal geometry of {E}instein-vacuum spacetimes with finite curvature
  flux.
\newblock {\em Invent. Math.}, 159(3):437--529, 2005.

\bibitem{KlRo}
S.~Klainerman and I.~Rodnianski.
\newblock {On the formation of trapped surfaces}.
\newblock 2009, arXiv:0912.5097.

\bibitem{L21}
S.~Klainerman, I.~Rodnianski, and J.~Szeftel.
\newblock {The Bounded L2 Curvature Conjecture}.
\newblock 2012, arXiv:1204.1767.

\bibitem{LeSm}
P.~G. LeFloch and J.~Smulevici.
\newblock Global geometry of {$T^2$}-symmetric spacetimes with weak regularity.
\newblock {\em C. R. Math. Acad. Sci. Paris}, 348(21-22):1231--1233, 2010.

\bibitem{LeSte2}
P.~G. LeFloch and J.~M. Stewart.
\newblock {The characteristic initial value problem for plane symmetric
  spacetimes with weak regularity}.
\newblock {\em Class. Quant. Grav.}, 28:145019, 2011, 1004.2343.

\bibitem{L}
J.~Luk.
\newblock {On the local existence for the characteristic initial value problem
  in general relativity}.
\newblock 2011, arXiv:1107.0898.

\bibitem{Maj1}
A.~Majda.
\newblock The existence of multidimensional shock fronts.
\newblock {\em Mem. Amer. Math. Soc.}, 43(281):v+93, 1983.

\bibitem{Maj2}
A.~Majda.
\newblock The stability of multidimensional shock fronts.
\newblock {\em Mem. Amer. Math. Soc.}, 41(275):iv+95, 1983.

\bibitem{Metivier}
G.~M{\'e}tivier.
\newblock The {C}auchy problem for semilinear hyperbolic systems with
  discontinuous data.
\newblock {\em Duke Math. J.}, 53(4):983--1011, 1986.

\bibitem{Metivier2}
G~M{\'e}tivier.
\newblock Stability of multi-dimensional weak shocks.
\newblock {\em Comm. Partial Differential Equations}, 15(7):983--1028, 1990.

\bibitem{Metivier3}
G.~M{\'e}tivier.
\newblock Ondes soniques.
\newblock {\em J. Math. Pures Appl. (9)}, 70(2):197--268, 1991.

\bibitem{Penrose65}
R.~Penrose.
\newblock A remarkable preperty of plane waves in general relativity.
\newblock {\em Rev. Mod. Phys.}, 37:215--220, 1965.

\bibitem{Penrose72}
R.~Penrose.
\newblock The geometry of impulsive gravitational waves.
\newblock In {\em General relativity (papers in honour of {J}. {L}. {S}ynge)},
  pages 101--115. Clarendon Press, Oxford, 1972.

\bibitem{Rendall}
A.~D. Rendall.
\newblock Reduction of the characteristic initial value problem to the {C}auchy
  problem and its applications to the {E}instein equations.
\newblock {\em Proc. Roy. Soc. London Ser. A}, 427(1872):221--239, 1990.

\bibitem{L22}
J.~Szeftel.
\newblock {Parametrix for wave equations on a rough background I: regularity of
  the phase at initial time }.
\newblock 2012, arXiv:1204.1768.

\bibitem{L23}
J.~Szeftel.
\newblock {Parametrix for wave equations on a rough background II: construction
  and control at initial time }.
\newblock 2012, arXiv:1204.1769.

\bibitem{L24}
J.~Szeftel.
\newblock {Parametrix for wave equations on a rough background III: space-time
  regularity of the phase }.
\newblock 2012, arXiv:1204.1770.

\bibitem{L25}
J.~Szeftel.
\newblock {Parametrix for wave equations on a rough background IV: control of
  the error term }.
\newblock 2012, arXiv:1204.1771.

\bibitem{Szekeres}
P.~Szekeres.
\newblock {Colliding gravitational waves}.
\newblock {\em Nature}, 228:1183--1184, 1970.

\bibitem{Taub}
A.~H. Taub.
\newblock Space-times with distribution-valued curvature tensors.
\newblock {\em J. Math. Phys.}, 21(6):1423--1431, 1980.

\bibitem{Yurtsever}
U.~Yurtsever.
\newblock Colliding almost-plane gravitational waves: Colliding exact plane
  waves and general properties of almost-plane-wave spacetimes.
\newblock {\em Phys. Rev. D}, 33:2803--2817, 1988.

\end{thebibliography}

\end{document}